\newtheorem{lemma}{Lemma}[section]
\newtheorem{theorem}{Theorem}[section]
\newtheorem*{theorem*}{Theorem}
\newtheorem{definition}{Definition}[section]
\newtheorem{remark}{Remark}[section]
\newtheorem{corollary}{Corollary}[section]
\newtheorem{proposition}{Proposition}[section]
\newtheorem*{proposition*}{Proposition}
\newcommand{\tr}{\text{tr}}
\numberwithin{equation}{section}
\title{Nonlinear Scattering Theory for Asymptotically de Sitter \\ Vacuum Solutions in All Even Spatial Dimensions}
\author{Serban Cicortas\footnote{Princeton University, Department of Mathematics, Fine Hall, Washington Road, Princeton, NJ 08544, USA}}
\begin{document}

\maketitle

\begin{abstract}
    The purpose of this paper is to establish a definitive quantitative nonlinear scattering theory for asymptotically de Sitter solutions of the Einstein vacuum equations in $(n+1)$ dimensions with $n\geq4$ even, which are determined by small scattering data at the spacelike asymptotic boundaries $\mathcal{I}^-$ and $\mathcal{I}^+.$ The case of even spatial dimension $n$ poses significant challenges compared to its odd counterpart and was left open by the previous works in the literature. Here, scattering theory is understood to mean existence and uniqueness of scattering states, asymptotic completeness, and the existence of an invertible scattering map with quantitative control on its norm. The existence and uniqueness of scattering states imply that for any small asymptotic data there exists a unique global solution to the Einstein equations, which remains close to the de Sitter metric. Asymptotic completeness is the converse statement, showing that any such solution induces asymptotic data at $\mathcal{I}^-$ and at $\mathcal{I}^+.$ For sufficiently small asymptotic data, we construct the scattering map $\mathcal{S}$ taking data at $\mathcal{I}^-$ to data at $\mathcal{I}^+,$ and we show that the map $\mathcal{S}$ is locally invertible and locally Lipschitz at the de Sitter data, with respect to a Sobolev-type norm.

    The scattering map result is sharp and avoids any "derivative loss", in the sense that we measure the smallness of asymptotic data at $\mathcal{I}^-$ and $\mathcal{I}^+$ using the same Sobolev norm. The proof of the sharp result requires a detailed analysis of the Einstein equations involving a geometric Littlewood-Paley decomposition of the solution, carried out in our companion paper \cite{Cwave}.
\end{abstract}

\textbf{Keywords:} Einstein Vacuum Equations, Nonlinear Scattering Theory, de Sitter space. 

\textbf{Mathematics Subject Classification:} 83C05, 35Q76, 35P25, 58J45.

\section{Introduction}

In this work, we aim to complete the understanding of the nonlinear scattering theory for $(n+1)$-dimensional asymptotically de Sitter vacuum solutions determined by small scattering data on a suitably defined asymptotic boundary $\mathcal{I}^{\pm}.$ The case $n=3$ was proved by Friedrich in \cite{Friedrich1, Friedrich2}, while the case of all $n\geq3$ odd was proved by Anderson in \cite{Anderson}. In this paper and our companion paper \cite{Cwave} we treat the case of even spatial dimension $n\geq4$, which contains significant new challenges and was left open by the previous works in the literature. 

\paragraph{Asymptotically de Sitter vacuum solutions.} For any $n\geq 3$, we consider the $(n+1)$-dimensional Einstein vacuum equations with positive cosmological constant $\Lambda=\frac{n(n-1)}{2}:$
\begin{equation}\label{EVE}
    Ric_{\mu\nu}-\frac{1}{2}R\widetilde{g}_{\mu\nu}+\Lambda \widetilde{g}_{\mu\nu}=0.
\end{equation}
The ground state solution of (\ref{EVE}) is given by the $(n+1)$-dimensional de Sitter space $\big(\mathbb{R}\times S^n,\widetilde{g}_{dS}\big)$:
\begin{equation}\label{de Sitter metric}
    \widetilde{g}_{dS}=-dT^2+\cosh^2(T)\cdot\slashed{g}_{S^n},
\end{equation}
where $\slashed{g}_{S^n}$ denotes the standard round metric on $S^n.$ The solution $\big(\mathbb{R}\times S^n,\widetilde{g}_{dS}\big)$ represents the higher dimensional generalization of the metric introduced in \cite{deSitter}. We denote past infinity $\{T\rightarrow-\infty\}$ by $\mathcal{I}^-,$ and future infinity $\{T\rightarrow\infty\}$ by $\mathcal{I}^+.$  Both $\mathcal{I}^-$ and $\mathcal{I}^+$ can be identified with $S^n$ and can be understood as asymptotic boundaries of the spacetime.
\begin{figure}
    \centering
    \begin{tikzpicture}[scale=1.3]
    \draw (-0.2,0) .. controls (1.3,1.5) and (1.3,2.5) .. (-0.2,4);
    \draw (4.2,0) .. controls (2.7,1.5) and (2.7,2.5) .. (4.2,4);
    \draw (0.92,2) .. controls (1.42,1.5) and (2.58,1.5) .. (3.08,2);
    \draw[dashed] (0.92,2) .. controls (1.42,2.3) and (2.58,2.3) .. (3.08,2);
    \draw[latex-] (3.12,2) -- (5,2);
    \draw (5,2)  node[anchor=west] {$\{T=0\}\cong S^n$};
    \draw[dashed] (-0.2,0) .. controls (0.8,-0.7) and (3.2,-0.7) .. (4.2,0);
    \draw[dashed] (-0.2,0) .. controls (0.8,0.5) and (3.2,0.5) .. (4.2,0);
    \draw[latex-] (4.24,0) -- (5,0);
    \draw (5,0)  node[anchor=west] {$\mathcal{I}^-=\{T\rightarrow-\infty\}\cong S^n$};
    \draw[dashed] (-0.2,4) .. controls (0.8,3.5) and (3.2,3.5) .. (4.2,4);
    \draw[dashed] (-0.2,4) .. controls (0.8,4.7) and (3.2,4.7) .. (4.2,4);
    \draw[latex-] (4.24,4) -- (5,4);
    \draw (5,4)  node[anchor=west] {$\mathcal{I}^+=\{T\rightarrow\infty\}\cong S^n$};
    \end{tikzpicture}
    \caption{Diagram of the $(n+1)$-dimensional de Sitter space $\big(\mathbb{R}\times S^n,\widetilde{g}_{dS}\big)$.}
    \label{fig:main-thm}
\end{figure}
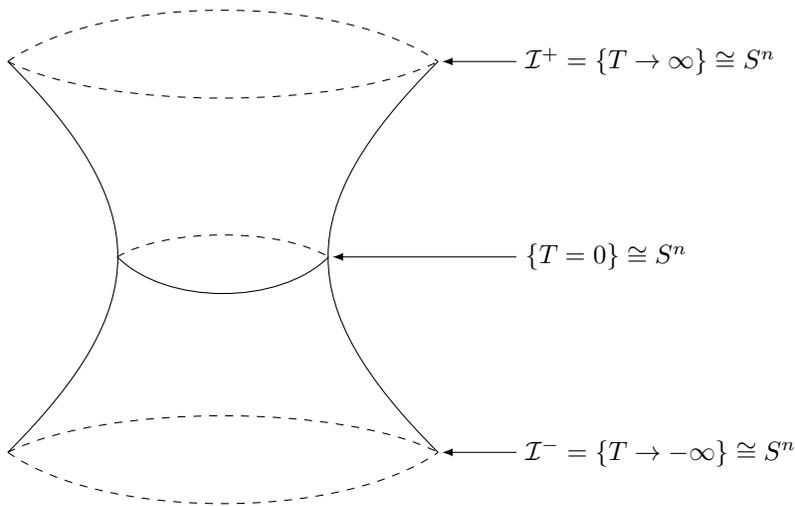

Due to the hyperbolic nature of the Einstein vacuum equations \eqref{EVE}, we study dynamical solutions of (\ref{EVE}) obtained by solving an initial value problem. We review briefly the standard setting of Cauchy initial data prescribed on a spacelike hypersurface. The data consist of a Riemannian manifold $\big(S^n,g_0\big)$ and a symmetric 2-tensor $k_0$, which satisfy certain constraint equations. In the spacetime obtained by solving (\ref{EVE}) locally with the given initial data, $\big(S^n,g_0\big)$ embeds as the initial data hypersurface with second fundamental form given by $k_0$. Moreover, the constraint equations are given by the Gauss and Codazzi equations for $g_0$ and $k_0.$

In the current work, we will focus instead on solutions of (\ref{EVE}) arising from scattering data prescribed at infinity (at $\mathcal{I}^-$ or $\mathcal{I}^+$). In this setting, scattering data consist of a Riemannian manifold $\big(S^n,\slashed{g}_0\big)$ and a symmetric traceless 2-tensor $\check{h}$ on $S^n$, which satisfies an additional constraint called the straightness condition. We briefly describe the interpretation of scattering data and the differences to the setting of Cauchy initial data. We consider the case of data prescribed at $\mathcal{I}^-$ for the purpose of exposition. Similarly to the case of Cauchy initial data, $\slashed{g}_0$ represents a suitable limit at $\mathcal{I}^-$ of the Riemannian metrics induced by the spacetime metric on the spheres $S^n$ at early times. However, it turns out that a similar attempt to consider the limit of the second fundamental form of the spheres $S^n$ at early times yields a tensor that is completely determined by $\slashed{g}_0$. Instead, $\check{h}$ represents a higher order term present in the expansion of the spacetime metric at $\mathcal{I}^-$, as we shall explain in detail below.

Solutions of \eqref{EVE} determined by scattering data at infinity are called asymptotically de Sitter spaces (\cite{Anderson}). The local well-posedness theory for scattering data at infinity is proved in \cite{selfsimilarvacuum} (see also \cite{Fefferman-Graham, ambient-metric, hintz}), with asymptotic data given by $\slashed{g}_0$ and $\check{h}$ as above. We also assume that the data are smooth for simplicity. For any such data, we obtain in a neighborhood of $\mathcal{I}^-$ a unique solution of the form:
\begin{equation}\label{general form for g tilde}
    \widetilde{g}=-dT^2+e^{-2T}\widetilde{\slashed{g}}_{AB}(T,\theta^1,\ldots,\theta^n)d\theta^Ad\theta^B,
\end{equation}
where we denote by $\{\theta^A\}$ coordinates associated to an arbitrary chart on $S^n.$ As before, in \eqref{general form for g tilde} we present the case of scattering data at past infinity $\mathcal{I}^-$ and we note that the same definitions apply for scattering data at $\mathcal{I}^+,$ upon replacing $T$ by $-T.$ 

\paragraph{The expansion at $\mathcal{I}^-$ in the even $n$ case.}
The relation between the solution $\widetilde{g}$ and the scattering data is seen in the expansion satisfied by the rescaled metric $\widetilde{\slashed{g}}$ induced on $\{T\}\times S^n$ as $T\rightarrow-\infty$. For all $n\geq4$ \underline{even} and for all $T<0$ small enough, we have in a Lie-propagated frame:
\begin{equation}\label{expansion in terms of T intro}
    \widetilde{\slashed{g}}=\slashed{g}_0+\frac{e^{2T}}{2^2}\slashed{g}_1+\ldots+\frac{e^{(n-2)T}}{2^{n-2}(n/2-1)!}\slashed{g}_{n/2-1}+\frac{2Te^{nT}}{2^n(n/2)!}\mathcal{O}+\frac{e^{nT}}{2^n(n/2)!}\check{k}+O\big(Te^{(n+2)T}\big),
\end{equation}
where the tensors $\slashed{g}_1,\ldots,\slashed{g}_{n/2-1},\mathcal{O},\tr\check{k}$ are determined by $\slashed{g}_0$ via certain compatibility relations, $\check{h}$ is the trace free part of $\check{k},$ and the higher order terms in the expansion are determined by $\slashed{g}_0$ and $\check{h}$. 

The main challenge present in the case of even spatial dimension $n\geq4$ compared to its odd counterpart can be seen in the expansion \eqref{expansion in terms of T intro}. This represents the Fefferman-Graham expansion introduced in \cite{Fefferman-Graham}, also playing a fundamental role in \cite{ambient-metric} and \cite{selfsimilarvacuum}, as we explain in Section~\ref{previous results}. We point out that for $n\geq4$ even the expansion is not smooth at $\mathcal{I}^-$ in terms of the Fefferman-Graham coordinate $\tau=e^{T}/2$ due to the presence of the term $\mathcal{O}$, called the obstruction tensor. On the other hand, the obstruction tensor vanishes identically in the odd $n$ case leading to a smooth expansion of the solution in $\tau$ at $\mathcal{I}^-$.

From the dynamical point of view, the de Sitter metric (\ref{de Sitter metric}) is the unique solution of (\ref{EVE}) with scattering data at $\mathcal{I}^-$ given by $\big(\slashed{g}_0\big)_{dS}=\frac{1}{4}\slashed{g}_{S^n}$ and $\check{h}_{dS}=0.$ It is also smooth at $\mathcal{I}^-$, since the obstruction tensor $\mathcal{O}_{dS}$ vanishes, as can be seen from its Fefferman-Graham expansion:
\[\widetilde{\slashed{g}}_{dS}=\frac{1}{4}\slashed{g}_{S^n}+\frac{e^{2T}}{2}\slashed{g}_{S^n}+\frac{e^{4T}}{4}\slashed{g}_{S^n}.\]

\paragraph{The main result: a complete scattering theory for $n\geq4$ even.} We study the global in time behavior of asymptotically de Sitter solutions of (\ref{EVE}) for $n\geq4$ even, determined by scattering data close to the data of de Sitter space with respect to some suitable norm. Our main result is establishing a complete scattering theory for such solutions. We notice that the corresponding small data result in the setting of Cauchy initial data consists of proving global existence and orbital stability, see \cite{Ringstrom1}. The additional difficulties that we encounter in the case of the scattering problem are the need to evolve the data from past infinity as opposed to from a Cauchy hypersurface, and the need to recover the scattering data at future infinity, which requires sharp control of the solution at higher order despite the lack of smoothness in time caused by the obstruction tensor $\mathcal{O}$.

In order to state the main result, we briefly introduce the notions of asymptotic initial data set, asymptotic initial data norm, and asymptotically de Sitter vacuum solutions determined by small data. We refer the reader to Remark \ref{remark about data sets} and Definition \ref{asymptotic data set definition} for the precise definitions. 

Given smooth scattering data $\big(\slashed{g}_0,\check{h}\big),$ we define the corresponding \textit{asymptotic initial data set} $\widetilde{\Sigma}\big(\slashed{g}_0,\check{h}\big)$ to be the collection of tensors on $S^n$ consisting of: the metric $\slashed{g}_0;$ the tensors $\slashed{g}_1,\ldots,\slashed{g}_{n/2-1},\tr\check{k}$ defined by $\slashed{g}_0$ using the compatibility relations, together with certain angular derivatives of these tensors; the obstruction tensor $\mathcal{O}$ defined by $\slashed{g}_0$ using the compatibility relations; and the renormalized tensor $\check{\mathfrak{h}}=\check{h}-2\big(\log\nabla\big)\mathcal{O},$ where the operator $\log\nabla$ is defined using the geometric Littlewood-Paley decomposition in Section~\ref{LP Section}. We point out that the surprising need to renormalize $\check{h}$ is related to the lack of smoothness of the expansion (\ref{expansion in terms of T intro}), and again poses difficulties in our problem. 

We also define the \textit{asymptotic initial data norm}, which measures closeness to the de Sitter data. For any tensor $\phi\in\widetilde{\Sigma}\big(\slashed{g}_0,\check{h}\big)$, we denote by $\phi_*=\phi-\phi_{dS}$ the tensor obtained as the difference of $\phi$ and its de Sitter value. For any $M>0$, we define the \textit{asymptotic initial data norm of order $M$} by:
\[\Big\|\widetilde{\Sigma}\big(\slashed{g}_0,\check{h}\big)\Big\|_M^2=\sum_{\phi\in\widetilde{\Sigma}(\slashed{g}_0,\check{h})}\big\|\phi_*\big\|_{H^{M+1}(S^n)}^2,\]
where $H^{M+1}(S^n)$ represents the Sobolev norm on $S^n$ with respect to the metric $\slashed{g}_0$.

We denote by $\widetilde{\Sigma}_{dS}=\widetilde{\Sigma}\big(\frac{1}{4}\slashed{g}_{S^n},0\big)$ the initial data set corresponding to de Sitter space. For any $\epsilon>0$, we define the set of smooth $\epsilon$-small asymptotic data of order $M$ by:
\[B_{\epsilon}^M\big(\widetilde{\Sigma}_{dS}\big)=\Big\{\widetilde{\Sigma}\big(\slashed{g}_0,\check{h}\big):\ \big\|\widetilde{\Sigma}\big(\slashed{g}_0,\check{h}\big)\big\|_M<\epsilon\Big\}.\]
We define \textit{asymptotically de Sitter vacuum solutions determined by small data} to be the solutions of (\ref{EVE}) with $\epsilon$-small asymptotic data of order $M$.

Using these definitions, we state the main result proved in this work and our companion paper \cite{Cwave}:
\begin{theorem}\label{main theorem of the paper}
    For any even integer $n\geq4,$ we have a complete scattering theory for asymptotically de Sitter vacuum solutions determined by small data. For any $M>0$ large enough there exists $\epsilon_0>0$ small enough, such that for any $0<\epsilon\leq\epsilon_0$ we have:
    \begin{enumerate}
    \item\textbf{Existence and uniqueness of scattering states:} for any $\epsilon$-small asymptotic data of order $M$ at $\mathcal{I}^-$ or $\mathcal{I}^+$ given by $\widetilde{\Sigma}\big(\slashed{g}_0,\check{h}\big)\in B_{\epsilon}^M\big(\widetilde{\Sigma}_{dS}\big)$, there exists a unique smooth global solution $\big(\widetilde{\mathcal{M}},\widetilde{g}\big)$ of the form (\ref{general form for g tilde}) to the Einstein vacuum equations (\ref{EVE}) which remains quantitatively close to the de Sitter metric and can be represented by a diagram similar to Figure~\ref{fig:main-thm}; 
    \item\textbf{Asymptotic completeness:} any smooth solution of the Einstein vacuum equations (\ref{EVE}) of the form (\ref{general form for g tilde}), which is quantitatively close to the de Sitter metric at a finite time $T,$ exists globally and induces scattering data $\big(\slashed{g}_0,\check{h}\big)$ at $\mathcal{I}^-$ and $\big(\underline{\slashed{g}_0},\underline{\check{h}}\big)$ at $\mathcal{I}^+;$
    \item\textbf{Existence of a scattering map with quantitative estimates:} there exists a constant $C_M>0$ independent of $\epsilon,$ such that we have a well-defined scattering map taking asymptotic data at $\mathcal{I}^-$ to asymptotic data at $\mathcal{I}^+:$
    \begin{equation}\label{definition of scattering map intro}
        \mathcal{S}:B_{\epsilon}^M\big(\widetilde{\Sigma}_{dS}\big)\rightarrow B_{C_M\epsilon}^M\big(\widetilde{\Sigma}_{dS}\big),\ \mathcal{S}\Big(\widetilde{\Sigma}\big(\slashed{g}_0,\check{h}\big)\Big)=\widetilde{\Sigma}\big(\underline{\slashed{g}_0},\underline{\check{h}}\big).
    \end{equation}
    The scattering map is locally invertible and locally Lipschitz at $\widetilde{\Sigma}_{dS},$ in the sense that it satisfies the quantitative estimate:
    \begin{equation}\label{main estimate for scattering map intro}
        \Big\|\mathcal{S}\big(\widetilde{\Sigma}(\slashed{g}_0,\check{h})\big)\Big\|_M\leq C_M\Big\|\widetilde{\Sigma}\big(\slashed{g}_0,\check{h}\big)\Big\|_M.
    \end{equation}
\end{enumerate}
\end{theorem}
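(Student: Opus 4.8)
The plan is to work in the gauge \eqref{general form for g tilde} --- a Fefferman--Graham-type normal form in which $T$ is proper time along the geodesics normal to the asymptotic boundary --- and to recast \eqref{EVE} as a coupled system of nonlinear wave equations for the rescaled metric $\widetilde{\slashed{g}}$ and its $\partial_T$-momentum, whose unknown I take to be the perturbation $\widetilde{\slashed{g}}_*=\widetilde{\slashed{g}}-\widetilde{\slashed{g}}_{dS}$. After a geometric Littlewood--Paley decomposition the linearization of this system about de Sitter decouples, in the band of frequency $\sim 2^\ell$, into a scalar wave-type ODE in $T$ which near either $\mathcal{I}^\pm$ is, to leading order, a Bessel-type equation in $e^{\mp T}$ of \emph{integer} order $n/2$ --- precisely because $n$ is even --- carrying one bounded mode and one mode decaying like $e^{\mp nT}$ together with a $Te^{\mp nT}$ logarithmic correction whose tensorial resummation over $\ell$ is exactly $2(\log\nabla)\mathcal{O}$. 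This is the structural origin both of the obstruction term in \eqref{expansion in terms of T intro} and of the renormalization $\check{\mathfrak{h}}=\check h-2(\log\nabla)\mathcal{O}$. The three parts of the theorem then follow by combining the local theory of \cite{selfsimilarvacuum} near $\mathcal{I}^\pm$ with a single global continuity/bootstrap argument, whose decisive inputs --- sharp energy estimates in the $H^{M+1}(S^n)$-scale with no derivative loss, and matching asymptotic expansions --- are furnished by the companion paper \cite{Cwave}; I would quote these and track the constants.

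\emph{Existence and uniqueness of scattering states.} Given $\widetilde{\Sigma}(\slashed g_0,\check h)\in B_\epsilon^M(\widetilde\Sigma_{dS})$ at $\mathcal{I}^-$, I first invoke the local well-posedness of \cite{selfsimilarvacuum}: from $\slashed g_0$ the compatibility relations produce $\slashed g_1,\dots,\slashed g_{n/2-1},\mathcal O,\tr\check k$, and together with the free datum $\check h$ (equivalently $\check{\mathfrak h}$) this determines via \eqref{expansion in terms of T intro} a unique solution of the form \eqref{general form for g tilde} on a collar $(-\infty,T_0]\times S^n$, with $\widetilde{\slashed g}_*$ on $\{T_0\}\times S^n$ controlled by $\big\|\widetilde\Sigma(\slashed g_0,\check h)\big\|_M$ (here $T_0$ is a large negative constant, and it is the renormalization that makes this $H^{M+1}$ bound close in spite of the non-smoothness of \eqref{expansion in terms of T intro}). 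From $\{T_0\}\times S^n$ I propagate forward in $T$ via the energy estimates of \cite{Cwave}, running a bootstrap on $[T_0,\infty)$ with the smallness of $\widetilde{\slashed g}_*$ and of the momentum as assumptions; the exponentially decaying coefficients of the reduced system improve these, forbid blow-up (so the solution is global), drive $\widetilde{\slashed g}(T,\cdot)$ to a limit in $H^{M+1}(S^n)$ as $T\to+\infty$, and keep $\widetilde{\slashed g}_*$ uniformly $O(\epsilon)$, so that $(\widetilde{\mathcal M},\widetilde g)$ stays close to de Sitter and is represented by a diagram as in Figure~\ref{fig:main-thm}. Uniqueness follows from uniqueness in \cite{selfsimilarvacuum} together with uniqueness for the Cauchy problem of the reduced system on $[T_0,\infty)$.

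\emph{Asymptotic completeness.} I expect this to be the main obstacle. Starting from a solution of the form \eqref{general form for g tilde} quantitatively close to de Sitter on a finite slice $\{T_1\}\times S^n$, the same bootstrap of \cite{Cwave}, run now in both time directions, again gives global existence and, as $T\to+\infty$, convergence $\widetilde{\slashed g}(T,\cdot)\to\underline{\slashed g_0}$ in $H^{M+1}(S^n)$. The delicate point is to recover the \emph{full} asymptotic data at $\mathcal{I}^+$, i.e. not only $\underline{\slashed g_0}$ but all coefficients up to and including the order-$e^{nT}$ terms $\frac{2Te^{nT}}{2^n(n/2)!}\underline{\mathcal O}+\frac{e^{nT}}{2^n(n/2)!}\underline{\check k}$, where $\underline{\mathcal O}$ is the obstruction tensor of $\underline{\slashed g_0}$ and the logarithm makes the naive Fefferman--Graham expansion non-smooth in $e^{-T}/2$. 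To do this within $H^{M+1}(S^n)$ --- with no derivative loss --- I carry out the asymptotic analysis of the reduced system band by band: on each Littlewood--Paley piece the Bessel-type ODE above is integrated explicitly, the $Te^{nT}$ term being produced exactly by the resonance of the nonlinear (and lower-order linear) forcing with the decaying mode, and summing over $\ell$ with the weights built into $\log\nabla$ reassembles $\underline{\mathcal O}$, $\underline{\check k}$, and likewise $\underline{\slashed g_1},\dots,\underline{\slashed g_{n/2-1}},\tr\underline{\check k}$; one then sets $\underline{\check{\mathfrak h}}=\underline{\check h}-2(\log\nabla)\underline{\mathcal O}$. That $\big(\underline{\slashed g_0},\underline{\check h}\big)$ satisfies the straightness constraint is inherited from propagation of the Einstein constraints, and the $O(\epsilon)$-smallness of $\widetilde\Sigma\big(\underline{\slashed g_0},\underline{\check h}\big)$ in $\|\cdot\|_M$ comes from tracking the constants through the band-by-band estimates.

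\emph{The scattering map, Lipschitz bound, and invertibility.} Composing the map of the first step, data at $\mathcal{I}^-\mapsto$ global solution, with that of the second step, global solution $\mapsto$ data at $\mathcal{I}^+$, yields the well-defined map $\mathcal S$ of \eqref{definition of scattering map intro}, single-valuedness coming from the uniqueness above. Since every intermediate estimate is linear in the data norm and stated in the same Sobolev scale at both ends --- this is exactly the no-derivative-loss output of \cite{Cwave} --- one obtains \eqref{main estimate for scattering map intro}, which in particular shows that $\mathcal S$ maps $B_\epsilon^M(\widetilde\Sigma_{dS})$ into $B_{C_M\epsilon}^M(\widetilde\Sigma_{dS})$. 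Applying the same energy and asymptotic estimates to the difference of two solutions, whose perturbation satisfies a linear equation with coefficients $O(\epsilon)$-close to the de Sitter ones, gives the local Lipschitz bound at $\widetilde\Sigma_{dS}$. Finally, local invertibility follows from the time-reversal symmetry $T\mapsto -T$ of \eqref{general form for g tilde} and of \eqref{EVE}: for $\epsilon$ small enough, the identical construction started from data at $\mathcal{I}^+$ defines a map $\underline{\mathcal S}$ on the image of $\mathcal S$, and by uniqueness of the global solution attached to given asymptotic data one has $\underline{\mathcal S}\circ\mathcal S=\mathrm{id}$ and $\mathcal S\circ\underline{\mathcal S}=\mathrm{id}$ on a sufficiently small ball; hence $\mathcal S$ is a bi-Lipschitz homeomorphism onto its image, with the stated bounds.
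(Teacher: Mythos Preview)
Your high-level strategy is sound and identifies the right mechanisms --- the Bessel-type ODE structure at each frequency, the role of the integer order when $n$ is even, and the $\log\nabla$ renormalization --- but the route differs from the paper's in two structural respects. First, the paper does not work in the $(n+1)$-dimensional Fefferman--Graham gauge you propose; it passes to the equivalent $(n+2)$-dimensional ambient metric in double null coordinates (Theorem~\ref{main theorem of the paper ambient}), where the Einstein equations become a first-order system for Ricci coefficients and curvature components, and the existence/asymptotic-completeness arguments are carried out via a three-region decomposition (near $\{v=0\}$, intermediate, near $\{u=0\}$) with weighted energy estimates that \emph{lose} derivatives. Second --- and this is the more substantive point --- the paper does not attempt a single sharp global bootstrap. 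Instead it separates a preliminary, derivative-losing global existence result (Section~\ref{stability dS section}, which does not use \cite{Cwave}) from the sharp scattering estimate, and the latter is obtained by introducing an intermediate finite-time norm $\Xi_M$ on the sphere $\{u=-1,v=1\}$ and proving the two-sided equivalence $\|\Sigma\|_M\sim\Xi_M$ (Theorems~\ref{main theorem forward direction full system} and~\ref{main theorem backward direction full system}); the crucial observation is that $\Xi_M$ carries an extra half-derivative of angular control, and it is this gain --- proved via the frequency-dependent low/high regime split for the model systems in \cite{Cwave} --- that allows the scattering map to close without loss. Your single-bootstrap approach would have to encode this gain internally, which is possible but not what the paper does. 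Finally, a small correction: the theorem's ``locally Lipschitz at $\widetilde\Sigma_{dS}$'' is \emph{defined} to mean the bound \eqref{main estimate for scattering map intro}, i.e.\ boundedness of $\mathcal S$ relative to de Sitter, not Lipschitz continuity for differences of data; your proposed difference-estimate argument is neither needed nor carried out in the paper.
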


\begin{remark}
    The result for the scattering map $\mathcal{S}$ is sharp and avoids any "derivative loss", in the sense that in \eqref{definition of scattering map intro} and \eqref{main estimate for scattering map intro} we use the same Sobolev-type asymptotic initial data norm of order $M$ to measure the smallness of asymptotic data at $\mathcal{I}^-$ and $\mathcal{I}^+$.
\end{remark}

In the remainder of the introduction, we flesh out the previous discussion with more details. In Section~\ref{previous results} we discuss some relevant previous results. In Section~\ref{ambient metric section} we introduce the ambient metric formulation of the problem and restate our main result in an equivalent form. In Section~\ref{proof outline section} we discuss the ideas of the proof in some detail and explain how the results in our companion paper \cite{Cwave} are used in the proof of the sharp result for the scattering map. Finally, in Section~\ref{paper outline section} we outline the structure of the rest of the paper.

\subsection{Previous Results}\label{previous results}

We present some previous results relevant for the scattering theory of asymptotically de Sitter vacuum solutions.

\subsubsection{The Stability of de Sitter Space}
Friedrich proved in \cite{Friedrich1, Friedrich2} that $(3+1)$-dimensional de Sitter space is non-linearly stable to small perturbations of the asymptotic data at $\mathcal{I}^-$. The proof uses the key fact that in $(3+1)$ dimensions de Sitter space has a smooth conformal compactification. By use of the conformal method, the study of global stability is reduced to a finite in time problem for the conformal equations, which can be written as a symmetric hyperbolic system. Additionally, this method also gives a scattering theory between asymptotic data at $\mathcal{I}^-$ and asymptotic data at $\mathcal{I}^+,$ which represent two regular spacelike hypersurfaces in the conformal spacetime.

In the case of the Einstein equations coupled to a non-linear scalar field, which is a generalization of (\ref{EVE}), Ringström proved stability in all dimensions in \cite{Ringstrom1} for small Cauchy initial data on a finite time spacelike hypersurface. This proof is robust in order to treat such general equations, but it does not give a description of the induced scattering data at infinity.

\subsubsection{The Fefferman-Graham Expansion} The starting point in the theory of local well-posedness with scattering data at $\mathcal{I}^-$ for all $n\geq3$ is given by the work of Fefferman-Graham \cite{Fefferman-Graham, ambient-metric}. 

To construct conformal invariants for an $n$-dimensional Riemannian manifold $\big(\mathcal{S},\slashed{g}_0\big),$ Fefferman and Graham first consider the corresponding ambient metric. We briefly introduce the ambient metric construction here and we discuss it in detail in Section~\ref{ambient metric section}. For any $\slashed{g}_0$ and any symmetric traceless 2-tensor $\check{h},$ the ambient metric is an $(n+2)$-dimensional self-similar vacuum metric given by a formal power series expansion determined by $\big(\slashed{g}_0,\check{h}\big)$. The conformal invariants of $\big(\mathcal{S},\slashed{g}_0\big)$ are then obtained using the classification of local pseudo-Riemannian invariants of the ambient metric. Under the additional assumption of straightness on $\check{h},$ which determines the divergence of $\check{h}$ in terms of $\slashed{g}_0$, the ambient metric is straight and can be quotient out by the action of the scaling vector field to obtain formal asymptotically de Sitter solutions of (\ref{EVE}).

We illustrate the Fefferman-Graham expansion of formal asymptotically de Sitter vacuum solutions. The scattering data are given by a Riemannian metric $\big(S^n,\slashed{g}_0\big)$ and a symmetric traceless straight 2-tensor $\check{h}$, which determine each term in the expansion. For $n\geq3$ odd, the expansion at $\mathcal{I}^-$ is smooth in terms of $e^T$:
\begin{equation}\label{FG expansion odd intro}
    \widetilde{\slashed{g}}=\slashed{g}_0+\frac{e^{2T}}{2^2}\slashed{g}_1+\ldots+\frac{e^{(n-1)T}}{2^{n-1}((n-1)/2)!}\slashed{g}_{(n-1)/2}+\frac{e^{nT}}{2^n}\check{k}+O\big(e^{(n+1)T}\big).
\end{equation}
In the case of $n\geq4$ even, we have the expansion at $\mathcal{I}^-$:
\begin{equation}\label{FG expansion even intro}
    \widetilde{\slashed{g}}=\slashed{g}_0+\frac{e^{2T}}{2^2}\slashed{g}_1+\ldots+\frac{e^{(n-2)T}}{2^{n-2}(n/2-1)!}\slashed{g}_{n/2-1}+\frac{2Te^{nT}}{2^n(n/2)!}\mathcal{O}+\frac{e^{nT}}{2^n(n/2)!}\check{k}+O\big(Te^{(n+2)T}\big).
\end{equation}
The compatibility relations are obtained by taking the limit of \eqref{EVE} at $\mathcal{I}^-$ at each order. The terms of order less than $e^{nT}$ are determined by $\slashed{g}_0.$ We also have that $\text{tr}\check{k}$ is determined by $\slashed{g}_0,$ and that the trace-free part of $\check{k}$ is $\check{h}.$ Finally, all the higher order terms in the expansion are determined by $\slashed{g}_0$ and $\check{h}.$

\subsubsection{The Local Well-posedness Theory with Scattering Data} The above expansions \eqref{FG expansion odd intro} and \eqref{FG expansion even intro} were computed formally in the smooth category in \cite{Fefferman-Graham, ambient-metric}, and convergence was only proven in the case of analytic scattering data. The rigorous proof of the Fefferman-Graham expansion in the smooth case was done in \cite{selfsimilarvacuum} in a more general context (and revisited in \cite{hintz}). Restricted to our situation, the results of \cite{selfsimilarvacuum} imply the following local well-posedness result with scattering data:
\begin{theorem}[\cite{selfsimilarvacuum}]\label{RSR theorem}
    For any $n\geq3$ and any smooth straight scattering data $\big(\slashed{g}_0,\check{h}\big)$, there exists a unique solution of (\ref{EVE}) of the form (\ref{general form for g tilde}) in a neighborhood of $\mathcal{I}^-$ which satisfies the above expansions.
\end{theorem}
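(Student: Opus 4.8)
The statement is a reformulation of the main local well-posedness result of \cite{selfsimilarvacuum} (see also \cite{hintz}), and the natural way to prove it is to work directly in the gauge \eqref{general form for g tilde}. That form is a normalization condition: it amounts to choosing Gaussian normal coordinates based at $\mathcal{I}^-$ and absorbing the conformal factor $e^{-2T}$, and one first checks that this gauge is attainable and that, modulo propagation of the Gauss--Codazzi constraints along the $T$-flow, the vacuum equations \eqref{EVE} are equivalent to a quasilinear second-order system for the single unknown $\widetilde{\slashed{g}}_{AB}(T,\theta)$. Passing to the Fefferman--Graham variable $\tau=e^T/2$, this system acquires a regular-singular (Fuchsian) structure as $\tau\to 0$, schematically $\tau^2\partial_\tau^2\widetilde{\slashed{g}}+\tau\partial_\tau\widetilde{\slashed{g}}\cdot A+B\cdot\widetilde{\slashed{g}}=\mathcal{N}[\widetilde{\slashed{g}},\tau\partial_\tau\widetilde{\slashed{g}},\text{angular derivatives of }\widetilde{\slashed{g}}]$, whose linearization at $\slashed{g}_0$ has indicial roots $0$ and $n$ on the relevant tensor component.

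The first step is the formal indicial analysis: inserting the polyhomogeneous ansatz $\widetilde{\slashed{g}}=\slashed{g}_0+\sum_{j\geq 1}\tau^{2j}\slashed{g}_j+(\text{possible }\tau^{n}\log\tau\text{ term})$ into the reduced system and solving order by order. Because the only nonnegative indicial roots are $0$ and $n$, the recursion uniquely determines $\slashed{g}_j$ in terms of $\slashed{g}_0$ for all $2j<n$; when $n$ is even, at order $\tau^{n}$ the recursion is resonant, so a logarithmic term $\tau^{n}\log\tau\cdot\mathcal{O}$ is forced with $\mathcal{O}$ (the obstruction tensor) computable from $\slashed{g}_0$, and the trace-free part of the $\tau^{n}$ coefficient is unconstrained while its trace is fixed by $\slashed{g}_0$ via the Hamiltonian constraint. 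The straightness hypothesis on $\check{h}$ is precisely what is needed for the momentum constraint to be consistent at this order, so that $(\slashed{g}_0,\check{h})$ determines every subsequent coefficient, yielding the expansion \eqref{FG expansion even intro}. This produces, for each $N$, an explicit approximate solution $\widetilde{\slashed{g}}^{(N)}$, polynomial in $(\tau,\log\tau)$, for which $\widetilde{g}^{(N)}$ solves \eqref{EVE} up to an error of size $O(\tau^{N})$ (modulo logarithms).

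The second step is to promote the approximate solution to a genuine one by writing $\widetilde{\slashed{g}}=\widetilde{\slashed{g}}^{(N)}+\widetilde{r}$ with $N$ large. The remainder $\widetilde{r}$ solves a quasilinear Fuchsian problem with forcing $O(\tau^{N})$ and the boundary condition that $\widetilde{r}$ vanish to high order as $\tau\to 0$ (trivial Cauchy data at $\mathcal{I}^-$). Existence of $\widetilde{r}$ with the correct decay and regularity follows from a weighted-energy iteration: one propagates high-order Sobolev norms of $\widetilde{r}$ weighted by $\tau^{-2\beta}$ for a suitable $\beta<N$, closing a Gronwall estimate using that the singular coefficients have the correct sign at the indicial level (the gap between the roots $0$ and $n$), and then upgrades regularity by commuting with $\tau\partial_\tau$ and angular derivatives. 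Uniqueness follows in the same framework: the difference of two solutions of the stated form with the same $(\slashed{g}_0,\check{h})$ must agree with its own formal expansion to all orders, hence solves the homogeneous Fuchsian problem with faster-than-any-polynomial decay, which forces it to vanish by the weighted energy estimate.

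The main obstacle is the analysis of this Fuchsian problem at $\mathcal{I}^-$ in even dimension: the logarithmic term $\tau^{n}\log\tau\cdot\mathcal{O}$ means one cannot work with spaces of functions smooth in $\tau$, so the energy scheme must be run on polyhomogeneous (log-polynomial) expansions while tracking the precise coefficient structure, and one must verify that no further logarithms are generated below order $\tau^{n+2}$. An alternative that circumvents some of this, and is the route effectively taken in \cite{selfsimilarvacuum} building on \cite{ambient-metric}, is to lift the problem to the $(n+2)$-dimensional ambient metric, where self-similarity reduces the construction to a characteristic/Fuchsian initial value problem for a genuinely hyperbolic system; one then quotients by the scaling vector field to recover $\widetilde{g}$ in the form \eqref{general form for g tilde}. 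Either way, obtaining existence, uniqueness, and the sharp form of the expansion simultaneously is the delicate point.
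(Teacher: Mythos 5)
The paper does not prove Theorem~\ref{RSR theorem}; it cites it as the main local well-posedness result of \cite{selfsimilarvacuum} (revisited in \cite{hintz}), so there is no in-paper proof to compare against. What can be compared is your sketch with the route \cite{selfsimilarvacuum} actually takes, which the paper summarizes in Sections~\ref{previous results} and~\ref{ambient metric section}. Your primary presentation is a direct Fuchsian argument in the $(n+1)$-dimensional formulation: indicial analysis at $\tau=0$ to build a polyhomogeneous approximate solution, then a weighted-energy scheme to produce and control the remainder. The formal part of that sketch is accurate — indicial roots $0$ and $n$, the resonance at order $\tau^n$ for even $n$ producing the $\tau^n\log\tau\cdot\mathcal{O}$ term, the trace of $\check k$ fixed by the Hamiltonian constraint, straightness of $\check h$ encoding consistency of the momentum constraint at that order — and this is exactly the content of \eqref{FG expansion even intro} and \eqref{FG expansion odd intro}. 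However, your closing paragraph correctly identifies what \cite{selfsimilarvacuum} (and the present paper throughout) actually does, and it should be regarded as the main route, not an alternative: one lifts to the $(n+2)$-dimensional straight self-similar ambient metric of Definition~\ref{ambient metric definition}, works in a double null gauge with data posed on the null hypersurface $\{v=0\}$, uses self-similarity to reduce scattering data on $\{v=0\}$ to data on the single sphere $\{u=-1,v=0\}\times S^n$, solves a characteristic initial value problem for the hierarchy of null structure and Bianchi equations, and then quotients by the scaling vector field $S$ to recover $\widetilde g$ in the form \eqref{general form for g tilde}. The advantage, which the paper states explicitly, is that the ambient spacetime has a genuine double null foliation and the lift turns a global problem with data at infinity into a finite characteristic problem with mildly singular data, so the analysis is hyperbolic in flavor rather than Fuchsian-elliptic, and it meshes with the estimates of \cite{nakedsing} that the paper reuses in regions~II and~III. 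Two further details worth knowing: \cite{selfsimilarvacuum} proves the result in the more general setting of ``proto-ambient metrics'' (only requiring the expansion up to the $\check k$ term), and the characteristic data at $\{v=0\}$ are precisely the tensors $\slashed g_0,\ \slashed g_1,\ldots,\slashed g_{n/2-1},\ \mathcal O,\ \check k$ produced by your formal indicial analysis, so the two steps of your argument correspond to the formal construction of the data set and the characteristic evolution, respectively. Your proposal is therefore essentially correct in content; the only substantive caveat is the inversion of which route is primary.
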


The local well-posedness result of \cite{selfsimilarvacuum} is a fundamental ingredient needed to study the long time behavior of asymptotically de Sitter solutions of (\ref{EVE}). For simplicity, we only stated how the results of \cite{selfsimilarvacuum} apply in our situation of straight ambient metrics. However, the results of \cite{selfsimilarvacuum} hold in the very general context of "proto-ambient metrics", which only require the Fefferman-Graham expansion to hold up to the term containing $\check{k}.$

\subsubsection{A Scattering Theory in the Odd Spatial Dimension Case} The results of \cite{Friedrich1, Friedrich2} were further generalized in \cite{Anderson} for all $(n+1)$-dimensional de Sitter spaces with $n\geq3$ odd. While the conformal method does not apply in higher dimensions, there is nevertheless the simplification of having the expansion (\ref{FG expansion odd intro}) which is smooth in terms of $e^T.$ Moreover, in this case the Einstein equations (\ref{EVE}) can be replaced by the equation $\mathcal{O}=0$, which is conformally invariant and leads to a hyperbolic system in a suitable gauge. We notice that in $(3+1)$ dimensions the obstruction tensor $\mathcal{O}$ coincides with the Bach tensor, so the approach of \cite{Anderson} is to replace (\ref{EVE}) by the Bach equations. Using these ingredients, \cite{Anderson} generalizes the conformal method proof to obtain stability for all $n\geq3$ odd, which also gives a scattering theory between asymptotic data at $\mathcal{I}^-$ and asymptotic data at $\mathcal{I}^+$.

\subsubsection{The Wave Equation on de Sitter Space} The simplest model problem needed in order to understand the scattering of asymptotically de Sitter vacuum solutions is the linear wave equation on a fixed de Sitter background:
\begin{equation}\label{wave equation on dS background}
    \square_{dS}\widetilde{\phi}=0.
\end{equation}
The scattering problem in the more general case of the Klein-Gordon equation was addressed in \cite{microlocal}. Given a certain relation between the Klein-Gordon mass and the spatial dimension, which guarantees a smooth expansion at infinity similar to (\ref{FG expansion odd intro}), \cite{microlocal} provides a detailed description of the scattering map as a Fourier integral operator. However, in the case of $n\geq4$ even and vanishing Klein-Gordon mass, the solution satisfies an expansion at infinity similar to (\ref{FG expansion even intro}). The results of \cite{microlocal} prove that the scattering map is an isomorphism on $C^{\infty}$. 

In the case of (\ref{wave equation on dS background}) with $n\geq4$ even, we used a different approach in \cite{linearwave} to construct the scattering map as a Banach space isomorphism for asymptotic initial data $\big(\phi_0,\mathfrak{h}\big)\in H^{M+n}(S^n)\times H^M(S^n),$ for any $M\geq1$. We notice that $\phi_0$ plays a similar role to $\slashed{g}_0,$ and $\mathfrak{h}$ is again obtained by renormalizing a higher order term in the Fefferman-Graham expansion using $\log\nabla$ of the analogue of the obstruction tensor. Based on these similarities, \cite{linearwave} will provide the guideline for studying the scattering of asymptotically de Sitter vacuum solutions for all $n\geq4$ even in the present work. It turns out that the methods used in \cite{linearwave} are indeed robust and can be adapted in the current setting.

\subsubsection{Other Scattering Results} We note that the study of scattering theory in the context of black holes is a current area of research. We refer the reader to \cite{DHR}, \cite{DRSR}, \cite{fred}, \cite{hamed}, \cite{schdesitter}, and references therein.

\subsection{The Ambient Metric Formulation}\label{ambient metric section}

The ambient metric construction provides an embedding of solutions of (\ref{EVE}) of the form (\ref{general form for g tilde}) into $(n+2)$-dimensional self-similar vacuum spacetimes. The simplest example for this correspondence is that of de Sitter space $\big(\mathbb{R}\times S^n,\widetilde{g}_{dS}\big)$. The associated straight ambient metric is the $\{u<0,v>0\}\times S^n\subset\mathbb{R}^{n+2}$ region of Minkowski space with the Minkowski metric $m$, where $u$ and $v$ are the standard double null coordinates.

The embedding allows us to prove a scattering result at the level of the corresponding ambient metric instead. In the $n\geq4$ even case the solution is not smooth at infinity, so we can interpret this construction as a compactification that allows us to reduce a global problem with data at infinity to a finite problem with singular data. Another advantage of this setting is that the ambient spacetime has a natural double null foliation and we can use the approach developed in \cite{selfsimilarvacuum} and \cite{nakedsing}. Moreover, the explicit embedding provides additional structure on the ambient metric, as can be seen in the definition below.

\begin{definition}\label{ambient metric definition}
    Let $I\subset\mathbb{R}$ be an open interval, set $\widetilde{\mathcal{M}}=I\times S^n,$ and let $\big(\widetilde{\mathcal{M}},\widetilde{g}\big)$ be a solution of (\ref{EVE}) of the form:
    \[\widetilde{g}=-dT^2+e^{-2T}\widetilde{\slashed{g}}_{AB}(T,\theta^1,\ldots,\theta^n)d\theta^Ad\theta^B.\]
    We define the corresponding straight ambient metric to be $\big(\mathcal{M},g\big)$, where $\mathcal{M}=(-\infty,0)\times\widetilde{\mathcal{M}}$ and:
    \[g=ds^2+s^2\big(-dT^2+e^{-2T}\widetilde{\slashed{g}}_{AB}(T,\theta^1,\ldots,\theta^n)d\theta^Ad\theta^B\big).\]
    The spacetime $\big(\mathcal{M},g\big)$ is an $(n+2)$-dimensional straight self-similar vacuum spacetime, satisfying:
    \begin{equation}\label{actual vacuum equations}
        Ric(g)=0,\ \mathcal{L}_Sg=2g,\ S=s\partial_s.
    \end{equation}
    The term straight refers to the special form of the metric $g,$ obtained as a cone metric from $\widetilde{g}$ (see \cite{cone}).
    
    We define the double null coordinates $u<0,\ v>0$ by:
    \[e^T=2\sqrt{-\frac{v}{u}},\ s=-2\sqrt{-uv}.\]
    In double null coordinates, the straight ambient metric has the form:
    \[\mathcal{M}=\bigg\{u<0,\ v>0,\ \log2\sqrt{-\frac{v}{u}}\in I\bigg\}\times S^n\]
    \[g=-2(du\otimes dv+dv\otimes du)+\slashed{g}_{AB}(u,v,\theta^1,\ldots,\theta^n)d\theta^Ad\theta^B,\]
    where $\slashed{g}_{AB}(u,v,\theta^1,\ldots,\theta^n)=u^2\widetilde{\slashed{g}}_{AB}\big(T(u,v),\theta^1,\ldots,\theta^n\big).$ Moreover, the scaling vector field is $S=u\partial_u+v\partial_v.$

    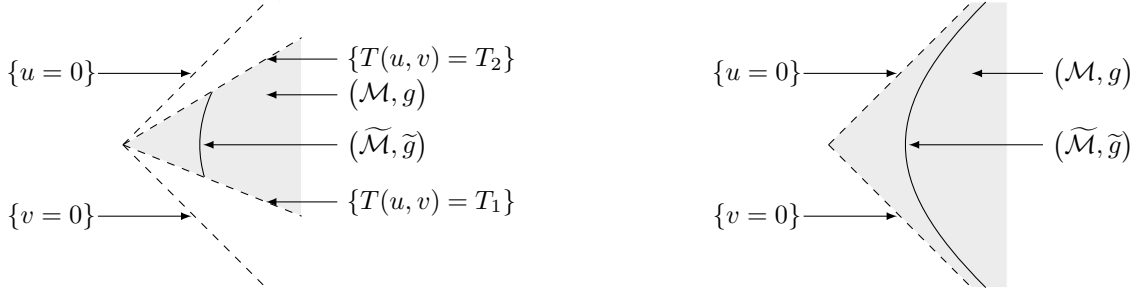
\begin{figure}[H]
    \centering
    \begin{minipage}{0.5\textwidth}
        \centering
        \begin{tikzpicture}[scale=0.95]
    \path[draw=white,fill plain picture={%
    \filldraw[fill=gray!15](0,0) circle (3);
    }]
    (0,2) -- (2.5,3.5) -- (2.5,1);
    \draw (2.2,0) .. controls (0.7,1.5) and (0.7,2.5) .. (2.2,4);
    \path[draw=white,fill plain picture={%
    \filldraw[fill=white](0,0) circle (3);
    }]
    (0,2) -- (2.5,3.5) -- (2.5,4) -- (2,4) -- (0,2);
    \path[draw=white,fill plain picture={%
    \filldraw[fill=white](0,0) circle (3);
    }]
    (0,2) -- (2.5,1) -- (2.5,0) -- (2,0) -- (0,2);
    \draw[-latex] (-0.35,3) -- (0.96,3);
    \draw (-1,3) node {$\{u=0\}$};
    \draw[dashed] (0,2) -- (2,4);
    \draw[dashed] (0,2) -- (2,0);
    \draw[-latex] (-0.35,1) -- (0.96,1);
    \draw (-1,1) node {$\{v=0\}$};
    \draw[latex-] (1.12,2) -- (3,2);
    \draw (3,2)  node[anchor=west] {$\big(\widetilde{\mathcal{M}},\widetilde{g}\big)$};
    \draw[latex-] (2,2.7) -- (3,2.7);
    \draw (3,2.7)  node[anchor=west] {$\big(\mathcal{M},g\big)$};
    \draw[latex-] (2,3.2) -- (3,3.2);
    \draw (3,3.2)  node[anchor=west] {$\{T(u,v)=T_2\}$};
    \draw[latex-] (2,1.2) -- (3,1.2);
    \draw (3,1.2)  node[anchor=west] {$\{T(u,v)=T_1\}$};
    \draw[dashed] (0,2) -- (2.5,3.5);
    \draw[dashed] (0,2) -- (2.5,1);
    \end{tikzpicture}
    \end{minipage}\hfill
    \begin{minipage}{0.5\textwidth}
        \centering
        \begin{tikzpicture}[scale=0.95]
    \path[draw=white,fill plain picture={%
    \filldraw[fill=gray!15](0,0) circle (3);
    }]
    (0,2) -- (2,0) -- (2.5,0) -- (2.5,4) -- (2,4);
    \draw[-latex] (-0.35,3) -- (0.96,3);
    \draw (-1,3) node {$\{u=0\}$};
    \draw[dashed] (0,2) -- (2,4);
    \draw[dashed] (0,2) -- (2,0);
    \draw[-latex] (-0.35,1) -- (0.96,1);
    \draw (-1,1) node {$\{v=0\}$};
    \draw (2.2,0) .. controls (0.7,1.5) and (0.7,2.5) .. (2.2,4);
    \draw[latex-] (1.12,2) -- (3,2);
    \draw (3,2)  node[anchor=west] {$\big(\widetilde{\mathcal{M}},\widetilde{g}\big)$};
    \draw[latex-] (2,3) -- (3,3);
    \draw (3,3)  node[anchor=west] {$\big(\mathcal{M},g\big)$};
    \end{tikzpicture}
    \end{minipage}
    \caption{Embedding of $\big(\widetilde{\mathcal{M}},\widetilde{g}\big)$ in the ambient spacetime $\big(\mathcal{M},g\big)$ in the cases $I=(T_1,T_2)$ and $I=\mathbb{R}$.}
    \label{fig:ambient}
\end{figure}

    In general, we refer to any spacetime $\big(\mathcal{M},g\big)$ satisfying the above properties as a straight self-similar vacuum spacetime in double null coordinates. Given any such spacetime, we can quotient by the scaling vector field in order to obtain the corresponding $(n+1)$-dimensional spacetime $\big(\widetilde{\mathcal{M}},\widetilde{g}\big)$ of the form (\ref{general form for g tilde}) which solves (\ref{EVE}).
\end{definition}

The main point of the ambient metric construction is that establishing a complete scattering theory for solutions of (\ref{EVE}) with scattering data close to the data of de Sitter space is equivalent to proving a scattering theory for straight self-similar vacuum spacetimes with scattering data on $\{v = 0\}$ or $\{u = 0\}$ close to the data of Minkowski space. This follows since we can identify the $\mathcal{I}^-$ of $\big(\widetilde{\mathcal{M}},\widetilde{g}\big)$ with the quotient of $\{v = 0\}$ by the action of the scaling vector field, and similarly we can identify $\mathcal{I}^+$ with the quotient of $\{u=0\}$ by $S$. We refer the reader to \cite{cone} and \cite{selfsimilarvacuum} for a general proof of the correspondence. 

\textbf{Notation convention.} We use the tilde superscript notation in the original $(n+1)$-dimensional formulation (for example $\widetilde{\mathcal{M}},\widetilde{g},\widetilde{\slashed{g}}$), and we drop the tilde superscript in the ambient metric formulation (for example $\mathcal{M},g,\slashed{g}$).

\paragraph{The local well-posedness theory.} Before stating our main result in the ambient metric formulation, we outline the local well-posedness theory in the current situation. By self-similarity, to obtain scattering data on $\{v=0\}$ it suffices to specify data on the sphere $\{u=-1,v=0\}\times S^n.$ Thus, the notion of scattering data in the ambient metric formulation is the same as in the original $(n+1)$-dimensional formulation. The results of \cite{selfsimilarvacuum} imply that for any smooth straight scattering data $\big(\slashed{g}_0,\check{h}\big)$ at $\{u=-1,v=0\}\times S^n,$ there exists a unique straight self-similar vacuum spacetime in double null coordinates defined in a neighborhood of $\{v = 0\}$:
\[\bigg\{u<0,\ v>0,\ 0<-\frac{v}{u}<\underline{v}\bigg\}\times S^n\]
for some small $\underline{v}>0$ depending on the size of the initial data. Moreover, the solution satisfies the expansion:
\[u^{-2}\slashed{g}=\slashed{g}_0+v/|u|\slashed{g}_1+\ldots+\frac{(v/|u|)^{\frac{n-2}{2}}}{(n/2-1)!}\slashed{g}_{n/2-1}+\frac{(v/|u|)^{\frac{n}{2}}\log\big(4v/|u|\big)}{(n/2)!}\mathcal{O}+\frac{(v/|u|)^{\frac{n}{2}}}{(n/2)!}\check{k}+O\Big((v/|u|)^{\frac{n+2}{2}}\log\big(v/|u|\big)\Big)\]
for the same 2-tensors $\slashed{g}_1,\ldots,\slashed{g}_{n/2-1},\mathcal{O},\check{k}$ as above, determined by $\slashed{g}_0$ and $\check{h}.$ The same result holds for data at $\{u = 0\},$ upon replacing $(u,v)$ by $(-v,-u).$

\paragraph{The main result in the ambient metric formulation.}
We briefly explain the corresponding notion of an asymptotic initial data set in the current setting, and refer the reader to Definition \ref{asymptotic data set definition} for the precise definition. 

In what follows, we assume some familiarity with the double null formalism introduced in detail in Section~\ref{set up section}. We denote the Ricci coefficients schematically by $\psi$ and the curvature components by $\Psi.$ We consider the case of scattering data at $\{u=-1,v=0\}\times S^n,$ and note that the case of scattering data at $\{u=0,v=1\}\times S^n$ is defined similarly by replacing $(u,v)$ with $(-v,-u).$ Given smooth scattering data $\big(\slashed{g}_0,\check{h}\big),$ we define the \textit{asymptotic initial data set} $\Sigma\big(\slashed{g}_0,\check{h}\big)$ to be the collection of tensors on $\{u=-1,v=0\}\times S^n$ consisting of: the metric $\slashed{g}_0;$ the double null quantities $\psi$ and $\Psi,$ together with certain angular and $\nabla_{\partial_v}$ derivatives of these tensors, which can be computed by the compatibility relations in terms of $\slashed{g}_0$ (as in \cite{selfsimilarvacuum}, the specification of these tensors is equivalent to the specification of $\slashed{g}_1,\ldots,\slashed{g}_{n/2-1},\tr\check{k}$); the obstruction tensor $\mathcal{O}$; and the renormalized tensor $\check{\mathfrak{h}}=\check{h}-2\big(\log\nabla\big)\mathcal{O}$. Next, we define the \textit{asymptotic initial data norm}, measuring closeness to the Minkowski data. For any tensor $\phi\in\Sigma\big(\slashed{g}_0,\check{h}\big)$, we denote by $\phi^*=\phi-\phi_{\mathrm{Minkowski}}$ the tensor obtained as the difference of $\phi$ and its Minkowski value. As before, we define the \textit{asymptotic initial data norm of order $M$} by:
\[\Big\|\Sigma\big(\slashed{g}_0,\check{h}\big)\Big\|_M^2=\sum_{\phi\in\Sigma(\slashed{g}_0,\check{h})}\big\|\phi^*\big\|_{H^{M+1}(S^n)}^2.\]
We denote by $\Sigma_{\mathrm{Minkowski}}=\Sigma\big(\frac{1}{4}\slashed{g}_{S^n},0\big)$ the initial data set corresponding to Minkowksi space. For any $\epsilon>0$, we define the set of smooth $\epsilon$-small asymptotic data of order $M$ by:
\[B_{\epsilon}^M\big(\Sigma_{\mathrm{Minkowski}}\big)=\Big\{\Sigma\big(\slashed{g}_0,\check{h}\big):\ \big\|\Sigma\big(\slashed{g}_0,\check{h}\big)\big\|_M<\epsilon\Big\}.\]

\begin{remark}\label{remark about data sets}
    In order to make the previous definition of $\widetilde{\Sigma}\big(\slashed{g}_0,\check{h}\big)$ precise, we require that the norms $\big\|\Sigma\big(\slashed{g}_0,\check{h}\big)\big\|_M$ and $\big\|\widetilde{\Sigma}\big(\slashed{g}_0,\check{h}\big)\big\|_M$ are equivalent, where $\Sigma\big(\slashed{g}_0,\check{h}\big)$ is given as in Definition \ref{asymptotic data set definition}. This determines the exact components that are contained in the set $\widetilde{\Sigma}\big(\slashed{g}_0,\check{h}\big)$.
\end{remark}
Using the ambient metric construction, we can restate Theorem \ref{main theorem of the paper} in the following equivalent formulation: 

\begin{theorem}\label{main theorem of the paper ambient}
    For any even integer $n\geq4,$ we have a complete scattering theory for straight self-similar vacuum spacetimes determined by small data.  For any $M>0$ large enough there exists $\epsilon_0>0$ small enough, such that for $0<\epsilon\leq\epsilon_0$ we have:
    \begin{enumerate}
    \item\textbf{Existence and uniqueness of scattering states:} for any smooth $\epsilon$-small asymptotic data of order $M$ at $\{v=0\}$ or $\{u=0\}$ given by $\Sigma\big(\slashed{g}_0,\check{h}\big)\in B_{\epsilon}^M\big(\Sigma_{\mathrm{Minkowski}}\big)$, there exists a unique smooth straight self-similar vacuum solution $\big(\mathcal{M},g\big)$ in double null coordinates defined globally in $\{u<0,\ v>0\}\times S^n,$ which remains quantitatively close to Minkowski space, in the sense of Propositions \ref{region I bounds proposition}, \ref{region II bounds proposition},  and \ref{region III bounds proposition};
    \item\textbf{Asymptotic completeness:} any smooth straight self-similar vacuum spacetime in double null coordinates which is quantitatively close in the sense of Remark \ref{remark about cauchy data} to Minkowski space on a spacelike hypersurface $\{v=c|u|\}$, can be extended to the region $\{u<0,\ v>0\}\times S^n$ and induces smooth scattering data $\big(\slashed{g}_0,\check{h}\big)$ at $\{v=0\}$ and $\big(\underline{\slashed{g}_0},\underline{\check{h}}\big)$ at $\{u=0\}$. Moreover, the solution extends to $\big((-\infty,0]\times[0,\infty)\backslash\{(0,0)\}\big)\times S^n$ as a weak solution of the Einstein vacuum equations \eqref{actual vacuum equations};
    \item\textbf{Existence of a scattering map with quantitative estimates:} there exists a constant $C_M>0$ independent of $\epsilon,$ such that we have a well-defined scattering map taking the asymptotic data at $\{v=0\}$ to asymptotic data at $\{u=0\}$:
    \begin{equation}\label{definition of scattering map intro ambient}
        \mathcal{S}:B_{\epsilon}^M\big(\Sigma_{\mathrm{Minkowski}}\big)\rightarrow B_{C_M\epsilon}^M\big(\Sigma_{\mathrm{Minkowski}}\big),\ \mathcal{S}\Big(\Sigma\big(\slashed{g}_0,\check{h}\big)\Big)=\Sigma\big(\underline{\slashed{g}_0},\underline{\check{h}}\big).
    \end{equation}
    The scattering map is locally invertible and locally Lipschitz  at $\Sigma_{\mathrm{Minkowski}},$ in the sense that it satisfies the quantitative estimates:
    \begin{equation}\label{main estimate for scattering map intro ambient}
        \Big\|\mathcal{S}\big(\Sigma(\slashed{g}_0,\check{h})\big)\Big\|_M\leq C_M\Big\|\Sigma\big(\slashed{g}_0,\check{h}\big)\Big\|_M.
    \end{equation}
\end{enumerate}
\end{theorem}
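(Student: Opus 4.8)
The plan is to prove the equivalent ambient formulation, Theorem~\ref{main theorem of the paper ambient}, by analyzing the $(n+2)$-dimensional straight self-similar vacuum equations \eqref{actual vacuum equations} in the double null formalism. Self-similarity reduces every estimate to the unit hypersurfaces, so that all Ricci coefficients $\psi$, curvature components $\Psi$, and the induced data are tensor fields on copies of $S^n$. I would split the causal diamond $\{u<0,\,v>0\}\times S^n$ into three regions --- a neighborhood of $\{v=0\}$ (Region~I, $0<-v/u\le\underline v$), a bulk region $\underline v\le -v/u\le\underline v^{-1}$ corresponding to a finite de Sitter time interval (Region~II), and a neighborhood of $\{u=0\}$ (Region~III) --- prove uniform smallness of $(\psi,\Psi)$ in $H^{M+1}$ in each region by a bootstrap argument, and glue the three regimes. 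Parts~(1)--(3) then follow by running this scheme forward from $\{v=0\}$ data (existence and uniqueness of scattering states), by running it from an arbitrary spacelike slice $\{v=c|u|\}$ toward both boundaries (asymptotic completeness), and by composing (the scattering map).

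For Region~I, I would start from the local well-posedness result of \cite{selfsimilarvacuum} (Theorem~\ref{RSR theorem}), which for $\epsilon$-small scattering data $\Sigma(\slashed g_0,\check h)$ supplies a solution in $\{0<-v/u<\underline v\}\times S^n$ together with its Fefferman--Graham expansion, and then upgrade it to the uniform bound $\|(\psi,\Psi)\|_{H^{M+1}}\lesssim_M\epsilon$ on all of $\{0<-v/u\le\underline v\}$ via $\nabla_{\partial_v}$-weighted energy estimates for the Bianchi and structure equations. The obstacle --- and, I expect, the main difficulty of the whole argument --- is that in even spatial dimension the expansion is not smooth: the obstruction tensor $\mathcal O$ appears at top order multiplied by $\log(v/|u|)$, so a direct energy estimate degenerates logarithmically and loses one derivative relative to the data norm, destroying the sharp statement. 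The resolution is to replace the singular top-order curvature component by a renormalized variable adapted to the decomposition $\check{\mathfrak h}=\check h-2(\log\nabla)\mathcal O$, where $\log\nabla$ is built from the geometric Littlewood--Paley projections of Section~\ref{LP Section}: each Littlewood--Paley piece of the solution satisfies a clean transport/wave equation whose contribution can be summed without loss, and this is precisely the analysis carried out in the companion paper \cite{Cwave}. This yields the Region~I bounds at the exact Sobolev level $H^{M+1}$.

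In Region~II the metric is close to the flat model (the slab of Minkowski space between the two cones), equivalently $\widetilde{\slashed g}$ is close to a round metric over a finite range of $T$; here standard high-order energy estimates for the double null system close, with the de Sitter expansion factor $e^{-2T}$ supplying favorable decay, and the $C_M\epsilon$ bounds propagate across Region~II with no loss. In Region~III one runs the Region~I analysis in reverse: derive the Fefferman--Graham expansion at $\{u=0\}$ directly from the evolution equations, show the solution extends continuously up to $\{u=0\}$ and weakly to the punctured corner of $\big((-\infty,0]\times[0,\infty)\setminus\{(0,0)\}\big)\times S^n$, and extract the induced data $(\underline{\slashed g_0},\underline{\check h})$ as boundary traces. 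The delicate point, again handled by the renormalized variable and the Littlewood--Paley machinery, is that these traces must lie in $H^{M+1}$ with no derivative loss. Combining the three regions gives the global solution of part~(1); applying the same estimates to a solution only assumed close to Minkowski on $\{v=c|u|\}$, evolved toward both $\{v=0\}$ and $\{u=0\}$, gives asymptotic completeness, part~(2).

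Finally, I would define $\mathcal S\big(\Sigma(\slashed g_0,\check h)\big)=\Sigma(\underline{\slashed g_0},\underline{\check h})$ using the global solution; parts~(1)--(2) make this well defined on $B_\epsilon^M(\Sigma_{\mathrm{Minkowski}})$, and chaining the no-loss bounds of Regions~I--III yields \eqref{main estimate for scattering map intro ambient}, hence also the mapping property in \eqref{definition of scattering map intro ambient}. For local invertibility, construct the candidate inverse by the symmetric construction with data prescribed at $\{u=0\}$; uniqueness of the self-similar solution in part~(1) forces $\mathcal S^{-1}\circ\mathcal S=\mathrm{id}$ near $\Sigma_{\mathrm{Minkowski}}$. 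Local Lipschitz continuity at $\Sigma_{\mathrm{Minkowski}}$ is \eqref{main estimate for scattering map intro ambient} together with $\mathcal S(\Sigma_{\mathrm{Minkowski}})=\Sigma_{\mathrm{Minkowski}}$; more robustly, one differences two nearby solutions and runs the Region~I--III estimates on the difference, the linearization at Minkowski being the linear scattering map of \cite{linearwave}, which is a Banach-space isomorphism at the relevant Sobolev level. Transporting everything back through the ambient construction of Definition~\ref{ambient metric definition} gives Theorem~\ref{main theorem of the paper}. The principal obstacle throughout remains the Region~I/III analysis: obtaining sharp, derivative-loss-free control at the two boundaries despite the logarithmic obstruction term, which is exactly what forces the renormalization $\check{\mathfrak h}=\check h-2(\log\nabla)\mathcal O$ and the geometric Littlewood--Paley decomposition of \cite{Cwave}.
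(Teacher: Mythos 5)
Your proposal correctly identifies the three-region decomposition, the bootstrap for global existence, the renormalization $\check{\mathfrak h}=\check h-2(\log\nabla)\mathcal O$ with geometric Littlewood--Paley projections, and the uniqueness-forces-invertibility argument. But it misidentifies \emph{where} the sharp analysis happens, and this matters. You propose to prove the uniform $H^{M+1}$ bound directly in Region~I (and by reversal in Region~III) and then glue. The paper does not do this, and for a good reason: the stability/bootstrap argument in Regions~I--III (Propositions~\ref{region I bounds proposition}--\ref{region III bounds proposition}) is deliberately \emph{lossy} --- it proves bounds only at $N'=N-c_0n$ angular derivatives and with $\epsilon^{1-2\delta}$ smallness --- and it must be so, because in Region~III the singular part of $\underline\alpha$ at $\{u=0\}$ is not known a priori from the data at $\{v=0\}$, so the obstruction tensor cannot be subtracted off and a sharp direct estimate there is blocked. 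The sharp estimates are proved by a \emph{second, separate pass}: the commuted Bianchi equations are recast as singular wave equations in $\tau=\sqrt v$ \emph{along the single null cone} $\{u=-1\}$ (crossing Regions~I and~II), and the crucial intermediate quantity is the fractional Sobolev energy $\Xi_M$ on the fixed sphere $S_{-1,1}$; Theorems~\ref{main theorem forward direction full system} and~\ref{main theorem backward direction full system} give $\|\Sigma\|_M\lesssim\Xi_M\lesssim\|\Sigma\|_M$, and the Lipschitz bound \eqref{main estimate for scattering map intro ambient} follows by applying the second inequality to the time-reversed spacetime. The frequency-dependent transition time $\tau\sim2^{-k}$ and the Bessel-type asymptotics of each Littlewood--Paley piece, which are the mechanism that produces exactly a half-derivative gain at $\tau=1$ from $M+1$ derivatives of data at $\tau=0$, are not a device internal to Region~I; they live on the null cone and are blind to the $\underline v$-boundaries of the regions. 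Your proposal as stated, with sharp region-by-region bounds glued together, has a genuine gap at Region~III and would also have to explain how to glue at a region boundary that is not frequency-adapted; the paper's architecture --- lossy stability first, then a one-cone wave analysis through an intermediate norm --- sidesteps both issues.
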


\begin{remark}
    As in Theorem \ref{main theorem of the paper}, the scattering map result is sharp and avoids any "derivative loss", since in \eqref{definition of scattering map intro ambient} and \eqref{main estimate for scattering map intro ambient} we use the same Sobolev-type norm to measure the smallness of asymptotic data at $\mathcal{I}^-$ and $\mathcal{I}^+$.
\end{remark}

\begin{remark}
    The ambient metric formulation is convenient in order to establish the existence, uniqueness of scattering states, and asymptotic completeness. However, one could work directly at the level of asymptotically de Sitter spaces of the form \eqref{general form for g tilde}. In this case it is convenient to use the time coordinate $\tau=e^T/2$. This approach is present in Section~\ref{model forward direction section}, Section~\ref{model backward direction section}, and \cite{Cwave}, which represent the main step in proving the sharp estimate \eqref{main estimate for scattering map intro ambient}.
\end{remark}

\subsection{Outline of the Proof}\label{proof outline section}

We present the main steps in the proof of Theorem \ref{main theorem of the paper ambient}. We assume familiarity with the basics of the double null formalism, and we refer the reader unfamiliar with these notions to read Section~\ref{set up section} for a detailed introduction. 

We recall that in this formalism, the Einstein vacuum equations \eqref{actual vacuum equations} can be written as a system of equations for the Ricci coefficients 
denoted by $\psi$ and the curvature components denoted by $\Psi$. The system has the following schematic form (see \cite{selfsimilarvacuum}):
\begin{equation}\label{general form for EVE in double null}
    \begin{cases}
        \nabla_3\psi=\Psi+\psi\cdot\psi,\ \nabla_4\psi=\Psi+\psi\cdot\psi,\\
        \nabla_3\Psi_1=\mathcal{D}\Psi_2+\psi\cdot\Psi,\ \nabla_4\Psi_2=-\mathcal{D}^*\Psi_1+\psi\cdot\Psi
    \end{cases}
\end{equation}
where $\mathcal{D}, \mathcal{D}^*$ are adjoint differential operators on $S^n,$ and $\nabla_3,\nabla_4$ are covariant derivatives in the $e_3=\partial_u, e_4=\partial_v$ directions. We denote by $\nabla$ the projection to the tangent space of $S^n$ of the covariant derivative in any direction tangent to $S^n$, and we refer to this differential operator as an angular derivative. The operators $\nabla_3,\nabla_4$, and $\nabla$ will be used below as commutators to obtain systems of equations with a similar form to \eqref{general form for EVE in double null}. We also point out that the system \eqref{general form for EVE in double null} has some simplifications, due to the special straight structure of the metric $g$ which has constant lapse and vanishing shift vector.

\subsubsection{Existence and Uniqueness of Scattering States}\label{stability intro section} The first statement of Theorem \ref{main theorem of the paper ambient} consists of global existence and quantitative estimates of the solution in the $(n+2)$-dimensional region $\{u<0,\ v>0\}\times S^n,$ given small scattering data at $\{v=0\}.$ We prove this in Theorems~\ref{stability of de sitter theorem in section} and \ref{propagation of regularity theorem}. In the original $(n+1)$-dimensional formulation, this result represents the global stability of de Sitter space with small scattering data at $\mathcal{I}^-.$ We point out that the proof of \cite{Ringstrom1} does not apply in the case of scattering data; additionally, we prefer to prove the needed stability result in the ambient metric setting, in order to obtain the estimates required for the rest of our proof.

We remark that the stability result that we prove at this stage is not optimal in terms of the smallness assumed on the initial data. For our purposes, we notice that $\Sigma\big(\slashed{g}_0,\check{h}\big)\in B_{\epsilon}^M\big(\Sigma_{\mathrm{Minkowski}}\big)$ implies:
\begin{equation}\label{smallness on initial data introduction}
    \big\|\slashed{g}_0^*\big\|_{\mathring{H}^{M}(S^n)}+\big\|\mathcal{O}\big\|_{H^{M}(S^n)}+\big\|\check{h}\big\|_{H^{M}(S^n)}\leq\epsilon,
\end{equation}
where $\mathring{H}^{M}(S^n)$ is the Sobolev space with respect to $\slashed{g}_{S^n}$ and $H^{M}(S^n)$ is the Sobolev space with respect to $\slashed{g}_0.$ For this part of the argument we use the smallness condition (\ref{smallness on initial data introduction}) instead of $\Sigma\big(\slashed{g}_0,\check{h}\big)\in B_{\epsilon}^M\big(\Sigma_{\mathrm{Minkowski}}\big)$. However, we point out that in order to prove the sharp estimate for the scattering map (\ref{main estimate for scattering map intro ambient}), we will need a more detailed analysis which makes use of the exact structure of $\Sigma\big(\slashed{g}_0,\check{h}\big).$

The strategy of the proof follows similar steps to \cite{nakedsing}. In Section~\ref{stability dS section}, we carry out a bootstrap argument and construct the solution in the following regions one at a time, for $\underline{v}>0$ sufficiently small:
\[I=\bigg\{0\leq\frac{v}{|u|}\leq\underline{v}\bigg\}\times S^n,\ II=\bigg\{\underline{v}\leq\frac{v}{|u|}\leq\underline{v}^{-1}\bigg\}\times S^n,\ III=\bigg\{\underline{v}^{-1}\leq\frac{v}{|u|}\bigg\}\times S^n\]
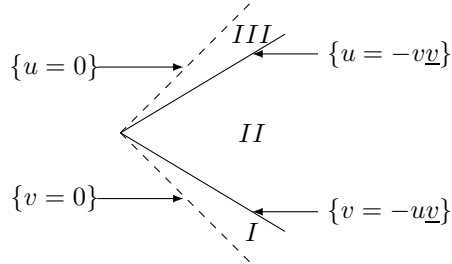
\begin{figure}[H]
\begin{center}
\begin{tikzpicture}[scale=0.875]
    \draw[-latex] (-0.35,3) -- (0.96,3);
    \draw (-1,3) node {$\{u=0\}$};
    \draw[dashed] (0,2) -- (2,4);
    \draw[dashed] (0,2) -- (2,0);
    \draw[-latex] (-0.35,1) -- (0.96,1);
    \draw (-1,1) node {$\{v=0\}$};
    \draw (2,3.5) node {$III$};
    \draw (2,2) node {$II$};
    \draw (2,0.5) node {$I$};
    \draw[latex-] (2,3.2) -- (3,3.2);
    \draw (3,3.2)  node[anchor=west] {$\{u=-v\underline{v}\}$};
    \draw[latex-] (2,0.8) -- (3,0.8);
    \draw (3,0.8)  node[anchor=west] {$\{v=-u\underline{v}\}$};
    \draw (0,2) -- (2.5,3.5);
    \draw (0,2) -- (2.5,0.5);
    \end{tikzpicture}
\end{center}
\caption{The decomposition into the regions $I,\ II,$ and $III$}
\label{fig:regions}
\end{figure}
The bounds that we prove for the solution need to be consistent with self-similarity, as in \cite{selfsimilarvacuum}. In region~I we have the scaling for the Ricci coefficients $|\psi^*|\sim\epsilon|u|^{-1}$ and for the curvature components $|\Psi|\sim\epsilon|u|^{-2}.$ Moreover, each $\nabla,\ \nabla_3,$ or $\nabla_4$ derivative that we apply to the double null unknowns raises their homogeneity by one, implying schematic self-similar bounds of the form:
\begin{equation}\label{self-similar bounds}
    \big|\nabla^i\nabla_4^j\nabla_3^k\psi^*\big|\lesssim_{i,j,k}\epsilon|u|^{-1-i-j-k},\ \big|\nabla^i\nabla_4^j\nabla_3^k\Psi\big|\lesssim_{i,j,k}\epsilon|u|^{-2-i-j-k}.
\end{equation}
The expected bounds in regions~II and III are similar for most double null quantities, replacing $|u|$ with $v.$

An essential aspect of the problem is that we can take at most $\frac{n-4}{2}$ $\nabla_4$ derivatives of the double null quantities in a neighborhood of $\{v=0\},$ and similarly for $\{u=0\}.$ This results from the presence of the obstruction term $\mathcal{O}$ in the Fefferman-Graham expansion of the solution near $\{v=0\}.$ We denote by $\Psi^G$ the curvature components different than $\alpha,$ and similarly by $\underline{\Psi}^G$ the curvature components different than $\underline{\alpha}.$ We have that all double null quantities $\nabla^i\nabla_4^j\nabla_3^k\psi$ and $\nabla^i\nabla_4^j\nabla_3^k\Psi^G$ extend to $\{v=0\}$ for $j\leq\frac{n-4}{2}$. However, $\alpha$ is mildly singular at $\{v=0\}$:
\begin{equation}\label{expansion for alpha introduction}
    |u|^{\frac{n-4}{2}}\nabla_4^{\frac{n-4}{2}}\alpha=\log\big(v/|u|\big)\mathcal{O}+h+O\Big(v/|u|\log\big(v/|u|\big)\Big),
\end{equation}
where $h$ is obtained from $\check{h}$ by subtracting a linear factor of $\mathcal{O}.$ In order to address this issue, when proving self-similar bounds for $\alpha$ in region~I we need to subtract off the singular term above.

Taking into consideration the necessary renormalization, the argument of \cite{selfsimilarvacuum} applies to prove existence and self-similar bounds in region~I, where we allow up to $N$ angular derivatives on the double null quantities, and $M=N+O(n),\ M>N.$ Similarly, the argument of \cite{nakedsing} applies to prove existence and self-similar bounds in region~II, again allowing up to $N$ angular derivatives on the double null quantities.

The main part of the proof of the first statement in Theorem \ref{main theorem of the paper ambient} involves showing existence and self-similar bounds in region~III in Section~\ref{region III section}. The difficult aspect is that we expect $\underline{\alpha}$ to be singular at $\{u=0\}$, as implied by the local well-posedness theory. Unlike in region~I, we do not determine a priori the singular part of $\underline{\alpha}$ given by the obstruction tensor, so we cannot subtract it off. We also notice that unlike the approach in \cite{nakedsing}, we cannot work with the reduced Bianchi system, which would remove $\underline{\alpha}$, as this does not work in the higher dimensional setting. Our solution is to propagate estimates for $\underline{\alpha}$ consistent with it blowing up at $\{u=0\}$, aided by the fact that in the straight higher dimensional case the singular behavior of $\underline{\alpha}$ is more mild than in \cite{nakedsing}. For the remaining double null quantities, we expect to prove regular self-similar bounds, similar to region~I.

We briefly explain how to prove energy estimates for the curvature components $\Psi$, as part of the bootstrap argument in Section~\ref{region III section}. We recall that the curvature components can be grouped into the Bianchi pairs $(\alpha,\nu),(\nu,R),(R,\underline{\nu}),(\underline{\nu},\underline{\alpha}).$ The Bianchi pair $(\underline{\nu},\underline{\alpha})$ satisfies the schematic equations:
\begin{align*}
    &\nabla_3\underline{\nu}_{ABC}=-2\nabla_{[A}\underline{\alpha}_{B]C}+\ldots\\
    &\nabla_4\underline{\alpha}_{AB}+\frac{n}{2v}\underline{\alpha}_{AB}=-\nabla^C\underline{\nu}_{C(AB)}+\ldots
\end{align*}
As in \cite{nakedsing}, for $0<q\ll p\ll1$ we conjugate the equations with $w=v^{\frac{3}{2}-p}|u|^{p-q}:$
\begin{align*}
    &\nabla_3w\underline{\nu}_{ABC}+\frac{p-q}{|u|}w\underline{\nu}_{ABC}=-2\nabla_{[A}w\underline{\alpha}_{B]C}+\ldots\\
    &\nabla_4w\underline{\alpha}_{AB}+\bigg(\frac{n-3}{2}+p\bigg)\frac{w}{v}\underline{\alpha}_{AB}=-\nabla^Cw\underline{\nu}_{C(AB)}+\ldots
\end{align*}
The energy estimates are obtained by contracting the above equations with $w\underline{\nu}$ and $w\underline{\alpha}$, integrating by parts, and multiplying by $|u|^{2q}$. The lower order terms imply the presence of bulk terms with favorable sign in the estimates. These bulk terms are even better in the case of $\underline{\nu},$ since $|u|/v$ is small in region~III. We notice that the same argument also applies when commuting with angular derivatives and up to $\frac{n-4}{2}$ $\nabla_3$ derivatives, which ensures that the lower order terms imply good bulk terms. Commuting with a high number of angular derivatives $\nabla^i$ simplifies our treatment of the error terms on the right hand side. The weight $w$ implies that the best estimate that we can prove for $\underline{\alpha}$ is:
\[\big\|\nabla^i\nabla_3^j\underline{\alpha}\big\|_{L^{2}(S^n)}\lesssim\epsilon^{1-2\delta}|u|^{-p}\cdot|v|^{-2-i-j+p},\]
where $j\leq\frac{n-4}{2},\ i<N,$ $\delta>0$ is a small constant, and the implicit constant in the above inequality is independent of $\epsilon$ and $\underline{v}.$ This bound is consistent with the singular behavior of $\alpha$ at $\{u=0\}$. The good bulk term obtained for $\underline{\nu}$ allows us to control the bulk term in the energy estimates for the Bianchi pair $\big(R,\underline{\nu}\big),$ and we similarly obtain energy estimates for all the curvature components. Moreover, the stronger control that we obtain for the bulk terms in the case of the quantities $\underline{\Psi}^G$ allows us to prove regular self-similar bounds. The simple transport structure of the equations for the Ricci coefficients also implies their respective self-similar bounds. We point out that we close the bootstrap assumptions using the smallness of $\underline{v}.$

Finally, we prove a standard propagation of regularity result in Theorem~\ref{propagation of regularity theorem} showing that if the scattering data is also smooth, the global solution obtained is smooth. 

\subsubsection{Asymptotic Completeness}\label{asymptotic completeness intro section} We consider a smooth straight self-similar vacuum spacetime in double null coordinates, which is quantitatively close to Minkowski space on a spacelike hypersurface $\{v=c|u|\}\times S^n$ with $\underline{v}\leq c\leq\underline{v}^{-1},$ in the sense that it satisfies the above self-similar bounds (\ref{self-similar bounds}) with $j,k\leq\frac{n-4}{2}$ and $i\leq N,$ where $M=N+O(n)$, $M>N$ for some large enough $N.$ In Theorem~\ref{asymptotic completeness in section theorem} we prove the second statement of Theorem \ref{main theorem of the paper ambient}, showing that the spacetime can be extended globally to $\{u<0,v>0\}\times S^n,$ and that it induces scattering data at $\{v=0\}$ and $\{u=0\}.$ The first part follows from Section~\ref{stability intro section}, as we remark that the analysis of \cite{nakedsing} in region~II applies in this setting as well, and we can repeat our analysis in region~III. We notice that in region~I the spacetime will satisfy similar bounds to the ones in region~III, as we have no information about the scattering data at this point.

In Section~\ref{asymptotic completeness section} we prove the existence of induced smooth scattering data $\big(\underline{\slashed{g}_0},\underline{\check{h}}\big)$ at $\{u=0\}$, as the case of $\{v=0\}$ is analogous. The strategy is to compute the terms in the expansion of $\slashed{g}$ at $\{u=0\}$ up to order $n/2.$

We first prove that certain regular quantities can be extended to $\{u=0\}$ and satisfy compatibility relations. These consist of up to $\frac{n-6}{2}$ $\nabla_3$ derivatives of $\underline{\alpha}$, up to $\frac{n-4}{2}$ $\nabla_3$ derivatives of $\underline{\Psi}^G$, up to $\frac{n-4}{2}$ $\nabla_3$ derivatives of $\psi$, and up to $\frac{n-2}{2}$ $\mathcal{L}_3$ derivatives of $\slashed{g}$, together with at most $N$ angular derivatives of these tensors. All these quantities satisfy a $\nabla_3$ equation, where we control the right hand side using the bounds proved in region~III. We prove that these tensors are in $W_u^{1,1}\big([-1,0]\big)L^2(S^n)$, so they can be extended to $\{u=0\}$. In particular, we compute $\underline{\slashed{g}_0}$ the induced metric on $\{u=0,v=1\}\times S^n.$ Evaluating the above $\nabla_3$ equations at $\{u=0\}$ implies that the above regular quantities are determined in terms of $\underline{\slashed{g}_0}$ by the compatibility relations of \cite{selfsimilarvacuum}. Equivalently, we obtain that the first $\frac{n-2}{2}$ terms in the expansion of $\slashed{g}$ are determined by $\underline{\slashed{g}_0}$ via the compatibility relations.

The next step is to compute the singular component of $\nabla_3^{\frac{n-4}{2}}\underline{\alpha}$, in order to obtain the obstruction tensor $\underline{\mathcal{O}}$ induced on $\{u=0\}.$ Using self-similarity, we can write the $\nabla_4$ Bianchi equation for $\nabla_3^{\frac{n-4}{2}}\underline{\alpha}$ schematically as:
\[\partial_u\big(v^{\frac{n-4}{2}}\nabla_3^{\frac{n-4}{2}}\underline{\alpha}\big)=-\frac{1}{u}\mathcal{E}_1+\mathcal{E}_2=\frac{1}{u}\underline{\mathcal{O}}+\frac{1}{|u|}\Big(\mathcal{E}_1-\mathcal{E}_1|_{u=0}\Big)+\mathcal{E}_2,\]
where we defined $\underline{\mathcal{O}}=-\mathcal{E}_1|_{u=0}$ which is independent of $v$ and can be computed in terms of the regular quantities at $\{u=0\}.$ We obtain that $\underline{\mathcal{O}}$ can be computed in terms of $\underline{\slashed{g}_0},$ and we prove it satisfies the compatibility relation of \cite{selfsimilarvacuum} which implies that it represents the obstruction tensor of $\underline{\slashed{g}_0}.$ 

The final step is to compute the induced tensor $\underline{\check{h}}$ on $\{u=0\}.$ The error term in the above equation is in $L^1_u\big([-\underline{v}v,0]\big)L^2(S^n),$ so we can integrate the equation to get:
\[v^{\frac{n-4}{2}}\nabla_3^{\frac{n-4}{2}}\underline{\alpha}-\underline{\mathcal{O}}\log\big(|u|/v\big)\in W^{1,1}_u\big([-\underline{v}v,0]\big)L^2(S^n).\]
We define the symmetric traceless 2-tensor $\underline{h}$ which is independent of $u$ and $v$ to be the limit at $\{u=0\}$ of the above expression. We obtain the expansion for $\nabla_3^{\frac{n-4}{2}}\underline{\alpha}:$
\[v^{\frac{n-4}{2}}\nabla_3^{\frac{n-4}{2}}\underline{\alpha}=\log\big(|u|/v\big)\underline{\mathcal{O}}+\underline{h}+O\big(|u|^{1-p}/v^{1-p}\big).\]
As before, $\underline{\check{h}}$ is obtained from $\underline{h}$ by adding a certain linear factor of $\underline{\mathcal{O}}$.

The proof of asymptotic completeness is concluded by applying Theorem \ref{RSR theorem} of \cite{selfsimilarvacuum}, in order to show that $\big(\underline{\slashed{g}_0},\underline{\check{h}}\big)$ represents the induced scattering data at $\{u=0\}$. This result also implies that $\underline{h}$ satisfies the straightness condition, since the spacetime $\big(\mathcal{M},g\big)$ is straight. Finally, we remark that the spacetime $\big(\mathcal{M},g\big)$ is smooth, but due to the mild singular behavior at $\{u=0\}$ and $\{v=0\}$ it extends to $\big((-\infty,0]\times[0,\infty)\backslash\{(0,0)\}\big)\times S^n$ as a weak solution of the Einstein vacuum equations \eqref{actual vacuum equations}, similarly to the solutions of \cite{selfsimilarvacuum}.

\subsubsection{The Scattering Map}\label{scattering map section intro} The third statement of Theorem \ref{main theorem of the paper ambient} consists of constructing the scattering map $\mathcal{S}$ according to (\ref{definition of scattering map intro ambient}), which satisfies the sharp estimate (\ref{main estimate for scattering map intro ambient}). Establishing the sharp result for the scattering map represents the most challenging part of our work. At top order, the proof relies on estimates proved in our companion paper \cite{Cwave} for the two model systems of wave equations introduced below. In the present paper, we already illustrate this part of the argument for a toy problem that contains the main difficulties. This introduces the reader to all the main ideas and facilitates the understanding of the proofs for the full model systems of wave equations, which are treated in \cite{Cwave}. The reader might wish to return to this section for assistance while reading the proof in Sections~\ref{model systemmm}-\ref{scattering map section}.

We first explain the preliminary scattering result obtained from the proofs of existence, uniqueness of scattering states, and asymptotic completeness in Theorems~\ref{stability of de sitter theorem in section} and \ref{asymptotic completeness in section theorem}. For smooth scattering data at $\{v=0\}$ which satisfies the smallness condition:
\[\big\|\slashed{g}^*_0\big\|_{\mathring{H}^{M}(S^n)}+\big\|\mathcal{O}\big\|_{H^{M}(S^n)}+\big\|\check{h}\big\|_{H^{M}(S^n)}\leq\epsilon,\]
we obtain a smooth straight self-similar vacuum spacetime $\big(\mathcal{M},g\big)$ in double null coordinates defined in the region $\{u<0,\ v>0\}\times S^n$. This induces smooth scattering data at $\{u=0\}$ satisfying the smallness condition:
\[\big\|\underline{\slashed{g}_0}^*\big\|_{\mathring{H}^{N}(S^n)}+\big\|\underline{\mathcal{O}}\big\|_{H^{N}(S^n)}+\big\|\underline{\check{h}}\big\|_{H^{N}(S^n)}\leq C\epsilon^{1-2\delta},\]
where $M=N+O(n),\ M>N$ for large enough $N,$ $\delta>0$ is a small constant, and $C>0$ is a constant independent of $\epsilon$. This confirms our previous claim in Section~\ref{stability intro section} that the stability result proved initially is not optimal in terms of the smallness assumptions on the initial data \eqref{smallness on initial data introduction}. In particular, the above result cannot give sharp estimates for the scattering map at this stage, since we only get control of the $H^N$ norm of the solution at $\{u=0\},$ despite starting with bounds on the $H^M$ norm of the solution at $\{v=0\},$ with $M>N.$ This issue is a fundamental feature of the problem, already present at the level of the wave equation (\ref{wave equation on dS background}) which was analysed in \cite{linearwave}. 

In order to prove a sharp scattering result, we must construct a notion of an asymptotic initial data set $\Sigma\big(\slashed{g}_0,\check{h}\big)$ and asymptotic initial data norm $\|\cdot\|_M$, which in the small data case allow us to prove the estimate:
\begin{equation}\label{sharp scattering estimate intro}
    \Big\|\Sigma\big(\underline{\slashed{g}_0},\underline{\check{h}}\big)\Big\|_M\leq C_M\Big\|\Sigma\big(\slashed{g}_0,\check{h}\big)\Big\|_M.
\end{equation}
This requires a detailed analysis of the problem which exploits the structure of the solution, and ultimately relies on replacing $h$ with the renormalized tensor $\mathfrak{h}=h-2\big(\log\nabla\big)\mathcal{O}$. 

Once we prove that for any $\Sigma\big(\slashed{g}_0,\check{h}\big)\in B_{\epsilon}^M\big(\Sigma_{\mathrm{Minkowski}}\big)$ the estimate (\ref{sharp scattering estimate intro}) holds, it is straightforward in Section~\ref{scattering map section} to construct the scattering map $\mathcal{S}$ satisfying (\ref{definition of scattering map intro ambient}) and (\ref{main estimate for scattering map intro ambient}), by also using the existence, uniqueness of scattering states, and asymptotic completeness. We outline the proof of (\ref{sharp scattering estimate intro}) for the rest of the section.

We introduce the norm $\Xi_M$ in Section~\ref{forward direction full system section}, representing the energy of the solution on $\{u=-1,v=1\}\times S^n.$ One remarkable aspect is that the norm $\Xi_M$ has improved angular control on the solution compared to the asymptotic data norm, by gaining half of a derivative. We have schematically that:
\begin{align*}
    &\Big\|\Sigma\big(\slashed{g}_0,\check{h}\big)\Big\|_M^2=\big\|\mathcal{O}\big\|_{H^{M+1}(S_{-1,0})}^2+\big\|\mathfrak{h}\big\|_{H^{M+1}(S_{-1,0})}^2+\big\|\nabla_4^{\frac{n-4}{2}}\Psi^G\big\|_{H^{M+1}(S_{-1,0})}^2+\ldots\\
    &\Xi_M^2=\sum_{i+j=0}^{\frac{n-4}{2}}\big\|\nabla^M\nabla_3^i\nabla_4^j\Psi\big\|^2_{H^{3/2}(S_{-1,1})}+\sum_{i+j=0}^{\frac{n-2}{2}}\big\|\nabla^M\nabla_3^i\nabla_4^j\Psi\big\|^2_{H^{1/2}(S_{-1,1})}+\ldots
\end{align*}
In order to prove (\ref{sharp scattering estimate intro}), it suffices to show that:
\begin{equation}\label{equivalence of asympt data and finite time intro}
    \Big\|\Sigma\big(\slashed{g}_0,\check{h}\big)\Big\|_M\lesssim\Xi_M\lesssim\Big\|\Sigma\big(\slashed{g}_0,\check{h}\big)\Big\|_M,
\end{equation}
where the implicit constant depends on $M$ but is independent on $\epsilon.$ Once we establish \eqref{equivalence of asympt data and finite time intro}, we complete the proof of (\ref{sharp scattering estimate intro}) in Section~\ref{scattering map section}, since by changing $(u,v)$ to $(-v,-u)$ we also obtain:
\[\Big\|\Sigma\big(\underline{\slashed{g}_0},\underline{\check{h}}\big)\Big\|_M\lesssim\Xi_M\lesssim\Big\|\Sigma\big(\underline{\slashed{g}_0},\underline{\check{h}}\big)\Big\|_M.\]

\begin{remark}
    The strategy used in the proof of the sharp scattering result is similar to our approach in \cite{linearwave} for the wave equation (\ref{wave equation on dS background}). Based on the analogy between the scalar field $\phi$ and $\slashed{g},$ one could expect that a notion of asymptotic initial data could consist only of $\slashed{g}_0$ and $\mathfrak{h}$ which satisfy the smallness assumption:
    \begin{equation}\label{potential smallness condition}
        \big\|\slashed{g}^*_0\big\|_{\mathring{H}^{M+n+1}(S^n)}+\big\|\mathfrak{h}\big\|_{H^{M+1}(S^n)}\leq\epsilon.
    \end{equation}
    Using the compatibility relations, this condition implies indeed that $\Sigma\big(\slashed{g}_0,\check{h}\big)\in B_{C_M\epsilon}^M\big(\Sigma_{\mathrm{Minkowski}}\big).$ In the case of (\ref{wave equation on dS background}) we also have that $\mathcal{O}\sim\Delta^{n/2}\phi_0+\ldots,$ which recovers the estimate for $\phi_0$ at top order. However, in the current situation the obstruction tensor does not satisfy the needed ellipticity property, so the smallness of $\Sigma\big(\slashed{g}_0,\check{h}\big)$ does not imply (\ref{potential smallness condition}). According to \cite{Fefferman-Graham}, we have at top order that $\mathcal{O}\sim\Delta^{n/2-2}B,$ where $B$ is the Bach tensor of $\slashed{g}_0.$ This operator is elliptic under a conformal change of the metric, see \cite{BachFlat, ellipticO}, but in our case we cannot control the conformal factor. Consequently, the smallness condition (\ref{potential smallness condition}) cannot be used to prove a sharp scattering result.

\end{remark}

\paragraph{Estimates from $\{v=0\}$ to $\{v=-u\}.$} We prove that $\Xi_M\lesssim\big\|\Sigma(\slashed{g}_0,\check{h})\big\|_M$ in Theorem~\ref{main theorem forward direction full system}, establishing the first inequality in \eqref{equivalence of asympt data and finite time intro}. According to Section~\ref{model systemmm}, we can rewrite the system of Bianchi equations restricted to $\{u=-1\}$ as a system of wave equations for any $0\leq m\leq M,$ $0\leq l\leq\frac{n}{2}-2:$
\begin{equation}\label{model wave system 1 intro}
    \begin{cases}
        v\nabla_4^2\nabla^m\nabla_4^l\alpha+\Big(3+l-\frac{n}{2}\Big)\nabla_4\nabla^m\nabla_4^l\alpha-\Delta\nabla^m\nabla_4^l\alpha=\psi\nabla^{m+1}\nabla_4^l\Psi+Err_{ml}^{\Psi} \\
        v\nabla_4^2\nabla^m\nabla_4^l\Psi^G+\Big(3+l-\frac{n}{2}\Big)\nabla_4\nabla^m\nabla_4^l\Psi^G-\Delta\nabla^m\nabla_4^l\Psi^G=\psi\nabla^{m+1}\nabla_4^l\Psi+Err_{ml}^{\Psi}
    \end{cases}
\end{equation}
Moreover, the solutions satisfy the expansions at $\{v=0\}$:
\begin{align*}
    &\nabla_4^l\Psi^G=\big(\nabla_4^l\Psi^G\big)\big|_{(-1,0)}+O(v),\ \nabla_4^l\alpha=\big(\nabla_4^l\alpha\big)\big|_{(-1,0)}+O\big(v|\log v|^2\big)\text{ for }l\leq\frac{n-6}{2},\\
    &\nabla_4^{\frac{n-4}{2}}\Psi^G=\big(\nabla_4^{\frac{n-4}{2}}\Psi^G\big)\big|_{(-1,0)}+O\big(v|\log v|^2\big),\ \nabla_4^{\frac{n-4}{2}}\alpha=\mathcal{O}\log v+h+O\big(v|\log v|^2\big).
\end{align*}
We prove the main estimates for the system (\ref{model wave system 1 intro}) in Theorem~\ref{main theorem forward direction full system}. The desired inequality will follow, since the initial data energy is controlled by $\big\|\Sigma(\slashed{g}_0,\check{h})\big\|_M,$ whereas the energy at $(-1,1)$ controls $\Xi_M.$ The top order estimates require control of the quantity:
\[\mathcal{T}=v^2\big\|\nabla_{4}\nabla^M\nabla_4^{\frac{n-4}{2}}\Psi\big\|^2_{H^{1/2}}+v\big\|\nabla^M\nabla_4^{\frac{n-4}{2}}\Psi\big\|^2_{H^{3/2}}+\big\|\nabla_4^{\frac{n-4}{2}}\Psi^G\big\|^2_{H^{M+1}}.\]
Bounding this energy represents the fundamental part of the proof, as this captures the need to renormalize $h$ and it implies the improvement in the number of angular derivatives controlled. To prove this in Section~\ref{forward direction full system section}, we treat the system (\ref{model wave system 1 intro}) as a linear system on the background obtained by restricting the metric $g$ to the null cone $\{u=-1\},$ with a general inhomogeneous term in place of $Err_{ml}^{\Psi}.$ We refer to this as the \textbf{first model system}, introduced in Section~\ref{model systemmm}, and we explain below how we prove estimates for it in Section \ref{model forward direction section} and \cite{Cwave}. Once we bound the top order quantity $\mathcal{T}$ in terms of the initial data energy and the error terms, the remaining bounds follow using more standard energy estimates for the system (\ref{model wave system 1 intro}). As before, the presence of the nonlinear error terms $Err_{ml}^{\Psi}$ does not create significant difficulties since we commuted with a high number of angular derivatives, so these terms are essentially linear.

\paragraph{Estimates from $\{v=-u\}$ to $\{v=0\}.$} We prove that $\big\|\Sigma(\slashed{g}_0,\check{h})\big\|_M\lesssim\Xi_M$ in Theorem~\ref{main theorem backward direction full system}, establishing the second inequality in \eqref{equivalence of asympt data and finite time intro}. According to Section~\ref{model systemmm}, we can also rewrite the system of Bianchi equations restricted to $\{u=-1\}$ for any $0\leq m\leq M,$ $0\leq l\leq\frac{n}{2}-2$ as:
\begin{equation}\label{model wave system 2 intro}
    \begin{cases}
        v\nabla_4^2\nabla^m\nabla_4^l\alpha+\Big(3+l-\frac{n}{2}\Big)\nabla_4\nabla^m\nabla_4^l\alpha-\Delta\nabla^m\nabla_4^l\alpha=\psi\nabla^{m+1}\nabla_4^l\Psi+Err_{ml}^{\Psi} \\
        v\nabla_4^2\nabla^m\nabla_4^l\Psi^G+\Big(2+l-\frac{n}{2}\Big)\nabla_4\nabla^m\nabla_4^l\Psi^G-\Delta\nabla^m\nabla_4^l\Psi^G=\sum_{\Psi^G_0}\psi\nabla^{m+1}\nabla_4^l\Psi^G_0+Err_{ml}^{\Psi}.
    \end{cases}
\end{equation}
Once again, the solutions satisfy the above expansions at $\{v=0\}$. In Section~\ref{backward direction full system section}, we prove estimates with initial data at $(-1,1)$ controlled by $\Xi_M,$ and the energy at $(-1,0)$ controlling $\big\|\Sigma(\slashed{g}_0,\check{h})\big\|_M.$ At top order, we bound:
\[v^2\big\|\nabla_{4}\nabla^M\nabla_4^{\frac{n-4}{2}}\alpha\big\|^2_{H^{1/2}}+v\big\|\nabla^M\nabla_4^{\frac{n-4}{2}}\alpha\big\|^2_{H^{3/2}}+\big\|\nabla^M\nabla_4^{\frac{n-4}{2}}\Psi^G\big\|^2_{H^{3/2}}+v\big\|\nabla_4\nabla^M\nabla_4^{\frac{n-4}{2}}\Psi^G\big\|^2_{H^{1/2}}+\ldots\]
and the asymptotic quantities:
\[\big\|\mathcal{O}\big\|_{H^{M+1}(S_{-1,0})}^2+\big\|\mathfrak{h}\big\|_{H^{M+1}(S_{-1,0})}^2.\]
This step represents the essential part of the proof, since as above once we control the top order terms we can also obtain bounds for the remaining terms in $\big\|\Sigma(\slashed{g}_0,\check{h})\big\|_M$ and estimate the nonlinear error terms. The strategy is again to treat the system (\ref{model wave system 2 intro}) as a linear system on the background obtained by restricting $g$ to $\{u=-1\},$ with a general inhomogeneous term. We refer to this as the \textbf{second model system}, introduced in Section~\ref{model systemmm}, and we explain below how we prove estimates for it in Section \ref{model backward direction section} and \cite{Cwave}. 

\paragraph{Geometric Littlewood-Paley projections.} The analysis of the model systems needed for the top order estimates above requires the use of Littlewood-Paley projections. These provide a robust way of constructing frequency dependent multipliers and defining fractional derivatives, including the $\log\nabla$ operator present in the definition of $\mathfrak{h}.$ In dealing with the model systems we intend to use the same approach as in \cite{linearwave}. The new difficulty is that the metric $\slashed{g}_{v}$ induced by the background on the spheres $S_v=\{u=-1\}\times\{v\}\times S^n$ has a nontrivial time dependence, compared to the case of de Sitter space. This determines us to use the geometric Littlewood-Paley theory of \cite{geometricLP}, defined using the heat equation in Section~\ref{LP Section}. The LP projections used have standard properties, with additional difficulties arising from the fact that they are time dependent and do not satisfy exact orthogonality. Thus, the projections are only "almost orthogonal", and we have for any two families of LP projections $P_k, \widetilde{P}_k$:
\[\big\| P_k\widetilde{P}_{k'}F\big\|_{L^2}\lesssim2^{-4|k-k'|}\cdot\big\|F\big\|_{L^2}.\]
We also use a series of results from \cite[Section~2]{Cwave}, which employ the methods of \cite{geometricLP} to quantify in more detail the error terms caused by the time dependence of the metric and the almost orthogonality of the projections. For example, for a horizontal tensor $F$ we have the bound:
\begin{equation}\label{useful LP bound intro}
    \big\|[\nabla_4,P_k]F\big\|_{L^2}\lesssim\big\|\underline{\widetilde{P}}_kF\big\|_{L^2}+2^{-k}\big\|F\big\|_{L^2},
\end{equation}
where $\underline{\widetilde{P}}_k$ is a projection operator defined in Section~\ref{LP Section}. We notice that this bound is summable in $k.$

Additionally, it is essential that we use the following refined Poincaré inequality for any $k\geq0,\delta>0$:
\begin{equation}\label{poincare inequality intro}
    \big\|P_kF\big\|_{L^2}^2\lesssim\frac{1}{\delta}2^{-2k}\big\|\nabla P_kF\big\|_{L^2}^2+\delta\sum_{0\leq l<k}2^{-9k+7l}\big\|\nabla P_lF\big\|_{L^2}^2+\delta^{-1}2^{-4k}\big\|F\big\|_{L^2}^2.
\end{equation}
We contrast this with the weaker Poincaré inequality $\big\|P_kF\big\|_{L^2}\lesssim2^{-k}\big\|\nabla \widetilde{P}_kF\big\|_{L^2}$, where the presence of different projection operators is caused by the almost orthogonality of the projections. On the other hand, we notice that for \eqref{poincare inequality intro} the projection operators on the RHS have the same symbol as the one on the LHS, and all
the frequencies higher than $k$ are contained in the last term, which is lower order.

\paragraph{The first model system.} In Section \ref{model systemmm}, we write the system (\ref{model wave system 1 intro}) as a linear system on the background obtained by restricting the metric $g$ to the null cone $\{u=-1\},$ with a general inhomogeneous term. We use the notation $\Phi_0=\nabla_4^{\frac{n-4}{2}}\alpha$ and $\Phi_i=\nabla_4^{\frac{n-4}{2}}\Psi^G,$ and we obtain with respect to the new time variable $\tau=\sqrt{v}$ the system:
\begin{align}
    &\nabla_{\tau}\big(\nabla_{\tau}\nabla^m\Phi_0\big)+\frac{1}{\tau}\nabla_{\tau}\nabla^m\Phi_0-4\Delta\nabla^m\Phi_0=\psi\nabla^{m+1}\Phi+F_{m}^{0}\notag \\
    &\nabla_{\tau}\big(\nabla_{\tau}\nabla^m\Phi_i\big)+\frac{1}{\tau}\nabla_{\tau}\nabla^m\Phi_i-4\Delta\nabla^m\Phi_i=\psi\nabla^{m+1}\Phi+F_{m}^{i}\notag \\
    &\Phi_0=2\mathcal{O}\log\tau+h+O\big(\tau^2|\log\tau|^2\big),\ \Phi_i=\Phi_i^0+O\big(\tau^2|\log\tau|^2\big).\notag
\end{align}
where the covariant angular derivatives are with respect to the metric $\slashed{g}_{\tau}:=\slashed{g}_{u=-1,v=\tau^2}$ induced on $S_{\tau}$. 

\begin{remark}\label{remark about model systems intro}
    We notice that in terms of the metric $\widetilde{\slashed{g}}$ from \eqref{general form for g tilde} we have $\slashed{g}_{\tau}=\widetilde{\slashed{g}}(\log(2\tau)).$ For any asymptotically de Sitter space of the form \eqref{general form for g tilde} we consider the new time coordinate $\tau=e^T/2$. We point out that one can also recover the first model system by commuting the Einstein equations \eqref{EVE} $n/2$ times with the vectorfield $\frac{1}{2\tau}\partial_{\tau}.$ A similar approach also holds for the second model system below. We further explain this perspective in \cite{Cwave}.
\end{remark}

The estimates needed at top order for the system (\ref{model wave system 1 intro}) are proved at the level of the first model system in Section \ref{model forward direction section} and \cite{Cwave}. We decompose $\Phi_0$ into its singular and regular components, similarly to \cite{linearwave}. In the present paper we illustrate in Section \ref{model forward direction section} how to prove the top order estimates in Theorem~\ref{forward direction main result theorem} for the singular component of $\Phi_0$, which decouples from the rest of the system. The regular component is better behaved at $\tau=0,$ and can be treated similarly to the tensors $\Phi_i.$ We refer the reader to \cite[Section~3]{Cwave} for a complete proof of Theorem~\ref{forward direction main result theorem general}, which deals with the full system. However, we point out that the main difficulties are already present in the analysis of the singular component of $\Phi_0,$ so the proof of Theorem~\ref{forward direction main result theorem} assists in the understanding of Theorem~\ref{forward direction main result theorem general} in \cite{Cwave}. 

We define for each $m\leq M:$
\[\nabla^m\Phi_0=\big(\nabla^m\Phi_0\big)_Y+\big(\nabla^m\Phi_0\big)_J,\]
where we define the singular component $\big(\nabla^m\Phi_0\big)_Y$ to be the horizontal tensor that solves the linear equation:
\begin{align*}
    &\nabla_{\tau}\big(\nabla_{\tau}\big(\nabla^m\Phi_0\big)_Y\big)+\frac{1}{\tau}\nabla_{\tau}\big(\nabla^m\Phi_0\big)_Y-4\Delta\big(\nabla^m\Phi_0\big)_Y=\psi\nabla\big(\nabla^m\Phi_0\big)_Y\\
    &\big(\nabla^m\Phi_0\big)_Y({\tau})=2\nabla^m\mathcal{O}\log({\tau})+2(\log\nabla)\nabla^m\mathcal{O}+ O\big({\tau}^2|\log({\tau})|^2\big),
\end{align*}
and we also define $(\log\nabla)\nabla^m\mathcal{O}=\sum_{k\geq0}P_k^2\nabla^m\mathcal{O}\cdot\log2^k,\ \mathfrak{h}_m=\nabla^mh-2(\log\nabla)\nabla^m\mathcal{O}$. We remark that the regular component satisfies a similar equation to that of $\Phi_0$ and it has the expansion:
\[\big(\nabla^m\Phi_0\big)_J({\tau})=\mathfrak{h}_m+ O\big({\tau}^2|\log({\tau})|^2\big),\ \nabla_{\tau}\big(\nabla^m\Phi_0\big)_J({\tau})=O\big({\tau}|\log({\tau})|^2\big).\]
The notation for the regular and singular components is based on the similarities to the first and second Bessel functions $J_0,\ Y_0,$ as in the case of \cite{linearwave}. The need to renormalize the asymptotic data $h$ to $\mathfrak{h}$ follows from the analysis of the singular component.

In Theorem \ref{forward direction main result theorem}, we prove the following estimate for $\tau\in(0,1]$, with an implicit constant depending only on $M$:
\begin{equation}\label{main estimate alpha Y top order intro}
    \tau^2\big\|\nabla_{\tau}\big(\nabla^M\Phi_0\big)_Y\big\|^2_{H^{1/2}}+\tau^2\big\|\nabla\big(\nabla^M\Phi_0\big)_Y\big\|^2_{H^{1/2}}\lesssim\big\|\mathcal{O}\big\|^2_{H^{M+1}},
\end{equation}
\begin{equation}\label{practical estimate alpha Y top order intro}
    \sum_{m=0}^M\big\|\big(\nabla^m\Phi_0\big)_Y\big\|^2_{H^{1}}\lesssim\big(1+|\log\tau|^2\big) \big\|\mathcal{O}\big\|^2_{H^{M+1}}.
\end{equation}
We first obtain lower order estimates using standard energy estimates in Section~\ref{forward lower order estimates section}, in order to prove (\ref{practical estimate alpha Y top order intro}) for all $m<M.$ In order to obtain sharp estimates at top order, we must use the structure in the expansion of $\big(\nabla^M\Phi_0\big)_Y$, which can only be seen at the level of each LP projection. We have schematically for every $k\geq0:$
\[P_k\big(\nabla^M\Phi_0\big)_Y({\tau})=2P_k\nabla^M\mathcal{O}\log({2^k\tau})+l.o.t.+O\big({\tau}^2|\log({\tau})|^2\big).\]
As in the case of the linear wave equation on de Sitter space studied in \cite{linearwave}, we prove that $P_k\big(\nabla^M\Phi_0\big)_Y$ satisfies similar asymptotics to the second Bessel function $Y_0$ in terms of the new time variable $t=2^k\tau$. A quantitative version of this statement is proved in Section~\ref{forward top order estimates section} using suitable energy estimates in the low frequency regime $\tau\leq2^{-k-1}$, with data given by the asymptotic initial data, and the high frequency regime $\tau\in[2^{-k-1},1],$ with data at $\tau=2^{-k-1}$ given by the solution in the low frequency regime. We remark that the asymptotic behavior and the frequency dependent time of transition between the two regimes are responsible for the improvement in regularity: at $\tau=1$ we control $M+3/2$ derivatives of the solution in terms of $M+1$ derivatives of the asymptotic data.

The main new difficulties compared to \cite{linearwave} arise from the fact that the geometric LP projections are time dependent and do not satisfy exact orthogonality, as explained above. Using bounds such as (\ref{useful LP bound intro}) implies the presence of different projection operators in the estimates. In the low frequency regime in Section \ref{forward low freq estimates section}, we can mostly avoid this issue using the structure of the error terms and the lower order estimates from Section~\ref{forward lower order estimates section}. However, in the high frequency regime in Section \ref{forward high freq estimates section}, this issue creates commutation terms that cannot be bounded at the level of each LP projection. As a result, we must carefully use the structure of the error terms and sum the estimates obtained for each LP projection before being able to close our estimates in Section \ref{forward main result estimates section}. 

\paragraph{The second model system.} Similarly to the above case, in Section \ref{model systemmm} we write the system (\ref{model wave system 2 intro}) with respect to the new time variable $\tau=\sqrt{v}$ to obtain the second model system:
\begin{align}
    &\nabla_{\tau}\big(\nabla_{\tau}\nabla^m\Phi_0\big)+\frac{1}{\tau}\nabla_{\tau}\nabla^m\Phi_0-4\Delta\nabla^m\Phi_0=\psi\nabla^{m+1}\Phi+F_{m}^{0}\notag \\
    &\nabla_{\tau}\big(\nabla_{\tau}\nabla^m\Phi_i\big)-\frac{1}{\tau}\nabla_{\tau}\nabla^m\Phi_i-4\Delta\nabla^m\Phi_i=\sum_{j\neq0}\psi\nabla^{m+1}\Phi_j+F_{m}^{i}\notag \\
    &\Phi_0=2\mathcal{O}\log\tau+h+O\big(\tau^2|\log\tau|^2\big),\ \Phi_i=\Phi_i^0+O\big(\tau^2|\log\tau|^2\big).\notag
\end{align}
The estimates needed at top order for the system (\ref{model wave system 2 intro}) are proved at the level of the above system in \cite{Cwave}. The essence of the argument is dealing with the singular quantity $\Phi_0,$ since the regular quantities $\Phi_i$ satisfy better equations and can be bounded in a straightforward way. In the present paper, we illustrate how to prove these estimates in Theorem~\ref{backward direction main result theorem} in Section \ref{model backward direction section} for a toy problem which models the singular top order quantity $\nabla^M\Phi_0,$ and captures the main difficulties. We refer the reader to \cite[Section~4]{Cwave} for a complete proof of Theorems~\ref{backward direction main result theorem general} and \ref{asymptotic data estimates theorem general}, which deal with the full second model system. We point out that Theorem~\ref{backward direction main result theorem} is not used in the proof of Theorems~\ref{backward direction main result theorem general} and \ref{asymptotic data estimates theorem general}, but it introduces all the main ideas and provides the guideline that we then follow in \cite[Section~4]{Cwave}.

We assume that the smooth horizontal tensor $\xi$ defined on $\{u=-1\}\times\{\tau\in(0,1)\}\times S^n$ satisfies the equations:
\begin{align}\label{backward direction linear wave equation main intro}
    &\nabla_{\tau}\big(\nabla_{\tau}\xi\big)+\frac{1}{\tau}\nabla_{\tau}\xi-4\Delta\xi=\psi\nabla\xi,\\
    &\xi=2\nabla^M\mathcal{O}\log\tau+2\big(\log\nabla\big)\nabla^M\mathcal{O}+\mathfrak{h}_M+O\big(\tau^2|\log\tau|^2\big).\notag
\end{align}
The asymptotic expansion of $\xi$ at $\tau=0$ is the same as that of $\nabla^M\Phi_0,$ but in equation (\ref{backward direction linear wave equation main intro}) we only kept the terms depending on $\nabla\xi$ on the right hand side for simplicity. 

In Theorem \ref{backward direction main result theorem}, we prove the estimates for all $\tau\in(0,1]$, with an implicit constant depending only on $M$:
\begin{equation}\label{backward direction main estimate one intro}
    \tau\big\|\xi\big\|_{H^{1/2}}^2+\tau^2\big\|\xi\big\|_{H^{3/2}}^2+\tau^2\big\|\nabla_{\tau}\xi\big\|_{H^{1/2}}^2+\int_{\tau}^1\tau'\big\|\xi\big\|_{H^{1}}^2d\tau'\lesssim \bigg(\big\|\xi\big\|_{H^{3/2}}^2+\big\|\nabla_{\tau}\xi\big\|_{H^{1/2}}^2\bigg)\bigg|_{\tau=1},
\end{equation}
\begin{equation}\label{backward direction asymptotic quantity estimate intro}
    \big\|\nabla^M\mathcal{O}\big\|_{H^{1}}^2+\sum_{k\geq 0}2^{2k}\big\|P_k\mathfrak{h}_M\big\|_{L^2}^2\lesssim \bigg(\big\|\xi\big\|_{H^{3/2}}^2+\big\|\nabla_{\tau}\xi\big\|_{H^{1/2}}^2\bigg)\bigg|_{\tau=1}.
\end{equation}

In order to prove optimal estimates for $\xi$, we split our analysis into the low frequency and high frequency regime. For some large constant $X=2^{x+1}$, we split the frequencies into the low frequency regime $k<x$ for all $\tau\in[0,1]$, the low frequency regime $k\geq x$ for $\tau\in[0,X2^{-k-1}],$ and the high frequency regime $k\geq x$ for $\tau\in[X2^{-k-1},1]$.

The low frequency regime with $k<x$ is dealt with using a standard preliminary estimate. In Section \ref{backward low freq estimates section}, we prove suitable energy estimates in the low frequency regime $k\geq x,\ \tau\in[0,X2^{-k-1}]$, with data at $\tau=X2^{-k-1}$ given by the solution in the high frequency regime. In Section \ref{high frequency estimates section}, we also prove estimates in the high frequency regime $\tau\in[X2^{-k-1},1]$ with data at $\tau=1,$ similarly to the estimates for the first model system.

The significant challenge that we must overcome is the presence of a top order bulk term with an unfavorable sign in the high frequency estimate of Proposition~\ref{high frequency backward estimate}:
\[\int_{\tau}^1\frac{2^k}{(\tau')^2}\big\|P_k\xi\big\|_{L^2}^2d\tau'.\]
Because the geometric LP projections do not satisfy exact orthogonality properties, this term can only be bounded using the refined Poincaré inequality (\ref{poincare inequality intro}). This introduces both low frequency regime and high frequency regime error terms on the RHS. As a consequence, we obtain a sum of error terms that we bound using the discrete Gronwall inequality and a novel discrete-continuous Gronwall-like inequality. 

Similarly to case of the first model system, we also have commutation error terms arising from the bound (\ref{useful LP bound intro}), which cause the presence of different projection operators in the estimates. Such terms can be bounded only once we sum the estimates obtained for each LP projection in Section~\ref{backward main result estimates section}. 

Finally, obtaining the sharp estimate (\ref{backward direction main estimate one intro}) for the solution on $\tau\in(0,1]$ allows us to conclude by proving the estimate (\ref{backward direction asymptotic quantity estimate intro}) for the asymptotic quantities. The estimates are carried out in Section~\ref{backward asymptotic quantities estimates section} at the level of the equations satisfied by the quantities $\overline{\xi}_k=P_k\xi-\tau\log(2^k\tau)P_k\nabla_{\tau}\xi$, where we decompose the error terms into the low frequency and high frequency regime components.

\subsection{Outline of the Paper}\label{paper outline section}
We outline the structure of the paper. In Section~\ref{set up section} we introduce the double null formalism adapted to the setting of straight self-similar spacetimes. In Section~\ref{stability dS section} we prove the existence and uniqueness of scattering states, establishing the first statement of Theorem \ref{main theorem of the paper ambient}. In Section~\ref{asymptotic completeness section} we prove asymptotic completeness, obtaining the second statement of Theorem \ref{main theorem of the paper ambient}. In Section~\ref{model systemmm} we introduce the model systems necessary for the top order estimates of the scattering map. In Section~\ref{LP Section} we introduce the geometric Littlewood-Paley projections used in the analysis of the model systems. In Section~\ref{model forward direction section} we state the main result of \cite[Theorem~1.1]{Cwave} for the first model system and present the proof in the case of the singular component of $\Phi_0.$ We use these results in Section~\ref{forward direction full system section} to prove sharp estimates from $\{v=0\}$ to $\{v=-u\}.$ In Section~\ref{model backward direction section} we state the main result of \cite[Theorem~1.2]{Cwave} for the second model system and illustrate the proof for a toy problem. We use these results in Section~\ref{backward direction full system section} to prove sharp estimates from to $\{v=-u\}$ to $\{v=0\}$. Finally, in Section~\ref{scattering map section} we combine our results to conclude the proof of the third statement of Theorem \ref{main theorem of the paper ambient}.

\textbf{Acknowledgements.} The author would like to acknowledge Igor Rodnianski for his valuable guidance in the process of writing this paper. The author would also like to thank Mihalis Dafermos, Yakov Shlapentokh-Rothman, and Warren Li for the very helpful discussions.

\section{Set Up}\label{set up section}
The purpose of this section is to introduce the double null formalism adapted to the setting of straight self-similar spacetimes. We also introduce the commutation formulas and the error term notation that we use later.

\subsection{Double Null Gauge}

We introduce a double null gauge on the $(n+2)$-dimensional manifold $\big(\mathcal{M},g\big)$ following the work of \cite[Section~3]{selfsimilarvacuum}. In this section we consider a general such foliation, in order to define the relevant quantities and write down the system of Einstein vacuum equations \eqref{actual vacuum equations} in double null gauge. In the next section we will use the additional assumptions of self-similarity and straightness, which simplify our equations. We assume for the purpose of this section that $g$ is smooth, and we later define the notion of regular vacuum solution in Definition \ref{weak solution definition}. Our introduction of the double null gauge will be brief, and we encourage the reader to consult \cite[Section~3]{selfsimilarvacuum} for complete statements and proofs.

We assume our background differentiable manifold to be $\big((-\infty,0]\times[0,\infty)\backslash\{(0,0)\}\big)\times S^n,$ where the coordinates $(u,v)$ parameterize $(-\infty,0]\times[0,\infty)\backslash\{(0,0)\}.$ We consider the metric $g$ in double null gauge:
\[g=-2\Omega^2(du\otimes dv+dv\otimes du)+\slashed{g}_{AB}(d\theta^A-b^Adu)\otimes(d\theta^B-b^Bdu),\]
where $\{\theta^A\}$ represent local coordinates on $S^n.$ We denote $S_{u,v}=\{u\}\times\{v\}\times S^n.$ We define the normalized frame:
\[e_3=\Omega^{-1}(\partial_u+b^A\partial_{A}),\ e_4=\Omega^{-1}\partial_v,\ g(e_3,e_4)=-2.\]
We denote by $D$ the Levi-Civita connection of $\big(\mathcal{M},g\big)$. Following \cite[Section~3]{selfsimilarvacuum} and the references therein, we introduce the notion of horizontal tensors on $S_{u,v}$, with entries in the tangent space of $S_{u,v}$, and we denote by $\nabla_A,\ \nabla_3,\ \nabla_4$ the projections of $D_A,\ D_3,\ D_4$ to the tangent space of $S_{u,v}$, for any vector $e_A$ tangent to $S_{u,v}$.

We define the Ricci coefficients denoted schematically by $\psi\in\{\chi,\underline{\chi},\eta,\underline{\eta},\omega,\underline{\omega},\zeta\}$ as:
\begin{align*}
    &\chi_{AB}=g(D_Ae_4,e_B),\ \underline{\chi}_{AB}=g(D_Ae_3,e_B),\ \eta_A=-\frac{1}{2}g(D_3e_A,e_4),\ \underline{\eta}_A=-\frac{1}{2}g(D_4e_A,e_3)\\
    &\omega=-\frac{1}{4}g(D_4e_3,e_4),\ \underline{\omega}=-\frac{1}{4}g(D_3e_4,e_3),\ \zeta_A=\frac{1}{2}g(D_Ae_4,e_3).
\end{align*}
We decompose $\chi$ and $\underline{\chi}$ as:
\[\chi_{AB}=\hat{\chi}_{AB}+\frac{1}{n}\tr\chi\slashed{g}_{AB},\ \underline{\chi}_{AB}=\underline{\hat{\chi}}_{AB}+\frac{1}{n}\tr\underline{\chi}\slashed{g}_{AB}.\]
As proved in \cite[Section~3]{selfsimilarvacuum}, we have the Ricci formulas:
\begin{align*}
    &D_4e_4=-2\omega e_4,\ D_4e_3=2\omega e_3+2\underline{\eta}^Ae_A,\ D_4e_A=\underline{\eta}_Ae_4+\nabla_4e_A\\
    &D_3e_3=-2\underline{\omega}e_3,\ D_3e_4=2\underline{\omega}e_4+2\eta^Ae_A,\ D_3e_A=\eta_Ae_3+\nabla_3e_A\\
    &D_Ae_4=-\zeta_Ae_4+\chi_A^Be_B,\ D_Ae_3=\zeta_Ae_3+\underline{\chi}_A^Be_B,\ D_Ae_B=\frac{1}{2}\underline{\chi}e_4+\frac{1}{2}\chi e_3+\nabla_Ae_B.
\end{align*}
We also have the metric equations:
\begin{align*}
    &\mathcal{L}_4\slashed{g}_{AB}=2\chi_{AB},\ \mathcal{L}_3\slashed{g}_{AB}=2\underline{\chi}_{AB},\ \omega=-\frac{1}{2}\nabla_4\log\Omega,\ \underline{\omega}=-\frac{1}{2}\nabla_3\log\Omega\\
    &\zeta_A=-\frac{1}{4}\Omega^{-1}\slashed{g}_{AB}e_4(b^B),\ \eta_A=\zeta_A+\nabla_A\log\Omega,\ \underline{\eta}_A=-\zeta_A+\nabla_A\log\Omega.
\end{align*}

We define the curvature components denoted schematically by $\Psi\in\{\alpha,\underline{\alpha},\beta,\underline{\beta},\nu,\underline{\nu},\sigma,\rho,\tau\}:$
\begin{align*}
    &\alpha_{AB}=R_{A4B4},\ \underline{\alpha}_{AB}=R_{A3B3},\ \beta_A=\frac{1}{2}R_{A434},\ \underline{\beta}_A=\frac{1}{2}R_{A334},\ \nu_{ABC}=R_{ABC4},\ \underline{\nu}_{ABC}=R_{ABC3}\\
    &\sigma_{AB}=\frac{1}{2}R_{3A4B}-\frac{1}{2}R_{3B4A},\ \tau_{AB}=\frac{1}{2}R_{3A4B}+\frac{1}{2}R_{3B4A},\ \rho=\frac{1}{4}R_{4343}
\end{align*}
As proved in \cite[Section~3]{selfsimilarvacuum}, we have the formulas:
\begin{align*}
    &\tr\alpha=Ric_{44},\ \tr\underline{\alpha}=Ric_{33},\ \tr\tau=Ric_{34}-2\rho\\
    &\tau_{AB}=\slashed{g}^{CD}R_{CADB}-Ric_{AB},\ \beta_A=\nu_{AB}{}^B+Ric_{A4},\ \underline{\beta}_A=-\underline{\nu}_{AB}{}^B-Ric_{A3}.
\end{align*}

The Einstein equations are equivalent to a set of null structure equations and constraint equations involving the Ricci coefficients and the curvature components. We refer the reader to \cite[Section~3]{selfsimilarvacuum} for the derivation of the equations.
\begin{proposition}\label{null structure equations proposition}
    We have the following null structure equations for the vacuum spacetime $\big(\mathcal{M},g\big)$:
    \begin{align*}
        \nabla_4\tr\chi+\frac{1}{n}\big(\tr\chi\big)^2&=-|\hat{\chi}|^2-2\omega\tr\chi\\
        \nabla_4\hat{\chi}_{AB}+\frac{2}{n}\tr\chi\hat{\chi}_{AB}&=-\alpha_{AB}-2\omega\hat{\chi}_{AB}+\hat{\chi}\cdot\hat{\chi}\\
        \nabla_3\tr\underline{\chi}+\frac{1}{n}\big(\tr\underline{\chi}\big)^2&=-|\hat{\underline{\chi}}|^2-2\underline{\omega}\tr\underline{\chi}\\
        \nabla_3\hat{\underline{\chi}}_{AB}+\frac{2}{n}\tr\underline{\chi}\hat{\underline{\chi}}_{AB}&=-\underline{\alpha}_{AB}-2\underline{\omega}\hat{\underline{\chi}}_{AB}+\hat{\underline{\chi}}\cdot\hat{\underline{\chi}}\\
        \nabla_3\hat{\chi}_{AB}+\frac{1}{n}\tr\underline{\chi}\hat{\chi}_{AB}&=-\hat{\tau}_{AB}+2\underline{\omega}\hat{\chi}_{AB}+\big(\nabla\hat{\otimes}\eta\big)_{AB}-\frac{1}{n}\tr\chi\underline{\hat{\chi}}_{AB}+\eta\cdot\eta+\hat{\chi}\cdot\hat{\underline{\chi}}\\
        \nabla_3\tr\chi+\frac{1}{n}\tr\underline{\chi}\tr\chi&=2\rho+2\underline{\omega}\tr\chi+2div(\eta)+2|\eta|^2+\hat{\chi}\cdot\hat{\underline{\chi}}\\
        \nabla_4\hat{\underline{\chi}}_{AB}+\frac{1}{n}\tr\chi\hat{\underline{\chi}}_{AB}&=-\hat{\tau}_{AB}+2\omega\hat{\underline{\chi}}_{AB}+\big(\nabla\hat{\otimes}\underline{\eta}\big)_{AB}-\frac{1}{n}\tr\underline{\chi}\hat{\chi}_{AB}+\underline{\eta}\cdot\underline{\eta}+\hat{\chi}\cdot\hat{\underline{\chi}}\\
        \nabla_4\tr\underline{\chi}+\frac{1}{n}\tr\underline{\chi}\tr\chi&=2\rho+2\omega\tr\underline{\chi}+2div(\underline{\eta})+2|\underline{\eta}|^2+\hat{\chi}\cdot\hat{\underline{\chi}}\\
        \nabla_4\eta&=-\beta+\chi\cdot(\eta-\underline{\eta})\\
        \nabla_3\underline{\eta}&=\underline{\beta}+\underline{\chi}\cdot(\eta-\underline{\eta})\\
        \nabla_4\underline{\omega}&=\frac{1}{2}\rho+\frac{1}{4}|\underline{\eta}|^2-\frac{1}{4}|\eta|^2+2\omega\underline{\omega}+3|\zeta|^2-|\nabla\log\Omega|^2\\
        \nabla_3\omega&=\frac{1}{2}\rho-\frac{1}{4}|\underline{\eta}|^2+\frac{1}{4}|\eta|^2+2\omega\underline{\omega}+3|\zeta|^2-|\nabla\log\Omega|^2.
    \end{align*}
    where $\psi\cdot\psi$ is a schematic notation for certain contractions Ricci coefficient terms.
\end{proposition}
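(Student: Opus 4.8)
The plan is to derive each of the twelve identities by a direct frame computation, treating the Ricci coefficients as the components of the ambient Levi-Civita connection $D$ in the null frame $\{e_3,e_4,e_A\}$ and repeatedly invoking the definition of the Riemann tensor through the commutator $R(X,Y)Z=D_XD_YZ-D_YD_XZ-D_{[X,Y]}Z$. The organizing principle is uniform: every equation arises by differentiating the defining relation of one Ricci coefficient in a null direction transversal to the one appearing in its definition, commuting the two covariant derivatives, and recognizing the resulting spacetime curvature term as one of the components $\alpha,\underline{\alpha},\beta,\underline{\beta},\rho,\sigma,\tau,\nu,\underline{\nu}$. The vacuum hypothesis $Ric(g)=0$ enters only at the last step of each computation, to discard the Ricci-tensor pieces that the decomposition of the Riemann tensor produces (for instance $Ric_{44}=\tr\alpha$, $Ric_{AB}=\slashed{g}^{CD}R_{CADB}-\tau_{AB}$), leaving only the null curvature components displayed in the statement.

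First I would treat the transport equations for $\chi$ along $e_4$. Starting from $\chi_{AB}=g(D_Ae_4,e_B)$, I compute $e_4\big(\chi_{AB}\big)$, write $D_4D_Ae_4=D_AD_4e_4+R(e_4,e_A)e_4+D_{[e_4,e_A]}e_4$, substitute the Ricci formula $D_4e_4=-2\omega e_4$ and the expansion $[e_4,e_A]=D_4e_A-D_Ae_4$, and pair with $e_B$ while accounting for $D_4e_B$ via the Ricci formulas. This produces $\nabla_4\chi_{AB}$ on the left, the curvature term $R_{4A4B}=\alpha_{AB}$, the term $-2\omega\chi_{AB}$, and a quadratic expression in $\chi$ which after symmetrization reorganizes into $\hat{\chi}\cdot\hat{\chi}$ together with trace contributions. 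Taking the $\slashed{g}$-trace — here commuting the trace past $\nabla_4$ using $\mathcal{L}_4\slashed{g}=2\chi$ — gives the Raychaudhuri equation for $\tr\chi$, and the trace-free part gives the equation for $\hat{\chi}$. The pair of equations for $\underline{\chi}$ along $e_3$ then follows identically under the interchange $(e_3,e_4,\underline{\omega},\underline{\eta},\underline{\alpha})\leftrightarrow(e_4,e_3,\omega,\eta,\alpha)$.

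Next I would handle the mixed equations. Differentiating $\chi_{AB}=g(D_Ae_4,e_B)$ along $e_3$ instead produces the curvature term $R_{3A4B}=\sigma_{AB}+\tau_{AB}$; using $\tau_{AB}=\slashed{g}^{CD}R_{CADB}-Ric_{AB}$ and $Ric(g)=0$ removes $Ric_{AB}$, the trace contributes $2\rho$, and the angular piece — together with the normal contributions hidden in $D_Ae_3,D_Be_3$ and the $\eta_A$-terms from $D_3e_A$ — assembles into the symmetrized angular derivative $\nabla\hat{\otimes}\eta$ and the divergence $2\,\mathrm{div}(\eta)$, while the remaining quadratics collect into $\eta\cdot\eta$, $\hat{\chi}\cdot\hat{\underline{\chi}}$, and the cross terms $\tfrac1n\tr\chi\,\underline{\hat{\chi}}$, $\tfrac1n\tr\underline{\chi}\,\hat{\chi}$; the $\nabla_4\underline{\chi}$ pair is symmetric. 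The equations for $\eta$ and $\underline{\eta}$ come from differentiating $\eta_A=-\tfrac12 g(D_3e_A,e_4)$ along $e_4$, where the curvature term is $\tfrac12 R_{A434}=\beta_A$; those for $\omega$ and $\underline{\omega}$ come from differentiating $\omega=-\tfrac14 g(D_4e_3,e_4)$ along $e_3$, where the curvature term is $\tfrac14 R_{4343}=\rho$ and the quadratic contributions turn into the $|\eta|^2$, $|\underline{\eta}|^2$, $|\zeta|^2$, $|\nabla\log\Omega|^2$ terms once the metric relations $\eta=\zeta+\nabla\log\Omega$, $\underline{\eta}=-\zeta+\nabla\log\Omega$ are invoked.

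\textbf{Main obstacle.} The difficulty here is organizational rather than conceptual. The key subtlety is keeping a consistent accounting of the split between the ambient connection $D$ and the induced connection $\nabla$ on $S_{u,v}$: the Ricci formulas show $D_Ae_B$ carries the normal components $\tfrac12\underline{\chi}e_4+\tfrac12\chi e_3$, which feed back into essentially every quadratic term, and extracting the trace and trace-free parts requires commuting $\nabla_3,\nabla_4$ past $\slashed{g}$ via the metric equations. One must also verify that every Ricci-tensor contribution genuinely cancels under $Ric(g)=0$ and does not leave spurious $\rho$-type remnants — this is precisely where the vacuum structure of $\big(\mathcal{M},g\big)$ is used. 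Once conventions are fixed so that all quadratic terms are carried schematically as $\psi\cdot\psi$ and only the genuinely distinct structures are displayed, the derivation of each equation is routine; this is the computation carried out in detail in \cite{selfsimilarvacuum}, to which we refer for the full derivation.
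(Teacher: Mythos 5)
Your proposal is correct and matches the paper's treatment: the paper states these null structure equations without proof and simply refers to \cite{selfsimilarvacuum} for the derivation, which you also do at the end of your sketch. Your intermediate outline of the standard frame computation (differentiate the defining relation of each Ricci coefficient transversally, commute the ambient covariant derivatives to produce a curvature term, identify the null curvature component, discard Ricci pieces via $Ric(g)=0$, then split into trace and trace-free parts) is an accurate description of how these equations are obtained in the reference, and is sound.
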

\begin{proposition}\label{constraint equations proposition}
    We have the following constraint equations for the vacuum spacetime $\big(\mathcal{M},g\big)$:
    \begin{align*}
        \slashed{Riem}_{ABCD}&=R_{ABCD}+\frac{1}{2}\Big(\underline{\chi}_{BC}\chi_{AD}+\underline{\chi}_{AD}\chi_{BC}-\underline{\chi}_{AC}\chi_{BD}-\underline{\chi}_{BD}\chi_{AC}\Big)\\
        \slashed{Ric}_{AB}&=\tau_{AB}-\frac{1}{2}\tr\chi\underline{\chi}_{AB}-\frac{1}{2}\tr\underline{\chi}\chi_{AB}+\chi^C_{(A}\underline{\chi}_{B)}C\\
        \slashed{R}&=-2\rho+\frac{1-n}{n}\tr\chi\tr\underline{\chi}+\hat{\chi}\cdot\hat{\underline{\chi}}\\
        \nabla_A\chi_{BC}-\nabla_B\chi_{AC}&=\nu_{ABC}+\chi_{AC}\zeta_B-\chi_{BC}\zeta_A\\
        \nabla_A\underline{\chi}_{BC}-\nabla_B\underline{\chi}_{AC}&=\underline{\nu}_{ABC}-\underline{\chi}_{AC}\zeta_B+\underline{\chi}_{BC}\zeta_A\\
        \nabla^A\chi_{AB}-\nabla_B\tr\chi&=-\beta_B+\tr\chi\zeta_B-\zeta^A\chi_{AB}\\
        \nabla^A\underline{\chi}_{AB}-\nabla_B\tr\underline{\chi}&=\underline{\beta}_B-\tr\underline{\chi}\zeta_B+\zeta^A\underline{\chi}_{AB}\\
        \nabla_A\eta_B-\nabla_B\eta_A&=-\nabla_A\underline{\eta}_B+\underline{\nabla}_B\eta_A=\sigma_{AB}+\frac{1}{2}\bigg(\underline{\hat{\chi}}_A^C\hat{\chi}_{CB}-\underline{\hat{\chi}}_B^C\hat{\chi}_{CA}\bigg)\\
        \nabla^AR_{ABCD}&=2\nabla_{[C}\tau_{D]B}+\chi\cdot\underline{\nu}+\underline{\chi}\cdot\nu+\zeta\cdot\chi\cdot\underline{\chi}.
    \end{align*}
\end{proposition}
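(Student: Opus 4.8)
The plan is to derive each of the listed constraint equations from three ingredients: the definition of the curvature components $\Psi$ as components of the ambient Riemann tensor $R(e_\mu,e_\nu)e_\alpha = D_\mu D_\nu e_\alpha - D_\nu D_\mu e_\alpha - D_{[e_\mu,e_\nu]}e_\alpha$; the Ricci formulas stated above, which express $D_A, D_3, D_4$ acting on the frame in terms of $\nabla, \nabla_3, \nabla_4$ and the Ricci coefficients $\psi$; and the vacuum condition $Ric(g)=0$ combined with the algebraic identities $\tr\tau = Ric_{34}-2\rho$, $\tau_{AB} = \slashed{g}^{CD}R_{CADB} - Ric_{AB}$, $\beta_A = \nu_{AB}{}^B + Ric_{A4}$, $\underline{\beta}_A = -\underline{\nu}_{AB}{}^B - Ric_{A3}$ recorded in the excerpt. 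The structure parallels the classical $3+1$-dimensional double null formalism, adapted to $n+1$ dimensions and to the straight self-similar gauge following \cite{selfsimilarvacuum}; the null structure equations of the previous proposition are obtained by the same procedure, reading off transport equations from the $\nabla_3$ and $\nabla_4$ Ricci formulas rather than the tangential ones.

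First I would establish the Gauss equation. For $e_A, e_B, e_C, e_D$ tangent to $S_{u,v}$, I substitute $D_A e_B = \nabla_A e_B + \tfrac12\underline{\chi}_{AB}e_4 + \tfrac12\chi_{AB}e_3$ into the curvature commutator and project onto $e_D$. The terms involving only $\nabla$ assemble into the intrinsic curvature $\slashed{Riem}_{ABCD}$, while the cross terms, contracted through $g(e_3,e_4)=-2$, reproduce precisely the quadratic correction $\tfrac12(\underline{\chi}_{BC}\chi_{AD} + \underline{\chi}_{AD}\chi_{BC} - \underline{\chi}_{AC}\chi_{BD} - \underline{\chi}_{BD}\chi_{AC})$. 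Contracting with $\slashed{g}^{AC}$ and using $\slashed{g}^{CD}R_{CADB} = \tau_{AB} + Ric_{AB} = \tau_{AB}$ by vacuum yields the formula for $\slashed{Ric}_{AB}$; tracing once more, invoking $\tr\tau = Ric_{34} - 2\rho = -2\rho$ and decomposing $\chi^{AB}\underline{\chi}_{AB} = \hat{\chi}\cdot\hat{\underline{\chi}} + \tfrac1n\tr\chi\tr\underline{\chi}$ gives the scalar-curvature identity, the coefficient $\tfrac{1-n}{n}$ arising from the combination $-1 + \tfrac1n$ when tracing over the $n$-dimensional sphere.

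Next I would handle the Codazzi-type identities. Computing $R_{ABC4} = \nu_{ABC}$ directly from the commutator $D_A D_B e_C - D_B D_A e_C$, keeping the Ricci formulas $D_A e_C = \tfrac12\underline{\chi}_{AC}e_4 + \tfrac12\chi_{AC}e_3 + \nabla_A e_C$ and $D_A e_4 = -\zeta_A e_4 + \chi_A{}^B e_B$, the $e_4$-coefficient collapses into $\nabla_A\chi_{BC} - \nabla_B\chi_{AC} + \chi_{AC}\zeta_B - \chi_{BC}\zeta_A = \nu_{ABC}$, and symmetrically for $\underline{\chi}$ from $R_{ABC3} = \underline{\nu}_{ABC}$. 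Tracing over a pair of indices and using $\beta \sim \nu_{AB}{}^B$ after imposing $Ric_{A4}=0$, respectively $\underline{\beta} \sim -\underline{\nu}_{AB}{}^B$, produces the two divergence identities. For the curl of $\eta$, I would use $\eta_A = \zeta_A + \nabla_A\log\Omega$ and $\underline{\eta}_A = -\zeta_A + \nabla_A\log\Omega$, so that $\nabla_{[A}\eta_{B]} = \nabla_{[A}\zeta_{B]} = -\nabla_{[A}\underline{\eta}_{B]}$ since the Hessian of $\log\Omega$ is symmetric; computing $\nabla_{[A}\zeta_{B]}$ from $\zeta_A = \tfrac12 g(D_A e_4, e_3)$ together with the mixed curvature component $\sigma_{AB} = \tfrac12(R_{3A4B} - R_{3B4A})$ yields the stated formula with its $\underline{\hat{\chi}}\cdot\hat{\chi}$ correction. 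Finally, the last equation is the tangential--transversal decomposition of the twice-contracted second Bianchi identity $D^\mu R_{\mu BCD} = 0$, which holds in vacuum: splitting $D^\mu = \nabla^A + (\text{contributions along } e_3, e_4)$ and re-expressing the transversal derivatives via the Ricci coefficients and the components $\nu, \underline{\nu}$, with $\tau$ playing the role of the tangential Ricci curvature, reorganizes the identity into $\nabla^A R_{ABCD} = 2\nabla_{[C}\tau_{D]B} + \chi\cdot\underline{\nu} + \underline{\chi}\cdot\nu + \zeta\cdot\chi\cdot\underline{\chi}$.

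I expect the main obstacle to be purely organizational: tracking the exact numerical coefficients — in particular the $n$-dependent ones coming from traces over $S^n$ — and unfolding the contractions hidden in the schematic quadratic terms $\psi\cdot\psi$, $\chi\cdot\underline{\nu}$, $\zeta\cdot\chi\cdot\underline{\chi}$, while keeping the antisymmetrization and symmetrization conventions consistent throughout. There is no conceptual difficulty beyond the standard derivation of the null structure and constraint equations, but the higher-dimensional setting — where several quantities that are scalars or vectors in $3+1$ dimensions (such as $\nu$ and $\underline{\nu}$) become genuine rank-three tensors — prevents any shortcut via low-dimensional Hodge dualities, so the computation must be carried out componentwise and matched carefully to the conventions of \cite{selfsimilarvacuum}.
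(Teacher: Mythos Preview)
Your proposal is correct and follows the standard derivation: compute the ambient curvature components by substituting the Ricci formulas into the commutator $D_\mu D_\nu - D_\nu D_\mu$, project onto the frame, and simplify using the vacuum condition and the algebraic identities for $\tau$, $\beta$, $\underline{\beta}$. The paper itself does not give a proof of this proposition at all --- it simply states the equations and refers the reader to \cite{selfsimilarvacuum} for the derivation --- so your sketch in fact supplies what the paper omits, and matches the computation one would find there.
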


\subsection{Self-Similar Straight Vacuum Spacetimes}

The $(n+2)$-dimensional vacuum spacetimes that we study are also straight and self-similar, according to Definition~\ref{ambient metric definition}. In this section we derive the consequences of these properties, which simplify the previous null structure equations and constraint equations significantly. We also write down the system of Bianchi equations in our case. Finally, we introduce the notion of regular solutions to our system, following \cite[Section~3]{selfsimilarvacuum}.

We assume that $\big(\mathcal{M},g\big)$ is self-similar, so for $S=u\partial_u+v\partial_v$ we have:
\[\mathcal{L}_Sg=2g.\]
We also assume that $\big(\mathcal{M},g\big)$ is a straight spacetime, satisfying:
\[\Omega^2=1,\ b=0.\]
Thus, the metric is given in double null gauge by:
\[g=-2(du\otimes dv+dv\otimes du)+\slashed{g}_{AB}d\theta^A\otimes d\theta^B.\]
Moreover, the frame $\{e_A,\ e_3,\ e_4\}$ is integrable, since:
\[e_3=\partial_u,\ e_4=\partial_v,\ e_A=\partial_{\theta^A}.\]

As a consequence, we obtain that the only nontrivial Ricci coefficients are $\psi\in\{\tr\chi,\hat{\chi},\tr\underline{\chi},\underline{\hat{\chi}}\}$. Similarly, the only nontrivial curvature components are $\Psi\in\{\alpha,\nu,\tau,R,\underline{\nu},\underline{\alpha}\}.$ We prove these in the following result:
\begin{lemma}
    The Ricci coefficients and curvature components satisfy:
    \begin{align}\label{equation for chi in self similar straight}
        &u\underline{\chi}_{AB}+v\chi_{AB}=\slashed{g}_{AB}\\
        &\eta=\underline{\eta}=\zeta=0,\ \omega=\underline{\omega}=0\notag\\
        &\sigma=0,\ \rho=0,\ \beta=\underline{\beta}=0.\notag
    \end{align}
\end{lemma}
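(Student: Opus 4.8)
The plan is to exploit the two structural hypotheses in turn: straightness ($\Omega^2 = 1$, $b = 0$) and self-similarity ($\mathcal{L}_S g = 2g$ with $S = u\partial_u + v\partial_v$). First I would read off from the metric equations in the previous subsection that the vanishing of $\Omega$-derivatives forces $\omega = -\tfrac12\nabla_4\log\Omega = 0$ and $\underline{\omega} = -\tfrac12\nabla_3\log\Omega = 0$ immediately, and that $b = 0$ gives $\zeta_A = -\tfrac14\Omega^{-1}\slashed{g}_{AB}e_4(b^B) = 0$; then the relations $\eta_A = \zeta_A + \nabla_A\log\Omega$ and $\underline{\eta}_A = -\zeta_A + \nabla_A\log\Omega$ give $\eta = \underline{\eta} = 0$. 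This disposes of the first line of Ricci-coefficient vanishing statements purely from straightness.

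\textbf{The curvature components.} For the curvature side, I would first use the already-established vanishing of $\zeta, \eta, \underline{\eta}, \omega, \underline{\omega}$ together with the null structure equations of Proposition~\ref{null structure equations proposition}. The equation $\nabla_4\eta = -\beta + \chi\cdot(\eta - \underline{\eta})$ with $\eta = \underline{\eta} = 0$ gives $\beta = 0$; symmetrically $\nabla_3\underline{\eta} = \underline{\beta} + \underline{\chi}\cdot(\eta-\underline{\eta})$ gives $\underline{\beta} = 0$. For $\rho$: the equation $\nabla_4\underline{\omega} = \tfrac12\rho + \tfrac14|\underline{\eta}|^2 - \tfrac14|\eta|^2 + 2\omega\underline{\omega} + 3|\zeta|^2 - |\nabla\log\Omega|^2$ collapses, since every term except $\tfrac12\rho$ vanishes, to $0 = \tfrac12\rho$, so $\rho = 0$. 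Finally $\sigma = 0$ follows from the constraint $\nabla_A\eta_B - \nabla_B\eta_A = \sigma_{AB} + \tfrac12(\underline{\hat\chi}_A^C\hat\chi_{CB} - \underline{\hat\chi}_B^C\hat\chi_{CA})$ once I know that the $\hat\chi$–$\underline{\hat\chi}$ quadratic term is symmetric in $A,B$ — which I expect to be the case, or else to require the self-similar relation; since the left side vanishes ($\eta = 0$) and the symmetric antisymmetrization of a symmetric tensor vanishes, one gets $\sigma_{AB} = 0$.

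\textbf{The decomposition relation for $\chi$.} The identity $u\underline{\chi}_{AB} + v\chi_{AB} = \slashed{g}_{AB}$ is where self-similarity enters essentially. I would compute $\mathcal{L}_S\slashed{g}_{AB}$ two ways. On one hand, $\mathcal{L}_S g = 2g$ restricted to the horizontal block gives $\mathcal{L}_S\slashed{g}_{AB} = 2\slashed{g}_{AB}$. On the other hand, since $S = u e_3 + v e_4$ in the straight frame (using $e_3 = \partial_u$, $e_4 = \partial_v$), I can expand $\mathcal{L}_S\slashed{g}_{AB} = u\mathcal{L}_3\slashed{g}_{AB} + v\mathcal{L}_4\slashed{g}_{AB} + (\text{terms from }\mathcal{L}_S u,\ \mathcal{L}_S v\text{ acting as coefficients})$; being careful, $\mathcal{L}_{ue_3}\slashed g = u\mathcal{L}_{e_3}\slashed g$ because $e_3(u)$ only contributes along the $du$ direction which is not in the horizontal block, and similarly for $v e_4$. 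Then using the metric equations $\mathcal{L}_4\slashed{g}_{AB} = 2\chi_{AB}$ and $\mathcal{L}_3\slashed{g}_{AB} = 2\underline{\chi}_{AB}$, I get $2u\underline{\chi}_{AB} + 2v\chi_{AB} = 2\slashed{g}_{AB}$, which is the claimed relation. The main obstacle I anticipate is the bookkeeping in this last Lie-derivative computation: one must correctly handle that $S$ has components in the $u,v$ directions and verify that the cross terms (and the action of $u, v$ as functions) do not contribute to the purely horizontal $AB$ components of $\mathcal{L}_S\slashed g$; this is routine once the double null structure with $\Omega = 1$, $b = 0$ is written out explicitly, but it is the step requiring genuine care rather than mechanical substitution. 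The remaining assertions — that $\tr\chi$, $\hat\chi$, $\tr\underline{\chi}$, $\underline{\hat\chi}$ are the only nontrivial Ricci coefficients and $\alpha, \nu, \tau, R, \underline{\nu}, \underline{\alpha}$ the only nontrivial curvature components — are then just a matter of collecting which quantities have been shown to vanish.
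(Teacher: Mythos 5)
Your proof follows the same overall route as the paper's: straightness kills $\omega,\underline\omega,\zeta$ and then $\eta,\underline\eta$ via the metric equations; the null structure equations then give $\beta,\underline\beta,\rho=0$; and the constraint for $\nabla_A\eta_B-\nabla_B\eta_A$ handles $\sigma$. One genuine difference in your favor: for the identity $u\underline\chi_{AB}+v\chi_{AB}=\slashed{g}_{AB}$ the paper simply cites Appendix~B of \cite{selfsimilarvacuum}, whereas you give a self-contained derivation from $(\mathcal{L}_S g)_{AB}=2\slashed{g}_{AB}$ together with $\mathcal{L}_4\slashed{g}=2\chi$, $\mathcal{L}_3\slashed{g}=2\underline\chi$; that computation is correct and arguably preferable.

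However, there is a real gap in your $\sigma=0$ argument, and you half-flag it yourself. The product $\underline{\hat\chi}_A^{\ C}\hat\chi_{CB}$ of two symmetric trace-free $2$-tensors is \emph{not} automatically symmetric in $A,B$ — in general $A_i^{\ k}B_{kj}$ is asymmetric unless the endomorphisms commute. So the phrase ``which I expect to be the case'' does not hold in general, and the ensuing sentence (``since the left side vanishes and the symmetric antisymmetrization of a symmetric tensor vanishes'') assumes precisely what needs proving. The self-similarity you already proved is what rescues this. Taking the trace-free part of $u\underline{\chi}_{AB}+v\chi_{AB}=\slashed{g}_{AB}$ gives $u\underline{\hat\chi}_{AB}+v\hat\chi_{AB}=0$, i.e.\ $\underline{\hat\chi}=-\tfrac{v}{u}\hat\chi$. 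Then
\[
\underline{\hat\chi}_A^{\ C}\hat\chi_{CB}=-\tfrac{v}{u}\,\hat\chi_A^{\ C}\hat\chi_{CB},
\]
which is manifestly symmetric in $A,B$ since it is a scalar multiple of the square of the symmetric tensor $\hat\chi$. This is exactly how the paper closes the loop. Note also the logical ordering this imposes: you must establish $u\underline\chi+v\chi=\slashed{g}$ \emph{before} concluding $\sigma=0$, since the latter uses the trace-free consequence of the former.
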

\begin{proof}
    Equation (\ref{equation for chi in self similar straight}) is proved in \cite[Appendix~B]{selfsimilarvacuum}. The metric equations, together with $\Omega=1,\ b=0,$ imply the vanishing of the Ricci coefficients $\eta,\underline{\eta},\zeta,\omega,\underline{\omega}.$ The null structure equations for $\eta,\underline{\eta},$ and $\omega$ imply $\beta=\underline{\beta}=0,$ and $\rho=0.$ Finally, the constraint equation for $\nabla_A\eta_B-\nabla_B\eta_A$ implies that
    \[\sigma_{AB}=\frac{1}{2}\Big(-\underline{\hat{\chi}}_A^C\hat{\chi}_{CB}+\underline{\hat{\chi}}_B^C\hat{\chi}_{CA}\Big).\]
    However, we also have from (\ref{equation for chi in self similar straight}) that $-u\underline{\hat{\chi}}_{AB}=v\hat{\chi}_{AB},$ which then gives $\sigma=0.$
\end{proof}
\begin{remark}
    We can now simplify the null structure equations and constraint equations using the vanishing of the above Ricci coefficients and curvature components. In particular, the last four null structure equations are trivial. Moreover, we notice that all the terms containing angular derivatives in the null structure equations vanish. As a result, when treating these equations as a system of transport equations we avoid the complications regarding loss of angular derivatives which one usually faces when considering the system for a general metric in double null gauge.
\end{remark}

We note some further consequences:
\begin{lemma}The frame $\{e_A,e_3,e_4\}$ satisfies:
    \[\nabla_4e_A=\frac{1}{v}e_A-\frac{u}{v}\underline{\chi}_A^Be_B,\ \nabla_3e_A=\underline{\chi}_A^Be_B.\]
\end{lemma}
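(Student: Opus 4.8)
The plan is to compute $\nabla_4 e_A$ and $\nabla_3 e_A$ directly from the Ricci formulas recorded earlier, after specializing them to the straight self-similar setting. Recall from the Ricci formulas that in a general double null gauge $D_4 e_A = \underline{\eta}_A e_4 + \nabla_4 e_A$ and $D_A e_4 = -\zeta_A e_4 + \chi_A{}^B e_B$. Since we have shown $\underline{\eta} = \zeta = 0$ in the straight self-similar case, the first formula gives $\nabla_4 e_A = D_4 e_A$, and moreover, because the frame $\{e_3, e_4, e_A\} = \{\partial_u, \partial_v, \partial_{\theta^A}\}$ is a coordinate frame, the connection coefficients are symmetric: $D_4 e_A = D_{e_4} e_A = D_{e_A} e_4 = D_A e_4 = \chi_A{}^B e_B$. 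Thus $\nabla_4 e_A = \chi_A{}^B e_B$, and similarly $\nabla_3 e_A = D_3 e_A = D_A e_3 = \underline{\chi}_A{}^B e_B$ using $\eta = \zeta = 0$.

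The second identity $\nabla_3 e_A = \underline{\chi}_A{}^B e_B$ is then already in the desired form. For the first, I would substitute the relation \eqref{equation for chi in self similar straight}, namely $u \underline{\chi}_{AB} + v \chi_{AB} = \slashed{g}_{AB}$, to eliminate $\chi$ in favor of $\underline{\chi}$. Raising an index with $\slashed{g}$, this reads $\chi_A{}^B = \tfrac{1}{v}\big(\delta_A{}^B - u\,\underline{\chi}_A{}^B\big) = \tfrac{1}{v}\delta_A{}^B - \tfrac{u}{v}\underline{\chi}_A{}^B$, so that $\nabla_4 e_A = \chi_A{}^B e_B = \tfrac{1}{v} e_A - \tfrac{u}{v}\underline{\chi}_A{}^B e_B$, which is exactly the claimed formula.

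I do not expect any serious obstacle here: the only subtle point is justifying that one may replace covariant derivatives of frame vectors $D_\mu e_A$ by the projected operators $\nabla_\mu e_A$ plus the normal components recorded in the Ricci formulas, together with the symmetry of the coordinate connection $D_{e_4} e_A = D_{e_A} e_4$ (and likewise for $e_3$), both of which are immediate from the setup in this section. A small bookkeeping check is to confirm the index placement when raising with $\slashed{g}$ in \eqref{equation for chi in self similar straight} and to note that $\chi$, $\underline{\chi}$ are symmetric, so $\chi_A{}^B$ is unambiguous. One could alternatively verify the first identity by a direct Christoffel symbol computation for $g = -2(du\,dv + dv\,du) + \slashed{g}_{AB}(u,v,\theta)\,d\theta^A d\theta^B$, using $\partial_v \slashed{g}_{AB} = \mathcal{L}_4 \slashed{g}_{AB} = 2\chi_{AB}$ and $\partial_u \slashed{g}_{AB} = \mathcal{L}_3 \slashed{g}_{AB} = 2\underline{\chi}_{AB}$; this gives the same answer and serves as a consistency check, but the frame-based argument above is cleaner and is the route I would present.
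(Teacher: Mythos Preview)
Your proposal is correct and follows essentially the same approach as the paper: use the Ricci formulas with $\underline{\eta}=\eta=\zeta=0$ to identify $\nabla_4 e_A = D_4 e_A = D_A e_4 = \chi_A{}^B e_B$ (and similarly for $e_3$), then substitute \eqref{equation for chi in self similar straight} to rewrite $\chi$ in terms of $\underline{\chi}$. The paper's proof is the same one-line computation, just more tersely stated.
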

\begin{proof}
    Using the above lemma in the Ricci formulas, we have:
    \[\nabla_4e_A=D_4e_A=D_Ae_4=\chi_{A}^Be_B=\frac{1}{v}e_A-\frac{u}{v}\underline{\chi}_A^Be_B,\ \nabla_3e_A=D_Ae_3=\underline{\chi}_A^Be_B.\]\end{proof}
\begin{lemma}
    For any curvature component $\Psi$ we have that $\nabla_S\Psi=-2\Psi.$
\end{lemma}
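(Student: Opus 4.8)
The plan is to exploit the homothetic nature of $S$ together with the frame identities just established. Since $S=ue_3+ve_4$, every derivative along $S$ will be computed as $uD_3+vD_4$ (or $u\nabla_3+v\nabla_4$ for the induced horizontal connection). First I would record the action of $D_S$ on the null frame: from the Ricci formulas together with $\eta=\underline{\eta}=\zeta=\omega=\underline{\omega}=0$ one gets $D_3e_3=D_4e_3=D_3e_4=D_4e_4=0$, hence $D_Se_3=D_Se_4=0$, while the previous lemma gives
\[D_Se_A=u\nabla_3e_A+v\nabla_4e_A=u\,\underline{\chi}_A^Be_B+\Big(e_A-u\,\underline{\chi}_A^Be_B\Big)=e_A,\]
so that in particular $\nabla_Se_A=e_A$. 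Combining with the brackets $[S,e_3]=-e_3$, $[S,e_4]=-e_4$, $[S,e_A]=0$ and torsion-freeness of $D$, one also obtains $D_{e_\mu}S=e_\mu$ for every frame vector $e_\mu$.

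Next I would pin down the scaling weight of the curvature tensor. From $\mathcal{L}_Sg=2g$ the flow $\phi_t$ of $S$ satisfies $\phi_t^{*}g=e^{2t}g$; since the $(1,3)$ Riemann tensor is unchanged under constant rescalings of the metric, this yields $\mathcal{L}_SR=2R$ for the $(0,4)$ Riemann tensor $R$ (equivalently, one may simply invoke the standard scaling identity for a homothety). Expanding this Lie derivative through $D$ and using $D_{e_\mu}S=e_\mu$ in each of the four arguments gives $\mathcal{L}_SR=D_SR+4R$, whence $D_SR=-2R$.

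To conclude, I would treat all curvature components uniformly. Each component can be written as $\Psi_{A_1\cdots A_p}=R(\,\cdot\,)$ with $p$ horizontal slots filled by $e_{A_1},\dots,e_{A_p}$ and the remaining $4-p$ slots filled by copies of $e_3,e_4$ (thus $p=2$ for $\alpha,\underline{\alpha},\tau$, $p=3$ for $\nu,\underline{\nu}$, and $p=4$ for $R_{ABCD}$). Differentiating the scalar $\Psi_{A_1\cdots A_p}$ along $S$, using $D_Se_3=D_Se_4=0$, $D_Se_{A_i}=e_{A_i}$ and $D_SR=-2R$, gives $S(\Psi_{A_1\cdots A_p})=(D_SR)(\,\cdot\,)+p\,\Psi_{A_1\cdots A_p}=(p-2)\,\Psi_{A_1\cdots A_p}$. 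Finally, viewing $\Psi$ as a horizontal $p$-tensor and using $\nabla_Se_{A_i}=e_{A_i}$, the definition of the projected derivative gives
\[\nabla_S\Psi_{A_1\cdots A_p}=S\big(\Psi_{A_1\cdots A_p}\big)-p\,\Psi_{A_1\cdots A_p}=-2\,\Psi_{A_1\cdots A_p}.\]

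There is no serious obstacle here; the only point requiring care is the bookkeeping in the first and last steps, namely that the $\underline{\chi}$-terms in $u\nabla_3e_A+v\nabla_4e_A$ cancel against each other, and that the $e_3,e_4$ slots of $\Psi$ — which contribute $0$ to $S(\Psi)$ because $D_Se_3=D_Se_4=0$ — are precisely the slots that play no role in the horizontal differentiation $\nabla_S$. The two effects then combine into the $p$-independent weight $-2$, so the same computation handles $\alpha,\nu,\tau,R,\underline{\nu},\underline{\alpha}$ simultaneously.
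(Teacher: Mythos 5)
Your proof is correct and follows essentially the same route as the paper: both use the self-similarity $\mathcal{L}_Sg=2g$ together with the frame identity $\nabla_Se_A=e_A$ to pass from the scaling of the coordinate components $\Psi_{A_1\cdots A_p}$ to $\nabla_S\Psi=-2\Psi$. The only cosmetic difference is that you derive the scaling laws $S(\Psi_{A_1\cdots A_p})=(p-2)\Psi_{A_1\cdots A_p}$ in one unified step from the homothety identity $D_SR=-2R$, whereas the paper simply states the three Lie-derivative identities case by case.
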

\begin{proof}
    For $\Psi\in\{\alpha,\tau,\underline{\alpha}\},$ we have that $\mathcal{L}_S\Psi_{AB}=0$ implies that $u\partial_u(\Psi_{AB})+v\partial_v(\Psi_{AB})=0$ in the canonical coordinate frame. The previous lemma then implies that $\nabla_S\Psi_{AB}=-2\Psi_{AB}.$ The proof is similar in the case when $\Psi\in\{\nu,\underline{\nu}\},$ for which we have $u\partial_u(\Psi_{ABC})+v\partial_v(\Psi_{ABC})=\Psi_{ABC}$ and for $\Psi=R,$ for which we have $u\partial_u(R_{ABCD})+v\partial_v(R_{ABCD})=2R_{ABCD}.$\end{proof}

We introduce the notion of signature of \cite[Section~3]{selfsimilarvacuum}, which will facilitate the schematic representation of certain error terms in our equations:
\begin{definition}
    For any $\phi\in\{\psi,\Psi\}$ we define the signature:
    \[s(\phi)=N_3(\phi)+\frac{1}{2}N_A(\phi)-1,\]
    where $N_3$ represents the number of $e_3$ vectors used in the definition of $\phi,$ and $N_A$ represents the number of $e_A$ vectors used in the definition of $\phi$.
\end{definition}
We state the following result of \cite[Section~3]{selfsimilarvacuum} in our simplified setting:
\begin{lemma}
    The signature of the nontrivial Ricci coefficients and curvature components is:
    \[s(\chi)=0,\ s(\underline{\chi})=1,\ s(\alpha)=0,\ s(\nu)=\frac{1}{2},\ s(\tau)=s(R)=1\ s(\underline{\nu})=\frac{3}{2},\ s(\underline{\alpha})=2.\]
    Moreover, we have that for any horizontal tensors $\phi,\phi_1,\phi_2:$
    \[s(\nabla_3\phi)=s(\phi)+1,\ s(\nabla_A\phi)=s(\phi)+\frac{1}{2},\ s(\nabla_4\phi)=s(\phi),\ s(\phi_1\phi_2)=s(\phi_1)+s(\phi_2).\]
    Thus, the signature is preserved by covariant differentiation.
\end{lemma}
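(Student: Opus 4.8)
\emph{Proof strategy.} This is a bookkeeping lemma, and I would establish it by direct verification in two stages. \textbf{Stage 1 (signatures of the basic quantities).} Each nontrivial Ricci coefficient is $g(D_\bullet\bullet,\bullet)$ evaluated on a triple of frame vectors, and each nontrivial curvature component is the Riemann tensor evaluated on a quadruple, so I would simply read off $N_3$ and $N_A$ from the defining formulas on the preceding pages and substitute into $s(\phi)=N_3+\tfrac12 N_A-1$. Concretely: $\chi_{AB}=g(D_Ae_4,e_B)$ has $N_3=0,\ N_A=2$, so $s(\chi)=0$; $\underline{\chi}_{AB}=g(D_Ae_3,e_B)$ has $N_3=1,\ N_A=2$, so $s(\underline{\chi})=1$; $\alpha_{AB}=R_{A4B4}$ gives $s(\alpha)=0$; $\nu_{ABC}=R_{ABC4}$ gives $s(\nu)=\tfrac12$; $\tau_{AB}$ (built from $R_{3A4B}$) and $R_{ABCD}$ give $s=1$; $\underline{\nu}_{ABC}=R_{ABC3}$ gives $s=\tfrac32$; and $\underline{\alpha}_{AB}=R_{A3B3}$ has $N_3=2,\ N_A=2$, so $s(\underline{\alpha})=2$. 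These values are moreover consistent with the relation $u\underline{\chi}_{AB}+v\chi_{AB}=\slashed{g}_{AB}$ from \eqref{equation for chi in self similar straight}, once one assigns $s(u)=-1$, $s(v)=0$ and $s(\slashed{g})=0$ --- indeed this is the only assignment for which all three terms of that identity share a common signature.

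\textbf{Stage 2 (the calculus rules).} The product rule $s(\phi_1\phi_2)=s(\phi_1)+s(\phi_2)$ is immediate from the additivity of the counts $N_3$ and $N_A$ under the schematic contractions that appear in our equations, the single normalizing constant $-1$ being bookkept correctly. For the differentiation rules I would expand $\nabla_X\phi$, for $X\in\{e_3,e_4,e_A\}$ and $\phi$ a horizontal tensor, in the integrable frame $\{e_A,e_3,e_4\}=\{\partial_{\theta^A},\partial_u,\partial_v\}$, using the Ricci formulas together with the two preceding lemmas, which give $\nabla_3e_A=\underline{\chi}_A^{B}e_B$, $\nabla_4e_A=\tfrac1v e_A-\tfrac{u}{v}\underline{\chi}_A^{B}e_B$, and $\nabla_Ae_B=\slashed{\Gamma}^C_{AB}e_C$ with $\slashed{\Gamma}$ the Christoffel symbols of $\slashed{g}$. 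Every resulting term is of one of two types: either $X$ applied to a component of $\phi$, or (connection coefficient) $\cdot$ (component of $\phi$). For the second type I would use the product rule together with $s(\underline{\chi})=1$ and $s(\slashed{\Gamma})=\tfrac12$ --- the latter because $\slashed{\Gamma}$ is built from $\slashed{g}$ (signature $0$) and a single angular derivative --- and check that the bookkeeping closes; the one point to verify here is that the two terms of $\nabla_4e_A$ land in the same signature class, which holds precisely because $s(u/v)=-1$ offsets $s(\underline{\chi})=1$. For the first type one checks that differentiating a component by $\partial_u$, $\partial_v$, $\partial_{\theta^A}$ raises the signature by $1$, $0$, $\tfrac12$ respectively; this is where self-similarity enters, since $\mathcal{L}_Sg=2g$ forces $\slashed{g}_{AB}$ to be homogeneous of degree $2$ in $(u,v)$, hence each $\psi$ and $\Psi$ component to be homogeneous of a definite degree fixed by its index structure, and one verifies that the assignments $s(u)=-1$, $s(v)=0$, $s(\theta^A)=-\tfrac12$ are consistent with this and make $\partial_u,\partial_v,\partial_{\theta^A}$ shift $s$ exactly as claimed. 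Iterating the three rules then gives that the signature is preserved by covariant differentiation in the stated sense, and as a corroborating consistency check one verifies that in each null structure equation, constraint equation, and Bianchi equation the two sides carry equal signature.

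\textbf{Main obstacle.} The delicate part is the consistency verification in Stage 2: confirming that the index weight $N_3+\tfrac12 N_A$ and the scaling-homogeneity degree interlock so that $\partial_u,\partial_v,\partial_{\theta^A}$ act on components exactly as the signature calculus predicts, with no anomaly from the normalizing constant or from the contractions in the connection terms. Once this is pinned down --- and it is really a consequence of straightness, which makes the null frame coincide with the coordinate frame, together with self-similarity --- the remainder is routine index counting.
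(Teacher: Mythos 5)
The paper does not actually prove this lemma: it is quoted from \cite{selfsimilarvacuum} ("We state the following result of \cite{selfsimilarvacuum} in our simplified setting"), so there is no in-text proof to compare against --- only the definition of $s$ and the frame formulas preceding it. I will evaluate your attempt on its own terms.

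Stage 1 is correct and is indeed a direct substitution. The differentiation rules also follow by the same bare index count, with no need for the homogeneity scaffolding you introduce (the assignments $s(u)=-1$, $s(v)=0$, $s(\theta^A)=-\tfrac12$, the expansion of $\nabla_4 e_A$, and so on): each of $\nabla_3$, $\nabla_A$, $\nabla_4$ inserts exactly one additional $e_3$, $e_A$, or $e_4$ into the defining expression, raising $N_3$ by one, $N_A$ by one, or neither, and hence raising $s$ by $1$, $\tfrac12$, or $0$ directly from $s=N_3+\tfrac12 N_A-1$. Your detour through scaling assigns signatures to coordinate functions and basis vectors, objects the definition never covers, and the circularity of that assignment (it is chosen precisely so that the identity $u\underline{\chi}+v\chi=\slashed{g}$ balances) is a sign that it is not doing independent work.

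The genuine gap is in the product rule, which you call "immediate from the additivity of the counts." It is not: applying $s=N_3+\tfrac12 N_A-1$ to a raw tensor product gives
\[
s(\phi_1\otimes\phi_2)=\big(N_3^{(1)}+N_3^{(2)}\big)+\tfrac12\big(N_A^{(1)}+N_A^{(2)}\big)-1=s(\phi_1)+s(\phi_2)+1,
\]
an excess of $+1$ per additional factor. The stated rule $s(\phi_1\phi_2)=s(\phi_1)+s(\phi_2)$ is recovered only because each schematic binary product in the null-structure and Bianchi equations carries exactly one contracted angular pair (and a $k$-fold product carries $k-1$), each contraction lowering $N_A$ by $2$ and hence $s$ by $1$, exactly cancelling that excess. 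Your phrase "under the schematic contractions that appear in our equations" gestures at this but never carries out the count --- and that count is where the content of the product rule actually lives. It also does not cover all the terms one meets: a scalar such as $\tr\chi$, or $|\hat{\chi}|^2$ in the $\tr\chi$ null-structure equation, involves full contraction, and one has to add the further convention that contracting with $\slashed{g}$ or $\slashed{g}^{-1}$ preserves signature (so that $s(\tr\chi)=s(\chi)=0$); a literal application of $s=N_3+\tfrac12 N_A-1$ to the zero free angular slots of $\tr\chi$ would give $-1$. Without making that accounting explicit, the product rule is asserted rather than established, and the later consistency check against the equations --- which you correctly identify as the thing one really needs --- has no independent basis to rest on.
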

Using this notion of signature, we introduce the notation of \cite[Section~3]{selfsimilarvacuum} for the error terms that we expect on the right hand side of the Bianchi equations. We point out that in our case $\zeta=0$ simplifies the structure of these terms. For any $s\in\{0,\frac{1}{2},\ldots,\frac{5}{2}\}$ we have:
\[\mathcal{E}_s^{(3)}=\sum_{s_1+s_2=s,\ s_1\neq1}\psi_{s_1}\Psi_{s_2},\ \mathcal{E}_s^{(4)}=\sum_{s_1+s_2=s}\psi_{s_1}\Psi_{s_2}.\]
We write down the Bianchi equations here and refer the reader to \cite[Section~3]{selfsimilarvacuum} for proof:
\begin{proposition}\label{Bianchi equation proposition}
    We have the following Bianchi equations for the straight self-similar vacuum spacetime $\big(\mathcal{M},g\big)$:
    \begin{align*}
        \nabla_3\alpha_{AB}+\frac{1}{2}\tr\underline{\chi}\alpha_{AB}&=-\nabla^C\nu_{C(AB)}+\mathcal{E}_1^{(3)}\\
        \nabla_4\nu_{ABC}&=-2\nabla_{[A}\alpha_{B]C}+\mathcal{E}_{1/2}^{(4)}\\
        \nabla_3\nu_{ABC}+\frac{2}{n}\tr\underline{\chi}\nu_{ABC}&=-2\nabla_{[A}\tau_{B]C}+2\underline{\hat{\chi}}_{[A}^D\nu_{|D|B]C}+\mathcal{E}_{3/2}^{(3)}\\
        \nabla_4R_{ABCD}&=-2\nabla_{[A}\nu_{|CD|B]}+\mathcal{E}_{1}^{(4)}\\
        \nabla_3R_{ABCD}+\frac{2}{n}\tr\underline{\chi}R_{ABCD}&=-2\nabla_{[A}\underline{\nu}_{|CD|B]}+\underline{\chi}_{A[D}\tau_{C]B}+\underline{\chi}_{B[C}\tau_{D]A}+2\underline{\hat{\chi}}_{[A}^ER_{B]ECD}+\mathcal{E}_{2}^{(3)}\\
        \nabla_4\underline{\nu}_{ABC}&=-2\nabla_{[A}\tau_{B]C}+\mathcal{E}_{3/2}^{(4)}\\
        \nabla_3\underline{\nu}_{ABC}+\frac{3}{n}\tr\underline{\chi}\underline{\nu}_{ABC}&=-2\nabla_{[A}\underline{\alpha}_{B]C}+2\underline{\hat{\chi}}_{[A}^D\underline{\nu}_{B]DC}+2\underline{\hat{\chi}}_{[A}^D\underline{\nu}_{|CD|B]}+\mathcal{E}_{5/2}^{(3)}\\
        \nabla_4\underline{\alpha}_{AB}&=-\nabla^C\underline{\nu}_{C(AB)}+\mathcal{E}_{2}^{(4)}.
    \end{align*}
\end{proposition}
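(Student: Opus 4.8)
The plan is to derive all eight equations of Proposition~\ref{Bianchi equation proposition} from the second Bianchi identity of the Ricci-flat spacetime $\big(\mathcal{M},g\big)$. Since $\mathcal{M}$ is vacuum, this identity splits into the algebraic cyclic identity $D_{[\lambda}R_{\mu\nu]\rho\sigma}=0$ and the once-contracted identity, which in vacuum collapses to the statement that the Riemann tensor is divergence-free, $D^{\lambda}R_{\lambda\mu\rho\sigma}=0$. I would obtain each $\nabla_4$ equation by applying one of these to a suitable triple of frame vectors in $\{e_3,e_4,e_A\}$ and isolating the term carrying a $D_4$ derivative of the desired curvature component, and each $\nabla_3$ equation analogously, isolating the $D_3$ term; the contractions are expanded using $g^{34}=-\tfrac12$ and $g^{AB}=\slashed{g}^{AB}$. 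Some equations require combining both Bianchi identities: for instance, to get $\nabla_3\alpha=-\nabla^C\nu_{C(AB)}+\dots$ one uses the cyclic identity to write $D_3\alpha_{AB}$ in terms of $D_4\tau_{AB}$ (after noting $R_{3AB4}=-\tau_{AB}$ once $\sigma=0$ and $\beta=0$), and then the divergence identity $D^{\lambda}R_{\lambda AB4}=0$ to trade $D_4\tau$ for $\nabla^{C}\nu_{CAB}$, whose antisymmetrized form is symmetrized in $A,B$ by the algebraic symmetries of the curvature.

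The main content of the argument is the commutation step. The Bianchi identities are phrased in the ambient Levi--Civita connection $D$, whereas the right-hand sides of Proposition~\ref{Bianchi equation proposition} are written in terms of the horizontal operators $\nabla,\nabla_3,\nabla_4$. Converting between the two requires replacing $D_3,D_4,D_A$ by their horizontal projections via the Ricci formulas of Section~\ref{set up section} (e.g. $D_4e_A=\underline\eta_Ae_4+\nabla_4e_A$ and $D_Ae_B=\tfrac12\underline\chi_{AB}e_4+\tfrac12\chi_{AB}e_3+\nabla_Ae_B$); every differentiated frame vector generates a correction term that is a product of a Ricci coefficient with a curvature component. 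The explicit Ricci-coefficient terms on the left of Proposition~\ref{Bianchi equation proposition}, such as $\tfrac12\tr\underline\chi\,\alpha$ or $\tfrac2n\tr\underline\chi\,\nu$, arise precisely from the $\tfrac12\tr\underline\chi\,e_4$ piece of $\slashed{g}^{CD}D_Ce_D$ acting on the contracted curvature. To see that all remaining corrections assemble into $\mathcal{E}_s^{(3)}$ or $\mathcal{E}_s^{(4)}$, I would use the signature calculus recorded in the excerpt: since $s(\nabla_3\phi)=s(\phi)+1$, $s(\nabla_A\phi)=s(\phi)+\tfrac12$, $s(\nabla_4\phi)=s(\phi)$ and $s(\phi_1\phi_2)=s(\phi_1)+s(\phi_2)$, every $\psi\cdot\Psi$ product that can appear has total signature equal to that of the left-hand side, and the exclusion $s_1\neq1$ in $\mathcal{E}_s^{(3)}$ is exactly the statement that the signature-$1$ Ricci coefficient $\underline\chi$ only contributes through the explicitly displayed $\tr\underline\chi$ and $\underline{\hat\chi}$ terms.

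Finally, the straight self-similar structure collapses almost everything. Using $\Omega=1$, $b=0$, the vanishing of $\eta,\underline\eta,\zeta,\omega,\underline\omega$ and of $\beta,\underline\beta,\rho,\sigma$ (established in the lemmas above), the great majority of terms in the general double-null Bianchi system drop out: the $\nabla_4$ equations lose all left-hand Ricci-coefficient terms and become pure transport equations, while the $\nabla_3$ equations retain only the explicit $\tr\underline\chi$ and $\underline{\hat\chi}$ factors, with the stated multiplicities $\tfrac2n$ for $\nu$ and $R$, $\tfrac3n$ for $\underline\nu$, and $\tfrac12$ for $\alpha$. Where convenient one may further rewrite the surviving $\underline{\hat\chi}$ contractions using $u\underline\chi+v\chi=\slashed g$, equivalently $v\hat\chi=-u\underline{\hat\chi}$. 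Matching the surviving terms against the general double-null Bianchi equations of \cite{selfsimilarvacuum} then produces the claimed system; in practice I would simply specialize the computation of \cite{selfsimilarvacuum} using the vanishing quantities above.

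I expect no conceptual obstacle here: the entire derivation is a bookkeeping exercise once the Bianchi identities are projected and the Ricci formulas are substituted. The only genuinely delicate points are extracting the $\underline\chi$-terms with exactly the right numerical coefficients and verifying that no signature-$1$ contribution is left hidden inside an $\mathcal{E}_s^{(3)}$, both of which are controlled by the signature calculus. Since this computation, in far greater generality, is carried out in detail in \cite{selfsimilarvacuum}, it suffices to invoke it and record the simplifications produced by straightness and self-similarity.
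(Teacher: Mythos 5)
Your proposal is correct and matches the paper's approach: the paper itself gives no independent derivation, simply stating the system and referring the reader to \cite{selfsimilarvacuum}, and your closing move is the same. The preceding sketch (projecting the second Bianchi identity onto the null frame, converting $D$-derivatives to horizontal $\nabla$-derivatives via the Ricci formulas, organizing the resulting $\psi\cdot\Psi$ corrections with the signature calculus, and then invoking the vanishing of $\eta,\underline\eta,\zeta,\omega,\underline\omega,\beta,\underline\beta,\rho,\sigma$ from straightness and self-similarity) is a reasonable reconstruction of what that reference carries out.
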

We notice that the equations for $\tau$ can be derived using $\tau_{AB}=\slashed{g}^{CD}R_{CADB}.$ Moreover, the Bianchi equations for the vanishing curvature components $\{\beta,\underline{\beta},\sigma\}$ imply the following additional constraint equations:
\begin{proposition}\label{addtional constraints proposition}
    We have the following constraint equations for the straight self-similar vacuum spacetime $\big(\mathcal{M},g\big)$:
    \[\nabla^B\alpha_{AB}=\mathcal{E}_{1/2}^{(4)},\ \nabla^C\nu_{ABC}=\mathcal{E}_{1}^{(4)},\ \nabla^C\underline{\nu}_{ABC}=\mathcal{E}_{2}^{(4)},\ \nabla^B\underline{\alpha}_{AB}=\mathcal{E}_{5/2}^{(4)}.\]
\end{proposition}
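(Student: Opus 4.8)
The plan is to recognize these four identities as what remains, in the straight self-similar setting, of the Bianchi equations governing the curvature components $\beta$, $\underline{\beta}$, $\rho$, $\sigma$ — equations that were deliberately omitted from Proposition \ref{Bianchi equation proposition} precisely because these components vanish identically here. In the general double null formalism of \cite{selfsimilarvacuum}, decomposing the contracted second Bianchi identity $\nabla^{\mu}R_{\mu\nu\rho\sigma}=0$ (valid since $Ric(g)=0$) in the null frame produces, besides the eight evolution equations of Proposition \ref{Bianchi equation proposition}, four further equations whose principal parts are, schematically, $\nabla_4\beta$; $\nabla_4\rho$ and $\nabla_4\sigma$; $\nabla_3\rho$ and $\nabla_3\sigma$; and $\nabla_3\underline{\beta}$, and whose right-hand sides contain, respectively, $\nabla^B\alpha_{AB}$; $\nabla^C\nu_{ABC}$ (and a curl of $\nu$); $\nabla^C\underline{\nu}_{ABC}$ (and a curl of $\underline{\nu}$); and $\nabla^B\underline{\alpha}_{AB}$, modulo lower order terms. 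The first step is to substitute $\beta=\underline{\beta}=\rho=\sigma=0$ into these four equations, which annihilates every principal part and leaves each of the four divergences equal to its lower order remainder.

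The second step is to put these remainders into the schematic form $\mathcal{E}_s^{(4)}$. Substituting the vanishing Ricci coefficients $\eta=\underline{\eta}=\zeta=\omega=\underline{\omega}=0$ and $\rho=\sigma=0$ — all established in the preceding lemmas — annihilates every remainder term except products $\psi\cdot\Psi$ of a surviving Ricci coefficient $\psi\in\{\chi,\underline{\chi}\}$ with a curvature component, since the would-be term $\tr\underline{\chi}\cdot(\text{the equation's own component})$ drops, that component being one of the vanishing $\beta,\underline{\beta},\rho,\sigma$. Because the signature is additive under products and preserved by covariant differentiation, and since $s(\nabla^B\alpha_{AB})=\tfrac12$, $s(\nabla^C\nu_{ABC})=1$, $s(\nabla^C\underline{\nu}_{ABC})=2$, $s(\nabla^B\underline{\alpha}_{AB})=\tfrac52$, each surviving product $\psi_{s_1}\Psi_{s_2}$ automatically satisfies $s_1+s_2$ equal to the signature of the corresponding divergence, hence fits the schematic notation; and as $\mathcal{E}_s^{(3)}\subset\mathcal{E}_s^{(4)}$, claiming the $\mathcal{E}_s^{(4)}$ form is the safe statement regardless of whether $\underline{\chi}$-terms genuinely occur, which also absolves us of tracking them carefully. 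This yields the four stated identities.

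For the $\alpha$ and $\underline{\alpha}$ identities one can alternatively reach the conclusion without leaving the equations already in hand: contract the $\nabla_4$ Bianchi equation for $\nu$, respectively the $\nabla_3$ Bianchi equation for $\underline{\nu}$, of Proposition \ref{Bianchi equation proposition} over a pair of angular indices. Since $\nabla$ is metric on horizontal tensors in every direction ($\nabla_4\slashed{g}=\nabla_3\slashed{g}=0$), the null derivative commutes past the contraction, so the left-hand side collapses to $\nabla_4(\nu_{AB}{}^B)=0$, respectively $\nabla_3(\underline{\nu}_{AB}{}^B)=0$, using $\nu_{AB}{}^B=\beta_A-Ric_{A4}=0$ and $\underline{\nu}_{AB}{}^B=-\underline{\beta}_A-Ric_{A3}=0$; on the right-hand side the contraction turns the antisymmetrized angular derivative into $\nabla^B\alpha_{AB}$, respectively $\nabla^B\underline{\alpha}_{AB}$, plus a gradient of $\tr\alpha=Ric_{44}=0$, respectively $\tr\underline{\alpha}=Ric_{33}=0$, while the contracted error stays of the stated type (in the $\nabla_3$ case the $\underline{\hat{\chi}}\cdot\underline{\nu}$ terms already present in that Bianchi equation carry $s_1=1$, which is exactly why the result is $\mathcal{E}_{5/2}^{(4)}$ rather than $\mathcal{E}_{5/2}^{(3)}$). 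The only real difficulty throughout is bookkeeping, not analysis: one must check that every trace of $\nu,\underline{\nu},\alpha,\underline{\alpha},\tau$ appearing — including those generated by the first Bianchi identity $R_{[ABC]D}=0$ when passing between divergences over different slots — is killed by $\beta=\underline{\beta}=\rho=0$ and the vacuum condition, and that no bare curvature derivative survives on the right of the $\nu$ and $\underline{\nu}$ divergence identities, which is exactly why those two are most transparently obtained from the unlisted $\rho,\sigma$ Bianchi equations rather than by contracting the ones in Proposition \ref{Bianchi equation proposition}.
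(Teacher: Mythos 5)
Your proposal is correct and, in its first two paragraphs, follows exactly the route the paper takes — the paper's entire "proof" is the single sentence preceding the proposition, stating that the Bianchi equations for the vanishing curvature components yield these constraints, which you have simply spelled out, including the signature bookkeeping and the correct observation that the $\underline{\hat{\chi}}\cdot\underline{\nu}$ term forces $\mathcal{E}_{5/2}^{(4)}$ rather than $\mathcal{E}_{5/2}^{(3)}$ in the last identity. The alternative derivation of the $\alpha$ and $\underline{\alpha}$ constraints by tracing the $\nabla_4\nu$ and $\nabla_3\underline{\nu}$ equations of Proposition \ref{Bianchi equation proposition} over a pair of angular indices, using $\nu_{AB}{}^B=\underline{\nu}_{AB}{}^B=0$ and $\operatorname{tr}\alpha=\operatorname{tr}\underline{\alpha}=0$, is a valid self-contained variant that avoids invoking the unlisted Bianchi equations, though it cannot reach the $\nu$ and $\underline{\nu}$ constraints, as you correctly note.
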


We conclude this section by defining the notion of a regular solution to the Einstein vacuum equations \eqref{actual vacuum equations}, according to \cite{selfsimilarvacuum}. We refer to the collection of horizontal tensors $\slashed{g},\ \chi,\ \underline{\chi},\ \alpha,\ \nu,\ \tau,\ R,\ \underline{\nu},\ \underline{\alpha}$ introduced above as the set of double null unknowns.
\begin{definition}\label{weak solution definition}
    The straight self-similar  metric $g$ defined on a subset of the background differentiable manifold $\big((-\infty,0]\times[0,\infty)\backslash\{(0,0)\}\big)\times S^n,$ is a regular solution of the Einstein vacuum equations \eqref{actual vacuum equations} if:
    \begin{itemize}
        \item For $n>4$ even, the metric:
        \[g=-2(du\otimes dv+dv\otimes du)+\slashed{g}_{AB}d\theta^A\otimes d\theta^B\]
        is a classical $C^2$ solution of the Einstein vacuum equations \eqref{actual vacuum equations}. Equivalently, the double null unknowns are classical solutions to the metric equations, the system of null structure equations in Proposition \ref{null structure equations proposition}, the Bianchi equations in Proposition \ref{Bianchi equation proposition},  and the system of constraint equations in Propositions \ref{constraint equations proposition} and \ref{addtional constraints proposition}.
        \item For $n=4$ the metric $g$ defined as above is a classical $C^2$ solution of the Einstein vacuum equations \eqref{actual vacuum equations} for $v>0$ and $u<0.$ Moreover, the corresponding double null unknowns are classical solutions to the metric equations, the system of constraint equations in Proposition \ref{constraint equations proposition} and weak solutions to the system of equations in Propositions \ref{null structure equations proposition}, \ref{Bianchi equation proposition}, and \ref{addtional constraints proposition}.
    \end{itemize}
\end{definition}

The solutions that we construct in this paper will be smooth in the region $\{v>0,\ u<0\}$ and extend as regular solutions to $\big((-\infty,0]\times[0,\infty)\backslash\{(0,0)\}\big)\times S^n.$ Moreover, the solutions will also satisfy the Fefferman-Graham expansions at $v=0$ and $u=0$ up to order $\frac{n}{2}.$ In the case of $n=4$, all the double null unknowns extend continuously to $v=0$ and $u=0,$ with the exception of $\alpha$ which has a $\log v$ singularity at $v=0$ and $\underline{\alpha}$ which has a $\log u$ singularity at $u=0$. These mild singularities allow us to conclude that for $n=4$ a regular solution $g$ solves the Einstein vacuum equations \eqref{actual vacuum equations} weakly in $L^2.$

\subsection{Commutation Formulas and Error Terms Notation}
\begin{lemma}\label{commutation with nabla lemma}We have the commutation formulas:
    \begin{align*}
        &\big[\nabla_4,\nabla^i\big]\phi+\frac{i}{n}\tr\chi\nabla^i\phi=\sum_{j=1}^i\nabla^j\chi\cdot\nabla^{i-j}\phi+\hat{\chi}\cdot\nabla^i\phi,\\
        &\big[\nabla_3,\nabla^i\big]\phi+\frac{i}{n}\tr\underline{\chi}\nabla^i\phi=\sum_{j=1}^i\nabla^j\underline{\chi}\cdot\nabla^{i-j}\phi+\hat{\underline{\chi}}\cdot\nabla^i\phi,
    \end{align*}
    where $\nabla^j$ is a schematic notation for all the possible combinations of $j$ angular derivatives.
\end{lemma}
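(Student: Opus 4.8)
The plan is to prove the $\nabla_4$ identity by induction on $i$; the $\nabla_3$ identity then follows verbatim upon replacing $\chi$ by $\underline{\chi}$ and $\nabla_4$ by $\nabla_3$, using that in the straight gauge $e_3=\partial_u$ and $\mathcal{L}_3\slashed{g}_{AB}=2\underline{\chi}_{AB}$.

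\textbf{Base case $i=1$.} First I would compute $[\nabla_4,\nabla]\phi$ directly in the straight coordinate frame $e_3=\partial_u,\ e_4=\partial_v,\ e_A=\partial_{\theta^A}$. The crucial structural input is that straightness ($\Omega=1$, $b=0$) makes this frame integrable, so $[e_4,e_A]=0$, and that $\underline{\eta}=0$ gives $\nabla_4e_A=D_4e_A=\chi_A^Be_B$. Expanding $(\nabla_4\nabla\phi)$ and $(\nabla\nabla_4\phi)$ in the frame, the scalar derivatives $e_4e_A\phi$ and $e_Ae_4\phi$ cancel by integrability, the $\Gamma\cdot e_4\phi$ terms cancel, and after reassembling the remaining pieces into covariant expressions one is left schematically with
\[
[\nabla_4,\nabla]\phi=\big(\mathcal{L}_4\Gamma\big)\cdot\phi-\chi\cdot\nabla\phi ,
\]
where $\Gamma$ denotes the Christoffel symbols of $\slashed{g}$ and $\mathcal{L}_4\Gamma^C_{AB}=\nabla_A\chi_B^C+\nabla_B\chi_A^C-\nabla^C\chi_{AB}$ is obtained from $\partial_v\slashed{g}_{AB}=2\chi_{AB}$. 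The point is that integrability of the frame eliminates all the would-be lower-order Ricci-coefficient and curvature error terms that appear for a general double null foliation (here $\zeta=\eta=\underline{\eta}=0$ and $\beta=0$), as already noted in the Remark following Proposition~\ref{Bianchi equation proposition}. Decomposing $\chi=\hat{\chi}+\frac1n\tr\chi\,\slashed{g}$ and absorbing signs into the schematic notation yields
\[
[\nabla_4,\nabla]\phi+\frac1n\tr\chi\,\nabla\phi=\nabla\chi\cdot\phi+\hat{\chi}\cdot\nabla\phi ,
\]
which is the claimed identity for $i=1$, valid for a horizontal tensor $\phi$ of arbitrary rank.

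\textbf{Inductive step.} Assuming the identity for $i$, I would write $\nabla^{i+1}\phi=\nabla\big(\nabla^i\phi\big)$ and use the operator Leibniz rule for commutators $[\nabla_4,\nabla^{i+1}]\phi=[\nabla_4,\nabla]\big(\nabla^i\phi\big)+\nabla\big([\nabla_4,\nabla^i]\phi\big)$. To the first term apply the base case with $\nabla^i\phi$ in place of $\phi$; to the second apply the inductive hypothesis and then differentiate, using $\nabla\big(\nabla^j\chi\cdot\nabla^{i-j}\phi\big)=\nabla^{j+1}\chi\cdot\nabla^{i-j}\phi+\nabla^j\chi\cdot\nabla^{i+1-j}\phi$ and $\nabla\big(\tr\chi\,\nabla^i\phi\big)=\nabla\tr\chi\cdot\nabla^i\phi+\tr\chi\,\nabla^{i+1}\phi$. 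Collecting the two contributions proportional to $\tr\chi\,\nabla^{i+1}\phi$ produces exactly $-\tfrac{i+1}{n}\tr\chi\,\nabla^{i+1}\phi$, which I move to the left-hand side; the term $\nabla\tr\chi\cdot\nabla^i\phi$ is of the allowed form $\nabla^1\chi\cdot\nabla^i\phi$; the two copies of $\hat{\chi}\cdot\nabla^{i+1}\phi$ together with $\nabla\hat{\chi}\cdot\nabla^i\phi$ are of the allowed form; and after the reindexing $j\mapsto j+1$ the remaining terms assemble into $\sum_{j=1}^{i+1}\nabla^j\chi\cdot\nabla^{i+1-j}\phi$. This closes the induction. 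Because $\nabla^j$ denotes \emph{all} arrangements of $j$ angular derivatives, no commutator of angular derivatives with itself is ever needed, so peeling off the outermost $\nabla$ suffices and no intrinsic-curvature terms enter.

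\textbf{Expected main obstacle.} There is no genuine analytic difficulty here — straightness is precisely what removes the usual loss-of-derivatives and curvature complications of double-null commutation. The only care required is bookkeeping: (i) verifying the base-case identity cleanly, in particular that the non-tensorial pieces recombine into $\mathcal{L}_4\Gamma$ plus the single reindexing term $-\chi\cdot\nabla\phi$, with no leftover curvature contribution; and (ii) tracking the combinatorics of the induction so that the coefficient $\tfrac in$ advances correctly to $\tfrac{i+1}{n}$ and the $\nabla\tr\chi$ contribution lands in the schematic sum $\sum_{j=1}^{i}\nabla^j\chi\cdot\nabla^{i-j}\phi$.
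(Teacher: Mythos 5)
Your proof is correct and follows the same overall structure as the paper's (base case $i=1$, then peel off the outermost $\nabla$ and induct using the commutator Leibniz rule), but the base case is derived by a genuinely different route. The paper invokes the standard double-null commutation formula $[\nabla_4,\nabla_A]\phi=[D_4,D_A]\phi-\chi_A^B\nabla_B\phi$ from Luk, observes that the vanishing of $\zeta,\eta,\underline\eta,\omega,\beta$ reduces the spacetime-curvature piece $[D_4,D_A]\phi$ to $\nu\cdot\phi$, and then uses the Codazzi constraint (with $\zeta=0$) to rewrite $\nu=\nabla\chi$ schematically. You instead compute intrinsically in a Lie-propagated frame, exploiting integrability $[e_4,e_A]=0$ and $\nabla_4e_A=\chi_A^Be_B$, and land on a $\partial_v\Gamma=\mathcal{L}_4\Gamma$ contribution. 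Both arrive at the schematic $[\nabla_4,\nabla]\phi=\nabla\chi\cdot\phi-\chi\cdot\nabla\phi$, but the paper's version makes the role of the spacetime curvature component $\nu$ (and the constraint equations) explicit, whereas yours stays entirely within the horizontal geometry of the $v$-foliation. One small caution on your intermediate display: the exact coefficient of $\phi$ in $[\nabla_4,\nabla_A]$ is not $\mathcal{L}_4\Gamma$ but rather $\nu$ (equivalently $\nabla_A\chi^C_B-\nabla^C\chi_{AB}$) — in the full frame computation a $\nabla_B\chi_A^C$ piece cancels against the same piece inside $\mathcal{L}_4\Gamma^C_{BA}$. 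Since you flag the display as schematic and both $\mathcal{L}_4\Gamma$ and $\nu$ are $\nabla\chi$ schematically, this does not affect the lemma, but it is worth being aware of if you ever need the precise tensor.
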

\begin{proof}
    We prove the first statement, since the second one statement is similar. A standard computation, see for example \cite{luk}, together with the additional vanishing of certain Ricci coefficients, implies that:
    \[\big[\nabla_4,\nabla_A\big]\phi=\big[D_4,D_A\big]\phi-\chi_A^B\nabla_B\phi=-\chi_A^B\nabla_B\phi+\nu\cdot\phi=-\chi_A^B\nabla_B\phi+\nabla\chi\cdot\phi.\]
    We can rewrite this as $\big[\nabla_4,\nabla_A\big]\phi+\frac{1}{n}\tr\chi\nabla_A\phi=\nabla\chi\cdot\phi+\hat{\chi}\cdot\nabla\phi,$ and conclude by induction.
\end{proof}
\begin{lemma}\label{commutation with nabla 4 lemma}
    We have the commutation formulas:
    \begin{align*}
        &\big[\nabla_4^l,\nabla\big]\phi=\sum_{i+j+k=l-1}\nabla_4^i\chi^{k+1}\cdot\nabla\nabla_4^{j}\phi+\sum_{i+j+k=l-1}\nabla\nabla_4^i\chi^{k+1}\cdot\nabla_4^{j}\phi,\\
        &\big[\nabla_3^l,\nabla\big]\phi=\sum_{i+j+k=l-1}\nabla_3^i\underline{\chi}^{k+1}\cdot\nabla\nabla_3^{j}\phi+\sum_{i+j+k=l-1}\nabla\nabla_3^i\underline{\chi}^{k+1}\cdot\nabla_3^{j}\phi,
    \end{align*}
    where $\nabla_4^i\chi^{k+1}$ is a schematic notation for all possible ways of distributing $i$ derivatives in the $e_4$ direction over a product of $k+1$ $\chi$ terms, and similarly in the $e_3$ direction.
\end{lemma}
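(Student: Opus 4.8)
The plan is to prove both identities by induction on $l$, with the base case $l=1$ being exactly Lemma~\ref{commutation with nabla lemma} applied with $i=1$: that result gives $[\nabla_4,\nabla]\phi+\frac{1}{n}\tr\chi\,\nabla\phi=\nabla\chi\cdot\phi+\hat{\chi}\cdot\nabla\phi$, which schematically reads $[\nabla_4,\nabla]\phi=\chi\cdot\nabla\phi+\nabla\chi\cdot\phi$ — precisely the asserted formula when $l=1$, since the only triple with $i+j+k=0$ is $(0,0,0)$. The $\nabla_3$ base case is identical, using the second formula of Lemma~\ref{commutation with nabla lemma}, namely $[\nabla_3,\nabla]\phi=\underline{\chi}\cdot\nabla\phi+\nabla\underline{\chi}\cdot\phi$ schematically. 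Note that it is here that the straight self-similar structure enters, since Lemma~\ref{commutation with nabla lemma} already incorporates the vanishing of $\omega,\underline{\omega},\eta,\underline{\eta},\zeta$ and the vanishing curvature components.

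For the inductive step I would write $\nabla_4^{l+1}\nabla\phi=\nabla_4(\nabla_4^l\nabla\phi)$, substitute the inductive hypothesis $\nabla_4^l\nabla\phi=\nabla\nabla_4^l\phi+[\nabla_4^l,\nabla]\phi$, and commute the remaining $\nabla_4$ past the leading $\nabla$ using the base case, thereby arriving at the recursion $[\nabla_4^{l+1},\nabla]\phi=[\nabla_4,\nabla]\nabla_4^l\phi+\nabla_4\,[\nabla_4^l,\nabla]\phi$. The first term on the right is handled directly by the base case and contributes $\chi\cdot\nabla\nabla_4^l\phi+\nabla\chi\cdot\nabla_4^l\phi$, both of the required form with total index $l$. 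For the second term I would expand $[\nabla_4^l,\nabla]\phi$ using the inductive formula and distribute the outer $\nabla_4$ over each schematic product via the Leibniz rule, re-invoking the base case whenever $\nabla_4$ must be commuted past a leftover $\nabla$.

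The only thing to watch is the bookkeeping of the multi-indices, and this is routine: each elementary move raises exactly one of the three counters while preserving the shape of the term. Differentiating a factor $\nabla_4^i\chi^{k+1}$ raises $i\mapsto i+1$; moving $\nabla_4$ onto $\nabla_4^j\phi$ raises $j\mapsto j+1$; and the commutator corrections produced when $\nabla_4$ is pushed past a $\nabla$ — whether that $\nabla$ sits in front of a $\chi$-block or in front of $\nabla_4^j\phi$ — either raise $i$ or introduce one further undifferentiated $\chi$, raising $k\mapsto k+1$. In every case the constraint $i+j+k=l-1$ is upgraded to $i+j+k=l$, and every resulting term lies in one of the two advertised schematic classes, so collecting terms gives the formula for $l+1$; the $\nabla_3$ identity follows verbatim with $\chi$ replaced by $\underline{\chi}$. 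I do not expect any genuine obstacle — no analytic estimate or delicate cancellation is involved, the content being entirely the combinatorics of the indices — the only (mild) care needed being to keep in mind that the schematic notation $\nabla_4^i\chi^{k+1}$ is, by its definition in the statement, already meant to absorb all the ways the $i$ derivatives can be distributed over the $k+1$ copies of $\chi$, which is exactly why each step above stays within the prescribed form.
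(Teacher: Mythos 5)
Your proof is correct and takes essentially the same route as the paper: derive the base case $[\nabla_4,\nabla]\phi=\chi\cdot\nabla\phi+\nabla\chi\cdot\phi$ from Lemma~\ref{commutation with nabla lemma} (absorbing the $\tr\chi$ term into the schematic $\chi$), then induct via $[\nabla_4^{l+1},\nabla]\phi=[\nabla_4,\nabla]\nabla_4^l\phi+\nabla_4[\nabla_4^l,\nabla]\phi$. The paper compresses the inductive step to "we conclude by induction," and your bookkeeping of how each elementary move increments exactly one of $i,j,k$ is the correct way to fill in that gap.
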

\begin{proof}
    As before, we have that $\big[\nabla_4,\nabla\big]\phi=\nabla\chi\cdot\phi+\chi\cdot\nabla\phi$ and we conclude by induction.
\end{proof}

We recall the notation for differences of the Ricci coefficients and their Minkowski values, as in the introduction:
\[\psi^*=\psi-\psi_{\mathrm{Minkowski}}.\]
Similarly, we recall the notation for certain "good" curvature components:
\[\Psi^G\in\{\nu,\tau,R,\underline{\nu},\underline{\alpha}\},\ \underline{\Psi}^G\in\{\alpha,\nu,\tau,R,\underline{\nu}\}.\]

We adapt the notation of \cite[Section~5]{selfsimilarvacuum} for error terms by defining:
\begin{definition}
    For any $m+l\leq p$, we introduce the schematic notation:
    \begin{align*}
        &\mathcal{F}_{mlp}(\Psi)=\sum_{\substack{i+j+k\leq p \\ i\leq l,k\leq m}}\nabla^k\nabla_4^i\big(\psi^{j+1}\Psi\big),\ \mathcal{F}_{mlp}(\Psi^G)=\sum_{\substack{i+j+k\leq p \\ i\leq l,k\leq m}}\nabla^k\nabla_4^i\big(\psi^{j+1}\Psi^G\big)\\
        &\underline{\mathcal{F}}_{mlp}(\Psi)=\sum_{\substack{i+j+k\leq p \\ i\leq l,k\leq m}}\nabla^k\nabla_3^i\big(\psi^{j+1}\Psi\big),\ \underline{\mathcal{F}}_{mlp}(\underline{\Psi}^G)=\sum_{\substack{i+j+k\leq p \\ i\leq l,k\leq m}}\nabla^k\nabla_3^i\big(\psi^{j+1}\underline{\Psi}^G\big),
    \end{align*}
    where the terms $\nabla^k\nabla_4^i\big(\psi^{j+1}\Psi\big)$ denote the sum of all the possible products obtained when distributing the $\nabla^k\nabla_4^i$ derivatives. We also define $\mathcal{F}'(\Psi)$ and $\underline{\mathcal{F}}'(\Psi)$ as above, in the case when at least one of the Ricci coefficients is $\psi^*.$ 
    
    We identify the top order term in $\mathcal{F}_{mlp}(\Psi)$ as being $\psi\nabla^m\nabla_4^l\Psi$. We write:
\[\mathcal{F}_{mlp}(\Psi)=\psi\nabla^m\nabla_4^l\Psi+\mathcal{F}_{mlp}^{lot}(\Psi),\]
with the understanding that when we expand $\mathcal{F}_{mlp}^{lot}(\Psi)$ using the product rule it does not contain any top order terms. Similarly, we also define $\mathcal{F}_{mlp}^{lot}(\Psi^G)$, $\underline{\mathcal{F}}_{mlp}^{lot}(\Psi)$, and $\underline{\mathcal{F}}_{mlp}^{lot}(\underline{\Psi}^G).$
\end{definition}
Using the above commutation formulas, we can prove by induction that:
\begin{lemma}
    The error terms $\mathcal{F}_{mlp}$ and $\underline{\mathcal{F}}_{mlp}$ satisfy:
    \[\nabla^i\nabla_4^j\mathcal{F}_{mlp}(\Psi)=\mathcal{F}_{(m+i)(l+j)(p+i+j)}(\Psi),\ \nabla^i\nabla_3^j\underline{\mathcal{F}}_{mlp}(\Psi)=\underline{\mathcal{F}}_{(m+i)(l+j)(p+i+j)}(\Psi).\]
    Similar results hold for $\mathcal{F}_{mlp}(\Psi^G)$, $\underline{\mathcal{F}}_{mlp}(\underline{\Psi}^G),$ $\mathcal{F}_{mlp}'(\Psi)$ and $\underline{\mathcal{F}}_{mlp}'(\Psi).$
\end{lemma}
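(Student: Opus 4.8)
## Proof Proposal

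\textbf{Overall strategy.} The statement to prove is the commutation identity for the error-term notation: that $\nabla^i \nabla_4^j \mathcal{F}_{mlp}(\Psi) = \mathcal{F}_{(m+i)(l+j)(p+i+j)}(\Psi)$, together with the analogous statements for $\mathcal{F}_{mlp}(\Psi^G)$, $\underline{\mathcal{F}}_{mlp}(\underline{\Psi}^G)$, $\mathcal{F}_{mlp}'(\Psi)$, and $\underline{\mathcal{F}}_{mlp}'(\Psi)$. The plan is a straightforward induction on the total number of applied derivatives $i+j$, reducing everything to the two base cases $(i,j) = (1,0)$ and $(i,j)=(0,1)$, and then invoking the commutation formulas of Lemmas \ref{commutation with nabla lemma} and \ref{commutation with nabla 4 lemma} to handle the fact that the schematic sum $\mathcal{F}_{mlp}$ does not literally commute with $\nabla$ or $\nabla_4$.

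\textbf{Key steps.} First I would establish the base case $\nabla \mathcal{F}_{mlp}(\Psi) = \mathcal{F}_{(m+1)l(p+1)}(\Psi)$. The terms in $\mathcal{F}_{mlp}(\Psi)$ have the form $\nabla^k \nabla_4^i(\psi^{j+1}\Psi)$ with $i\le l$, $k\le m$, $i+j+k\le p$. Applying $\nabla$: if $\nabla$ can be placed directly outside, we obtain $\nabla^{k+1}\nabla_4^i(\psi^{j+1}\Psi)$ with $k+1\le m+1$ and total order $\le p+1$, which is of the correct form. If instead one needs to commute $\nabla$ past the $\nabla_4^i$ block — but here one applies $\nabla$ on the \emph{outside} of an already-formed term, so commutation is not needed for this direction; the only subtlety is that distributing $\nabla$ by the product rule over $\psi^{j+1}\Psi$ and its derivatives simply redistributes angular derivatives, keeping the schematic structure. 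Second, the base case $\nabla_4 \mathcal{F}_{mlp}(\Psi) = \mathcal{F}_{m(l+1)(p+1)}(\Psi)$: here applying $\nabla_4$ to a term $\nabla^k \nabla_4^i(\psi^{j+1}\Psi)$ requires commuting $\nabla_4$ past the $k$ angular derivatives using Lemma \ref{commutation with nabla lemma}, which produces $\frac{k}{n}\tr\chi \nabla^k\nabla_4^i(\ldots)$ plus terms $\nabla^r\chi \cdot \nabla^{k-r}\nabla_4^i(\ldots)$ and $\hat\chi\cdot\nabla^k\nabla_4^i(\ldots)$. All of these are absorbed into $\mathcal{F}_{m(l+1)(p+1)}(\Psi)$: the extra $\chi$ factors simply increase the power of $\psi$ (note $\chi$ is among the Ricci coefficients $\psi$), the total derivative count stays $\le p+1$, and the $\nabla_4$-derivative count becomes $\le l+1$ while angular count stays $\le m$. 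The term where $\nabla_4$ lands directly inside gives $\nabla^k\nabla_4^{i+1}(\psi^{j+1}\Psi)$ with $i+1\le l+1$, again of the right form.

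\textbf{Induction and bookkeeping.} With the two base cases in hand, the general statement follows by iterating: $\nabla^i\nabla_4^j = \nabla\cdots\nabla\,\nabla_4\cdots\nabla_4$ applied one derivative at a time, each step advancing the indices by the base-case rule, so after $i$ angular and $j$ $\nabla_4$ derivatives the indices are exactly $(m+i)(l+j)(p+i+j)$. The cases $\mathcal{F}_{mlp}(\Psi^G)$ and $\underline{\mathcal{F}}_{mlp}(\underline{\Psi}^G)$ are identical once one checks that the Bianchi and constraint structure keeps $\Psi^G$ (resp. $\underline{\Psi}^G$) within the same class under differentiation — but since $\mathcal{F}$ is purely a schematic bookkeeping device, no such check is needed at this level; the underlined versions just use Lemma \ref{commutation with nabla lemma}'s $\nabla_3$ line and the analogous $\big[\nabla_3^l,\nabla\big]$ formula. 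The primed versions $\mathcal{F}'$, $\underline{\mathcal{F}}'$ need the additional observation that the property "at least one Ricci coefficient is $\psi^*$" is preserved: the new $\chi$ (equivalently $\chi^* + \chi_{\mathrm{Mink}}$) factors produced by commutation either keep a $\psi^*$ present or, in the worst case, one checks that the purely-Minkowski interaction terms vanish or are harmless — in the schematic convention one simply notes that the original $\psi^*$ factor is never destroyed by commutation, so it survives into every resulting term.

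\textbf{Main obstacle.} The only genuine point requiring care is the second base case, where commuting $\nabla_4$ through angular derivatives via Lemma \ref{commutation with nabla lemma} generates the lower-order terms $\sum_{r=1}^k \nabla^r\chi\cdot\nabla^{k-r}\nabla_4^i(\psi^{j+1}\Psi)$: one must verify that \emph{every} such term, after redistributing derivatives by the product rule, still satisfies all three index constraints ($\le m$ angular, $\le l+1$ in $\nabla_4$, $\le p+1$ total) simultaneously — the total-order constraint is the binding one, and it holds because the commutator trades one $\nabla_4$ for derivatives of a $\chi$ factor without increasing the total count. This is routine but is the step where an off-by-one error would hide, so I would write it out explicitly for $(i,j)=(0,1)$ and then assert the induction.
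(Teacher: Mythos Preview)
Your proposal is correct and follows essentially the same route as the paper's proof: induction reducing to the single-derivative cases, with the only nontrivial step being $\nabla_4$ applied to $\nabla^k\nabla_4^i(\psi^{j+1}\Psi)$, handled via the commutation formula of Lemma~\ref{commutation with nabla lemma} and absorbing the resulting $\nabla^r\chi$ factors by raising the $\psi$-power index. Your observation that the $\psi^*$ factor in the primed versions is never destroyed by commutation is exactly the right point for $\mathcal{F}'$ and $\underline{\mathcal{F}}'$.
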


We can restate the commutation lemmas as follows:
\begin{lemma}\label{commutation lemma with F notation} We have the commutation formulas:
    \begin{align*}
        &\big[\nabla^m,\nabla_4\big]\phi=\mathcal{F}_{(m)(0)(m)}(\phi),\ \big[\nabla^m,\nabla_4^2\big]\phi=\mathcal{F}_{(m)(1)(m+1)}(\phi)\\
        &\big[\nabla_4^l,\nabla\big]\phi=\mathcal{F}_{(1)(l-1)(l)}(\phi),\ \big[\nabla_4^l,\Delta\big]\phi=\mathcal{F}_{(2)(l-1)(l+1)}(\phi).
    \end{align*}
    The same result holds if we replace $e_4$ by $e_3$ and $\mathcal{F}$ by $\underline{\mathcal{F}}.$
\end{lemma}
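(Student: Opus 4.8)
The plan is to deduce all four identities directly from the commutation formulas already established in Lemmas~\ref{commutation with nabla lemma} and \ref{commutation with nabla 4 lemma}, simply by observing that every term produced on their right-hand sides has precisely the schematic shape encoded by the $\mathcal{F}$-notation, and by invoking the Leibniz-type rule $\nabla^i\nabla_4^j\mathcal{F}_{mlp}(\phi)=\mathcal{F}_{(m+i)(l+j)(p+i+j)}(\phi)$ (and its $e_3$ analogue) for the second-order statements. For $[\nabla^m,\nabla_4]\phi$, rearranging Lemma~\ref{commutation with nabla lemma} expresses the commutator as a sum of terms of the form $\tr\chi\,\nabla^m\phi$, $\nabla^j\chi\cdot\nabla^{m-j}\phi$, and $\hat\chi\cdot\nabla^m\phi$; each is a product of one Ricci coefficient with $\phi$ carrying at most $m$ angular and no $\nabla_4$ derivatives, hence one of the summands obtained by expanding $\nabla^k(\psi\,\phi)$, $k\le m$, via the product rule, so it lies in $\mathcal{F}_{(m)(0)(m)}(\phi)$. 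For $[\nabla^m,\nabla_4^2]\phi$ I would write $[\nabla^m,\nabla_4^2]=[\nabla^m,\nabla_4]\nabla_4+\nabla_4[\nabla^m,\nabla_4]$: the first piece lies in $\mathcal{F}_{(m)(0)(m)}(\nabla_4\phi)$, whose terms are among the summands of $\nabla^k\nabla_4(\psi\,\phi)$ and therefore inside $\mathcal{F}_{(m)(1)(m+1)}(\phi)$, while the second piece equals $\nabla_4\mathcal{F}_{(m)(0)(m)}(\phi)=\mathcal{F}_{(m)(1)(m+1)}(\phi)$ by the Leibniz rule.

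For $[\nabla_4^l,\nabla]\phi$, Lemma~\ref{commutation with nabla 4 lemma} writes this as a sum of terms $\nabla_4^i\chi^{k+1}\cdot\nabla\nabla_4^j\phi$ and $\nabla\nabla_4^i\chi^{k+1}\cdot\nabla_4^j\phi$ with $i+j+k=l-1$; each carries exactly one angular derivative, at most $l-1$ derivatives in the $e_4$ direction (since $i+j\le l-1$), $k+1$ Ricci coefficients, and total order $l$, so it is a summand of $\nabla^c\nabla_4^a(\psi^{b+1}\phi)$ with $c=1$, $a=i+j\le l-1$, $b=k$, $a+b+c=l$, i.e. it lies in $\mathcal{F}_{(1)(l-1)(l)}(\phi)$. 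For $[\nabla_4^l,\Delta]\phi$ I would use $\Delta=\slashed{g}^{AB}\nabla_A\nabla_B$ and $[\nabla_4^l,\Delta]=[\nabla_4^l,\nabla]\nabla+\nabla[\nabla_4^l,\nabla]+(\text{terms where }\nabla_4\text{ hits }\slashed{g}^{-1})$; since $\mathcal{L}_4\slashed{g}_{AB}=2\chi_{AB}$, the last contribution is a product of Ricci coefficients with $\nabla^2\phi$ and stays in the same class, the first term lies in $\mathcal{F}_{(1)(l-1)(l)}(\nabla\phi)\subset\mathcal{F}_{(2)(l-1)(l+1)}(\phi)$, and the second equals $\nabla\mathcal{F}_{(1)(l-1)(l)}(\phi)=\mathcal{F}_{(2)(l-1)(l+1)}(\phi)$ by the Leibniz rule. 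The identities with $\nabla_3$ and $\underline{\mathcal{F}}$ follow verbatim from the second halves of Lemmas~\ref{commutation with nabla lemma} and \ref{commutation with nabla 4 lemma}.

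The only point requiring genuine care — and the one I expect to be the main (though still routine) obstacle — is the index bookkeeping: checking that each angular, respectively $e_4$, derivative appearing in the commutator formulas advances the correct one of the three slots $(m,l,p)$ in the schematic notation, and that the constraints $i\le l$, $k\le m$, $i+j+k\le p$ defining $\mathcal{F}_{mlp}$ are respected in every case (together with the observation that in the $\Delta$ case one extra angular derivative lands on $\phi$, forcing the shift from first to second index $m=2$). There is no analytic difficulty; the whole lemma is a translation of the earlier commutation identities into the compact error-term language.
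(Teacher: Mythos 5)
Your proof is correct and takes the same route as the paper, which simply derives all four identities from Lemmas~\ref{commutation with nabla lemma} and \ref{commutation with nabla 4 lemma} together with the Leibniz rule for $\mathcal{F}_{mlp}$ (the paper's own proof is a two-line remark to this effect). One small correction worth flagging: in the $\Delta$ case you allow for extra terms ``where $\nabla_4$ hits $\slashed{g}^{-1}$'' and invoke $\mathcal{L}_4\slashed{g}_{AB}=2\chi_{AB}$, but the relevant operator here is the projected covariant derivative, for which $\nabla_4\slashed{g}=0$, so no such terms arise — the nonzero object $2\chi$ is the Lie derivative $\mathcal{L}_4\slashed{g}$, a different operator; your argument still closes because the would-be extra terms land in $\mathcal{F}_{(2)(l-1)(l+1)}(\phi)$ anyway, so this is a harmless but worth-noticing conflation.
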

\begin{proof}
    The first three formulas follows directly from Lemmas \ref{commutation with nabla lemma} and \ref{commutation with nabla 4 lemma} using the notation introduced above. Applying the third formula, we also get the last formula.
\end{proof}

\section{Existence and Uniqueness of Scattering States}\label{stability dS section}

The main result of this section is the proof of the first statement of Theorem \ref{main theorem of the paper ambient}, establishing global existence and quantitative estimates for the solution in the region $\{u < 0,\ v > 0\}$, given small scattering data at $\{v = 0\}$. In the original $(n+1)$-dimensional formulation, this represents the proof of the first statement of Theorem \ref{main theorem of the paper}, by showing the stability of de Sitter space with scattering data at $\mathcal{I}^-$. We prove the following result:

\begin{theorem}\label{stability of de sitter theorem in section}
    For any $N>0$ large enough there exist $\epsilon_0>0$ small enough, and a universal constant $c_0>0$, such that for any $\epsilon\leq\epsilon_0$, if the straight initial data $\big(\slashed{g}_0,\check{h}\big)$ satisfies the smallness assumption:
    \begin{equation}\label{smallness assumption g}
    \sum_{i=0}^N\big\|\mathcal{L}_{\theta}^i\slashed{g}_0^*\big\|_{L^2(S^n)}+\big\|h\big\|_{H^N(S^n)}<\epsilon,
\end{equation}
then the corresponding straight self-similar vacuum spacetime $\big(\mathcal{M},g\big)$ with data at $\{u=-1,\ v=0\}$ given by $\big(\slashed{g}_0,\check{h}\big)$ exists globally on $\{u<0,v>0\},$ extends to $\{v=0\}$ as a regular solution, and satisfies quantitative estimates with regularity $N'=N-c_0n$, as made precise in Propositions \ref{region I bounds proposition}, \ref{region II bounds proposition}, and \ref{region III bounds proposition} (with $N_1,N_2,N_3=N'$).

Moreover, there exists a small constant $\underline{v}>0,$ with $\epsilon\ll\underline{v}\ll1,$ such that the same future global stability result holds in the case of small Cauchy initial data on a spacelike hypersurface $\{v=-cu\}$ with $\underline{v}\leq c\leq\underline{v}^{-1},$ as made precise in Remark \ref{remark about cauchy data}.
\end{theorem}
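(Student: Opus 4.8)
The proof proceeds by a bootstrap argument in the three nested regions $I = \{0 \leq v/|u| \leq \underline{v}\}$, $II = \{\underline{v} \leq v/|u| \leq \underline{v}^{-1}\}$, and $III = \{\underline{v} \leq v/|u|\}$, constructing the solution one region at a time and propagating self-similar bounds of the schematic form \eqref{self-similar bounds}. The starting point is the local well-posedness theory of \cite{selfsimilarvacuum} (Theorem \ref{RSR theorem}), which gives a solution in a neighborhood $\{0 < v/|u| < \underline{v}\}$ of $\{v=0\}$ and the Fefferman–Graham expansion up to the term containing $\check{k}$; the novelty here is to combine that local theory with global-in-$v/|u|$ estimates in the straight ambient picture. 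In each region we set up bootstrap assumptions on the $H^{N'}$-type norms of $\psi^*$ and the curvature components $\Psi$, derive transport estimates for the Ricci coefficients from the null structure equations of Proposition \ref{null structure equations proposition} (which in the straight case have \emph{no angular-derivative loss}, since all angular-derivative terms vanish — this is the key structural simplification), and energy estimates for $\Psi$ from the Bianchi equations of Proposition \ref{Bianchi equation proposition}, commuted with up to $N'$ angular derivatives and with $\nabla_3,\nabla_4$ derivatives subject to the constraint $j,k \leq \frac{n-4}{2}$ imposed by the obstruction term.

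\textbf{Region I.} Here one follows \cite{selfsimilarvacuum} closely. The essential modification is that $\alpha$ is mildly singular at $\{v=0\}$, with expansion \eqref{expansion for alpha introduction}, so when running the energy estimates for the Bianchi pair involving $\alpha$ one first subtracts off the explicit singular term $\log(v/|u|)\mathcal{O} + h$ (computable from the data) and estimates the remainder, which is $O(v/|u| \cdot |\log(v/|u|)|)$. All the remaining double null quantities $\nabla^i \nabla_4^j \Psi^G$ and $\nabla^i\nabla_4^j\psi$ extend regularly to $\{v=0\}$ for $j \leq \frac{n-4}{2}$. The smallness $\|\slashed{g}_0^*\|_{\mathring H^N} + \|\mathcal{O}\|_{H^N} + \|h\|_{H^N} < \epsilon$ (equivalently \eqref{smallness assumption g}) feeds the initial data energy, and the bootstrap closes using the smallness of $\underline{v}$, giving \eqref{self-similar bounds} throughout region I with regularity $N' = N - c_0 n$, the loss of $O(n)$ derivatives coming from the $\frac{n}{2}$ commutations with $\nabla_4$ and the compatibility relations expressing $\slashed g_1,\dots,\slashed g_{n/2-1},\tr\check k$ in terms of $\slashed g_0$.

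\textbf{Regions II and III.} In region II the solution is away from both $\{v=0\}$ and $\{u=0\}$, so all quantities are regular and the argument of \cite{nakedsing} applies verbatim, propagating the self-similar bounds (with $|u|$ replaced by $v$) from the hypersurface $\{v = \underline{v}|u|\}$ to $\{v = \underline{v}^{-1}|u|\}$. Region III is the heart of the matter: one expects $\underline{\alpha}$ to blow up at $\{u=0\}$ (as the local theory applied near $\{u=0\}$ predicts), and unlike in region I one cannot subtract off the singular part because the induced obstruction tensor $\underline{\mathcal{O}}$ is not known a priori. The remedy is to propagate weighted energy estimates for $\underline{\alpha}$ consistent with a mild blowup. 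Following \cite{nakedsing}, for $0 < q \ll p \ll 1$ one conjugates the Bianchi pair $(\underline{\nu},\underline{\alpha})$ by the weight $w = v^{3/2-p}|u|^{p-q}$, producing favorable lower-order (bulk) terms in the energy identity after contracting with $w\underline{\nu}, w\underline{\alpha}$, integrating by parts, and multiplying by $|u|^{2q}$; the bulk term for $\underline{\nu}$ is especially strong because $|u|/v$ is small in region III, and this controls the bulk term in the estimate for $(R,\underline{\nu})$, and so on up the Bianchi hierarchy. Commuting with $\nabla^i$, $i < N'$, and with $\nabla_3^j$, $j \leq \frac{n-4}{2}$, keeps the lower order terms favorable and renders the error terms $\mathcal{E}_s^{(3)}, \mathcal{E}_s^{(4)}$ essentially linear. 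The resulting best bound for $\underline{\alpha}$ is $\|\nabla^i\nabla_3^j\underline{\alpha}\|_{L^2(S^n)} \lesssim \epsilon^{1-2\delta} |u|^{-p} v^{-2-i-j+p}$, consistent with singular behavior at $\{u=0\}$; the $\underline{\Psi}^G$ quantities and the Ricci coefficients satisfy ordinary self-similar bounds via their better bulk terms and simple transport structure. Again the bootstrap closes using the smallness of $\underline{v}$.

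\textbf{The Cauchy-data version and the main obstacle.} For the final assertion — replacing scattering data at $\{v=0\}$ by small Cauchy data on a spacelike hypersurface $\{v = -cu\}$, $\underline{v} \leq c \leq \underline{v}^{-1}$ — one observes that this hypersurface lies in region II, so one simply takes the smallness of the induced data there (made precise in Remark \ref{remark about cauchy data}) as the starting point of the bootstrap, and runs the \emph{same} region-II and region-III analysis to extend the solution toward $\{u=0\}$, obtaining the identical self-similar bounds in region III; the solution thus exists on $\{u < 0, v > 0\}$ to the future of $\{v = -cu\}$ and extends to $\{u=0\}$ as a regular solution in the sense of Definition \ref{weak solution definition}. (There is no claim about $\{v=0\}$ in this case, since one has no data there.) The main difficulty throughout is the region III analysis for $\underline{\alpha}$: finding the correct weight $w$ so that the top-order bulk terms in \emph{all} the commuted Bianchi energy identities have the right sign simultaneously, while the weight still permits the singular behavior at $\{u=0\}$ dictated by the obstruction tensor, and while one cannot pass to the reduced Bianchi system (which removes $\underline\alpha$) as in \cite{nakedsing} because that reduction fails in higher dimensions. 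The balance of the parameters $p, q, \delta$ and the verification that the $\nabla_3$-commuted equations (up to order $\frac{n-4}{2}$) retain favorable lower order terms is the delicate point; once this is in place, the propagation of regularity (yielding smoothness of the global solution from smoothness of the data) is a standard higher-order bootstrap, carried out in Theorem \ref{propagation of regularity theorem}.
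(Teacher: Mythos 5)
Your proposal is correct and takes essentially the same approach as the paper: the three-region decomposition (local theory of \cite{selfsimilarvacuum} in region~I with subtraction of the $\log(v/|u|)\mathcal{O}$ singularity for $\alpha$, the argument of \cite{nakedsing} in region~II, and a new weighted energy hierarchy in region~III with the conjugation $w_{ml}=v^{3/2+m+l-p}|u|^{p-q}$ and the weight parameters $q\ll p\ll 1$, closing via smallness of $\underline{v}$), together with the observation that the Cauchy-data case starts from region~II. The one small imprecision is in the Cauchy-data case: the paper's Remark~\ref{remark about cauchy data} notes that one also repeats the region-III-type argument in the backward time direction to cover region~I (obtaining region-III-style bounds there, since the scattering data at $\{v=0\}$ are not known a~priori), so the statement "there is no claim about $\{v=0\}$" is slightly misleading — the solution is still extended as a regular solution to $\{v=0\}$, just with weaker, region-III-style, estimates rather than the region-I bounds of Proposition~\ref{region I bounds proposition}.
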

\begin{remark}
    In (\ref{smallness assumption g}) we denote by $\mathcal{L}_{\theta}^i$ all the possible combinations of $i$ Lie angular derivatives in a coordinate patch, and we sum over a family of coordinate patches that covers all of $S^n.$ 
\end{remark}
\begin{remark}
    We notice it is equivalent to specify initial data $\big(\slashed{g}_0,h\big)$, where $h$ is the term in the expansion (\ref{expansion for alpha introduction}) of $\alpha,$ since $h$ is obtained from $\check{h}$ by subtracting a linear factor of $\mathcal{O},$ which can be computed using $\frac{n}{2}$ derivatives of $\slashed{g}_0.$ Unless using a checked quantity, we shall always refer to this notion of initial data.
\end{remark}
\begin{remark}      
    We recall according to the introduction that the assumption on the asymptotic data of $\epsilon$-smallness of order $M$ implies (\ref{smallness on initial data introduction}). This also implies (\ref{smallness assumption g}) with $N=M$ and replacing $\epsilon$ by $C\epsilon$ (the proof follows from (\ref{Lie derivative in terms of covariant derivative})).
\end{remark}

In addition, we also prove a propagation of regularity result:
\begin{theorem}\label{propagation of regularity theorem}
    Consider a global straight self-similar vacuum spacetime $\big(\mathcal{M},g\big)$ which satisfies the hypothesis of Theorem \ref{stability of de sitter theorem in section}. If the initial data $\big(\slashed{g}_0,h\big)$ is smooth, the spacetime $\big(\mathcal{M},g\big)$ is also smooth.
\end{theorem}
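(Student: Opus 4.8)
The plan is to propagate regularity order by order, starting from the global solution already constructed in Theorem~\ref{stability of de sitter theorem in section} at the base regularity $N'=N-c_0n$. By hypothesis the spacetime $\big(\mathcal{M},g\big)$ satisfies the quantitative bounds of Propositions~\ref{region I bounds proposition}, \ref{region II bounds proposition}, and~\ref{region III bounds proposition}, so a fixed number of derivatives of all double null unknowns are small in each of the regions $I, II, III$. First I would set up an induction on $j\geq0$: assuming all double null unknowns have finite $H^{N'+j}$-type norms in each region, I would show they have finite $H^{N'+j+1}$-type norms there. The key point is that these higher-order norms need not be small — smallness is used only to absorb the nonlinear error terms and is already supplied by the base-regularity bounds — so no strengthening of the smallness hypothesis \eqref{smallness assumption g} is needed.

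For the inductive step I would commute the metric equations, the null structure equations of Proposition~\ref{null structure equations proposition}, the Bianchi equations of Proposition~\ref{Bianchi equation proposition}, and the constraint equations with one additional angular derivative $\nabla$, and with the transversal derivatives $\nabla_3$ and $\nabla_4$ up to the maximal order $\frac{n-4}{2}$ allowed near $\{v=0\}$ and $\{u=0\}$, subtracting first the explicit singular contributions $\log(v/|u|)\mathcal{O}$ from $\nabla_4^{(n-4)/2}\alpha$ and $\log(|u|/v)\underline{\mathcal{O}}$ from $\nabla_3^{(n-4)/2}\underline\alpha$, exactly as in the proof of Theorem~\ref{stability of de sitter theorem in section}. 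By the commutation formulas and the schematic error calculus $\mathcal{F}_{mlp}$, $\underline{\mathcal{F}}_{mlp}$ of Section~\ref{set up section}, the commuted system keeps the schematic form \eqref{general form for EVE in double null}, with right-hand sides that at top order are linear in the top-order unknowns, with coefficients built from lower-order Ricci coefficients and curvature components controlled by the induction hypothesis. I would exploit here the structural fact from Section~\ref{set up section} that straightness of $g$ removes all angular derivatives from the null structure equations, so that the Ricci coefficients are propagated by pure transport equations with no derivative loss.

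Next I would rerun, region by region, the same scheme as in the proof of Theorem~\ref{stability of de sitter theorem in section}: the argument of \cite{selfsimilarvacuum} in region~$I$, the argument of \cite{nakedsing} in region~$II$, and the weighted energy estimates of Section~\ref{region III section} in region~$III$, now applied to the top-order commuted quantities with the lower-order factors treated as known coefficients. Because the top-order system is linear, these estimates close using only the already-small lower-order bounds to absorb the error terms $\mathcal{F}_{mlp}^{lot}$ and $\underline{\mathcal{F}}_{mlp}^{lot}$, and yield finite $H^{N'+j+1}$ control in each region. Iterating over $j$ then gives $\slashed{g}\in C^\infty\big(\{u<0,\ v>0\}\times S^n\big)$, hence $g\in C^\infty\big(\{u<0,\ v>0\}\times S^n\big)$; the Fefferman-Graham expansions at $\{v=0\}$ and $\{u=0\}$ persist with smooth coefficients, since $\mathcal{O}$, $h$, and $\slashed{g}_1,\dots,\slashed{g}_{n/2-1}$ are smooth functions of the smooth data $\slashed{g}_0$, so $g$ extends as a regular solution in the sense of Definition~\ref{weak solution definition}.

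The hard part will be the endpoint behavior near $\{v=0\}$ in region~$I$ and, symmetrically, near $\{u=0\}$ in region~$III$: one must check that after subtracting the explicit $\log$-singular terms from the top transversal derivatives of $\alpha$ and $\underline\alpha$, the renormalized quantities — together with all their further angular and (respectively) $\nabla_3$, $\nabla_4$ derivatives up to the order permitted by the obstruction — obey energy and transport estimates whose constants remain finite as $v\to0$, resp. $u\to0$. This is controlled by the precise weighted structure of the estimates of Section~\ref{region III section} and by the renormalization already employed at base regularity, so no genuinely new analytical ingredient is required; the content of the proof is simply the propagation of that structure to all orders.
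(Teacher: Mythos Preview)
Your overall strategy --- induction on the regularity index, linearity of the top-order system, and re-running the region-by-region estimates --- matches the paper's proof. There is, however, a genuine gap in your treatment of region~$III$. You claim that ``these estimates close using only the already-small lower-order bounds to absorb the error terms,'' but this is only part of the story. In the proof of Proposition~\ref{bootstrap argument proposition} most error terms in region~$III$ are bounded schematically by (a power of) $\underline{v}$ times the bootstrap quantity, and it is the smallness of $\underline{v}$, not of any norm, that closes the bootstrap there. When you pass to regularity level $K$, the implicit constant in front of this $\underline{v}$ factor grows with $K$ --- it depends on $C(K-1)$ through the Sobolev product constants and the number of terms --- so a fixed $\underline{v}$ eventually fails to absorb. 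The paper resolves this by letting the region parameter depend on $K$: it introduces $\underline{v}_K\ll1$ chosen afresh at each inductive step, redefines the regions as $I_K,II_K,III_K$ with this parameter, reruns the local theory of \cite{selfsimilarvacuum} in $I_K$ and the Gronwall/conjugation argument of \cite{nakedsing} in $II_K$ to reach the hypersurface $\{v=-u/\underline{v}_K\}$, and only then runs a bootstrap in $III_K$ with the new, smaller $\underline{v}_K$. This $K$-dependent shrinking of the regions is the main technical content of the propagation-of-regularity proof, and your outline omits it.

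A smaller inaccuracy: you describe subtracting $\log(|u|/v)\,\underline{\mathcal{O}}$ from $\nabla_3^{(n-4)/2}\underline{\alpha}$ near $\{u=0\}$ ``exactly as in the proof of Theorem~\ref{stability of de sitter theorem in section}.'' But in region~$III$ the obstruction tensor $\underline{\mathcal{O}}$ is not known a priori --- it is only recovered afterwards, in Section~\ref{asymptotic completeness section} --- so the proof there instead propagates weighted estimates $\|\nabla^m\nabla_3^j\underline{\alpha}\|_{L^2}\lesssim\epsilon'|u|^{-p}v^{-2-m-j+p}$ consistent with a mild blow-up of $\underline{\alpha}$, without isolating its singular part. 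Your renormalization of $\alpha$ near $\{v=0\}$ is correct; the treatment near $\{u=0\}$ must be asymmetric.
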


\begin{remark}
    The proof of the first statement of Theorem \ref{main theorem of the paper ambient} follows from Theorem \ref{stability of de sitter theorem in section}, Theorem \ref{propagation of regularity theorem}, and the above remarks. The proof of Theorem \ref{stability of de sitter theorem in section} follows from Propositions \ref{region I bounds proposition}, \ref{region II bounds proposition}, and \ref{region III bounds proposition}. Finally, we prove Theorem~\ref{propagation of regularity theorem} at the end of this section.
\end{remark}

As explained in Section \ref{stability intro section} of the introduction, the proof follows the steps of \cite{nakedsing}. For small $\underline{v}>0$ with $\epsilon\ll\underline{v}\ll1,$ we consider the following regions of the spacetime:
\[I=\bigg\{0\leq\frac{v}{|u|}\leq\underline{v}\bigg\},\ II=\bigg\{\underline{v}\leq\frac{v}{|u|}\leq\underline{v}^{-1}\bigg\},\ III=\bigg\{\underline{v}^{-1}\leq\frac{v}{|u|}\bigg\}.\]
We also set $\delta>0$ to be a small constant, and consider $\epsilon>0$ small enough such that $C(\underline{v})\leq\epsilon^{-\delta}$, where $C(\underline{v})$ is a constant determined in the proof.

\textbf{Notation.} We make the convention that in Section~\ref{stability dS section} and Section~\ref{asymptotic completeness section} we write $A\lesssim B$ for any quantities $A,B>0,$ if there is a constant $C>0$ depending only on $N$ such that $A\leq C B.$

\textbf{Absolute value convention.} We make the convention for the rest of the paper that for any horizontal tensor $\phi$ defined on $S_{u,v}$, its absolute value $|\phi|$ is defined with respect to the metric $\slashed{g}(u,v)$ induced on $S_{u,v}$.

\textbf{Integration convention.} We make the convention that in Section~\ref{stability dS section} and Section~\ref{asymptotic completeness section} the volume forms used are $d\mathring{Vol},$ $d\mathring{Vol}\ du$, $d\mathring{Vol}\ dv$, and $d\mathring{Vol}\ dudv,$ as needed in each context, where $d\mathring{Vol}$ represents the volume form with respect to the round metric $\big(\slashed{g}_0\big)_{\mathrm{Minkowski}}=\frac{1}{4}\slashed{g}_{S^n}.$

\textbf{Error terms convention.} We make the convention for the rest of the paper that if an index denoting the order of some derivative is negative, then that term is empty. For example $\mathcal{F}_{(m)(-1)(p)}:=0$, and so on.

\subsection{Regions I and II}
We use the argument of \cite{selfsimilarvacuum} to obtain existence in the first region. Moreover, we notice that with a few modifications one can repeat the argument in the small data case in order to prove:
\begin{proposition}\label{region I bounds proposition}
If the straight initial data $\big(\slashed{g}_0,h\big)$ satisfies the smallness assumption (\ref{smallness assumption g}), then there exists a regular straight self-similar vacuum solution in region~I with the given initial data. Setting $N_1=N-\lfloor c_0/4\rfloor n,$ we have the bounds in region~I for all $0\leq i\leq N_1,$ $0\leq j\leq\frac{n-4}{2},\ 0\leq k\leq\frac{n}{2}$:
\begin{align*}
    \Big\|\nabla^i\nabla_4^j\nabla_3^k\Big(\alpha-v^{\frac{n-4}{2}}u^{\frac{4-n}{2}}\log(-v/u)\mathcal{O}/\big((n-4)/2\big)!\Big)\Big\|_{L^{\infty}(S_{u,v})}&\lesssim\epsilon|u|^{-2-i-j-k}\\
    \big\|\nabla^i\nabla_4^j\nabla_3^k\Psi^G\big\|_{L^{\infty}(S_{u,v})}&\lesssim\epsilon|u|^{-2-i-j-k}\\
    \big\|\nabla^i\nabla_4^j\nabla_3^k\psi^*\big\|_{L^{\infty}(S_{u,v})}&\lesssim\epsilon|u|^{-1-i-j-k}\\
    \big\|\nabla^i\nabla_4^{\frac{n-2}{2}}\nabla_3^k\psi^*\big\|_{L^{\infty}(S_{u,v})}&\lesssim\epsilon\cdot\log\bigg(\frac{|u|}{v}\bigg)\cdot|u|^{-1-i-{\frac{n-2}{2}}-k}\\
    \big\|\mathcal{L}_{\theta}^i\slashed{g}^*\big\|_{L^{\infty}(S_{u,v})}&\lesssim\epsilon|u|^{-i}.
\end{align*}
\end{proposition}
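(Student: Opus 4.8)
The plan is to reprove the local existence result of \cite{selfsimilarvacuum} in the small-data regime, tracking the size $\epsilon$ of the data at every step, and then to promote the estimates to the self-similar form of the statement. Existence and uniqueness of a regular straight self-similar vacuum solution on $\{0\le v/|u|\le\underline{v}\}$ (for the fixed $\underline{v}>0$ chosen with $\epsilon\ll\underline{v}\ll1$) is furnished directly by \cite{selfsimilarvacuum}; the new content is the quantitative bound, which I would obtain by a bootstrap argument. By self-similarity it suffices to work on the outgoing cone $\{u=-1\}$ with $v\in[0,\underline{v}]$, a characteristic problem whose data lives only on $\{v=0\}$, with the missing characteristic information supplied by the Fefferman-Graham expansion; all bounds off $\{u=-1\}$ then follow by applying the scaling $\mathcal{L}_Sg=2g$ (equivalently $\nabla_S\Psi=-2\Psi$), which is precisely what makes the weights $|u|^{-a-i-j-k}$ natural: each $\nabla$, $\nabla_3$, or $\nabla_4$ raises the homogeneity by one.

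The bootstrap I would run assumes the stated bounds in region~I with $\epsilon$ replaced by $A\epsilon$ for a large constant $A=A(N)$. The preliminary and most delicate step is to renormalise $\alpha$: writing $\mathcal{O}$ for the obstruction tensor of $\slashed{g}_0$ (computable from at most $n/2$ angular derivatives of $\slashed{g}_0$, hence of size $\lesssim\epsilon$ in $H^{N-O(n)}(S^n)$), set
\[\alpha^\flat:=\alpha-\frac{v^{\frac{n-4}{2}}u^{\frac{4-n}{2}}\log(-v/u)}{\big((n-4)/2\big)!}\,\mathcal{O},\]
and check — using the Ricci identity $\nabla_4e_A=\tfrac1v e_A-\tfrac uv\underline{\chi}_A^Be_B$ together with the Bianchi equation for $\alpha$ from Proposition \ref{Bianchi equation proposition} — that $\alpha^\flat$ solves the same coupled transport system as $\alpha$ up to source terms that are one power of $v/|u|$ (modulo logarithms) better than the subtracted profile, so that $\alpha^\flat$ and its $\nabla^i\nabla_4^j\nabla_3^k$-derivatives obey the regular self-similar weights for $j\le\tfrac{n-4}{2}$. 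With this in hand I would then propagate off $\{v=0\}$: (i) the nontrivial Ricci coefficients $\psi\in\{\tr\chi,\hat\chi,\tr\underline\chi,\underline{\hat\chi}\}$ via the transport equations of Proposition \ref{null structure equations proposition}, integrating the $\nabla_4$-equations in $v$ — since the straight structure removes all angular-derivative terms from these equations there is no derivative loss, and Gronwall yields the weights $|u|^{-1-i-j-k}$, with the extra $\log(|u|/v)$ factor at $j=\tfrac{n-2}{2}$ appearing because at that order $\psi$ sees the obstruction term in the expansion of $\slashed{g}$; (ii) the curvature components $\Psi\in\{\alpha^\flat,\nu,\tau,R,\underline\nu,\underline\alpha\}$ via the Bianchi system of Proposition \ref{Bianchi equation proposition}, performing $L^2(S^n)$ energy estimates for the Bianchi pairs after commuting with $\nabla^i$ for $i\le N_1$, $\nabla_4^j$ for $j\le\tfrac{n-4}{2}$, and $\nabla_3^k$ for $k\le\tfrac n2$ using the commutation formulas of Section \ref{set up section}; the top-order angular terms cancel upon integration by parts, the data for the renormalised curvature on $\{v=0\}$ (read off from the expansion and built from $\slashed{g}_0$, $h$, $\mathcal{O}$ and their derivatives) is controlled by $\epsilon$, and one closes at the level of fluxes with weights $|u|^{-2-i-j-k}$. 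Finally, Sobolev embedding on $S^n$ converts the $L^2$ bounds into the $L^\infty$ bounds of the statement; this costs $O(n)$ derivatives and, together with the $O(n)$ derivatives of $\slashed{g}_0$ needed to form the expansion coefficients, produces the loss $N_1=N-\lfloor c_0/4\rfloor n$. The metric bound $\|\mathcal{L}_\theta^i\slashed{g}^*\|_{L^\infty(S_{u,v})}\lesssim\epsilon|u|^{-i}$ then follows by integrating $\mathcal{L}_4\slashed{g}=2\chi$ from $\{v=0\}$ and converting between $\mathcal{L}_\theta$ and covariant derivatives, while the claim that the solution is \emph{regular} in the sense of Definition \ref{weak solution definition} is automatic from the resulting expansion for $n>4$ and, for $n=4$, uses the mildness of the $\log v$ singularity of $\alpha$.

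The bootstrap closes because every error term not already present in the linear model is at least quadratic in the small quantities $\psi^*,\Psi^G,\alpha^\flat$ — schematically of type $\mathcal{F}'$ — and is therefore recovered with a gain of $A\epsilon$; in region~I one has the additional smallness factor $\underline{v}$ from the range $v/|u|\le\underline{v}$ in the $v$-transport estimates, which is what absorbs the few borderline linear terms carrying no power of $\epsilon$. The step I expect to be the main obstacle is the bookkeeping around $\alpha$: one must verify that subtracting the explicit singular profile genuinely renders $\alpha^\flat$ regular after commuting with up to $\tfrac{n-4}{2}$ $\nabla_4$-derivatives and up to $N_1$ angular derivatives (in particular that $\nabla_4^{(n-4)/2}\alpha^\flat$ stays bounded while only $\nabla_4^{(n-2)/2}$ loses a logarithm), and that the $\nabla_3$-commuted singular terms — which enter the energy estimates for the pairs $(\nu,R)$, $(R,\underline\nu)$, $(\underline\nu,\underline\alpha)$ through the angular coupling — carry admissible self-similar weights. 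This is exactly the place where the special straight structure of $g$ is essential: it eliminates the dangerous angular-derivative terms from the transport equations and makes the singular behaviour of $\alpha$ milder than in the analogous step of \cite{nakedsing}.
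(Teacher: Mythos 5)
Your proposal is correct and follows the same route as the paper, which for this proposition simply invokes \cite{selfsimilarvacuum} and notes that the argument there carries over in the small-data regime with $\epsilon$ tracked throughout; the paper gives no further detail. Your sketch — renormalizing $\alpha$ by the explicit singular profile $v^{\frac{n-4}{2}}u^{\frac{4-n}{2}}\log(-v/u)\mathcal{O}$, propagating the Ricci coefficients through the $\nabla_4$ transport equations (whose right-hand sides carry no angular derivatives in the straight setting), obtaining the curvature components from Bianchi pair energy estimates after converting $\nabla_3$ to $\nabla_4$ via $\nabla_S\Psi=-2\Psi$, passing to $L^\infty$ by Sobolev embedding at the cost of $O(n)$ derivatives to produce $N_1=N-\lfloor c_0/4\rfloor n$, and closing the bootstrap via quadraticity of the $\mathcal{F}'$-type errors together with the smallness $v/|u|\le\underline{v}$ of region~I — is precisely the natural way to flesh out that reference, and matches the paper's own summary of the region~I analysis in Section~\ref{stability intro section}.
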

\begin{remark}
    The proof of \cite{selfsimilarvacuum} implies the need to prove estimates for the solution with angular regularity at most $N_1,$ which is sufficiently small compared to the regularity of the initial data $N.$ In this section we do not attempt to optimize the universal constant $c_0,$ and for our purposes it suffices to take $c_0=100.$
\end{remark}

To obtain estimates in region~II, one can adapt the argument of \cite[Section~7]{nakedsing} to the case of $n\geq4$ and obtain the following bounds:
\begin{proposition}\label{region II bounds proposition}
The solution of Proposition \ref{region I bounds proposition} can be extended uniquely as a regular straight self-similar vacuum solution in region~II. Setting $N_2=N-2\lfloor c_0/4\rfloor n,$ we have the bounds in region~II for all $0\leq i\leq N_2,$ and all $0\leq j,k\leq\frac{n-4}{2}$:
\begin{align*}
    \big\|\nabla^i\nabla_4^j\nabla_3^k\Psi\big\|_{L^{\infty}(S_{u,v})}&\lesssim\epsilon^{1-\delta}|v|^{-2-i-j-k}\\
    \big\|\nabla^i\nabla_4^j\nabla_3^k\psi^*\big\|_{L^{\infty}(S_{u,v})}&\lesssim\epsilon^{1-\delta} |v|^{-1-i-j-k}\\
    \big\|\nabla^i\nabla_4^{\frac{n-2}{2}}\nabla_3^k\psi^*\big\|_{L^{\infty}(S_{u,v})}&\lesssim\epsilon^{1-\delta} |v|^{-1-i-\frac{n-2}{2}-k}\\
    \big\|\mathcal{L}_{\theta}^i\slashed{g}^*\big\|_{L^{\infty}(S_{u,v})}&\lesssim\epsilon^{1-\delta}|v|^{-i}.
\end{align*}
\end{proposition}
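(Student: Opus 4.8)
The plan is to follow the argument of \cite{nakedsing} in region~II, adapted to the case $n\ge4$. The essential simplification is that region~II is bounded away from both singular null hypersurfaces, since there $\underline v\le v/|u|\le\underline v^{-1}$, so every double null unknown is smooth and the problem becomes, in the self-similar variable $\log(v/|u|)$ (equivalently, in the quotient $(n+1)$-dimensional picture), a continuation-and-estimate problem on a compact set of size comparable to $\log(\underline v^{-1})$. The initial data for region~II sit on the null cone $\{v=\underline v|u|\}$ and are supplied by Proposition~\ref{region I bounds proposition}. Note that although $\alpha$ is recorded there with its $\log(v/|u|)$ singular part removed, at $v/|u|=\underline v$ that part equals a $\underline v$-dependent constant times $|u|^{-2}\mathcal O$, which is $\lesssim C(\underline v)\epsilon|u|^{-2}\le\epsilon^{1-\delta}|u|^{-2}$ by the hypothesis $C(\underline v)\le\epsilon^{-\delta}$; hence on this cone $\alpha$ itself, and likewise all of $\psi^*,\Psi^G,\mathcal L_\theta^i\slashed g^*$, already satisfy the bounds claimed in Proposition~\ref{region II bounds proposition} (after rewriting powers of $|u|$ as powers of $|v|=\underline v|u|$, using $\underline v<1$).

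First I would run a bootstrap argument on the subregions $\{\underline v\le v/|u|\le c^*\}$ with $c^*\le\underline v^{-1}$: assuming the estimates of Proposition~\ref{region II bounds proposition} hold there with the implicit constants doubled, I would use local well-posedness for the double null system (as in \cite{selfsimilarvacuum,nakedsing}) to extend the solution slightly, then improve the constants; a standard continuation argument gives the solution on all of region~II, and uniqueness follows from the uniqueness of regular solutions in \cite{selfsimilarvacuum}. For the Ricci coefficients I would integrate the null structure equations of Proposition~\ref{null structure equations proposition}, which in the straight self-similar case are genuine transport equations in $e_3$ (or $e_4$) with no angular-derivative terms on the right-hand side; commuting with $\nabla^i$, $\nabla_3^k$, $\nabla_4^j$ for $i\le N_2$ and $j,k\le\frac{n-4}{2}$, the commutators are collected schematically in the $\mathcal F,\underline{\mathcal F}$ notation and are quadratically small, so Gronwall along the compact integral curves yields the bounds on $\psi^*$ and on $\mathcal L_\theta^i\slashed g^*$, with the relation $u\underline\chi+v\chi=\slashed g$ used to pass between $\chi$ and $\underline\chi$ (the extra $\nabla_4^{(n-2)/2}$ bound on $\psi^*$ follows by differentiating the transport equations once more). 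For the curvature components I would run energy estimates on the Bianchi pairs $(\alpha,\nu),(\nu,R),(R,\underline\nu),(\underline\nu,\underline\alpha)$ of Proposition~\ref{Bianchi equation proposition}: contracting each $\nabla_3$/$\nabla_4$ equation with the corresponding component and integrating by parts over $S_{u,v}$, the leading angular operators $\mathcal D,\mathcal D^*$ cancel as formal adjoints, producing an energy inequality for $\sum\|\nabla^i\nabla_3^k\nabla_4^j\Psi\|_{L^2(S^n)}^2$ ($i\le N_2$, $j,k\le\frac{n-4}{2}$) whose error terms are quadratic in the bootstrap quantities plus lower-order contributions. Integrating over the compact region and applying Gronwall gives the $L^2$ bounds, and Sobolev embedding on $S^n$ — costing $\lceil n/2\rceil+1$ angular derivatives and accounting for the cumulative loss $N\to N_2=N-2\lfloor c_0/4\rfloor n$ across regions~I and~II — converts these to the stated $L^\infty$ bounds.

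To close the bootstrap, each of these estimates has the schematic form $(\text{quantity at }v/|u|=c)\le C(\underline v)\,(\text{data at }v/|u|=\underline v)+C\,(\text{bootstrap})^2$; since the data are $\lesssim\epsilon$ in the appropriate $|v|$-weighted norms, the right-hand side is $\le C(\underline v)\epsilon+C\epsilon^{2-2\delta}$, which beats the doubled bootstrap constants once $\epsilon$ is small, using $C(\underline v)\le\epsilon^{-\delta}$ to absorb the linear term into $\epsilon^{1-\delta}$ and $\epsilon\ll\underline v$ to make the quadratic term negligible. The Cauchy-data version on a spacelike hypersurface $\{v=-cu\}$ with $\underline v\le c\le\underline v^{-1}$ is obtained in the same way, taking the initial slice inside region~II rather than $\{v=0\}$. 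I expect the main obstacle — and the origin of the non-optimal smallness exponent flagged in the introduction — to be exactly the bookkeeping of the $\underline v$-dependent constants: region~II has self-similar extent $\sim\log(\underline v^{-1})$, so the Gronwall constants in the commuted transport and energy estimates are of the form $e^{C\log(\underline v^{-1})}$, i.e. a large but bounded power of $\underline v^{-1}$, and one must check throughout that no commutation, product-rule, or Sobolev step compounds these beyond a fixed power, so that they stay dominated by $\epsilon^{-\delta}$. The only genuinely new feature relative to \cite{nakedsing} is structural — one works with the full $n\ge4$ Bianchi system of Proposition~\ref{Bianchi equation proposition}, involving all of $\alpha,\nu,\tau,R,\underline\nu,\underline\alpha$, in place of the system treated there — but since in region~II no curvature component is singular, this introduces no essential analytic difficulty.
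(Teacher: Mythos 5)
Your proposal is correct and takes essentially the same approach as the paper, which itself gives no detailed proof here but simply cites \cite{nakedsing} and states that the argument there adapts to $n\geq4$. Your outline of the bootstrap, the transport estimates for $\psi^*$ and $\slashed{g}^*$, the Bianchi-pair energy estimates for $\Psi$, the bookkeeping of the $\underline{v}$-dependent Gronwall constants via $C(\underline{v})\leq\epsilon^{-\delta}$, and the derivative loss $N_1\to N_2$ (including reinstating the $\log$-singular part of $\alpha$ in the data at $\{v=\underline{v}|u|\}$, which is where the $\underline{v}$-dependence enters) is exactly what that adaptation consists of.
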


\begin{remark}\label{remark about cauchy data}
    The same result holds in the case of small Cauchy initial data on a spacelike hypersurface $\{v=-cu\},$ with $\underline{v}\leq c\leq\underline{v}^{-1}.$ We consider a set of double null unknowns $\slashed{g},\psi,\Psi$ which satisfy the constraint equations in Propositions \ref{constraint equations proposition}, \ref{addtional constraints proposition}, and the following smallness conditions for all $0\leq i\leq N_1,$ and all $0\leq j,k\leq\frac{n-2}{2}$:
    \begin{align*}
        \big\|\nabla^i\nabla_4^j\nabla_3^k\Psi\big\|_{L^{\infty}(S_{-cv,v})}&\lesssim\epsilon|v|^{-2-i-j-k}\\
        \big\|\nabla^i\nabla_4^j\nabla_3^k\psi^*\big\|_{L^{\infty}(S_{-cv,v})}&\lesssim\epsilon|v|^{-1-i-j-k}\\
        \big\|\mathcal{L}_{\theta}^i\slashed{g}^*\big\|_{L^{\infty}(S_{-cv,v})}&\lesssim\epsilon|v|^{-i}.
    \end{align*}
    This is the precise notion in which the spacetime $\big(\mathcal{M},g\big)$ determined by the initial data is close to Minkowski space on the spacelike hypersurface $\{v=-cu\},$ as referred to in Theorem \ref{main theorem of the paper ambient} and Theorem \ref{stability of de sitter theorem in section}.
    
    The argument of \cite[Section~7]{nakedsing} also applies in this case, giving the same conclusion as Proposition \ref{region II bounds proposition}. We point out that we do not obtain at this stage the bounds of Proposition \ref{region I bounds proposition} in region~I. Thus, for small Cauchy initial data we expect to prove in region~I similar bounds to region~III, by repeating the argument in the following section in the reverse time direction (or simply replacing $(u,v)$ by $(-v,-u)$).
\end{remark}

\subsection{Region III}\label{region III section}

We prove existence of the solution and self-similar bounds in region~III. By self-similarity, we can restrict to $v<\underline{v}$.

\begin{proposition}\label{region III bounds proposition}
The solution of Proposition \ref{region II bounds proposition} can be extended uniquely as a regular straight self-similar vacuum solution in region~III. Setting $N_3=N-3\lfloor c_0/4\rfloor n,$ and taking $p>0$ to be a small constant, we have the bounds in region~III with $\{v\leq\underline{v}\}$, for all $0\leq i\leq N_3,$ and all $0\leq j\leq\frac{n-4}{2}$:
\begin{align*}
    \big\|\nabla^i\nabla_3^j\underline{\alpha}\big\|_{L^{2}(S_{u,v})}&\lesssim\epsilon^{1-2\delta}|u|^{-p}\cdot|v|^{-2-i-j+p}\\
    \big\|\nabla^i\nabla_3^j\underline{\Psi}^G\big\|_{L^{2}(S_{u,v})}&\lesssim\epsilon^{1-2\delta}|v|^{-2-i-j}\\
    \big\|\nabla^i\nabla_3^j\psi^*\big\|_{L^{2}(S_{u,v})}&\lesssim\epsilon^{1-2\delta}|v|^{-1-i-j}\\
    \big\|\nabla^i\nabla_3^{\frac{n-2}{2}}\psi^*\big\|_{L^{2}(S_{u,v})}&\lesssim\epsilon^{1-2\delta}|u|^{-p}\cdot|v|^{-1-i-\frac{n-2}{2}+p},\text{ for }i\leq N_3-1\\
    \big\|\mathcal{L}_{\theta}^i\slashed{g}^*\big\|_{L^{2}(S_{u,v})}&\lesssim\epsilon^{1-2\delta}|v|^{-i}.
\end{align*}
In addition, we have control of more detailed norms as proved in Propositions \ref{high regularity curvature proposition},\ref{low regularity curvature proposition},\ref{Ricci coefficients proposition}, and \ref{metric coefficients proposition}.
\end{proposition}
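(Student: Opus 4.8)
The plan is to run a continuity/bootstrap argument in region~III restricted to $\{v\le\underline{v}\}$, evolving in the $u$-direction towards $\{u=0\}$, following the scheme of \cite{nakedsing} adapted to the straight higher-dimensional setting. I would fix the bootstrap assumptions to be the five displayed self-similar bounds, together with the more detailed norms of Propositions~\ref{high regularity curvature proposition}--\ref{metric coefficients proposition}, with the constant $\epsilon^{1-2\delta}$ replaced by a large multiple of itself, imposed on $\{v\le\underline{v},\ v/|u|\ge\underline{v},\ u\le -u_0\}$ for a parameter $u_0\in(0,\underline{v})$; by self-similarity it is enough to treat one $u$-characteristic. Local existence and the region~II estimates of Proposition~\ref{region II bounds proposition} supply the solution and the required data on the null hypersurface $\{v/|u|=\underline{v}\}$ that borders region~II. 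One then improves the bootstrap constants, which by the usual last-slice argument lets $u_0\to 0$ and yields existence, uniqueness, and the stated bounds on all of region~III; the limiting solution is a regular solution, and for $n=4$ solves \eqref{actual vacuum equations} weakly in $L^2$, exactly as in \cite{selfsimilarvacuum,nakedsing}, once one checks that every double null unknown other than $\underline{\alpha}$ extends continuously to $\{u=0\}$ while $\underline{\alpha}$ has only the controlled $|u|^{-p}$ (hence $L^2_u$-integrable) growth.

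The core of the argument is the curvature energy estimates. I would group the curvature components into the Bianchi pairs $(\alpha,\nu),\ (\nu,R),\ (R,\underline{\nu}),\ (\underline{\nu},\underline{\alpha})$ of Proposition~\ref{Bianchi equation proposition} and handle the most singular pair $(\underline{\nu},\underline{\alpha})$ first. Fixing constants $0<q\ll p\ll1$ and conjugating this pair by the weight $w=v^{3/2-p}|u|^{p-q}$ (as in the introduction), the $\tr\underline{\chi}$ terms in the Bianchi equations together with the weight derivatives produce lower-order terms of favorable sign --- of size $1/|u|$ in the $\underline{\nu}$ equation and $1/v$ in the $\underline{\alpha}$ equation --- after using the Ricci bounds to control $\tr\underline{\chi}$. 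Contracting the conjugated equations with $w\underline{\nu}$ and $w\underline{\alpha}$, integrating by parts over a characteristic rectangle, and multiplying by $|u|^{2q}$, these lower-order terms yield coercive bulk terms; the $\underline{\nu}$ bulk term is further improved by a factor $|u|/v\le\underline{v}$, which is the crucial region~III gain. The same computation survives commutation with up to $N_3$ angular derivatives $\nabla^i$ and up to $\tfrac{n-4}{2}$ derivatives $\nabla_3$, the commutators being reorganized via Lemmas~\ref{commutation with nabla lemma}--\ref{commutation with nabla 4 lemma} into the error families $\underline{\mathcal{F}}_{mlp}$ and absorbed using the bootstrap assumptions --- commuting with many angular derivatives is precisely what makes the nonlinear error terms effectively linear. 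The weight $w$ is exactly what yields $\|\nabla^i\nabla_3^j\underline{\alpha}\|_{L^2(S_{u,v})}\lesssim\epsilon^{1-2\delta}|u|^{-p}|v|^{-2-i-j+p}$, consistent with $\underline{\alpha}$ blowing up as $u\to0$. The strong bulk term for $\underline{\nu}$ then dominates the bulk term in the $(R,\underline{\nu})$ energy identity, and running up the remaining Bianchi pairs --- now with trivial weights, since the components $\underline{\Psi}^G$ (which in region~III includes $\alpha$ itself) are expected to satisfy the regular self-similar bounds --- closes the curvature estimates; the more detailed norms of Propositions~\ref{high regularity curvature proposition} and \ref{low regularity curvature proposition} are obtained along the way.

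With curvature controlled, the Ricci coefficient bounds of Proposition~\ref{Ricci coefficients proposition} follow from the transport structure special to the straight gauge: the null structure equations of Proposition~\ref{null structure equations proposition} for $\psi$ contain no angular derivatives once the vanishing Ricci coefficients in \eqref{equation for chi in self similar straight} are imposed, so $\nabla^i\nabla_3^j\psi^*$ satisfies transport equations (in $u$, after using the relation $u\underline{\chi}+v\chi=\slashed{g}$) whose right-hand side is built from curvature and quadratic $\psi$ terms; integrating from the data on $\{v/|u|=\underline{v}\}$ gives the bounds, with the extra $|u|^{-p}$ entering only at top order $\nabla_3^{(n-2)/2}$ through the coupling to $\underline{\alpha}$. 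The metric estimate for $\mathcal{L}_{\theta}^i\slashed{g}^*$ in Proposition~\ref{metric coefficients proposition} follows from integrating $\mathcal{L}_3\slashed{g}=2\underline{\chi}$ and converting Lie to covariant derivatives. To close the bootstrap I would use the hierarchy $\epsilon\ll\underline{v}\ll1$: the data inherited from region~II is of size $\epsilon^{1-\delta}$, the region~III constants $C(\underline{v})$ are absorbed by $\epsilon^{-\delta}$ to give the $\epsilon^{1-2\delta}$ bounds, and every remaining error term carries either a positive power of $\underline{v}$ from the region~III geometry (e.g.\ $|u|/v\le\underline{v}$) or an extra factor $\epsilon^{1-2\delta}$, hence is absorbable for $\underline{v}$ small and then $\epsilon$ small relative to $\underline{v}$.

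I expect the main obstacle to be the treatment of $\underline{\alpha}$ near $\{u=0\}$. In region~I the singular part of $\alpha$ is the explicit term $v^{\frac{n-4}{2}}u^{\frac{4-n}{2}}\log(-v/u)\,\mathcal{O}/((n-4)/2)!$ supplied by the Fefferman--Graham expansion and can simply be subtracted before estimating; here the singular part of $\underline{\alpha}$ is not known a priori, and --- unlike \cite{nakedsing} --- the reduced Bianchi system that would eliminate $\underline{\alpha}$ is not available in the higher-dimensional setting. The resolution is to accept and propagate estimates consistent with the blow-up, which is viable only because in the straight higher-dimensional case the singularity is mild enough ($|u|^{-p}$ with $p$ arbitrarily small) that the conjugated energy identities still close, and because the favorable-sign lower-order terms --- amplified by the region~III factor $|u|/v\le\underline{v}$ --- furnish bulk terms strong enough to drive the entire Bianchi hierarchy.
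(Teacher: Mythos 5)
Your proposal is correct and follows essentially the same bootstrap-plus-weighted-Bianchi-hierarchy strategy as the paper: conjugation of the $(\underline{\nu},\underline{\alpha})$ pair by $v^{3/2-p}|u|^{p-q}$, exploiting the enhanced $1/|u|$ bulk term for $\underline{\nu}$ to drive the remaining pairs, transport estimates for $\psi$ and $\slashed{g}$, and absorption of errors via $\epsilon\ll\underline{v}\ll1$. The only slight imprecision is the remark that the remaining pairs can be run with trivial weights --- the paper keeps the weight $w_{ml}$ throughout the high-regularity (flux) estimates of Proposition~\ref{high regularity curvature proposition} and only switches to a less singular weight $x_{ml}=v^{2+m+l-p}|u|^{p}$ for the low-regularity (pointwise $L^2(S_{u,v})$) estimates of Proposition~\ref{low regularity curvature proposition}, which is what ultimately removes the $|u|^{-q}$ loss for $\underline{\Psi}^G$; but this is a detail of implementation rather than a gap.
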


\subsubsection{Norms}\label{stability norms section}
For the fixed $N_3>0$ defined above, we introduce the following sets of indices for high and low regularity norms:
\begin{align*}
    H&=\bigg\{(m,l):\ 0\leq l\leq\frac{n-4}{2},\ 0\leq m\leq N_3+\frac{n-4}{2}-l\bigg\}\\
    L&=\bigg\{(m,l):\ 0\leq l\leq\frac{n-4}{2},\ 0\leq m\leq N_3+\frac{n-6}{2}-l\bigg\}.
\end{align*}
We also define the characteristic triangles:
\[P_{\widetilde{u},\widetilde{v}}=\Big\{(u,v):\ -\underline{v}\widetilde{v}\leq u\leq\widetilde{u},\ -\underline{v}^{-1}\widetilde{u}\leq v\leq\widetilde{v}\Big\}.\]
We define the high regularity curvature norms in region~III with $v<\underline{v}$, where $w_{ml}(u,v)=v^{\frac{3}{2}+m+l-p}|u|^{p-q}$ for some constants $0<q\ll p\ll1,$ and $(m,l)\in H.$ We recall our above conventions for absolute value and integration.
\[\big\|\underline{\alpha}\big\|^2_{\mathcal{C}_{m,l}}(u,v)=|u|^{2q}\int_{-v\underline{v}}^u\int_{S^n}w_{ml}^2\big|\nabla^m\nabla_3^l\underline{\alpha}\big|^2d\mathring{Vol}d\hat{u}+|u|^{2q}\int_{-v\underline{v}}^u\int_{-\frac{\hat{u}}{\underline{v}}}^v\int_{S^n}\frac{w_{ml}^2}{\hat{v}}\big|\nabla^m\nabla_3^l\underline{\alpha}\big|^2d\mathring{Vol}d\hat{v}d\hat{u}.\]
For any $\underline{\Psi}^G\neq\alpha,$ we define:
\[\big\|\underline{\Psi}^G\big\|^2_{\mathcal{C}_{m,l}}(u,v)=|u|^{2q}\int_{-v\underline{v}}^u\int_{S^n}w_{ml}^2\big|\nabla^m\nabla_3^l\underline{\Psi}^G\big|^2d\mathring{Vol}d\hat{u}+|u|^{2q}\int_{-\frac{u}{\underline{v}}}^v\int_{S^n}w_{ml}^2\big|\nabla^m\nabla_3^l\underline{\Psi}^G\big|^2d\mathring{Vol}d\hat{v}+\]\[+|u|^{2q}\int_{-v\underline{v}}^u\int_{-\frac{\hat{u}}{\underline{v}}}^v\int_{S^n}\frac{w_{ml}^2}{|\hat{u}|}\big|\nabla^m\nabla_3^l\underline{\Psi}^G\big|^2d\mathring{Vol}d\hat{v}d\hat{u}.\]
For $\alpha,$ we define:
\[\big\|\alpha\big\|^2_{\mathcal{C}_{m,l}}(u,v)=|u|^{2q}\int_{-\frac{u}{\underline{v}}}^v\int_{S^n}w_{ml}^2\big|\nabla^m\nabla_3^l\alpha\big|^2d\mathring{Vol}d\hat{v}+|u|^{2q}\int_{-v\underline{v}}^u\int_{-\frac{\hat{u}}{\underline{v}}}^v\int_{S^n}\frac{w_{ml}^2}{|\hat{u}|}\big|\nabla^m\nabla_3^l\alpha\big|^2d\mathring{Vol}d\hat{v}d\hat{u}.\]
We define the total high regularity curvature norm as:
\[\mathcal{C}_{\widetilde{u},\widetilde{v}}=\sup_{(u,v)\in P_{\widetilde{u},\widetilde{v}}}\sum_{(m,l)\in H}\bigg(\big\|\underline{\alpha}\big\|^2_{\mathcal{C}_{m,l}}(u,v)+\big\|\underline{\Psi}^G\big\|^2_{\mathcal{C}_{m,l}}(u,v)\bigg).\]
We define the low regularity curvature norms in region~III with $v<\underline{v}$ for  ${(m,l)\in L}$:
\begin{align*}
    \big\|\underline{\Psi}^G\big\|^2_{\mathcal{L}_{m,l}}(u,v)&=\int_{S^n}v^{4+2m+2l}\big|\nabla^m\nabla_3^l\underline{\Psi}^G\big|^2d\mathring{Vol}\\
    \big\|\underline{\alpha}\big\|^2_{\mathcal{L}_{m,l}}(u,v)&=\int_{S^n}v^{4+2m+2l-2p}|u|^{2p}\big|\nabla^m\nabla_3^l\underline{\alpha}\big|^2d\mathring{Vol}.
\end{align*}
We define the total low regularity curvature norm as:
\[\mathcal{L}_{\widetilde{u},\widetilde{v}}=\sup_{(u,v)\in P_{\widetilde{u},\widetilde{v}}}\sum_{(m,l)\in L}\bigg(\big\|\underline{\alpha}\big\|^2_{\mathcal{L}_{m,l}}(u,v)+\big\|\underline{\Psi}^G\big\|^2_{\mathcal{L}_{m,l}}(u,v)\bigg).\]
We define the norms for Ricci coefficients in region~III with $v<\underline{v}$ for  ${(m,l)\in H}$:
\[\big\|\psi\big\|^2_{\mathcal{R}_{m,l}}(u,v)=\int_{S^n}v^{2+2m+2l}\big|\nabla^m\nabla_3^l\psi^*\big|^2d\mathring{Vol}.\]
We define the total Ricci coefficients norm as:
\[\mathcal{R}_{\widetilde{u},\widetilde{v}}=\sup_{(u,v)\in P_{\widetilde{u},\widetilde{v}}}\sum_{(m,l)\in H}\big\|\psi\big\|^2_{\mathcal{R}_{m,l}}(u,v).\]
Finally, we define the norms for the metric coefficients for ${(m,0)\in H}$:
\[\big\|\slashed{g}\big\|^2_{\mathcal{M}_{m,0}}(u,v)=\int_{S^n}v^{2m}\big|\mathcal{L}_{\theta}^m\slashed{g}^*\big|^2d\mathring{Vol}.\]
We define the total metric coefficients norm as:
\[\mathcal{M}_{\widetilde{u},\widetilde{v}}=\sup_{(u,v)\in P_{\widetilde{u},\widetilde{v}}}\sum_{(m,0)\in H}\big\|\slashed{g}\big\|^2_{\mathcal{M}_{m,0}}(u,v).\]

\subsubsection{Bootstrap Assumptions}

The global existence result and quantitative estimates in Proposition \ref{region III bounds proposition} are proved by using the following bootstrap result and standard local existence arguments:
\begin{proposition}\label{bootstrap argument proposition}
    We denote $\epsilon'=\epsilon^{1-2\delta}.$ Let $\big(\mathcal{M},g\big)$ be a spacetime obtained in Proposition \ref{region II bounds proposition}, which exists in the characteristic triangle $P_{\widetilde{u},\widetilde{v}}$ contained in region~III. We assume that the spacetime satisfies the bootstrap assumption:
    \[\mathcal{C}_{\widetilde{u},\widetilde{v}}+\mathcal{L}_{\widetilde{u},\widetilde{v}}+\mathcal{R}_{\widetilde{u},\widetilde{v}}+\mathcal{M}_{\widetilde{u},\widetilde{v}}\leq2A\epsilon'^2.\]
    We prove that we actually have:
    \[\mathcal{C}_{\widetilde{u},\widetilde{v}}+\mathcal{L}_{\widetilde{u},\widetilde{v}}+\mathcal{R}_{\widetilde{u},\widetilde{v}}+\mathcal{M}_{\widetilde{u},\widetilde{v}}\leq A\epsilon'^2.\]
\end{proposition}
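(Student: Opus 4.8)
The plan is to run a standard bootstrap/continuity argument in region~III, where the only nontrivial content is improving the constant $2A\epsilon'^2$ to $A\epsilon'^2$ in the combined norm, exploiting the smallness of $\underline{v}$. First I would record the consequences of the bootstrap assumption: via Sobolev embedding on $S^n$ (with the round metric, using that $\slashed{g}^*$ is controlled by $\mathcal{M}$), the hypothesis $\mathcal{C}+\mathcal{L}+\mathcal{R}+\mathcal{M}\le 2A\epsilon'^2$ gives pointwise bounds of the schematic self-similar form $|\nabla^i\nabla_3^j\psi^*|\lesssim A^{1/2}\epsilon' v^{-1-i-j}$, $|\nabla^i\nabla_3^j\underline{\Psi}^G|\lesssim A^{1/2}\epsilon' v^{-2-i-j}$, and $|\nabla^i\nabla_3^j\underline{\alpha}|\lesssim A^{1/2}\epsilon' |u|^{-p}v^{-2-i-j+p}$, for indices with $i$ up to $N_3-\lfloor n/2\rfloor$ or so and $j\le\frac{n-4}{2}$. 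These are the working bounds used to treat every nonlinear error term $\mathcal{F},\underline{\mathcal{F}}$ in the commuted Bianchi and null-structure equations as effectively linear, since we always keep at least a few angular derivatives in reserve (the gap $N_3\ll N$).

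Next I would carry out the energy estimates for the curvature components in the order dictated by the Bianchi pairs, starting from the pair $(\underline{\nu},\underline{\alpha})$ and working up through $(R,\underline{\nu})$, $(\nu,R)$, $(\alpha,\nu)$, commuting with $\nabla^m\nabla_3^l$ for $(m,l)\in H$. The key mechanism, as sketched in the introduction: conjugating the $(\underline{\nu},\underline{\alpha})$ equations by the weight $w_{ml}=v^{3/2+m+l-p}|u|^{p-q}$, contracting with $w\underline{\nu}$ and $w\underline{\alpha}$, integrating by parts over the characteristic triangle $P_{\widetilde u,\widetilde v}$, and multiplying by $|u|^{2q}$, produces the bulk terms with favorable sign that appear in the definitions of $\|\cdot\|_{\mathcal{C}_{m,l}}$. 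These bulk terms are quantitatively stronger for $\underline{\nu}$ because $|u|/v\le\underline{v}$ in region~III, and stronger still for the $\underline{\Psi}^G$ (non-$\underline{\alpha}$) components, which is what allows regular (unweighted-exponent-$p$) self-similar bounds there. The error integrals on the right-hand side are estimated using the pointwise working bounds above together with the weights; crucially every such error term comes with at least one extra power of $v/|u|\le\underline{v}$ or an extra factor of $A^{1/2}\epsilon'$, so after Cauchy--Schwarz and absorbing the curvature factor into the energy, the contribution to $\mathcal{C}_{\widetilde u,\widetilde v}$ is bounded by $C(\underline{v}+A^{1/2}\epsilon')\cdot 2A\epsilon'^2 + C\epsilon'^2$, where the last term is the contribution of the initial data on $\{v=-u\underline v\}$ and $\{u=-v\underline v\}$ (the boundary of region~III, controlled by Proposition \ref{region II bounds proposition}). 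Then I would derive $\mathcal{L}_{\widetilde u,\widetilde v}$ from $\mathcal{C}_{\widetilde u,\widetilde v}$ by a trace/fundamental-theorem-of-calculus argument on the spheres (integrating the $\nabla_3$ or $\nabla_4$ equation in the short direction), and $\mathcal{R}_{\widetilde u,\widetilde v}$, $\mathcal{M}_{\widetilde u,\widetilde v}$ by directly integrating the transport equations for $\psi^*$ and $\slashed g^*$ — these have no angular-derivative loss because all angular-derivative terms in the null structure equations vanish in the straight self-similar gauge — picking up curvature on the right which is already controlled by $\mathcal{C}+\mathcal{L}$.

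Finally I would close the argument: choosing $A$ large enough to dominate the universal constant from the initial-data contribution, and then $\underline{v}$ small and $\epsilon_0$ small (so that $A^{1/2}\epsilon'\ll 1$ and $C(\underline v)\le \epsilon^{-\delta}$, hence $C\underline v\le 1/4$ say), the estimate $\mathcal{C}+\mathcal{L}+\mathcal{R}+\mathcal{M}\le C\epsilon'^2 + C(\underline v + A^{1/2}\epsilon')\cdot 2A\epsilon'^2 \le A\epsilon'^2$ follows. I expect the main obstacle to be the $\underline{\alpha}$ energy estimate: unlike in \cite{nakedsing} one cannot pass to a reduced Bianchi system to eliminate $\underline{\alpha}$, so one must propagate the singular-at-$\{u=0\}$ bound $\|\nabla^i\nabla_3^j\underline{\alpha}\|_{L^2}\lesssim\epsilon'|u|^{-p}v^{-2-i-j+p}$ directly, and the delicate point is that the zeroth-order term in the $\nabla_4\underline{\alpha}$ equation has the "wrong" coefficient $\frac{n}{2v}$, so the conjugation weight must be chosen (as above, with exponent $\frac32-p$ in $v$ and $p-q$ in $|u|$) precisely so that the resulting first-order term $(\frac{n-3}{2}+p)\frac wv\underline\alpha$ still yields a positive bulk contribution after integration by parts against the angular term from $\underline\nu$ — balancing $p$ and $q$ with $0<q\ll p\ll1$ is what makes this work, and verifying that the commuted equations for $(m,l)\in H$ preserve this structure (in particular that commutation with up to $\frac{n-4}{2}$ $\nabla_3$'s does not generate uncontrollable lower-order terms) is the technical heart of the proof.
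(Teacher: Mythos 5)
Your proposal takes essentially the same route as the paper: run energy estimates for the Bianchi pairs, starting from $(\underline{\nu},\underline{\alpha})$ conjugated with $w_{ml}=v^{\frac{3}{2}+m+l-p}|u|^{p-q}$, multiply by $|u|^{2q}$, exploit the positive-sign bulk terms, bound the nonlinear errors using the pointwise consequences of the bootstrap assumption (always keeping angular derivatives in reserve, since $N_3\ll N$), then chain through $(R,\underline{\nu})$, $(R,\nu)$, $(\nu,\alpha)$ reusing each step's bulk control; then obtain $\mathcal{L}$, $\mathcal{R}$, $\mathcal{M}$ from weighted transport estimates and close using the smallness of $\underline v$ and $\epsilon'$. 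The choice-of-constants discussion at the end is correct.

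Two small points. First, you write ``every such error term comes with at least one extra power of $v/|u|\le\underline v$'' -- this is backwards; in region~III one has $v\ge|u|/\underline v$ so it is $|u|/v\le\underline v$ that is small (you state this correctly a few lines earlier, so this is just a slip). Second, describing the low regularity curvature estimate $\mathcal{L}$ as a ``trace/FTC argument'' undersells what the $\underline\alpha$ component actually requires: a naive integration in $v$ of the $\nabla_4\underline\alpha$ transport equation would cost a logarithm. The paper conjugates by a \emph{second} weight $x_{ml}=v^{2+m+l-p}|u|^{p}$ and runs another genuine $L^2$ energy estimate in $v$, extracting a positive bulk term with coefficient $\frac{n-4}{2}-l+p>0$ and then absorbing the $\underline\nu$ source via the already-controlled $\mathcal{C}$ bulk. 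Since $vx_{ml}^2=|u|^{2q}w_{(m+1)l}^2$, the $\underline\nu$ source is exactly what $\mathcal{C}$ controls. For $\underline{\Psi}^G$ the transport-integration picture you describe is accurate. Neither of these is a fatal gap, but the $\underline\alpha$ step is more than a trace argument, and as you yourself flagged it is where the $p$--$q$ balancing is load-bearing.
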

We remark that the bootstrap assumptions hold initially on $\{(u,v):\ u=-\underline{v}v,\ 0<v\leq\underline{v}\}$. Thus, for all $0<v\leq\underline{v}$ we have by Proposition \ref{region II bounds proposition}:
\[\mathcal{C}_{-\underline{v}v,v}+\mathcal{L}_{-\underline{v}v,v}+\mathcal{R}_{-\underline{v}v,v}+\mathcal{M}_{-\underline{v}v,v}\lesssim\epsilon'^2.\]

We note some consequences of the bootstrap assumption, which also rely on the boundedness of the quantities below on the corresponding region of Minkowski space. We have that for all $(u,v)\in P_{\widetilde{u},\widetilde{v}}$ and ${(m,l)\in L}$:
\begin{align*}
    &\int_{S^n}v^{2+2m+2l}\big|\nabla^m\nabla_3^l\psi\big|^2d\mathring{Vol}\lesssim1\\
    &\int_{S^n}v^{2m}\big|\mathcal{L}_{\theta}^m\slashed{g}\big|^2d\mathring{Vol}\lesssim1.
\end{align*}

\subsubsection{Sobolev Spaces and Sobolev Inequalities}
We use the definition of \cite[Section~6]{nakedsing} for the weighted $\widetilde{H}^m(S_{u,v})$ Sobolev spaces:
\[\big\|\phi\big\|_{\widetilde{H}^m(S_{u,v})}=\sum_{i=0}^m(v-u)^{i}\bigg(\int_{S^n}\big|\nabla^i\phi\big|^2d\mathring{Vol}\bigg)^{\frac{1}{2}}.\]
Using the results of \cite[Section~6]{nakedsing}, we have that the bootstrap assumptions imply that for any $(m,0)\in H$ we have the Sobolev inequalities:
\begin{align*}
    \big\|\phi\big\|_{L^{\infty}(S_{u,v})}&\lesssim\big\|\phi\big\|_{\widetilde{H}^n(S_{u,v})}\\
    \big\|\phi\cdot\psi\big\|_{\widetilde{H}^m(S_{u,v})}&\lesssim\big\|\phi\big\|_{\widetilde{H}^m(S_{u,v})}\big\|\psi\big\|_{\widetilde{H}^m(S_{u,v})}.
\end{align*}

\subsubsection{Estimates for High Regularity Curvature Components}

In this section we improve the bootstrap assumption on the high regularity curvature components by proving:
\begin{proposition}\label{high regularity curvature proposition}
    There exists a constant $C\ll A$, such that we have the improved estimate:
    \[\mathcal{C}_{\widetilde{u},\widetilde{v}}\leq C\epsilon'^2.\]
\end{proposition}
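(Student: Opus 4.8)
The plan is to run a single weighted energy estimate for the full commuted Bianchi system inside the characteristic triangle $P_{\widetilde{u},\widetilde{v}}$, processing the Bianchi pairs in the order $(\underline{\nu},\underline{\alpha})$, $(R,\underline{\nu})$, $(\nu,R)$, $(\alpha,\nu)$, so that the strong bulk term produced for the lower-signature member of each pair absorbs the borderline bulk term appearing in the next. The first step is to commute the Bianchi equations of Proposition~\ref{Bianchi equation proposition} with $\nabla^m\nabla_3^l$ for every $(m,l)\in H$; by the commutation formulas of the previous subsection, each pair becomes a system of the schematic form
\[
\nabla_3\Psi_1+\frac{c_1(m,l)}{|u|}\Psi_1=\mathcal{D}\Psi_2+\underline{\mathcal{F}},\qquad
\nabla_4\Psi_2+\frac{c_2(m,l)}{v}\Psi_2=\mathcal{D}^*\Psi_1+\mathcal{F},
\]
where $\Psi_i=\nabla^m\nabla_3^l(\text{curvature component})$, $\mathcal{D},\mathcal{D}^*$ are formally adjoint angular operators, the zeroth-order coefficients $c_1,c_2$ arise from $\tr\underline{\chi}$, $\tr\chi$ and the $\nabla_3$-commutators and depend only on $m,l,n$, and $\mathcal{F},\underline{\mathcal{F}}$ are error terms of the type $\mathcal{F}_{mlp}$, $\underline{\mathcal{F}}_{mlp}$. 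Multiplying by $w_{ml}=v^{3/2+m+l-p}|u|^{p-q}$ and using $\nabla_4 w_{ml}=(3/2+m+l-p)w_{ml}/v$ and $\nabla_3 w_{ml}=-(p-q)w_{ml}/|u|$, one conjugates the system; the key point is that in region~III and for all $(m,l)\in H$ (this is where the restriction $l\le\frac{n-4}{2}$ enters) the resulting zeroth-order coefficients are nonnegative, and for the lower-signature component the bulk term carries an extra factor $|u|/v\le\underline{v}$, hence is comparatively strong.

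Next, for each pair I would contract the conjugated $\nabla_3$-equation with $w_{ml}\Psi_1$ and the conjugated $\nabla_4$-equation with $w_{ml}\Psi_2$, add them, integrate in $u$ and $v$ over $P_{\widetilde{u},\widetilde{v}}$ against $d\mathring{Vol}$, and multiply by $|u|^{2q}$ as in \cite{nakedsing}. The adjoint operators $\mathcal{D}$, $\mathcal{D}^*$ cancel after integration by parts up to angular commutator terms, the boundary integrals on $\{\hat u=\mathrm{const}\}$ and $\{\hat v=\mathrm{const}\}$ reproduce precisely the norms $\|\underline{\alpha}\|_{\mathcal{C}_{m,l}}$ and $\|\underline{\Psi}^G\|_{\mathcal{C}_{m,l}}$, the weight-induced bulk terms are nonnegative, and the data on $\{u=-\underline{v}v\}$ is $\lesssim\epsilon'^2$ by Proposition~\ref{region II bounds proposition}; the $|u|^{2q}$ factor absorbs the borderline $q$-dependent contributions. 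Chaining the four pairs, the only bulk term entering with an unfavorable sign — the one carrying $\nabla^m\nabla_3^l\underline{\nu}$ in the estimate for the pair $(R,\underline{\nu})$ — is dominated by the $\underline{v}$-enhanced $\underline{\nu}$ bulk term obtained from $(\underline{\nu},\underline{\alpha})$, and similarly along the chain. The component $\alpha$ occurs only as the top member of $(\alpha,\nu)$, and its flux closes with the $v^{-p}|u|^{p}$ weight built into $w_{ml}$, consistent with its mild singularity at $\{u=0\}$.

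Finally, the error terms $\mathcal{F}$, $\underline{\mathcal{F}}$ — the $\mathcal{E}$-type right-hand sides of the Bianchi equations together with the angular and $\nabla_3$ commutator terms — must be absorbed. Each is a sum of products in which, because $m$ can be as large as $N_3+\frac{n-4}{2}-l$ while the low-regularity bounds hold at order $\le N_3+\frac{n-6}{2}-l$, at most one factor reaches close to top order; the remaining factors are placed in $L^\infty$ by the weighted Sobolev inequalities of this section and bounded by $\epsilon'$ using the bootstrap assumption, so after weighting each error term is $\lesssim\epsilon'$ times a flux or bulk norm and is absorbed using $\epsilon'\ll A$, the smallness of $\underline{v}$ (extra powers of $|u|/v$), or a Gronwall argument in $v$ for the finitely many borderline terms. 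Summing over $(m,l)\in H$ and then choosing $A$ large, $\epsilon_0$ and $\underline{v}$ small, gives $\mathcal{C}_{\widetilde{u},\widetilde{v}}\le C\epsilon'^2$ with $C$ depending only on $N$, hence $C\ll A$. I expect the main obstacle to be the sign bookkeeping in the chained estimates: checking that every conjugated zeroth-order coefficient is genuinely nonnegative throughout region~III for all $(m,l)\in H$ — which is what pins down the weight exponent $\tfrac{3}{2}+m+l-p$ and forces $l\le\frac{n-4}{2}$ — and that the single bad $\underline{\nu}$ bulk term is strictly controlled by the $\underline{v}$-enhanced good one rather than merely comparable to it.
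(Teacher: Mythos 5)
Your proposal matches the paper's proof: the same weight $w_{ml}=v^{3/2+m+l-p}|u|^{p-q}$, the same ordering of Bianchi pairs so that each good bulk term absorbs the borderline $\underline{\nu}$ (then $R$, then $\nu$) bulk term of the next, and the same Sobolev/bootstrap/$\underline{v}$-smallness handling of the error terms. Two minor slips do not affect the structure: the borderline bulk term is $|u|/v\le\underline{v}$ times the corresponding good one (it is not the good $\underline{\nu}$ bulk term itself that carries that small factor), and in region~III it is $\underline{\alpha}$, not $\alpha$, that is mildly singular at $\{u=0\}$.
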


\textit{Step 1. The Bianchi pair $(\underline{\nu},\underline{\alpha}).$} Using signature considerations, we can write the equations for $(\underline{\nu},\underline{\alpha})$ as:
\begin{align*}
    \nabla_3\underline{\nu}_{ABC}&=-2\nabla_{[A}\underline{\alpha}_{B]C}+\psi\underline{\Psi}^G\\
    \nabla_4\underline{\alpha}_{AB}+\frac{1}{2}\tr\chi\underline{\alpha}_{AB}&=-\nabla^C\underline{\nu}_{C(AB)}+\psi\underline{\Psi}^G
\end{align*}

\textit{Step 1a. The system of commuted equations.} For any ${(m,l)\in H}$, we commute the equations with $\nabla^m\nabla_3^l:$
\[\nabla_3\nabla^m\nabla_3^l\underline{\nu}_{ABC}=-2\nabla_{[A}\nabla^m\nabla_3^l\underline{\alpha}_{B]C}+\underline{\mathcal{F}}_{(m+1)(l-1)(m+l)}(\Psi)+\underline{\mathcal{F}}_{(m)(l)(m+l)}(\underline{\Psi}^G)+\sum_{i+2j=m-1}\nabla^i\big(\slashed{Riem}^{j+1}\nabla_3^l\underline{\alpha}\big)\]
\[\nabla_4\nabla^m\nabla_3^l\underline{\alpha}_{AB}+\bigg(\frac{1}{2}+\frac{m}{n}\bigg)\tr\chi\nabla^m\nabla_3^l\underline{\alpha}_{AB}=-\nabla^C\nabla^m\nabla_3^l\underline{\nu}_{C(AB)}+\underline{\mathcal{F}}'_{(m)(l)(m+l)}(\Psi)+\underline{\mathcal{F}}_{(m+1)(l-1)(m+l)}(\underline{\Psi}^G)+\]\[+\underline{\mathcal{F}}_{(m)(l)(m+l)}(\underline{\Psi}^G)+\sum_{i+2j=m-1}\nabla^i\big(\slashed{Riem}^{j+1}\nabla_3^l\underline{\nu}\big)+\sum_{i=0}^{l-1}\nabla_3^{l-i}\tr\chi\nabla^m\nabla_3^{i}\underline{\alpha}\]
and we used the fact that $\nabla\psi=\nabla\psi^*.$ We also recall that $\tr\chi=\tr\chi^*+n/(v-u)$ and $\tr\underline{\chi}=\tr\underline{\chi}^*-n/(v-u)$. The last term in the second equation above can be written as:
\[\sum_{i=0}^{l-1}\nabla_3^{l-i}\tr\chi\nabla^m\nabla_3^{i}\underline{\alpha}=\underline{\mathcal{F}}'_{(m)(l)(m+l)}(\Psi)+O\bigg(\sum_{i=0}^{l-1}\frac{\big|\nabla^m\nabla_3^{i}\underline{\alpha}\big|}{(v-u)^{l-i+1}}\bigg).\]
We conjugate the equations with $w_{ml}=v^{\frac{3}{2}+m+l-p}|u|^{p-q}:$
\[\nabla_3w_{ml}\nabla^m\nabla_3^l\underline{\nu}_{ABC}+\frac{p-q}{|u|}w_{ml}\nabla^m\nabla_3^l\underline{\nu}_{ABC}=-2\nabla_{[A}w_{ml}\nabla^m\nabla_3^l\underline{\alpha}_{B]C}+w_{ml}Err_{ml}^{\underline{\nu}}\]
\[\nabla_4w_{ml}\nabla^m\nabla_3^l\underline{\alpha}_{AB}+\bigg(\frac{n-3}{2}-l+p\bigg)\frac{w_{ml}}{v}\nabla^m\nabla_3^l\underline{\alpha}_{AB}=-\nabla^Cw_{ml}\nabla^m\nabla_3^l\underline{\nu}_{C(AB)}+\]\[+w_{ml}Err_{ml}^{\underline{\alpha}}+O\bigg(\underline{v}\cdot\frac{w_{ml}}{v}\cdot\big|\nabla^m\nabla_3^l\underline{\alpha}\big|\bigg)\]
where we have the error terms:
\begin{align*}
    Err_{ml}^{\underline{\nu}}&=\underline{\mathcal{F}}_{(m)(l)(m+l)}(\Psi)+\underline{\mathcal{F}}_{(m+1)(l-1)(m+l)}(\Psi)+\sum_{i+2j=m-1}\nabla^i\big(\slashed{Riem}^{j+1}\nabla_3^l\underline{\alpha}\big)\\
    Err_{ml}^{\underline{\alpha}}&=\underline{\mathcal{F}}_{(m)(l)(m+l)}(\underline{\Psi}^G)+\underline{\mathcal{F}}_{(m+1)(l-1)(m+l)}(\underline{\Psi}^G)+\underline{\mathcal{F}}'_{(m)(l)(m+l)}(\Psi)\\
    &+\sum_{i+2j=m-1}\nabla^i\big(\slashed{Riem}^{j+1}\nabla_3^l\underline{\nu}\big)+O\bigg(\sum_{i=0}^{l-1}\frac{\big|\nabla^m\nabla_3^{i}\underline{\alpha}\big|}{(v-u)^{l-i+1}}\bigg)
\end{align*}

\textit{Step 1b. The energy estimates for $(\underline{\nu},\underline{\alpha}).$} For simplicity, we denote $\mathcal{D}=w_{ml}\nabla^m\nabla_3^l$. We notice that integration by parts and the bootstrap assumption give:
\begin{align*}
    &\int_{S^n}\Big(-\nabla_{[A}\mathcal{D}\underline{\alpha}_{B]C}\cdot\mathcal{D}\underline{\nu}^{ABC}-\mathcal{D}\underline{\alpha}^{AB}\cdot\nabla^C\mathcal{D}\underline{\nu}_{C(AB)}\Big)d\mathring{Vol}=\\
    &=\int_{S^n}\nabla_A\mathcal{D}\underline{\alpha}_{BC}\cdot\mathcal{D}\underline{\nu}^{A[BC]}d\mathring{Vol}+O\bigg(\int_{S^n}\frac{1}{v}\big|\mathcal{D}\underline{\alpha}\big|\cdot\big|\mathcal{D}\underline{\nu}\big|d\mathring{Vol}\bigg)=O\bigg(v^{-1}\int_{S^n}\big|\mathcal{D}\underline{\alpha}\big|\cdot\big|\mathcal{D}\underline{\nu}\big|d\mathring{Vol}\bigg)
\end{align*}
We use this identity in order to prove energy estimates for the Bianchi pair $(\underline{\nu},\underline{\alpha}).$ We contract the equation for $\mathcal{D}\underline{\nu}$ with $\frac{1}{2}\mathcal{D}\underline{\alpha},$ we contract the equation for $\mathcal{D}\underline{\alpha}$ with $\mathcal{D}\underline{\nu},$ and add the resulting equations. We integrate by parts, then multiply everything by $|u|^{2q}.$ Finally, we use Cauchy-Schwarz, the positive sign bulk terms, and the fact that $|u|/v\leq\underline{v}\ll1$ in order to absorb some of the error terms. We obtain the energy estimate in region~III for $v<\underline{v}$:
\[|u|^{2q}\int_{-v\underline{v}}^u\int_{S^n}w_{ml}^2\big|\nabla^m\nabla_3^l\underline{\alpha}\big|^2d\mathring{Vol}d\hat{u}+|u|^{2q}\int_{-v\underline{v}}^u\int_{-\frac{\hat{u}}{\underline{v}}}^v\int_{S^n}\frac{w_{ml}^2}{\hat{v}}\big|\nabla^m\nabla_3^l\underline{\alpha}\big|^2d\mathring{Vol}d\hat{v}d\hat{u}+\]
\[+|u|^{2q}\int_{-\frac{u}{\underline{v}}}^v\int_{S^n}w_{ml}^2\big|\nabla^m\nabla_3^l\underline{\nu}\big|^2d\mathring{Vol}d\hat{v}+|u|^{2q}\int_{-v\underline{v}}^u\int_{-\frac{\hat{u}}{\underline{v}}}^v\int_{S^n}\frac{w_{ml}^2}{|\hat{u}|}\big|\nabla^m\nabla_3^l\underline{\nu}\big|^2d\mathring{Vol}d\hat{v}d\hat{u}\lesssim\]\[\lesssim \underline{v}^{2p-2q}|u|^{2q}\int_{-\frac{u}{\underline{v}}}^v\int_{S^n}\hat{v}^{3+2m+2l-2q}\Big(\big|\nabla^m\nabla_3^l\underline{\alpha}\big|^2+\big|\nabla^m\nabla_3^l\underline{\nu}\big|^2\Big)(-\underline{v}\hat{v},\hat{v})d\mathring{Vol}d\hat{v} +\]\[+|u|^{2q}\int_{-v\underline{v}}^u\int_{-\frac{\hat{u}}{\underline{v}}}^v\int_{S^n}\Big(\hat{v}w_{ml}^2\big|Err_{ml}^{\underline{\alpha}}\big|^2+|\hat{u}|w_{ml}^2\big|Err_{ml}^{\underline{\nu}}\big|^2\Big)d\mathring{Vol}d\hat{v}d\hat{u}\]
We bound the data term on $\{u=-\underline{v}v\}$ using Proposition \ref{region II bounds proposition} by:
\[\underline{v}^{2p-2q}|u|^{2q}\int_{-\frac{u}{\underline{v}}}^v\int_{S^n}\hat{v}^{3+2m+2l-2q}\Big(\big|\nabla^m\nabla_3^l\underline{\alpha}\big|^2+\big|\nabla^m\nabla_3^l\underline{\nu}\big|^2\Big)(-\underline{v}\hat{v},\hat{v})\lesssim\underline{v}^{2p-2q}|u|^{2q}\int_{-\frac{u}{\underline{v}}}^v(\epsilon')^2\hat{v}^{-1-2q}d\hat{v}\lesssim\underline{v}^{2p}(\epsilon')^2\]
Here we notice the importance of the factor $v^{-q}$ in the definition of $w_{ml}$, needed to avoid logarithmic degeneracy. 

\textit{Step 1c. Bounding the error terms.} We bound the error terms one by one. We remark that:
\[\sum_{(m,l)\in H}w_{ml}^2\big|\underline{\mathcal{F}}_{(m+1)(l-1)(m+l)}(\Psi)\big|^2\lesssim\sum_{(m,l)\in H}w_{ml}^2\big|\underline{\mathcal{F}}_{(m)(l)(m+l)}(\Psi)\big|^2,\]
so bounding the first terms in $Err_{ml}^{\underline{\alpha}}$ and $Err_{ml}^{\underline{\nu}}$ will also imply control of the second terms once summing. For any $0\leq i\leq l-1,$ we have $(m,i)\in L$ and we get the bound:
\[|u|^{2q}\int_{-v\underline{v}}^u\int_{-\frac{\hat{u}}{\underline{v}}}^v\int_{S^n}\hat{v}w_{ml}^2\frac{\big|\nabla^m\nabla_3^{i}\underline{\alpha}\big|^2}{(\hat{v}-\hat{u})^{2l-2i+2}}\lesssim A\epsilon'^2\cdot|u|^{2q}\int_{-v\underline{v}}^u\int_{-\frac{\hat{u}}{\underline{v}}}^vw_{ml}^2|\hat{u}|^{-2p}\hat{v}^{-5-2m-2l+2p}\lesssim A\underline{v}\cdot\epsilon'^2\]
Next, we have the bound using the Sobolev inequalities:
\[|u|^{2q}\int_{-v\underline{v}}^u\int_{-\frac{\hat{u}}{\underline{v}}}^v\int_{S^n}\hat{v}w_{ml}^2\big|\underline{\mathcal{F}}_{(m)(l)(m+l)}(\underline{\Psi}^G)\big|^2\lesssim\sum_{\substack{i+j+k\leq m+l \\ i\leq l,k\leq m}}|u|^{2q}\int_{-v\underline{v}}^u\int_{-\frac{\hat{u}}{\underline{v}}}^v\int_{S^n}\hat{v}w_{ml}^2\big|\nabla^k\nabla_3^i\big(\psi^{j+1}\underline{\Psi}^G\big)\big|^2\]
\[\lesssim\sum_{\substack{|i|+j+|k|\leq m+l \\ |i|\leq l,|k|\leq m}}|u|^{2q}\int_{-v\underline{v}}^u\int_{-\frac{\hat{u}}{\underline{v}}}^v\int_{S^n}\frac{|\hat{u}|}{\hat{v}}\cdot\hat{v}^{3+2i_0+2k_0-2p}|\hat{u}|^{2p-2q-1}\Big|\nabla^{k_0}\nabla_3^{i_0}\underline{\Psi}^G\Big|\cdot\prod_{a=1}^{j+1}\hat{v}^{2+2i_a+2k_a}\Big|\nabla^{k_a}\nabla_3^{i_a}\psi\Big|\]
\[\lesssim|u|^{2q}\int_{-v\underline{v}}^u\int_{-\frac{\hat{u}}{\underline{v}}}^v\frac{|\hat{u}|}{\hat{v}}\sum_{(k_0,i_0)\in H}\hat{v}^{3+2i_0-2p}|\hat{u}|^{2p-2q-1}\Big\|\nabla_3^{i_0}\underline{\Psi}^G\Big\|_{\widetilde{H}^{k_0}(S_{\hat{u},\hat{v}})}^2\cdot\prod_{a=1}^{j+1}\sum_{(k_a,i_a)\in H}\hat{v}^{2+2i_a}\Big\|\nabla_3^{i_a}\psi\Big\|_{\widetilde{H}^{k_a}(S_{\hat{u},\hat{v}})}^2\]
\[\lesssim\underline{v}\cdot\sum_{(k,i)\in H}\big\|\underline{\Psi}^G\big\|^2_{\mathcal{C}_{k,i}}\lesssim\underline{v}\cdot A\epsilon'^2\]
Following the same steps, we also have that:
\[|u|^{2q}\int_{-v\underline{v}}^u\int_{-\frac{\hat{u}}{\underline{v}}}^v\int_{S^n}|\hat{u}|w_{ml}^2\big|\underline{\mathcal{F}}_{(m)(l)(m+l)}(\Psi)\big|^2d\mathring{Vol}d\hat{v}d\hat{u}\lesssim\]
\[\lesssim|u|^{2q}\int_{-v\underline{v}}^u\int_{-\frac{\hat{u}}{\underline{v}}}^v\frac{|\hat{u}|}{\hat{v}}\sum_{(k_0,i_0)\in H}\hat{v}^{2+2i_0-2p}|\hat{u}|^{2p-2q}\Big\|\nabla_3^{i_0}\Psi\Big\|_{\widetilde{H}^{k_0}(S_{\hat{u},\hat{v}})}^2\cdot\prod_{a=1}^{j+1}\sum_{(k_a,i_a)\in H}\hat{v}^{2+2i_a}\Big\|\nabla_3^{i_a}\psi\Big\|_{\widetilde{H}^{k_a}(S_{\hat{u},\hat{v}})}^2\]
\[\lesssim\underline{v}\cdot\sum_{(k,i)\in H}\big\|\Psi\big\|^2_{\mathcal{C}_{k,i}}\lesssim\underline{v}\cdot A\epsilon'^2\]
Similarly to the first error term, we have the estimate:
\[|u|^{2q}\int_{-v\underline{v}}^u\int_{-\frac{\hat{u}}{\underline{v}}}^v\int_{S^n}|\hat{v}|w_{ml}^2\big|\underline{\mathcal{F}}'_{(m)(l)(m+l)}(\Psi)\big|^2d\mathring{Vol}d\hat{v}d\hat{u}\lesssim\]
\[\lesssim|u|^{2q}\int_{-v\underline{v}}^u\int_{-\frac{\hat{u}}{\underline{v}}}^v\sum_{(k_0,i_0)}\hat{v}^{2+2i_0-2p}|\hat{u}|^{2p-2q}\Big\|\nabla_3^{i_0}\Psi\Big\|_{\widetilde{H}^{k_0}}^2\sum_{(k_1,i_1)}\hat{v}^{2+2i_1}\Big\|\nabla_3^{i_1}\psi^*\Big\|_{\widetilde{H}^{k_1}}^2\prod_{a=2}^{j+1}\sum_{(k_a,i_a)}\hat{v}^{2+2i_a}\Big\|\nabla_3^{i_a}\psi\Big\|_{\widetilde{H}^{k_a}}^2\]\[\lesssim\mathcal{R}_{u,v}\cdot\sum_{(k,i)\in H}\big\|\Psi\big\|^2_{\mathcal{C}_{k,i}}\lesssim A^2\epsilon'^4\]
where the indices in the above sums satisfy $(k_0,i_0),(k_1,i_1),(k_a,i_a)\in H.$ Next, we use the fact that $\slashed{Riem}=R+\psi\psi$ in order to bound:
\[|u|^{2q}\int_{-v\underline{v}}^u\int_{-\frac{\hat{u}}{\underline{v}}}^v\int_{S^n}|\hat{v}|w_{ml}^2\bigg|\sum_{k+2j=m-1}\nabla^k\big(\slashed{Riem}^{j+1}\nabla_3^l\underline{\Psi}^G\big)\bigg|^2d\mathring{Vol}d\hat{v}d\hat{u}\lesssim\]
\[\lesssim|u|^{2q}\int_{-v\underline{v}}^u\int_{-\frac{\hat{u}}{\underline{v}}}^v\sum_{(k_0,l)\in H}\hat{v}^{2+2l-2p}|\hat{u}|^{2p-2q}\Big\|\nabla_3^{l}\underline{\Psi}^G\Big\|_{\widetilde{H}^{k_0}(S_{\hat{u},\hat{v}})}^2\prod_{a=1}^{j+1}\bigg(\sum_{(k_a,0)\in L}\hat{v}^{4}\Big\|\slashed{Riem}\Big\|_{\widetilde{H}^{k_a}(S_{\hat{u},\hat{v}})}^2\bigg)d\hat{v}d\hat{u}\]
\[\lesssim\underline{v}\cdot\sum_{(k_0,l)\in H}\big\|\underline{\Psi}^G\big\|^2_{\mathcal{C}_{k_0,l}}\lesssim\underline{v}\cdot A\epsilon'^2\]
As before, a very simple modification of this argument allows us to also bound the corresponding term with $\slashed{Riem}$ in $Err_{ml}^{\underline{\nu}}.$ This completes bounding the error terms for our first energy estimate. In particular, we improved the bootstrap assumption for $\big\|\underline{\alpha}\big\|^2_{\mathcal{C}_{m,l}}$ and the last two terms in $\big\|\underline{\nu}\big\|^2_{\mathcal{C}_{m,l}}$. We point out that we already have good control of the bulk term for $\underline{\nu}$, which will help us in the next energy estimate.

\textit{Step 2. The Bianchi pair $(R,\underline{\nu}).$} By signature considerations, the equations can be written as:
\[\nabla_3R_{ABCD}=-2\nabla_{[A}\underline{\nu}_{|CD|B]}+\psi\Psi,\ \nabla_4\underline{\nu}_{ABC}=-2\nabla_{[A}\tau_{B]C}+\psi\underline{\Psi}^G\]

\textit{Step 2a. The system of commuted equations.} For any ${(m,l)\in H}$, we commute the equations with $\nabla^m\nabla_3^l:$
\[\nabla_3\nabla^m\nabla_3^lR_{ABCD}=-2\nabla_{[A}\nabla^m\nabla_3^l\underline{\nu}_{|CD|B]}+\underline{\mathcal{F}}_{(m+1)(l-1)(m+l)}(\Psi)+\underline{\mathcal{F}}_{(m)(l)(m+l)}(\Psi)+\sum_{i+2j=m-1}\nabla^i\big(\slashed{Riem}^{j+1}\nabla_3^l\underline{\Psi}^G\big)\]
\[\nabla_4\nabla^m\nabla_3^l\underline{\nu}_{ABC}=-2\nabla_{[A}\nabla^m\nabla_3^l\tau_{B]C}+\underline{\mathcal{F}}_{(m+1)(l-1)(m+l)}(\underline{\Psi}^G)+\underline{\mathcal{F}}_{(m)(l)(m+l)}(\underline{\Psi}^G)+\sum_{i+2j=m-1}\nabla^i\big(\slashed{Riem}^{j+1}\nabla_3^l\underline{\Psi}^G\big)\]
We conjugate the equations with $w_{ml}=v^{\frac{3}{2}+m+l-p}|u|^{p-q}:$
\[\nabla_3w_{ml}\nabla^m\nabla_3^lR_{ABCD}+\frac{p-q}{|u|}w_{ml}\nabla^m\nabla_3^lR_{ABCD}=-2\nabla_{[A}w_{ml}\nabla^m\nabla_3^l\underline{\nu}_{|CD|B]}+w_{ml}Err_{ml}^{R}\]
\[\nabla_4w_{ml}\nabla^m\nabla_3^l\underline{\nu}_{ABC}=-2\nabla_{[A}w_{ml}\nabla^m\nabla_3^l\tau_{B]C}+w_{ml}\underline{Err}_{ml}^{\underline{\nu}}+O\bigg(\frac{w_{ml}}{v}\cdot\big|\nabla^m\nabla_3^l\underline{\nu}\big|\bigg)\]
where we have the error terms:
\begin{align*}
    Err_{ml}^{R}&=\underline{\mathcal{F}}_{(m)(l)(m+l)}(\Psi)+\underline{\mathcal{F}}_{(m+1)(l-1)(m+l)}(\Psi)+\sum_{i+2j=m-1}\nabla^i\big(\slashed{Riem}^{j+1}\nabla_3^l\underline{\Psi}^G\big)\\
    \underline{Err}_{ml}^{\underline{\nu}}&=\underline{\mathcal{F}}_{(m)(l)(m+l)}(\underline{\Psi}^G)+\underline{\mathcal{F}}_{(m+1)(l-1)(m+l)}(\underline{\Psi}^G)+\sum_{i+2j=m-1}\nabla^i\big(\slashed{Riem}^{j+1}\nabla_3^l\underline{\Psi}^G\big)
\end{align*}

\textit{Step 2b. The energy estimates for $(R,\underline{\nu}).$} We denote $\mathcal{D}=w_{ml}\nabla^m\nabla_3^l$. We notice that integration by parts and the constraint equations give:
\begin{align*}
    &\int_{S^n}\Big(-2\nabla_{[A}\mathcal{D}\underline{\nu}_{|CD|B]}\cdot\mathcal{D}R^{ABCD}-4\mathcal{D}\underline{\nu}^{ABC}\cdot\nabla_{[A}\mathcal{D}\tau_{B]C}\Big)d\mathring{Vol}=\\
    &=\int_{S^n}2\mathcal{D}\underline{\nu}^{ABC}\cdot\Big(\nabla^D\mathcal{D}R_{DCAB}-2\nabla_{[A}\mathcal{D}\tau_{B]C}\Big)d\mathring{Vol}+O\bigg(v^{-1}\int_{S^n}\big|\mathcal{D}R\big|\cdot\big|\mathcal{D}\underline{\nu}\big|d\mathring{Vol}\bigg)\\
    &=O\bigg(\int_{S^n}\big|\mathcal{D}\underline{\nu}\big|\cdot\Big(v^{-1}\big|\mathcal{D}R\big|+w_{ml}\big|\underline{Err}_{ml}^{\underline{\nu}}\big|\Big)d\mathring{Vol}\bigg)
\end{align*}
Proceeding as before, we obtain the energy estimate in region~III for $v<\underline{v}$:
\[|u|^{2q}\int_{-v\underline{v}}^u\int_{S^n}w_{ml}^2\big|\nabla^m\nabla_3^l\underline{\nu}\big|^2d\mathring{Vol}d\hat{u}+\int_{-\frac{u}{\underline{v}}}^v\int_{S^n}w_{ml}^2\big|\nabla^m\nabla_3^lR\big|^2d\mathring{Vol}d\hat{v}+|u|^{2q}\int_{-v\underline{v}}^u\int_{-\frac{\hat{u}}{\underline{v}}}^v\int_{S^n}\frac{w_{ml}^2}{|\hat{u}|}\big|\nabla^m\nabla_3^lR\big|^2\lesssim\]
\[\lesssim\underline{v}^{2p-2q}|u|^{2q}\int_{-\frac{u}{\underline{v}}}^v\int_{S^n}\hat{v}^{3+2m+2l-2q}\Big(\big|\nabla^m\nabla_3^lR\big|^2+\big|\nabla^m\nabla_3^l\underline{\nu}\big|^2\Big)(-\underline{v}\hat{v},\hat{v})d\mathring{Vol}d\hat{v}+\]
\[+ |u|^{2q}\int_{-v\underline{v}}^u\int_{-\frac{\hat{u}}{\underline{v}}}^v\int_{S^n}\bigg(|\hat{u}|w_{ml}^2\big|Err_{ml}^{R}\big|^2+w_{ml}^2\big|\nabla^m\nabla_3^l\underline{\nu}\big|\cdot\big|\underline{Err}_{ml}^{\underline{\nu}}\big|+\frac{w_{ml}^2}{\hat{v}}\big|\nabla^m\nabla_3^l\underline{\nu}\big|^2\bigg)d\mathring{Vol}d\hat{v}d\hat{u}\]
\[\lesssim\underline{v}^{2p}\epsilon'^2+\underline{v}\cdot A\epsilon'^2+|u|^{2q}\int_{-v\underline{v}}^u\int_{-\frac{\hat{u}}{\underline{v}}}^v\int_{S^n}\Big(|\hat{u}|w_{ml}^2\big|Err_{ml}^{R}\big|^2+\hat{v}w_{ml}^2\big|\underline{Err}_{ml}^{\underline{\nu}}\big|^2\Big)d\mathring{Vol}d\hat{v}d\hat{u},\]
where we used the fact that we already control the bulk term for $\underline{\nu}$ from the previous energy estimate, and we bounded the data term as before. Moreover, we remark that each term of $\hat{v}w_{ml}^2\big|\underline{Err}_{ml}^{\underline{\nu}}\big|^2$ is contained in the error terms of $\hat{v}w_{ml}^2\big|Err_{ml}^{\underline{\alpha}}\big|^2,$ and similarly each term of $|\hat{u}|w_{ml}^2\big|Err_{ml}^{R}\big|^2$ is contained in the error terms of $|\hat{u}|w_{ml}^2\big|Err_{ml}^{\underline{\nu}}\big|^2.$ Therefore, the bounds on the error terms in the previous energy estimate also allow us to bound the right hand side of the above estimate by $\underline{v}\cdot A\epsilon'^2+\epsilon'^2$. This improves the bootstrap assumption for $\big\|\underline{\nu}\big\|^2_{\mathcal{C}_{m,l}}$ and the last two terms in $\big\|R\big\|^2_{\mathcal{C}_{m,l}}$.

The same argument applies in the case of the Bianchi pairs $(R,\nu)$ and $(\nu,\alpha),$ where at each step we use the control of the bulk terms from the previous step and proceed as above. As a result, we can improve the bootstrap assumption on the high regularity curvature norm and prove that:
\[\mathcal{C}_{\widetilde{u},\widetilde{v}}\lesssim\epsilon'^2+\underline{v}\cdot A\epsilon'^2.\]

\subsubsection{Estimates for Low Regularity Curvature Components}

In this section we improve the bootstrap assumption on the low regularity curvature components by proving:
\begin{proposition}\label{low regularity curvature proposition}
    There exists a constant $C\ll A$, such that we have the improved estimate:
    \[\mathcal{L}_{\widetilde{u},\widetilde{v}}\leq C\epsilon'^2.\]
\end{proposition}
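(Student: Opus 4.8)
The plan is to derive the low--regularity curvature bounds from the already--improved high--regularity norm $\mathcal{C}_{\widetilde{u},\widetilde{v}}$ of Proposition~\ref{high regularity curvature proposition} together with the region~II bounds of Proposition~\ref{region II bounds proposition}, by treating the Bianchi equations of Proposition~\ref{Bianchi equation proposition} as transport equations. The norms $\|\cdot\|_{\mathcal{L}_{m,l}}$ are pointwise in $(u,v)$ and $L^2$ on the spheres, and the index set $L$ carries exactly one fewer angular derivative than $H$; this is the crucial point, since for $(m,l)\in L$ one has $(m+1,l)\in H$, so the single angular derivative of a curvature component produced on the right--hand side of a Bianchi equation commuted with $\nabla^m\nabla_3^l$ always sits inside $\mathcal{C}_{\widetilde{u},\widetilde{v}}$. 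By self--similarity we restrict to $v\le\underline{v}$.

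For the components $\alpha,\nu,R,\underline{\nu}$ --- and $\tau$, recovered algebraically from $R$ via $\tau_{AB}=\slashed{g}^{CD}R_{CADB}-\slashed{Ric}_{AB}$ --- I would use the $\nabla_3$--Bianchi equations, commuting with $\nabla^m\nabla_3^l$ for $(m,l)\in L$, conjugating by the weight appearing in $\|\cdot\|_{\mathcal{L}_{m,l}}$, and integrating in $u$ over the short interval $[-v\underline{v},u]$ from the region~II data on $\{u=-v\underline{v}\}$. The transport coefficients $\tfrac12\tr\underline{\chi}$, $\tfrac{2}{n}\tr\underline{\chi}$, together with the $\tr\underline{\chi}$--terms generated by commuting $\nabla^m$, all behave like $-\tfrac{c}{v}$ and integrate over this short $u$--interval to a harmless factor $e^{O(\underline{v})}\lesssim1$; a Cauchy--Schwarz in $u$ produces a gain of $(v\underline{v})^{(1-2p)/2}$; and the $\nabla^{m+1}$--term on the right--hand side --- including the $\nabla^{m+1}\nabla_3^l\underline{\alpha}$ appearing in the equation for $\underline{\nu}$ --- is absorbed into the constant--$v$ $u$--integral term of $\mathcal{C}_{\widetilde{u},\widetilde{v}}$ (where the $|u|^{p}$ weight of $\underline{\alpha}$ is exactly consumed). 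The remaining nonlinear errors (the $\mathcal{E}^{(3)}$--terms, the $\slashed{Riem}$--commutators, and the $\tr\chi^*,\tr\underline{\chi}^*$ contributions) are estimated as in Proposition~\ref{high regularity curvature proposition} via the bootstrap assumptions and the weighted Sobolev product inequalities, each gaining a power of $\underline{v}$ or of $\epsilon'$.

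The main obstacle is $\underline{\alpha}$, the one curvature component with no $\nabla_3$--Bianchi equation and the one that is mildly singular at $\{u=0\}$, as witnessed by the $|u|^{2p}v^{-2p}$ weight in $\|\underline{\alpha}\|_{\mathcal{L}_{m,l}}$. Here I would use the null--structure equation $\underline{\alpha}_{AB}=-\nabla_3\underline{\hat{\chi}}_{AB}-\tfrac{2}{n}\tr\underline{\chi}\,\underline{\hat{\chi}}_{AB}+\underline{\hat{\chi}}\cdot\underline{\hat{\chi}}$ to trade $\nabla^m\nabla_3^l\underline{\alpha}$ for $\nabla^m\nabla_3^{l+1}\underline{\hat{\chi}}^*$ plus lower--order and quadratic terms: for $l<\tfrac{n-4}{2}$ the former is covered by the (bootstrapped) Ricci--coefficient estimate, while for $l=\tfrac{n-4}{2}$ it is precisely the $\nabla^m\nabla_3^{(n-2)/2}\psi^*$ bound carrying the $|u|^{-p}$ weight recorded in Proposition~\ref{region III bounds proposition}, which reproduces the asserted singular behavior. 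An alternative is to integrate the commuted $\nabla_4$--Bianchi equation $\nabla_4\underline{\alpha}_{AB}=-\nabla^C\underline{\nu}_{C(AB)}+\mathcal{E}_2^{(4)}$ in $v$, conjugated by $(v-u)^m$ to remove the commutator transport term, from the region~II data on $\{v=-u\underline{v}^{-1}\}$ --- this slice being exactly the lower endpoint $-u/\underline{v}$ of $\mathcal{C}$'s constant--$u$ $v$--integral, so the weights line up --- bounding the $\nabla^{m+1}\nabla_3^l\underline{\nu}$ source by the $\mathcal{L}$--bootstrap bound for $\underline{\nu}$ and gaining $\underline{v}^{p}$. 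The delicate point in either route is the bookkeeping of the $|u|^{p}$ versus $v^{-p}$ powers against the $\mathcal{C}$--weights and, for the $\nabla_4$--transport, the avoidance of a logarithmic loss in the long $v$--direction; this is the heart of the proposition, everything else reducing to the short--direction transport estimates above. Summing the contributions over $(m,l)\in L$ yields $\mathcal{L}_{\widetilde{u},\widetilde{v}}\le C\epsilon'^2+\underline{v}^{c}A\epsilon'^2$ with $C\ll A$ and some $c>0$; since $\underline{v}$ has been chosen small, the second term is absorbed and the bootstrap is improved.
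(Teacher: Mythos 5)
Your plan for $\underline{\Psi}^G$ is essentially the paper's: the paper also commutes the $\nabla_3$--Bianchi equations (schematically $\nabla_3 v^{2+m+l}\nabla^m\nabla_3^l\underline{\Psi}^G = v^{2+m+l}Err_{ml}$) with $\nabla^m\nabla_3^l$ for $(m,l)\in L$, integrates in $u$ from the region~II data on $\{u=-v\underline{v}\}$ via the transport lemma, and absorbs the $\nabla^{m+1}\nabla_3^l\Psi^G$ source into $\mathcal{C}_{\widetilde{u},\widetilde{v}}$ using the fact that $(m+1,l)\in H$ gains a factor of $\underline{v}$.

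For $\underline{\alpha}$, however, you offer two routes and neither is quite right as written. Route~(a) --- trading $\nabla^m\nabla_3^l\underline{\alpha}$ for $\nabla^m\nabla_3^{l+1}\underline{\hat{\chi}}^*$ --- is circular at $l=\tfrac{n-4}{2}$: the estimate on $\nabla_3^{(n-2)/2}\psi^*$ recorded after Proposition~\ref{Ricci coefficients proposition} is itself derived by substituting $\nabla_3\psi^* = \Psi + \psi\psi^*$ and then invoking Proposition~\ref{low regularity curvature proposition}, precisely the statement you are trying to prove. Route~(b) is the paper's actual strategy, but two details are off. First, the conjugation weight must be the full $\mathcal{L}$--weight with a small negative shift, $x_{ml}=v^{2+m+l-p}|u|^p$, not just $(v-u)^m$: the $-p$ produces the good--sign bulk term $\int\frac{x_{ml}^2}{\hat v}|\nabla^m\nabla_3^l\underline{\alpha}|^2$ upon contracting with $x_{ml}\nabla^m\nabla_3^l\underline{\alpha}$, and without it the $v$--integral over $[-u/\underline{v},v]$ loses a logarithm (since the estimates reduce to $\int\hat v^{-1}\,d\hat v$). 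Second, the source $\nabla^{m+1}\nabla_3^l\underline{\nu}$ cannot be bounded by the $\mathcal{L}$--bootstrap for $\underline{\nu}$: for the top $m$ in $L$ one has $(m+1,l)\in H\setminus L$, and even when $(m+1,l)\in L$ a pointwise bound integrated in $v$ again degenerates logarithmically. What saves the day is exactly the observation you make in your \emph{first} paragraph but fail to apply here: $\hat v\, x_{ml}^2 = |u|^{2q}w_{(m+1)l}^2$, so after Cauchy--Schwarz the source integral $\int\hat v\,x_{ml}^2|\nabla^{m+1}\nabla_3^l\underline{\nu}|^2\,d\hat v$ is precisely the constant--$u$ piece of $\|\underline{\nu}\|^2_{\mathcal{C}_{(m+1),l}}$, which Proposition~\ref{high regularity curvature proposition} has already improved. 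This absorption into the \emph{high}--regularity norm --- not the $\mathcal{L}$--bootstrap --- is the point you should emphasize, and together with the good bulk term it closes the estimate for $\underline{\alpha}$.
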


\textit{Step 1. Improving the bounds for $\underline{\alpha}$.} For any ${(m,l)\in L}$, we consider the equation satisfied by $\underline{\alpha}$ conjugated by $x_{ml}=v^{2+m+l-p}|u|^{p}:$
\[\nabla_4x_{ml}\nabla^m\nabla_3^l\underline{\alpha}_{AB}+\bigg(\frac{n-4}{2}-l+p\bigg)\frac{x_{ml}}{v}\nabla^m\nabla_3^l\underline{\alpha}_{AB}=\]\[=x_{ml}\nabla^{m+1}\nabla_3^l\underline{\nu}+x_{ml}\widetilde{Err}_{ml}^{\underline{\alpha}}+O\bigg(\sum_{i=0}^{l-1}\frac{x_{mi}}{v}\cdot\big|\nabla^m\nabla_3^{i}\underline{\alpha}\big|+\underline{v}\cdot\frac{x_{ml}}{v}\cdot\big|\nabla^m\nabla_3^{l}\underline{\alpha}\big|\bigg)\]
where the implicit constant in the last term is independent of $p,$ and the error term is defined by:
\[\widetilde{Err}_{ml}^{\underline{\alpha}}=\underline{\mathcal{F}}_{(m)(l)(m+l)}(\underline{\Psi}^G)+\underline{\mathcal{F}}_{(m+1)(l-1)(m+l)}(\underline{\Psi}^G)+\underline{\mathcal{F}}'_{(m)(l)(m+l)}(\Psi)+\sum_{i+2j=m-1}\nabla^i\big(\slashed{Riem}^{j+1}\nabla_3^l\underline{\nu}\big)\]

\textit{Step 1a. The energy estimates for $\underline{\alpha}$.} We contract the above equation by $x_{ml}\nabla^m\nabla_3^l\underline{\alpha},$ and integrate in $v.$ We use the good bulk term obtained and Cauchy-Schwarz, together with a brief induction on $l$ argument needed to bound the second to last term. Summing for all $(m,l)\in L,$ we obtain the estimate in region~III for $v<\underline{v}$:
\[\sum_{(m,l)\in L}\int_{S^n}x_{ml}^2\big|\nabla^m\nabla_3^l\underline{\alpha}\big|^2(u,v)d\mathring{Vol}+\sum_{(m,l)\in L}\int_{-\frac{u}{\underline{v}}}^v\int_{S^n}\frac{x_{ml}^2}{\hat{v}}\big|\nabla^m\nabla_3^l\underline{\alpha}\big|^2(u,\hat{v})d\mathring{Vol}d\hat{v}\lesssim\]
\[\lesssim\sum_{(m,l)\in L}\int_{S^n}x_{ml}^2\big|\nabla^m\nabla_3^l\underline{\alpha}\big|^2(u,-\underline{v}^{-1}u)+\sum_{(m,l)\in L}\int_{-\frac{u}{\underline{v}}}^v\int_{S^n}\hat{v}x_{ml}^2\big|\nabla^{m+1}\nabla_3^l\underline{\nu}\big|^2+\sum_{(m,l)\in L}\int_{-\frac{u}{\underline{v}}}^v\int_{S^n} \hat{v}x_{ml}^2\big|\widetilde{Err}_{ml}^{\underline{\alpha}}\big|^2\]
The data term is bounded using Proposition \ref{region II bounds proposition} by:
\[\sum_{(m,l)\in L}\int_{S^n}x_{ml}^2\big|\nabla^m\nabla_3^l\underline{\alpha}\big|^2(u,-\underline{v}^{-1}u)d\mathring{Vol}\lesssim \underline{v}^{2p}\epsilon'^2.\]
Since $vx_{ml}^2=|u|^{2q}w_{(m+1)l}^2,$ we notice that the second term is bounded by $\big\|\underline{\nu}\big\|_{\mathcal{C}_{(m+1)l}}^2,$ which is controlled by $\underline{v}\cdot A\epsilon'^2$ according to Proposition \ref{high regularity curvature proposition}. We obtain the estimate:
\[\sum_{(m,l)\in L}\int_{S^n}x_{ml}^2\big|\nabla^m\nabla_3^l\underline{\alpha}\big|^2(u,v)d\mathring{Vol}+\sum_{(m,l)\in L}\int_{-\frac{u}{\underline{v}}}^v\int_{S^n}\frac{x_{ml}^2}{\hat{v}}\big|\nabla^m\nabla_3^l\underline{\alpha}\big|^2(u,\hat{v})d\mathring{Vol}d\hat{v}\lesssim\]
\[\lesssim\epsilon'^2+\underline{v}\cdot A\epsilon'^2+\sum_{(m,l)\in L}\int_{-\frac{u}{\underline{v}}}^v\int_{S^n}\bigg(\hat{v}x_{ml}^2\big|\underline{\mathcal{F}}_{(m)(l)(m+l)}(\underline{\Psi}^G)\big|^2+\hat{v}x_{ml}^2\big|\underline{\mathcal{F}}_{(m+1)(l-1)(m+l)}(\underline{\Psi}^G)\big|^2\bigg)d\mathring{Vol}d\hat{v}+\]\[+\sum_{(m,l)\in L}\int_{-\frac{u}{\underline{v}}}^v\int_{S^n} \hat{v}x_{ml}^2\big|\underline{\mathcal{F}}'_{(m)(l)(m+l)}(\Psi)\big|^2d\mathring{Vol}d\hat{v}+\sum_{(m,l)\in L}\sum_{i+2j=m-1}\int_{-\frac{u}{\underline{v}}}^v\int_{S^n}\hat{v}x_{ml}^2\Big|\nabla^i\big(\slashed{Riem}^{j+1}\nabla_3^l\underline{\nu}\big)\Big|^2d\mathring{Vol}d\hat{v}\]

\textit{Step 1b. Bounding the error terms.} As before, we remark that:
\[\sum_{(m,l)\in L}x_{ml}^2\big|\underline{\mathcal{F}}_{(m+1)(l-1)(m+l)}(\underline{\Psi}^G)\big|^2\lesssim\sum_{(m,l)\in L}x_{ml}^2\big|\underline{\mathcal{F}}_{(m)(l)(m+l)}(\underline{\Psi}^G)\big|^2\]
so bounding the first term in $\widetilde{Err}_{ml}^{\underline{\alpha}}$ will also imply control of the second term. We bound the first term:
\[|u|^{2q}\int_{-\frac{u}{\underline{v}}}^v\int_{S^n}\hat{v}^2w_{ml}^2\big|\underline{\mathcal{F}}_{(m)(l)(m+l)}(\underline{\Psi}^G)\big|^2\lesssim\sum_{\substack{i+j+k\leq m+l \\ i\leq l,k\leq m}}|u|^{2q}\int_{-\frac{u}{\underline{v}}}^v\int_{S^n}\hat{v}^2w_{ml}^2\big|\nabla^k\nabla_3^i\big(\psi^{j+1}\underline{\Psi}^G\big)\big|^2\]
\[\lesssim|u|^{2q}\int_{-\frac{u}{\underline{v}}}^v\sum_{(k_0,i_0)\in L}\hat{v}^{3+2i_0-2p}|u|^{2p-2q}\Big\|\nabla_3^{i_0}\underline{\Psi}^G\Big\|_{\widetilde{H}^{k_0}(S_{u,\hat{v}})}^2\cdot\prod_{a=1}^{j+1}\sum_{(k_a,i_a)\in L}\hat{v}^{2+2i_a}\Big\|\nabla_3^{i_a}\psi\Big\|_{\widetilde{H}^{k_a}(S_{u,\hat{v}})}^2\]
\[\lesssim\sum_{(k,i)\in L}\big\|\underline{\Psi}^G\big\|^2_{\mathcal{C}_{k,i}}\lesssim\underline{v}\cdot A\epsilon'^2,\]
where we used the improved estimates from Proposition \ref{high regularity curvature proposition}. Next, we have:
\[|u|^{2q}\int_{-\frac{u}{\underline{v}}}^v\int_{S^n}\hat{v}^2w_{ml}^2\big|\underline{\mathcal{F}}'_{(m)(l)(m+l)}(\Psi)\big|^2d\mathring{Vol}d\hat{v}\lesssim\]\[|u|^{2q}\int_{-\frac{u}{\underline{v}}}^v\bigg(\sum_{(k_0,i_0)\in L}\hat{v}^{3+2i_0-2p}|u|^{2p-2q}\Big\|\nabla_3^{i_0}\Psi\Big\|_{\widetilde{H}^{k_0}}^2\sum_{(k_1,i_1)\in L}\hat{v}^{2+2i_1}\Big\|\nabla_3^{i_1}\psi^*\Big\|_{\widetilde{H}^{k_1}}^2\prod_{a=2}^{j+1}\sum_{(k_a,i_a)\in L}\hat{v}^{2+2i_a}\Big\|\nabla_3^{i_a}\psi\Big\|_{\widetilde{H}^{k_a}}^2\bigg)\]
\[\lesssim\mathcal{R}_{u,v}\cdot\sum_{(k,i)\in L}\big\|\underline{\Psi}^G\big\|^2_{\mathcal{C}_{k,i}}+A\epsilon'^2\sum_{(k,i)\in L}|u|^{2q}\int_{-\frac{u}{\underline{v}}}^v\hat{v}^{3+2i-2p}|u|^{2p-2q}\Big\|\nabla_3^{i}\underline{\alpha}\Big\|_{\widetilde{H}^k(S_{u,\hat{v}})}^2d\hat{v}\]
The second term can be absorbed on the left hand side of the estimate using the good bulk term. We bound the last error term using $\slashed{Riem}=R+\psi\psi$:
\[\sum_{i+2j=m-1}|u|^{2q}\int_{-\frac{u}{\underline{v}}}^v\int_{S^n}\hat{v}^2w_{ml}^2\Big|\nabla^i\big(\slashed{Riem}^{j+1}\nabla_3^l\underline{\nu}\big)\Big|^2d\mathring{Vol}d\hat{v}\lesssim\]
\[\lesssim|u|^{2q}\int_{-\frac{\hat{u}}{\underline{v}}}^v\sum_{(k_0,l)\in L}\hat{v}^{3+2l-2p}|\hat{u}|^{2p-2q}\Big\|\nabla_3^{l}\underline{\Psi}^G\Big\|_{\widetilde{H}^{k_0}}^2\prod_{a=1}^{j+1}\bigg(\sum_{(k_a,0)\in L}\hat{v}^{4}\Big\|\slashed{Riem}\Big\|_{\widetilde{H}^{k_a}}^2\bigg)\lesssim\sum_{(k,l)\in L}\big\|\underline{\Psi}^G\big\|^2_{\mathcal{C}_{k,l}}\lesssim\underline{v}A\epsilon'^2\]
Thus, we improved the bootstrap assumption for $\big\|\underline{\alpha}\big\|^2_{\mathcal{L}_{m,l}}$ for any $(m,l)\in L$. 

\textit{Step 2. Improving the bounds for $\underline{\Psi}^G$.} The curvature components $\underline{\Psi}^G$ satisfy the schematic equation:
\[\nabla_3v^{2+m+l}\nabla^m\nabla_3^l\underline{\Psi}^G=v^{2+m+l}Err_{ml},\]
where we have the error term:
\[Err_{ml}=\nabla^{m+1}\nabla_3^l\Psi^G+\underline{\mathcal{F}}_{(m)(l)(m+l)}(\Psi)+\underline{\mathcal{F}}_{(m+1)(l-1)(m+l)}(\Psi)\]
We use \cite[Lemma~9.6]{nakedsing} and the bound in Proposition \ref{region II bounds proposition} to obtain the estimate:
\[\sup_{(u,v)\in P_{\widetilde{u},\widetilde{v}}}\int_{S^n}v^{4+2m+2l}\big|\nabla^m\nabla_3^l\underline{\Psi}^G\big|^2d\mathring{Vol}\lesssim \epsilon'^2+\sup_{(u,v)\in P_{\widetilde{u},\widetilde{v}}}\bigg(\int_{-v\underline{v}}^uv^{2+m+l}\bigg(\int_{S^n}\big|Err_{ml}\big|^2d\mathring{Vol}\bigg)^{\frac{1}{2}}d\hat{u}\bigg)^2\]
\[\lesssim \epsilon'^2+\sup_{(u,v)\in P_{\widetilde{u},\widetilde{v}}}\underline{v}^{1-100p}\sup_{\mathring{u}\in[-v\underline{v},u]}|\mathring{u}|^{100q}\int_{-v\underline{v}}^{\mathring{u}}\int_{S^n}v^{5+2m+2l-100p}|\hat{u}|^{100p-100q}\big|Err_{ml}\big|^2d\mathring{Vol}d\hat{u}\]
Since $\Psi^G\neq\alpha,$ the first error term is bounded by:
\[\sup_{(u,v)\in P_{\widetilde{u},\widetilde{v}}}\underline{v}^{1-100p}\sup_{\mathring{u}\in[-v\underline{v},u]}|\mathring{u}|^{100q}\int_{-v\underline{v}}^{\mathring{u}}\int_{S^n}v^{5+2m+2l-100p}|\hat{u}|^{100p-100q}\big|\nabla^{m+1}\nabla_3^l\Psi^G\big|^2d\mathring{Vol}d\hat{u}\lesssim\underline{v}^{1-100p}\mathcal{C}_{\widetilde{u},\widetilde{v}}\]
Arguing as before, bounding the second error term will also imply bounds for the third error term once summing. We have the bound:
\[\underline{v}^{1-100p}\sup_{\mathring{u}\in[-v\underline{v},u]}|\mathring{u}|^{100q}\int_{-v\underline{v}}^{\mathring{u}}\int_{S^n}v^{5+2m+2l-100p}|\hat{u}|^{100p-100q}\big|\underline{\mathcal{F}}_{(m)(l)(m+l)}(\Psi)\big|^2d\mathring{Vol}d\hat{u}\]
\[\lesssim\underline{v}^{1-100p}\sup_{\mathring{u}\in[-v\underline{v},u]}|\mathring{u}|^{100q}\int_{-v\underline{v}}^{\mathring{u}}\sum_{(k_0,i_0)\in L}v^{3+2i_0-100p}|\hat{u}|^{100p-100q}\Big\|\nabla_3^{i_0}\Psi\Big\|_{\widetilde{H}^{k_0}}^2\prod_{a=1}^{j+1}\sum_{(k_a,i_a)\in L}\hat{v}^{2+2i_a}\Big\|\nabla_3^{i_a}\psi\Big\|_{\widetilde{H}^{k_a}}^2\]
\[\lesssim\mathcal{L}_{u,v}\cdot\underline{v}^{1-100p}\sup_{\mathring{u}\in[-v\underline{v},u]}|\mathring{u}|^{100q}\int_{-v\underline{v}}^{\mathring{u}}v^{-1-98p}|\hat{u}|^{98p-100q}d\hat{u}\lesssim\underline{v}\cdot\mathcal{L}_{u,v}\lesssim\underline{v}\cdot A\epsilon'^2\]
As a result, we improved the bootstrap assumption on the low regularity curvature norm and proved that:
\[\mathcal{L}_{\widetilde{u},\widetilde{v}}\lesssim\epsilon'^2+\underline{v}\cdot A\epsilon'^2.\]

\subsubsection{Estimates for Ricci Coefficients}

In this section we improve the bootstrap assumption on the Ricci coefficients by proving:
\begin{proposition}\label{Ricci coefficients proposition}
    There exists a constant $C\ll A,$ such that we have the improved estimate:
    \[\mathcal{R}_{\widetilde{u},\widetilde{v}}\leq C\epsilon'^2.\]
\end{proposition}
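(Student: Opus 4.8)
The plan is to exploit the transport structure of the Ricci coefficient equations in the straight self-similar setting, where the null structure equations for $\psi\in\{\tr\chi,\hat{\chi},\tr\underline{\chi},\hat{\underline{\chi}}\}$ carry no loss of angular derivatives on their right-hand sides. After subtracting the Minkowski values $\tr\chi_{\mathrm{Mink}}=n/(v-u)$, $\tr\underline{\chi}_{\mathrm{Mink}}=-n/(v-u)$, $\hat{\chi}_{\mathrm{Mink}}=\hat{\underline{\chi}}_{\mathrm{Mink}}=0$ (and noting, via $\nabla_3(v-u)=-1$, that the Minkowski traces solve the corresponding homogeneous equations exactly), each component satisfies a $\nabla_3$-transport equation along the characteristics $\{v=\mathrm{const}\}$ whose right-hand side is: purely quadratic in $\psi^*$ for $\tr\chi$ and $\tr\underline{\chi}$, since $\rho=\eta=\underline{\omega}=0$ eliminate all curvature and angular terms; of the form $-\hat{\tau}-(v-u)^{-1}\hat{\underline{\chi}}+\psi\cdot\psi$ for $\hat{\chi}$, involving only the good curvature component $\tau\in\underline{\Psi}^G$; and of the form $-\underline{\alpha}+\psi\cdot\psi$ for $\hat{\underline{\chi}}$, containing the single genuinely problematic term $-\underline{\alpha}$. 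Hence everything reduces to a transport estimate for $\hat{\underline{\chi}}$ with $\underline{\alpha}$ on the right, the other three cases being strictly easier. (The bounds for $\hat{\underline{\chi}}$ and $\tr\underline{\chi}$ are closed first, and then fed into those for $\hat{\chi}$ and $\tr\chi$.)

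\textbf{Main steps.} For each $(m,l)\in H$ I would commute these transport equations with $\nabla^m\nabla_3^l$ using the commutation formulas of Section~\ref{set up section}, producing a linear coefficient of the form $c_{ml}/(v-u)$, lower-order terms $(v-u)^{-s}\nabla^m\nabla_3^i\psi^*$ with $i<l$, and error terms of the schematic types $\underline{\mathcal{F}}'_{(m)(l)(m+l)}(\psi^*)$, $\underline{\mathcal{F}}_{(m)(l)(m+l)}(\underline{\Psi}^G)$, together with $\slashed{Riem}^{j+1}$-commutator contributions handled via $\slashed{Riem}=R+\psi\psi$ as in Propositions~\ref{high regularity curvature proposition} and~\ref{low regularity curvature proposition}. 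I would conjugate by the appropriate power of $(v-u)$ to cancel the linear term and reproduce the weight $v^{1+m+l}$ of the norm $\|\cdot\|_{\mathcal{R}_{m,l}}$, then integrate in $u$ from the hypersurface $\{u=-\underline{v}v\}$ and take $L^2(S^n)$ norms; since $v-u\in[v,(1+\underline{v})v]$ throughout region~III the integrating factor is bounded by a constant depending only on $n$, and the data term is controlled by $\epsilon^{1-\delta}\ll\epsilon'$ via Proposition~\ref{region II bounds proposition}. Finally I would estimate the right-hand side contributions after integration: the $-\nabla^m\nabla_3^l\underline{\alpha}$ term in the $\hat{\underline{\chi}}$ equation is bounded using $\|\nabla^m\nabla_3^l\underline{\alpha}\|_{L^2(S_{\hat{u},v})}\lesssim\sqrt{A}\,\epsilon'\,|\hat{u}|^{-p}v^{-2-m-l+p}$ from Proposition~\ref{region III bounds proposition}, and since $p<1$ the integral $\int_{-\underline{v}v}^{u}|\hat{u}|^{-p}d\hat{u}\lesssim(\underline{v}v)^{1-p}$ converges, yielding $\lesssim\sqrt{A}\,\epsilon'\,\underline{v}^{1-p}v^{-1-m-l}$ using $|u|\le\underline{v}v$ and the reduction $v<\underline{v}$; the $\underline{\Psi}^G$ term and the quadratic $\underline{\mathcal{F}}'$ terms contribute $\lesssim A\epsilon'^2 v^{-1-m-l}$ with an extra factor of $\underline{v}$ or $\epsilon'$ via the Sobolev inequalities and the bootstrap assumption; and the lower-order $\nabla_3$ terms are absorbed by a short induction on $l$. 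Summing over $(m,l)\in H$ --- and treating the borderline bound for $\nabla_3^{(n-2)/2}\psi^*$ by one further $\nabla_3$-commutation, which brings the controlled $\nabla_3^{(n-4)/2}\underline{\alpha}$ to the right and preserves the $|u|^{-p}$ scaling --- gives $\mathcal{R}_{\widetilde{u},\widetilde{v}}\lesssim\epsilon'^2+\underline{v}^{2-2p}A\epsilon'^2$, which is $\le C\epsilon'^2$ with $C\ll A$ once $\underline{v}$ is chosen small relative to $A$.

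\textbf{Main obstacle.} The delicate point is the term $-\underline{\alpha}$ on the right-hand side of the $\hat{\underline{\chi}}$ equation: $\underline{\alpha}$ is only controlled with the degeneracy $|u|^{-p}$ near $\{u=0\}$, so one must verify that this degeneracy is integrable --- which holds precisely because $p<1$ --- and that after the $u$-integration the resulting bound still has the self-similar scaling $v^{-1-m-l}$ with a smallness gain, which relies crucially on the fact that in region~III one has $|u|\le\underline{v}v$ and, after the reduction $v<\underline{v}$, no borderline logarithm is generated. This is also why the weights $w_{ml}$, $x_{ml}$ in the curvature norms were designed to carry the extra $v^{-q}$ and $|u|^{\pm p}$ factors: they guarantee exactly the $|u|^{-p}$-type, logarithm-free control of $\underline{\alpha}$ that feeds into this transport estimate. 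All other error contributions are manifestly subcritical, carrying a spare factor of $\underline{v}$ or $\epsilon'$, so no further structural input is needed.
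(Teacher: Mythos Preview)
Your overall transport-equation strategy is correct and matches the paper's: write the schematic $\nabla_3\psi^*=\Psi^G+\psi\psi^*$ (with $\Psi^G\in\{\tau,\underline\alpha\}$ for $\hat\chi,\hat{\underline\chi}$ and no angular-derivative loss in the straight gauge), commute with $\nabla^m\nabla_3^l$ for $(m,l)\in H$, integrate along $\{v=\mathrm{const}\}$ over the short interval $[-\underline v v,u]$, and gain smallness from $\underline v$. Identifying the $\underline\alpha$ source in the $\hat{\underline\chi}$-equation as the only delicate term is also right.

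The gap is in how you control that term at the top angular level. The pointwise estimate $\|\nabla^m\nabla_3^l\underline\alpha\|_{L^2(S_{\hat u,v})}\lesssim\sqrt{A}\,\epsilon'\,|\hat u|^{-p}v^{-2-m-l+p}$ is carried by the low-regularity norm $\mathcal L$, which is defined only over $(m,l)\in L$, i.e.\ $m\le N_3+\tfrac{n-6}{2}-l$. The Ricci norm $\mathcal R$ ranges over the larger set $H$, so at the extra level $m=N_3+\tfrac{n-4}{2}-l$ no pointwise $L^2(S)$ control of $\underline\alpha$ has been closed in the bootstrap --- this is precisely why $\|\underline\alpha\|_{\mathcal C_{m,l}}$ was defined as a $u$-flux rather than a pointwise sphere norm. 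The paper therefore bounds the curvature contribution uniformly over $H$ via the already-improved flux $\mathcal C$: writing the source as $\Psi^G$ (noting $\underline\alpha\in\Psi^G$ since $\Psi^G\neq\alpha$), one applies Cauchy--Schwarz in $\hat u$ (Lemma~9.6 of \cite{nakedsing}) to pass from $\big(\int v^{1+m+l}\|Err_{ml}\|_{L^2(S)}\,d\hat u\big)^2$ to the weighted $L^2_{\hat u}L^2(S)$ flux, and concludes $\lesssim\underline v^{1-100p}\mathcal C_{\widetilde u,\widetilde v}$. With this single substitution --- integrated $\mathcal C$ in place of pointwise $\mathcal L$ for the $\underline\alpha$ term --- your argument closes and yields $\mathcal R\lesssim\epsilon'^2+\underline v\cdot A\epsilon'^2$, exactly as in the paper. (Minor aside: the $\nabla_3^{(n-2)/2}\psi^*$ bound is not part of $\mathcal R$ and is recorded in a separate remark after the proposition.)
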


We notice that the Ricci coefficients satisfy schematic equations:
\[\nabla_3\big(\hat{\chi},\underline{\hat{\chi}}\big)=\Psi^G+\psi\psi^*,\ \nabla_3\big(\tr\chi^*,\tr\underline{\chi}^*\big)=\psi\psi^*.\]
We allow the curvature term on the right hand side in order to treat all the equations at the same time. For any ${(m,l)\in H},$ we commute the equations with $\nabla^m\nabla_3^l:$
\[\nabla_3v^{1+m+l}\nabla^m\nabla_3^l\psi^*=v^{1+m+l}Err_{ml},\]
where we have the error term:
\[Err_{ml}=\nabla^{m}\nabla_3^l\Psi^G+\underline{\mathcal{F}}_{(m)(l)(m+l)}(\psi^*),\]
and we define $\underline{\mathcal{F}}_{(m)(l)(m+l)}(\psi^*)$ just as $\underline{\mathcal{F}}_{(m)(l)(m+l)}(\Psi),$ but replacing $\Psi$ with $\psi^*.$ We integrate the above equation and apply \cite[Lemma~9.6]{nakedsing} as before to obtain:
\[\sup_{(u,v)\in P_{\widetilde{u},\widetilde{v}}}\int_{S^n}v^{2+2m+2l}\big|\nabla^m\nabla_3^l\psi^*\big|^2d\mathring{Vol}\lesssim \epsilon'^2+\sup_{(u,v)\in P_{\widetilde{u},\widetilde{v}}}\bigg(\int_{-v\underline{v}}^uv^{1+m+l}\bigg(\int_{S^n}\big|Err_{ml}\big|^2d\mathring{Vol}\bigg)^{\frac{1}{2}}d\hat{u}\bigg)^2\]
\[\lesssim \epsilon'^2+\sup_{(u,v)\in P_{\widetilde{u},\widetilde{v}}}\underline{v}^{1-100p}\sup_{\mathring{u}\in[-v\underline{v},u]}|\mathring{u}|^{100q}\int_{-v\underline{v}}^{\mathring{u}}\int_{S^n}v^{3+2m+2l-100p}|\hat{u}|^{100p-100q}\big|Err_{ml}\big|^2d\mathring{Vol}d\hat{u}\]
Since $\Psi^G\neq\alpha,$ the first error term is bounded by:
\[\sup_{(u,v)\in P_{\widetilde{u},\widetilde{v}}}\underline{v}^{1-100p}\sup_{\mathring{u}\in[-v\underline{v},u]}|\mathring{u}|^{100q}\int_{-v\underline{v}}^{\mathring{u}}\int_{S^n}v^{3+2m+2l-100p}|\hat{u}|^{100p-100q}\big|\nabla^{m}\nabla_3^l\Psi^G\big|^2d\mathring{Vol}d\hat{u}\lesssim\underline{v}^{1-100p}\mathcal{C}_{\widetilde{u},\widetilde{v}}\]
We have the bound for the second error term:
\[\underline{v}^{1-100p}\sup_{\mathring{u}\in[-v\underline{v},u]}|\mathring{u}|^{100q}\int_{-v\underline{v}}^{\mathring{u}}\int_{S^n}v^{3+2m+2l-100p}|\hat{u}|^{100p-100q}\big|\underline{\mathcal{F}}_{(m)(l)(m+l)}(\psi^*)\big|^2d\mathring{Vol}d\hat{u}\]
\[\lesssim\underline{v}^{1-100p}\sup_{\mathring{u}\in[-v\underline{v},u]}|\mathring{u}|^{100q}\int_{-v\underline{v}}^{\mathring{u}}\sum_{(k_0,i_0)\in H}v^{1+2i_0-100p}|\hat{u}|^{100p-100q}\Big\|\nabla_3^{i_0}\psi^*\Big\|_{\widetilde{H}^{k_0}}^2\prod_{a=1}^{j+1}\sum_{(k_a,i_a)\in H}\hat{v}^{2+2i_a}\Big\|\nabla_3^{i_a}\psi\Big\|_{\widetilde{H}^{k_a}}^2\]
\[\lesssim\mathcal{R}_{u,v}\cdot\underline{v}^{1-100p}\sup_{\mathring{u}\in[-v\underline{v},u]}|\mathring{u}|^{100q}\int_{-v\underline{v}}^{\mathring{u}}v^{-1-98p}|\hat{u}|^{98p-100q}d\hat{u}\lesssim\underline{v}\cdot 
A\epsilon'^2\]
As a result, we improved the bootstrap assumption on the Ricci coefficients norm and proved that:
\[\mathcal{R}_{\widetilde{u},\widetilde{v}}\lesssim\epsilon'^2+\underline{v}\cdot A\epsilon'^2.\]
\begin{remark}
    We can also prove the estimate on $\nabla_3^{\frac{n-2}{2}}\psi^*$ in Proposition \ref{region III bounds proposition}, even though this term was not needed in the bootstrap argument. For any $i\leq N_3-1,$ we have that $\big(i,\frac{n-4}{2}\big)\in L,$ and the equation:
    \[\nabla^i\nabla_3^{\frac{n-2}{2}}\psi^*=\nabla^i\nabla_3^{\frac{n-4}{2}}\Psi+\nabla^i\nabla_3^{\frac{n-4}{2}}\big(\psi\psi^*\big)\]
    Using the estimates in Proposition \ref{low regularity curvature proposition} and Proposition \ref{Ricci coefficients proposition}, we have that for all $i\leq N_3-1:$
    \[\big\|\nabla^i\nabla_3^{\frac{n-2}{2}}\psi^*\big\|_{L^{2}(S_{u,v})}\lesssim\epsilon'|u|^{-p}\cdot|v|^{-1-i-\frac{n-2}{2}+p}.\]
\end{remark}

\subsubsection{Estimates for Metric Coefficients}
In this section we improve the bootstrap assumption on the metric coefficients by proving:
\begin{proposition}\label{metric coefficients proposition}
    There exists a constant $C\ll A$, such that we have the improved estimate:
    \[\mathcal{M}_{\widetilde{u},\widetilde{v}}\leq C\epsilon'^2.\]
\end{proposition}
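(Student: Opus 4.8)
The plan is to exploit the metric transport equation $\mathcal{L}_3\slashed{g}_{AB}=2\underline{\chi}_{AB}$ from Section~\ref{set up section}, which in the straight self-similar gauge ($e_3=\partial_u$, integrable frame) reads $\partial_u\slashed{g}_{AB}=2\underline{\chi}_{AB}$. Since Minkowski space satisfies the same equation, subtracting its value gives the transport equation for the difference, $\mathcal{L}_3\slashed{g}^*_{AB}=2\underline{\chi}^*_{AB}$. The key structural point is that the Lie angular derivatives $\mathcal{L}_\theta=\mathcal{L}_{\partial_{\theta^A}}$ are taken along coordinate vector fields, so $[\partial_u,\partial_{\theta^A}]=0$ and hence commuting $m$ times produces \emph{no} commutator terms whatsoever:
\[\partial_u\big(\mathcal{L}_\theta^m\slashed{g}^*\big)=2\,\mathcal{L}_\theta^m\underline{\chi}^*.\]
I would integrate this along the lines of constant $v$, from the data surface $\{u=-\underline{v}v\}$ (the interface with region~II, where the bootstrap data is controlled by Proposition~\ref{region II bounds proposition}) into the region being estimated; for fixed $v\leq\underline{v}$ the relevant range of $u$ is contained in $(-\underline{v}v,0)$, an interval of length at most $\underline{v}v$, which is precisely the favorable direction. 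This yields
\[\mathcal{L}_\theta^m\slashed{g}^*(u,v)=\mathcal{L}_\theta^m\slashed{g}^*(-\underline{v}v,v)+2\int_{-\underline{v}v}^u\mathcal{L}_\theta^m\underline{\chi}^*(\hat u,v)\,d\hat u.\]

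Taking $L^2(S^n)$ norms and multiplying by $v^m$, the data term is exactly $\sqrt{\mathcal{M}_{m,0}(-\underline{v}v,v)}\lesssim\epsilon'$ by Proposition~\ref{region II bounds proposition}. For the bulk term I would expand $\mathcal{L}_\theta^m\underline{\chi}^*$ in terms of covariant angular derivatives $\nabla^{\leq m}\underline{\chi}^*$ via the conversion formula \eqref{Lie derivative in terms of covariant derivative}: the leading term is $\nabla^m\underline{\chi}^*$, controlled by Proposition~\ref{Ricci coefficients proposition} (for $(m,0)\in H$ this gives exactly the range of angular regularity needed), while the remaining terms carry Christoffel-symbol factors, i.e. $\mathcal{L}_\theta^{\leq m}\slashed{g}^*$-type corrections bounded by the bootstrap assumption on $\mathcal{M}_{\widetilde u,\widetilde v}$ together with the lower-order metric bounds, hence genuinely lower order. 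Since the $\hat u$-integration is over an interval of length $\lesssim\underline{v}v$, keeping track of the $v$-weights in the definition of the norms produces a gain of a power of $\underline{v}$, so the bulk contribution to $\mathcal{M}_{m,0}(u,v)$ is at most $\underline{v}\cdot A\epsilon'^2$. Summing over $(m,0)\in H$ and taking the supremum over $P_{\widetilde u,\widetilde v}$ gives $\mathcal{M}_{\widetilde u,\widetilde v}\lesssim\epsilon'^2+\underline{v}\cdot A\epsilon'^2$, which is $\leq C\epsilon'^2$ with $C\ll A$ once $\underline{v}$ is small and $A$ is chosen large, exactly as in Propositions~\ref{high regularity curvature proposition}--\ref{Ricci coefficients proposition}.

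Of the four steps in the bootstrap argument this is the most straightforward: the substantive analytic work — in particular the estimate for $\underline{\chi}^*$ that enters the right-hand side — has already been carried out in Proposition~\ref{Ricci coefficients proposition}, and because $\mathcal{L}_\theta$ commutes exactly with $\partial_u$ there is no loss of angular derivatives and no genuine obstacle. The only point requiring care, rather than a real difficulty, is the Lie-to-covariant conversion performed patch by patch on $S^n$ (together with the bound $\|\mathcal{L}_{\theta}^m\slashed g^*\|_{L^\infty}\lesssim\|\mathcal{L}_{\theta}^m\slashed g^*\|_{\widetilde H^n}$ for the lower-order factors when $m$ is close to top order) and the verification that the powers of $v$ in the norms $\mathcal{M}_{m,0}$ and $\mathcal{R}_{m,0}$ are consistent with the transport identity above.
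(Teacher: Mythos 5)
Your proof is correct and follows essentially the same route as the paper: commute the metric transport equation $\mathcal{L}_3\slashed{g}^*=2\underline{\chi}^*$ exactly with the Lie angular derivatives (no commutators, since $[\partial_u,\partial_{\theta^A}]=0$), integrate in $u$ from the region~II interface where Proposition~\ref{region II bounds proposition} controls the data, and convert $\mathcal{L}_\theta^m\underline{\chi}^*$ to covariant derivatives via \eqref{Lie derivative in terms of covariant derivative}, bounding the top-order piece by Proposition~\ref{Ricci coefficients proposition} and the Christoffel-weighted lower-order pieces by the bootstrap. The only cosmetic differences are that the paper writes the right-hand side schematically as $\mathcal{L}_\theta^m\psi^*$ rather than singling out $\underline{\chi}^*$, and that it makes the "$\underline v$ gain from the short $u$-interval" precise via the weighted Cauchy--Schwarz / transport lemma (Lemma 9.6 of \cite{nakedsing}) rather than by the informal statement you give; these are the same estimate.
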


The metric equations imply that $\mathcal{L}_3\mathcal{L}_{\theta}^m\slashed{g}^*=\mathcal{L}_{\theta}^m\psi^*.$ We denote by $\Gamma$ the Christoffel symbols of the metric $\slashed{g}.$ We notice that the bootstrap assumption implies that for any $(m,0)\in H$:
\begin{equation}\label{Christoffel symbols bound}
    v^{2m}\big\|\mathcal{L}_{\theta}^{m-1}\Gamma\big\|_{L^2(S_{u,v})}^2\lesssim 1.
\end{equation}
For any horizontal tensor $\phi$ we have the formula $\mathcal{L}_{\theta}\phi=\nabla\phi+\Gamma\cdot\phi,$ which implies by induction that:
\begin{equation}\label{Lie derivative in terms of covariant derivative}
    \mathcal{L}_{\theta}^m\phi=\nabla^m\phi+\sum_{i+j+k=m-1}\mathcal{L}_{\theta}^i\big(\Gamma^{j+1}\big)\nabla^k\phi.
\end{equation}
Thus, we obtain the equation:
\[\mathcal{L}_3\mathcal{L}_{\theta}^m\slashed{g}^*=\nabla^m\psi^*+\sum_{i+j+k=m-1}\mathcal{L}_{\theta}^i\big(\Gamma^{j+1}\big)\nabla^k\psi^*=:Err_m.\]
We have the estimate for any $(m,0)\in H$:
\[\sup_{(u,v)\in P_{\widetilde{u},\widetilde{v}}}v^{2m}\big\|\mathcal{L}_{\theta}^m\slashed{g}^*\big\|_{L^2(S_{u,v})}^2\lesssim \epsilon'^2+\sup_{(u,v)\in P_{\widetilde{u},\widetilde{v}}}\bigg(\int_{-v\underline{v}}^uv^{m}\bigg(\int_{S^n}\big|Err_{m}\big|^2d\mathring{Vol}\bigg)^{\frac{1}{2}}d\hat{u}\bigg)^2\]
\[\lesssim \epsilon'^2+\sup_{(u,v)\in P_{\widetilde{u},\widetilde{v}}}\underline{v}^{1-100p}\sup_{\mathring{u}\in[-v\underline{v},u]}|\mathring{u}|^{100q}\int_{-v\underline{v}}^{\mathring{u}}\int_{S^n}v^{1+2m-100p}|\hat{u}|^{100p-100q}\big|Err_{m}\big|^2d\mathring{Vol}d\hat{u}\]
The first error term can be bounded as usual by $\underline{v}\cdot A\epsilon'^2.$ For the second term we have:
\[\sum_{i+j+k=m-1}\underline{v}^{1-100p}\sup_{\mathring{u}\in[-v\underline{v},u]}|\mathring{u}|^{100q}\int_{-v\underline{v}}^{\mathring{u}}\int_{S^n}v^{1+2k-100p}|\hat{u}|^{100p-100q}\big|\nabla^k\psi^*\big|^2v^{2+2i+2j}\big|\mathcal{L}_{\theta}^i\big(\Gamma^{j+1}\big)\big|^2d\mathring{Vol}d\hat{u}\]
\[\lesssim\underline{v}^{1-100p}\sup_{\mathring{u}\in[-v\underline{v},u]}|\mathring{u}|^{100q}\int_{-v\underline{v}}^{\mathring{u}}\sum_{(k,0)\in L}v^{1-100p}|\hat{u}|^{100p-100q}\big\|\psi^*\big\|^2_{\widetilde{H}^k(S_{\hat{u},v})}\prod_{a=1}^{j+1}\sum_{(k_a,0)\in L}\int_{S^n}v^{2+2k_a}\big|\mathcal{L}_{\theta}^{k_a}\Gamma\big|^2d\mathring{Vol}d\hat{u},\]
which is also bounded by $\underline{v}\cdot A\epsilon'^2,$ allowing us to improve the bootstrap assumption:
\[\mathcal{M}_{\widetilde{u},\widetilde{v}}\lesssim\epsilon'^2+\underline{v}\cdot A\epsilon'^2.\]

To conclude this section, we establish the proof of the propagation of regularity result in Theorem \ref{propagation of regularity theorem}:

\textit{Proof of Theorem \ref{propagation of regularity theorem}.} We explain how a slight modification of our previous arguments implies the proof of this result. We first consider smooth initial data $\big(\slashed{g}_0,h\big)$ satisfying (\ref{smallness assumption g}), and prove self-similar estimates on the double null quantities for any angular regularity. For any $K\geq N_3,\ 0<q\ll p\ll 1,\ 0<\underline{v}_K\ll1$, we denote by $\widetilde{\mathcal{C}}_{K},\ \widetilde{\mathcal{L}}_{K},\ \widetilde{\mathcal{R}}_{K},\ \widetilde{\mathcal{M}}_{K}$ the norms defined in Section~\ref{stability norms section} with $(N_3,q,p,\underline{v})$ replaced by $(K,q,p,\underline{v}_K)$, and without the $*$ in the definitions of $\widetilde{\mathcal{R}}_{K}$ and $\widetilde{\mathcal{M}}_{K}.$ We prove by induction that for any $K\geq N_3,$ there exist $0<\underline{v}_K\ll1$ small enough and $C(K)>0$ large enough such that for any $(\Tilde{u},\Tilde{v})\in\{0\leq v\leq\underline{v}_K,\ -v\underline{v}_K\leq u\leq0\}:$
\[\widetilde{\mathcal{C}}_{K}(\Tilde{u},\Tilde{v})+\widetilde{\mathcal{L}}_{K}(\Tilde{u},\Tilde{v})+\widetilde{\mathcal{R}}_{K}(\Tilde{u},\Tilde{v})+\widetilde{\mathcal{M}}_{K}(\Tilde{u},\Tilde{v})\leq C(K).\]
The base case $K=N_3$ was proved in Proposition \ref{region III bounds proposition}. We assume that the induction hypothesis holds up to $K-1$ and prove it for $K.$ The strategy is to split the spacetime into regions $I_K,\ II_K,\ III_K$, defined analogously to $I,\ II,\ III$ but with $\underline{v}$ replaced by $\underline{v}_K.$

In the region $I_K=\big\{-1\leq u\leq0,\ 0\leq v\leq-u\underline{v}_K\big\}$ we have propagation of regularity for the local existence result of \cite{selfsimilarvacuum}. For $0<\underline{v}_K\ll1$ small enough, we apply the regular estimates of \cite{selfsimilarvacuum} and we obtain that the estimates of Proposition \ref{region I bounds proposition} hold up to $K+4c_0n$ angular regularity, with the $\epsilon$ on the right hand side replaced by a constant $C_K^I>0$ that is not assumed to be small.

In the region $II_K=\big\{-1\leq u\leq0,\ -u\underline{v}_K\leq v\leq-u/\underline{v}_K\big\}$, the propagation of regularity follows by the argument of \cite[Section~7]{nakedsing}. The idea is to conjugate the equations by $\exp(D_K\cdot v/u),$ for some large constant $D_K.$ This argument is equivalent to using the Gronwall lemma. We obtain that the estimates of Proposition \ref{region II bounds proposition} hold up to $K+3c_0n$ angular regularity, once again with the $\epsilon$ on the right hand side replaced by a constant $C_K^{II}>0$ that is not assumed to be small.

As a result, there exists a constant $C'_K>0$ such that on $\big\{v=-u/\underline{v}_K\big\}$ we have:
\[\widetilde{\mathcal{C}}_{K}(u,-u/\underline{v}_K)+\widetilde{\mathcal{L}}_{K}(u,-u/\underline{v}_K)+\widetilde{\mathcal{R}}_{K}(u,-u/\underline{v}_K)+\widetilde{\mathcal{M}}_{K}(u,-u/\underline{v}_K)\leq C'_K.\]
Moreover, we can assume that $C'_K\gg C(K-1).$ For some $A>0$ large, depending on $K$, we make the bootstrap assumption in the region $III_K=\big\{-1\leq u\leq0,\ -u/\underline{v}_K\leq v\leq\underline{v}_K\big\}$:
\[\widetilde{\mathcal{C}}_{K}(\Tilde{u},\Tilde{v})+\widetilde{\mathcal{L}}_{K}(\Tilde{u},\Tilde{v})+\widetilde{\mathcal{R}}_{K}(\Tilde{u},\Tilde{v})+\widetilde{\mathcal{M}}_{K}(\Tilde{u},\Tilde{v})\leq 2AC'_K.\]
We briefly explain how to improve this bootstrap assumption in order to prove that:
\[\widetilde{\mathcal{C}}_{K}(\Tilde{u},\Tilde{v})+\widetilde{\mathcal{L}}_{K}(\Tilde{u},\Tilde{v})+\widetilde{\mathcal{R}}_{K}(\Tilde{u},\Tilde{v})+\widetilde{\mathcal{M}}_{K}(\Tilde{u},\Tilde{v})\leq AC'_K.\]
We repeat the proof of Proposition \ref{bootstrap argument proposition}, with angular regularity given by $K.$ Since all the quantities with angular regularity up to $K-1$ are already bounded by the induction hypothesis, the equations are linear in the top order terms that need to be bounded. Thus, each error term is a product of factors where at most one such factor is bounded using the bootstrap assumption, while the remaining factors are bounded using the induction hypothesis. In particular, at each step in the proof of Proposition \ref{bootstrap argument proposition} where we bounded an error term by $\lesssim\underline{v}^{\frac{1}{2}-100p}\cdot A\cdot\epsilon'^2,$ we would now have the bound $\leq C\big(C(K-1)\big)+C\big(C(K-1)\big)\underline{v}_K^{\frac{1}{2}-100p}\cdot A\cdot C'_K.$ We also notice that we had $O(n)$ top order error terms bounded by $\lesssim A^2\cdot\epsilon'^4,$ which would now be bounded by $\leq C\big(C(K-1)\big)+\epsilon'^2\cdot A\cdot C'_K\cdot O(n).$ Therefore, for $0<\underline{v}_K\ll1$ small enough we can improve the bootstrap assumption as desired. This establishes the induction hypothesis for $K,$ with $C(K)=AC'_K.$ Since the above bounds hold for all $K\geq N_3,$ we obtain that the spacetime $\big(\mathcal{M},g\big)$ is smooth.
\qed

\section{Asymptotic Completeness}\label{asymptotic completeness section}

In this section we prove the second statement of Theorem \ref{main theorem of the paper ambient}, establishing asymptotic completeness. Given suitably small Cauchy initial data on a spacelike hypersurface in the sense of Remark~\ref{remark about cauchy data}, Theorem~\ref{stability of de sitter theorem in section} implies global existence in the region $\{u < 0,\ v > 0\}$. We prove the existence of induced smooth scattering data at $\{u=0\}$ and $\{v=0\}.$ In the original $(n+1)$-dimensional formulation, this represents the proof of the second statement of Theorem~\ref{main theorem of the paper}.

\begin{theorem}\label{asymptotic completeness in section theorem}
    Consider a global straight self-similar vacuum spacetime $\big(\mathcal{M},g\big)$ which satisfies the conclusion of Theorem~\ref{stability of de sitter theorem in section}. There exists induced straight data $\big(\underline{\slashed{g}_{0}},\underline{h}\big)$ at $(u,v)=(0,1)$, such that $\big(\mathcal{M},g\big)$ is the unique straight self-similar vacuum spacetime determined by this initial data, and moreover we have the estimates for $0\leq i\leq N'$:
    \begin{align*}
        \big\|\mathcal{L}_{\theta}^i\big(\underline{\slashed{g}_{0}}-\slashed{g}_{S^n}\big)\big\|_{L^2(S^n)}&\lesssim\epsilon'\\
        \big\|\nabla^i\underline{\mathcal{O}}\big\|_{L^2(S^n)}&\lesssim\epsilon'\\
        \big\|\nabla^i\underline{h}\big\|_{L^2(S^n)}&\lesssim\epsilon',
    \end{align*}
    where $N'=N-c_0n$ as in Theorem~\ref{stability of de sitter theorem in section}, $\epsilon'=\epsilon^{1-2\delta},$ and $\underline{\mathcal{O}}$ is the obstruction tensor of the metric $\underline{\slashed{g}_{0}}$.

    Moreover, if the spacetime $\big(\mathcal{M},g\big)$ is smooth the induced data $\big(\underline{\slashed{g}_{0}},\underline{h}\big)$ is also smooth.
\end{theorem}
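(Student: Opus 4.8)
## Proof Proposal for Theorem \ref{asymptotic completeness in section theorem}

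\textbf{Overall strategy.} The plan is to follow the outline in Section~\ref{asymptotic completeness intro section}: start from the global solution provided by Theorem~\ref{stability of de sitter theorem in section} (with the region~III-type estimates of Proposition~\ref{region III bounds proposition} now holding near both $\{u=0\}$ and $\{v=0\}$, by the symmetry $(u,v)\mapsto(-v,-u)$ and the discussion in Remark~\ref{remark about cauchy data}), and reconstruct the Fefferman--Graham expansion of $\slashed{g}$ at $\{u=0\}$ up to order $\frac{n}{2}$. The induced metric $\underline{\slashed{g}_0}=\slashed{g}|_{(0,1)}$ is obtained by showing that $\mathcal{L}_3^j\slashed{g}$ and the regular double null quantities extend to $\{u=0\}$; the obstruction tensor $\underline{\mathcal{O}}$ is extracted as the singular ($\log|u|$) coefficient of $v^{\frac{n-4}{2}}\nabla_3^{\frac{n-4}{2}}\underline{\alpha}$; and $\underline{h}$ is the regular remainder after subtracting that singular term. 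The quantitative estimates then follow by tracing the region~III weighted norms, and the uniqueness and straightness of the data follow from Theorem~\ref{RSR theorem} of \cite{selfsimilarvacuum}.

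\textbf{Key steps, in order.} First, I would collect the region~III bounds near $\{u=0\}$: by the argument of Section~\ref{region III section} applied with $(u,v)\mapsto(-v,-u)$, the quantities $\nabla^i\nabla_3^j\underline{\Psi}^G$ (for $j\leq\frac{n-4}{2}$), $\nabla^i\nabla_3^j\psi^*$ (for $j\leq\frac{n-2}{2}$), $\mathcal{L}_\theta^i\slashed{g}^*$, and $\nabla^i\nabla_3^j\underline{\alpha}$ satisfy the stated $\epsilon'$-weighted estimates, with $\underline{\alpha}$ carrying the extra $|u|^{-p}$ singular weight. Second, I would run the extension argument: each of the regular quantities (up to $\frac{n-6}{2}$ $\nabla_3$-derivatives of $\underline{\alpha}$, up to $\frac{n-4}{2}$ $\nabla_3$-derivatives of $\underline{\Psi}^G$ and $\psi$, up to $\frac{n-2}{2}$ $\mathcal{L}_3$-derivatives of $\slashed{g}$, plus up to $N'$ angular derivatives) satisfies a $\nabla_3$ transport equation whose right-hand side is controlled in $L^1_u([-1,0])L^2(S^n)$ by the region~III bounds; hence each is in $W^{1,1}_u([-1,0])L^2(S^n)$ and extends continuously to $\{u=0\}$. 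Evaluating these transport equations at $u=0$ shows the extended tensors satisfy the compatibility relations of \cite{selfsimilarvacuum} in terms of $\underline{\slashed{g}_0}$, giving the first $\frac{n-2}{2}$ expansion terms. Third, I would isolate the obstruction tensor: using self-similarity to write the $\nabla_4$-Bianchi equation for $\nabla_3^{\frac{n-4}{2}}\underline{\alpha}$ schematically as $\partial_u\big(v^{\frac{n-4}{2}}\nabla_3^{\frac{n-4}{2}}\underline{\alpha}\big)=-\tfrac1u\mathcal{E}_1+\mathcal{E}_2$, I set $\underline{\mathcal{O}}=-\mathcal{E}_1|_{u=0}$ (which is $v$-independent and expressible in the already-extended regular data), check via the compatibility relation of \cite{selfsimilarvacuum} that it is the obstruction tensor of $\underline{\slashed{g}_0}$, and integrate in $u$ to get $v^{\frac{n-4}{2}}\nabla_3^{\frac{n-4}{2}}\underline{\alpha}-\underline{\mathcal{O}}\log(|u|/v)\in W^{1,1}_u L^2$, whose limit at $u=0$ defines $\underline{h}$. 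Fourth, I would invoke Theorem~\ref{RSR theorem} with the constructed straight data $(\underline{\slashed{g}_0},\underline{\check h})$ (where $\underline{\check h}$ differs from $\underline{h}$ by the prescribed linear factor of $\underline{\mathcal{O}}$) to conclude uniqueness and straightness; the propagation-of-regularity statement for smooth spacetimes follows as in Theorem~\ref{propagation of regularity theorem} combined with the smoothness part of Theorem~\ref{RSR theorem}. Finally, the quantitative estimates: $\mathcal{L}_\theta^i(\underline{\slashed{g}_0}-\slashed{g}_{S^n})$ is bounded by integrating $\mathcal{L}_3\mathcal{L}_\theta^i\slashed{g}^*$ and using the $\mathcal{M}$-norm bound; $\nabla^i\underline{\mathcal{O}}$ is bounded since it is a fixed finite combination of the regular quantities evaluated at $u=0$, each bounded by $\epsilon'$; and $\nabla^i\underline{h}$ is bounded because the remainder $v^{\frac{n-4}{2}}\nabla_3^{\frac{n-4}{2}}\underline{\alpha}-\underline{\mathcal{O}}\log(|u|/v)$ has a $u$-derivative in $L^1_u L^2$ with the $|u|^{-p}$-type weight integrable, all ultimately controlled by $\epsilon'=\epsilon^{1-2\delta}$.

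\textbf{Main obstacle.} The delicate point is the third step: extracting $\underline{\mathcal{O}}$ and $\underline{h}$ from the mildly singular $\underline{\alpha}$. Unlike region~I in the existence problem, here the singular part of $\underline{\alpha}$ at $\{u=0\}$ is not known a priori from the data and cannot simply be subtracted off; one must verify that the decomposition $\partial_u\big(v^{\frac{n-4}{2}}\nabla_3^{\frac{n-4}{2}}\underline{\alpha}\big)=\tfrac1u\underline{\mathcal{O}}+\tfrac1{|u|}(\mathcal{E}_1-\mathcal{E}_1|_{u=0})+\mathcal{E}_2$ has the property that $\mathcal{E}_1-\mathcal{E}_1|_{u=0}$ vanishes fast enough as $u\to0$ (e.g. like $|u|^{1-p}$) for the product with $\tfrac1{|u|}$ to be $L^1_u$, which requires the regular quantities entering $\mathcal{E}_1$ to be not merely bounded but Hölder/$W^{1,1}$ in $u$ near $\{u=0\}$ with the right rates; this is exactly where the region~III weighted bounds with the small exponent $p$ are used. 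A secondary technical issue is bookkeeping: confirming that $\mathcal{E}_1$ depends only on quantities with at most $\frac{n-6}{2}$ (not $\frac{n-4}{2}$) $\nabla_3$-derivatives of $\underline{\alpha}$, so that the extension argument of the second step applies to everything appearing in $\underline{\mathcal{O}}$, and matching the resulting $\underline{\mathcal{O}}$ against the Fefferman--Graham compatibility relation for the obstruction tensor as in \cite{selfsimilarvacuum, Fefferman-Graham}. Both are resolved by the signature calculus and the $\mathcal{F}$/$\underline{\mathcal{F}}$ error-term notation of Section~\ref{set up section}, but checking them carefully is the heart of the proof.
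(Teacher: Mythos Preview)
Your proposal is correct and follows essentially the same approach as the paper: the paper packages your second step as Proposition~\ref{asymptotic completeness preliminary proposition}, your third step as Proposition~\ref{asymptotic completeness preliminary proposition 2} (with the key claim being exactly the $L^1_u$ integrability of $\frac{1}{|u|}(\mathcal{E}_1-\mathcal{E}_1|_{u=0})+\mathcal{E}_2$, equation~\eqref{asympt completeness claim}), and concludes by verifying the admissibility conditions needed to invoke Theorem~1.1 of \cite{selfsimilarvacuum}. Your identification of the main obstacle and its resolution via the $|u|^{1-p}$ Hölder-type decay of the regular quantities matches the paper's Step~2a precisely.
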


\begin{remark}
    The proof of the second statement of Theorem~\ref{main theorem of the paper ambient} follows from the second part of Theorem~\ref{stability of de sitter theorem in section}, Theorem~\ref{asymptotic completeness in section theorem}, and Remark~\ref{remark about cauchy data}. We prove Theorem~\ref{asymptotic completeness in section theorem} at the end of this section using Propositions~\ref{asymptotic completeness preliminary proposition} and \ref{asymptotic completeness preliminary proposition 2}.
\end{remark}

We follow the strategy outlined in Section~\ref{asymptotic completeness intro section} of the introduction. We use the quantitative estimates on the global solution obtained in Theorem~\ref{stability of de sitter theorem in section} in order to recover the induced scattering data at $\{u=0\}$ and $\{v=0\}.$ By time orientation reversal symmetry, it suffices to prove the existence of induced asymptotic data at $\{u=0\}$. 

We first prove that certain regular quantities can be extended to $\{u=0\}.$ These correspond to determining the first $\frac{n-2}{2}$ terms in the expansion of $\slashed{g}$ at $\{u=0\}.$
\begin{proposition}\label{asymptotic completeness preliminary proposition}
    We have the following continuous extensions to $\{u=0\}:$
    \begin{enumerate}
        \item For any $0\leq l\leq\frac{n-4}{2}$ and $0\leq m\leq N'+\frac{n-6}{2}-l$, we have:
        \[\nabla^m\nabla_3^l\underline{\Psi}^G\in W^{1,1}_u\big([-1,0]\big)L^2(S^n).\]
        Moreover, we have the self-similarity relations on $\{u=0\}:$
        \begin{align*}
            \nabla^m\nabla_3^l\big(\alpha_{AB},\tau_{AB}\big)(0,\lambda v)&=\lambda^{-l}\nabla^m\nabla_3^l\big(\alpha_{AB},\tau_{AB}\big)(0,v)\\
            \nabla^m\nabla_3^l\big(\nu_{ABC},\underline{\nu}_{ABC}\big)(0,\lambda v)&=\lambda^{1-l}\nabla^m\nabla_3^l\big(\nu_{ABC},\underline{\nu}_{ABC}\big)(0,v)\\
            \nabla^m\nabla_3^lR_{ABCD}(0,\lambda v)&=\lambda^{2-l}\nabla^m\nabla_3^lR_{ABCD}(0,v),
        \end{align*}
        and the self-similar bounds:
        \[\big\|v^{2+m+l}\nabla^m\nabla_3^l\underline{\Psi}^G\big\|_{L^2(S^n)}\big|_{u=0}\lesssim\epsilon'.\]
        \item For any $0\leq l\leq\frac{n-6}{2}$ and $0\leq m\leq N'+\frac{n-8}{2}-l$, we have:
        \[\nabla^m\nabla_3^l\underline{\alpha}\in W^{1,1}_u\big([-1,0]\big)L^2(S^n).\]
        Moreover, we have the self-similarity relation on $\{u=0\}:$
        \[\nabla^m\nabla_3^l\underline{\alpha}_{AB}(0,\lambda v)=\lambda^{-l}\nabla^m\nabla_3^l\underline{\alpha}_{AB}(0,v),\]
        and the self-similar bounds:
        \[\big\|v^{2+m+l}\nabla^m\nabla_3^l\underline{\alpha}\big\|_{L^2(S^n)}\big|_{u=0}\lesssim\epsilon'\]
        \begin{equation}\label{bound for regular parts of alpha at I^+}
            v^{2+m}\Big\|v^l\nabla^m\nabla_3^l\underline{\alpha}-\big(v^l\nabla^m\nabla_3^l\underline{\alpha}\big)\big|_{u=0}\Big\|_{L^2(S^n)}\lesssim\epsilon'\cdot\bigg|\frac{u}{v}\bigg|^{1-p}.
        \end{equation}
        \item For any $0\leq l\leq\frac{n-4}{2}$ and $0\leq m\leq N'+\frac{n-4}{2}-l$, we have:
        \[\nabla^m\nabla_3^l\psi\in W^{1,1}_u\big([-1,0]\big)L^2(S^n).\]
        Moreover, we have the self-similarity relations on $\{u=0\}:$
        \[\nabla^m\nabla_3^l\big(\chi_{AB},\underline{\chi}_{AB}\big)(0,\lambda v)=\lambda^{1-l}\nabla^m\nabla_3^l\big(\chi_{AB},\underline{\chi}_{AB}\big)(0,v),\]
        and the self-similar bound:
        \[\big\|v^{1+m+l}\nabla^m\nabla_3^l\psi^*\big\|_{L^2(S^n)}\big|_{u=0}\lesssim\epsilon'.\]
        \item For any $0\leq l\leq\frac{n-2}{2}$ and $0\leq m\leq N'+\frac{n-4}{2}-l$, we have:
        \[v^{m+l}\mathcal{L}_{\theta}^m\mathcal{L}_3^{l}\slashed{g}\in W^{1,1}_u\big([-1,0]\big)L^2(S^n).\]
        Moreover, we have the self-similarity relation on $\{u=0\}:$
        \[\mathcal{L}_{\theta}^m\mathcal{L}_3^{l}\slashed{g}_{AB}(0,\lambda v)=\lambda^{2-l}\mathcal{L}_{\theta}^m\mathcal{L}_3^{l}\slashed{g}_{AB}(0,v),\]
        and the self-similar bound:
        \[\big\|v^{m+l}\mathcal{L}_{\theta}^m\mathcal{L}_3^{l}\slashed{g}^*\big\|_{L^2(S^n)}\big|_{u=0}\lesssim\epsilon'.\]
    \end{enumerate}
\end{proposition}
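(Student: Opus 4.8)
The plan is to prove the four continuous-extension statements by a common mechanism: each of the listed quantities satisfies a $\nabla_3$ (i.e.\ $\partial_u$) transport equation whose right-hand side, after multiplication by the appropriate power of $v$ dictated by self-similarity, lies in $L^1_u([-1,0])L^2(S^n)$ uniformly in $v$. Concretely, for a quantity $\phi$ of signature $s$ one writes the relevant Bianchi, null-structure, or metric equation from Propositions~\ref{Bianchi equation proposition}, \ref{null structure equations proposition}, and the metric equations, commutes with $\nabla^m\nabla_3^l$ (using the commutation formulas of Section~\ref{set up section} and the schematic error-term calculus $\mathcal{F}_{mlp}$, $\underline{\mathcal{F}}_{mlp}$), and conjugates by the weight $v^{\,\sigma}$ with $\sigma$ chosen so that $\partial_u\big(v^{\sigma}\nabla^m\nabla_3^l\phi\big) = v^{\sigma}\,\mathrm{Err}$. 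One then checks, using the region~III bounds of Proposition~\ref{region III bounds proposition} (valid after reflecting $(u,v)\mapsto(-v,-u)$ so that $\{u=0\}$ plays the role of $\{v=0\}$, and using that we are now in the regime $|v/u|\gtrsim \underline v^{-1}$, equivalently $|u/v|$ small), that $\int_{-v}^{0}\|v^{\sigma}\mathrm{Err}\|_{L^2(S^n)}\,d\hat u \lesssim \epsilon'$. This gives $v^{\sigma}\nabla^m\nabla_3^l\phi \in W^{1,1}_u([-1,0])L^2(S^n)$, hence a continuous extension to $\{u=0\}$, and the stated $u=0$ bound follows by taking the limit. The self-similarity relations on $\{u=0\}$ are immediate from $\mathcal{L}_S\Psi=-2\Psi$ (and the analogous statements for $\nu,\underline\nu,R,\slashed g,\chi,\underline\chi$ proved in Section~\ref{set up section}) passed to the boundary, since on $\{u=0\}$ the scaling vector field is $S=v\partial_v$.

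The order I would carry this out is dictated by the hierarchy of signatures, going from the most regular to the least. First I would establish part~(1) for $\underline\Psi^G \neq \underline\alpha$ and also handle $\alpha$ among these (note $\alpha \in \underline\Psi^G$ after the reflection), using the $\nabla_3$ Bianchi equations $\nabla_3 R = -2\nabla\underline\nu + \ldots$, $\nabla_3\underline\nu + \tfrac{3}{n}\tr\underline\chi\,\underline\nu = -2\nabla\underline\alpha + \ldots$, etc.; the point is that the weights $v^{2+m+l}$ exactly cancel the self-similar homogeneity, and the error terms contain at most one factor that is not already controlled with a favorable power of $|u/v|$. Crucially, the bulk/singular behavior of $\underline\alpha$ near $\{u=0\}$ (the reflected analogue of $\alpha$'s $\log$ singularity near $\{v=0\}$) forces the restriction $l \le \tfrac{n-6}{2}$ in part~(2): one more $\nabla_3$ derivative would hit the logarithmically singular term and the $W^{1,1}_u$ integrability would fail. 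So part~(2) must be done at reduced $\nabla_3$-order, and here one must be careful to subtract the correct reference value $(v^l\nabla^m\nabla_3^l\underline\alpha)|_{u=0}$ before estimating, yielding the quantitative rate $|u/v|^{1-p}$ in \eqref{bound for regular parts of alpha at I^+} — this rate comes from the weight $w_{ml}$ with its $|u|^{p}$ factor in the region~III norms. Parts~(3) and (4) for $\psi$ and $\slashed g$ then follow from the transport structure $\nabla_3\psi = \Psi^G + \psi\psi^*$ and $\mathcal{L}_3\slashed g = 2\underline\chi$, integrating once more in $u$; here one uses \eqref{Lie derivative in terms of covariant derivative} to pass between $\mathcal{L}_\theta$ and $\nabla$ and absorb the Christoffel symbol corrections via \eqref{Christoffel symbols bound}. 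The loss of angular regularity (the $m$-ranges decreasing by $1$ or $2$ at each stage) is the standard price paid for the $L^2$-based energy estimates and the Sobolev product inequalities.

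The main obstacle I anticipate is the bookkeeping of powers in the error-term estimates for part~(2), namely controlling $\nabla^m\nabla_3^l\underline\alpha$ in a way that is uniform as $u\to 0$ while staying compatible with the fact that $\underline\alpha$ itself does \emph{not} extend once $l=\tfrac{n-4}{2}$. One must verify that in the conjugated $\nabla_4$-equation for $\underline\alpha$ (the reflected version of $\nabla_4\underline\alpha$, which is really a $\partial_v$-equation controlling $\partial_u$ after reflection is delicate — more precisely one uses the $\nabla_3$-type equation obtained from $\nabla^C\underline\nu_{C(AB)}$ together with the Bianchi pair structure) the leading term $\nabla^{m+1}\nabla_3^l\underline\nu$ is controlled by the $\mathcal{C}_{(m+1),l}$-norm of $\underline\nu$, which by Proposition~\ref{region III bounds proposition} requires $(m+1,l) \in H$, hence the range $m \le N' + \tfrac{n-8}{2} - l$; the slightly worse range than in part~(1) is exactly this $+1$ in the angular order. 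A secondary subtlety is that several of these quantities are only defined as horizontal tensors with respect to the $v$-dependent metric $\slashed g_{u,v}$, so "continuous extension to $\{u=0\}$" must be interpreted with respect to a fixed reference metric (e.g.\ $\slashed g_0$ or the round metric, per the integration convention of Section~\ref{stability dS section}); reconciling the frame is routine but must be stated. Once all four parts are in hand, the self-similarity relations let one descend to a single sphere $S_{0,1}$, which is what is needed for the subsequent computation of $\underline{\slashed g_0}$, $\underline{\mathcal O}$, and $\underline h$ in Proposition~\ref{asymptotic completeness preliminary proposition 2}.
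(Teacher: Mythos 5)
Your general mechanism — a $\partial_u$ transport equation with $v$-weighted error in $L^1_u([-1,0])L^2$, followed by passage to the limit and then descent to $S_{0,1}$ via the scaling relations — matches the paper's proof for parts~(1), (3), and (4). The self-similarity relations on $\{u=0\}$ and the frame point (that the extension is measured against a fixed reference volume form) are handled as you say. The gap is in part~(2).

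For part~(2) you propose routing through the $\nabla_4$ Bianchi equation $\nabla_4\underline\alpha = -\nabla^C\underline\nu_{C(AB)}+\ldots$, converted to a $\partial_u$ statement via self-similarity and the Bianchi-pair structure. This route necessarily produces a term proportional to $\tfrac{1}{u}\underline\alpha$: from self-similarity one has $u\nabla_3\underline\alpha + v\nabla_4\underline\alpha = -c\underline\alpha$, so $\nabla_3\underline\alpha = -\tfrac{v}{u}\nabla_4\underline\alpha - \tfrac{c}{u}\underline\alpha$, and by Proposition~\ref{region III bounds proposition} the best one has is $\big\|\nabla^m\nabla_3^l\underline\alpha\big\|_{L^2}\lesssim \epsilon'|u|^{-p}v^{-2-m-l+p}$, so that $\tfrac{1}{|u|}\underline\alpha\sim\epsilon'|u|^{-1-p}v^{-2+p}$ is \emph{not} in $L^1_u$ near $u=0$. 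One could still push this through by subtracting off the boundary value of the entire right-hand side as in the proof of Proposition~\ref{asymptotic completeness preliminary proposition 2} (where the structure $\tfrac{1}{|u|}(\mathcal{E}_1 - \mathcal{E}_1|_{u=0})$ with a rate $|u|^{1-p}$ is exploited), but that is a substantially more delicate argument than what you wrote, and you do not identify the cancellation needed. The paper's actual argument for part~(2) is simpler and avoids the Bianchi equation entirely: one writes $\partial_u\big(v^{2+m+l}\nabla^m\nabla_3^l\underline\alpha\big) = v^{2+m+l}\nabla^m\nabla_3^{l+1}\underline\alpha + v^{2+m+l}\underline\chi\cdot\nabla^m\nabla_3^l\underline\alpha + (\text{commutator terms})$, with no $\tfrac{1}{u}$ factor, and the only input needed is the bound on one more $\nabla_3$ derivative of $\underline\alpha$. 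That is exactly why the hypotheses require $l\le\tfrac{n-6}{2}$ and $m\le N'+\tfrac{n-8}{2}-l$: these are precisely the conditions ensuring $(m,l+1)$ remains in the admissible index set $L$ of the low-regularity curvature norms. Relatedly, your attribution of the reduced $m$-range to "$(m+1,l)\in H$" for $\underline\nu$ ("a $+1$ in the angular order") is incorrect; the $+1$ sits in the $\nabla_3$-order of $\underline\alpha$ itself, not in the angular order of $\underline\nu$.
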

\begin{proof}
    We first restrict to the region $\{v\leq\underline{v}\}.$ Thus, we can use the estimates in Proposition \ref{region III bounds proposition}. We recall that in the proof of Proposition \ref{low regularity curvature proposition} we had for any $(m,l)\in L$:
    \[\nabla_3v^{2+m+l}\nabla^m\nabla_3^l\underline{\Psi}^G=v^{2+m+l}Err_{ml},\]
    where the error term satisfies:
    \[\big\|v^{2+m+l}Err_{ml}\big\|_{L^1_u([-v\underline{v},0))L^2(S^n)}\lesssim\epsilon'\]
    We also have that $\partial_u\big(v^{2+m+l}\nabla^m\nabla_3^l\underline{\Psi}^G\big)=\nabla_3v^{2+m+l}\nabla^m\nabla_3^l\underline{\Psi}^G+v^{2+m+l}\underline{\chi}\cdot\nabla^m\nabla_3^l\underline{\Psi}^G$. From the proof of Proposition~\ref{Ricci coefficients proposition}, we have that:
    \[\big\|v^{2+m+l}\underline{\chi}\cdot\nabla^m\nabla_3^l\underline{\Psi}^G\big\|_{L^1_u([-v\underline{v},0))L^2(S^n)}\lesssim\epsilon'.\]
    As a result, we obtain that $v^{2+m+l}\nabla^m\nabla_3^l\underline{\Psi}^G\in W^{1,1}_u\big([-v\underline{v},0]\big)L^2(S^n)$ and that:
    \[\big\|v^{2+m+l}\nabla^m\nabla_3^l\underline{\Psi}^G\big\|_{L^2(S^n)}\big|_{u=0}\lesssim\epsilon'.\]
    Using self-similarity, we can extend these results from the region $\{v\leq\underline{v}\}$ to all $v>0.$ Moreover, by continuity we obtain that the self-similarity relations also hold along $u=0.$
    
    In order to prove the second statement, we notice that for any $0\leq l\leq\frac{n-6}{2}$ and $0\leq m\leq N'+\frac{n-8}{2}-l:$
    \[\partial_u\big(v^{2+m+l}\nabla^m\nabla_3^l\underline{\alpha}\big)=v\psi\cdot v^{1+m+l}\nabla^m\nabla_3^l\underline{\alpha}+v^{2+m+l}\nabla^m\nabla_3^{l+1}\underline{\alpha}+\sum_{i=0}^mv^{1+i}\nabla^i\psi\cdot v^{1+m-i+l}\nabla^{m-i}\nabla_3^l\underline{\alpha}\]
    Since $(m,l+1)\in L,$ we can bound the above terms using our previous estimates and we obtain:
    \[\Big\|\partial_u\big(v^{2+m+l}\nabla^m\nabla_3^l\underline{\alpha}\big)\Big\|_{L^2(S^n)}\lesssim\epsilon'\cdot|u|^{-p}v^{-1+p}.\]
    This implies that $v^{2+m+l}\nabla^m\nabla_3^l\underline{\alpha}\in W^{1,1}_u\big([-v\underline{v},0]\big)L^2(S^n)$, that (\ref{bound for regular parts of alpha at I^+}) holds, and that:
    \[\big\|v^{2+m+l}\nabla^m\nabla_3^l\underline{\alpha}\big\|_{L^2(S^n)}\big|_{u=0}\lesssim\epsilon'.\]
    As before, the self-similarity relation extends to $u=0$ by continuity.

    The proof of the third statement follows exactly as the first statement, but using the corresponding estimates from the proof of Proposition \ref{Ricci coefficients proposition}.

    We prove the forth statement in the case of $\mathcal{L}_{\theta}^m\mathcal{L}_3^{l+1}\slashed{g},$ since the case of no $\mathcal{L}_3$ derivatives is similar. We compute the commuted equation for any $(m,l)\in H$ using \eqref{Lie derivative in terms of covariant derivative} and its analogue in the $e_3$ direction:
    \[\mathcal{L}_3\big(\mathcal{L}_{\theta}^m\mathcal{L}_3^{l+1}\slashed{g}^*\big)=2\mathcal{L}_{\theta}^m\mathcal{L}_3^{l+1}\underline{\chi}^*=\mathcal{L}_{\theta}^m\mathcal{L}_3^{l}\big(\underline{\alpha}+\psi\psi^*\big)=\mathcal{L}_{\theta}^m\big(\nabla_3^l\underline{\alpha}+\underline{\mathcal{F}}_{(0)(l-1)(l-1)}(\underline{\alpha})+\underline{\mathcal{F}}_{(0)(l)(l)}(\psi^*)\big)\]
    \[=\nabla^m\nabla_3^l\underline{\alpha}+\sum_{i+j+k=m-1}\mathcal{L}_{\theta}^i\big(\Gamma^{j+1}\big)\nabla^k\nabla_3^l\underline{\alpha}+\underline{\mathcal{F}}_{(m)(l-1)(m+l-1)}(\underline{\alpha})+\sum_{i+j+k=m-1}\mathcal{L}_{\theta}^i\big(\Gamma^{j+1}\big)\underline{\mathcal{F}}_{(k)(l-1)(k+l-1)}(\underline{\alpha})+\]\[+\underline{\mathcal{F}}_{(m)(l)(m+l)}(\psi^*)+\sum_{i+j+k=m-1}\mathcal{L}_{\theta}^i\big(\Gamma^{j+1}\big)\underline{\mathcal{F}}_{(k)(l)(k+l)}(\psi^*)\]
    We denote the RHS by $Err_{ml}.$ We have the bound:
    \[\big\|v^{1+m+l}Err_{ml}\big\|_{L^1_u([-v\underline{v},u))L^2(S^n)}^2\lesssim\underline{v}^{1-100p}\sup_{\mathring{u}\in[-v\underline{v},u]}|\mathring{u}|^{100q}\int_{-v\underline{v}}^{\mathring{u}}\int_{S^n}v^{3+2m+2l-100p}|\hat{u}|^{100p-100q}\big|Err_{ml}\big|^2d\mathring{Vol}d\hat{u}\]
    The first and fifth error terms were already bounded in the proof of Proposition \ref{Ricci coefficients proposition}. The third error term was already bounded in the the proof of Proposition \ref{low regularity curvature proposition}. As in the proof of Proposition \ref{metric coefficients proposition}, for the second error term we have the bound:
    \[\sum_{i+j+k=m-1}\underline{v}^{1-100p}\sup_{\mathring{u}\in[-v\underline{v},u]}|\mathring{u}|^{100q}\int_{-v\underline{v}}^{\mathring{u}}\int_{S^n}v^{3+2k+2l-100p}|\hat{u}|^{100p-100q}\big|\nabla^k\nabla_3^l\underline{\alpha}\big|^2v^{2+2i+2j}\big|\mathcal{L}_{\theta}^i\big(\Gamma^{j+1}\big)\big|^2d\mathring{Vol}d\hat{u}\]
    \[\lesssim\underline{v}^{1-100p}\sup_{\mathring{u}\in[-v\underline{v},u]}|\mathring{u}|^{100q}\int_{-v\underline{v}}^{\mathring{u}}\sum_{(k,l)\in L}v^{3+2l-100p}|\hat{u}|^{100p-100q}\big\|\nabla_3^l\underline{\alpha}\big\|^2_{\widetilde{H}^k(S^n)}\prod_{a=1}^{j+1}\sum_{(k_a,0)\in L}\int_{S^n} v^{2+2k_a}\big|\mathcal{L}_{\theta}^{k_a}\Gamma\big|^2d\mathring{Vol}d\hat{u}\]
    which is bounded using the previous section and (\ref{Christoffel symbols bound}). Similarly, the forth term is bounded by:
    \[\sum_{(k,l)\in L}\underline{v}^{1-100p}\sup_{\mathring{u}\in[-v\underline{v},u]}|\mathring{u}|^{100q}\int_{-v\underline{v}}^{\mathring{u}}\int_{S^n}v^{3+2k+2l-100p}|\hat{u}|^{100p-100q}\big|\underline{\mathcal{F}}_{(k)(l-1)(k+l-1)}(\underline{\alpha})\big|^2d\mathring{Vol}d\hat{u},\]
    which is bounded as the third error term. Finally, the last error term can be similarly reduced to the fifth error term. As a result, we obtain that $v^{1+m+l}\mathcal{L}_{\theta}^m\mathcal{L}_3^{l+1}\slashed{g}\in W^{1,1}_u\big([-v\underline{v},0]\big)L^2(S^n)$ and that:
    \[\big\|v^{1+m+l}\mathcal{L}_{\theta}^m\mathcal{L}_3^{l+1}\slashed{g}^*\big\|_{L^2(S^n)}\big|_{u=0}\lesssim\epsilon'.\]
    Using self-similarity, we extend these results from the region $\{v\leq\underline{v}\}$ to all $v>0.$ By continuity, we also obtain that the self-similarity relations hold along $u=0.$
\end{proof}

Next, we compute the induced obstruction tensor $\underline{\mathcal{O}}$ and the remaining component of the scattering data $\underline{h}:$
\begin{proposition}\label{asymptotic completeness preliminary proposition 2}
    There exist symmetric traceless 2-tensors $\underline{\mathcal{O}}$ and $\underline{h},$ which are independent of $u$ and $v$ such that for any $0\leq m\leq N'-3$ and $\{-\underline{v}v\leq u\leq0\}:$
    \begin{equation}\label{bound for obstruction tensor on I^+}
        \big\|v^{2+m}\nabla^m\underline{\mathcal{O}}\big\|_{L^2(S^n)}\lesssim\epsilon'
    \end{equation}
    \begin{equation}\label{bound for h on I^+}
        \big\|v^{2+m}\nabla^m\underline{h}\big\|_{L^2(S^n)}\lesssim\epsilon'
    \end{equation}
    \begin{equation}\label{bound in expansion of alpha on I^+}
        v^{2+m}\bigg\|v^{\frac{n-4}{2}}\nabla^m\nabla_3^{\frac{n-4}{2}}\underline{\alpha}-\nabla^m\underline{\mathcal{O}}\log\bigg|\frac{u}{v}\bigg|-\nabla^m\underline{h}\bigg\|_{L^2(S^n)}\lesssim\epsilon'\cdot\bigg|\frac{u}{v}\bigg|^{1-p}.
    \end{equation}
\end{proposition}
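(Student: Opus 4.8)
\textit{Proof proposal.}
The idea is to read off $\underline{\mathcal{O}}$ and $\underline{h}$ from the $\nabla_4$ Bianchi equation for $\nabla_3^{\frac{n-4}{2}}\underline{\alpha},$ the unique double null unknown that Proposition \ref{asymptotic completeness preliminary proposition} does not already extend to $\{u=0\}.$ First I would commute the equation $\nabla_4\underline{\alpha}_{AB}=-\nabla^C\underline{\nu}_{C(AB)}+\mathcal{E}_{2}^{(4)}$ of Proposition \ref{Bianchi equation proposition} with $\nabla^m\nabla_3^{\frac{n-4}{2}}$ using Lemmas \ref{commutation with nabla lemma}--\ref{commutation with nabla 4 lemma}, and then use self-similarity — the relation $\nabla_S\underline{\alpha}=-2\underline{\alpha}$ with $S=u\partial_u+v\partial_v$ from Section \ref{set up section}, together with the homogeneities of $\chi,\underline{\chi},\slashed{g}$ — to trade the $\nabla_4=\nabla_{\partial_v}$ derivative for $\partial_u.$ This produces, for each $0\le m\le N'-3,$ a transport equation of the schematic form
\[\partial_u\Big(v^{\frac{n-4}{2}}\nabla^m\nabla_3^{\frac{n-4}{2}}\underline{\alpha}\Big)=-\frac{1}{u}\,\mathcal{A}_m+\mathcal{B}_m,\]
where, by the self-similar structure, the coefficient $\mathcal{A}_m$ is a universal combination of $\nabla^{\le m+1}\nabla_3^{\le\frac{n-4}{2}}$ of the \emph{regular} unknowns $\underline{\nu},\underline{\Psi}^G,\psi,\slashed{g}$ — precisely the quantities controlled by Proposition \ref{asymptotic completeness preliminary proposition} — while the self-interaction with the singular factor $\nabla_3^{\frac{n-4}{2}}\underline{\alpha}$ and all $\slashed{Riem}$-commutator terms land in $\mathcal{B}_m$ carrying an extra power of $|u|,$ so that $\mathcal{B}_m\in L^1_u\big([-\underline{v}v,0]\big)L^2(S^n)$ with an integrable tail at $\{u=0\}$ by the region III bounds of Proposition \ref{region III bounds proposition}.

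Every term in $\mathcal{A}_m$ extends continuously to $\{u=0\}$ — the relevant statements of Proposition \ref{asymptotic completeness preliminary proposition} apply since $m+1$ and the $\nabla_3$-order $\frac{n-4}{2}$ are within the stated ranges — hence so does $\mathcal{A}_m,$ and I would \emph{define} $\nabla^m\underline{\mathcal{O}}:=-\mathcal{A}_m\big|_{u=0}.$ For $m=0$ this is a symmetric $2$-tensor $\underline{\mathcal{O}}$ on $S^n;$ it is trace-free because $\nabla,\nabla_3$ preserve $\slashed{g}$ and $\tr\underline{\alpha}=Ric_{33}=0,$ so $\nabla^m\nabla_3^{\frac{n-4}{2}}\underline{\alpha}$ is trace-free and $\tr\mathcal{A}_m=-u\,\tr\mathcal{B}_m\to0$ as $u\to0.$ Commuting the angular covariant derivatives past the restriction to $u=0$ — legitimate since the connection of $\slashed{g}(u,v)$ converges to that of $\slashed{g}(0,v)$ — identifies the $m$-th expression with $\nabla^m$ of the $m=0$ tensor. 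The left-hand side of the displayed equation is homogeneous of degree $0$ under $S,$ hence so is $\mathcal{A}_m|_{u=0},$ which forces $\underline{\mathcal{O}}$ to be independent of $(u,v);$ evaluating the self-similar bounds of Proposition \ref{asymptotic completeness preliminary proposition} at $u=0$ then yields $\|v^{2+m}\nabla^m\underline{\mathcal{O}}\|_{L^2(S^n)}\lesssim\epsilon',$ which is \eqref{bound for obstruction tensor on I^+}.

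With $\underline{\mathcal{O}}$ in hand I would rewrite the equation as
\[\partial_u\Big(v^{\frac{n-4}{2}}\nabla^m\nabla_3^{\frac{n-4}{2}}\underline{\alpha}-\nabla^m\underline{\mathcal{O}}\log\big|u/v\big|\Big)=\frac{1}{|u|}\big(\mathcal{A}_m-\mathcal{A}_m|_{u=0}\big)+\mathcal{B}_m.\]
The decisive point is that $\mathcal{A}_m-\mathcal{A}_m|_{u=0}=O\big(\epsilon'\,v^{-2-m}|u/v|^{1-p}\big)$ in $L^2(S^n)$: for the $\underline{\alpha}$-contributions this is exactly \eqref{bound for regular parts of alpha at I^+}, and for the $\underline{\nu},\underline{\Psi}^G,\psi,\slashed{g}$-contributions it follows from the $W^{1,1}_u$ statements of Proposition \ref{asymptotic completeness preliminary proposition}, whose proofs bound the corresponding $\partial_u$-derivatives by $\epsilon'|u|^{-p}v^{-1+p}$ times the natural weight, so each such quantity differs from its $\{u=0\}$ value by $O(|u/v|^{1-p}).$ Hence the right-hand side above has $L^1_u\big([u,0]\big)L^2(S^n)$-norm $\lesssim\epsilon'\,v^{-2-m}\int_0^{|u|/v}t^{-p}\,dt\lesssim\epsilon'\,v^{-2-m}|u/v|^{1-p},$ so $v^{\frac{n-4}{2}}\nabla^m\nabla_3^{\frac{n-4}{2}}\underline{\alpha}-\nabla^m\underline{\mathcal{O}}\log|u/v|$ lies in $W^{1,1}_u\big([-\underline{v}v,0]\big)L^2(S^n)$ and extends continuously to $\{u=0\}.$ I would define $\nabla^m\underline{h}$ to be this limit — again a $(u,v)$-independent symmetric trace-free $2$-tensor by the same homogeneity and trace considerations, and compatible across $m$ as above — and then \eqref{bound in expansion of alpha on I^+} follows by integrating the $L^1$ bound from $u$ to $0,$ while \eqref{bound for h on I^+} follows by evaluating at $u=-\underline{v}v$ and combining with \eqref{bound for obstruction tensor on I^+} and the region III bound for $\nabla^m\nabla_3^{\frac{n-4}{2}}\underline{\alpha}.$ A final appeal to self-similarity extends everything from $\{v\le\underline{v}\}$ to all $v>0.$

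The main obstacle is the first paragraph: carrying out the self-similar reduction of the fully commuted $\nabla_4$ Bianchi equation and verifying the clean split of the right-hand side into the regular coefficient $\mathcal{A}_m$ (covered by Proposition \ref{asymptotic completeness preliminary proposition}) and the $L^1_u$-remainder $\mathcal{B}_m$ — in particular that the $\nabla_3^{\frac{n-4}{2}}\underline{\alpha}$-self-interaction and the $\slashed{Riem}$-commutator terms genuinely carry the extra factor of $|u|$ that makes them integrable, and that $\mathcal{A}_m-\mathcal{A}_m|_{u=0}=O(|u/v|^{1-p})$ holds uniformly. The whole construction hinges on this borderline $L^1_u$ (equivalently, logarithmic) integrability near $\{u=0\},$ which is only barely available because the $\nabla_3^{\frac{n-4}{2}}$-level curvature decays like $|u|^{-p}$ there rather than remaining bounded.
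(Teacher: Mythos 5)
Your proposal is correct and follows essentially the same route as the paper's proof: you trade $\nabla_4$ for $\partial_u$ via the self-similarity relation, isolate the $\tfrac{1}{|u|}$-coefficient $\mathcal{A}_m$ (the paper's $\mathcal{E}_1$) whose $\{u=0\}$-limit defines $\nabla^m\underline{\mathcal{O}}$, show $\mathcal{A}_m-\mathcal{A}_m|_{u=0}=O(|u/v|^{1-p})$ using Proposition \ref{asymptotic completeness preliminary proposition} and \eqref{bound for regular parts of alpha at I^+}, and integrate the resulting $L^1_u$ remainder to produce $\underline{h}$. The only cosmetic differences are in error-term bookkeeping (the paper organizes the right-hand side explicitly as $\mathcal{E}_1$ involving $\underline{\mathcal{F}}$-notation with $\nabla_3$-orders $\le l-1$ for $\underline{\alpha}$ and $\le l$ for $\underline{\Psi}^G$, and $\mathcal{E}_2=v^l\underline{\mathcal{F}}_{(m)(l)(m+l)}(\Psi)$ for the regular remainder) and in how the consistency of $\nabla^m\underline{\mathcal{O}}$ across $m$ is handled (the paper shows the commutator $u[\nabla_3,\nabla^m]\nabla_3^l\underline{\alpha}\to0$ directly).
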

\begin{proof}
\textit{Step 1. Computing the induced obstruction tensor $\underline{\mathcal{O}}$.} As usual, we first restrict to the region $\{v\leq\underline{v}\}$ to prove the desired result, then we use self-similarity to extend to all $v>0.$ We use self-similarity to rewrite the Bianchi equation for $\underline{\alpha}$ as:
\[-u\nabla_3\underline{\alpha}+\frac{n-4}{2}\underline{\alpha}-\frac{u}{2}\tr\underline{\chi}\underline{\alpha}=v\nabla\underline{\Psi}^G+v\psi\underline{\Psi}^G\]
We set $l=\frac{n-4}{2}$ for the remaining of the proof. We commute the equation to obtain:
\[|u|\nabla_3\nabla_3^l\underline{\alpha}=v\nabla\nabla_3^l\underline{\Psi}^G+v\cdot\underline{\mathcal{F}}_{(0)(l)(l)}(\underline{\Psi}^G)+v\cdot\underline{\mathcal{F}}_{(1)(l-1)(l)}(\underline{\Psi}^G)+\underline{\mathcal{F}}_{(0)(l-1)(l-1)}(\Psi)+|u|\underline{\mathcal{F}}_{(0)(l)(l)}(\Psi)\]
We notice that in the proof of Proposition \ref{asymptotic completeness preliminary proposition}, we also obtain the bound:
\[\Big\|uv^{1+m+l}\nabla^m\nabla_3^l\underline{\alpha}\Big\|_{L^2(S^n)}\lesssim\epsilon'\cdot|u|^{1-p}v^{-1+p}.\]
Thus, Proposition \ref{asymptotic completeness preliminary proposition} implies that each term on the RHS of the above equation is in $C^0_u\big([-\underline{v}v,0]\big)\widetilde{H}^m(S^n).$ As a result, $-uv^l\nabla_3\nabla_3^l\underline{\alpha}$ can be extended to $\{u=0\}$ as a symmetric traceless 2-tensor which is independent of the $v$ coordinate. We define:
\[\underline{\mathcal{O}}_{AB}=\big(uv^l\nabla_3\nabla_3^l\underline{\alpha}_{AB}\big)\big|_{u=0}\in\widetilde{H}^m(S^n).\]
For any $0\leq m\leq N'-3,$ we have the equation:
\[\nabla_3\nabla^m\nabla_3^l\underline{\alpha}=\frac{v}{|u|}\nabla^{m+1}\nabla_3^l\underline{\Psi}^G+\frac{v}{|u|}\cdot\underline{\mathcal{F}}_{(m)(l)(m+l)}(\underline{\Psi}^G)+\frac{v}{|u|}\cdot\underline{\mathcal{F}}_{(m+1)(l-1)(m+l)}(\underline{\Psi}^G)+\]\[+\frac{1}{|u|}\cdot\underline{\mathcal{F}}_{(m)(l-1)(m+l-1)}(\Psi)+\underline{\mathcal{F}}_{(m)(l)(m+l)}(\Psi)\]
Since $\lim_{u\rightarrow0^-}u[\nabla_3,\nabla^m]\nabla_3^l\underline{\alpha}=0,$ we also obtain that:
\[\nabla^m\underline{\mathcal{O}}_{AB}=\big(uv^l\nabla_3\nabla^m\nabla_3^l\underline{\alpha}_{AB}\big)\big|_{u=0}\]
Thus, we have the schematic equation:
\[-\nabla^m\underline{\mathcal{O}}=v^{l+1}\nabla^{m+1}\nabla_3^l\underline{\Psi}^G\big|_{u=0}+v^{l+1}\underline{\mathcal{F}}_{(m)(l)(m+l)}(\underline{\Psi}^G)\big|_{u=0}+v^{l+1}\underline{\mathcal{F}}_{(m+1)(l-1)(m+l)}(\underline{\Psi}^G)\big|_{u=0}+\]\[+v^{l}\underline{\mathcal{F}}_{(m)(l-1)(m+l-1)}(\Psi)\big|_{u=0}\]
We use the bounds in Proposition \ref{asymptotic completeness preliminary proposition}, together with the bounds in the proofs of Propositions \ref{low regularity curvature proposition} and \ref{Ricci coefficients proposition}, to control the RHS and get:
\[\big\|v^{2+m}\nabla^m\underline{\mathcal{O}}\big\|_{L^2(S^n)}\lesssim\underline{v}^{\frac{1}{2}-p}\cdot\epsilon'.\]

\textit{Step 2. Computing the induced component of the scattering data $\underline{h}.$} Extending $\nabla^m\underline{\mathcal{O}}$ to be independent of $u$ and $v,$ we can write the equation for $\nabla^m\nabla_3^l\underline{\alpha}$ as:
\begin{equation}\label{equation du underline alpha to see O}
    \partial_u\big(v^l\nabla^m\nabla_3^l\underline{\alpha}\big)=\frac{1}{|u|}\mathcal{E}_1+\mathcal{E}_2=\frac{1}{u}\nabla^m\underline{\mathcal{O}}+\frac{1}{|u|}\Big(\mathcal{E}_1-\mathcal{E}_1\big|_{u=0}\Big)+\mathcal{E}_2
\end{equation}
\[\mathcal{E}_1=v^{l+1}\nabla^{m+1}\nabla_3^l\underline{\Psi}^G+v^{l+1}\underline{\mathcal{F}}_{(m)(l)(m+l)}(\underline{\Psi}^G)+v^{l+1}\underline{\mathcal{F}}_{(m+1)(l-1)(m+l)}(\underline{\Psi}^G)+v^{l}\underline{\mathcal{F}}_{(m)(l-1)(m+l-1)}(\Psi)\]
\[\mathcal{E}_2=v^l\underline{\mathcal{F}}_{(m)(l)(m+l)}(\Psi)\]
The key part of our proof is to establish the claim that:
\begin{equation}\label{asympt completeness claim}
    \mathcal{E}:=\frac{1}{|u|}\Big(\mathcal{E}_1-\mathcal{E}_1\big|_{u=0}\Big)+\mathcal{E}_2\in L^1_u\big([-\underline{v}v,0]\big)L^2(S^n).
\end{equation}

\textit{Step 2a. The proof of claim \eqref{asympt completeness claim}.} In the proof of Proposition \ref{low regularity curvature proposition} we already bounded:
\[\int_{-\underline{v}v}^0\big\|\mathcal{E}_2\big\|_{L^2(S^n)}d\hat{u}\lesssim v^{-2-m}\underline{v}^{\frac{1}{2}}\epsilon'.\]
We also have the bound:
\[\big\|\mathcal{E}_1-\mathcal{E}_1\big|_{u=0}\big\|_{L^2(S_{u,v})}\lesssim\int_u^0v^{l+1}\big\|\nabla_3\nabla^{m+1}\nabla_3^l\underline{\Psi}^G\big\|_{L^2(S^n)}+v^{l+1}\big\|\underline{\mathcal{F}}_{(m)(l+1)(m+l+1)}(\underline{\Psi}^G)\big\|_{L^2(S^n)}d\hat{u}+\]\[+\int_u^0v^{l+1}\big\|\underline{\mathcal{F}}_{(m+1)(l)(m+l+1)}(\underline{\Psi}^G)\big\|_{L^2(S^n)}+v^{l}\big\|\underline{\mathcal{F}}_{(m)(l)(m+l)}(\Psi)\big\|_{L^2(S^n)}d\hat{u}\]
Using the estimates proved in region~III, we have the bound for any $(i,j)\in L$:
\begin{equation}\label{bound for F error in L2}
    \big\|\underline{\mathcal{F}}_{(i)(j)(i+j)}(\Psi)\big\|_{L^2(S^n)}\lesssim v^{-3-i-j+p}|u|^{-p}\epsilon'.
\end{equation}
Since $(m,l),(m+1,l)\in L,$ we have the bound:
\[\int_u^0v^{l+1}\big\|\underline{\mathcal{F}}_{(m+1)(l)(m+l+1)}(\underline{\Psi}^G)\big\|_{L^2(S^n)}+v^{l}\big\|\underline{\mathcal{F}}_{(m)(l)(m+l)}(\Psi)\big\|_{L^2(S^n)}d\hat{u}\lesssim\epsilon'\cdot v^{-2-m}\bigg|\frac{u}{v}\bigg|^{1-p}\]
We have the schematic equation as in the proof of Proposition \ref{low regularity curvature proposition}:
\[\nabla_3v^{1+l}\nabla^{m+1}\nabla_3^l\underline{\Psi}^G=v^{1+l}\nabla^{m+2}\nabla_3^l\Psi+v^{1+l}\underline{\mathcal{F}}_{(m+1)(l)(m+l+1)}(\Psi)+v^{1+l}\underline{\mathcal{F}}_{(m+2)(l-1)(m+l+1)}(\Psi).\]
Since $(m+2,l)\in L,$ we use Proposition \ref{region III bounds proposition} and (\ref{bound for F error in L2}) to get:
\[\int_u^0v^{l+1}\big\|\nabla_3\nabla^{m+1}\nabla_3^l\underline{\Psi}^G\big\|_{L^2(S^n)}d\hat{u}\lesssim\epsilon'\cdot v^{-2-m}\bigg|\frac{u}{v}\bigg|^{1-p}\]
Using the Bianchi equations we can rewrite the error term:
\[\underline{\mathcal{F}}_{(m)(l+1)(m+l+1)}(\underline{\Psi}^G)=\underline{\mathcal{F}}_{(m+1)(l)(m+l+1)}(\Psi)+\sum_{\substack{i+j+k\leq m+l+1 \\ i\leq l+1,k\leq m}}\nabla^k\big(\underline{\Psi}^G\nabla_3^i\psi^{j+1}\big)\]
The estimates in Proposition \ref{low regularity curvature proposition} and Proposition \ref{Ricci coefficients proposition} imply the bound:
\[\sum_{\substack{i+j+|k|\leq m+l+1 \\ i\leq l,|k|\leq m}}\Big\|v^{2+k_1}\nabla^{k_1}\underline{\Psi}^G\cdot v^{i+j+k_2+1}\nabla^{k_2}\nabla_3^i\psi^{j+1}\Big\|_{L^2(S^n)}\lesssim\epsilon'\]
Similarly, we use the schematic equations for $\nabla_3\psi$ from Proposition \ref{Ricci coefficients proposition} to get:
\[\sum_{j+|k|\leq m}\Big\|v^{2+k_1}\nabla^{k_1}\underline{\Psi}^G\cdot v^{l+1+j+k_2+1}\nabla^{k_2}\nabla_3^{l+1}\psi^{j+1}\Big\|_{L^2(S^n)}\lesssim\epsilon'+\epsilon'\sum_{j+k\leq m}\Big\| v^{l+k+2}\nabla^{k}\nabla_3^{l+1}\psi\Big\|_{L^2(S^n)}\lesssim\]\[\lesssim\epsilon'+\epsilon'\sum_{k\leq m}\Big\| v^{l+k+2}\nabla^{k}\nabla_3^l\big(\Psi+\psi\psi\big)\Big\|_{L^2(S^n)}\lesssim\epsilon'|u|^{-p}v^p\]
As a result, we obtain that the last remaining error term satisfies:
\[\int_u^0v^{l+1}\big\|\underline{\mathcal{F}}_{(m)(l+1)(m+l+1)}(\underline{\Psi}^G)\big\|_{L^2(S^n)}d\hat{u}\lesssim\epsilon'\cdot v^{-2-m}\bigg|\frac{u}{v}\bigg|^{1-p}\]
We proved that:
\[\frac{1}{|u|}\cdot\big\|\mathcal{E}_1-\mathcal{E}_1\big|_{u=0}\big\|_{L^2(S_{u,v})}\lesssim\epsilon'\cdot v^{-3-m}\bigg|\frac{u}{v}\bigg|^{-p},\]
which proves our claim \eqref{asympt completeness claim} that $\mathcal{E}\in L^1_u\big([-\underline{v}v,0]\big)L^2(S^n).$ 

\textit{Step 2b. The proof of \eqref{bound for h on I^+} and \eqref{bound in expansion of alpha on I^+}.} We integrate (\ref{equation du underline alpha to see O}) from $-\underline{v}v$ to $u$:
\[v^l\nabla^m\nabla_3^l\underline{\alpha}-\nabla^m\underline{\mathcal{O}}\log\bigg|\frac{u}{v}\bigg|=v^l\nabla^m\nabla_3^l\underline{\alpha}\big|_{u=-\underline{v}v}-\nabla^m\underline{\mathcal{O}}\log\underline{v}+\int_{-\underline{v}v}^u\mathcal{E}d\hat{u}.\]
In particular, we obtain that:
\[v^l\nabla^m\nabla_3^l\underline{\alpha}-\nabla^m\underline{\mathcal{O}}\log\bigg|\frac{u}{v}\bigg|\in W^{1,1}_u\big([-\underline{v}v,0]\big)L^2(S^n).\]
We can define the symmetric traceless two tensor $\underline{h}$ which is independent of $u$ and $v$ by:
\[\nabla^m\underline{h}:=\lim_{u\rightarrow0^-}\bigg(v^l\nabla^m\nabla_3^l\underline{\alpha}-\nabla^m\underline{\mathcal{O}}\log\bigg|\frac{u}{v}\bigg|\bigg)\]
The above estimates imply that:
\[\big\|v^{2+m}\nabla^m\underline{h}\big\|_{L^2(S^n)}\lesssim\epsilon'.\]
Finally, we have that:
\[v^{2+m}\bigg\|v^l\nabla^m\nabla_3^l\underline{\alpha}-\nabla^m\underline{\mathcal{O}}\log\bigg|\frac{u}{v}\bigg|-\nabla^m\underline{h}\bigg\|_{L^2(S^n)}\lesssim\int_u^0v^{2+m}\big\|\mathcal{E}\big\|_{L^2(S^n)}d\hat{u}\lesssim\epsilon'\cdot\bigg|\frac{u}{v}\bigg|^{1-p}.\]
\end{proof}

We can use the results proved so far in order to complete the proof of Theorem \ref{asymptotic completeness in section theorem}:

\textit{Proof of Theorem \ref{asymptotic completeness in section theorem}.} Based on Proposition \ref{asymptotic completeness preliminary proposition} and \ref{asymptotic completeness preliminary proposition 2}, we define $\slashed{g}$ and $\underline{h}$ on $\{u=0\}.$ We prove that the spacetime $\big(\mathcal{M},g\big)$ satisfies the required conditions needed in order to apply the main result of \cite{selfsimilarvacuum}, which shows that the spacetime is determined by the induced asymptotic data $\underline{\slashed{g}_{0}}=\slashed{g}\big|_{(u,v)=(0,1)},\ \underline{h}.$ For $N'$ large enough, we have:
\begin{itemize}
    \item For any $0\leq l\leq\frac{n-2}{2},\ 0\leq m\leq N'$, the limit $\lim_{u\rightarrow0^-}\mathcal{L}_{\theta}^m\mathcal{L}_u^l\slashed{g}$ exists and is uniformly bounded (with appropriate self-similar $v$ weights) by Proposition \ref{asymptotic completeness preliminary proposition}. Moreover, we remark that Proposition \ref{asymptotic completeness preliminary proposition} also implies that we can extend to $\{u=0\}$ the following equations: the constraint equations, the $\nabla_4$ null structure equations, the $\nabla_4$ Bianchi equations for $\underline{\Psi}^G$ when commuted with up to $\frac{n-4}{2}$ $\nabla_3$ derivatives; the $\nabla_3$ null structure equations, the $\nabla_3$ Bianchi equations, the $\nabla_4\underline{\alpha}$ Bianchi equation when commuted with up to $\frac{n-6}{2}$ $\nabla_3$ derivatives. Using these equations on $\{u=0\}$, the argument in \cite[Proposition~4.3]{selfsimilarvacuum} implies that for $0<l\leq\frac{n-2}{2}$ the limits $\mathcal{L}_{\theta}^m\mathcal{L}_u^l\slashed{g}\big|_{(u,v)=(0,1)}$ have certain prescribed values in terms of $\underline{\slashed{g}_{0}}$ and satisfy the compatibility conditions.
    \item Proposition \ref{asymptotic completeness preliminary proposition} implies that for any $0\leq l\leq\frac{n-6}{2}$ and $0\leq m\leq N'$, we have on $\{v=1\}$:
        \[\Big\|\nabla^m\nabla_3^l\underline{\alpha}-\big(\nabla^m\nabla_3^l\underline{\alpha}\big)\big|_{u=0}\Big\|_{L^2(S^n)}\lesssim\epsilon'\cdot|u|^{1-p}.\]
    We obtain a similar result for any $0\leq l\leq\frac{n-2}{2}$ and $0\leq m\leq N',$ on $\{v=1\}$:
    \[\Big\|\mathcal{L}_{\theta}^m\mathcal{L}_3^l\hat{\slashed{g}}-\mathcal{L}_{\theta}^m\mathcal{L}_3^l\underline{\hat{\slashed{g}}_{0}}\Big\|_{L^2(S^n)}\lesssim\epsilon'\cdot|u|^{1-p}.\]
    \item In Proposition \ref{asymptotic completeness preliminary proposition 2} we define the obstruction tensor $\underline{\mathcal{O}}$ in terms of $\underline{\slashed{g}_{0}}$ (since all the curvature components and Ricci coefficients appearing in our definition of $\underline{\mathcal{O}}$ can be expressed in terms of $\underline{\slashed{g}_{0}}$). We remark that by construction $\underline{\mathcal{O}}$ satisfies the compatibility condition in \cite[Proposition~4.3]{selfsimilarvacuum}. Moreover, we proved that for any $0\leq m\leq N'$ the limit:
    \[\nabla^m\underline{h}=\lim_{u\rightarrow0^-}\bigg(v^{\frac{n-4}{2}}\nabla^m\nabla_3^{\frac{n-4}{2}}\underline{\alpha}-\nabla^m\underline{\mathcal{O}}\log\bigg|\frac{u}{v}\bigg|\bigg)\]
    exists and is uniformly bounded (with appropriate self-similar $v$ weights).
    \item Proposition \ref{asymptotic completeness preliminary proposition 2} implies that for any $0\leq m\leq N'$, we have on $\{v=1\}$:
    \[\Big\|\nabla^m\nabla_3^{\frac{n-4}{2}}\underline{\alpha}-\nabla^m\underline{\mathcal{O}}\log|u|-\nabla^m\underline{h}\Big\|_{L^2(S^n)}\lesssim\epsilon'\cdot|u|^{1-p}.\]
\end{itemize}
As a result, we obtain that on $v=1$ and $-\underline{v}\leq u\leq0,$ the metric $\slashed{g}$ induces a 1-parameter family $\hat{\slashed{g}}(u)$ of conformal classes of metrics on $S^n$ admissible relative to $\underline{\slashed{g}_{0}}$, according to the definitions of \cite[Definition~1.2]{selfsimilarvacuum}. Thus, we can apply \cite[Theorem~1.1]{selfsimilarvacuum} to obtain that $\big(\mathcal{M},g\big)$ is the unique self-similar solution with data $\big(\underline{\slashed{g}_{0}},\underline{h}\big).$ Since we already know that $\big(\mathcal{M},g\big)$ is a straight spacetime, we obtain that $\underline{h}$ must satisfy the straightness condition. Moreover, the estimates for $\underline{\slashed{g}_{0}}^*,\underline{\mathcal{O}},$ and $\underline{h}$ were already proved in Proposition \ref{asymptotic completeness preliminary proposition} and \ref{asymptotic completeness preliminary proposition 2}.

Finally, we establish the propagation of regularity statement. Continuing the argument in the proof of Theorem \ref{propagation of regularity theorem}, for any $K>N_3$ we can use the bounds previously obtained to repeat the proofs of Propositions \ref{asymptotic completeness preliminary proposition} and \ref{asymptotic completeness preliminary proposition 2}, with $(N',q,p,\underline{v})$ replaced by $(K,q,p,\underline{v}_K)$, and the $\epsilon'$ on the right hand side of the estimates replaced by $C(K).$ As a result, we obtain that  $\underline{\slashed{g}_{0}},\ \underline{\mathcal{O}},\  \underline{h}\in H^{K-3}(S^n)$ for all $K>N'$. We conclude that the straight data induced at $(u,v)=(0,1)$ given by $\big(\underline{\slashed{g}_{0}},\underline{h}\big)$ is smooth.
\qed

\section{The Model Systems}\label{model systemmm}

We derive the systems of commuted Bianchi equations along $\{u=-1\}$ required for the analysis of the scattering map. We also introduce the model systems, given by the principal part of the commuted Bianchi equations at top order with a general inhomogeneous term. The systems that we consider consist of wave equations that are singular at $\{v=0\}.$

\textbf{Notation convention.} We consider $M>N$ large enough, where $N>0$ is as in Theorem~\ref{stability of de sitter theorem in section}. Unless otherwise noted, for the rest of the paper we write $A\lesssim B$ for some quantities $A,B>0$ if there exists a constant $C>0$ depending only on $M$ such that $A\leq CB.$ 

\textbf{Integration convention.} For the remainder of the paper we make the convention that the volume form used on the sphere $S_{u,v}=\{(u,v)\}\times S^n$ with induced metric $\slashed{g}_{u,v}$ is $d\slashed{Vol}_{\slashed{g}_{u,v}},$ in order to be consistent with the notation in \cite{Cwave}. We note that this convention is different from the one in Section~\ref{stability dS section} and Section~\ref{asymptotic completeness section}.

\subsection{Bianchi Equations}

We write the Bianchi system along $\{u=-1\}$ as a system of wave equations. Using the fact that $\nabla_S\Psi=-2\Psi,$ and that $\tr\underline{\chi}=v\tr\chi-n,$ we can rewrite the equations for the Bianchi pairs on $\{u=-1\}$ as follows:
\begin{align}
    v\nabla_4\alpha_{AB}+\bigg(2-\frac{n}{2}+\frac{v}{2}\tr\chi\bigg)\alpha_{AB}&=-\nabla^C\nu_{C(AB)}+\mathcal{E}_1^{(3)}\\
    \nabla_4\nu_{ABC}&=-2\nabla_{[A}\alpha_{B]C}+\mathcal{E}_{1/2}^{(4)}\\
    v\nabla_4\nu_{ABC}+\frac{2v}{n}\tr\chi\nu_{ABC}&=-2\nabla_{[A}\tau_{B]C}+2\underline{\hat{\chi}}_{[A}^D\nu_{|D|B]C}+\mathcal{E}_{3/2}^{(3)}\\
    \nabla_4R_{ABCD}&=-2\nabla_{[A}\nu_{|CD|B]}+\mathcal{E}_{1}^{(4)}\label{R}\\
    v\nabla_4R_{ABCD}+\frac{2v}{n}\tr\chi R_{ABCD}&=-2\nabla_{[A}\underline{\nu}_{|CD|B]}+\underline{\chi}_{A[D}\tau_{C]B}+\underline{\chi}_{B[C}\tau_{D]A}+2\underline{\hat{\chi}}_{[A}^ER_{B]ECD}+\mathcal{E}_{2}^{(3)}\label{v R}\\
    \nabla_4\underline{\nu}_{ABC}&=-2\nabla_{[A}\tau_{B]C}+\mathcal{E}_{3/2}^{(4)}\label{underline nu}\\
    v\nabla_4\underline{\nu}_{ABC}+\bigg(-1+\frac{3v}{n}\tr\chi\bigg)\underline{\nu}_{ABC}&=-2\nabla_{[A}\underline{\alpha}_{B]C}+2\underline{\hat{\chi}}_{[A}^D\underline{\nu}_{B]DC}+2\underline{\hat{\chi}}_{[A}^D\underline{\nu}_{|CD|B]}+\mathcal{E}_{5/2}^{(3)}\label{v underline nu}\\
    \nabla_4\underline{\alpha}_{AB}&=-\nabla^C\underline{\nu}_{C(AB)}+\mathcal{E}_{2}^{(4)}\label{underline alpha}.
\end{align}

Using the commutation formulas in Lemma~\ref{commutation lemma with F notation}, we can rewrite the system of Bianchi equations on $\{u=-1\}$ as a system of wave equations:
\begin{proposition}
    We have the system of wave equations on $\{u=-1\}$ for any $0\leq m\leq M,$ $0\leq l\leq\frac{n}{2}-2:$
    \begin{equation}\label{model wave system 2}
    \begin{cases}
        v\nabla_4^2\nabla^m\nabla_4^l\alpha+\Big(3+l-\frac{n}{2}\Big)\nabla_4\nabla^m\nabla_4^l\alpha-\Delta\nabla^m\nabla_4^l\alpha=\psi\nabla^{m+1}\nabla_4^l\Psi+Err_{ml}^{\Psi} \\
        v\nabla_4^2\nabla^m\nabla_4^l\Psi^G+\Big(2+l-\frac{n}{2}\Big)\nabla_4\nabla^m\nabla_4^l\Psi^G-\Delta\nabla^m\nabla_4^l\Psi^G=\sum_{\Psi^G_0}\psi\nabla^{m+1}\nabla_4^l\Psi^G_0+Err_{ml}^{\Psi},
    \end{cases}
    \end{equation}
    where we have the error term notation:
    \[Err_{ml}^{\Psi}=v\mathcal{F}_{(m)(l+1)(l+m+1)}(\Psi)+v\mathcal{F}_{(m+1)(l)(l+m+1)}(\Psi)+\mathcal{F}_{(2+m)(l-1)(l+m+1)}(\Psi)+\mathcal{F}_{(m)(l)(m+l)}(\Psi)+\]\[+\mathcal{F}_{(m+1)(l)(m+l+1)}^{lot}(\Psi)+\nabla^m\nabla_4^l\big(\Psi\cdot\Psi^G\big)+\sum_{i+2j=m}\nabla^i\big(\slashed{Riem}^{j+1}\cdot\nabla_4^l\Psi\big)+\nabla^m\nabla_4^l\nabla\big(\psi\psi\big)+\nabla^m\nabla_4^l\nabla\big(\psi\psi\psi\big)\]
    and we point out that in the RHS of \eqref{model wave system 2} we sum the terms for all $\Psi.$ Similarly, using the schematic equation:
    \begin{equation}\label{schematic equation PsiG}
        \nabla_4\nabla^m\nabla_4^l\Psi^G=\psi\nabla^{m+1}\nabla_4^l\Psi+Err_{ml}^{\Psi},
    \end{equation}
    we also have the system of wave equations on $\{u=-1\}$ for any $0\leq m\leq M,$ $0\leq l\leq\frac{n}{2}-2:$
    \begin{equation}\label{model wave system 1}
    \begin{cases}
        v\nabla_4^2\nabla^m\nabla_4^l\alpha+\Big(3+l-\frac{n}{2}\Big)\nabla_4\nabla^m\nabla_4^l\alpha-\Delta\nabla^m\nabla_4^l\alpha=\psi\nabla^{m+1}\nabla_4^l\Psi+Err_{ml}^{\Psi} \\
        v\nabla_4^2\nabla^m\nabla_4^l\Psi^G+\Big(3+l-\frac{n}{2}\Big)\nabla_4\nabla^m\nabla_4^l\Psi^G-\Delta\nabla^m\nabla_4^l\Psi^G=\psi\nabla^{m+1}\nabla_4^l\Psi+Err_{ml}^{\Psi}.
    \end{cases}
    \end{equation}
\end{proposition}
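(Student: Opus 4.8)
The plan is to read off both systems from the Bianchi equations of Proposition~\ref{Bianchi equation proposition}, working on the null cone $\{u=-1\}$. First I would use the two structural identities available there. Self-similarity gives $\nabla_S\Psi=-2\Psi$, and on $\{u=-1\}$ one has $\nabla_S=-\nabla_3+v\nabla_4$, so $\nabla_3\Psi=v\nabla_4\Psi+2\Psi$ for every curvature component; taking the trace of $u\underline{\chi}+v\chi=\slashed{g}$ and restricting to $u=-1$ gives $\tr\underline{\chi}=v\tr\chi-n$. Substituting both into the four $\nabla_3$-Bianchi equations turns each into a weighted $\nabla_4$ equation of the form $v\nabla_4\Psi_1+c(v)\Psi_1=\mathcal{D}\Psi_2+\mathcal{E}^{(3)}$, where $\Psi_2$ is the partner curvature component, $\mathcal{D}$ is a first-order angular operator, and the zeroth-order coefficient $c(v)$ is a genuine constant — produced by the $-n$ in $\tr\underline{\chi}=v\tr\chi-n$ and the $+2$ in $\nabla_3=v\nabla_4+2$ — plus a term of schematic type $v\,\psi\,\Psi$ coming from the surviving $v\tr\chi$. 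Together with the four remaining, unweighted $\nabla_4$-Bianchi equations $\nabla_4\Psi_2=\mathcal{D}^{*}\Psi_1+\mathcal{E}^{(4)}$ with $\mathcal{D}^{*}$ formally adjoint to $\mathcal{D}$, this reproduces the first-order system displayed before the proposition, with the $v\,\psi\,\Psi$ parts of the coefficients set aside for the error term.

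Next I would upgrade each Bianchi pair to a second-order equation. For the ``weighted'' member one applies $\nabla_4$ to its weighted equation and substitutes the unweighted equation for $\nabla_4\Psi_2$; for the ``unweighted'' member one applies $v\nabla_4$ to its unweighted equation and substitutes the weighted equation for $v\nabla_4\Psi_1$. In both cases the two angular operators of the pair combine as $\mathcal{D}\mathcal{D}^{*}\Psi=\Delta\Psi+(\text{lower order})$, which one verifies using the divergence constraints of Proposition~\ref{addtional constraints proposition}, the curvature identities of Proposition~\ref{constraint equations proposition}, and $\slashed{Riem}=R+\psi\cdot\psi$; this produces the $-\Delta$ on the left. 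The first-order coefficient is the constant part of $c(v)$, plus the $+1$ coming from $\nabla_4(v\nabla_4\,\cdot\,)=v\nabla_4^{2}+\nabla_4$ in the weighted case; commuting with $\nabla^{m}\nabla_4^{l}$ then adds $+l$ through $[\nabla_4^{l},v\nabla_4]$, giving $3+l-\tfrac n2$ for $\alpha$ and $2+l-\tfrac n2$ for $\Psi^{G}$. Everything else — the $v\,\psi\,\Psi$ parts of the coefficients and their $\nabla_4$-derivatives, the commutators of Section~\ref{set up section} between $\nabla^{m}\nabla_4^{l},v\nabla_4,\nabla,\Delta$ and the angular operators, the lower-order curvature terms from $\mathcal{D}\mathcal{D}^{*}$, and the quadratic and cubic $\psi$-terms from the $\mathcal{E}$'s and the constraints — is matched term by term against the schematic slots defining $Err_{ml}^{\Psi}$. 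The one term that is genuinely at top order and is left on the right is $\psi\nabla^{m+1}\nabla_4^{l}\Psi$, which arises from the commutator $[\nabla_4^{l},\nabla]$ (carrying a $\chi$-factor) applied to the substituted equation. This establishes \eqref{model wave system 2}.

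The second system \eqref{model wave system 1} differs from \eqref{model wave system 2} only in the first-order coefficient of the $\Psi^{G}$ equation, raised from $2+l-\tfrac n2$ to $3+l-\tfrac n2$, the $\alpha$ equation being identical. The bridge is the schematic identity \eqref{schematic equation PsiG}, which I would derive by commuting the unweighted $\nabla_4$-Bianchi equation for $\Psi^{G}$ with $\nabla^{m}\nabla_4^{l}$, re-expressing its angular-derivative-of-curvature right-hand side through the constraint equations of Proposition~\ref{constraint equations proposition} and the null structure equations, and checking that everything but $\psi\nabla^{m+1}\nabla_4^{l}\Psi$ again falls into the $Err_{ml}^{\Psi}$ template. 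Adding $\nabla_4\nabla^{m}\nabla_4^{l}\Psi^{G}$ to both sides of the $\Psi^{G}$ equation in \eqref{model wave system 2} converts its left-hand side into that of \eqref{model wave system 1}, while by \eqref{schematic equation PsiG} the new right-hand contribution is once more of the form $\psi\nabla^{m+1}\nabla_4^{l}\Psi+Err_{ml}^{\Psi}$ and is reabsorbed.

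The main obstacle is the bookkeeping in the second paragraph: one must verify that after differentiating the weighted Bianchi equations, every non-principal term genuinely fits one of the schematic slots of $Err_{ml}^{\Psi}$, with the right number of angular and $\nabla_4$ derivatives and at least one Ricci-coefficient or extra curvature factor — in particular that the $v$-dependent parts of the coefficients (which involve $\tr\chi=\tfrac{n}{v+1}+\tr\chi^{*}$), the mixed commutators between $\nabla_4$ and the angular operators, and the lower-order curvature terms from $\mathcal{D}\mathcal{D}^{*}=\Delta+\dots$ are all accounted for. This is long but routine: once organized around the signature count of Section~\ref{set up section} and the closure property $\nabla^{i}\nabla_4^{j}\mathcal{F}_{mlp}(\Psi)=\mathcal{F}_{(m+i)(l+j)(p+i+j)}(\Psi)$, every term matches mechanically.
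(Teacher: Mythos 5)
Your proposal takes essentially the same route as the paper: rewrite the $\nabla_3$-Bianchi equations as weighted $\nabla_4$-equations using $\nabla_S\Psi = -2\Psi$ and $\tr\underline\chi = v\tr\chi - n$, differentiate once more, close the Bianchi pair so the two angular operators produce $-\Delta$ modulo the constraint equations, commute with $\nabla^m\nabla_4^l$ while sorting the non-principal terms into the schematic slots of $Err_{ml}^{\Psi}$, and pass between \eqref{model wave system 2} and \eqref{model wave system 1} by adding $\nabla_4\nabla^m\nabla_4^l\Psi^G$ and reabsorbing it via \eqref{schematic equation PsiG}. The paper itself only carries out the $(\alpha,\nu)$ pair explicitly and asserts the remaining $\Psi^G$ equations by ``a similar argument relying on Bianchi pairs, together with \eqref{schematic equation PsiG}''; your sketch is at the same level of detail.

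One place where your coefficient accounting overclaims is the assertion that the $\nabla_4\nabla^m\nabla_4^l\Psi^G$-coefficient comes out uniformly $2+l-\tfrac n2$ because it equals ``the constant part of $c(v)$'': the actual constants in the rewritten $\nabla_3$-equations displayed at the start of Section~\ref{model systemmm} are $2-\tfrac n2$ for $\alpha$, $0$ for $\nu$ and $R$, and $-1$ for $\underline\nu$ (and $\underline\alpha$ has no weighted equation at all), so a direct pair-by-pair reading of $c(v)$ does not produce a single value. This is not a genuine gap, because the very mechanism you already invoke handles it: by \eqref{schematic equation PsiG} the quantity $\nabla_4\nabla^m\nabla_4^l\Psi^G$ is itself of the form $\psi\nabla^{m+1}\nabla_4^l\Psi + Err_{ml}^{\Psi}$, so any constant mismatch in its coefficient can be shifted to the stated value without altering the schematic form of the right-hand side. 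The paper's phrasing that \emph{both} \eqref{model wave system 2} and \eqref{model wave system 1} follow ``together with \eqref{schematic equation PsiG}'' signals exactly this normalization — it is used to pin down the $\Psi^G$-coefficient in each system, not merely to pass from one to the other. With that adjustment your argument matches the paper's proof.
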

\begin{proof}
    Using the equations satisfied by the Bianchi pair $(\alpha,\nu),$ we obtain:
    \[v\nabla_4^2\alpha_{AB}+\bigg(3-\frac{n}{2}\bigg)\nabla_4\alpha_{AB}=-\nabla_4\nabla^C\nu_{C(AB)}+\nabla_4\big(\psi\Psi^G\big)+\nabla_4\big(v\psi\Psi\big)=\]\[=\Delta\alpha_{AB}+\slashed{Riem}\cdot\alpha+\nabla_4\big(\psi\Psi^G\big)+\nabla_4\big(v\psi\Psi\big)+\nabla\big(\psi\Psi^G\big)=\Delta\alpha_{AB}+\psi\nabla\Psi+\Psi\Psi^G+\mathcal{F}_{101}^{lot}(\Psi)+v\mathcal{F}_{011}(\Psi),\]
    where we also used the other Bianchi equations and null structure equations on $\{u=-1\}$. We commute with $\nabla_4^l$:
    \begin{align*}
        v\nabla_4^2\nabla_4^l\alpha_{AB}+\bigg(3+l-\frac{n}{2}\bigg)\nabla_4\nabla_4^l\alpha_{AB}-\Delta\nabla_4^l\alpha_{AB}&=\psi\nabla\nabla_4^l\Psi+\mathcal{F}_{(2)(l-1)(l+1)}(\Psi)\\&+\mathcal{F}_{(1)(l)(l+1)}^{lot}(\Psi)+v\mathcal{F}_{(0)(l+1)(l+1)}(\Psi)+\nabla_4^l\big(\Psi\Psi^G\big).
    \end{align*}
    Commuting with $\nabla^m$, we obtain the equation for $\nabla^m\nabla_4^l\alpha.$ Next, we compute the wave equation satisfied by $\nu$:
    \begin{align*}
        v\nabla_4^2\nu_{ABC}&=-2\nabla_{[A}v\nabla_4\alpha_{B]C}+ v\nabla\big(\psi\Psi\big)+v\nabla_4\big(\psi\Psi^G\big)=\\
        &=\nabla_A\nabla^D\nu_{D(BC)}-\nabla_B\nabla^D\nu_{D(AC)}-\bigg(2-\frac{n}{2}\bigg)\nabla_4\nu_{ABC}+\mathcal{F}_{101}(\Psi^G)+v\mathcal{F}_{101}(\Psi)+v\mathcal{F}_{011}(\Psi).
    \end{align*}
    Using the constraint equations $\nu_{ABC}=2\nabla_{[A}\chi_{B]C}$ and $\nabla^A\chi_{AB}=\nabla_B\tr\chi,$ we can rewrite the first two terms on the RHS as $\Delta\nu_{ABC}+\nabla\big(\slashed{Riem}\cdot\chi\big).$ As a result, we get:
    \[v\nabla_4^2\nu_{ABC}=\Delta\nu_{ABC}-\bigg(2-\frac{n}{2}\bigg)\nabla_4\nu_{ABC}+\psi\nabla\Psi^G+\mathcal{F}^{lot}_{101}(\Psi)+v\mathcal{F}_{101}(\Psi)+v\mathcal{F}_{011}(\Psi)+\nabla(\psi\psi\psi).\]
    We remark that the term $\psi\nabla\alpha$ does not appear on the RHS. We commute as before to get the equation for $\nabla^m\nabla_4^l\nu,$ which does not contain $\psi\nabla^{m+1}\nabla_4^l\alpha$ on the RHS. 

    We use a similar argument for the remaining curvature components, and we briefly note here the structure of the equations that we use to rule out the dangerous term $\psi\nabla^{m+1}\nabla_4^l\alpha$ on the RHS. For $R$ we consider \eqref{v R} and note by signature considerations that $\alpha,\nu$ are absent. We differentiate the equation in $v$ and use the fact that \eqref{underline nu},\eqref{underline alpha} do not contain $\alpha,$ while \eqref{R} does not contain $\nabla\alpha.$ For $\nu$ we use the same argument, starting with \eqref{v underline nu}. We use a similar argument for $\underline{\alpha},$ starting with \eqref{underline alpha} and using the fact that $\alpha$ is absent in \eqref{v underline nu}. Thus, we proved \eqref{model wave system 2}.
    
    Finally, we note that \eqref{schematic equation PsiG} follows from the commutation formulas in Lemma~\ref{commutation lemma with F notation}. Thus, using this in \eqref{model wave system 2} we obtain \eqref{model wave system 1} as well.
\end{proof}

\subsection{Model Systems}\label{model systems section}

In this section we introduce two systems of linear wave equations on the background spacetime obtained in Theorem~\ref{stability of de sitter theorem in section}. These systems will model the linear part of the commuted Bianchi equations on $\{u=-1\}$ from the previous section, with the nonlinear part being contained in an inhomogeneous term. In our companion work \cite{Cwave}, we provide a detailed study of solutions to the model systems.

As inspired by our treatment of the linear wave equation on the background of de Sitter space in \cite{linearwave}, we consider the new time variable:
\[\tau=\sqrt{v},\ e_4=\frac{1}{2\tau}\partial_{\tau}.\]
We recall that all the tensors along $\{u=-1\}$ are expressed in a Lie propagated frame with respect to $e_4=\partial_v.$ We notice that $\mathcal{L}_4e_A=0$ is equivalent to $\mathcal{L}_{\tau}e_A=0$, so we can extend any tensors defined only at $\{\tau=0\}$ to be independent of $\tau.$ For any horizontal $k$-tensor $\Phi$ we compute that:
\begin{align}\label{formula for lie time derivative}
    \nabla_{\tau}\Phi&=\mathcal{L}_{\tau}\Phi+\tau\chi\cdot\Phi\\
    \nabla_{\tau}\nabla_{\tau}\Phi_{A_1\ldots A_k}&=\nabla_{\tau}\big(\nabla_{\tau}\Phi\big)_{A_1\ldots A_k}-\frac{1}{\tau}\nabla_{\tau}\Phi_{A_1\ldots A_k}.\notag
\end{align}
We consider the following model system:
\begin{definition}[First Model System]
    We assume that the smooth horizontal tensors $\Phi_0,\ldots,\Phi_I$ defined on the hypersurface $\{u=-1\}\times\{\tau\in(0,1)\}\times S^n$ of the spacetime $\big(\mathcal{M},g\big)$ obtained in Theorem~\ref{stability of de sitter theorem in section} satisfy the expansions for all $1\leq i\leq I$:
    \begin{align*}
        &\Phi_0=2\mathcal{O}\log\tau+h+O\big(\tau^2|\log\tau|^2\big),\ \nabla_{\tau}\Phi_0=\frac{2\mathcal{O}}{\tau}+O\big(\tau|\log\tau|^2\big)\text{ in }C^{\infty}(S^n)\\
        &\Phi_i=\Phi_i^0+O\big(\tau^2|\log\tau|^2\big),\ \nabla_{\tau}\Phi_i=O\big(\tau|\log\tau|^2\big)\text{ in }C^{\infty}(S^n)
    \end{align*}
    and the model system of wave equations on $\{u=-1\}$ for any $0\leq m\leq M$:
    \begin{equation}\label{first model system definition}
        \begin{cases}
            \nabla_{\tau}\big(\nabla_{\tau}\nabla^m\Phi_0\big)+\frac{1}{\tau}\nabla_{\tau}\nabla^m\Phi_0-4\Delta\nabla^m\Phi_0=\psi\nabla^{m+1}\Phi+F_{m}^{0} \\
            \nabla_{\tau}\big(\nabla_{\tau}\nabla^m\Phi_i\big)+\frac{1}{\tau}\nabla_{\tau}\nabla^m\Phi_i-4\Delta\nabla^m\Phi_i=\psi\nabla^{m+1}\Phi+F_{m}^{i},
        \end{cases}
    \end{equation}
    where the inhomogeneous terms satisfy $F_{m}^{0},F_{m}^{i}\in L^1_\tau\big([0,1]\big)C^{\infty}(S^n),$ and the covariant angular derivatives are with respect to the metric $\slashed{g}_{\tau}:=\slashed{g}_{u=-1,v=\tau^2}$ induced on $S_{\tau}=\{u=-1\}\times\{\tau\}\times S^n$.
\end{definition}

Based on equation (\ref{model wave system 1}), we obtain that: \[\Phi_0=\nabla_4^{\frac{n-4}{2}}\alpha,\ \Phi_i=\nabla_4^{\frac{n-4}{2}}\Psi^G,\ F_{m}^{0}=Err_{m,\frac{n-4}{2}}^{\Psi},\ F_{m}^{i}=Err_{m,\frac{n-4}{2}}^{\Psi}\]
satisfy the first model system, where the desired asymptotic expansions follow by Section~\ref{asymptotic completeness section}, or similarly by \cite{selfsimilarvacuum}. We prove estimates for this system in Section~\ref{model forward direction section} and \cite{Cwave}, and use these in Section~\ref{forward direction full system section} to obtain estimates for the commuted Bianchi system at finite times in terms of the asymptotic data at $\{v=0\}$.

Similarly, we also consider the following model system:
\begin{definition}[Second Model System]
    We assume that the smooth horizontal tensors $\Phi_0,\ldots,\Phi_I$ defined on the hypersurface $\{u=-1\}\times\{\tau\in(0,1)\}\times S^n$ of the spacetime $\big(\mathcal{M},g\big)$ obtained in Theorem \ref{stability of de sitter theorem in section} satisfy the expansions for all $1\leq i\leq I$:
    \begin{align*}
        &\Phi_0=2\mathcal{O}\log\tau+h+O\big(\tau^2|\log\tau|^2\big),\ \nabla_{\tau}\Phi_0=\frac{2\mathcal{O}}{\tau}+O\big(\tau|\log\tau|^2\big)\text{ in }C^{\infty}(S^n)\\
        &\Phi_i=\Phi_i^0+O\big(\tau^2|\log\tau|^2\big),\ \nabla_{\tau}\Phi_i=O\big(\tau|\log\tau|^2\big)\text{ in }C^{\infty}(S^n)
    \end{align*}
    and the model system of wave equations on $\{u=-1\}$ for any $0\leq m\leq M$:
    \begin{equation}\label{second model system definition}
        \begin{cases}
            \nabla_{\tau}\big(\nabla_{\tau}\nabla^m\Phi_0\big)+\frac{1}{\tau}\nabla_{\tau}\nabla^m\Phi_0-4\Delta\nabla^m\Phi_0=\psi\nabla^{m+1}\Phi+F_{m}^{0} \\
            \nabla_{\tau}\big(\nabla_{\tau}\nabla^m\Phi_i\big)-\frac{1}{\tau}\nabla_{\tau}\nabla^m\Phi_i-4\Delta\nabla^m\Phi_i=\sum_{j\neq0}\psi\nabla^{m+1}\Phi_j+F_{m}^{i},
        \end{cases}
    \end{equation}
    where the inhomogeneous terms satisfy $F_{m}^{0},F_{m}^{i}\in L^1_\tau\big([0,1]\big)C^{\infty}(S^n),$ and the covariant angular derivatives are with respect to the metric $\slashed{g}_{\tau}:=\slashed{g}_{u=-1,v=\tau^2}$ induced on $S_{\tau}=\{u=-1\}\times\{\tau\}\times S^n$.
\end{definition}

As before, equation (\ref{model wave system 2}) implies that: \[\Phi_0=\nabla_4^{\frac{n-4}{2}}\alpha,\ \Phi_i=\nabla_4^{\frac{n-4}{2}}\Psi^G,\ F_{m}^{0}=Err_{m,\frac{n-4}{2}}^{\Psi},\ F_{m}^{i}=Err_{m,\frac{n-4}{2}}^{\Psi}\]
satisfy the second model system. We prove estimates for this system in Section~\ref{model backward direction section} and \cite{Cwave}, and use these in Section~\ref{backward direction full system section} to obtain estimates for the asymptotic data at $\{v=0\},\ \{u=0\}$ to the commuted Bianchi system in terms of initial data at finite times.

\begin{remark}\label{remark about model systems in section}
    In \cite{Cwave} we give an equivalent definition of the model systems using the fact that $\slashed{g}_{\tau}=\widetilde{\slashed{g}}(\log(2\tau)),$ where $\widetilde{\slashed{g}}$ is defined in \eqref{general form for g tilde} and $\tau=e^T/2$, as in Remark~\ref{remark about model systems intro}. We also remark that the necessary assumptions on the background spacetime required in \cite{Cwave} follow by writing the estimates in Theorem~\ref{stability of de sitter theorem in section} using the $\tau$ coordinate.
\end{remark}

\section{Geometric Littlewood-Paley Theory}\label{LP Section}

An essential part of the argument in \cite{linearwave} is the use of Littlewood-Paley projections, which provide a robust way of constructing frequency dependent multipliers. We want to use the same approach for the model systems introduced in Section~\ref{model systems section}. The new difficulty is that the metric $\slashed{g}_{\tau}$ induced by the background on the spheres $S_{\tau}=\{u=-1\}\times\{\tau\}\times S^n$ has a nontrivial time dependence, compared to the case of de Sitter space. Our solution is to use the geometric Littlewood-Paley theory of \cite{geometricLP}. In this section, we state the main definitions and results of \cite{geometricLP} that we use. We also state a series of additional results needed in our situation that we proved in \cite{Cwave}.

For any tensor field $F$ on $S_{\tau}$, we denote by $U(z)F$ the solution on $[0,\infty)\times S_{\tau}$ to the heat equation:
\[\partial_zU(z)F-\Delta U(z)F=0,\ U(0)F=F,\]
where $\Delta$ is the Laplace-Beltrami operator on $\big(S_{\tau},\slashed{g}_{\tau}\big).$

For any $m\in\mathcal{M}$ smooth symbol as defined in \cite{geometricLP}, decaying at infinity, and satisfying vanishing moments properties, we set $m_k(z)=2^{2k}m(2^{2k}z).$ For any tensor field $F$ on $S_{\tau}$, we define the LP projection:
\[P_kF=\int_0^{\infty}m_k(z)U(z)Fdz.\]
We refer the reader to Theorem 5.5 in \cite{geometricLP} for the fundamental properties of these operators, similar to the standard LP projections. We use the following estimates for the  LP projections of \cite{geometricLP}:
\begin{proposition*}[{\cite[Theorem 5.5, Remark 5.6]{geometricLP}}]
    For an arbitrary LP projection, and any smooth tensor $F$ we have:
    \begin{enumerate}
        \item Bessel inequality.
        \[\sum_{k\in\mathbb{Z}}\big\|P_kF\big\|_{L^2}^2\lesssim\big\|F\big\|_{L^2}^2.\]
        \item Finite band property.
        \begin{align*}
            &\big\|\nabla P_kF\big\|_{L^2}\lesssim2^k\big\|F\big\|_{L^2},\ \big\|P_kF\big\|_{L^2}\lesssim2^{-k}\big\|\nabla F\big\|_{L^2}\\
            &\big\|\Delta P_kF\big\|_{L^2}^2\lesssim2^{2k}\big\|F\big\|_{L^2},\ \big\|P_kF\big\|_{L^2}\lesssim2^{-2k}\big\|\Delta F\big\|_{L^2}.
        \end{align*}
        \item $L^2$-almost orthogonality. For any two families of LP projections $P_k, \widetilde{P}_k$ we have:
        \[\big\| P_k\widetilde{P}_{k'}F\big\|_{L^2}\lesssim2^{-4|k-k'|}\cdot\big\|F\big\|_{L^2}.\]
    \end{enumerate}
\end{proposition*}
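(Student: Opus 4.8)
The plan is to reduce all three assertions to functional calculus for the non-negative self-adjoint Laplace-type operator generating the heat flow, together with elementary estimates on the associated symbols. This is precisely the content of Theorem 5.5 and Remark 5.6 of \cite{geometricLP}, and the sketch below records why it applies verbatim in the present setting, where $\slashed{g}_{\tau}$ stays uniformly comparable to $\tfrac14\slashed{g}_{S^n}$ for $\tau\in[0,1]$, so that all implicit constants are uniform in $\tau$.

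First I would fix $\tau$ and write $-\Delta$ for the non-negative self-adjoint operator on $L^2(S_{\tau})$-tensors whose heat semigroup is $U(z)=e^{z\Delta}$ (the Bochner or the Hodge Laplacian; the argument is insensitive to the choice, the difference being a bounded curvature term). Taking the Laplace transform of $U(z)$ and using $m_k(z)=2^{2k}m(2^{2k}z)$ gives $P_k=\Phi_k(-\Delta)$ with $\Phi_k(\lambda)=\Phi(2^{-2k}\lambda)$ and $\Phi(\mu)=\int_0^{\infty}m(s)e^{-s\mu}\,ds$. The admissibility hypotheses on $m$ — rapid decay together with vanishing moments of sufficiently high order $N$ — translate into the symbol bounds $|\Phi^{(j)}(\mu)|\lesssim\min\big(\mu^{N-j},\mu^{-N}\big)$, i.e. $\Phi$ is a Littlewood--Paley bump adapted to $\mu\sim1$, so that $\{P_k\}$ behaves like a standard dyadic decomposition.

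Given this, (1) follows since $\sum_k\|P_kF\|_{L^2}^2=\big\langle\big(\sum_k\Phi_k^2\big)(-\Delta)F,F\big\rangle\le\|\sum_k\Phi_k^2\|_{L^{\infty}([0,\infty))}\|F\|_{L^2}^2$, and $\sum_k|\Phi(2^{-2k}\lambda)|^2\lesssim1$ uniformly in $\lambda$ because each term is $\lesssim\min\big((2^{-2k}\lambda)^{2N},(2^{-2k}\lambda)^{-2N}\big)$, a summable geometric-type series. For (3), assuming $k\ge k'$ one has $\|P_k\widetilde{P}_{k'}\|_{L^2\to L^2}\le\sup_{\lambda\ge0}|\Phi(2^{-2k}\lambda)\widetilde{\Phi}(2^{-2k'}\lambda)|$; splitting at $\lambda=2^{k+k'}$ and using the $O(\mu^{N})$ vanishing of $\Phi$ at the origin on $\lambda\le2^{k+k'}$ and the $O(\mu^{-N})$ decay of $\widetilde{\Phi}$ at infinity on $\lambda\ge2^{k+k'}$ yields $\lesssim2^{-N|k-k'|}$, hence the stated $2^{-4|k-k'|}$ as soon as $N\ge4$. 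For the forward bounds in (2) I would use the parabolic smoothing estimates $\|\nabla U(z)\|_{L^2\to L^2}\lesssim z^{-1/2}$ and $\|\Delta U(z)\|_{L^2\to L^2}\lesssim z^{-1}$, so that $\|\nabla P_kF\|_{L^2}\le\int_0^{\infty}|m_k(z)|\,\|\nabla U(z)F\|_{L^2}\,dz\lesssim2^{k}\|F\|_{L^2}$ after the substitution $s=2^{2k}z$, and similarly $\|\Delta P_kF\|_{L^2}\lesssim2^{2k}\|F\|_{L^2}$. For the reverse bounds I would integrate by parts in $z$ using $\partial_zU(z)=\Delta U(z)$: writing $m_k=\partial_zM_k$ with $M_k(z)=-\int_z^{\infty}m_k$, the first vanishing moment gives $M_k(0)=0$, so $P_kF=-\int_0^{\infty}M_k(z)\,\Delta U(z)F\,dz$; since $\langle-\Delta U(z)F,\phi\rangle=\langle\nabla F,\nabla U(z)\phi\rangle$ forces $\|\Delta U(z)F\|_{L^2}\lesssim z^{-1/2}\|\nabla F\|_{L^2}$, this gives $\|P_kF\|_{L^2}\lesssim2^{-k}\|\nabla F\|_{L^2}$, and one further integration by parts (using a second vanishing moment) gives $\|P_kF\|_{L^2}\lesssim2^{-2k}\|\Delta F\|_{L^2}$.

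The only point that is not purely formal is the parabolic smoothing on tensor bundles, since $\nabla$ and $\Delta=\nabla^*\nabla$ fail to commute: the estimate $z\,\|\nabla U(z)F\|_{L^2}^2\lesssim\|F\|_{L^2}^2$ follows from $\tfrac{d}{dz}\|U(z)F\|_{L^2}^2=-2\|\nabla U(z)F\|_{L^2}^2$ together with near-monotonicity of $z\mapsto\|\nabla U(z)F\|_{L^2}^2$, the curvature commutator being absorbed by a Gronwall argument with constants depending only on uniform bounds for $\slashed{g}_{\tau}$ and its curvature, which hold here by the uniform comparability to $\tfrac14\slashed{g}_{S^n}$. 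I expect this curvature bookkeeping to be essentially the only obstacle; everything else reduces to the symbol estimates above. As this is Theorem 5.5 and Remark 5.6 of \cite{geometricLP}, we simply invoke it.
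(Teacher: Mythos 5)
The paper establishes this proposition by citing \cite{geometricLP} (Theorem~5.5 and Remark~5.6) rather than reproving it, and your last line does the same, so at the level of the paper's own argument the two approaches coincide. Your accompanying re-derivation is correct, but for items (1) and (3) it takes a shortcut that \cite{geometricLP} deliberately avoids: you invoke the spectral calculus of $-\Delta$ directly, writing $P_k=\Phi(2^{-2k}(-\Delta))$ with $\Phi(\mu)=\int_0^\infty m(s)e^{-s\mu}\,ds$ and reading the Bessel and almost-orthogonality bounds off the uniform symbol estimates. That is licit in the present setting because $-\Delta$ is the self-adjoint Bochner Laplacian on a fixed compact slice, and any two families $P_k,\widetilde P_{k'}$ built from the same heat flow commute; it buys you a very short proof. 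Klainerman--Rodnianski phrase the same content entirely through heat-flow estimates, duality, and integration by parts in $z$ -- essentially your argument for item (2) applied across the board -- so as to keep the constants tied only to quantitative geometric bounds rather than to a spectral decomposition; the two routes are logically equivalent here but theirs is the one that survives the passage to rough backgrounds. For item (2) your argument is essentially theirs.

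One small inaccuracy in your final paragraph: the estimate $z\,\|\nabla U(z)F\|_{L^2}^2\lesssim\|F\|_{L^2}^2$ does not in fact require a Gronwall absorption of curvature. With $\Delta=-\nabla^*\nabla$ one has, by a single integration by parts,
\[
\frac{d}{dz}\big\|\nabla U(z)F\big\|_{L^2}^2 \;=\; 2\big\langle\nabla\Delta U(z)F,\nabla U(z)F\big\rangle \;=\; -2\big\|\Delta U(z)F\big\|_{L^2}^2\;\le\;0,
\]
so $\big\|\nabla U(z)F\big\|_{L^2}^2$ is exactly monotone decreasing and the integral identity $\int_0^z\|\nabla U(s)F\|_{L^2}^2\,ds\le\tfrac12\|F\|_{L^2}^2$ immediately gives the smoothing bound with no curvature correction. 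The curvature commutator does genuinely appear in \cite{geometricLP} and in the refined commutator estimates used in this paper (e.g. the bounds of Lemma~\ref{LP Projections lemma}), but not in this particular parabolic smoothing step, so your hedge there is harmless but unnecessary.
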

Our main use of the geometric LP projections is to define fractional Sobolev spaces. For the remainder of the paper we use the definition of fractional Sobolev spaces according to the following result of \cite{geometricLP}:
\begin{proposition*}[{\cite[Corollary~7.12]{geometricLP}}]
    For an arbitrary LP projection, $a\geq0$ and any smooth tensor $F,$ we have:
   \[\sum_{k\geq0}2^{2ak}\big\|P_kF\big\|^2_{L^2}\lesssim\big\|F\big\|_{H^{a}}^2.\]
   Moreover, if $\sum_kP_k^2=I$ and $a<4,$ then:
   \[\big\|F\big\|_{H^{a}}^2\lesssim\sum_{k\geq0}2^{2ak}\big\|P_kF\big\|^2_{L^2}+\big\|F\big\|_{L^2}^2.\]
\end{proposition*}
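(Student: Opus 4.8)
The statement to prove is Corollary 7.12 of \cite{geometricLP}: for an arbitrary LP projection family $P_k$ and $a\ge 0$, one has $\sum_{k\ge 0}2^{2ak}\|P_kF\|_{L^2}^2\lesssim \|F\|_{H^a}^2$, and conversely $\|F\|_{H^a}^2\lesssim \sum_{k\ge 0}2^{2ak}\|P_kF\|_{L^2}^2+\|F\|_{L^2}^2$ whenever $\sum_k P_k^2=I$. Since this is quoted as an external result, the cleanest route is to reduce both bounds to the finite band property and the $L^2$-almost orthogonality stated in the preceding proposition, together with the standard characterization of $H^a$ via powers of $(I-\Delta)$; I will sketch that reduction rather than reprove \cite{geometricLP} from scratch.

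For the first (easy) inequality, I would first treat integer $a$, where it follows directly from the finite band property: $\|P_kF\|_{L^2}\lesssim 2^{-2k}\|\Delta P_kF\|_{L^2}$ iterated gives $2^{2ak}\|P_kF\|_{L^2}^2\lesssim \|(-\Delta)^{a/2}P_kF\|_{L^2}^2$ when $a$ is even, and the Bessel inequality applied to $(-\Delta)^{a/2}F$ (noting $P_k$ and $(-\Delta)^{a/2}$ commute up to the heat-kernel definition, or more precisely using that $U(z)$ commutes with $\Delta$) then sums these to $\|(-\Delta)^{a/2}F\|_{L^2}^2\lesssim \|F\|_{H^a}^2$. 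For odd or fractional $a$ I would interpolate: define $T_aF=(2^{ak}P_kF)_{k\ge 0}\in \ell^2(L^2)$, observe $T_a$ is bounded $H^0\to\ell^2 L^2$ (Bessel) and $H^m\to \ell^2 L^2$ for even $m$, and apply complex interpolation of the family $\{H^s\}$. This handles $0\le a$ uniformly.

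For the second (harder) inequality I would use the hypothesis $\sum_k P_k^2=I$ to write $F=\sum_k P_k^2 F$ and estimate $\|(-\Delta)^{a/2}F\|_{L^2}$ by pairing: for a test tensor $G$ with $\|G\|_{L^2}=1$, $\langle (-\Delta)^{a/2}F,G\rangle=\sum_k \langle (-\Delta)^{a/2}P_kF, P_kG\rangle$, and the finite band property gives $\|(-\Delta)^{a/2}P_kF\|_{L^2}\lesssim 2^{ak}\|P_kF\|_{L^2}$ for the high frequencies; but the genuine subtlety is that $P_k$ is only \emph{almost} orthogonal, so $P_k^2\ne$ a clean projection and $\sum P_k^2=I$ does not immediately decouple frequencies. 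The main obstacle is therefore controlling the off-diagonal interactions: one must insert a second resolution of identity, write $\langle (-\Delta)^{a/2}P_k^2 F, (-\Delta)^{a/2}P_{k'}^2F\rangle$, and use the $L^2$-almost orthogonality bound $\|P_k\widetilde P_{k'}F\|_{L^2}\lesssim 2^{-4|k-k'|}\|F\|_{L^2}$ (together with the finite band property on the surviving factors) to get a Schur-test-summable kernel $2^{-4|k-k'|}2^{a(k+k')}$ in the regime $k,k'$ comparable — the geometric decay $2^{-4|k-k'|}$ beats the polynomial growth $2^{a|k-k'|}$ as long as $a<4$, which is the relevant range for the $H^{M+1},H^{3/2},H^{1/2}$ norms used in this paper, and more generally one iterates almost orthogonality to gain arbitrary decay. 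The low-frequency contribution ($k$ or $k'$ near $0$) is absorbed into the $\|F\|_{L^2}^2$ term on the right, which is precisely why that extra term appears in the statement. Assembling the Schur bound over $k,k'\ge 0$ yields $\|(-\Delta)^{a/2}F\|_{L^2}^2\lesssim \sum_k 2^{2ak}\|P_kF\|_{L^2}^2+\|F\|_{L^2}^2$, and combining with $\|F\|_{H^a}^2\simeq \|(-\Delta)^{a/2}F\|_{L^2}^2+\|F\|_{L^2}^2$ completes the proof.
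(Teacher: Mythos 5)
The paper does not prove this statement: it is quoted verbatim as Corollary~7.12 of \cite{geometricLP} and used as a black box, so there is no paper proof to compare your argument against. That said, your sketch reconstructs a plausible proof of the cited corollary from the listed properties of the geometric LP projections (Bessel, finite band, $L^2$-almost orthogonality, commutation of $P_k$ with $\Delta$), and the overall architecture — commute $P_k$ past $(-\Delta)^{a/2}$ and apply Bessel for the forward bound; decompose $F=\sum_k P_k^2 F$, pair, and Schur-sum the almost-orthogonal off-diagonal terms for the reverse bound, with low frequencies going into the $\|F\|_{L^2}^2$ term — is correct.

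One concrete error to flag: the alleged restriction $a<4$ is spurious, and your accompanying remark that it ``is the relevant range for the $H^{M+1},H^{3/2},H^{1/2}$ norms used in this paper'' is false, since $M$ is taken large in the paper. The restriction does not arise if you organize the Schur test correctly. After distributing $(-\Delta)^{a/2}$ symmetrically and invoking finite band, the off-diagonal kernel is $2^{a(k+k')}\,2^{-4|k-k'|}$. Testing against the sequence $b_k=2^{ak}\|P_kF\|_{L^2}$, the bilinear form becomes $\sum_{k,k'}2^{-4|k-k'|}b_k b_{k'}$, in which $a$ has disappeared entirely, and the fixed geometric decay is summable for every $a\ge 0$. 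Equivalently and even more simply, pair $\langle(-\Delta)^{a/2}F,G\rangle=\sum_k\langle P_kF,(-\Delta)^{a/2}P_kG\rangle\lesssim\sum_k 2^{ak}\|P_kF\|_{L^2}\|\widetilde P_kG\|_{L^2}$, apply Cauchy--Schwarz in $k$, and close with the Bessel inequality applied to $G$; this gives the reverse bound by duality with no competition between decay rates and no need to ``iterate almost orthogonality.'' A smaller technical point worth tightening: the finite band inequality as stated in the paper ($\|P_kF\|_{L^2}\lesssim 2^{-2k}\|\Delta F\|_{L^2}$) does not directly yield $2^{ak}\|P_kF\|_{L^2}\lesssim\|(-\Delta)^{a/2}P_kF\|_{L^2}$ as you assert; one needs first to split $P_k=Q_kR_k$ into two commuting LP operators with comparable symbols (or to use that $P_k$ commutes with $\Delta$ and apply finite band to a second projection), and for non-integer $a$ one either interpolates as you propose or invokes the heat-semigroup functional calculus directly.
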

Following the ideas of \cite{geometricLP}, we prove the following result in \cite{Cwave}:
\begin{lemma}[{\cite[Lemma~2.3]{Cwave}}]\label{LP Projections lemma}
The LP projection operators satisfy the following bounds for $k\geq 0, m\leq M$:
\begin{equation}\label{LP bounds 1}
    [\nabla_t,P_k]=\frac{t}{2^{2k-1}}[\nabla_4,P_k],\text{ where } t=2^{k}\tau=2^k\sqrt{v}
\end{equation}
\begin{equation}\label{LP bounds 2}
    \big\|[\nabla^m,P_k]F\big\|_{L^2}\lesssim2^{-k} C\big(\|\slashed{Riem}\|_{H^{m-1}}\big)\cdot\big\|F\big\|_{H^{m-1}}
\end{equation}
\begin{equation}\label{LP bounds 3}
    \big\|\nabla[\nabla^m,P_k]F\big\|_{L^2}\lesssim2^{-k} C\big(\|\slashed{Riem}\|_{H^{m}}\big)\cdot\big\|F\big\|_{H^{m}}
\end{equation}
\begin{equation}\label{LP bounds 4}
    \big\|[\nabla_4,P_k]F\big\|_{L^2}\lesssim\big\|F\big\|_{L^2}
\end{equation}
\begin{equation}\label{LP bounds 5}
    \big\|\nabla[\nabla_4,P_k]F\big\|_{L^2}\lesssim\big\|F\big\|_{H^1}
\end{equation}
\begin{equation}\label{LP bounds 6}
    \big\|[P_k,G]F\big\|_{L^2}\lesssim2^{-k}\big\|G\big\|_{W^{2,\infty}}\big\|F\big\|_{L^2}
\end{equation}
\begin{equation}\label{LP bounds 7}
    \big\|\nabla[P_k,G]F\big\|_{L^2}\lesssim\big\|G\big\|_{W^{2,\infty}}\big\|F\big\|_{L^2}.
\end{equation}
\end{lemma}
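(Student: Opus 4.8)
## Proof Proposal for Lemma \ref{LP Projections lemma}

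The plan is to follow the heat-flow approach of \cite{geometricLP}, exploiting the representation $P_k F = \int_0^\infty m_k(z) U(z) F\, dz$ together with the commutator structure of the heat semigroup, and to carefully track the dependence on the time parameter $\tau$ since the metric $\slashed{g}_\tau$ evolves. First I would establish \eqref{LP bounds 1}: since $\nabla_\tau = 2\tau \nabla_4$ by the relation $e_4 = \frac{1}{2\tau}\partial_\tau$, and since $t = 2^k\tau$, a direct computation gives $[\nabla_\tau, P_k] = 2\tau [\nabla_4, P_k] = \frac{t}{2^{k-1}}[\nabla_4,P_k]$; I should double-check the exponent against the statement (the claimed $2^{2k-1}$ versus $2^{k-1}$ — I would verify which normalization of $t$ vs $\tau$ is intended, but the argument is purely algebraic once the frame relation is in hand). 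The key point is that $[\nabla_4, P_k]$ is the fundamental object, and all time-derivative commutators reduce to it up to explicit weights.

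For \eqref{LP bounds 2} and \eqref{LP bounds 3}, the strategy is the standard one from \cite{geometricLP}: commute $\nabla^m$ through the heat flow. Writing $[\nabla^m, U(z)]F$ via Duhamel against the heat kernel and using that $[\nabla, \Delta]$ produces curvature terms ($\slashed{Riem}$ and its derivatives), one gets $[\nabla^m, P_k]F = \int_0^\infty m_k(z)\, \big( \text{terms involving } U(z-z')(\slashed{Riem}\cdot\nabla^{\leq m-1}F)\big)\, dz'dz$; the moment/decay properties of $m_k$ and the finite band property then yield the gain of $2^{-k}$, with the curvature dependence packaged into a constant $C(\|\slashed{Riem}\|_{H^{m-1}})$ (resp. $H^m$ for the version with an extra $\nabla$). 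For \eqref{LP bounds 4} and \eqref{LP bounds 5}, I would use $[\nabla_4, P_k]$: since $\nabla_4$ does not commute with the spatial Laplacian $\Delta$ on $S_\tau$ (the induced metric depends on $v$, and $\mathcal{L}_4\slashed{g} = 2\chi$), the commutator $[\nabla_4, \Delta]$ produces first-order operators with coefficients built from $\chi$ and $\nabla\chi$, which are bounded by Theorem \ref{stability of de sitter theorem in section}; the heat-flow representation then converts this into the claimed $L^2 \to L^2$ (resp. $L^2 \to H^1$) bounds without any frequency gain, consistent with the fact that $\nabla_4$ is effectively an order-zero perturbation on the spheres in the straight self-similar setting. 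Finally, \eqref{LP bounds 6} and \eqref{LP bounds 7} — commutators of $P_k$ with multiplication by a tensor $G$ — follow from the same machinery: $[U(z), G]$ unwinds via Duhamel into $\int U(z-z')(\Delta G \cdot + \nabla G \cdot \nabla) U(z') dz'$, and integrating against $m_k$ gives the $2^{-k}\|G\|_{W^{2,\infty}}$ and $\|G\|_{W^{2,\infty}}$ bounds respectively.

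The main obstacle I anticipate is \eqref{LP bounds 4}–\eqref{LP bounds 5}: controlling $[\nabla_4, P_k]$ requires understanding how the heat semigroup $U(z)$ on $(S_\tau, \slashed{g}_\tau)$ varies as $\tau$ varies, i.e. differentiating the heat flow with respect to the \emph{time-dependent background metric}. This is genuinely the new difficulty not present in \cite{geometricLP} (where the underlying metric was fixed) nor trivially in \cite{linearwave} (de Sitter, where $\slashed{g}_\tau$ is a fixed multiple of the round metric along each frequency regime). The resolution should be to write $\nabla_4 U(z) F - U(z) \nabla_4 F = \int_0^z U(z-z') \big(\dot\Delta\, U(z') F\big)\, dz'$, where $\dot\Delta = [\nabla_4,\Delta]$ is the Lie derivative of the Laplacian along $e_4$, a second-order operator whose \emph{top-order part cancels} leaving a first-order operator with coefficients $\chi^*, \nabla\chi^*$ (which are small and bounded in $L^\infty$ by the stability estimates) plus the Minkowski background piece $\tfrac{1}{v}$ which is also benign along $\{u=-1\}$; one then has to check that the resulting integral, weighted by $m_k(z)$, converges and produces no frequency loss — this uses the smoothing of $U(z-z')$ against the $\nabla$ appearing in $\dot\Delta$, exactly balancing the $2^k$ one would otherwise lose. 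Carrying this out carefully, keeping uniform control over $\tau \in (0,1]$, is the crux; the remaining estimates are then routine adaptations of \cite{geometricLP}.
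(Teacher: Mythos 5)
The paper does not include a proof of this lemma; it cites the companion paper \cite{Cwave}, which in turn follows the heat-flow strategy of \cite{geometricLP}. Your proposal takes that same route — write $P_k$ via $\int m_k(z)U(z)\,dz$, pass all commutators through Duhamel formulae for the heat semigroup, and use the smoothing of $U$ against derivatives falling on $U(z')F$ — so the overall approach matches. The treatment of \eqref{LP bounds 1}, \eqref{LP bounds 2}, \eqref{LP bounds 3}, \eqref{LP bounds 6}, and \eqref{LP bounds 7} is the standard \cite{geometricLP} argument and is sound. On \eqref{LP bounds 1} itself, the discrepancy you flagged is not a discrepancy: your computation gives $[\nabla_\tau,P_k]=2\tau[\nabla_4,P_k]=\tfrac{t}{2^{k-1}}[\nabla_4,P_k]$, but the stated identity is for $\nabla_t$, and $\nabla_t = 2^{-k}\nabla_\tau$ supplies the missing factor of $2^{-k}$, yielding $\tfrac{t}{2^{2k-1}}$ exactly as claimed.

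The genuine problem is your structural claim in the discussion of \eqref{LP bounds 4}–\eqref{LP bounds 5} that the ``top-order part cancels'' in $[\nabla_4,\Delta]$, leaving a first-order operator with coefficients $\chi^*,\nabla\chi^*$. It does not. Since $\nabla_4\slashed{g}^{AB}=0$, we have
\begin{equation*}
[\nabla_4,\Delta]F \;=\; \slashed{g}^{AB}\big([\nabla_4,\nabla_A]\nabla_BF + \nabla_A[\nabla_4,\nabla_B]F\big) \;=\; -2\chi^{AB}\nabla_A\nabla_BF + (\nabla\chi)\cdot\nabla F + (\nabla^2\chi)\cdot F,
\end{equation*}
which is genuinely second order, with top-order coefficient $\chi$ (not $\chi^*$). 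The paper's own Lemma \ref{LP Projections lemma refined} confirms this by displaying the leading behavior $[\nabla_4,P_k]F = 2^{-2k}\chi\nabla^2\widetilde{P}_kF + O(2^{-k}\|F\|_{L^2})$: the $\chi\nabla^2$ term is precisely the surviving second-order piece, rescaled by the finite band factor $2^{-2k}$ (which is exactly what makes it $O(\|F\|_{L^2})$). Your plan, as written, would insert a first-order operator into the Duhamel integral and rely on a single heat-smoothing gain of $(z')^{-1/2}$; with the correct second-order operator the naive estimate gives a non-integrable $(z')^{-1}$, and one must either integrate by parts inside $\int_0^z U(z-z')[\nabla_4,\Delta]U(z')F\,dz'$ so that $U(z-z')$ absorbs one derivative (yielding $(z-z')^{-1/2}(z')^{-1/2}$, which is integrable), or observe that the contribution of the pure-trace part of $\chi$ is proportional to $\Delta$ (hence commutes with the semigroup) while the $\hat{\chi}$ part has small coefficients and is handled via the finite band property applied to $\nabla^2\widetilde{P}_k$. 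Without one of these corrections the argument for \eqref{LP bounds 4}–\eqref{LP bounds 5} does not close.
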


We also need more refined versions of the estimates (\ref{LP bounds 4}) and (\ref{LP bounds 5}). We define the symbol $\widetilde{m}\in\mathcal{M}$ given by $\widetilde{m}(z)=zm(z),$ and we denote by $\widetilde{P}_k$ the associated projection operator. Moreover, we also introduce the projection operator $\underline{\widetilde{P}}_k$ which satisfies $\underline{\widetilde{P}}_k^2=\widetilde{P}_k.$

\begin{lemma}[{\cite[Lemma~2.4]{Cwave}}]\label{LP Projections lemma refined}
    We have the following estimates for $k\geq 0$:
    \begin{equation}\label{LP refined 1}
        [\nabla_4,P_k]F=2^{-2k}\chi\nabla^2\widetilde{P}_kF+O\Big(2^{-k}\big\|F\big\|_{L^2}\Big)
    \end{equation}
    \begin{equation}\label{LP refined 2}
        \big\|[\nabla_4,P_k]F\big\|_{L^2}\lesssim\big\|\underline{\widetilde{P}}_kF\big\|_{L^2}+2^{-k}\big\|F\big\|_{L^2}
    \end{equation}
    \begin{equation}\label{LP refined 3}
        \nabla[\nabla_4,P_k]F=2^{-2k}\chi\nabla^2\widetilde{P}_k\nabla F+O\Big(2^{-k}\big\|F\big\|_{H^1}\Big)
    \end{equation}
    \begin{equation}\label{LP refined 4}
        \big\|\nabla[\nabla_4,P_k]F\big\|_{L^2}\lesssim \big\|\underline{\widetilde{P}}_k\nabla F\big\|_{L^2}+2^{-k}\big\|F\big\|_{H^1}.
    \end{equation}
\end{lemma}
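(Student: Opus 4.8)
The plan is to follow the heat-flow approach of \cite{geometricLP}, as carried out in detail in \cite{Cwave}. First I would record that, since $P_kF=\int_0^\infty m_k(z)U(z)F\,dz$ and $\nabla_4$ commutes with the heat parameter $\partial_z$, the quantity $B(z):=\nabla_4U(z)F-U(z)\nabla_4F$ vanishes at $z=0$ and solves $\partial_zB-\Delta B=[\nabla_4,\Delta]U(z)F$; Duhamel's formula then yields
\begin{equation*}
[\nabla_4,P_k]F=\int_0^\infty m_k(z)\int_0^z U(z-z')\,[\nabla_4,\Delta]\,U(z')F\,dz'\,dz .
\end{equation*}
The next step is to expand $[\nabla_4,\Delta]$ on a horizontal tensor $G$: using $\nabla_4(\slashed{g}^{AB})=-2\chi^{AB}$ together with the commutation formulas of Section~\ref{set up section} for $[\nabla_4,\nabla_A]$ (which on the straight background only produce $\chi$ and $\nabla\chi$), one gets the schematic identity $[\nabla_4,\Delta]G=\chi\cdot\nabla^2G+\nabla\chi\cdot\nabla G+\nabla^2\chi\cdot G+\slashed{Riem}\cdot(\ldots)$, the term $\chi\cdot\nabla^2G$ being the only one of order two in $G$. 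By Theorem~\ref{stability of de sitter theorem in section} the coefficients $\chi,\nabla\chi,\nabla^2\chi,\slashed{Riem}$ are controlled in the relevant $L^\infty$ and Sobolev norms, which is what makes the error terms below summable in $k$.

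The heart of the argument is to extract the leading contribution $\int_0^\infty m_k(z)\int_0^z U(z-z')\,\chi\nabla^2U(z')F\,dz'\,dz$. The plan is to move $\chi$ and $\nabla^2$ out of the heat operators one factor at a time: write $U(z-z')\chi=\chi\,U(z-z')+[U(z-z'),\chi]$, commute $\nabla^2$ past $U(z-z')$ modulo curvature, and use the semigroup identity $U(z-z')U(z')=U(z)$ to obtain $\int_0^z U(z-z')\nabla^2U(z')F\,dz'=z\,\nabla^2U(z)F+(\text{errors})$. Integrating against $m_k$ and using $\widetilde{m}_k(z)=2^{2k}z\,m_k(z)$, hence $\int_0^\infty m_k(z)\,z\,U(z)\,dz=2^{-2k}\widetilde{P}_k$, produces the principal term $2^{-2k}\chi\nabla^2\widetilde{P}_kF$. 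Every commutator that appears — $[U(z-z'),\chi]$, the curvature error from $[\nabla^2,U(z-z')]$, and the contributions of the lower-order pieces $\nabla\chi\cdot\nabla$, $\nabla^2\chi$, $\slashed{Riem}\cdot(\ldots)$ of $[\nabla_4,\Delta]$ — carries one fewer top-order derivative on $F$ than the principal term, and so, combined with the finite band property and the crude commutator bounds of Lemma~\ref{LP Projections lemma} at the frequency scale $z\sim2^{-2k}$, contributes $O(2^{-k}\|F\|_{L^2})$ once the $z,z'$-integrals are carried out; this establishes \eqref{LP refined 1}. Then \eqref{LP refined 2} follows by writing $\widetilde{P}_k=\underline{\widetilde{P}}_k^{\,2}$, pulling one factor of $\underline{\widetilde{P}}_k$ to the front, and using the finite band bound $\|\nabla^2\underline{\widetilde{P}}_k(\cdot)\|_{L^2}\lesssim2^{2k}\|\cdot\|_{L^2}$ together with $\|\chi\|_{L^\infty}\lesssim1$; and \eqref{LP refined 3}, \eqref{LP refined 4} are obtained by running the same computation with one extra angular derivative applied throughout, using the higher-regularity versions of the Lemma~\ref{LP Projections lemma} estimates, which costs one derivative of $F$ and hence replaces $\|F\|_{L^2}$ by $\|F\|_{H^1}$.

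The hard part will be the repeated interchange of $\nabla^2$, $\chi$, and the heat operators $U(z)$ inside the double Duhamel integral: one must verify that each swap produces a genuine $2^{-k}$ smoothing gain \emph{uniformly in $z$ and $z'$}, not only near the scale $2^{-2k}$, so that the integrations against $m_k(z)$ converge and the errors are summable in $k$. This is precisely where the time dependence of $\slashed{g}_\tau$ — which is the source of $[\nabla_4,\Delta]\neq0$ in the first place — interacts most delicately with the heat-kernel frequency localization, and where one has to use the precise size of $\chi$ in the straight self-similar gauge on $\{u=-1\}$ together with the quantitative bounds of Theorem~\ref{stability of de sitter theorem in section}. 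A secondary point is keeping careful track of the curvature corrections in the tensorial (as opposed to scalar) version of $[\nabla_4,\Delta]$ and of $[\nabla^2,U(z)]$; these feed only into the $O(2^{-k}\|F\|_{L^2})$ error and not into the leading term $2^{-2k}\chi\nabla^2\widetilde{P}_kF$.
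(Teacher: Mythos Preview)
The paper does not contain its own proof of this lemma: it is stated in Section~\ref{LP Section} with the citation \cite{Cwave} and no argument is given here, so there is no in-paper proof to compare against. Your sketch is a plausible outline of the heat-flow/Duhamel computation one expects in \cite{Cwave}: the identification $[\nabla_4,P_k]F=\int_0^\infty m_k(z)\int_0^z U(z-z')[\nabla_4,\Delta]U(z')F\,dz'\,dz$, the extraction of the top-order piece $\chi\nabla^2$ from $[\nabla_4,\Delta]$, and the manipulation $\int_0^\infty m_k(z)\,z\,U(z)\,dz=2^{-2k}\widetilde{P}_k$ are all consistent with the definitions in Section~\ref{LP Section} and with how \cite{geometricLP} handles commutators. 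Your identification of the delicate step --- uniform control of the commutators $[U(z-z'),\chi]$ and $[\nabla^2,U(z-z')]$ across the full $z,z'$ range --- is accurate, and is exactly the kind of refinement over Lemma~\ref{LP Projections lemma} that the paper attributes to \cite{Cwave}.
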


Additionally, we have an estimate where we trade $1/2$ derivatives on $F$ for $2^{k/2}$ growth:
\begin{lemma}[{\cite[Lemma~2.5]{Cwave}}]\label{fractional LP bounds}
    We have the following estimates for $k\geq 0$:
    \begin{equation}\label{fractional LP bounds 1}
    \big\|[\nabla_4,P_k]\nabla F\big\|_{L^2}+\big\|\nabla[\nabla_4,P_k] F\big\|_{L^2}\lesssim2^{k/2}\big\|F\big\|_{H^{1/2}}.
\end{equation}
\end{lemma}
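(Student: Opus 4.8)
\textbf{Proof proposal for Lemma \ref{fractional LP bounds}.}

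The plan is to prove both estimates by interpolating between the integer-order bounds already available in Lemmas \ref{LP Projections lemma} and \ref{LP Projections lemma refined}. The key observation is that the commutators $[\nabla_4,P_k]\nabla$ and $\nabla[\nabla_4,P_k]$ should be viewed as operators acting on a fixed horizontal tensor $F$, and we already control them when $F$ is replaced by $F$ (an $L^2$ bound, essentially order $0$ in $F$) and when $F$ is replaced by $\nabla F$ or measured in $H^1$ (an order $1$ bound in $F$). Concretely, from \eqref{LP bounds 5} (or \eqref{LP refined 3}) we have $\|\nabla[\nabla_4,P_k]F\|_{L^2}\lesssim\|F\|_{H^1}$, while from \eqref{LP bounds 4} we have $\|[\nabla_4,P_k]F\|_{L^2}\lesssim\|F\|_{L^2}$; and for the first term, writing $[\nabla_4,P_k]\nabla F$ and commuting $\nabla$ past $[\nabla_4,P_k]$ using the commutator identities, we similarly reduce to the $H^1$-bound on $F$ and the $L^2$-bound on $F$. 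The target estimate with the weight $2^{k/2}$ and the norm $H^{1/2}$ is exactly the geometric mean of these two endpoints: $2^{k/2}=(2^{0})^{1/2}(2^{k})^{1/2}$ and $H^{1/2}=[L^2,H^1]_{1/2}$.

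The steps I would carry out, in order, are: (1) Fix a partition of unity $\sum_l P_l^2=I$ (an admissible LP family exists by the calculus of \cite{geometricLP}) and decompose $F=\sum_l P_l^2 F$. (2) For each dyadic block, use the finite-band property of Lemma \ref{LP Projections lemma} to note $\|P_l F\|_{H^1}\lesssim 2^l\|P_l F\|_{L^2}$ and $\|P_l F\|_{H^{1/2}}\sim 2^{l/2}\|P_l F\|_{L^2}$ up to the low-frequency correction. (3) Apply the order-$0$ bound $\|[\nabla_4,P_k](\nabla P_l^2 F)\|_{L^2}\lesssim 2^l\|P_l^2F\|_{L^2}$ when $l\le k$ (here the gain comes from $\nabla P_l$ costing only $2^l$, not $2^k$) and the order-$1$ bound $\|[\nabla_4,P_k](\nabla P_l^2F)\|_{L^2}\lesssim\|P_l^2 F\|_{H^1}$ combined with almost-orthogonality $\|P_k P_l\cdots\|\lesssim 2^{-4|k-l|}$ when $l>k$; in both regimes the resulting series in $l$ is geometric with ratio controlled by $2^{-c|k-l|}$ around the balance point, and summing yields $\lesssim\sum_l 2^{\min(l,k)-|k-l|/2}\cdot 2^{-l/2}\|P_l F\|_{H^{1/2}}$ which, after one more Cauchy–Schwarz in $l$, is $\lesssim 2^{k/2}\|F\|_{H^{1/2}}$. (4) Treat $\nabla[\nabla_4,P_k]F$ identically, using \eqref{LP refined 3} to write its principal part as $2^{-2k}\chi\nabla^2\widetilde P_k\nabla F$ plus an $O(2^{-k}\|F\|_{H^1})$ error, and noting $\|\nabla^2\widetilde P_k\nabla P_l^2 F\|_{L^2}\lesssim 2^{2\min(k,l)}2^l\|P_lF\|_{L^2}$ so the weight $2^{-2k}$ again produces a geometric sum peaked at $l\sim k$ with the same $2^{k/2}$ growth. (5) Collect the low-frequency ($l=0$, or $k=0$) contributions separately; these are bounded trivially since only finitely many such blocks occur and $W^{2,\infty}$/Sobolev embedding controls them by $\|F\|_{H^{1/2}}$ (or one simply absorbs them into the $L^2$ error terms already present).

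The main obstacle I anticipate is bookkeeping the interaction between three non-commuting ingredients at once: the time-dependence of $\slashed g_\tau$ (so that $\nabla_4$ genuinely fails to commute with $P_k$), the \emph{almost}-orthogonality — rather than exact orthogonality — of the geometric projections, which forces the appearance of $\widetilde P_k$, $\underline{\widetilde P}_k$ in place of $P_k$ and prevents a clean block-diagonal estimate, and the fractional nature of the target norm. Getting the summation in the dyadic index $l$ to close with the correct $2^{k/2}$ (and not, say, $2^{k/2}\log$ or $2^{3k/2}$) requires being careful that in the regime $l>k$ the $2^{-4|k-l|}$ decay genuinely beats the $2^l$ growth from one derivative, which it does comfortably, and that in the regime $l\le k$ the derivative falling on $P_l$ rather than on the $2^k$-localized piece is what supplies the gain. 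I would organize this via a single summed-estimate lemma of the form $\sum_l 2^{a\min(k,l)}2^{-b|k-l|}c_l\lesssim 2^{ak}(\sum_l c_l^2)^{1/2}$ for suitable $a<b$, which is elementary, and then feed the various commutator bounds into it. No genuinely new analytic input beyond Lemmas \ref{LP Projections lemma} and \ref{LP Projections lemma refined} should be needed.
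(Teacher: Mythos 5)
The paper does not contain a proof of Lemma~\ref{fractional LP bounds}: it is stated as a citation to the companion paper \cite{Cwave}, so there is no ``paper's own proof'' here to compare against. I can only evaluate your proposal on its internal merits, against the other lemmas the paper makes available.

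Your overall plan — decompose $F=\sum_l P_l^2 F$, treat each dyadic block, and balance the $L^2$- and $H^1$-type endpoints so that the net gain is $2^{k/2}\cdot\|F\|_{H^{1/2}}$ — is the natural one and is very likely the idea in \cite{Cwave}. The regime $l\le k$ in your step (3) indeed closes: there you use $\|[\nabla_4,P_k](\nabla P_l^2F)\|_{L^2}\lesssim \|\nabla P_l^2F\|_{L^2}\lesssim 2^l\|P_lF\|_{L^2}$ from \eqref{LP bounds 4} and the finite band property, and since $2^{2l}\le 2^k\cdot 2^l$ when $l\le k$ the square sum is $\lesssim 2^k\|F\|_{H^{1/2}}^2$.

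The gap is in the regime $l>k$, and it is a genuine one if you only allow yourself Lemmas~\ref{LP Projections lemma} and~\ref{LP Projections lemma refined} as stated. You write ``the order-$1$ bound \dots combined with almost-orthogonality $\|P_kP_l\cdots\|\lesssim 2^{-4|k-l|}$,'' but the operator $[\nabla_4,P_k]$ is \emph{not} of the form $P_k\cdot(\text{something local})$, so almost-orthogonality of $P_kP_l$ does not directly transfer to the commutator. What the paper actually provides is \eqref{LP refined 1}/\eqref{LP refined 2}: $[\nabla_4,P_k]G=2^{-2k}\chi\nabla^2\widetilde P_kG+O(2^{-k}\|G\|_{L^2})$. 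The leading term does gain the $2^{-4|k-l|}$ decay once $G=\nabla P_l^2F$, but the error term, applied to $G=\nabla P_l^2F$, is $O(2^{-k}\|\nabla P_l^2F\|_{L^2})=O(2^{l-k}\|P_lF\|_{L^2})$ with \emph{no} decay in $|k-l|$. Summing its square over $l>k$ gives $2^{-2k}\|F\|_{H^1}^2$, which is not $\lesssim 2^k\|F\|_{H^{1/2}}^2$ for high-frequency $F$ (take $F$ concentrated at a single scale $l\gg 3k$). So the summation in your step (3) does not close as written, and the exponent bookkeeping you give (``$\sum_l 2^{\min(l,k)-|k-l|/2}\cdot 2^{-l/2}\|P_lF\|_{H^{1/2}}$'') is not verified and does not obviously combine to $2^{k/2}\|F\|_{H^{1/2}}$ via a single Cauchy–Schwarz. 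The same issue afflicts your treatment of $\nabla[\nabla_4,P_k]F$ via \eqref{LP refined 3}, whose $O(2^{-k}\|F\|_{H^1})$ tail is likewise not controlled by $2^{k/2}\|F\|_{H^{1/2}}$. To make the interpolation argument rigorous one needs either a sharper, frequency-localized form of that commutator error (showing it too decays like $2^{-c|k-l|}$ against $P_l$ blocks), or a direct proof working from the heat-flow representation of $P_k$ as in \cite{geometricLP}, not merely the coarse $O$-bounds quoted in Section~\ref{LP Section}.
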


For the remainder of the paper, we make the convention that the projection $P_k$ satisfies $\sum_kP_k^2=I.$ We point out that the estimates stated above are valid for any projection operator with symbol in $\mathcal{M}.$

In Section~\ref{model backward direction section} we need the following refined Poincaré inequality for LP projections:
\begin{lemma}[{\cite[Lemma~2.6]{Cwave}}] For any $k\geq0,$ and $\delta>0$, we have the inequality:
    \begin{equation}\label{refined Poincare inequality}
    \big\|P_kF\big\|_{L^2}^2\lesssim\frac{1}{\delta}2^{-2k}\big\|\nabla P_kF\big\|_{L^2}^2+\delta\sum_{0\leq l<k}2^{-9k+7l}\big\|\nabla P_lF\big\|_{L^2}^2+\delta^{-1}2^{-4k}\big\|F\big\|_{L^2}^2.
\end{equation}
\end{lemma}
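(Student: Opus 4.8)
The statement to prove is the refined Poincaré inequality \eqref{refined Poincare inequality}: for any $k\geq 0$ and $\delta>0$,
\[
\big\|P_kF\big\|_{L^2}^2\lesssim\frac{1}{\delta}2^{-2k}\big\|\nabla P_kF\big\|_{L^2}^2+\delta\sum_{0\leq l<k}2^{-9k+7l}\big\|\nabla P_lF\big\|_{L^2}^2+\delta^{-1}2^{-4k}\big\|F\big\|_{L^2}^2.
\]

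\textbf{Approach.} The plan is to combine the sharp Poincaré inequality on $S^n$ (which loses no derivatives but forces a spectral gap) with the geometric Littlewood--Paley machinery from \cite{geometricLP}. On the round-type sphere $\big(S_\tau,\slashed g_\tau\big)$ one has a Poincaré inequality of the form $\big\|G-\overline{G}\big\|_{L^2}^2\lesssim\big\|\nabla G\big\|_{L^2}^2$ for scalars (and its tensorial analogues), where $\overline G$ is the projection onto the lowest eigenspace. The issue is that $P_kF$ for $k\geq 1$ is concentrated at frequency $\sim 2^k$ but, because the geometric LP projections are only \emph{almost} orthogonal and time dependent, it has a nontrivial tail at low frequencies; the naive bound $\|P_kF\|_{L^2}\lesssim 2^{-k}\|\nabla\widetilde P_kF\|_{L^2}$ carries a \emph{different} projection $\widetilde P_k$ on the right, which is exactly what the lemma is engineered to avoid. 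So first I would write $\widetilde P_k = \sum_{l} P_l^2 \widetilde P_k$ using $\sum_l P_l^2 = I$, split into the diagonal block $l=k$, the high block $l>k$, and the low block $l<k$, and estimate each.

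\textbf{Key steps in order.} (1) Start from the finite-band-type identity and the relation $\big\|P_kF\big\|_{L^2}^2 \lesssim 2^{-2k}\big\|\nabla \widetilde P_k F\big\|_{L^2}^2$ available from the heat-kernel representation of $P_k$ (the standard sharp estimate for a single projection, with the extra half-derivative absorbed into $\widetilde P_k$). (2) Insert $I=\sum_l P_l^2$ inside $\widetilde P_k$ and use $L^2$-almost orthogonality $\|P_k\widetilde P_{k'}F\|_{L^2}\lesssim 2^{-4|k-k'|}\|F\|_{L^2}$ from Theorem~5.5 of \cite{geometricLP}. For the \emph{diagonal} term $l=k$ this produces $2^{-2k}\|\nabla P_kF\|_{L^2}^2$ up to commutator errors $[\nabla,P_k]$, which by \eqref{LP bounds 2}--\eqref{LP bounds 3} contribute $2^{-2k}\cdot 2^{-2k}\|F\|_{L^2}^2$, i.e. into the last term. (3) For the \emph{high} block $l>k$, the almost-orthogonality gives geometric decay $2^{-8(l-k)}$, and combined with the finite-band bound $\|\nabla P_l F\|_{L^2}\lesssim 2^l\|F\|_{L^2}$ this sum converges and is controlled by $2^{-4k}\|F\|_{L^2}^2$ — again the last term. (4) For the \emph{low} block $0\le l<k$, almost-orthogonality gives $2^{-8(k-l)}$, and writing $\|\nabla P_l F\|_{L^2}$ honestly, the sum becomes $\sum_{l<k}2^{-2k}2^{-8(k-l)}\|\nabla P_l F\|_{L^2}^2$ plus cross terms; arranging exponents carefully yields $\sum_{l<k}2^{-9k+7l}\|\nabla P_l F\|_{L^2}^2$ after matching powers, with the loss between $2^{-8(k-l)}$-type decay and $2^{-9k+7l}$ soaked up by the $2^{-2k}$ prefactor. (5) Finally apply Cauchy--Schwarz / Young's inequality with parameter $\delta$ to split whichever cross terms appear between the diagonal gradient term and the low-block sum, producing the $\tfrac1\delta$ and $\delta$ weights as stated, and collect all remaining $\|F\|_{L^2}^2$ contributions into $\delta^{-1}2^{-4k}\|F\|_{L^2}^2$.

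\textbf{Main obstacle.} The delicate point is tracking the exponents in the low-frequency block so that the final bound has \emph{the same symbol} $P_l$ (not $\widetilde P_l$) on the right and the precise weight $2^{-9k+7l}$, while ensuring the commutator terms $[\nabla^m,P_k]$ and $[\nabla_4,P_k]$ — which on a time-dependent, curved background do not vanish — are genuinely lower order. I expect to lean on \eqref{LP bounds 2}, \eqref{LP bounds 6}, and the curvature bounds from the bootstrap (Proposition~\ref{region III bounds proposition}) to dominate $\|\slashed{Riem}\|_{H^{m-1}}$ by a constant, so that every commutator costs a factor $2^{-k}$ and lands in the harmless $\delta^{-1}2^{-4k}\|F\|_{L^2}^2$ term. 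The bookkeeping of which terms are absorbed by Young's inequality versus which are genuinely negligible is where the argument must be done with care, but no single estimate is hard — it is the combinatorics of the three frequency regimes that requires attention.
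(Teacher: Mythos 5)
First, a framing point: this lemma is stated in the paper as an import from the companion paper \cite{Cwave}, and the present paper contains no proof of it; there is therefore no in-paper argument to compare your sketch against. Your outline --- insert $I=\sum_l P_l^2$ into the weak Poincar\'e inequality $\|P_kF\|_{L^2}\lesssim 2^{-k}\|\nabla\widetilde P_kF\|_{L^2}$ and analyze the diagonal, low-frequency and high-frequency blocks with almost orthogonality --- is a reasonable first guess, and the exponent $-9k+7l$ on the low block is indeed consistent with splitting the $2^{-4(k-l)}$ decay by a weighted Cauchy--Schwarz. But the proposal as written has a concrete gap that would prevent it from closing.

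The problem is your treatment of the high block $l>k$, and specifically the claim that almost orthogonality gives ``geometric decay $2^{-8(l-k)}$'' so that, combined with $\|\nabla P_lF\|_{L^2}\lesssim 2^l\|F\|_{L^2}$, the block sums to $2^{-4k}\|F\|_{L^2}^2$. The almost orthogonality recorded in the paper is $\|P_k\widetilde P_{k'}F\|_{L^2}\lesssim 2^{-4|k-k'|}\|F\|_{L^2}$ --- one factor of $2^{-4|k-l|}$, not its square; composing $\widetilde P_k P_l\cdot P_l$ or $P_l^2\widetilde P_k$ only sees one frequency mismatch, so only one such factor appears. Tracking the resulting bound carefully (e.g.\ $\|\nabla P_l^2\widetilde P_kF\|_{L^2}\lesssim 2^l\|P_l\widetilde P_kF\|_{L^2}\lesssim 2^l2^{-4(l-k)}\|F\|_{L^2}=2^{4k-3l}\|F\|_{L^2}$, then summing over $l>k$ and multiplying by the $2^{-2k}$ prefactor) yields a contribution of order $\|F\|_{L^2}^2$, not $2^{-4k}\|F\|_{L^2}^2$. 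That is merely the trivial bound $\|P_kF\|_{L^2}^2\lesssim\|F\|_{L^2}^2$ and would render the lemma vacuous. To get the stated $2^{-4k}$ gain you would need either a genuinely sharper almost-orthogonality estimate for $\nabla\widetilde P_k P_l^2$ (not stated in what this paper quotes from \cite{geometricLP} or \cite{Cwave}), or a decomposition that never produces a free high-frequency block at all. Relatedly, step (5) never actually identifies the cross term that the $\delta$ in the statement splits; with the three-term structure $\delta^{-1}a+\delta b+\delta^{-1}c$ you need a single Young-type inequality $2|\langle\,\cdot\,,\cdot\,\rangle|\le\delta^{-1}(\,\cdot\,)+\delta(\,\cdot\,)$ matched to a specific pair of objects, and until you say what those objects are the $\delta$-dependence in the final estimate is not actually derived. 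In short, the plan has the right ingredients but the high-block estimate as described fails by a factor of $2^{4k}$, and the $\delta$-weighting is not yet earned.
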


Finally, we define for any smooth tensor $F$ on $S_0$ and $k\geq0$:
\[(\log\nabla)F=\sum_{l\geq0}P_l^2F\cdot\log2^l,\]
\begin{equation}\label{definition of Rk}
    R_kF=2P_k(\log\nabla)F-2\log2^k\cdot P_kF=2\sum_{l\geq0}\log2\cdot(l-k)\cdot P_kP_l^2F-2\sum_{l<0}\log2^k\cdot P_kP_l^2F.
\end{equation}
We consider the projection operator $\underline{P}_k$ which satisfies $\underline{P}_k^2=P_k.$ We have the estimates for $R_k:$
\begin{lemma}[{\cite[Lemma~2.8]{Cwave}}]\label{Rk lemma} For any smooth tensor $F$ on $S_0$, we extend $R_kF$ to be independent of $\tau.$ We also denote $t=2^k\tau.$ Then, for any $k\geq0$ we have:
    \begin{equation}\label{Rk 1}
        \big\|\Delta_{\slashed{g}_{\tau}} R_kF\big\|_{L^2}\lesssim2^{k}\big\|\underline{P}_kF\big\|_{H^1}
    \end{equation}
    \begin{equation}\label{Rk 2}
        \big\|\nabla R_kF\big\|_{L^2}\lesssim\big\|\underline{P}_kF\big\|_{H^1}
    \end{equation}
    \begin{equation}\label{Rk 3}
        2^{k}\big\|R_kF\big\|_{L^2}\lesssim \big\|\underline{P}_kF\big\|_{H^1}
    \end{equation}
    \begin{equation}\label{Rk 4}
        \big\|\nabla_tR_kF\big\|_{L^2}\lesssim2^{-3k}t\big\|\underline{P}_kF\big\|_{H^1}
    \end{equation}
    \begin{equation}\label{Rk 5}
        \big\|\nabla\nabla_tR_kF\big\|_{L^2}\lesssim2^{-2k}t\big\|\underline{P}_kF\big\|_{H^1}.
    \end{equation}
\end{lemma}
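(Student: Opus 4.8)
\textbf{Proof proposal for Lemma \ref{Rk lemma}.}

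The plan is to prove the five estimates by working directly with the definition \eqref{definition of Rk} of $R_k$ and exploiting the structure of the operator $\log\nabla$ together with the geometric Littlewood-Paley bounds already collected in this section. First I would write $R_kF = 2P_k(\log\nabla)F - 2\log 2^k\cdot P_kF = 2\sum_{l\geq 0}\log 2\cdot(l-k)\cdot P_kP_l^2 F - 2\sum_{l<0}\log 2^k\cdot P_kP_l^2F$, so that the whole object is a sum of terms of the form $c_{k,l}\,P_kP_l^2F$ where the coefficients grow at most linearly in $|k-l|$. The key input is the $L^2$-almost orthogonality from Theorem 5.5 in \cite{geometricLP}, which gives $\|P_kP_{l'}G\|_{L^2}\lesssim 2^{-4|k-l|}\|G\|_{L^2}$; since $|k-l|\cdot 2^{-4|k-l|}$ is summable in $l$, the linearly-growing coefficients are harmless. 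To prove \eqref{Rk 3}, I would estimate $\|R_kF\|_{L^2}\lesssim \sum_l |k-l|\,2^{-4|k-l|}\|P_l^2F\|_{L^2}$ and note $\|P_l^2F\|_{L^2}\lesssim 2^{-l}\|P_lF\|_{H^1}$ by the finite band property; the remaining dyadic sum against $\|\underline P_kF\|_{H^1}$ closes after one more application of almost-orthogonality between $P_k,\underline P_k$ and $P_l$, picking up the factor $2^{-k}$ on the left that produces the $2^k$ prefactor in the claimed inequality. Then \eqref{Rk 2} and \eqref{Rk 1} follow the same way, with $\|\nabla P_kP_l^2F\|_{L^2}\lesssim 2^k\|P_kP_l^2F\|_{L^2}$ and $\|\Delta P_kP_l^2F\|_{L^2}\lesssim 2^{2k}\|P_kP_l^2F\|_{L^2}$, again by the finite band property applied to the outermost projection $P_k$.

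For the time-derivative estimates \eqref{Rk 4} and \eqref{Rk 5}, the crucial observation is that $R_kF$ is defined to be independent of $\tau$ as a coordinate object (a Lie-propagated tensor built from a fixed tensor $F$ on $S_0$), so by the first formula in \eqref{formula for lie time derivative} we have $\nabla_\tau R_kF = \tau\,\chi\cdot R_kF$; equivalently, in the notation $t=2^k\tau$, $\nabla_t R_kF = 2^{-2k}t\,\chi\cdot R_kF$ — except that $R_kF$ itself is only independent of $\tau$ once we freeze the metric, whereas the projections $P_l$ depend on $\slashed g_\tau$. The clean way to handle this is to use the commutator identity \eqref{LP bounds 1}, $[\nabla_t, P_k] = 2^{-2k+1}t\,[\nabla_4,P_k]$, together with the refined bound \eqref{LP refined 2}, $\|[\nabla_4,P_k]G\|_{L^2}\lesssim \|\underline{\widetilde P}_kG\|_{L^2}+2^{-k}\|G\|_{L^2}$. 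Applying $\nabla_t$ to $R_kF=\sum_l c_{k,l}P_kP_l^2F$ and using that $F$ (hence $P_l^2F$ after freezing) contributes only through $\chi$-type curvature terms of size $O(\tau)$, the leading contribution is the commutator term, which gives the $2^{-3k}t$ weight: one factor $2^{-2k}t$ from \eqref{LP bounds 1} and one factor $2^{-k}$ from the $H^1\to L^2$ gain through the band property and almost-orthogonality, summed against $\|\underline P_kF\|_{H^1}$. Estimate \eqref{Rk 5} is obtained identically after commuting one extra angular derivative through, using \eqref{LP refined 4} in place of \eqref{LP refined 2}, which costs one power of $2^k$ and hence yields the $2^{-2k}t$ weight.

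The main obstacle I anticipate is bookkeeping the interaction between the three "scales" present: the outer projection $P_k$, the inner squared projections $P_l^2$ appearing in $\log\nabla$, and the intermediate $\underline P_k$, $\underline{\widetilde P}_k$ projections forced on us by the fact that the geometric LP projections do not satisfy exact orthogonality. One has to be careful that when $l\geq k$ the coefficient $|l-k|$ is killed by $2^{-4|l-k|}$ from almost-orthogonality, while when $l<k$ the coefficient is $O(k)$ but only finitely many such $l$ matter before the $2^{-4(k-l)}$ decay dominates; in both regimes the final sum is bounded by a single dyadic block at frequency $k$ plus a lower-order $L^2$ remainder, and the lower-order remainder is absorbed into $\|\underline P_kF\|_{H^1}$ using $\|F\|_{L^2}\lesssim\|\underline P_kF\|_{H^1}$ is \emph{not} available — so instead one keeps it as $2^{-Nk}\|F\|_{L^2}$ for large $N$ and notes this is dominated by the stated right-hand sides for $k\geq 0$. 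The remaining terms from the product rule involving $\nabla_t$ hitting $\chi$ and the curvature-dependent pieces in \eqref{formula for lie time derivative} are genuinely lower order because $\chi$ is close to its Minkowski value by Theorem \ref{stability of de sitter theorem in section} and carries a favorable power of $\tau=2^{-k}t$, so they only improve the weights. Once these estimates are in place the proof is a routine, if somewhat tedious, summation of dyadic series.
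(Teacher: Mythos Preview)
The paper does not prove this lemma here; it is cited from the companion paper \cite{Cwave}, so there is no in-paper proof to compare against. I will therefore comment only on the soundness of your sketch.

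Your approach to \eqref{Rk 1}--\eqref{Rk 3} is correct in outline, but you should make explicit the one structural fact that makes the argument close cleanly: the heat-flow projections are all functions of the same Laplacian at $\tau=0$ and hence \emph{commute} with one another. In particular $P_kP_l^2=(\underline P_kP_l)(\underline P_kP_l)$, so almost orthogonality gives $\|P_kP_l^2F\|_{L^2}\lesssim 2^{-4|k-l|}\|P_l\underline P_kF\|_{L^2}$, and then the finite band property applied to $P_l$ (for $l\geq0$) gives $\|P_l\underline P_kF\|_{L^2}\lesssim 2^{-l}\|\nabla\underline P_kF\|_{L^2}$. Summing against the coefficients $|k-l|$ (or $k$ for $l<0$) produces the factor $2^{-k}$ in \eqref{Rk 3}, and \eqref{Rk 1}--\eqref{Rk 2} follow by applying finite band to the outer $P_k$. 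Because everything factors through $\underline P_kF$, the worry you raise about stray $\|F\|_{L^2}$ remainders does not arise: there are no such terms, and the workaround you describe (keeping $2^{-Nk}\|F\|_{L^2}$) is unnecessary.

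For \eqref{Rk 4}--\eqref{Rk 5} your first observation is exactly right and already sufficient. The lemma \emph{defines} $R_kF$ at $\tau=0$ (where the projections are computed) and extends it by Lie transport, so $\mathcal L_\tau R_kF=0$ identically; by \eqref{formula for lie time derivative} this gives $\nabla_t R_kF=2^{-2k}t\,\chi\cdot R_kF$, and then \eqref{Rk 4} follows directly from \eqref{Rk 3} together with $\|\chi\|_{L^\infty}\lesssim 1$, while \eqref{Rk 5} follows from \eqref{Rk 2} and \eqref{Rk 3} after distributing $\nabla$ over the product. Your subsequent paragraph---``$R_kF$ is only independent of $\tau$ once we freeze the metric, whereas the projections $P_l$ depend on $\slashed g_\tau$''---is a confusion: the projections entering $R_kF$ are those at $\tau=0$, fixed once and for all, so there is no time dependence in $R_kF$ to commute through, and the machinery of \eqref{LP bounds 1}, \eqref{LP refined 2}, \eqref{LP refined 4} is not needed here. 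The only residual time dependence is in the norms and in $\Delta_{\slashed g_\tau}$ on the left of \eqref{Rk 1}; since $\slashed g_\tau=\slashed g_0+O(\tau^2)$, the difference between the $\tau$- and $0$-versions of these is lower order and absorbed into the stated bounds.
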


\section{Estimates for the First Model System}\label{model forward direction section}

One of the central parts of our argument is proving estimates for the first model system (\ref{first model system definition}), in terms of the asymptotic data at $\{\tau=0\}.$ This system includes the commuted Bianchi system, giving estimates on the solution at finite times in terms of the data at $\{v=0\}$.

In the present paper we illustrate how to prove the top order estimates for the singular component of $\Phi_0,$ which decouples from the rest of the system. This will serve as a guideline for the general model systems treated in \cite{Cwave}. We encourage the reader to return to Section~\ref{scattering map section intro} of the introduction for an outline of the proof. We refer the reader to \cite{Cwave} for a complete proof for the system (\ref{first model system definition}), where we prove:
\begin{theorem}[{\cite[Theorem~1.1]{Cwave}}]\label{forward direction main result theorem general}
For any $M>0$ large enough, the system (\ref{first model system definition}) satisfies the estimates for all $\tau\in(0,1]$:
\[\tau^2\big\|\nabla_{\tau}\nabla^M\Phi_0\big\|^2_{H^{1/2}}+\tau^2\big\|\nabla^M\Phi_0\big\|^2_{H^{3/2}}+\sum_{i=1}^I\Big(\tau\big\|\nabla_{\tau}\nabla^M\Phi_i\big\|^2_{H^{1/2}}+ \tau\big\|\nabla^M\Phi_i\big\|^2_{H^{3/2}}+\big\|\Phi_i\big\|^2_{H^{M+1}}\Big)\lesssim\]\[\lesssim\big\|\mathcal{O}\big\|^2_{H^{M+1}}+\big\|\mathfrak{h}\big\|^2_{H^{M+1}}+\sum_{i=1}^I\big\|\Phi_i^0\big\|^2_{H^{M+1}}+\sum_{m=0}^{M}\sum_{i=0}^I\int_0^{\tau}\big\|F_m^i\big\|_{L^2}^2+\sum_{i=0}^I\int_0^{\tau}\tau'\big\|F_M^i\big\|_{H^{1/2}}^2,\]
\[\big\|\Phi_0\big\|^2_{H^{M+1}}\lesssim\big(1+|\log\tau|^2\big) \big\|\mathcal{O}\big\|^2_{H^{M+1}}+\big\|\mathfrak{h}\big\|^2_{H^{M+1}}+\sum_{i=1}^I\big\|\Phi_i^0\big\|^2_{H^{M+1}}+\sum_{m=0}^{M}\sum_{i=0}^I\int_0^{\tau}\big\|F_m^i\big\|_{L^2}^2+\sum_{i=0}^I\int_0^{\tau}\tau'\big\|F_M^i\big\|_{H^{1/2}}^2,\]
where we define $(\log\nabla)\mathcal{O}=\sum_{k\geq0}P_k^2\mathcal{O}\cdot\log2^k,\ \mathfrak{h}=h-2(\log\nabla)\mathcal{O},$ and the Sobolev spaces are defined in Section~\ref{LP Section}. In the above estimates, the implicit constants depend on $M>N$, the bounds on $\big\|\slashed{Riem}(\slashed{g}_0)\big\|_{H^{M}},$ and the bounds satisfied by the background $\big(\mathcal{M},g\big)$ according to Theorem~\ref{stability of de sitter theorem in section}.
\end{theorem}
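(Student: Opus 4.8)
The plan is to prove Theorem~\ref{forward direction main result theorem general} by commuting the first model system with $\nabla^m$ for $0 \le m \le M$, treating the resulting system of wave equations as a linear system on the background of $\{u=-1\}$ with inhomogeneities of two types: the genuine forcing terms $F_m^i$, and the schematic coupling terms $\psi \nabla^{m+1}\Phi$ which, because we commute with many angular derivatives, are effectively linear and can be absorbed using the smallness of $\psi$ from Theorem~\ref{stability of de sitter theorem in section}. The argument splits into a lower-order part ($m < M$) handled by standard weighted energy estimates adapted to the operator $\nabla_\tau^2 + \frac{1}{\tau}\nabla_\tau - 4\Delta$, and a top-order part ($m=M$) which is the crux and must be done frequency-by-frequency using the geometric Littlewood–Paley projections of Section~\ref{LP Section}. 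For the lower-order estimates one multiplies the equation by $\tau \nabla_\tau \nabla^m \Phi_i$ (plus a suitable multiple of $\nabla^m \Phi_i$ to control the zeroth-order term), integrates over $S_\tau$ and in $\tau$, and uses Gr\"onwall; the weights $\tau$, $\tau^2$ are dictated by the need to absorb the $\frac{1}{\tau}\nabla_\tau$ term and to make the bulk terms have favorable sign — this gives the $\big(1+|\log\tau|^2\big)$-type growth for $\Phi_0$ coming from its $\log\tau$ asymptotics and the uniform bounds for the $\Phi_i$.

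For the top-order estimates, the key structural input is the decomposition $\nabla^M \Phi_0 = (\nabla^M\Phi_0)_Y + (\nabla^M\Phi_0)_J$ into singular and regular components modelled on the Bessel functions $Y_0, J_0$, together with the observation that the singular component $(\nabla^M\Phi_0)_Y$ decouples from the rest of the system. For this decoupled component one proves the estimates \eqref{main estimate alpha Y top order intro}, \eqref{practical estimate alpha Y top order intro} exactly as in Theorem~\ref{forward direction main result theorem}, which is the content presented in this paper; for the regular component and the tensors $\Phi_i$, which satisfy better equations with no $\log$ singularity at $\tau=0$, one argues similarly but without the singular subtraction. Applying $P_k$ to the equation for $(\nabla^M\Phi_0)_Y$ and passing to the rescaled time $t=2^k\tau$, one proves that $P_k(\nabla^M\Phi_0)_Y$ tracks the $Y_0$-asymptotics up to errors controlled by the data; the transition between the low-frequency regime $\tau \le 2^{-k-1}$ (data given by the asymptotic data $\mathcal{O}, \mathfrak{h}$) and the high-frequency regime $\tau \in [2^{-k-1},1]$ (data at $\tau = 2^{-k-1}$) occurs at a frequency-dependent time, and it is precisely this mechanism — combined with the asymptotic decay rates — that produces the half-derivative gain: at $\tau=1$ one controls $M+\tfrac{3}{2}$ derivatives in terms of $M+1$ derivatives of the data. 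Summing $2^{(2M+1)k}$-weighted estimates over $k$ using the characterization of $H^{M+1/2}$ from Section~\ref{LP Section}, and using the definition of $\mathfrak{h} = h - 2(\log\nabla)\mathcal{O}$ to absorb the logarithmic term in the low-frequency data, yields the claimed bounds.

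The main obstacle — and the reason Section~\ref{model forward direction section} and the companion paper \cite{Cwave} are needed — is that the geometric LP projections are time-dependent (the metric $\slashed{g}_\tau$ varies) and only almost orthogonal, so commutators such as $[\nabla_4, P_k]$ and $[\nabla^M, P_k]$ do not vanish and cannot be controlled at the level of a single frequency: the bounds \eqref{LP bounds 4}, \eqref{LP refined 2}, \eqref{fractional LP bounds 1} introduce different projection operators $\widetilde{P}_k, \underline{\widetilde{P}}_k$ into the estimates. In the low-frequency regime these commutation errors can mostly be absorbed using the structure of the error terms and the already-established lower-order estimates, but in the high-frequency regime they genuinely couple distinct frequency blocks, so one cannot close the estimate frequency-by-frequency; instead one must first sum the per-frequency energy estimates over $k$ and only then, using the almost-orthogonality decay $2^{-4|k-k'|}$ together with Bessel's inequality, close the summed estimate. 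The error terms $F_m^i$ and the nonlinear terms $\psi\nabla^{m+1}\Phi$, being lower order in the relevant sense (high angular regularity makes them essentially linear, and $\psi$ is small), are handled by Cauchy–Schwarz and the weights in the energy, contributing the inhomogeneous terms $\sum_m \int_0^\tau \|F_m^i\|_{L^2}^2 + \int_0^\tau \tau' \|F_M^i\|_{H^{1/2}}^2$ on the right-hand side.
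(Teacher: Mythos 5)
Theorem~\ref{forward direction main result theorem general} is not proved in this paper; it is a cited result from the companion paper \cite{Cwave}. The present paper establishes only the illustrative special case Theorem~\ref{forward direction main result theorem}, for the singular component $(\nabla^M\Phi_0)_Y$. Your outline is substantively consistent with the sketch given in Sections~\ref{scattering map section intro} and~\ref{model forward direction section} and with that illustrative proof, but two concrete steps would fail as written.

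First, the lower-order multiplier you propose does not work for $\Phi_0$. Multiplying the equation for $\nabla^m\Phi_0$ by $\tau\nabla_\tau\nabla^m\Phi_0$ (plus a constant multiple of $\nabla^m\Phi_0$) and integrating from a small time $\tau_0$ produces a boundary term $\sim\tau_0\|\nabla_\tau\nabla^m\Phi_0\|_{L^2}^2$; since the asymptotic expansion gives $\nabla_\tau\nabla^m\Phi_0\sim 2\nabla^m\mathcal{O}/\tau$, this term diverges like $\|\nabla^m\mathcal{O}\|^2/\tau_0$, and the $c\,\nabla^m\Phi_0$ correction contributes a boundary term $\sim c(\log\tau_0/\tau_0)\|\nabla^m\mathcal{O}\|^2$, which also diverges. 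There is no $(1+|\log\tau|^2)$ growth to be extracted from this multiplier. The mechanism that makes the lower-order estimates close (as the paper states in Section~\ref{forward lower order estimates section}, following \cite{linearwave}) is the further split $(\nabla^m\Phi_0)_Y=(\nabla^m\Phi_0)_Y^1+(\nabla^m\Phi_0)_Y^2$ and the passage to the renormalized quantity $(\nabla^m\Phi_0)_Y^1/\log\tau$ (and, at the projected level, to $X=P_k\nabla_\tau\big[(\nabla^M\Phi_0)_Y/\log(2^k\tau)\big]-R_k\nabla^M\mathcal{O}\,\partial_\tau\big(1/\log(2^k\tau)\big)$): the division by $\log\tau$ cancels the $1/\tau$ singularity in $\nabla_\tau$ at leading order, and only then are the $\tau\to0$ boundary terms bounded by $\|\nabla^m\mathcal{O}\|^2$. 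Related to this, the singular/regular split has to be defined at every order $m\le M$, not only at $m=M$, since $[\log\nabla,\nabla^m]\ne0$; this is why the paper introduces and bounds the commutator $\mathcal{C}=(\nabla^M\Phi_0)_Y-\nabla(\nabla^{M-1}\Phi_0)_Y$ separately before closing the top-order estimates in Propositions~\ref{singular component low frequency proposition}--\ref{singular component low frequency proposition 2}.

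Second, the weight $2^{(2M+1)k}$ on $\|P_k\Phi_0\|_{L^2}^2$ reconstructs $\|\Phi_0\|_{H^{M+1/2}}^2$, which in the geometric, time-dependent LP setting is not the same as the quantity $\|\nabla^M\Phi_0\|_{H^{1/2}}^2$ appearing in the theorem, precisely because $[\nabla^M,P_k]\ne0$. The consistent choice matching your earlier step of commuting with $\nabla^M$ before projecting is the weight $2^k$ applied to $\|P_k\nabla^M\Phi_0\|_{L^2}^2$, which is what the paper uses through the energies $2^kE_k^2$ in Section~\ref{forward main result estimates section}.
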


In order to prove this result, we must decompose $\Phi_0$ into its singular and regular components, similarly to \cite{linearwave}. We have for each $m\leq M:$
\[\nabla^m\Phi_0=\big(\nabla^m\Phi_0\big)_Y+\big(\nabla^m\Phi_0\big)_J,\]
where we define the singular component $\big(\nabla^m\Phi_0\big)_Y$ to be the horizontal tensor that solves the linear equation:
\begin{equation}\label{equation for alpha Y}
    \nabla_{\tau}\big(\nabla_{\tau}\big(\nabla^m\Phi_0\big)_Y\big)+\frac{1}{\tau}\nabla_{\tau}\big(\nabla^m\Phi_0\big)_Y-4\Delta\big(\nabla^m\Phi_0\big)_Y=\psi\nabla\big(\nabla^m\Phi_0\big)_Y
\end{equation}
\[\big(\nabla^m\Phi_0\big)_Y({\tau})=2\nabla^m\mathcal{O}\log({\tau})+2(\log\nabla)\nabla^m\mathcal{O}+ O\big({\tau}^2|\log({\tau})|^2\big),\ \nabla_{\tau}\big(\nabla^m\Phi_0\big)_Y({\tau})=\frac{2\nabla^m\mathcal{O}}{\tau}+ O\big({\tau}|\log({\tau})|^2\big)\]
and we also define as above $(\log\nabla)\nabla^m\mathcal{O}=\sum_{k\geq0}P_k^2\nabla^m\mathcal{O}\cdot\log2^k,\ \mathfrak{h}_m=\nabla^mh-2(\log\nabla)\nabla^m\mathcal{O}$. We define the regular component of $\nabla^m\Phi_0$ by $\big(\nabla^m\Phi_0\big)_J=\nabla^m\Phi_0-\big(\nabla^m\Phi_0\big)_Y.$ This satisfies the equations:
\begin{equation}\label{equation for alpha J}
    \nabla_{\tau}\big(\nabla_{\tau}\big(\nabla^m\Phi_0\big)_J\big)+\frac{1}{\tau}\nabla_{\tau}\big(\nabla^m\Phi_0\big)_J-4\Delta\big(\nabla^m\Phi_0\big)_J=\psi\nabla\big(\nabla^m\Phi_0\big)_J+\sum_{j=1}^I\psi\nabla\nabla^m\Phi_{j}+F_{m}^{0}
\end{equation}
\[\big(\nabla^m\Phi_0\big)_J({\tau})=\mathfrak{h}_m+ O\big({\tau}^2|\log({\tau})|^2\big),\ \nabla_{\tau}\big(\nabla^m\Phi_0\big)_J({\tau})=O\big({\tau}|\log({\tau})|^2\big)\text{ in }C^{\infty}(S^n).\]
The notation for the regular and singular components is based on the similarities to the first and second Bessel functions $J_0,\ Y_0,$ as in the case of \cite{linearwave}. The regular component is better behaved at $\tau=0,$ similarly to the tensors $\Phi_1,\ldots,\Phi_I.$ The need to renormalize the asymptotic data $h$ to $\mathfrak{h}$ follows from the analysis of the singular component. In \cite{Cwave}, we prove separately estimates for $\big(\Phi_0\big)_Y$ and for $\big(\Phi_0\big)_J,\Phi_1,\ldots,\Phi_I,$ by considering their behavior in the low frequency and high frequency regimes. The main difficulty is present in the analysis of $\big(\Phi_0\big)_Y$ at top order, due to its singularity at $\tau=0.$ We illustrate the key aspects of the problem by proving the result: 
\begin{theorem}\label{forward direction main result theorem} For any $M>0$ large enough, the singular component satisfies the estimates for all $\tau\in(0,1]$:
\begin{equation}\label{main estimate alpha Y top order}
    \tau^2\big\|\nabla_{\tau}\big(\nabla^M\Phi_0\big)_Y\big\|^2_{H^{1/2}}+\tau^2\big\|\nabla\big(\nabla^M\Phi_0\big)_Y\big\|^2_{H^{1/2}}\lesssim\big\|\mathcal{O}\big\|^2_{H^{M+1}},
\end{equation}
\begin{equation}\label{practical estimate alpha Y top order}
    \sum_{m=0}^M\big\|\big(\nabla^m\Phi_0\big)_Y\big\|^2_{H^{1}}\lesssim\big(1+|\log\tau|^2\big) \big\|\mathcal{O}\big\|^2_{H^{M+1}}.
\end{equation}
\end{theorem}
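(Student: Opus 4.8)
The plan is to prove \eqref{main estimate alpha Y top order} and \eqref{practical estimate alpha Y top order} by analyzing the decoupled equation \eqref{equation for alpha Y} at the level of each geometric Littlewood--Paley projection, splitting into a low frequency regime $\tau \le 2^{-k-1}$ and a high frequency regime $\tau \in [2^{-k-1},1]$, and then summing over $k$. First I would set up the lower order estimates: for $m < M$ a standard energy estimate for \eqref{equation for alpha Y} — multiply by $\tau \nabla_\tau (\nabla^m\Phi_0)_Y$, integrate by parts on $S_\tau$, control the $\psi \nabla(\cdot)$ term and the time dependence of $\slashed g_\tau$ using the bounds from Theorem~\ref{stability of de sitter theorem in section} (in the $\tau$ variable, $\chi$ is controlled so the error from $\nabla_\tau \slashed g_\tau$ is harmless) — yields $\tau^2\|\nabla_\tau(\nabla^m\Phi_0)_Y\|^2_{L^2} + \tau^2 \|\nabla(\nabla^m\Phi_0)_Y\|^2_{L^2} \lesssim \|\mathcal{O}\|^2_{H^{m+1}}$ after using the expansion of $(\nabla^m\Phi_0)_Y$ near $\tau = 0$ to fix the data (the $\mathcal{O}\log\tau$ term is the source of the worst growth). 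Integrating the $\nabla_\tau$ bound in $\tau$ and using $(\nabla^m\Phi_0)_Y(\tau) = 2\nabla^m\mathcal{O}\log\tau + 2(\log\nabla)\nabla^m\mathcal{O} + O(\tau^2|\log\tau|^2)$ gives \eqref{practical estimate alpha Y top order} for $m<M$; note that $\|(\log\nabla)\nabla^m\mathcal{O}\|_{L^2}^2 = \sum_k (\log 2^k)^2 \|P_k^2 \nabla^m\mathcal{O}\|^2 \lesssim \|\nabla^m\mathcal{O}\|_{H^{1/2}}^2$, comfortably within $\|\mathcal{O}\|_{H^{M+1}}^2$.

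For the top order $m = M$, the naive energy estimate loses, so I would project \eqref{equation for alpha Y} with $P_k$. Using \eqref{LP bounds 1} to rewrite $[\nabla_\tau, P_k]$ in terms of $[\nabla_4, P_k]$, and \eqref{LP refined 2}, \eqref{LP refined 4}, \eqref{fractional LP bounds 1} to control the commutators, the equation for $P_k(\nabla^M\Phi_0)_Y$ becomes, in the rescaled time $t = 2^k\tau$, a perturbation of the model Bessel equation $\partial_t^2 w + t^{-1}\partial_t w - 4\cdot 2^{-2k}\Delta w = \ldots$, whose relevant solution behaves like $Y_0$. The key structural fact, as in \cite{linearwave}, is that $P_k(\nabla^M\Phi_0)_Y(\tau) = 2 P_k\nabla^M\mathcal{O}\,\log(2^k\tau) + \text{l.o.t.} + O(\tau^2|\log\tau|^2)$, where the lower order terms involve $R_k$ from \eqref{definition of Rk} and are controlled by Lemma~\ref{Rk lemma}. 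In the low frequency regime $\tau \le 2^{-k-1}$ the solution is still in its "logarithmic" phase and one propagates the asymptotic data forward with essentially no amplification beyond the explicit $\log$; in the high frequency regime $\tau \ge 2^{-k-1}$ the $\Delta$ term dominates, the solution oscillates and decays, and a weighted energy estimate with multiplier adapted to $t$ gives at $\tau = 1$ control of $2^{(2M+3)k}\|P_k(\nabla^M\Phi_0)_Y\|^2_{L^2}$-type quantities by $2^{(2M+2)k}\|P_k\nabla^M\mathcal{O}\|^2_{L^2}$ — the half derivative gain in $H^{1/2}$ comes precisely from the $2^{-k-1}$ frequency-dependent transition time and the Bessel decay. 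Summing the $P_k$ estimates against $2^{k}$ (for the $H^{1/2}$ norm) and using the $H^{M+1}$ characterization from \cite{geometricLP} gives \eqref{main estimate alpha Y top order}, and then integrating in $\tau$ as before closes \eqref{practical estimate alpha Y top order} at $m = M$.

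The main obstacle I anticipate is that the geometric LP projections are time dependent and only \emph{almost} orthogonal, so the commutator terms $[\nabla_4, P_k]$ produce different projection operators $\widetilde P_k, \underline{\widetilde P}_k$ on the right-hand side (cf. \eqref{LP refined 1}--\eqref{LP refined 4}), and the bound \eqref{fractional LP bounds 1} costs a factor $2^{k/2}$. These cross terms cannot be absorbed at the level of a single frequency: one must sum the per-frequency estimates first, exploiting that $2^{-4|k-k'|}$ almost-orthogonality is summable and that the $2^{k/2}$ loss is compensated by the half-derivative of room built into the $H^{1/2}$ vs.\ $H^{M+1}$ discrepancy. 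Concretely, after obtaining an estimate for $\|P_k(\nabla^M\Phi_0)_Y\|$ that contains $\sum_{k'} 2^{-4|k-k'|}(\text{terms at frequency }k')$, I would multiply by the appropriate $\tau$-weight and $2^{k}$, sum in $k$, swap the order of summation, and use a discrete Schur-type bound to close. A secondary technical point is handling the regularity of $\slashed g_\tau$: the needed commutator bounds \eqref{LP bounds 2}--\eqref{LP bounds 7} require control of $\|\slashed{Riem}\|_{H^{m}}$ and $\|\chi\|_{W^{2,\infty}}$ on $S_\tau$, which follow from Theorem~\ref{stability of de sitter theorem in section} with $N$ chosen large relative to $M$ — this is why the statement requires $M$ large enough, with the implicit constant depending only on $M$.
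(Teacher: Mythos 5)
Your overall architecture matches the paper's proof closely: lower-order estimates for $m<M$ by standard energy arguments, then projecting the decoupled equation \eqref{equation for alpha Y} with $P_k$, splitting at the frequency-dependent time $\tau\sim 2^{-k-1}$, treating the two regimes separately via Bessel-type energy estimates in the rescaled time $t = 2^k\tau$, and summing in $k$ to extract the $1/2$-derivative gain. You also correctly identify that the geometric LP projections' time dependence and almost-orthogonality create commutator errors that cannot be closed frequency-by-frequency, and that one must sum before absorbing them.

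The gap is in how you propose to close those commutator errors. In the low frequency regime, applying \eqref{LP refined 2} or \eqref{fractional LP bounds 1} to $[\nabla_4,P_k](\nabla^M\Phi_0)_Y/\log(2^k\tau)$ produces terms that are still \emph{top order}: you get either $\|\underline{\widetilde P}_k(\nabla^M\Phi_0)_Y\|$ (a different projection of the same quantity you are estimating) or $2^{k/2}\|(\nabla^M\Phi_0)_Y\|_{H^{1/2}}$ (the top-order tensor at fractional regularity). Your plan to resolve this by "summing and a Schur bound, compensated by the half derivative of room" does not actually break the circularity: before you can sum, you need each per-frequency estimate to be closed in terms of data plus genuinely lower-order quantities. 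The paper's proof resolves this by a structural decomposition you do not mention: it writes $(\nabla^M\Phi_0)_Y = \mathcal{C} + \nabla(\nabla^{M-1}\Phi_0)_Y$, where $\mathcal{C}=(\nabla^M\Phi_0)_Y-\nabla(\nabla^{M-1}\Phi_0)_Y$ tracks the failure of $\log\nabla$ and $\nabla$ to commute. The tensor $\mathcal{C}$ satisfies a \emph{regular} (not $\log$-singular) equation and so obeys good bounds by $\|\mathcal{O}\|_{H^{M+\eta}}$, while $\nabla(\nabla^{M-1}\Phi_0)_Y$ can absorb the $2^{k/2}$ loss of \eqref{fractional LP bounds 1} using the already-established fractional lower-order estimate $\|\nabla(\nabla^{M-1}\Phi_0)_Y\|_{H^{1/2}}^2 \lesssim (1+|\log\tau|^2)\|\mathcal{O}\|^2_{H^{M+1/2+\eta}}$. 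This decomposition is precisely what converts the apparently top-order commutator error into one controlled by strictly fewer angular derivatives of $\mathcal{O}$; without it (or an equivalent device) your low frequency estimate in Proposition form would still contain an uncontrolled top-order bulk integral and the Gronwall would not close.

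Two smaller points. First, the renormalization in the low frequency regime is more delicate than "propagating the log phase with no amplification": the paper has to work with the renormalized unknowns $P_k(\nabla^M\Phi_0)_Y/\log(2^k\tau) - R_k\nabla^M\mathcal{O}/\log(2^k\tau)$ and $P_k\nabla_\tau(\cdot)/\log(2^k\tau) - R_k\nabla^M\mathcal{O}\,\partial_\tau(1/\log 2^k\tau)$, and the $R_k$ correction is essential to get a finite initial datum at $\tau=0$; you gesture at this via Lemma~\ref{Rk lemma} but do not carry it through. Second, your exponent bookkeeping "$2^{(2M+3)k}\|P_k(\nabla^M\Phi_0)_Y\|^2$ vs.\ $2^{(2M+2)k}\|P_k\nabla^M\mathcal{O}\|^2$" double counts: since $M$ angular derivatives are already inside $\nabla^M\Phi_0$, the correct comparison is $2^{3k}\|P_k(\nabla^M\Phi_0)_Y\|_{L^2}^2$ (summing to the $H^{3/2}$ norm of the $M$-times differentiated quantity) against $2^{2k}\|P_k\nabla^M\mathcal{O}\|^2_{L^2}$ (summing to $\|\mathcal{O}\|^2_{H^{M+1}}$). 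That slip is cosmetic, but the $\mathcal{C}$ decomposition is a substantive omission.
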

We point out that in this result we are using the notation convention from Section~\ref{model systemmm}, so the implicit constants depend only on $M.$ We follow the strategy outlined in Section~\ref{scattering map section intro}. We first obtain lower order estimates in Section~\ref{forward lower order estimates section} using the results of \cite{Cwave}. At top order, we consider each $P_k$ projection of the solution, and we treat separately the low frequency regime in Section~\ref{forward low freq estimates section} and the high frequency regime in Section~\ref{forward high freq estimates section}. In Section~\ref{forward main result estimates section} we combine the low frequency regime and the high frequency regime estimates to prove Theorem \ref{forward direction main result theorem}.

\subsection{Lower Order Estimates}\label{forward lower order estimates section}

In this section we briefly outline the lower order estimates for the singular component that are obtained in \cite{Cwave}. We point out that these results are not sharp, but all the quantities that we consider are controlled using $\big\|\mathcal{O}\big\|_{H^{M+1}}.$ We set $\eta=1/10.$ According to \cite[Section~3]{Cwave}, we have:
\begin{proposition*}[{\cite[Proposition~3.2, Proposition~3.5]{Cwave}}]
    For any $m<M$, the singular component satisfies the following estimates:
    \begin{align*}
        \big\|\big(\nabla^m\Phi_{0}\big)_Y\big\|^2_{H^1}&\lesssim\big(1+|\log\tau|^2\big)\big\|\mathcal{O}\big\|^2_{H^{m+1+\eta}}\\
        \big\|\nabla(\nabla^{M-1}\Phi_0)_Y\big\|^2_{H^{1/2}}&\lesssim\big(1+|\log\tau|^2\big)\big\|\mathcal{O}\big\|^2_{H^{M+1/2+\eta}}\\
        \big\|\nabla_{\tau}(\nabla^{M-1}\Phi_0)_Y\big\|^2_{H^{1/2}}&\lesssim\frac{1}{\tau^2}\big\|\mathcal{O}\big\|^2_{H^{M+1/2+\eta}}.
    \end{align*}
\end{proposition*}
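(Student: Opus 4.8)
The plan is to prove the lower order bounds by induction on the number of angular derivatives $m$, at each level closing energy estimates at two regularity levels ($H^1$ and $H^{1/2}$) with a small loss of $\eta$ derivatives which is available because $m < M$ and $\mathcal{O} \in H^{M+1}$. The key point throughout is that each $P_k$ projection of $(\nabla^m\Phi_0)_Y$ behaves like the second Bessel function $Y_0$ in the rescaled time $t = 2^k\tau$, so one gains a $(1+|\log\tau|^2)$ factor rather than genuine growth, and the transition time $\tau \sim 2^{-k}$ between low and high frequency regimes is what lets one afford the derivative loss.

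\textbf{Step 1: the low frequency / high frequency split at each level.} For each fixed $m$, and for each $k \geq 0$, I would split the analysis of $P_k(\nabla^m\Phi_0)_Y$ into the low frequency regime $\tau \in [0, 2^{-k-1}]$ and the high frequency regime $\tau \in [2^{-k-1}, 1]$. In the low frequency regime I use the expansion $P_k(\nabla^m\Phi_0)_Y(\tau) = 2P_k\nabla^m\mathcal{O}\log(2^k\tau) + l.o.t. + O(\tau^2|\log\tau|^2)$ together with $R_k\nabla^m\mathcal{O}$ from Lemma \ref{Rk lemma}; this immediately gives $\|P_k(\nabla^m\Phi_0)_Y\|_{L^2} \lesssim (1+|\log(2^k\tau)|)\|\underline P_k\nabla^m\mathcal{O}\|_{L^2} + 2^{-k}\|\underline P_k\nabla^m\mathcal{O}\|_{H^1} + \ldots$, and similarly for $\nabla P_k$ and $\nabla_\tau P_k$. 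In the high frequency regime I run a standard energy estimate for equation \eqref{equation for alpha Y} at the level of $P_k$, using the multiplier $\tau\nabla_\tau P_k(\nabla^m\Phi_0)_Y$ (plus the usual angular multiplier for the $H^1$ version), which gives control at time $\tau$ in terms of the data at $\tau = 2^{-k-1}$ produced by Step 1's low frequency analysis; the coercivity of the bulk terms here is exactly as in the linear wave case of \cite{linearwave}, and the error terms from $\psi\nabla(\nabla^m\Phi_0)_Y$ and from the commutators $[\nabla_4, P_k]$, $[\nabla^m, P_k]$ are handled using Lemmas \ref{LP Projections lemma}, \ref{LP Projections lemma refined}, \ref{fractional LP bounds}, since at level $m < M$ we have spare angular regularity to absorb the $\eta$-loss.

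\textbf{Step 2: summation and the induction.} Having bounds for each $P_k$ projection, I square-sum against the weights $2^{2k}$ (for the $H^1$ estimates) or $2^{k}$ (for the $H^{1/2}$ estimates, used only at level $m = M-1$). Because the low frequency contribution carries the $|\log(2^k\tau)|$ and the high frequency contribution inherits data at the $k$-dependent transition time, the sum over $k$ converges and produces the $(1+|\log\tau|^2)$ prefactor together with $\|\mathcal{O}\|_{H^{m+1+\eta}}^2$ (or $\|\mathcal{O}\|_{H^{M+1/2+\eta}}^2$ for $\nabla(\nabla^{M-1}\Phi_0)_Y$); the $\tau^{-2}$ on the $\nabla_\tau(\nabla^{M-1}\Phi_0)_Y$ estimate reflects the $2\nabla^{M-1}\mathcal{O}/\tau$ leading term of $\nabla_\tau(\nabla^m\Phi_0)_Y$. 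The induction on $m$ enters because the right hand side error term $\psi\nabla(\nabla^m\Phi_0)_Y = \psi\nabla^{m+1}\Phi_0 + \ldots$, when one isolates the genuinely $m$-th order piece, is controlled by the previous level's bounds plus the bootstrap smallness of $\psi$ from Theorem \ref{stability of de sitter theorem in section}; the base case $m = 0$ has no such error feedback and follows directly from Steps 1.

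\textbf{Main obstacle.} The hard part is controlling the commutator error terms in the high frequency energy estimate without losing orthogonality: since the geometric LP projections only satisfy the almost-orthogonality bound $\|P_k\widetilde P_{k'}F\|_{L^2} \lesssim 2^{-4|k-k'|}\|F\|_{L^2}$, the commutation terms $[\nabla_4, P_k]$ and $[\nabla^m, P_k]$ generate different projection operators on the right hand side (e.g.\ $\underline{\widetilde P}_k\nabla F$ from \eqref{LP refined 2}, \eqref{LP refined 4}), and at top-adjacent level these cannot all be absorbed frequency-by-frequency. The saving grace that makes the present lower order statement (as opposed to the sharp top order one) tractable is the $\eta$-loss: since $m+1+\eta < M+1$, a crude bound $\|\underline{\widetilde P}_k\nabla F\|_{L^2} \lesssim 2^{k(1-\eta)}\|F\|_{H^\eta}$-type estimate, summed with the extra decay $2^{-k\eta}$ available from the norm discrepancy, closes everything; one does not need to carry out the delicate cancellations of \cite{Cwave} here, only to invoke the $\eta$-room. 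I would therefore present this step as an application of the error-term estimates of \cite{Cwave} rather than re-derive them.
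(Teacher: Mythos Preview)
Your route differs from the paper's in a substantive way. For the $H^1$ bound the paper uses \emph{no} Littlewood--Paley projections: it decomposes $(\nabla^m\Phi_0)_Y = (\nabla^m\Phi_0)_Y^1 + (\nabla^m\Phi_0)_Y^2$, where the two pieces are the solutions of \eqref{equation for alpha Y} with asymptotic data $2\nabla^m\mathcal{O}\log\tau$ and $2(\log\nabla)\nabla^m\mathcal{O}$ respectively, and then applies standard $\nabla_\tau$-multiplier energy estimates to $(\nabla^m\Phi_0)_Y^1/\log\tau$ and to $(\nabla^m\Phi_0)_Y^2$ directly in physical space, as in \cite{linearwave}. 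The $\eta$-loss arises solely from the initial data of the second piece via $\|(\log\nabla)\nabla^m\mathcal{O}\|_{H^1}\lesssim\|\mathcal{O}\|_{H^{m+1+\eta}}$, not from LP commutators. LP projections enter only for the two fractional $H^{1/2}$ estimates at $m=M-1$, where one commutes with $P_k$ and weights by $2^k$. Your full-LP scheme at every level mimics the top-order argument of Section~\ref{forward top order estimates section} and could in principle be carried out, but it is more elaborate than necessary, and the paper's $Y^1/Y^2$ split explains transparently why the loss is exactly the logarithmic-multiplier loss.

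There is also a misreading in your induction step. The equation \eqref{equation for alpha Y} for $(\nabla^m\Phi_0)_Y$ is self-contained: its right hand side is $\psi\nabla(\nabla^m\Phi_0)_Y$, not $\psi\nabla^{m+1}\Phi_0$ as you write (recall $(\nabla^m\Phi_0)_Y \neq \nabla^m\Phi_0$; they differ by the regular component $(\nabla^m\Phi_0)_J$). So there is no coupling between levels coming from the equation itself, and this term is simply absorbed as a small perturbation. An inductive structure \emph{is} implicitly present in your LP scheme, but for a different reason: controlling $[\nabla_4,P_k]$ at level $m$ via Lemma~\ref{fractional LP bounds} requires $H^{1/2}$-control of $(\nabla^{m-1}\Phi_0)_Y$, exactly as the top-order Proposition~\ref{singular component low frequency proposition} uses the present proposition at $M-1$ as input. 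But then your base case needs a separate argument, and that is precisely where the paper's $Y^1/Y^2$ decomposition does the work without any frequency splitting.
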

The idea of the proof is to consider the decomposition $\big(\nabla^m\Phi_0\big)_Y=\big(\nabla^m\Phi_0\big)_Y^1+\big(\nabla^m\Phi_0\big)_Y^2$ for each $m<M,$ where $\big(\nabla^m\Phi_0\big)_Y^1,\ \big(\nabla^m\Phi_0\big)_Y^2$ are the solutions of (\ref{equation for alpha Y}) that satisfy the expansions:
    \[\big(\nabla^m\Phi_0\big)_Y^1({\tau})=2\nabla^m\mathcal{O}\log({\tau})+ O\big({\tau}^2|\log({\tau})|^2\big),\ \nabla_{\tau}\big(\nabla^m\Phi_0\big)_Y^1({\tau})=\frac{2\nabla^m\mathcal{O}}{\tau}+ O\big({\tau}|\log({\tau})|^2\big)\]
    \[\big(\nabla^m\Phi_0\big)_Y^2({\tau})=2(\log\nabla)\nabla^m\mathcal{O}+ O\big({\tau}^2|\log({\tau})|^2\big),\ \nabla_{\tau}\big(\nabla^m\Phi_0\big)_Y^2({\tau})=O\big({\tau}|\log({\tau})|^2\big).\]
Using this decomposition, the first statement follows using standard $\nabla_{\tau}$ multipliers in the equations satisfied by $\big(\nabla^m\Phi_0\big)_Y^1/\log\tau$ and $\big(\nabla^m\Phi_0\big)_Y^2,$ similarly to the \cite{linearwave}. We need to allow for the constant $\eta>0$ on the right hand side of the estimates, due to the inequality:
\[\big\|(\log\nabla)\nabla^m\mathcal{O}\big\|_{H^1}\lesssim\big\|\mathcal{O}\big\|_{H^{m+1+\eta}}.\]
The proof of the last two statements follows a similar strategy as above, in the case $m=M-1.$ However, to obtain the fractional estimates we need to first commute the equations with the LP operators $P_k,\ k\geq0$ of Section~\ref{LP Section}, and multiply each estimate by $2^k,$ which amounts to taking half of a derivative. The additional commutation terms obtained because of the LP projections are handled using the bounds provided in Section~\ref{LP Section}.

Additionally, we remark that the need to define the singular component of each $\nabla^m\Phi_0$ is caused by the failure of the operators $\log\nabla$ and $\nabla^m$ to commute. We will also need estimates for the commutator term $\mathcal{C}=\big(\nabla^M\Phi_0\big)_Y-\nabla\big(\nabla^{M-1}\Phi_0\big)_Y$. This satisfies a similar equation to $\big(\Phi_0\big)_J,$ and has a regular expansion:
\[\mathcal{C}=2[\log\nabla,\nabla]\nabla^{M-1}\mathcal{O}+O\big({\tau}^2|\log({\tau})|^2\big),\ \nabla_{\tau}\mathcal{C}=O\big({\tau}|\log({\tau})|^2\Big).\]
According to \cite[Section~3]{Cwave}, we have the estimate:
\begin{proposition*}[{\cite[Proposition~3.4]{Cwave}}]
    The commutator term $\mathcal{C}=\big(\nabla^M\Phi_0\big)_Y-\nabla\big(\nabla^{M-1}\Phi_0\big)_Y$ satisfies the estimate:
    \[\big\|\nabla_{\tau}\mathcal{C}\big\|^2_{L^2}+\big\|\mathcal{C}\big\|^2_{H^1}\lesssim\big\|\mathcal{O}\big\|^2_{H^{M+\eta}}+\int_0^{\tau}\big\|\tau'\nabla_{\tau}\big(\nabla^M\Phi_0\big)_Y\big\|_{L^2}^2d\tau'.\]
    Moreover, for any $k\geq0,\ \tau\in(0,2^{-k-1}]$ we have:
    \[\big\|\mathcal{C}\big\|^2_{L^2}\lesssim\big\|\mathcal{O}\big\|^2_{H^{M+\eta}}+2^{-k}\int_0^{\tau}\big\|\tau'\nabla_{\tau}\big(\nabla^M\Phi_0\big)_Y\big\|_{L^2}^2d\tau'.\]
\end{proposition*}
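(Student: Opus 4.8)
The plan is to derive a wave equation for $\mathcal{C}=\big(\nabla^M\Phi_0\big)_Y-\nabla\big(\nabla^{M-1}\Phi_0\big)_Y$ and run a standard $\nabla_\tau$-energy estimate on it, treating it on the same footing as the regular component $\big(\Phi_0\big)_J$. First I would subtract the equation \eqref{equation for alpha Y} for $\big(\nabla^{M-1}\Phi_0\big)_Y$, commuted with one angular derivative $\nabla$, from \eqref{equation for alpha Y} for $\big(\nabla^M\Phi_0\big)_Y$. The operators $\nabla_\tau$ and $\nabla$ do not commute, and neither do $\Delta$ and $\nabla$; using the commutation formulas from Section~\ref{set up section} and the relation $\nabla_\tau\Phi=\mathcal{L}_\tau\Phi+\tau\chi\cdot\Phi$, all commutator terms produce factors of $\psi$ (i.e.\ $\hat\chi$, $\tr\chi^*$, curvature $\slashed{Riem}$) multiplying $\big(\nabla^{\le M}\Phi_0\big)_Y$ and its first derivatives, plus explicit lower-order curvature contributions. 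One thereby gets
\[
\nabla_\tau\big(\nabla_\tau\mathcal{C}\big)+\frac{1}{\tau}\nabla_\tau\mathcal{C}-4\Delta\mathcal{C}=\psi\nabla\mathcal{C}+\psi\cdot\big(\nabla^{\le M}\Phi_0\big)_Y+\big(\text{lot in curvature}\big),
\]
with the source controlled using the lower-order estimates of Section~\ref{forward lower order estimates section} together with the stated regular initial data $\mathcal{C}=2[\log\nabla,\nabla]\nabla^{M-1}\mathcal{O}+O(\tau^2|\log\tau|^2)$, $\nabla_\tau\mathcal{C}=O(\tau|\log\tau|^2)$.

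Next I would multiply the equation by $\tau\nabla_\tau\mathcal{C}$ (contracted with the metric), integrate over $S_\tau$ and in $\tau'\in[\tau,1]$ (or rather from $0$, using the vanishing of the boundary term at $\tau=0$ guaranteed by the expansion), and integrate by parts in the $\Delta$ term. The $\frac{1}{\tau}\nabla_\tau\mathcal{C}$ term gives a coercive bulk contribution $\int \|\nabla_\tau\mathcal{C}\|_{L^2}^2\,d\tau'$ with a favorable sign; the time derivative of the metric volume form and of $\slashed{g}_\tau$ inside $\Delta$ produces terms of size $\tau\chi\cdot(\dots)$, absorbable by Gr\"onwall since $\chi=\frac{1}{v}\slashed{g}+\dots=O(\tau^{-2})$ is compensated by the bulk weight and the smallness from Theorem~\ref{stability of de sitter theorem in section}. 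The $\psi\nabla\mathcal{C}$ term is handled by Cauchy--Schwarz against the $H^1$ energy and Gr\"onwall; the $\psi\cdot\big(\nabla^{\le M}\Phi_0\big)_Y$ source, after Cauchy--Schwarz, contributes exactly $\int_0^\tau\|\tau'\nabla_\tau(\nabla^M\Phi_0)_Y\|_{L^2}^2\,d\tau'$ plus lower-order pieces bounded by $\|\mathcal{O}\|_{H^{M+\eta}}^2$ via the Section~\ref{forward lower order estimates section} estimates. Since $M$ is large the nonlinear $\psi$ factors lose no derivatives and are treated as essentially constant coefficients. This yields the first claimed bound $\|\nabla_\tau\mathcal{C}\|_{L^2}^2+\|\mathcal{C}\|_{H^1}^2\lesssim\|\mathcal{O}\|_{H^{M+\eta}}^2+\int_0^\tau\|\tau'\nabla_\tau(\nabla^M\Phi_0)_Y\|_{L^2}^2\,d\tau'$.

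For the refined low-frequency bound on $\|\mathcal{C}\|_{L^2}^2$ in the range $\tau\in(0,2^{-k-1}]$, the plan is to extract the $2^{-k}$ gain by using that, in this regime, the source term $\big(\nabla^M\Phi_0\big)_Y$ behaves like the second Bessel function $Y_0$ at small argument $t=2^k\tau\le 1/2$, so its contribution is genuinely lower order in $\tau$; concretely one keeps track of the weight $\tau$ on the source and estimates $\int_0^\tau\tau'\|\tau'\nabla_\tau(\nabla^M\Phi_0)_Y\|_{L^2}\|\nabla_\tau\mathcal{C}\|_{L^2}\,d\tau' \lesssim 2^{-k}\int_0^\tau\|\tau'\nabla_\tau(\nabla^M\Phi_0)_Y\|_{L^2}^2 + \text{(absorbable)}$, using $\tau'\le 2^{-k-1}$ throughout. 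The splitting $\mathcal{C}=\mathcal{C}^1+\mathcal{C}^2$ mirroring the $\big(\nabla^m\Phi_0\big)_Y^1,\big(\nabla^m\Phi_0\big)_Y^2$ decomposition of Section~\ref{forward lower order estimates section} may be needed to isolate the purely logarithmic part.

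The main obstacle I anticipate is the bookkeeping of the commutator $[\log\nabla,\nabla]$ appearing in the initial data for $\mathcal{C}$: one must show $\|[\log\nabla,\nabla]\nabla^{M-1}\mathcal{O}\|_{H^1}\lesssim\|\mathcal{O}\|_{H^{M+\eta}}$, which requires the almost-orthogonality and finite-band properties of the geometric LP projections from Section~\ref{LP Section} together with Lemma~\ref{Rk lemma}, and tracking that the loss is only the harmless $\eta$ derivative rather than a full derivative. The second subtlety is ensuring the curvature coefficients $\chi=O(\tau^{-2})$ arising from $\nabla_\tau=\mathcal{L}_\tau+\tau\chi\cdot$ and from time-differentiating $\slashed{g}_\tau$ in the Laplacian genuinely close via Gr\"onwall on $[\tau,1]$; this is exactly the type of estimate already carried out for $\big(\Phi_0\big)_J$ in \cite{Cwave}, so it should go through verbatim, but it is where the smallness of the background from Theorem~\ref{stability of de sitter theorem in section} and the restriction to $\tau\le 1$ are essential.
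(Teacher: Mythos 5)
Note first that this paper does not prove the proposition itself; it is cited from the companion paper \cite{Cwave}, and the only indications given here are that $\mathcal{C}$ ``satisfies a similar equation to $(\Phi_0)_J$'' with the stated regular expansion, and that the lower-order estimates use ``standard $\nabla_\tau$ multipliers.'' Your high-level plan --- subtract the $\nabla$-commuted equation \eqref{equation for alpha Y} at level $M-1$ from the one at level $M$, obtain a wave equation for $\mathcal{C}$ with commutator sources, and run a $\nabla_\tau$-type energy estimate using the regular limit $2[\log\nabla,\nabla]\nabla^{M-1}\mathcal{O}$ as initial data and \eqref{LP bounds 2}--\eqref{LP bounds 3} to control $\|[\log\nabla,\nabla]\nabla^{M-1}\mathcal{O}\|_{H^1}$ --- is consistent with all of these signals.

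There is, however, a genuine error in the multiplier and the associated boundary term. With the equation $\nabla_\tau^2\mathcal{C}+\frac{1}{\tau}\nabla_\tau\mathcal{C}-4\Delta\mathcal{C}=S$, the correct multiplier is $\nabla_\tau\mathcal{C}$: one obtains $\frac{d}{d\tau}E+\frac{1}{\tau}\|\nabla_\tau\mathcal{C}\|_{L^2}^2=\int S\cdot\nabla_\tau\mathcal{C}+\text{err}$ with $E=\frac{1}{2}\|\nabla_\tau\mathcal{C}\|_{L^2}^2+2\|\nabla\mathcal{C}\|_{L^2}^2$, a favorable bulk, and a \emph{finite, nonzero} boundary term $E(0)\lesssim\|\mathcal{O}\|_{H^{M+\eta}}^2$ (since $\nabla_\tau\mathcal{C}(0)=0$ but $\nabla\mathcal{C}(0)\neq0$) which is precisely where the first term on the right of the claimed bound comes from. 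Contracting with $\tau\nabla_\tau\mathcal{C}$ as you propose instead yields $\frac{d}{d\tau}(\tau E)+\frac{1}{2}\|\nabla_\tau\mathcal{C}\|_{L^2}^2-2\|\nabla\mathcal{C}\|_{L^2}^2=\ldots$, with a wrong-sign bulk term $-2\|\nabla\mathcal{C}\|_{L^2}^2$ that cannot be absorbed, and it forces the boundary contribution at $\tau=0$ to vanish, so the $\|\mathcal{O}\|_{H^{M+\eta}}^2$ data term never appears. Your parenthetical claim that ``the vanishing of the boundary term at $\tau=0$ [is] guaranteed by the expansion'' is correct only for the $\tau$-weighted multiplier, which is the wrong one; for the correct unweighted multiplier the boundary term does not vanish and must be retained. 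Relatedly, the $\int_0^\tau\|\tau'\nabla_\tau(\nabla^M\Phi_0)_Y\|_{L^2}^2\,d\tau'$ term is not produced by the $\psi\cdot(\nabla^{\le M}\Phi_0)_Y$ piece as you write, but by the $[\nabla_\tau^2,\nabla]$ commutator: $[\nabla_\tau,\nabla]\nabla_\tau(\nabla^{M-1}\Phi_0)_Y\sim\tau\psi\,\nabla\nabla_\tau(\nabla^{M-1}\Phi_0)_Y$, and since $\nabla\nabla_\tau(\nabla^{M-1}\Phi_0)_Y=\nabla_\tau(\nabla^M\Phi_0)_Y-\nabla_\tau\mathcal{C}+\text{l.o.t.}$, Cauchy--Schwarz against $\nabla_\tau\mathcal{C}$ with weight $\tau'$ gives $\int_0^\tau(\tau')^3\|\nabla_\tau(\nabla^M\Phi_0)_Y\|_{L^2}^2\le\int_0^\tau\|\tau'\nabla_\tau(\nabla^M\Phi_0)_Y\|_{L^2}^2$; the $\psi\nabla(\nabla^{\le M-1}\Phi_0)_Y$ pieces are absorbed into $\|\mathcal{O}\|_{H^{M+\eta}}^2$ via the Section~\ref{forward lower order estimates section} lower-order estimates.

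Finally, the $2^{-k}$ refinement on $\|\mathcal{C}\|_{L^2}^2$ does not require the Bessel asymptotics or the $\mathcal{C}^1+\mathcal{C}^2$ splitting you suggest: once the first estimate is established, write $\mathcal{C}(\tau)=\mathcal{C}(0)+\int_0^\tau\nabla_\tau\mathcal{C}\,d\tau'$, so $\|\mathcal{C}(\tau)\|_{L^2}^2\lesssim\|\mathcal{C}(0)\|_{L^2}^2+\tau\int_0^\tau\|\nabla_\tau\mathcal{C}\|_{L^2}^2\,d\tau'$ (modulo harmless volume-form corrections), plug in the first estimate, and use $\tau\le2^{-k-1}$; the single factor of $\tau$ already supplies the claimed $2^{-k}$ gain. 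Your proposal over-engineers a step that follows from the first estimate and the range restriction alone.
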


\subsection{Top Order Estimates}\label{forward top order estimates section}

To prove top order estimates, we need a precise understanding of the behavior of the $P_k$ projections of $\big(\nabla^M\Phi_0\big)_Y$. As in the case of the linear wave equation on de Sitter background studied in \cite{linearwave}, we need to treat differently the low frequency regime $\tau\in(0,2^{-k-1}]$ and the high frequency regime $\tau\in[2^{-k-1},1]$.

\subsubsection{Low Frequency Estimates}\label{forward low freq estimates section}

For every $k\geq0$ we have the following expansions:
\begin{align*}
    P_k\big(\nabla^M\Phi_0\big)_Y({\tau})&=2P_k\nabla^M\mathcal{O}\log({2^k\tau})+R_k\nabla^M\mathcal{O}+O\big({\tau}^2|\log({\tau})|^2\big),\\
    P_k\nabla_{2^{-k}\partial_\tau}\big(\nabla^M\Phi_0\big)_Y({\tau})&=2P_k\nabla^M\mathcal{O}\frac{1}{2^k\tau}+O\big({\tau}|\log({\tau})|^2\big),
\end{align*}
where we defined $R_k\nabla^M\mathcal{O}$ by (\ref{definition of Rk}). We point out that $\nabla^M\mathcal{O},\ P_k\nabla^M\mathcal{O},\ R_k\nabla^M\mathcal{O}$ on the right hand side are defined at $\tau=0$ and extended by Lie transport. This can be done since the difference between projecting with respect to $\slashed{g}_{0}$ or $\slashed{g}_{\tau}$ is $O(\tau^2)$ according to \cite[Lemma~2.9]{Cwave}.

We prove the main low frequency regime estimates in Propositions~\ref{singular component low frequency proposition} and \ref{singular component low frequency proposition 2}. Our goal is to prove energy estimates on $\tau\in(0,2^{-k-1}]$ for the singular component $P_k\big(\nabla^M\Phi_0\big)_Y/\log(2^k\tau),$ renormalized to account for the contribution of $R_k\nabla^M\mathcal{O}.$ According to the above expansions, the asymptotic data will be given by $2P_k\nabla^M\mathcal{O}.$

It is convenient to consider the new time variable $t=2^k\tau$. When using $\nabla_t$ below as a multiplier, we can control the error terms resulting from time derivatives of the volume form and apply Gronwall on the interval $t\in(0,2^k]$ since for any tensor $F$ we have:
\[\frac{1}{2}\frac{d}{dt}\big\|F\big\|_{L^2}^2=\int_{S_{t}}F\cdot\nabla_{t}F+O\big(2^{-2k}t\|F\|_{L^2}^2\big).\]
 
We prove a preliminary low frequency regime estimate:
\begin{proposition}\label{singular low frequency preliminary proposition} For any $k\geq0$ and $\tau\leq2^{-k-1}$, we have that $\big(\nabla^M\Phi_0\big)_Y$ satisfies the estimate:
    \[\bigg\|P_k\nabla_{\tau}\frac{\big(\nabla^M\Phi_0\big)_Y}{\log2^k\tau}-R_k\nabla^M\mathcal{O}\partial_{\tau}\bigg(\frac{1}{\log2^k\tau}\bigg)\bigg\|^2_{L^2}+\bigg\|\nabla\bigg(P_k\frac{\big(\nabla^M\Phi_0\big)_Y}{\log2^k\tau}-\frac{R_k\nabla^M\mathcal{O}}{\log 2^k\tau}\bigg)\bigg\|^2_{L^2}\lesssim\big\|\nabla P_k\nabla^M\mathcal{O}\big\|^2_{L^2}+\]\[+\big\|\underline{P}_k\nabla^M\mathcal{O}\big\|^2_{H^1}+\int_0^{\tau}\frac{(\tau')^2}{2^{k}}\bigg\|\nabla[P_k,\nabla_4]\frac{\big(\nabla^M\Phi_0\big)_Y}{\log 2^k\tau'}\bigg\|_{L^2}^2+\int_0^{\tau}\frac{(\tau')^2}{2^{k}}\bigg\|[P_k,\nabla_4]\nabla_{\tau}\frac{\big(\nabla^M\Phi_0\big)_Y}{\log 2^k\tau'}\bigg\|_{L^2}^2+\int_0^{\tau}\bigg\|\frac{\big(\nabla^M\Phi_0\big)_Y}{\log2^k\tau'}\bigg\|_{L^2}^2\]
\end{proposition}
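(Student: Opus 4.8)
\textbf{Proof proposal for Proposition \ref{singular low frequency preliminary proposition}.}

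The plan is to run a standard energy estimate with a $\nabla_t$ multiplier on the equation satisfied by the renormalized quantity $\Theta_k := P_k\big(\nabla^M\Phi_0\big)_Y/\log(2^k\tau)$, working in the rescaled time variable $t = 2^k\tau$ on the interval $t \in (0, 2^k/2]$. First I would commute equation (\ref{equation for alpha Y}) (with $m=M$) with $P_k$ and then divide by $\log(2^k\tau)$; the commutator $[P_k, \nabla_4]$ and $[P_k, \Delta]$, together with the terms produced by differentiating $1/\log(2^k\tau)$ in time, produce an inhomogeneous right-hand side. Using (\ref{LP bounds 1}) to convert $[\nabla_t, P_k]$ into $\tfrac{t}{2^{2k-1}}[\nabla_4, P_k]$, and the finite-band property, the equation for $\Theta_k$ takes the schematic form $\nabla_t(\nabla_t \Theta_k) + \tfrac{1}{t}\nabla_t\Theta_k - 4\cdot 2^{-2k}\Delta \Theta_k = (\text{lower order in }t) + (\text{commutator terms})$. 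The key algebraic point, exactly as in \cite{linearwave}, is that the leading logarithmic singularity is already captured by the explicit profile: after subtracting $R_k\nabla^M\mathcal{O}/\log(2^k\tau)$ the remainder has a bounded expansion at $\tau=0$ with data governed by $2P_k\nabla^M\mathcal{O}$, so the ``renormalized'' unknown $\Theta_k - R_k\nabla^M\mathcal{O}/\log(2^k\tau)$ is the right object to estimate.

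Next I would contract the equation with $\nabla_t$ of the renormalized unknown and integrate over $S_t$, using the identity $\tfrac12\tfrac{d}{dt}\|F\|_{L^2}^2 = \int_{S_t} F\cdot \nabla_t F + O(2^{-2k}t\|F\|^2_{L^2})$ stated just before the proposition to control the volume-form variation. The curvature term $-4\cdot 2^{-2k}\Delta$ produces, after integration by parts, the gradient energy $2^{-2k}\|\nabla(\cdots)\|_{L^2}^2$, which upon multiplying by $2^{2k}$ gives the $\|\nabla(\Theta_k - R_k\nabla^M\mathcal{O}/\log(2^k\tau))\|_{L^2}^2$ on the left-hand side of the claimed estimate; the damping term $\tfrac1t\nabla_t$ has the favorable sign and can be discarded (or used). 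The $\psi\nabla(\nabla^M\Phi_0)_Y$ term on the RHS of (\ref{equation for alpha Y}) is handled by the smallness of $\psi^*$ from Theorem \ref{stability of de sitter theorem in section} together with a Gronwall argument in $t$. The terms coming from $R_k$ are estimated using Lemma \ref{Rk lemma}: $\|\Delta R_k\nabla^M\mathcal{O}\|_{L^2} \lesssim 2^k\|\underline{P}_k\nabla^M\mathcal{O}\|_{H^1}$ and the $\nabla_t R_k$, $\nabla\nabla_t R_k$ bounds (\ref{Rk 4})--(\ref{Rk 5}) provide exactly the $t$-weights needed for integrability on $(0,2^k/2]$, producing the $\|\underline{P}_k\nabla^M\mathcal{O}\|_{H^1}^2$ contribution on the RHS. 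The initial data at $\tau \to 0$ contributes $\|P_k\nabla^M\mathcal{O}\|_{L^2}^2 \lesssim 2^{-2k}\|\nabla P_k\nabla^M\mathcal{O}\|_{L^2}^2$ via Poincaré, or directly $\|\nabla P_k\nabla^M\mathcal{O}\|_{L^2}^2$, matching the first term on the RHS; the lower-order-in-$t$ terms (the $O(\tau^2|\log\tau|^2)$ remainders in the expansions and the genuinely quadratic error pieces) are absorbed into the last bulk term $\int_0^\tau \|(\nabla^M\Phi_0)_Y/\log(2^k\tau)\|_{L^2}^2$, which is allowed to appear on the RHS since it will later be closed by Gronwall at the next stage.

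The main obstacle I anticipate is the bookkeeping of the commutator terms $[P_k, \nabla_4]$ and $\nabla[P_k, \nabla_4]$ acting on $(\nabla^M\Phi_0)_Y/\log(2^k\tau)$. Because the geometric LP projections do not commute exactly with $\nabla_4$ and are time-dependent, these commutators cannot be bounded at the level of a single frequency by the solution at the same frequency — the refined bounds (\ref{LP refined 2}), (\ref{LP refined 4}) introduce the operator $\underline{\widetilde{P}}_k$, i.e. a different projection. At this preliminary stage I would therefore not try to close these terms, but simply carry them as the two explicit integral error terms $\int_0^\tau \tfrac{(\tau')^2}{2^k}\|\nabla[P_k,\nabla_4](\cdots)\|_{L^2}^2$ and $\int_0^\tau \tfrac{(\tau')^2}{2^k}\|[P_k,\nabla_4]\nabla_\tau(\cdots)\|_{L^2}^2$ appearing on the RHS of the proposition; the factor $\tfrac{t^2}{2^{2k}} = (\tau')^2$ (and the extra $2^{-k}$) comes precisely from the $\tfrac{t}{2^{2k-1}}$ weight in (\ref{LP bounds 1}) squared, together with the time integration. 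These leftover commutator integrals will only be absorbed later — after summing over $k$ and invoking Lemma \ref{fractional LP bounds} and the lower-order estimates of Section \ref{forward lower order estimates section} — which is why the present statement is only a ``preliminary'' estimate. A secondary technical point is ensuring the division by $\log(2^k\tau)$ is harmless: since $\tau \le 2^{-k-1}$ we have $|\log(2^k\tau)| \ge \log 2$, so $1/\log(2^k\tau)$ and its $t$-derivative $\partial_\tau(1/\log(2^k\tau)) = -\tfrac{1}{2^k\tau(\log 2^k\tau)^2}$ are bounded in the appropriate weighted sense, and this is exactly the weight attached to $R_k\nabla^M\mathcal{O}$ in the first term on the left-hand side.
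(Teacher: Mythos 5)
Your proposal follows essentially the same route as the paper: rescale to $t=2^k\tau$, divide by $\log t$, project with $P_k$, subtract the $R_k\nabla^M\mathcal{O}$ profile, run a $\nabla_t$-type energy estimate, control the $R_k$ terms via Lemma~\ref{Rk lemma}, and carry the $[P_k,\nabla_4]$ commutator integrals to the right-hand side. Two small details to tighten: (i) the paper contracts with the quantity $X = P_k\nabla_t\big(\tfrac{(\nabla^M\Phi_0)_Y}{\log t}\big) - R_k\nabla^M\mathcal{O}\,\partial_t\big(\tfrac{1}{\log t}\big)$, not with $\nabla_t Y$; these differ by $[\nabla_t,P_k]\tfrac{(\nabla^M\Phi_0)_Y}{\log t}$ and $\tfrac{\nabla_t R_k\nabla^M\mathcal{O}}{\log t}$, and the specific choice of $X$ is what makes the left-hand side of the stated estimate appear directly from the energy identity; and (ii) the damping coefficient $\tfrac1t\big(1+\tfrac{2}{\log t}\big)$ is only favorable for $t\le 1/10$, so on $t\in[1/10,1/2]$ one must treat it as an integrable error and absorb it with Gronwall, rather than ``discarding'' it.
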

\begin{proof}
    In terms of the time variable $t=2^k\tau\leq1/2$, the main equation is:
    \[\nabla_{t}\bigg(\nabla_t\frac{\big(\nabla^M\Phi_0\big)_Y}{\log t}\bigg)+\frac{1}{t}\bigg(1+\frac{2}{\log t}\bigg)\nabla_{t}\frac{\big(\nabla^M\Phi_0\big)_Y}{\log t}-\frac{4}{2^{2k}}\Delta\frac{\big(\nabla^M\Phi_0\big)_Y}{\log t}=\frac{1}{2^{2k}}\psi\nabla\frac{\big(\nabla^M\Phi_0\big)_Y}{\log t}\]
    We apply $P_k$ to the equation:
    \[\nabla_{t}\bigg(P_k\nabla_t\frac{\big(\nabla^M\Phi_0\big)_Y}{\log t}\bigg)+\frac{1}{t}\bigg(1+\frac{2}{\log t}\bigg)P_k\nabla_{t}\frac{\big(\nabla^M\Phi_0\big)_Y}{\log t}-\frac{4}{2^{2k}}\Delta P_k\frac{\big(\nabla^M\Phi_0\big)_Y}{\log t}=\frac{1}{2^{2k}}\psi\nabla P_k\frac{\big(\nabla^M\Phi_0\big)_Y}{\log t}+\]\[+[\nabla_t,P_k]\nabla_t\frac{\big(\nabla^M\Phi_0\big)_Y}{\log t}+\frac{1}{2^{2k}}[P_k,\nabla]\frac{\psi\big(\nabla^M\Phi_0\big)_Y}{\log t}-\frac{1}{2^{2k}}P_k\bigg(\nabla\psi\cdot\frac{\big(\nabla^M\Phi_0\big)_Y}{\log t}\bigg)+\frac{1}{2^{2k}}[\nabla P_k,\psi]\frac{\big(\nabla^M\Phi_0\big)_Y}{\log t}\]
    We introduce the notation:
    \[X=P_k\nabla_t\frac{\big(\nabla^M\Phi_0\big)_Y}{\log t}-R_k\nabla^M\mathcal{O}\partial_t\bigg(\frac{1}{\log t}\bigg),\ Y=P_k\frac{\big(\nabla^M\Phi_0\big)_Y}{\log t}-\frac{R_k\nabla^M\mathcal{O}}{\log t},\]
    \[Z=2^{-2k}\frac{4\Delta R_k\nabla^M\mathcal{O}+\psi\nabla R_k\nabla^M\mathcal{O}}{\log t}+\frac{\nabla_{t}\big(R_k\nabla^M\mathcal{O}\big)}{t|\log t|^2},\ X=\nabla_tY+[P_k,\nabla_t]\frac{\big(\nabla^M\Phi_0\big)_Y}{\log t}+\frac{\nabla_tR_k\nabla^M\mathcal{O}}{\log t}.\]
    Using this, we rewrite the equation as:
    \[\nabla_{t}X+\frac{1}{t}\bigg(1+\frac{2}{\log t}\bigg)X-\frac{4}{2^{2k}}\Delta Y=\frac{1}{2^{2k}}\psi\nabla Y+[\nabla_t,P_k]\nabla_t\frac{\big(\nabla^M\Phi_0\big)_Y}{\log t}+\]\[+\frac{1}{2^{2k}}[P_k,\nabla]\frac{\psi\big(\nabla^M\Phi_0\big)_Y}{\log t}+Z-\frac{1}{2^{2k}}P_k\bigg(\nabla\psi\cdot\frac{\big(\nabla^M\Phi_0\big)_Y}{\log t}\bigg)+\frac{1}{2^{2k}}[\nabla P_k,\psi]\frac{\big(\nabla^M\Phi_0\big)_Y}{\log t}\]
    We obtain the energy estimate by contracting with $X$ and integrating by parts:
    \[\big\|X\big\|^2_{L^2}+\frac{1}{2^{2k}}\big\|\nabla Y\big\|^2_{L^2}\lesssim\frac{1}{2^{2k}}\big\|\nabla P_k\nabla^M\mathcal{O}\big\|^2_{L^2}+\int_0^t\frac{\textbf{1}_{[1/10,1/2]}}{t'|\log t'|}\big\|X\big\|^2_{L^2}+\int_0^t\frac{1}{2^{2k}}\big\|\nabla Y\big\|_{L^2}\bigg\|\nabla[P_k,\nabla_t]\frac{\big(\nabla^M\Phi_0\big)_Y}{\log t'}\bigg\|_{L^2}\]\[+\int_0^t\frac{1}{2^{2k}}\big\|\nabla Y\big\|_{L^2}\big\|[\nabla,\nabla_t] Y\big\|_{L^2}+\int_0^t\frac{1}{2^{2k}}\big\|\nabla Y\big\|_{L^2}\bigg\|\frac{\nabla\nabla_tR_k\nabla^M\mathcal{O}}{\log t'}\bigg\|_{L^2}+\int_0^t\frac{1}{2^{2k}}\big\|X\big\|_{L^2}\big\|\nabla Y\big\|_{L^2}+\]\[+\int_0^t\big\|X\big\|_{L^2}\bigg\|[\nabla_t,P_k]\nabla_t\frac{\big(\nabla^M\Phi_0\big)_Y}{\log t'}\bigg\|_{L^2}+\int_0^t\frac{1}{2^{2k}}\big\|X\big\|_{L^2}\bigg\|[P_k,\nabla]\frac{\psi\big(\nabla^M\Phi_0\big)_Y}{\log t'}\bigg\|_{L^2}+\]\[+\int_0^t\frac{1}{2^{2k}}\big\|X\big\|_{L^2}\bigg\|P_k\bigg(\nabla\psi\cdot\frac{\big(\nabla^M\Phi_0\big)_Y}{\log t'}\bigg)\bigg\|_{L^2}+\int_0^t\frac{1}{2^{2k}}\big\|X\big\|_{L^2}\bigg\|[\nabla P_k,\psi]\frac{\big(\nabla^M\Phi_0\big)_Y}{\log t'}\bigg\|_{L^2}+\int_0^t\big\|X\big\|_{L^2}\big\|Z\big\|_{L^2}\]
    We point out that the error term $\frac{1}{t'|\log t'|}\big\|X\big\|^2_{L^2}\cdot\textbf{1}_{[1/10,1/2]}$ appears on the RHS because for $t\in[0,1/10]$ we have $1+2/\log t\gtrsim1.$ We use Gronwall for $t\in\big[0,1/2\big],$ and the bounds in Lemma \ref{LP Projections lemma}:
    \[\big\|X\big\|^2_{L^2}+\frac{1}{2^{2k}}\big\|\nabla Y\big\|^2_{L^2}\lesssim\frac{1}{2^{2k}}\big\|\nabla P_k\nabla^M\mathcal{O}\big\|^2_{L^2}+\int_0^t\frac{(t')^2}{2^{6k}}\big\|[\nabla,\nabla_4] Y\big\|_{L^2}^2+\int_0^t\frac{(t')^2}{2^{6k}}\bigg\|\nabla[P_k,\nabla_4] \frac{\big(\nabla^M\Phi_0\big)_Y}{\log t'}\bigg\|_{L^2}^2\]\[+\int_0^t\frac{(t')^2}{2^{4k}}\bigg\|[\nabla_4,P_k]\nabla_t\frac{\big(\nabla^M\Phi_0\big)_Y}{\log t'}\bigg\|_{L^2}^2+\int_0^t\frac{1}{2^{3k}}\bigg\|\frac{\big(\nabla^M\Phi_0\big)_Y}{\log t'}\bigg\|_{L^2}^2+\int_0^t\frac{1}{2^{2k}}\bigg\|\frac{\nabla\nabla_tR_k\nabla^M\mathcal{O}}{\log t'}\bigg\|_{L^2}^2+\int_0^t\big\|Z\big\|_{L^2}^2\]
    We get from Lemma \ref{LP Projections lemma} and Lemma \ref{Rk lemma} that for $\underline{P}_k^2=P_k$:
    \[\big\|X\big\|^2_{L^2}+\frac{1}{2^{2k}}\big\|\nabla Y\big\|^2_{L^2}\lesssim\frac{1}{2^{2k}}\big\|\nabla P_k\nabla^M\mathcal{O}\big\|^2_{L^2}+\frac{1}{2^{2k}}\big\|\underline{P}_k\nabla^M\mathcal{O}\big\|^2_{H^1}+\int_0^t\frac{1}{2^{3k}}\bigg\|\frac{\big(\nabla^M\Phi_0\big)_Y}{\log t}\bigg\|_{L^2}^2\]\[+\int_0^t\frac{(t')^2}{2^{6k}}\bigg\|\nabla[P_k,\nabla_4] \frac{\big(\nabla^M\Phi_0\big)_Y}{\log t}\bigg\|_{L^2}^2+\int_0^t\frac{(t')^2}{2^{4k}}\bigg\|[\nabla_4,P_k]\nabla_t\frac{\big(\nabla^M\Phi_0\big)_Y}{\log t}\bigg\|_{L^2}^2\]
    Changing coordinates back to $\tau$ we obtain the desired estimate.
\end{proof}

The right hand side of the estimate in Proposition \ref{singular low frequency preliminary proposition} cannot be bounded using the top order terms on the left hand side. However, as explained in Section~\ref{scattering map section intro} of the introduction, we can bound these terms using the lower order estimates (including the fractional estimates) and the commutator estimate stated in Section~\ref{forward lower order estimates section}. We obtain our main low frequency regime estimates in Propositions~\ref{singular component low frequency proposition} and \ref{singular component low frequency proposition 2}:

\begin{proposition}\label{singular component low frequency proposition}
    For any $k\geq0$ and $\tau\leq2^{-k-1}$, we have that $\big(\nabla^M\Phi_0\big)_Y$ satisfies the estimate:
    \[\bigg\|P_k\nabla_{\tau}\frac{\big(\nabla^M\Phi_0\big)_Y}{\log2^k\tau}-R_k\nabla^M\mathcal{O}\partial_{\tau}\bigg(\frac{1}{\log2^k\tau}\bigg)\bigg\|^2_{L^2}+\bigg\|\nabla\bigg(P_k\frac{\big(\nabla^M\Phi_0\big)_Y}{\log2^k\tau}-\frac{R_k\nabla^M\mathcal{O}}{\log 2^k\tau}\bigg)\bigg\|^2_{L^2}\lesssim\]\[\lesssim\big\|\nabla P_k\nabla^M\mathcal{O}\big\|^2_{L^2}+\big\|\underline{P}_k\nabla^M\mathcal{O}\big\|^2_{H^1}+2^{-k/2}\big\|\mathcal{O}\big\|^2_{H^{M+1/2+\eta}}+2^{-2k}\int_0^{\tau}\big\|\tau'\nabla_{\tau}\big(\nabla^M\Phi_0\big)_Y\big\|_{L^2}^2.\]
\end{proposition}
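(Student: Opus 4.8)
The starting point is Proposition~\ref{singular low frequency preliminary proposition}, which reduces the claim to controlling the three error integrals on its right-hand side: the two commutator integrals weighted by $\tfrac{(\tau')^2}{2^k}$, namely those involving $\nabla[P_k,\nabla_4]\tfrac{(\nabla^M\Phi_0)_Y}{\log 2^k\tau}$ and $[P_k,\nabla_4]\nabla_\tau\tfrac{(\nabla^M\Phi_0)_Y}{\log 2^k\tau}$, together with the unweighted integral of $\big\|\tfrac{(\nabla^M\Phi_0)_Y}{\log 2^k\tau}\big\|_{L^2}^2$; the remaining two terms on the right-hand side of Proposition~\ref{singular low frequency preliminary proposition} already appear in the statement to be proved. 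The three ingredients that close these estimates are: (i) the restriction $\tau'\le 2^{-k-1}$, which makes $|\log 2^k\tau'|$ bounded below by $\log 2$ so that the $\tfrac{1}{\log 2^k\tau}$ factors are harmless, and which turns the $\tfrac{(\tau')^2}{2^k}$ weight into a gain of at least $2^{-3k}$ after integrating in $\tau'$; (ii) the commutator decomposition $(\nabla^M\Phi_0)_Y=\nabla(\nabla^{M-1}\Phi_0)_Y+\mathcal{C}$, which trades the top order $M$ (uncontrolled at the fractional level) for order $M-1$ plus the commutator term $\mathcal{C}$, to be combined with the lower order and fractional estimates of Section~\ref{forward lower order estimates section} and with the two estimates there for $\mathcal{C}$; and (iii) the fractional Littlewood--Paley bound of Lemma~\ref{fractional LP bounds} together with the refined bounds of Lemma~\ref{LP Projections lemma refined}, which trade half a derivative for a factor $2^{k/2}$.

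For the first commutator integral the plan is to apply Lemma~\ref{fractional LP bounds} with $F=\tfrac{(\nabla^M\Phi_0)_Y}{\log 2^k\tau}$, so that $\big\|\nabla[P_k,\nabla_4]F\big\|_{L^2}\lesssim 2^{k/2}\|F\|_{H^{1/2}}$ and the $2^k$ is absorbed by the weight, leaving $\int_0^\tau(\tau')^2\|F\|_{H^{1/2}}^2\,d\tau'$. Using (ii) and $|\log 2^k\tau'|\gtrsim 1$ one bounds $\|(\nabla^M\Phi_0)_Y\|_{H^{1/2}}^2\lesssim\|\nabla(\nabla^{M-1}\Phi_0)_Y\|_{H^{1/2}}^2+\|\mathcal{C}\|_{H^1}^2$, where the first term is $\lesssim(1+|\log\tau'|^2)\|\mathcal{O}\|_{H^{M+1/2+\eta}}^2$ by Section~\ref{forward lower order estimates section}, and $\|\mathcal{C}\|_{H^1}^2\lesssim\|\mathcal{O}\|_{H^{M+\eta}}^2+\int_0^{\tau'}\|\tau''\nabla_\tau(\nabla^M\Phi_0)_Y\|_{L^2}^2\,d\tau''$; integrating in $\tau'$ over $(0,\tau]$ with $\tau\le 2^{-k-1}$ produces precisely a multiple of $2^{-k/2}\|\mathcal{O}\|_{H^{M+1/2+\eta}}^2+2^{-2k}\int_0^\tau\|\tau'\nabla_\tau(\nabla^M\Phi_0)_Y\|_{L^2}^2$. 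The unweighted third integral is treated in the same spirit, now using the refined $\mathcal{C}$-estimate of Section~\ref{forward lower order estimates section} valid for $\tau'\le 2^{-k-1}$, which carries the extra factor $2^{-k}$ in front of the time bulk integral, so that the absence of the $\tfrac{(\tau')^2}{2^k}$ weight is compensated by the gain $\tau\le 2^{-k-1}$ coming from integrating $\big\|\tfrac{(\nabla^M\Phi_0)_Y}{\log 2^k\tau}\big\|_{L^2}^2\lesssim(1+|\log\tau'|^2)\|\mathcal{O}\|_{H^{M+\eta}}^2+2^{-k}\int_0^{\tau'}\|\tau''\nabla_\tau(\nabla^M\Phi_0)_Y\|_{L^2}^2$.

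The main obstacle is the second commutator integral, because $\nabla_\tau(\nabla^M\Phi_0)_Y$ is genuinely of top order and is singular like $\tfrac{2\nabla^M\mathcal{O}}{\tau}$ at $\tau=0$, so neither the crude bound $\|[\nabla_4,P_k]G\|_{L^2}\lesssim\|G\|_{L^2}$ nor a naive application of Lemma~\ref{fractional LP bounds} suffices. The plan is first to write $\nabla_\tau\tfrac{(\nabla^M\Phi_0)_Y}{\log 2^k\tau}=\tfrac{1}{\log 2^k\tau}\nabla_\tau(\nabla^M\Phi_0)_Y-\tfrac{1}{\tau(\log 2^k\tau)^2}(\nabla^M\Phi_0)_Y$ and to use that scalar functions of $\tau$ commute with $[\nabla_4,P_k]$, so both prefactors pull out and, on $\tau\le 2^{-k-1}$, are dominated by $1$ and $\tfrac{1}{\tau}$ respectively, up to harmless powers of $k$ from $|\log\tau'|/|\log 2^k\tau'|$ that are defeated by the eventual $2^{-k}$-type gains. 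For the first piece one expands $\nabla_\tau(\nabla^M\Phi_0)_Y=\nabla\nabla_\tau(\nabla^{M-1}\Phi_0)_Y+[\nabla_\tau,\nabla](\nabla^{M-1}\Phi_0)_Y+\nabla_\tau\mathcal{C}$; the leading term is of the form $[\nabla_4,P_k]\nabla F$ with $F=\nabla_\tau(\nabla^{M-1}\Phi_0)_Y$, so Lemma~\ref{fractional LP bounds} gives a factor $2^{k/2}\|\nabla_\tau(\nabla^{M-1}\Phi_0)_Y\|_{H^{1/2}}$, and the fractional lower order estimate $\|\nabla_\tau(\nabla^{M-1}\Phi_0)_Y\|_{H^{1/2}}^2\lesssim\tfrac{1}{\tau^2}\|\mathcal{O}\|_{H^{M+1/2+\eta}}^2$ of Section~\ref{forward lower order estimates section} cancels the $\tfrac1{\tau^2}$ against $(\tau')^2$; the commutator $[\nabla_\tau,\nabla](\nabla^{M-1}\Phi_0)_Y$ is lower order and carries an extra factor of $\tau$ (schematically $\tau\psi\nabla(\nabla^{M-1}\Phi_0)_Y+\dots$), while $\nabla_\tau\mathcal{C}$ is controlled by the second $\mathcal{C}$-estimate of Section~\ref{forward lower order estimates section}, feeding again into the $2^{-2k}\int_0^\tau\|\tau'\nabla_\tau(\nabla^M\Phi_0)_Y\|_{L^2}^2$ term. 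For the second piece, $\tfrac{1}{\tau(\log 2^k\tau)^2}[\nabla_4,P_k](\nabla^M\Phi_0)_Y$, the same decomposition $(\nabla^M\Phi_0)_Y=\nabla(\nabla^{M-1}\Phi_0)_Y+\mathcal{C}$ together with Lemma~\ref{fractional LP bounds} reduces matters to $\|(\nabla^{M-1}\Phi_0)_Y\|_{H^1}^2\lesssim(1+|\log\tau'|^2)\|\mathcal{O}\|_{H^{M+\eta}}^2$ and the crude bound on $\mathcal{C}$, after which the weight $\tfrac{(\tau')^2}{2^k}$ and $\tau\le 2^{-k-1}$ again deliver the stated right-hand side. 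Changing variables back from $t=2^k\tau$ to $\tau$ completes the argument; the only genuinely delicate bookkeeping is ensuring that every logarithmic factor produced in the second integral is beaten by the $2^{-k}$-power gains, which is exactly where the dyadic cutoff $\tau\le 2^{-k-1}$ is indispensable.
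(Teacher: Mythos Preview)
Your proposal is correct and follows essentially the same route as the paper: start from Proposition~\ref{singular low frequency preliminary proposition}, decompose $(\nabla^M\Phi_0)_Y=\nabla(\nabla^{M-1}\Phi_0)_Y+\mathcal{C}$ inside each of the three error integrals, apply Lemma~\ref{fractional LP bounds} to the $(\nabla^{M-1}\Phi_0)_Y$ pieces and the cruder bounds of Lemma~\ref{LP Projections lemma} to the $\mathcal{C}$ pieces, and close via the fractional lower order estimates and the two $\mathcal{C}$-estimates from Section~\ref{forward lower order estimates section}. The only cosmetic difference is that the paper splits \emph{before} invoking the commutator lemmas (so the $\mathcal{C}$ contributions retain an extra $2^{-k}$), whereas you apply Lemma~\ref{fractional LP bounds} first and split afterward; both orderings land inside the stated right-hand side.
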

\begin{proof}
    We recall the notation $\mathcal{C}=\big(\nabla^M\Phi_0\big)_Y-\nabla\big(\nabla^{M-1}\Phi_0\big)_Y$. We can rewrite the estimate in Proposition \ref{singular low frequency preliminary proposition} as:
    \[\bigg\|P_k\nabla_{\tau}\frac{\big(\nabla^M\Phi_0\big)_Y}{\log2^k\tau}-R_k\nabla^M\mathcal{O}\partial_{\tau}\bigg(\frac{1}{\log2^k\tau}\bigg)\bigg\|^2_{L^2}+\bigg\|\nabla\bigg(P_k\frac{\big(\nabla^M\Phi_0\big)_Y}{\log2^k\tau}-\frac{R_k\nabla^M\mathcal{O}}{\log 2^k\tau}\bigg)\bigg\|^2_{L^2}\lesssim\big\|\nabla P_k\nabla^M\mathcal{O}\big\|^2_{L^2}+\]\[+\big\|\underline{P}_k\nabla^M\mathcal{O}\big\|^2_{H^1}+\int_0^{\tau}\frac{(\tau')^2}{2^{k}}\bigg\|\nabla[P_k,\nabla_4]\frac{\mathcal{C}}{\log 2^k\tau'}\bigg\|_{L^2}^2+\int_0^{\tau}\frac{(\tau')^2}{2^{k}}\bigg\|[P_k,\nabla_4]\nabla_{\tau}\frac{\mathcal{C}}{\log 2^k\tau'}\bigg\|_{L^2}^2+\int_0^{\tau}\bigg\|\frac{\mathcal{C}}{\log2^k\tau'}\bigg\|_{L^2}^2+\]
    \[+\int_0^{\tau}\frac{(\tau')^2}{2^{k}}\bigg\|[P_k,\nabla_4]\frac{\nabla\big(\nabla^{M-1}\Phi_0\big)_Y}{\log 2^k\tau'}\bigg\|_{H^1}^2+\int_0^{\tau}\frac{(\tau')^2}{2^{k}}\bigg\|[P_k,\nabla_4]\nabla\frac{\nabla_{\tau}\big(\nabla^{M-1}\Phi_0\big)_Y}{\log 2^k\tau'}\bigg\|_{L^2}^2+\]\[+\int_0^{\tau}\frac{(\tau')^2}{2^{k}}\bigg\|[P_k,\nabla_4]\frac{[\nabla,\nabla_{\tau}]\big(\nabla^{M-1}\Phi_0\big)_Y}{\log 2^k\tau'}\bigg\|_{L^2}^2+\int_0^{\tau}\bigg\|\frac{\nabla\big(\nabla^{M-1}\Phi_0\big)_Y}{\log2^k\tau'}\bigg\|_{L^2}^2\]
    We use the estimates in Lemma \ref{LP Projections lemma} for the terms with $\mathcal{C}$ and Lemma \ref{fractional LP bounds} for the terms with $\big(\nabla^{M-1}\Phi_0\big)_Y$:
    \[\bigg\|P_k\nabla_{\tau}\frac{\big(\nabla^{M}\Phi_0\big)_Y}{\log2^k\tau}-R_k\nabla^M\mathcal{O}\partial_{\tau}\bigg(\frac{1}{\log2^k\tau}\bigg)\bigg\|^2_{L^2}+\bigg\|\nabla\bigg(P_k\frac{\big(\nabla^{M}\Phi_0\big)_Y}{\log2^k\tau}-\frac{R_k\nabla^M\mathcal{O}}{\log 2^k\tau}\bigg)\bigg\|^2_{L^2}\lesssim\]\[\lesssim\big\|\nabla P_k\nabla^M\mathcal{O}\big\|^2_{L^2}+\big\|\underline{P}_k\nabla^M\mathcal{O}\big\|^2_{H^1}+\int_0^{\tau}(\tau')^2\bigg\|\frac{\nabla\big(\nabla^{M-1}\Phi_0\big)_Y}{\log 2^k\tau'}\bigg\|_{H^{1/2}}^2+\int_0^{\tau}(\tau')^2\bigg\|\frac{\nabla_{\tau}\big(\nabla^{M-1}\Phi_0\big)_Y}{\log 2^k\tau'}\bigg\|_{H^{1/2}}^2\]\[+\int_0^{\tau}\bigg\|\frac{\big(\nabla^{M-1}\Phi_0\big)_Y}{\log2^k\tau'}\bigg\|_{H^1}^2+\int_0^{\tau}\frac{(\tau')^2}{2^{k}}\bigg\|\nabla\frac{\mathcal{C}}{\log 2^k\tau'}\bigg\|_{L^2}^2+\int_0^{\tau}\frac{(\tau')^2}{2^{k}}\bigg\|\frac{\nabla_{\tau}\mathcal{C}}{\log 2^k\tau'}\bigg\|_{L^2}^2+\int_0^{\tau}\bigg\|\frac{\mathcal{C}}{\log2^k\tau'}\bigg\|_{L^2}^2\]
    We use the lower order estimates to get:
    \[\bigg\|P_k\nabla_{\tau}\frac{\big(\nabla^{M}\Phi_0\big)_Y}{\log2^k\tau}-R_k\nabla^M\mathcal{O}\partial_{\tau}\bigg(\frac{1}{\log2^k\tau}\bigg)\bigg\|^2_{L^2}+\bigg\|\nabla\bigg(P_k\frac{\big(\nabla^{M}\Phi_0\big)_Y}{\log2^k\tau}-\frac{R_k\nabla^M\mathcal{O}}{\log 2^k\tau}\bigg)\bigg\|^2_{L^2}\lesssim\big\|\nabla P_k\nabla^M\mathcal{O}\big\|^2_{L^2}+\]\[+\big\|\underline{P}_k\nabla^M\mathcal{O}\big\|^2_{H^1}+2^{-k/2}\big\|\mathcal{O}\big\|^2_{H^{M+1/2+\eta}}+\int_0^{\tau}\frac{(\tau')^2}{2^{k}}\bigg\|\nabla\frac{\mathcal{C}}{\log 2^k\tau'}\bigg\|_{L^2}^2+\int_0^{\tau}\frac{(\tau')^2}{2^{k}}\bigg\|\frac{\nabla_{\tau}\mathcal{C}}{\log 2^k\tau'}\bigg\|_{L^2}^2+\int_0^{\tau}\bigg\|\frac{\mathcal{C}}{\log2^k\tau'}\bigg\|_{L^2}^2\]    
    Finally, we use the commutator estimates to conclude:
    \[\bigg\|P_k\nabla_{\tau}\frac{\big(\nabla^{M}\Phi_0\big)_Y}{\log2^k\tau}-R_k\nabla^M\mathcal{O}\partial_{\tau}\bigg(\frac{1}{\log2^k\tau}\bigg)\bigg\|^2_{L^2}+\bigg\|\nabla\bigg(P_k\frac{\big(\nabla^{M}\Phi_0\big)_Y}{\log2^k\tau}-\frac{R_k\nabla^M\mathcal{O}}{\log 2^k\tau}\bigg)\bigg\|^2_{L^2}\lesssim\]\[\lesssim\big\|\nabla P_k\nabla^M\mathcal{O}\big\|^2_{L^2}+\big\|\underline{P}_k\nabla^M\mathcal{O}\big\|^2_{H^1}+2^{-k/2}\big\|\mathcal{O}\big\|^2_{H^{M+1/2+\eta}}+2^{-2k}\int_0^{\tau}\big\|\tau'\nabla_{\tau}\big(\nabla^{M}\Phi_0\big)_Y\big\|_{L^2}^2.\]
\end{proof}

\begin{proposition}\label{singular component low frequency proposition 2} For any $k\geq0$ and $\tau\leq2^{-k-1}$, we have that $\big(\nabla^M\Phi_0\big)_Y$ satisfies the estimate:
    \[2^{2k}\bigg\|P_k\frac{\big(\nabla^{M}\Phi_0\big)_Y}{\log2^k\tau}-\frac{R_k\nabla^M\mathcal{O}}{\log 2^k\tau}\bigg\|^2_{L^2}\lesssim\big\|\nabla P_k\nabla^M\mathcal{O}\big\|^2_{L^2}+\big\|\underline{P}_k\nabla^M\mathcal{O}\big\|^2_{H^1}+\]\[+2^{-k/2}\big\|\mathcal{O}\big\|^2_{H^{M+1/2+\eta}}+2^{-2k}\int_0^{\tau}\big\|\tau'\nabla_{\tau}\big(\nabla^{M}\Phi_{0}\big)_Y\big\|_{L^2}^2.\]
\end{proposition}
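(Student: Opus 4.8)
The plan is to estimate $Y := P_k\frac{(\nabla^M\Phi_0)_Y}{\log 2^k\tau} - \frac{R_k\nabla^M\mathcal{O}}{\log 2^k\tau}$ — the very quantity whose gradient was controlled in Proposition \ref{singular component low frequency proposition} — by integrating in $\tau$ from the initial slice $\{\tau=0\}$, rather than by a Poincar\'e-type inequality (which cannot reach the low-frequency content of $Y$). From the expansion of $P_k(\nabla^M\Phi_0)_Y$ recorded at the start of Section \ref{forward low freq estimates section}, dividing by $\log 2^k\tau$ and subtracting $\frac{R_k\nabla^M\mathcal{O}}{\log 2^k\tau}$ shows that $Y$ extends continuously to $\tau=0$ with $Y|_{\tau=0}=2P_k\nabla^M\mathcal{O}$ and remainder $O(\tau^2|\log\tau|^2/|\log 2^k\tau|)$. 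Crucially, the contribution of $Y|_{\tau=0}$ is admissible: applying the finite band property $\|P_kF\|_{L^2}\lesssim 2^{-k}\|\nabla F\|_{L^2}$ with the projector $\underline P_k$ and $F=\underline P_k\nabla^M\mathcal{O}$ gives $2^{2k}\|P_k\nabla^M\mathcal{O}\|_{L^2}^2\lesssim\|\nabla\underline P_k\nabla^M\mathcal{O}\|_{L^2}^2\le\|\underline P_k\nabla^M\mathcal{O}\|_{H^1}^2$, which sits on the right-hand side of the proposition.

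Next I would establish the identity $\nabla_\tau Y = X + [\nabla_\tau,P_k]\tfrac{(\nabla^M\Phi_0)_Y}{\log 2^k\tau} - \tfrac{\nabla_\tau R_k\nabla^M\mathcal{O}}{\log 2^k\tau}$, where $X = P_k\nabla_\tau\tfrac{(\nabla^M\Phi_0)_Y}{\log 2^k\tau} - R_k\nabla^M\mathcal{O}\,\partial_\tau\!\big(\tfrac1{\log 2^k\tau}\big)$ is exactly the quantity bounded in Proposition \ref{singular component low frequency proposition}; this follows from the Leibniz rule, using that $P_k$ commutes past multiplication by the scalar $\tfrac1{\log 2^k\tau}$. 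The two remaining terms are genuine errors carrying a power of $\tau$: by \eqref{LP bounds 1} together with \eqref{LP bounds 4} one has $\|[\nabla_\tau,P_k]G\|_{L^2}\lesssim 2^{-k}\tau\|G\|_{L^2}$, and by \eqref{Rk 4} one has $\|\nabla_\tau R_k\nabla^M\mathcal{O}\|_{L^2}\lesssim 2^{-2k}\tau\|\underline P_k\nabla^M\mathcal{O}\|_{H^1}$. For the first of these I would bound $\|\tfrac{(\nabla^M\Phi_0)_Y}{\log 2^k\tau}\|_{L^2}$ by writing $(\nabla^M\Phi_0)_Y = \nabla(\nabla^{M-1}\Phi_0)_Y + \mathcal{C}$ and invoking the lower-order and commutator estimates of Section \ref{forward lower order estimates section}, which yields $\|\tfrac{(\nabla^M\Phi_0)_Y}{\log 2^k\tau}\|_{L^2}^2\lesssim (k+1)^2\|\mathcal{O}\|_{H^{M+\eta}}^2 + 2^{-k}\int_0^\tau\|\tau'\nabla_{\tau'}(\nabla^M\Phi_0)_Y\|_{L^2}^2\,d\tau'$; in particular $\nabla_\tau Y\in L^1_\tau L^2$ near $\tau=0$.

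With the identity in hand I would write, in the Lie-propagated frame (so that the discrepancy between $\nabla_\tau$ and $\partial_\tau$ on components is the lower-order $\tau\chi$-term of \eqref{formula for lie time derivative}, absorbed using the stability bound $\|\chi\|_{L^\infty}\lesssim 1$), $Y(\tau)=2P_k\nabla^M\mathcal{O}+\int_0^\tau\nabla_{\tau'}Y\,d\tau'$, square the triangle inequality, and multiply by $2^{2k}$. Denoting by $\mathcal{Q}(\tau)$ the right-hand side of Proposition \ref{singular component low frequency proposition} (which coincides with that of the present proposition and is monotone in $\tau$), the bound $\|X(\tau')\|_{L^2}^2\lesssim\mathcal{Q}(\tau')\le\mathcal{Q}(\tau)$ gives $2^{2k}\big(\int_0^\tau\|X\|_{L^2}\big)^2\lesssim 2^{2k}\tau^2\,\mathcal{Q}(\tau)\lesssim\mathcal{Q}(\tau)$, since $2^k\tau\le\tfrac12$. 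The commutator error contributes $2^{2k}\big(\int_0^\tau 2^{-k}\tau'\|\tfrac{(\nabla^M\Phi_0)_Y}{\log 2^k\tau'}\|\big)^2\lesssim\tau^4\big[(k+1)^2\|\mathcal{O}\|_{H^{M+\eta}}^2 + 2^{-k}\int_0^\tau\|\tau'\nabla_{\tau'}(\nabla^M\Phi_0)_Y\|_{L^2}^2\big]$, and $\tau\le 2^{-k-1}$ turns this into $\lesssim 2^{-k/2}\|\mathcal{O}\|_{H^{M+1/2+\eta}}^2 + 2^{-2k}\int_0^\tau\|\tau'\nabla_{\tau'}(\nabla^M\Phi_0)_Y\|_{L^2}^2$; similarly the $\nabla_\tau R_k$ error contributes $\lesssim 2^{-6k}\|\underline P_k\nabla^M\mathcal{O}\|_{H^1}^2$. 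Combining these with the first-paragraph bound on $2^{2k}\|2P_k\nabla^M\mathcal{O}\|_{L^2}^2$ yields the claim.

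I expect the main obstacle to be frequency bookkeeping rather than anything conceptual: the weight $2^{2k}$ on the left is one power of frequency above what a naive energy argument controls, so one must simultaneously (i) extract the $\tau\le 2^{-k-1}$ gain from the quadratic-in-time integrands so the $\tau$-powers beat $2^{2k}$, (ii) route the initial value $2P_k\nabla^M\mathcal{O}$ into the $\|\underline P_k\nabla^M\mathcal{O}\|_{H^1}$ term via the finite band property without losing a frequency, and (iii) keep $[\nabla_\tau,P_k]$ and $\nabla_\tau R_k\nabla^M\mathcal{O}$ at the size dictated by \eqref{LP bounds 1} and \eqref{Rk 4}, so they come with the crucial factor of $\tau$. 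The step that most deserves care is verifying that the $O(\tau^2|\log\tau|^2)$ remainder in the expansion of $P_k(\nabla^M\Phi_0)_Y$ is genuinely quantified by $\mathcal{Q}(\tau)$ — which is precisely what the detour through $X$ and Proposition \ref{singular component low frequency proposition} achieves.
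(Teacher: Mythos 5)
Your proposal is correct and follows essentially the same route as the paper, which phrases the fundamental-theorem-of-calculus step as a Young-plus-Gronwall energy estimate for $Y$, expands $\nabla_\tau Y = X + [\nabla_\tau,P_k]\tfrac{(\nabla^M\Phi_0)_Y}{\log 2^k\tau} - \tfrac{\nabla_\tau R_k\nabla^M\mathcal{O}}{\log 2^k\tau}$, and appeals to Proposition \ref{singular component low frequency proposition}, the finite band property, and the lower-order and commutator estimates exactly as you do. One recurring slip worth noting: since $\nabla_\tau = 2^k\nabla_t$, combining \eqref{LP bounds 1} and \eqref{LP bounds 4} gives $\|[\nabla_\tau,P_k]G\|_{L^2}\lesssim\tau\|G\|_{L^2}$ (not $2^{-k}\tau\|G\|_{L^2}$), and similarly \eqref{Rk 4} gives $\|\nabla_\tau R_k\nabla^M\mathcal{O}\|_{L^2}\lesssim 2^{-k}\tau\|\underline{P}_k\nabla^M\mathcal{O}\|_{H^1}$ -- redoing your bookkeeping with these factors (weaker by $2^k$ than what you wrote) still closes, since $2^{2k}\tau^4\leq 2^{-2k}$ on $\tau\leq 2^{-k-1}$, so the conclusion is unaffected.
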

\begin{proof} We have the bound:
\[\bigg\|P_k\frac{\big(\nabla^{M}\Phi_0\big)_Y}{\log2^k\tau}-\frac{R_k\nabla^M\mathcal{O}}{\log 2^k\tau}\bigg\|^2_{L^2}\lesssim\big\|P_k\nabla^M\mathcal{O}\big\|^2_{L^2}+2^{-k}\int_0^{\tau}\bigg\|\nabla_{\tau}\bigg(P_k\frac{\big(\nabla^{M}\Phi_0\big)_Y}{\log2^k\tau'}-\frac{R_k\nabla^M\mathcal{O}}{\log 2^k\tau'}\bigg)\bigg\|^2_{L^2}+\]\[+2^{k}\int_0^{\tau}\bigg\|P_k\frac{\big(\nabla^{M}\Phi_0\big)_Y}{\log2^k\tau'}-\frac{R_k\nabla^M\mathcal{O}}{\log 2^k\tau'}\bigg\|^2_{L^2}.\]
Using Gronwall and the previous proposition, we obtain that:
\[2^{2k}\bigg\|P_k\frac{\big(\nabla^{M}\Phi_0\big)_Y}{\log2^k\tau}-\frac{R_k\nabla^M\mathcal{O}}{\log 2^k\tau}\bigg\|^2_{L^2}\lesssim2^{2k}\big\|P_k\nabla^M\mathcal{O}\big\|^2_{L^2}+\big\|\nabla P_k\nabla^M\mathcal{O}\big\|^2_{L^2}+\big\|\underline{P}_k\nabla^M\mathcal{O}\big\|^2_{H^1}+\]\[+2^{-k/2}\big\|\mathcal{O}\big\|^2_{H^{M+1/2+\eta}}+2^{-2k}\int_0^{\tau}\big\|\tau'\nabla_{\tau}\big(\nabla^{M}\Phi_0\big)_Y\big\|_{L^2}^2+2^{-k}\int_0^{\tau}\bigg\|\frac{\mathcal{C}}{\log2^k\tau}\bigg\|_{L^2}^2+2^{-k}\int_0^{\tau}\bigg\|\frac{\big(\nabla^{M-1}\Phi_{0}\big)_Y}{\log2^k\tau}\bigg\|_{H^1}^2.\]
\end{proof}

We conclude this subsection by using the above results in order to obtain bounds at $\tau=2^{-k-1}.$ These will serve as estimates on the initial data in the high frequency regime $\tau\in[2^{-k-1},1].$

\begin{corollary}For any $k\geq0$ and $\tau=2^{-k-1}$ we have the estimate:
    \[\bigg(\big\|P_k\nabla_{\tau}\big(\nabla^M\Phi_{0}\big)_Y\big\|^2_{L^2}+2^{2k}\big\|P_k\big(\nabla^M\Phi_{0}\big)_Y\big\|^2_{L^2}+\big\|\nabla P_k\big(\nabla^M\Phi_{0}\big)_Y\big\|^2_{L^2}\bigg)\bigg|_{\tau=2^{-k-1}}\lesssim\]
\[\lesssim\big\|\nabla P_k\nabla^M\mathcal{O}\big\|^2_{L^2}+\big\|\underline{P}_k\nabla^M\mathcal{O}\big\|^2_{H^1}+2^{-k/2}\big\|\mathcal{O}\big\|^2_{H^{M+1}}+2^{-2k}\int_0^{2^{-k-1}}\big\|\tau'\nabla_{\tau}\big(\nabla^{M}\Phi_{0}\big)_Y\big\|_{L^2}^2.\]
\end{corollary}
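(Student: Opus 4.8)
The plan is to evaluate the low frequency regime estimates from Propositions~\ref{singular component low frequency proposition} and~\ref{singular component low frequency proposition 2} at the endpoint $\tau=2^{-k-1}$, where the logarithmic factor is benign: for $\tau=2^{-k-1}$ we have $\log 2^k\tau=\log(1/2)$, a nonzero constant independent of $k$. Thus dividing or multiplying by $\log 2^k\tau$ at this value costs only a harmless universal constant, and the quantities appearing in those propositions are comparable to $\|P_k\nabla_\tau(\nabla^M\Phi_0)_Y\|_{L^2}^2$, $\|\nabla P_k(\nabla^M\Phi_0)_Y\|_{L^2}^2$, etc., up to the $R_k\nabla^M\mathcal{O}$ renormalization terms, which I must absorb into the right-hand side.

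The key steps, in order: First, rewrite the three quantities on the left-hand side of the corollary in terms of the renormalized quantities $X$ and $Y$ from Proposition~\ref{singular low frequency preliminary proposition}, i.e. $P_k\nabla_\tau\frac{(\nabla^M\Phi_0)_Y}{\log 2^k\tau}-R_k\nabla^M\mathcal{O}\,\partial_\tau(\frac{1}{\log 2^k\tau})$ and $\nabla\big(P_k\frac{(\nabla^M\Phi_0)_Y}{\log 2^k\tau}-\frac{R_k\nabla^M\mathcal{O}}{\log 2^k\tau}\big)$, plus the $L^2$ quantity from Proposition~\ref{singular component low frequency proposition 2}; here one uses that at $\tau=2^{-k-1}$ the factor $\partial_\tau(\frac{1}{\log 2^k\tau})$ equals $2^{k+1}/|\log(1/2)|^2$, which is $O(2^k)$, so the $R_k$ contribution to the $\nabla_\tau$ term is controlled by $2^k\|R_k\nabla^M\mathcal{O}\|_{L^2}$. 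Second, use the estimates for $R_k$ from Lemma~\ref{Rk lemma}, in particular $2^k\|R_kF\|_{L^2}\lesssim\|\underline{P}_kF\|_{H^1}$ (\ref{Rk 3}) and $\|\nabla R_kF\|_{L^2}\lesssim\|\underline{P}_kF\|_{H^1}$ (\ref{Rk 2}), to bound all $R_k\nabla^M\mathcal{O}$ terms by $\|\underline{P}_k\nabla^M\mathcal{O}\|_{H^1}^2$, which is already present (or absorbable) on the right-hand side of the target estimate. Third, note that $H^{M+1/2+\eta}\hookrightarrow H^{M+1}$ fails in the wrong direction, so I must be slightly careful: actually $\eta=1/10$ gives $M+1/2+\eta=M+0.6<M+1$, so $\|\mathcal{O}\|_{H^{M+1/2+\eta}}\leq\|\mathcal{O}\|_{H^{M+1}}$ and the weaker norm on the right-hand side of Propositions~\ref{singular component low frequency proposition} and~\ref{singular component low frequency proposition 2} is dominated by $\|\mathcal{O}\|_{H^{M+1}}$, matching the corollary's claimed $2^{-k/2}\|\mathcal{O}\|_{H^{M+1}}^2$ term. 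Fourth, add the three resulting bounds, absorbing the triple-product of constants $|\log(1/2)|$ into the implicit constant, to conclude.

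I expect the main (though mild) obstacle to be bookkeeping the renormalization terms: one has to check that every appearance of $R_k\nabla^M\mathcal{O}$ — both inside $\|P_k\nabla_\tau\frac{(\nabla^M\Phi_0)_Y}{\log 2^k\tau}-R_k\nabla^M\mathcal{O}\,\partial_\tau(\cdots)\|_{L^2}$ after moving $R_k\nabla^M\mathcal{O}\,\partial_\tau(\cdots)$ to the other side, inside $\|\nabla(P_k\frac{(\nabla^M\Phi_0)_Y}{\log}-\frac{R_k\nabla^M\mathcal{O}}{\log})\|_{L^2}$ after adding $\|\nabla\frac{R_k\nabla^M\mathcal{O}}{\log}\|_{L^2}$, and inside the $L^2$ quantity — is controlled by $\|\nabla P_k\nabla^M\mathcal{O}\|_{L^2}^2+\|\underline{P}_k\nabla^M\mathcal{O}\|_{H^1}^2$ using (\ref{Rk 2})--(\ref{Rk 3}) and the finite band property $\|\nabla P_k\nabla^M\mathcal{O}\|_{L^2}\lesssim2^k\|P_k\nabla^M\mathcal{O}\|_{L^2}$ together with $2^{2k}\|P_k\nabla^M\mathcal{O}\|_{L^2}^2\lesssim\|\nabla P_k\nabla^M\mathcal{O}\|_{L^2}^2+\ldots$. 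Also one must verify that the bulk integral $2^{-2k}\int_0^{2^{-k-1}}\|\tau'\nabla_\tau(\nabla^M\Phi_0)_Y\|_{L^2}^2\,d\tau'$ passes through unchanged, which it does since the upper limit of integration is exactly $\tau=2^{-k-1}$ and nothing log-singular multiplies it. None of this requires new ideas beyond Lemma~\ref{Rk lemma} and the finite band property; it is purely the evaluation of the already-established low frequency estimates at the transition time, so the corollary follows directly.
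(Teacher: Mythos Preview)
Your proposal is correct and follows essentially the same approach as the paper: evaluate Propositions~\ref{singular component low frequency proposition} and~\ref{singular component low frequency proposition 2} at $\tau=2^{-k-1}$ where $\log(2^k\tau)=\log(1/2)$ is a harmless constant, then absorb the $R_k\nabla^M\mathcal{O}$ renormalization terms via Lemma~\ref{Rk lemma} (specifically \eqref{Rk 2}--\eqref{Rk 3}) and use $\|\mathcal{O}\|_{H^{M+1/2+\eta}}\le\|\mathcal{O}\|_{H^{M+1}}$. The paper's proof does exactly this, displaying the intermediate step with $2^{2k}\|R_k\nabla^M\mathcal{O}\|_{L^2}^2+\|\nabla R_k\nabla^M\mathcal{O}\|_{L^2}^2$ on the right-hand side before invoking Lemma~\ref{Rk lemma}.
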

\begin{proof}The low frequency estimates imply for $\tau=2^{-k-1}:$
\[\bigg(\big\|P_k\nabla_{\tau}(\nabla^{M}\Phi_{0}\big)_Y\big\|^2_{L^2}+2^{2k}\big\|P_k(\nabla^{M}\Phi_{0}\big)_Y\big\|^2_{L^2}+\big\|\nabla P_k(\nabla^{M}\Phi_{0}\big)_Y\big\|^2_{L^2}\bigg)\bigg|_{\tau=2^{-k-1}}\lesssim\big\|\nabla P_k\nabla^M\mathcal{O}\big\|^2_{L^2}+\]\[+\big\|\underline{P}_k\nabla^M\mathcal{O}\big\|^2_{H^1}+2^{-k/2}\big\|\mathcal{O}\big\|^2_{H^{M+1/2+\eta}}+2^{2k}\big\|R_k\nabla^M\mathcal{O}\big\|^2_{L^2}+\big\|\nabla R_k\nabla^M\mathcal{O}\big\|^2_{L^2}+2^{-2k}\int_0^{2^{-k-1}}\big\|\tau'\nabla_{\tau}(\nabla^{M}\Phi_{0}\big)_Y\big\|_{L^2}^2\]
We conclude using Lemma \ref{Rk lemma}.
\end{proof}

\subsubsection{High Frequency Estimates}\label{forward high freq estimates section}

In this section, we prove a high frequency regime estimate for the singular component $\big(\nabla^M\Phi_0\big)_Y$. As in the case of \cite{linearwave}, according to the Bessel function type asymptotics, we prove an energy estimate for $\sqrt{2^k\tau}\cdot P_k\big(\nabla^M\Phi_0\big)_Y,$ which implies the $1/2$ gain of regularity at $\tau=1$ compared to $\tau=2^{-k-1.}$ 

\begin{proposition}\label{high frequency forward estimate} For any $k\geq0,\ \tau\in\big[2^{-k-1},1\big]$ we have the estimate for the singular component:
    \[\tau\big\|P_k\nabla_{\tau}\big(\nabla^M\Phi_{0}\big)_Y\big\|^2_{L^2}+\frac{1}{\tau}\big\|P_k\big(\nabla^M\Phi_{0}\big)_Y\big\|^2_{L^2}+\tau\big\|\nabla P_k\big(\nabla^M\Phi_{0}\big)_Y\big\|^2_{L^2}\lesssim\]
    \[\lesssim\frac{1}{2^k}\bigg(\big\|P_k\nabla_{\tau}\big(\nabla^M\Phi_{0}\big)_Y\big\|^2_{L^2}+2^{2k}\big\|P_k\big(\nabla^M\Phi_{0}\big)_Y\big\|^2_{L^2}+\big\|\nabla P_k\big(\nabla^M\Phi_{0}\big)_Y\big\|^2_{L^2}\bigg)\bigg|_{\tau=2^{-k-1}}\]\[+\int_{2^{-k-1}}^{\tau}\tau'\big\|\underline{\widetilde{P}}_k\big(\nabla^M\Phi_{0}\big)_Y\big\|_{L^2}^2+\int_{2^{-k-1}}^{\tau}(\tau')^3\big\|\underline{\widetilde{P}}_k\nabla\big(\nabla^M\Phi_{0}\big)_Y\big\|_{L^2}^2+\int_{2^{-k-1}}^{\tau}(\tau')^3\big\|\underline{\widetilde{P}}_k\nabla_{\tau}\big(\nabla^M\Phi_{0}\big)_Y\big\|_{L^2}^2\]\[+\int_{2^{-k-1}}^{\tau}\frac{\tau'}{2^{2k}}\big\|\big(\nabla^M\Phi_{0}\big)_Y\big\|_{H^1}^2+\int_{2^{-k-1}}^{\tau}\frac{(\tau')^3}{2^{2k}}\big\|\nabla_{\tau}\big(\nabla^M\Phi_{0}\big)_Y\big\|_{L^2}^2.\]
\end{proposition}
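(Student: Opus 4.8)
The plan is to run a frequency-localized energy estimate in the time variable $t=2^k\tau$, adapted to the Bessel-function asymptotics of the singular component $\big(\nabla^M\Phi_0\big)_Y$, exactly as in the high frequency regime analysis of \cite{linearwave}, but keeping track of the commutator terms produced by the geometric Littlewood-Paley projections. First I would apply $P_k$ to equation (\ref{equation for alpha Y}) for $\big(\nabla^M\Phi_0\big)_Y$, obtaining a transport-wave equation for $P_k\big(\nabla^M\Phi_0\big)_Y$ with right-hand side consisting of the commutator $[\nabla_\tau,P_k]\nabla_\tau\big(\nabla^M\Phi_0\big)_Y$, the angular commutators $[\Delta,P_k]$ and $[P_k,\psi\nabla](\cdot)$, and the lower-order curvature error terms. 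Using (\ref{LP bounds 1}), these commutators are converted into $[\nabla_4,P_k]$ commutators carrying factors of $t/2^{2k}=\tau/2^k$, which in turn are controlled by the refined bounds (\ref{LP refined 2}), (\ref{LP refined 4}) at the cost of producing the $\underline{\widetilde{P}}_k$-projected terms seen on the right-hand side of the statement, plus harmless $2^{-k}\|\cdot\|$ remainders that account for the $2^{-2k}$-weighted bulk terms at the end of the estimate.

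Next I would carry out the energy estimate proper. Following the Bessel asymptotics, the correct multiplier for the equation $\nabla_\tau(\nabla_\tau P_k\Phi_Y)+\tfrac1\tau\nabla_\tau P_k\Phi_Y-4\Delta P_k\Phi_Y=(\text{err})$ in the regime $\tau\in[2^{-k-1},1]$ is $\tau\,\nabla_\tau P_k\big(\nabla^M\Phi_0\big)_Y$; contracting and integrating by parts over $S_\tau$, the $\tfrac1\tau$ transport term combines with the weight $\tau$ to give a bounded (in fact favorably-signed at the level of the leading term) contribution, and integrating in $\tau$ from $2^{-k-1}$ to $\tau$ produces exactly the three coercive quantities $\tau\|P_k\nabla_\tau\Phi_Y\|_{L^2}^2$, $\tfrac1\tau\|P_k\Phi_Y\|_{L^2}^2$, $\tau\|\nabla P_k\Phi_Y\|_{L^2}^2$ on the left, up to lower order terms that can be absorbed by Gronwall. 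The initial data term at $\tau=2^{-k-1}$ appears with the prefactor $2^{-k-1}$, which is the source of the $\tfrac1{2^k}$ in front of the data norm in the statement. One must be careful that the time-dependence of the volume form $d\slashed{Vol}_{\slashed{g}_\tau}$ only contributes $O(\tau^2\|\cdot\|^2_{L^2})$ errors, as recorded in the displayed identity preceding Proposition~\ref{singular low frequency preliminary proposition}, so these are lower order in $\tau$ and absorbed.

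The main obstacle, and the reason the right-hand side of the statement has the particular structure it does, is handling the commutator error terms without losing a derivative at the level of a single frequency $P_k$. The term $[\nabla_\tau,P_k]\nabla_\tau\big(\nabla^M\Phi_0\big)_Y = \tfrac{t}{2^{2k-1}}[\nabla_4,P_k]\nabla_\tau\big(\nabla^M\Phi_0\big)_Y$, when multiplied against $\tau\nabla_\tau P_k\Phi_Y$ in the energy identity, cannot be bounded purely by the coercive left-hand side quantities; instead (\ref{LP refined 2}) forces the appearance of $\|\underline{\widetilde{P}}_k\nabla_\tau\big(\nabla^M\Phi_0\big)_Y\|_{L^2}$ (with a $(\tau')^3$ weight from the two powers of $\tau/2^k$ and one more from the multiplier weight times $\tau'$ from the time integration — or rather the bookkeeping gives $(\tau')^3$), and similarly the angular commutator $[\Delta,P_k]\Phi_Y$, rewritten via $[\nabla,P_k]$ and then $[\nabla_4,P_k]$, generates $(\tau')^3\|\underline{\widetilde{P}}_k\nabla\big(\nabla^M\Phi_0\big)_Y\|_{L^2}^2$ and $\tau'\|\underline{\widetilde{P}}_k\big(\nabla^M\Phi_0\big)_Y\|_{L^2}^2$. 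These $\underline{\widetilde{P}}_k$ terms are \emph{not} summable at fixed $k$ against the left-hand side — they must be carried through as is, and only summed over $k$ later in Section~\ref{forward main result estimates section} using almost-orthogonality. The remaining $2^{-2k}$-weighted full-norm bulk terms $\int \tfrac{\tau'}{2^{2k}}\|\Phi_Y\|_{H^1}^2$ and $\int\tfrac{(\tau')^3}{2^{2k}}\|\nabla_\tau\Phi_Y\|_{L^2}^2$ come from the genuinely lower-order pieces in (\ref{LP refined 2})--(\ref{LP refined 4}) and from the curvature error term $\psi\nabla\Phi_Y$ after a Poincaré/finite-band estimate, and these \emph{are} ultimately controlled by the lower order bounds of Section~\ref{forward lower order estimates section}. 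Assembling all contributions, applying the Gronwall inequality on $[2^{-k-1},\tau]$ to absorb the coercive-type errors, and tracking the prefactors yields precisely the claimed estimate.
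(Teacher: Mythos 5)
Your high-level plan — apply $P_k$, do a frequency-localized energy estimate in $t=2^k\tau$, carry the $\underline{\widetilde{P}}_k$-projected commutator terms from Lemma~\ref{LP Projections lemma refined} through to the RHS, and close with Gronwall — matches the paper's overall strategy. However, there is a genuine gap in the heart of the argument: the multiplier you propose is wrong, and as a consequence the energy estimate you describe does not produce the claimed coercive quantities and has an uncontrollable bad-signed bulk.

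Concretely, you multiply the unconjugated equation by $\tau\,\nabla_\tau P_k\big(\nabla^M\Phi_0\big)_Y$ and claim this yields all three coercive quantities $\tau\|P_k\nabla_\tau\Phi_Y\|_{L^2}^2$, $\tau^{-1}\|P_k\Phi_Y\|_{L^2}^2$, $\tau\|\nabla P_k\Phi_Y\|_{L^2}^2$ after integration in $\tau$. Work out the identity: with $\xi=P_k(\nabla^M\Phi_0)_Y$, contracting $\nabla_\tau^2\xi+\tau^{-1}\nabla_\tau\xi-4\Delta\xi=(\mathrm{err})$ against $\tau\nabla_\tau\xi$ gives
\begin{equation*}
\frac{d}{d\tau}\Bigl[\tfrac{\tau}{2}\|\nabla_\tau\xi\|_{L^2}^2+2\tau\|\nabla\xi\|_{L^2}^2\Bigr]+\tfrac{1}{2}\|\nabla_\tau\xi\|_{L^2}^2-2\|\nabla\xi\|_{L^2}^2=(\mathrm{err}),
\end{equation*}
up to commutator and volume-form errors. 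The $\tau^{-1}\|\xi\|_{L^2}^2$ boundary term is nowhere to be found, and, more seriously, the bulk $-2\int\|\nabla\xi\|_{L^2}^2\,d\tau'$ has the wrong sign and is at the same order as the left-hand coercive quantity $\tau\|\nabla\xi\|_{L^2}^2$; it cannot be absorbed by Gronwall. Your parenthetical remark that the angular contribution is ``favorably-signed at the level of the leading term'' is not accurate.

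The key step you are missing is the Bessel conjugation that the paper performs: rewrite the equation in terms of $\xi\sqrt{t}$, producing the normal form $\nabla_t^2(\xi\sqrt{t})+\tfrac{1}{4t^2}\xi\sqrt{t}-\tfrac{4}{2^{2k}}\Delta(\xi\sqrt{t})=\tfrac{\sqrt{t}}{2^{2k}}\psi\nabla\xi$, and then contract with the standard multiplier $P_k\nabla_t(\xi\sqrt{t})$. Equivalently, in the unconjugated variable $\tau$, the correct multiplier is $\tau\nabla_\tau\xi+\tfrac{1}{2}\xi$, not $\tau\nabla_\tau\xi$ alone. The zeroth-order piece $\tfrac{1}{2}\xi$ is essential: it generates exactly the $\tau^{-1}\|P_k\Phi_Y\|_{L^2}^2$ boundary coercivity (from the $\tau^{-1}\nabla_\tau\xi$ transport term), it cancels the bad angular bulk (the $-4\Delta\xi$ term paired against $\tfrac{1}{2}\xi$ produces $+2\|\nabla\xi\|_{L^2}^2$), and it produces the additional \emph{good}-signed bulk $+\int\frac{1}{4(\tau')^2}\|P_k\xi\|_{L^2}^2\,d\tau'$ that the paper then uses to absorb part of the $[P_k,\nabla_t]$ commutator error before running Gronwall. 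Your appeal to the ``Bessel asymptotics'' is the right instinct — $Y_0(t)\sim t^{-1/2}$ for large $t$, so $\sqrt{t}\,Y_0(t)$ has conserved oscillatory energy — but simply $\tau$-weighting the usual multiplier does not implement it; one needs the actual $\sqrt{t}$-conjugation (or, equivalently, the additional $\tfrac{1}{2}\xi$ in the multiplier).
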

\begin{proof}We denote $\xi=\big(\nabla^M\Phi_{0}\big)_Y$ and we introduce the new time variable $t=2^k\tau$:
    \[\nabla_{t}\big(\nabla_{t}\xi\big)+\frac{1}{t}\nabla_{t}\xi-\frac{4}{2^{2k}}\Delta\xi=\frac{1}{2^{2k}}\psi\nabla\xi.\]
    We multiply by $\sqrt{t}$ to get:
    \[\nabla_{t}\big(\nabla_{t}(\xi\sqrt{t})\big)+\frac{1}{4t^2}\xi\sqrt{t}-\frac{4}{2^{2k}}\Delta\xi\sqrt{t}=\frac{1}{2^{2k}}\psi\nabla\xi\sqrt{t}.\]
    For any $k\geq0$, we apply $P_k$ to obtain the equation:
    \[\nabla_{t}\big(P_k\nabla_{t}(\xi\sqrt{t})\big)+\frac{1}{4t^2}P_k\xi\sqrt{t}-\frac{4}{2^{2k}}\Delta P_k\xi\sqrt{t}=\frac{P_k\big(\psi\nabla\xi\sqrt{t}\big)}{2^{2k}}+[\nabla_t,P_k]\nabla_{t}\xi\sqrt{t}.\]
    We contract each equation with $P_k\nabla_t(\xi\sqrt{t})$ and integrate by parts to obtain the energy estimate:
    \[\big\|P_k\nabla_{t}\xi\sqrt{t}\big\|^2_{L^2}+\frac{1}{t^2}\big\|P_k\xi\sqrt{t}\big\|^2_{L^2}+\frac{1}{2^{2k}}\big\|\nabla P_k\xi\sqrt{t}\big\|^2_{L^2}+\int_{1/2}^t\frac{1}{(t')^2}\big\|P_k\xi\big\|_{L^2}^2\lesssim\]\[\lesssim\bigg(\big\|P_k\nabla_{t}\xi\big\|^2_{L^2}+\big\|P_k\xi\big\|^2_{L^2}+\frac{1}{2^{2k}}\big\|\nabla P_k\xi\big\|^2_{L^2}\bigg)\bigg|_{t=1/2}+\int_{1/2}^t\int_{S^n}\frac{1}{(t')^2}\big|P_k\xi\sqrt{t}\big|\cdot\big|[P_k,\nabla_t]\xi\sqrt{t'}\big|\]\[+\int_{1/2}^t\int_{S^n}\frac{1}{2^{2k}}\big|\nabla P_k\xi\sqrt{t'}\big|\cdot\big|\nabla[P_k,\nabla_t]\xi\sqrt{t'}\big|+\int_{1/2}^t\int_{S^n}\frac{1}{2^{2k}}\big|\nabla P_k\xi\sqrt{t'}\big|\cdot\big|[\nabla,\nabla_t]P_k\xi\sqrt{t'}\big|\]\[+\int_{1/2}^t\int_{S^n}\frac{1}{2^{2k}}\big|P_k\nabla_t\xi\sqrt{t'}\big|\cdot\big|\nabla P_k\xi\sqrt{t'}\big|+\int_{1/2}^t\int_{S^n}\frac{1}{2^{2k}}\big|P_k\nabla_t\xi\sqrt{t'}\big|\cdot\big|[P_k,\nabla]\xi\sqrt{t'}\big|\]\[+\int_{1/2}^t\int_{S^n}\frac{1}{2^{2k}}\big|P_k\nabla_t\xi\sqrt{t'}\big|\cdot\big|[P_k,\psi]\nabla\xi\sqrt{t'}\big|+\int_{1/2}^t\int_{S^n}\big|P_k\nabla_t\xi\sqrt{t'}\big|\cdot\big|[\nabla_t,P_k]\nabla_{t}\xi\sqrt{t'}\big|\]
    We point out that the good bulk term simplifies our analysis significantly, unlike the case of the second model system studied in Section~\ref{model backward direction section}. We use the bounds in Lemma \ref{LP Projections lemma refined} and Gronwall for $t\in\big[1/2,2^k\big]$ to get:
    \[\big\|P_k\nabla_{t}\xi\sqrt{t}\big\|^2_{L^2}+\frac{1}{t^2}\big\|P_k\xi\sqrt{t}\big\|^2_{L^2}+\frac{1}{2^{2k}}\big\|\nabla P_k\xi\sqrt{t}\big\|^2_{L^2}\lesssim\bigg(\big\|P_k\nabla_{t}\xi\big\|^2_{L^2}+\big\|P_k\xi\big\|^2_{L^2}+\frac{1}{2^{2k}}\big\|\nabla P_k\xi\big\|^2_{L^2}\bigg)\bigg|_{t=1/2}\]\[+\int_{1/2}^t\frac{1}{2^{2k}t'}\big\|P_k\xi\sqrt{t'}\big\|_{L^2}\cdot\Big(\big\|\underline{\widetilde{P}}_k\xi\sqrt{t'}\big\|_{L^2}+2^{-k}\big\|\xi\sqrt{t'}\big\|_{L^2}\Big)+\int_{1/2}^t\frac{1}{2^{2k}}\big\|P_k\nabla_t\xi\sqrt{t'}\big\|_{L^2}\cdot\big\|\nabla P_k\xi\sqrt{t'}\big\|_{L^2}+\]\[+\int_{1/2}^t\frac{t'}{2^{4k}}\big\|\nabla P_k\xi\sqrt{t'}\big\|_{L^2}\cdot\bigg(\big\|\underline{\widetilde{P}}_k\nabla\xi\sqrt{t'}\big\|_{L^2}+2^{-k}\big\|\xi\sqrt{t'}\big\|_{H^1}\bigg)+\int_{1/2}^t\frac{1}{2^{3k}}\big\|P_k\nabla_t\xi\sqrt{t'}\big\|_{L^2}\cdot\big\|\xi\sqrt{t'}\big\|_{L^2}+\]\[+\int_{1/2}^t\frac{1}{2^{3k}}\big\|P_k\nabla_t\xi\sqrt{t'}\big\|_{L^2}\cdot\big\|\nabla\xi\sqrt{t'}\big\|_{L^2}+\int_{1/2}^t\frac{t'}{2^{2k}}\big\|P_k\nabla_t\xi\sqrt{t'}\big\|_{L^2}\Big(\big\|\underline{\widetilde{P}}_k\nabla_t\xi\sqrt{t'}\big\|_{L^2}+2^{-k}\big\|\nabla_t\xi\sqrt{t'}\big\|_{L^2}\Big)\]
    We use Gronwall again for $t\in\big[1/2,2^k\big]$ to get the estimate:
    \[t\big\|P_k\nabla_{t}\xi\big\|^2_{L^2}+\frac{1}{t}\big\|P_k\xi\big\|^2_{L^2}+\frac{t}{2^{2k}}\big\|\nabla P_k\xi\big\|^2_{L^2}\lesssim\bigg(\big\|P_k\nabla_{t}\xi\big\|^2_{L^2}+\big\|P_k\xi\big\|^2_{L^2}+\frac{1}{2^{2k}}\big\|\nabla P_k\xi\big\|^2_{L^2}\bigg)\bigg|_{t=\frac{1}{2}}+\int_{1/2}^t\frac{1}{2^{3k}}\big\|\underline{\widetilde{P}}_k\xi\sqrt{t'}\big\|_{L^2}^2\]\[+\int_{1/2}^t\frac{(t')^2}{2^{5k}}\big\|\underline{\widetilde{P}}_k\nabla\xi\sqrt{t'}\big\|_{L^2}^2+\int_{1/2}^t\frac{(t')^2}{2^{3k}}\big\|\underline{\widetilde{P}}_k\nabla_t\xi\sqrt{t'}\big\|_{L^2}^2+\int_{1/2}^t\frac{1}{2^{5k}}\big\|\xi\sqrt{t'}\big\|_{H^1}^2+\int_{1/2}^t\frac{(t')^2}{2^{5k}}\big\|\nabla_t\xi\sqrt{t'}\big\|_{L^2}^2.\]
    We change variables to $\tau$ and we obtain the conclusion.
\end{proof}

\subsubsection{Main Result for the Singular Component}\label{forward main result estimates section}
In this section we combine the low frequency regime and the high frequency regime estimates for the singular component $\big(\nabla^M\Phi_0\big)_Y$ in order to prove Theorem \ref{forward direction main result theorem}. As remarked in Section~\ref{scattering map section intro} of the introduction, due to presence of different projection operators in the estimates established above, we must sum the estimates obtained for each LP projection before being able to bound the error terms.

\textit{Proof of Theorem \ref{forward direction main result theorem}.} \textit{Step 1. The improved high frequency regime estimate.} The result in Proposition~\ref{high frequency forward estimate} implies that for $\tau\in\big[2^{-k-1},1\big]$:
\[2^k\tau\big\|P_k\nabla_{\tau}\big(\nabla^M\Phi_{0}\big)_Y\big\|^2_{L^2}+\frac{2^k}{\tau}\big\|P_k\big(\nabla^M\Phi_{0}\big)_Y\big\|^2_{L^2}+2^k\tau\big\|\nabla P_k\big(\nabla^M\Phi_{0}\big)_Y\big\|^2_{L^2}\lesssim\]
\[\lesssim\bigg(\big\|P_k\nabla_{\tau}\big(\nabla^M\Phi_{0}\big)_Y\big\|^2_{L^2}+2^{2k}\big\|P_k\big(\nabla^M\Phi_{0}\big)_Y\big\|^2_{L^2}+\big\|\nabla P_k\big(\nabla^M\Phi_{0}\big)_Y\big\|^2_{L^2}\bigg)\bigg|_{\tau=2^{-k-1}}\]\[+2^k\int_{2^{-k-1}}^{\tau}\tau'\big\|\underline{\widetilde{P}}_k\big(\nabla^M\Phi_{0}\big)_Y\big\|_{L^2}^2+2^k\int_{2^{-k-1}}^{\tau}(\tau')^3\big\|\underline{\widetilde{P}}_k\nabla\big(\nabla^M\Phi_{0}\big)_Y\big\|_{L^2}^2+2^k\int_{2^{-k-1}}^{\tau}(\tau')^3\big\|\underline{\widetilde{P}}_k\nabla_{\tau}\big(\nabla^M\Phi_{0}\big)_Y\big\|_{L^2}^2\]\[+\int_{2^{-k-1}}^{\tau}\frac{\tau'}{2^{k}}\big\|\big(\nabla^M\Phi_{0}\big)_Y\big\|_{H^1}^2+\int_{2^{-k-1}}^{\tau}\frac{(\tau')^3}{2^{k}}\big\|\nabla_{\tau}\big(\nabla^M\Phi_{0}\big)_Y\big\|_{L^2}^2.\]
Using the low frequency regime estimates in Propositions~\ref{singular component low frequency proposition} and \ref{singular component low frequency proposition 2}, we get the improved high frequency regime estimate for $\tau\in\big[2^{-k-1},1\big]$:
\[2^k\tau\big\|P_k\nabla_{\tau}\big(\nabla^M\Phi_{0}\big)_Y\big\|^2_{L^2}+\frac{2^k}{\tau}\big\|P_k\big(\nabla^M\Phi_{0}\big)_Y\big\|^2_{L^2}+2^k\tau\big\|\nabla P_k\big(\nabla^M\Phi_{0}\big)_Y\big\|^2_{L^2}\lesssim\]\[\lesssim\big\|\nabla P_k\nabla^M\mathcal{O}\big\|^2_{L^2}+\big\|\underline{P}_k\nabla^M\mathcal{O}\big\|^2_{H^1}+2^{-k/2}\big\|\mathcal{O}\big\|^2_{H^{M+1}}+2^{-2k}\int_0^{2^{-k-1}}\big\|\tau'\nabla_{\tau}\big(\nabla^{M}\Phi_{0}\big)_Y\big\|_{L^2}^2+\]\[+2^k\int_{2^{-k-1}}^{\tau}\tau'\big\|\underline{\widetilde{P}}_k\big(\nabla^M\Phi_{0}\big)_Y\big\|_{L^2}^2+2^k\int_{2^{-k-1}}^{\tau}(\tau')^3\big\|\underline{\widetilde{P}}_k\nabla\big(\nabla^M\Phi_{0}\big)_Y\big\|_{L^2}^2+2^k\int_{2^{-k-1}}^{\tau}(\tau')^3\big\|\underline{\widetilde{P}}_k\nabla_{\tau}\big(\nabla^M\Phi_{0}\big)_Y\big\|_{L^2}^2\]\[+\int_{2^{-k-1}}^{\tau}\frac{\tau'}{2^{k}}\big\|\big(\nabla^M\Phi_{0}\big)_Y\big\|_{H^1}^2+\int_{2^{-k-1}}^{\tau}\frac{(\tau')^3}{2^{k}}\big\|\nabla_{\tau}\big(\nabla^M\Phi_{0}\big)_Y\big\|_{L^2}^2.\]

\textit{Step 2. Bounding the non-negative frequencies.} We define the following energy for all $k\geq0$:
\[2^kE_k^2(\tau)=2^k\tau^2\big\|P_k\nabla_{\tau}\big(\nabla^{M}\Phi_{0}\big)_Y\big\|^2_{L^2}+2^k\tau\big\|P_k\big(\nabla^{M}\Phi_{0}\big)_Y\big\|^2_{L^2}+2^k\tau^2\big\|\nabla P_k\big(\nabla^{M}\Phi_{0}\big)_Y\big\|^2_{L^2}+\tau\big\|\nabla P_k\big(\nabla^{M}\Phi_{0}\big)_Y\big\|^2_{L^2}\]
For any $\tau\in(0,1]$ we can write:
\begin{equation}\label{split E in low and high}
    \sum_{k\geq0}2^kE_k^2(\tau)=\sum_{\tau\leq2^{-k-1}}2^kE_k^2(\tau)+\sum_{\tau>2^{-k-1}}2^kE_k^2(\tau).
\end{equation}
We use our previous estimates to bound \eqref{split E in low and high}. The high frequency regime estimate implies the bound:
\[\sum_{\tau>2^{-k-1}}2^kE_k^2\lesssim\big\|\mathcal{O}\big\|^2_{H^{M+1}}+\int_0^{\tau}(\tau')^2\big\|\nabla_{\tau}\big(\nabla^{M}\Phi_{0}\big)_Y\big\|_{H^{1/2}}^2+\int_{0}^{\tau}(\tau')^2\big\|\nabla\big(\nabla^{M}\Phi_{0}\big)_Y\big\|_{H^{1/2}}^2+\]\[+\int_{0}^{\tau}\tau'\big\|\nabla\big(\nabla^{M}\Phi_{0}\big)_Y\big\|_{L^2}^2+\int_{0}^{\tau}\tau'\big\|\big(\nabla^{M}\Phi_{0}\big)_Y\big\|_{H^{1/2}}^2.\]
The low frequency regime estimates in Propositions~\ref{singular component low frequency proposition} and \ref{singular component low frequency proposition 2} imply:
\[\sum_{\tau\leq2^{-k-1}}2^k\tau^2\big\|\nabla P_k\big(\nabla^{M}\Phi_{0}\big)_Y\big\|^2_{L^2}\lesssim\sum_{\tau\leq2^{-k-1}}\tau\big\|\nabla P_k\big(\nabla^{M}\Phi_{0}\big)_Y\big\|^2_{L^2}\lesssim\]\[\lesssim\sum_{\tau\leq2^{-k-1}}\tau\big\|\nabla R_k\nabla^M\mathcal{O}\big\|^2_{L^2}+\sum_{\tau\leq2^{-k-1}}\bigg\|\nabla\bigg(P_k\frac{\big(\nabla^{M}\Phi_{0}\big)_Y}{\log2^k\tau}-\frac{R_k\nabla^M\mathcal{O}}{\log 2^k\tau}\bigg)\bigg\|^2_{L^2}\lesssim\big\|\mathcal{O}\big\|^2_{H^{M+1}}+\int_0^{\tau}\big\|\tau'\nabla_{\tau}\big(\nabla^{M}\Phi_{0}\big)_Y\big\|_{L^2}^2\]
Similarly, we also have the bound:
\[\sum_{\tau\leq2^{-k-1}}2^k\tau^2\big\|P_k\nabla_{\tau}\big(\nabla^{M}\Phi_{0}\big)_Y\big\|^2_{L^2}\lesssim\sum_{\tau\leq2^{-k-1}}2^k\tau^2\big|\log(2^k\tau)\big|^2\cdot\bigg\|P_k\nabla_{\tau}\frac{\big(\nabla^{M}\Phi_{0}\big)_Y}{\log2^k\tau}\bigg\|^2_{L^2}+\sum_{\tau\leq2^{-k-1}}2^k\bigg\|\frac{P_k\big(\nabla^{M}\Phi_{0}\big)_Y}{\log2^k\tau}\bigg\|^2_{L^2}\]
\[\lesssim\sum_{\tau\leq2^{-k-1}}2^k\tau^2\big|\log(2^k\tau)\big|^2\cdot\bigg\|P_k\nabla_{\tau}\frac{\big(\nabla^{M}\Phi_{0}\big)_Y}{\log2^k\tau}-R_k\nabla^M\mathcal{O}\partial_{\tau}\bigg(\frac{1}{\log2^k\tau}\bigg)\bigg\|^2_{L^2}+\sum_{\tau\leq2^{-k-1}}2^k\big\|R_k\nabla^M\mathcal{O}\big\|^2_{L^2}+\]\[+\sum_{\tau\leq2^{-k-1}}2^k\bigg\|\frac{P_k\big(\nabla^{M}\Phi_{0}\big)_Y}{\log2^k\tau}\bigg\|^2_{L^2}\lesssim\big\|\mathcal{O}\big\|^2_{H^{M+1}}+\int_0^{\tau}\big\|\tau'\nabla_{\tau}\big(\nabla^{M}\Phi_{0}\big)_Y\big\|_{L^2}^2.\]
Next, we have the bound:
\[\sum_{\tau\leq2^{-k-1}}2^k\tau\big\|P_k\big(\nabla^{M}\Phi_{0}\big)_Y\big\|^2_{L^2}\lesssim\sum_{\tau\leq2^{-k-1}}2^{2k}\bigg\|P_k\frac{\big(\nabla^{M}\Phi_{0}\big)_Y}{\log2^k\tau}-\frac{R_k\nabla^M\mathcal{O}}{\log 2^k\tau}\bigg\|^2_{L^2}+\sum_{\tau\leq2^{-k-1}}2^k\big\|R_k\nabla^M\mathcal{O}\big\|^2_{L^2}\]
\[\lesssim\big\|\mathcal{O}\big\|^2_{H^{M+1}}+\int_0^{\tau}\big\|\tau'\nabla_{\tau}\big(\nabla^{M}\Phi_{0}\big)_Y\big\|_{L^2}^2.\]
As a result, we obtain the following bound for the sum in \eqref{split E in low and high}:
\[\sum_{k\geq0}2^kE_k^2\lesssim\big\|\mathcal{O}\big\|^2_{H^{M+1}}+\int_0^{\tau}(\tau')^2\big\|\nabla_{\tau}\big(\nabla^{M}\Phi_{0}\big)_Y\big\|_{H^{1/2}}^2+\]\[+\int_{0}^{\tau}(\tau')^2\big\|\nabla\big(\nabla^{M}\Phi_{0}\big)_Y\big\|_{H^{1/2}}^2+\int_{0}^{\tau}\tau'\big\|\nabla\big(\nabla^{M}\Phi_{0}\big)_Y\big\|_{L^2}^2+\int_{0}^{\tau}\tau'\big\|\big(\nabla^{M}\Phi_{0}\big)_Y\big\|_{H^{1/2}}^2.\]

\textit{Step 3. Bounding the negative frequencies.} In order to prove (\ref{main estimate alpha Y top order}), we also need to deal with the negative frequencies. According to \cite{geometricLP}, for any $k<0$ we also have $\|P_k\nabla F\|_{L^2}\lesssim2^k\|F\|_{L^2}.$ Thus, we have:
\[\sum_{k<0}2^k\tau^2\big\|P_k\nabla \big(\nabla^{M}\Phi_{0}\big)_Y\big\|^2_{L^2}+\tau\big\|P_k\nabla \big(\nabla^{M}\Phi_{0}\big)_Y\big\|^2_{L^2}\lesssim\sum_{k<0}2^k\tau\big\| \big(\nabla^{M}\Phi_{0}\big)_Y\big\|^2_{L^2}\lesssim\tau\big\|\mathcal{C}\big\|^2_{L^2}+\tau\big\| \nabla\big(\nabla^{M-1}\Phi_{0}\big)_Y\big\|^2_{L^2},\]
\[\sum_{k<0}2^k\tau\big\|P_k\big(\nabla^{M}\Phi_{0}\big)_Y\big\|^2_{L^2}\lesssim\tau\big\|\mathcal{C}\big\|^2_{L^2}+\tau\big\|\nabla\big(\nabla^{M-1}\Phi_{0}\big)_Y\big\|^2_{L^2},\]
\[\sum_{k<0}2^k\tau^2\big\|P_k\nabla_{\tau}\big(\nabla^{M}\Phi_{0}\big)_Y\big\|^2_{L^2}\lesssim\tau^2\big\|\nabla_{\tau}\mathcal{C}\big\|^2_{L^2}+\sum_{k<0}2^k\tau^2\big\|[\nabla,\nabla_{\tau}]\big(\nabla^{M-1}\Phi_{0}\big)_Y\big\|^2_{L^2}+\sum_{k<0}2^{3k}\tau^2\big\|\nabla_{\tau}\big(\nabla^{M-1}\Phi_{0}\big)_Y\big\|^2_{L^2}.\]
We proved that for the negative frequencies we have:
\[\sum_{k<0}2^k\tau^2\big\|P_k\nabla \big(\nabla^{M}\Phi_{0}\big)_Y\big\|^2_{L^2}+\tau\big\|P_k\nabla \big(\nabla^{M}\Phi_{0}\big)_Y\big\|^2_{L^2}+2^k\tau\big\|P_k\big(\nabla^{M}\Phi_{0}\big)_Y\big\|^2_{L^2}+2^k\tau^2\big\|P_k\nabla_{\tau}\big(\nabla^{M}\Phi_{0}\big)_Y\big\|^2_{L^2}\lesssim\]
\[\lesssim\tau^2\big\|\nabla_{\tau}\mathcal{C}\big\|^2_{L^2}+\tau\big\|\mathcal{C}\big\|^2_{L^2}+\tau\big\|\big(\nabla^{M-1}\Phi_{0}\big)_Y\big\|^2_{H^1}+\tau^2\big\|\nabla_{\tau}\big(\nabla^{M-1}\Phi_{0}\big)_Y\big\|^2_{L^2}\lesssim\big\|\mathcal{O}\big\|^2_{H^{M+1}}+\int_0^{\tau}\big\|\tau'\nabla_{\tau}\big(\nabla^{M}\Phi_{0}\big)_Y\big\|_{L^2}^2.\]

\textit{Step 4. The proof of (\ref{main estimate alpha Y top order}) and (\ref{practical estimate alpha Y top order}).} We obtain that:
\[\tau^2\big\|\nabla_{\tau}\big(\nabla^{M}\Phi_{0}\big)_Y\big\|^2_{H^{1/2}}+\tau\big\|\big(\nabla^{M}\Phi_{0}\big)_Y\big\|^2_{H^{1/2}}+\tau^2\big\|\nabla\big(\nabla^{M}\Phi_{0}\big)_Y\big\|^2_{H^{1/2}}+\tau\big\|\nabla\big(\nabla^{M}\Phi_{0}\big)_Y\big\|^2_{L^2}\lesssim\]\[\lesssim\tau\big\|\big(\nabla^{M}\Phi_{0}\big)_Y\big\|^2_{L^2}+\sum_{k\geq0}2^kE_k^2+\big\|\mathcal{O}\big\|^2_{H^{M+1}}+\int_0^{\tau}\big\|\tau'\nabla_{\tau}\big(\nabla^{M}\Phi_{0}\big)_Y\big\|_{L^2}^2\lesssim\big\|\mathcal{O}\big\|^2_{H^{M+1}}+\]\[+\int_0^{\tau}(\tau')^2\big\|\nabla_{\tau}\big(\nabla^{M}\Phi_{0}\big)_Y\big\|_{H^{1/2}}^2+(\tau')^2\big\|\nabla\big(\nabla^{M}\Phi_{0}\big)_Y\big\|_{H^{1/2}}^2+\tau'\big\|\nabla\big(\nabla^{M}\Phi_{0}\big)_Y\big\|_{L^2}^2+\tau'\big\|\big(\nabla^{M}\Phi_{0}\big)_Y\big\|_{H^{1/2}}^2.\]
By Gronwall, we obtain (\ref{main estimate alpha Y top order}). Next, we notice that for any $k\geq0$ using the high frequency estimate we get:
\[\sum_{\tau>2^{-k-1}}\big\|P_k\nabla\big(\nabla^M\Phi_{0}\big)_Y\big\|^2_{L^2}\lesssim\big\|\mathcal{C}\big\|^2_{L^2}+\big\|\nabla\big(\nabla^{M-1}\Phi_{0}\big)_Y\big\|^2_{L^2}+\sum_{\tau>2^{-k-1}}2^k\tau\big\|\nabla P_k\big(\nabla^M\Phi_{0}\big)_Y\big\|^2_{L^2}\lesssim\]\[\lesssim\big(1+|\log\tau|^2\big)\big\|\mathcal{O}\big\|^2_{H^{M+1}}+\int_0^{\tau}(\tau')^2\big\|\nabla_{\tau}\big(\nabla^{M}\Phi_{0}\big)_Y\big\|_{H^{1/2}}^2+\int_{0}^{\tau}(\tau')^2\big\|\nabla\big(\nabla^{M}\Phi_{0}\big)_Y\big\|_{H^{1/2}}^2+\]\[+\int_{0}^{\tau}\tau'\big\|\nabla\big(\nabla^{M}\Phi_{0}\big)_Y\big\|_{L^2}^2+\int_{0}^{\tau}\tau'\big\|\big(\nabla^{M}\Phi_{0}\big)_Y\big\|_{H^{1/2}}^2\lesssim\big(1+|\log\tau|^2\big)\big\|\mathcal{O}\big\|^2_{H^{M+1}}.\]
As before, we also have for any $k\geq0$:
\[\sum_{\tau\leq2^{-k-1}}\big\|\nabla P_k\big(\nabla^M\Phi_{0}\big)_Y\big\|^2_{L^2}\lesssim\sum_{\tau\leq2^{-k-1}}\big\|\nabla R_k\nabla^M\mathcal{O}\big\|^2_{L^2}+\sum_{\tau\leq2^{-k-1}}|\log\tau|^2\bigg\|\nabla\bigg(P_k\frac{\big(\nabla^M\Phi_{0}\big)_Y}{\log2^k\tau}-\frac{R_k\nabla^M\mathcal{O}}{\log 2^k\tau}\bigg)\bigg\|^2_{L^2}\lesssim\]\[\lesssim\big(1+|\log\tau|^2\big)\big\|\mathcal{O}\big\|^2_{H^{M+1}}.\]
We get the same bound for $\sum_{\tau\leq2^{-k-1}}\big\|P_k\nabla\big(\nabla^M\Phi_{0}\big)_Y\big\|^2_{L^2}$ by commutation. For the negative frequencies we have:
\[\sum_{k<0}\big\|P_k\nabla\big(\nabla^M\Phi_{0}\big)_Y\big\|^2_{L^2}\lesssim\big\| \big(\nabla^M\Phi_{0}\big)_Y\big\|^2_{L^2}\lesssim\big\|\mathcal{C}\big\|^2_{L^2}+\big\| \nabla\big(\nabla^{M-1}\Phi_{0}\big)_Y\big\|^2_{L^2}\lesssim\big(1+|\log\tau|^2\big)\big\|\mathcal{O}\big\|^2_{H^{M+1}}.\]
This completes the proof of (\ref{practical estimate alpha Y top order}).
\qed

\section{Estimates from $\{v=0\}$ to $\{v=-u\}$}\label{forward direction full system section}
In this section we prove optimal estimates on the smooth spacetime $(\mathcal{M},g)$ obtained in Theorem \ref{stability of de sitter theorem in section} in terms of the asymptotic data at $\{u=-1,\ v=0\}.$ Using the ambient metric construction, in the original $(n+1)$-dimensional formulation these correspond to proving estimates at finite times in terms of the asymptotic data at $\mathcal{I}^-$.

We first introduce the notion of asymptotic data set at $\{u=-1,\ v=0\}$:
\begin{definition}\label{asymptotic data set definition}
    Let $\big(\slashed{g}_0,h\big)$ be smooth straight initial data at $\{u=-1,\ v=0\}$. We define the corresponding asymptotic data set at $\{u=-1,\ v=0\}$ by:
    \[\Sigma\big(\slashed{g}_0,h\big):=\Bigg\{\slashed{g}_0,\bigg\{\nabla_4^l\psi:\ 0\leq l\leq\frac{n-4}{2}\bigg\},\bigg\{\nabla_4^l\Psi^G:\ 0\leq l\leq\frac{n-4}{2}\bigg\},\bigg\{\nabla_4^l\alpha:\ 0\leq l\leq\frac{n-6}{2}\bigg\}, \mathcal{O}, \mathfrak{h}\Bigg\},\]
    where we have that for all admissible $l$ the tensors $\nabla_4^l\psi,\nabla_4^l\Psi^G,\nabla_4^l\alpha,$ and $\mathcal{O}$ are defined in terms of $\slashed{g}_0$ by the compatibility relations in the Fefferman-Graham expansion as in \cite{selfsimilarvacuum}, and we define $\mathfrak{h}=h-2(\log\nabla)\mathcal{O}$. We also denote by $\Sigma_{\mathrm{Minkowski}}=\Sigma\big(\slashed{g}_{S^n},0\big)$ the Minkowski data set.

    For $M>0$ large enough, we define the asymptotic data norm of order $M$, measuring closeness to the Minkowski data:
    \[\Big\|\Sigma\big(\slashed{g}_0,h\big)\Big\|_M^2=\big\|\slashed{g}_0^*\big\|_{\mathring{H}^{M+1}}^2+\sum_{k=0}^1\big\|\nabla^{M+1+k}\nabla_4^{\frac{n-4}{2}-k}\psi^*\big\|^2_{H^{1/2}}+\sum_{l=0}^{\frac{n-4}{2}}\sum_{m=0}^{M+\frac{n-4}{2}-l}\big\|\nabla_4^{l}\psi^*\big\|^2_{H^{m+1}}+\]\[+\sum_{l=0}^{\frac{n-4}{2}}\sum_{m=0}^{M+\frac{n-4}{2}-l}\big\|\nabla_4^l\Psi^G\big\|_{H^{m+1}}^2+\sum_{l=0}^{\frac{n-6}{2}}\sum_{m=0}^{M+\frac{n-4}{2}-l}\big\|\nabla_4^l\alpha\big\|_{H^{m+1}}^2+\big\|\mathcal{O}\big\|_{H^{M+1}}^2+\big\|\mathfrak{h}\big\|_{H^{M+1}}^2,\]
    where $\mathring{H}^{M+1}$ is the Sobolev space with respect to $\slashed{g}_{S^n},$ and the other Sobolev spaces are defined with respect to $\slashed{g}.$
    
    For every $\epsilon>0$ small enough, we denote the set of $\epsilon-$small asymptotic data by:
    \[B_{\epsilon}^M\big(\Sigma_{\mathrm{Minkowski}}\big)=\Big\{\Sigma\big(\slashed{g}_0,h\big):\ \big(\slashed{g}_0,h\big) \text{ smooth straight initial data},\ \big\|\Sigma\big\|_M<\epsilon\Big\}.\]
\end{definition}
\begin{remark}\label{remark about def of sigma}
    For $\big(\underline{\slashed{g}_0},\underline{h}\big)$ asymptotic data at $\{u=0,v=1\},$ the precise notion of asymptotic data norm is obtained via the transformation $(u,v)\rightarrow(-v,-u)$ in the above definition. In particular, we notice that we replace $\Psi^G,\alpha,\mathcal{O},\mathfrak{h}$ by $\underline{\Psi}^G,\underline{\alpha},\underline{\mathcal{O}},\underline{\mathfrak{h}},$ and all $\nabla_4$ derivatives by $\nabla_3$ derivatives. However, in view of the compatibility relations, the quantities in $\Sigma\big(\underline{\slashed{g}_0},\underline{h}\big)$ are expressed in terms of $\big(\underline{\slashed{g}_0},\underline{h}\big)$ using the same formulas as the ones satisfied by $\Sigma\big(\slashed{g}_0,h\big)$ in terms of $\big(\slashed{g}_0,h\big)$.
\end{remark}

The main result of this section is the following estimate:
\begin{theorem}\label{main theorem forward direction full system}
    For any $M>0$ large enough and $\epsilon>0$ small enough we consider the smooth straight initial data $\big(\slashed{g}_0,h\big)$ such that $\Sigma\big(\slashed{g}_0,h\big)\in B_{\epsilon}^M\big(\Sigma_{\mathrm{Minkowski}}\big).$ The smooth spacetime $(\mathcal{M},g)$ obtained in Theorem \ref{stability of de sitter theorem in section} with asymptotic initial data given by $\big(\slashed{g}_0,h\big)$ satisfies the following estimate on $S_{(-1,1)}:$
    \begin{align*}
        \Xi_M^2:=&\sum_{i+j=0}^{\frac{n-4}{2}}\sum_{m=0}^{M+\frac{n-4}{2}-i-j}\big\|\nabla_3^i\nabla_4^j\Psi\big\|^2_{H^{m+1}}+\sum_{i+j=0}^{\frac{n-4}{2}}\big\|\nabla^M\nabla_3^i\nabla_4^j\Psi\big\|^2_{H^{3/2}}+\sum_{i+j=0}^{\frac{n-2}{2}}\big\|\nabla^M\nabla_3^i\nabla_4^j\Psi\big\|^2_{H^{1/2}}+\big\|\slashed{g}^*\big\|_{H^{M+1}}^2\\
        &+\sum_{k=0}^1\sum_{i+j=\frac{n-4}{2}-k}\big\|\nabla^{M+1+k}\nabla_3^i\nabla_4^j\psi^*\big\|^2_{H^{1/2}}+\sum_{i+j=0}^{\frac{n-4}{2}}\sum_{m=0}^{M+\frac{n-4}{2}-i-j}\big\|\nabla_3^i\nabla_4^j\psi^*\big\|^2_{H^{m+1}}\lesssim\Big\|\Sigma\big(\slashed{g}_0,h\big)\Big\|_M^2.
    \end{align*}
\end{theorem}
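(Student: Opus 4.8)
\textbf{Proof strategy for Theorem \ref{main theorem forward direction full system}.}

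The plan is to run an energy estimate for the system of wave equations \eqref{model wave system 1} restricted to $\{u=-1\}$, propagating control from $\{v=0\}$ (where the data is controlled by $\big\|\Sigma(\slashed{g}_0,h)\big\|_M$ via the Fefferman--Graham compatibility relations) to $S_{(-1,1)}$, and then converting the resulting bounds back into the quantities defining $\Xi_M$. First I would set $\Phi_0=\nabla_4^{\frac{n-4}{2}}\alpha$ and $\Phi_i=\nabla_4^{\frac{n-4}{2}}\Psi^G$, which by the discussion in Section~\ref{model systemmm} satisfy the first model system \eqref{first model system definition} with inhomogeneous terms $F_m^{0}=F_m^{i}=Err^{\Psi}_{m,\frac{n-4}{2}}$, and with asymptotic expansions at $\tau=0$ matching those in the definition of the first model system; the expansions follow from Section~\ref{asymptotic completeness section} (or from \cite{selfsimilarvacuum}). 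The key input is Theorem~\ref{forward direction main result theorem general} of \cite{Cwave}, applied for each $m\le M$, which bounds $\tau^2\|\nabla_\tau\nabla^M\Phi_0\|_{H^{1/2}}^2 + \tau^2\|\nabla\nabla^M\Phi_0\|_{H^{1/2}}^2 + \sum_i(\tau\|\nabla_\tau\nabla^M\Phi_i\|_{H^{1/2}}^2 + \tau\|\nabla\nabla^M\Phi_i\|_{H^{1/2}}^2 + \|\Phi_i\|_{H^{M+1}}^2)$ plus the $\|\Phi_0\|_{H^{M+1}}^2$ bound, in terms of $\|\mathcal O\|_{H^{M+1}}^2 + \|\mathfrak h\|_{H^{M+1}}^2 + \sum_i\|\Phi_i^0\|_{H^{M+1}}^2$ and the inhomogeneous terms $\sum_{m,i}\int_0^\tau\|F_m^i\|_{L^2}^2 + \sum_i\int_0^\tau\tau'\|F_M^i\|_{H^{1/2}}^2$. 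Evaluating at $\tau=1$ ($v=1$), the asymptotic data terms are all controlled by $\big\|\Sigma(\slashed g_0,h)\big\|_M^2$: $\|\mathcal O\|_{H^{M+1}}$ and $\|\mathfrak h\|_{H^{M+1}}$ appear directly in the norm, and the $\Phi_i^0 = (\nabla_4^{\frac{n-4}{2}}\Psi^G)|_{\tau=0}$ are among the compatibility-determined tensors contributing to $\big\|\Sigma\big\|_M$.

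The second ingredient is a hierarchy of lower-order energy estimates for the system \eqref{model wave system 1} for $0\le l\le\frac{n-6}{2}$ and the remaining angular-derivative counts, which are more standard: for each $(m,l)$ one contracts the wave equation with $\nabla_\tau\nabla^m\nabla_4^l\Phi$, integrates by parts on $S_\tau$, controls the volume-form and commutator errors using the bounds of Section~\ref{LP Section} and the stability bounds of Theorem~\ref{stability of de sitter theorem in section}, and applies Gronwall on $\tau\in(0,1]$; the coefficient $3+l-\frac n2 < 0$ in the $\alpha$ equation produces a Bessel-type singularity only at the single level $l=\frac{n-4}{2}$, which is exactly the level handled by \cite{Cwave}, while all lower levels are regular. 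One then bootstraps from $l=0$ upward and from low angular regularity upward, at each step feeding the already-controlled quantities into the error terms. The error terms $Err^{\Psi}_{ml}$, being nonlinear products containing $\psi^*$ or $\Psi^G$ with many angular derivatives distributed, are essentially linear in the top-order factor and are absorbed either into the Gronwall argument or by smallness of $\epsilon$; crucially one verifies $\sum_{m,i}\int_0^1\|F_m^i\|_{L^2}^2 + \sum_i\int_0^1\tau'\|F_M^i\|_{H^{1/2}}^2 \lesssim \epsilon^2\,\Xi_M^2$ or $\lesssim \epsilon^2\big\|\Sigma\big\|_M^2$ (using the $H^{m+1}$ and $H^{1/2}$ components already present in $\Xi_M$), so these nonlinear contributions can be absorbed into the left-hand side.

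Finally I would assemble $\Xi_M^2$ from these bounds. The terms with $i+j\le\frac{n-4}{2}$ and $H^{m+1}$ regularity come from the lower-order hierarchy; the $\|\nabla^M\nabla_3^i\nabla_4^j\Psi\|_{H^{3/2}}^2$ terms with $i+j\le\frac{n-4}{2}$ and the $\|\nabla^M\nabla_3^i\nabla_4^j\Psi\|_{H^{1/2}}^2$ terms with $i+j\le\frac{n-2}{2}$ come from the top-order estimate of \cite{Cwave} evaluated at $\tau=1$ (where $\nabla_\tau \sim \nabla_4$ up to the invertible factor $2\tau$, so $\tau\|\nabla_\tau\cdot\|_{H^{1/2}}$ at $\tau=1$ controls the extra $\nabla_4$ derivative), after also trading $\nabla_4$ derivatives against $\nabla_3$ derivatives and angular derivatives via the transport/Bianchi structure as in Section~\ref{model systemmm}; the metric coefficient bound $\|\slashed g^*\|_{H^{M+1}}$ and the Ricci coefficient bounds $\|\nabla_3^i\nabla_4^j\psi^*\|_{H^{m+1}}$, as well as the half-derivative-gained $\|\nabla^{M+1+k}\nabla_3^i\nabla_4^j\psi^*\|_{H^{1/2}}$, follow by integrating the metric equations $\mathcal L_4\slashed g = 2\chi$ and the null structure/transport equations for $\psi$ in the $\nabla_4$ direction, using the curvature control just obtained and the commutation formulas of Section~\ref{set up section}. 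The main obstacle is entirely concentrated in the top-order estimate for the singular component $(\nabla^M\Phi_0)_Y$ — the sharp control with the half-derivative gain, the renormalization $h\mapsto\mathfrak h$, and the summation over LP projections forced by the almost-orthogonality of the geometric Littlewood--Paley operators — but this is precisely the content of Theorem~\ref{forward direction main result theorem} and its generalization Theorem~\ref{forward direction main result theorem general} of \cite{Cwave}, which I would invoke as a black box; once those are in hand, the remaining work is the bookkeeping of converting $(\Phi_0,\Phi_i)$-estimates into the full $\Xi_M$ and absorbing nonlinear errors by smallness.
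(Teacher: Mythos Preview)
Your proposal is essentially the paper's approach: invoke Theorem~\ref{forward direction main result theorem general} for the top-order quantities $\Phi_0,\Phi_i$, run a standard energy hierarchy for the lower orders $l\le\frac{n-6}{2}$, recover the Ricci coefficients and metric by integrating the transport equations, and close the error terms. Two points deserve tightening. First, the error-term bookkeeping: the paper does \emph{not} bound $\int_0^1\|Err\|^2\lesssim\epsilon^2\Xi_M^2$ and absorb --- that would be circular, since the integral involves the solution at all intermediate times. Instead it introduces time-dependent norms $\mathcal{T}(v),\mathcal{S}(v),\mathcal{L}(v),\mathcal{R}(v)$, shows $\|Err^{\Psi}_{m,l}\|_{H^{1/2}}^2\lesssim(1+|\log v|^2)(\mathcal{T}+\mathcal{S}+\mathcal{L}+\mathcal{R})$ pointwise in $v$ (using only the a~priori pointwise bounds $\mathcal{P},\mathcal{SP}$ from Theorem~\ref{stability of de sitter theorem in section} to handle the low-derivative factors), and closes by Gronwall in $v$ against the integrable weight $(1+|\log v|^2)^2 v^{-1/2}$; the smallness $\epsilon$ enters only to absorb the single term $\epsilon\mathcal{R}$. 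Second, the conversion between $\nabla_3$ and $\nabla_4$ in $\Xi_M$ is done not through the Bianchi equations but directly by the self-similarity relation $\nabla_S\Psi=-2\Psi$ on $\{u=-1,v=1\}$, which makes it algebraic.
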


By self-similarity, this result follows from the corresponding estimates along $\{u=-1\},$ where we replace all $\nabla_3$ derivatives by $\nabla_4.$ We define the following norms on $\big\{u=-1,\ 0\leq v\leq1\big\}:$
\begin{itemize}
    \item Top order energy $\mathcal{T}=\mathcal{T}(-1,v)$:
    \[\mathcal{T}=v^2\big\|\nabla_{4}\nabla^M\nabla_4^{\frac{n-4}{2}}\Psi\big\|^2_{H^{1/2}}+v\big\|\nabla^M\nabla_4^{\frac{n-4}{2}}\Psi\big\|^2_{H^{3/2}}+\big\|\nabla_4^{\frac{n-4}{2}}\Psi^G\big\|^2_{H^{M+1}}\]
    \item Mildly singular top order energy $\mathcal{S}=\mathcal{S}(-1,v)$:
    \[\mathcal{S}=\big\|\nabla_4^{\frac{n-4}{2}}\alpha\big\|^2_{H^{M+1}}\]
    \item Lower order energy $\mathcal{L}=\mathcal{L}(-1,v)$:
    \[\mathcal{L}=\sum_{l=0}^{\frac{n-6}{2}}\sum_{m=0}^{M+\frac{n-4}{2}-l}v\big\|\nabla_4^{l+1}\Psi\big\|^2_{H^{m}}+\big\|\nabla_4^l\Psi\big\|^2_{H^{m+1}}\]
    \item Fractional lower order energy $\mathcal{M}_l=\mathcal{M}_l(-1,v)$ for any $0\leq l\leq\frac{n-6}{2}$:
    \[\mathcal{M}_l=\big\|\nabla^M\nabla_4^l\Psi\big\|^2_{H^{5/2}}\]
    \item Ricci coefficients norm $\mathcal{R}=\mathcal{R}(-1,v)$:
    \[\mathcal{R}=\sum_{k=0}^1\big\|\nabla^{M+1+k}\nabla_4^{\frac{n-4}{2}-k}\psi^*\big\|^2_{H^{1/2}}+\sum_{l=0}^{\frac{n-4}{2}}\sum_{m=0}^{M+\frac{n-4}{2}-l}\big\|\nabla_4^{l}\psi^*\big\|^2_{H^{m+1}}\]
    \item Lower order pointwise norm $\mathcal{P}=\mathcal{P}(-1,v)$ for $N'=\frac{M}{2}+\frac{n}{4}$:
    \[\mathcal{P}=\sum_{l=0}^{\frac{n-6}{2}}\sum_{m=0}^{N'}\big\|\nabla^m\nabla_4^{l}\Psi\big\|^2_{L^{\infty}}+\sum_{l=0}^{\frac{n-4}{2}}\sum_{m=0}^{N'}\big\|\nabla^m\nabla_4^{l}\psi^*\big\|^2_{L^{\infty}}\]
    \item Mildly singular pointwise norm $\mathcal{SP}=\mathcal{SP}(-1,v)$:
     \[\mathcal{SP}=\sum_{m=0}^{N'}\big\|\nabla^m\nabla_4^{\frac{n-4}{2}}\Psi\big\|^2_{L^{\infty}}+\sum_{m=0}^{N'}\big\|\nabla^m\nabla_4^{\frac{n-2}{2}}\psi^*\big\|^2_{L^{\infty}}\]
    \item Initial data norm $\mathcal{D}$:
    \[\mathcal{D}=\Big\|\Sigma\big(\slashed{g}_0,h\big)\Big\|_M^2\]
\end{itemize}

We complete the proof of Theorem~\ref{main theorem forward direction full system} at the end of the section. This will be straightforward once we establish the following result:
\begin{proposition}\label{main proposition forward direction full system}
    The spacetime $(\mathcal{M},g)$ satisfies the following estimates on $\big\{u=-1,\ 0\leq v\leq1\big\}:$
    \begin{align*}
        &\mathcal{T}+\mathcal{L}+\mathcal{R}\lesssim\mathcal{D},\\
        &\mathcal{S}\lesssim\big(1+|\log v|^2\big)\mathcal{D}.
    \end{align*}
\end{proposition}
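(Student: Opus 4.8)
\textbf{Proof plan for Proposition \ref{main proposition forward direction full system}.}

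The plan is to set up a bootstrap argument on the interval $\{u=-1,\ 0\leq v\leq 1\}$ for the combined energy $\mathcal{T}+\mathcal{S}+\mathcal{L}+\sum_l\mathcal{M}_l+\mathcal{R}$, and to feed the top-order piece $\mathcal{T}$ and the mildly singular piece $\mathcal{S}$ into the estimates for the first model system proved in Theorem \ref{forward direction main result theorem general} (and in the present paper for the singular component of $\Phi_0$ in Theorem \ref{forward direction main result theorem}). The key point is that the system \eqref{model wave system 1} restricted to $\{u=-1\}$ is literally an instance of the first model system \eqref{first model system definition} with $\Phi_0=\nabla_4^{\frac{n-4}{2}}\alpha$, $\Phi_i=\nabla_4^{\frac{n-4}{2}}\Psi^G$, $F^0_m=F^i_m=Err^\Psi_{m,\frac{n-4}{2}}$ (up to the change of time variable $\tau=\sqrt{v}$ via \eqref{formula for lie time derivative}), with the asymptotic expansions supplied by Section \ref{asymptotic completeness section} (or by \cite{selfsimilarvacuum}). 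So the strategy is: (i) first close the lower-order energy $\mathcal{L}$, the fractional lower-order energies $\mathcal{M}_l$, and the Ricci-coefficient norm $\mathcal{R}$ by standard $\nabla_\tau$-multiplier energy estimates applied to \eqref{model wave system 1} with $l\leq\frac{n-6}{2}$ and with up to $M$ (resp.\ $M+\frac{5}{2}$-type fractional) angular commutations — here the good sign of the zeroth-order coefficient $3+l-\frac n2<0$ when $l\leq\frac{n-6}{2}$ gives a favorable bulk term, exactly as in the standard de Sitter wave analysis; (ii) use the resulting control of $\mathcal{L},\mathcal{M}_l,\mathcal{R}$ together with the Sobolev inequalities to bound the pointwise norms $\mathcal{P},\mathcal{SP}$ and hence all nonlinear error terms $Err^\Psi_{m,\frac{n-4}{2}}$ in the form $\int_0^\tau\|F^i_m\|_{L^2}^2+\int_0^\tau\tau'\|F^i_M\|_{H^{1/2}}^2\lesssim$ (small)$\cdot(\mathcal{T}+\cdots)+\mathcal{D}$, using that after $M$ angular commutations these terms are essentially linear (one factor carries nearly all derivatives, the rest are $L^\infty$); (iii) invoke Theorem \ref{forward direction main result theorem general} for the full first model system to bound $\mathcal{T}$ (top order: $\tau^2\|\nabla_\tau\nabla^M\Phi\|^2_{H^{1/2}}+\tau\|\nabla^M\Phi\|^2_{H^{3/2}}+\|\Phi_i\|^2_{H^{M+1}}$) in terms of $\|\mathcal O\|^2_{H^{M+1}}+\|\mathfrak h\|^2_{H^{M+1}}+\sum\|\Phi_i^0\|^2_{H^{M+1}}$ plus the error integrals, where the data quantities are exactly part of $\mathcal{D}$; and (iv) for $\mathcal{S}=\|\nabla_4^{\frac{n-4}{2}}\alpha\|^2_{H^{M+1}}=\|\Phi_0\|^2_{H^{M+1}}$, use the second displayed inequality of Theorem \ref{forward direction main result theorem general} (equivalently \eqref{practical estimate alpha Y top order} for the singular component plus the lower-order bound for $\big(\Phi_0\big)_J$), which produces the $\big(1+|\log v|^2\big)$ factor from the $\mathcal O\log\tau$ term in the expansion of $\Phi_0$.

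The order I would carry out the steps: \textbf{Step 1}, reduce to $\{u=-1\}$ by self-similarity and translate the norms into the $\tau=\sqrt v$ variable using \eqref{formula for lie time derivative}, checking that $\nabla_4 = \frac{1}{2\tau}\partial_\tau$ turns \eqref{model wave system 1} into \eqref{first model system definition}. \textbf{Step 2}, set up the bootstrap: assume $\mathcal{T}+\mathcal{S}/(1+|\log v|^2)+\mathcal{L}+\sum_l\mathcal{M}_l+\mathcal{R}\leq A\epsilon^2$ on $\{0\leq v\leq v_*\}$ with $A$ large and $\epsilon$ small. \textbf{Step 3}, prove the lower-order estimates $\mathcal{L},\mathcal{M}_l$: commute \eqref{model wave system 1} with $\nabla^m$ (and $\nabla^M$ for the fractional $H^{5/2}$ pieces, using the geometric LP projections $P_k$ of Section \ref{LP Section} and multiplying by $2^k$), use the $\nabla_\tau$ multiplier, exploit the favorable zeroth-order term for $l\leq\frac{n-6}{2}$, and handle the LP commutators with Lemmas \ref{LP Projections lemma}, \ref{LP Projections lemma refined}, \ref{fractional LP bounds}. \textbf{Step 4}, prove $\mathcal{R}$ from the transport equations $\nabla_4\psi=\Psi+\psi\psi$ (Proposition \ref{null structure equations proposition} simplified) — these have no angular-derivative loss in the straight gauge, so integration in $\tau$ plus the already-bounded curvature suffices; the fractional $\psi^*$ bounds follow from the curvature fractional bounds via the same transport equations. \textbf{Step 5}, derive $\mathcal{P},\mathcal{SP}$ by Sobolev embedding ($N'=\frac M2+\frac n4$ leaves a comfortable margin below $M$), and bound all $Err^\Psi_{m,\frac{n-4}{2}}$ error integrals accordingly. \textbf{Step 6}, apply Theorem \ref{forward direction main result theorem general} to get $\mathcal{T}\lesssim\mathcal D+$ (small)$\cdot(\mathcal T+\cdots)$ and $\mathcal{S}\lesssim(1+|\log v|^2)\mathcal D+$ (small)$\cdot(\cdots)$; absorb the small terms, improving the bootstrap constant from $A$ to $A/2$, and conclude by continuity that $v_*=1$.

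The main obstacle I anticipate is \textbf{Step 6}, specifically verifying that the error integrals $\int_0^\tau\|F^i_m\|_{L^2}^2$ and $\int_0^\tau\tau'\|F^i_M\|^2_{H^{1/2}}$ coming from $Err^\Psi_{m,\frac{n-4}{2}}$ are genuinely controlled by $(\text{small})\cdot\Xi_M^2+\mathcal D$ rather than requiring more derivatives than are available. The delicate points are: (a) the terms $v\mathcal{F}_{(m)(l+1)(l+m+1)}(\Psi)$ and $v\mathcal{F}_{(m+1)(l)(l+m+1)}(\Psi)$ in $Err^\Psi_{ml}$ carry an extra $\nabla_4$ or $\nabla$ at the same level as the top-order unknown, but the prefactor $v=\tau^2$ and the $\psi^*$-smallness must be used to absorb them, and at the very top order $m=M$ this interacts with the fractional $H^{1/2}$ weights — one needs the refined LP bound \eqref{fractional LP bounds 1} and the $\tau$-weights in $\mathcal T$ to make $\int_0^\tau\tau'\|v\mathcal F_{(M+1)(\cdot)(\cdot)}\|_{H^{1/2}}^2$ close; (b) the terms $\nabla^m\nabla_4^l(\Psi\cdot\Psi^G)$ and the $\slashed{Riem}^{j+1}$ terms require that when one factor is taken at top order the others sit in $L^\infty$, which is exactly what $\mathcal P,\mathcal{SP}$ provide, but one must check the derivative-counting closes with $N'=\frac M2+\frac n4$; and (c) keeping track that $\mathfrak h$, not $h$, appears in $\mathcal D$ — this is forced by the $R_k\nabla^M\mathcal O$ renormalization in the low-frequency analysis (Proposition \ref{singular component low frequency proposition}), and one must confirm that the quantity Theorem \ref{forward direction main result theorem general} controls on the data side is precisely $\|\mathfrak h\|_{H^{M+1}}$ as defined in Definition \ref{asymptotic data set definition}. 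Everything else is routine energy estimation in the spirit of Section \ref{stability dS section}.
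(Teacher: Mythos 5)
Your outline is essentially the paper's proof: the top-order pieces $\mathcal{T}$ and $\mathcal{S}$ are controlled by applying Theorem~\ref{forward direction main result theorem general} to the first model system with $\Phi_0=\nabla_4^{\frac{n-4}{2}}\alpha$, $\Phi_i=\nabla_4^{\frac{n-4}{2}}\Psi^G$ and the $Err^\Psi$ terms as the inhomogeneities; the lower-order energies $\mathcal{L}$, the fractional pieces $\mathcal{M}_l$, and the Ricci-coefficient norm $\mathcal{R}$ are closed by standard $\nabla_\tau$-multiplier estimates on \eqref{model wave system 1} and the transport equations, exploiting the favorable zeroth-order coefficient for $l\leq\frac{n-6}{2}$; and the error integrals are tamed using pointwise $L^\infty$ norms plus the near-linearity after $M$ commutations. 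Your anticipated obstacle --- showing that the $F^i_m$ integrals close at top fractional order, that the $\slashed{Riem}$ and $\Psi\Psi^G$ products are bounded with one factor in $L^\infty$, and that the data quantity is $\mathfrak h$ rather than $h$ --- is exactly where the paper spends most of its effort (Lemma~\ref{lemma to bound general F error terms} and the two error-term propositions), and the final step is a Gronwall argument on the combined inequality, which is the analytic content of your bootstrap Step 6.

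The one place where you take a genuinely different route is the source of the pointwise norms $\mathcal{P}$ and $\mathcal{SP}$. You propose to derive them from Sobolev embedding inside the bootstrap (Step 5); the paper instead observes that $\Sigma(\slashed{g}_0,h)\in B^M_\epsilon(\Sigma_{\mathrm{Minkowski}})$ implies the weaker smallness condition \eqref{smallness assumption g}, and then simply invokes the already-proven region I and II estimates from Propositions~\ref{region I bounds proposition} and \ref{region II bounds proposition} (with $N_1,N_2=N'=\frac{M}{2}+\frac n4$) to get $\mathcal{P}\leq\epsilon$, $\mathcal{SP}\leq\epsilon(1+|\log v|^2)$ a priori, independently of the bootstrap. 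This buys two things: it removes the need to set up a continuity/bootstrap scheme at all (the paper just derives one global integral inequality on $\{0\leq v\leq 1\}$ and applies Gronwall once, after absorbing the $\epsilon\mathcal{R}$ term for small $\epsilon$), and it keeps the $L^\infty$ constants at $\epsilon$ rather than the $A^{1/2}\epsilon$ you would get from embedding the bootstrap quantity. Your Sobolev route is viable --- the derivative count with $N'=\frac M2+\frac n4$ does leave room below $M+\frac{n-4}{2}$ for the $H^{n/2+}$ loss, and the $\nabla_4^{\frac{n-2}{2}}\psi^*$ piece of $\mathcal{SP}$ follows from $\nabla_4\psi^*=\Psi+\psi\psi^*$ and the $\log$-growth of $\mathcal{S}$ --- but it makes the argument carry a bootstrap where the paper gets by with none.

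One small oversight worth flagging: after the top-order Corollary and all the error bounds, the final combined inequality in the paper has $\mathcal{S}$ appearing only divided by $1+|\log v|^2$, i.e.\ one Gronwalls the quantity $\mathcal{T}+\mathcal{L}+\mathcal{R}+\frac{1}{1+|\log v|^2}\mathcal{S}$ against a kernel of the form $(1+|\log v'|^2)^2(v')^{-1/2}$, which is integrable on $[0,1]$. Your bootstrap quantity is the same, so your Step 2 is consistent, but the integrability of the singular-in-$v$ kernel (in particular the $(v')^{-1/2}$ coming from your fractional $\mathcal{M}_l$ and the $\sqrt{v'}$-weights in $\mathcal{T}$, and the $|\log v'|^2$ coming from $\mathcal{SP}$) is the precise bookkeeping point that should be spelled out before invoking Gronwall.
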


We want to use the estimates of Section~\ref{stability dS section} as preliminary results. Since $\big\|\slashed{g}_0^*\big\|_{\mathring{H}^{M+1}}<\epsilon,$ we get by (\ref{Lie derivative in terms of covariant derivative}) that:
\[\sum_{m\leq M+1}\big\|\mathcal{L}_{\theta}^m\slashed{g}_0^*\big\|_{L^2(S^n)}\lesssim\epsilon.\]
Additionally, since $\big\|\mathcal{O}\big\|_{H^{M+1}}<\epsilon$ we also have that $\big\|(\log\nabla)\mathcal{O}\big\|_{H^{M}}\lesssim\epsilon.$ Thus, $\big\|\mathfrak{h}\big\|_{H^{M+1}}<\epsilon$ implies that $\big\|h\big\|_{H^{M}}\lesssim\epsilon,$ so the smallness condition (\ref{smallness assumption g}) holds. As a result, we can apply Propositions \ref{region I bounds proposition} and \ref{region II bounds proposition} for $N_1,N_2=N'=\frac{M}{2}+\frac{n}{4}$ to get that: \[\mathcal{P}\leq\epsilon,\ \mathcal{SP}\leq\epsilon\big(1+|\log(v)|^2\big)\]

The bounds for the top order energies $\mathcal{T}$ and $\mathcal{S}$ rely on the refined estimates proved in Section~\ref{model forward direction section} and \cite{Cwave}. Once we established these, we can bound the remaining norms $\mathcal{L},\mathcal{M},$ and $\mathcal{R}$ using standard estimates. We then bound the nonlinear error terms $Err^{\Psi},$ and we notice that similarly to Section~\ref{stability dS section}, these do not create significant difficulties since we commuted the equations with a high number of angular derivatives.

As a consequence of Theorem \ref{forward direction main result theorem general} for the system (\ref{first model system definition}), we obtain the following estimates for the top order energies:
\begin{corollary}
    The top order energy $\mathcal{T}$ and the mildly singular top order energy $\mathcal{S}$ satisfy the estimates for $0\leq v\leq 1:$
    \begin{align*}
        \mathcal{T}&\lesssim\mathcal{D}+\sum_{m=0}^{M}\int_0^{v}v'^{-\frac{1}{2}}\Big\|Err^{\Psi}_{m,\frac{n-4}{2}}\Big\|_{L^2}^2dv'+\int_0^{v}\Big\|Err^{\Psi}_{M,\frac{n-4}{2}}\Big\|_{H^{1/2}}^2dv'\\
        \mathcal{S}&\lesssim\big(1+|\log v|^2\big)\mathcal{D}+\sum_{m=0}^{M}\int_0^{v}v'^{-\frac{1}{2}}\Big\|Err^{\Psi}_{m,\frac{n-4}{2}}\Big\|_{L^2}^2dv'+\int_0^{v}\Big\|Err^{\Psi}_{M,\frac{n-4}{2}}\Big\|_{H^{1/2}}^2dv'
    \end{align*}
\end{corollary}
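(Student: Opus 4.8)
The plan is to deduce this Corollary directly from Theorem~\ref{forward direction main result theorem general} applied to the first model system, using the identification established at the end of Section~\ref{model systemmm}. Recall that with the choices $\Phi_0=\nabla_4^{\frac{n-4}{2}}\alpha$, $\Phi_i=\nabla_4^{\frac{n-4}{2}}\Psi^G$, $F_m^0=F_m^i=Err_{m,\frac{n-4}{2}}^{\Psi}$, the tensors $\Phi_0,\Phi_i$ satisfy the first model system \eqref{first model system definition} along $\{u=-1\}$ (with the required asymptotic expansions furnished by Section~\ref{asymptotic completeness section}, or equivalently by the Fefferman-Graham expansion of \cite{selfsimilarvacuum}). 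So the first step is simply to verify that the data norms on the right-hand side of Theorem~\ref{forward direction main result theorem general} — namely $\|\mathcal{O}\|_{H^{M+1}}^2$, $\|\mathfrak{h}\|_{H^{M+1}}^2$, and $\sum_i\|\Phi_i^0\|_{H^{M+1}}^2$ — are all controlled by $\mathcal{D}=\big\|\Sigma(\slashed{g}_0,h)\big\|_M^2$. The first two are literally summands of $\mathcal{D}$ by Definition~\ref{asymptotic data set definition}; the quantities $\Phi_i^0=\big(\nabla_4^{\frac{n-4}{2}}\Psi^G\big)\big|_{v=0}$ are exactly the boundary values appearing in the expansion at $\{v=0\}$ and are controlled by $\sum_{l}\sum_m\|\nabla_4^l\Psi^G\|_{H^{m+1}}^2$, which is again part of $\mathcal{D}$.

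Next I would translate the left-hand side of the model-system estimate into the top-order energies $\mathcal{T}$ and $\mathcal{S}$. By self-similarity it suffices to work along $\{u=-1\}$, and recall $\tau=\sqrt v$, so $\nabla_4=\frac{1}{2\tau}\partial_\tau$, i.e. $\nabla_{\tau}=2\tau\nabla_4$. Thus $\tau^2\|\nabla_\tau\nabla^M\Phi_0\|_{H^{1/2}}^2 = 4\tau^4\|\nabla_4\nabla^M\nabla_4^{\frac{n-4}{2}}\Psi\|_{H^{1/2}}^2$, which after substituting $\tau^2=v$ is comparable to the term $v^2\|\nabla_4\nabla^M\nabla_4^{\frac{n-4}{2}}\Psi\|_{H^{1/2}}^2$ appearing in $\mathcal{T}$ (the $v^2$ versus $v$ weight discrepancy is absorbed since $v\le 1$; and the precise matching comes from tracking the weights exactly as in Section~\ref{forward direction full system section}). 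Similarly $\tau^2\|\nabla\nabla^M\Phi_0\|_{H^{1/2}}^2$ and the terms $\sum_i\tau\|\nabla_\tau\nabla^M\Phi_i\|_{H^{1/2}}^2$, $\sum_i\tau\|\nabla\nabla^M\Phi_i\|_{H^{1/2}}^2$, $\sum_i\|\Phi_i\|_{H^{M+1}}^2$ on the left of Theorem~\ref{forward direction main result theorem general} reconstruct $v\|\nabla^M\nabla_4^{\frac{n-4}{2}}\Psi\|_{H^{3/2}}^2$ and $\|\nabla_4^{\frac{n-4}{2}}\Psi^G\|_{H^{M+1}}^2$, which together are $\mathcal{T}$; the second estimate of Theorem~\ref{forward direction main result theorem general} (the one with the $(1+|\log\tau|^2)$ factor on $\|\Phi_0\|_{H^{M+1}}^2$) gives $\mathcal{S}=\|\nabla_4^{\frac{n-4}{2}}\alpha\|_{H^{M+1}}^2\lesssim(1+|\log v|^2)\mathcal{D}+\ldots$. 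The inhomogeneous contributions $\sum_{m=0}^M\int_0^\tau\|F_m^i\|_{L^2}^2 + \sum_i\int_0^\tau\tau'\|F_M^i\|_{H^{1/2}}^2$ become, after the change of variables $d\tau = dv/(2\sqrt v)$ and $F_m^i = Err_{m,\frac{n-4}{2}}^{\Psi}$, exactly $\sum_{m=0}^M\int_0^v v'^{-1/2}\|Err_{m,\frac{n-4}{2}}^{\Psi}\|_{L^2}^2\,dv' + \int_0^v \|Err_{M,\frac{n-4}{2}}^{\Psi}\|_{H^{1/2}}^2\,dv'$ (the $\tau'd\tau'$ in the last term pairs with $\sqrt{v'}\cdot dv'/\sqrt{v'} = dv'$).

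The only genuine content beyond bookkeeping is confirming that the weight conventions and the Sobolev-space definitions match between the two statements, and that $F_m^i\in L^1_\tau([0,1])C^\infty(S^n)$ as required by the definition of the first model system — but this is guaranteed because the error terms $Err_{m,\frac{n-4}{2}}^{\Psi}$ are built from the smooth spacetime $(\mathcal{M},g)$ of Theorem~\ref{stability of de sitter theorem in section}, whose regularity near $\{v=0\}$ is precisely what makes the integrals finite (indeed the stated $v'^{-1/2}$ weight is exactly the borderline integrability encoded in the Fefferman-Graham expansion). I expect the main — though still routine — obstacle to be the careful matching of the powers of $v$ (equivalently $\tau$) in the norms $\mathcal{T},\mathcal{S}$ against the $\tau$-weighted norms in Theorem~\ref{forward direction main result theorem general}, since one must check that no weight is lost and that the $H^{1/2}$, $H^{3/2}$, $H^{M+1}$ spaces are used consistently; once this is done, the Corollary is immediate.
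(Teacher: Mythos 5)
Your proposal is correct and takes essentially the same route as the paper: apply Theorem~\ref{forward direction main result theorem general} with $\Phi_0=\nabla_4^{\frac{n-4}{2}}\alpha$, $\Phi_i=\nabla_4^{\frac{n-4}{2}}\Psi^G$, $F_m^i=Err^{\Psi}_{m,\frac{n-4}{2}}$, change variables $v=\tau^2$ (using $\nabla_\tau=2\tau\nabla_4$), and match the resulting weighted norms to $\mathcal{T}$, $\mathcal{S}$ and the data norm $\mathcal{D}$. The bookkeeping you flag — verifying $\|\mathcal{O}\|_{H^{M+1}}$, $\|\mathfrak{h}\|_{H^{M+1}}$, $\|\Phi_i^0\|_{H^{M+1}}$ are summands of $\mathcal{D}$, and that the stronger $\tau$-weights for $\Phi_i$ (giving $v^{3/2}$ and $\sqrt v$) dominate the $v^2$ and $v$ weights appearing in $\mathcal{T}$ since $v\le1$ — is precisely what the paper's proof does implicitly in the line "Using the previous notation for the initial data norm we obtain the conclusion."
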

\begin{proof}
    We recall that according to Section~\ref{model systemmm} and Remark~\ref{remark about model systems in section}: \[\Phi_0=\nabla_4^{\frac{n-4}{2}}\alpha,\ \Phi_i=\nabla_4^{\frac{n-4}{2}}\Psi^G,\ F_{m}^{0}=Err_{m,\frac{n-4}{2}}^{\Psi},\ F_{m}^{i}=Err_{m,\frac{n-4}{2}}^{\Psi}\]
    satisfy the first model system as defined in (\ref{first model system definition}) and also \cite[Definition~1.1]{Cwave}. The bounds on the background $\big(\mathcal{M},g\big)$ required in \cite[Theorem~1.1]{Cwave} follow by Theorem~\ref{stability of de sitter theorem in section}. We also have that $\big\|\slashed{Riem}(\slashed{g}_0)\big\|_{H^M}\lesssim\mathcal{D}\lesssim1,$ so we can apply Theorem~\ref{forward direction main result theorem general} with an implicit constant depending only on $M$. We change coordinates to $v=\tau^2$ to obtain:
    \[v^2\big\|\nabla_{4}\nabla^M\nabla_4^{\frac{n-4}{2}}\alpha\big\|^2_{H^{1/2}}+v\big\|\nabla^M\nabla_4^{\frac{n-4}{2}}\alpha\big\|^2_{H^{3/2}}\lesssim\]\[\lesssim\big\|\mathcal{O}\big\|^2_{H^{M+1}}+\big\|\mathfrak{h}\big\|^2_{H^{M+1}}+\big\|\nabla_4^{\frac{n-4}{2}}\Psi^G_0\big\|^2_{H^{M+1}}+\sum_{m=0}^{M}\int_0^{v}v'^{-\frac{1}{2}}\Big\|Err^{\Psi}_{m,\frac{n-4}{2}}\Big\|_{L^2}^2dv'+\int_0^{v}\Big\|Err^{\Psi}_{M,\frac{n-4}{2}}\Big\|_{H^{1/2}}^2dv'.\]
    \[\big\|\nabla_4^{\frac{n-4}{2}}\alpha\big\|^2_{H^{M+1}}\lesssim\big(1+|\log v|^2\big) \big\|\mathcal{O}\big\|^2_{H^{M+1}}+\big\|\mathfrak{h}\big\|^2_{H^{M+1}}+\big\|\nabla_4^{\frac{n-4}{2}}\Psi^G_0\big\|^2_{H^{M+1}}+\]\[+\sum_{m=0}^{M}\int_0^{v}v'^{-\frac{1}{2}}\Big\|Err^{\Psi}_{m,\frac{n-4}{2}}\Big\|_{L^2}^2dv'+\int_0^{v}\Big\|Err^{\Psi}_{M,\frac{n-4}{2}}\Big\|_{H^{1/2}}^2dv'.\]
    \[v^{\frac{3}{2}}\big\|\nabla_{4}\nabla^M\nabla_4^{\frac{n-4}{2}}\Psi^G\big\|^2_{H^{1/2}}+\sqrt{v}\big\|\nabla^M\nabla_4^{\frac{n-4}{2}}\Psi^G\big\|^2_{H^{3/2}}+\big\|\nabla_4^{\frac{n-4}{2}}\Psi^G\big\|^2_{H^{M+1}}\lesssim\]\[\lesssim\big\|\mathcal{O}\big\|^2_{H^{M+1}}+\big\|\mathfrak{h}\big\|^2_{H^{M+1}}+\big\|\nabla_4^{\frac{n-4}{2}}\Psi^G_0\big\|^2_{H^{M+1}}+\sum_{m=0}^{M}\int_0^{v}v'^{-\frac{1}{2}}\Big\|Err^{\Psi}_{m,\frac{n-4}{2}}\Big\|_{L^2}^2dv'+\int_0^{v}\Big\|Err^{\Psi}_{M,\frac{n-4}{2}}\Big\|_{H^{1/2}}^2dv'.\]
    Using the previous notation for the initial data norm we obtain the conclusion.
\end{proof}

We prove the following result for the lower order energy:
\begin{proposition}
    The lower order energy $\mathcal{L}$ satisfies the estimate for $0\leq v\leq 1:$
    \[\mathcal{L}\lesssim\mathcal{D}+\epsilon\mathcal{R}+\int_0^v\big(\mathcal{R}+\mathcal{T}+\mathcal{S}\big)dv'+\sum_{l=0}^{\frac{n-6}{2}}\sum_{m=0}^{M+\frac{n-4}{2}-l}\int_0^{v}\Big\|Err^{\Psi}_{m,l}\Big\|_{L^2}^2dv'.\]
\end{proposition}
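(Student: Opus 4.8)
The plan is to estimate the lower order energy $\mathcal{L}$ by treating the lower order system of wave equations \eqref{model wave system 1} directly, component by component, for each $0\leq l\leq\frac{n-6}{2}$, using standard energy estimates with $\nabla_4$ as a multiplier. Unlike at top order (the case $l=\frac{n-4}{2}$), these are subcritical: the coefficient $3+l-\frac{n}{2}$ in front of $\nabla_4\nabla^m\nabla_4^l\Psi$ is strictly positive, so the natural weighted energy $v\|\nabla_4^{l+1}\Psi\|^2_{H^m}+\|\nabla_4^l\Psi\|^2_{H^{m+1}}$ comes with a favorable bulk term and there is no need for the refined Littlewood--Paley analysis of Section~\ref{model forward direction section}. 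First I would, for each fixed $(m,l)$ with $0\leq l\leq\frac{n-6}{2}$ and $0\leq m\leq M+\frac{n-4}{2}-l$, contract the equation $v\nabla_4^2\nabla^m\nabla_4^l\Psi+(3+l-\tfrac{n}{2})\nabla_4\nabla^m\nabla_4^l\Psi-\Delta\nabla^m\nabla_4^l\Psi=\psi\nabla^{m+1}\nabla_4^l\Psi+Err^{\Psi}_{ml}$ with $\nabla_4\nabla^m\nabla_4^l\Psi$, integrate by parts over $S_v$, and integrate in $v$ from $0$ to the current value, using the asymptotic expansions at $\{v=0\}$ from the definition of the first model system (i.e.\ $\nabla_4^l\Psi^G\to(\nabla_4^l\Psi^G)|_{(-1,0)}$ and $\nabla_4^l\alpha\to(\nabla_4^l\alpha)|_{(-1,0)}$ for $l\leq\frac{n-6}{2}$, with $O(v|\log v|^2)$ remainders) to control the data terms at $v=0$ by $\mathcal{D}$. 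The commutator $[\nabla_4,\Delta]$ produces lower order terms that get absorbed into the induction on $l$, and the commutator $[\nabla_4,\nabla^m]$ and the term $\psi\nabla^{m+1}\nabla_4^l\Psi$ are handled exactly as in Section~\ref{stability dS section}: because $m$ is large, at most one factor carries top regularity and the rest are bounded pointwise by $\mathcal{P}$ or $\mathcal{SP}$, which are $\lesssim\epsilon$ times logarithmic factors.

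The bookkeeping is to run the estimate by downward induction on $l$ (or upward, depending on which direction the $\nabla_4^{l+1}$ term from $[\nabla_4,\Delta]$-type commutators points): having controlled the energies for $l'>l$, the terms coupling $\nabla_4^l\Psi$ to $\nabla_4^{l'}\Psi$ with $l'>l$ are already bounded, while the terms with $l'\leq l$ feed into the Gronwall argument. The error term $Err^{\Psi}_{ml}$ needs to be split into its top order piece $\psi\nabla^m\nabla_4^l\Psi$ (absorbed as above) and the genuinely lower order pieces $Err^{\Psi,lot}_{ml}$; the latter, expanded by the Leibniz rule, consist of products where every factor except one is controlled in $L^\infty$ by $\mathcal{P}$ or $\mathcal{SP}$, and the remaining factor is controlled in $L^2$ by $\mathcal{R}$, $\mathcal{T}$, or $\mathcal{S}$ (the appearance of $\mathcal{S}$ and the $\log v$ weight is exactly why $\mathcal{S}$ must occur on the right-hand side and why we tolerate $\int_0^v\mathcal{S}\,dv'$ rather than a pointwise bound). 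One also needs to track the terms $\nabla^m\nabla_4^l\nabla(\psi\psi)$ and $\nabla^m\nabla_4^l\nabla(\psi\psi\psi)$ and the $\slashed{Riem}^{j+1}$ terms in $Err^{\Psi}_{ml}$, which are all controlled the same way, using $\slashed{Riem}=R+\psi\psi$ and the fact that $R$ is among the $\Psi$ and hence in $\mathcal{T}$, $\mathcal{S}$, or $\mathcal{L}$. The $\epsilon\mathcal{R}$ term on the right-hand side arises from the genuinely quadratic interactions $\psi^*\cdot\nabla^{m}\nabla_4^l\psi^*$ where the low-regularity factor of $\psi^*$ is estimated in $L^\infty$ by $\mathcal{P}\lesssim\epsilon$; it is later absorbed once $\mathcal{R}$ itself is controlled.

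The main obstacle I anticipate is the careful treatment of the borderline terms in $Err^{\Psi}_{ml}$ that involve either $\nabla_4^{\frac{n-4}{2}}\alpha$ (through $\Psi$ in the schematic products, or through $\nabla_4^l\alpha$ at $l=\frac{n-6}{2}$ differentiated once more) or $\nabla_4^{\frac{n-2}{2}}\psi^*$, since these are the quantities with logarithmic growth at $\{v=0\}$. For these one must verify that the $v$-weight and the integration $\int_0^v dv'$ are enough to beat the $|\log v'|^2$ growth — which they are, since $\int_0^1 |\log v'|^2\,dv'<\infty$ and the relevant terms come with at least one power of $v'$ from the equation structure — so the bound closes with $\mathcal{D}+\epsilon\mathcal{R}+\int_0^v(\mathcal{R}+\mathcal{T}+\mathcal{S})\,dv'+\sum_{l,m}\int_0^v\|Err^{\Psi}_{ml}\|_{L^2}^2\,dv'$ after a final application of Gronwall in $v$. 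A secondary technical point is making sure the range of $m$ allowed (up to $M+\frac{n-4}{2}-l$) is consistent with the regularity available for every factor appearing in the commutators and error terms; this is exactly the role of the index sets $H$ and $L$ from Section~\ref{stability dS section} and requires only that $M$ (hence $N'$) be taken large enough, with $c_0 n$ room to spare.
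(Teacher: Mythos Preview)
Your approach is essentially the paper's: contract \eqref{model wave system 1} with $\nabla_4\nabla^m\nabla_4^l\Psi$, integrate, handle commutators, and close by Gronwall. However, your claim that the coefficient $3+l-\tfrac{n}{2}$ is strictly positive is false. For $0\leq l\leq\tfrac{n-6}{2}$ one has $3+l-\tfrac{n}{2}\leq 0$ (with equality at $l=\tfrac{n-6}{2}$), so after integrating by parts the bulk term $\int_0^v(3+l-\tfrac{n}{2}-\tfrac{1}{2})\|\nabla_4\nabla^m\nabla_4^l\Psi\|_{L^2}^2\,dv'$ has the \emph{unfavorable} sign in the forward direction and must go to the right-hand side. (You may be thinking of the backward estimate in Section~\ref{backward direction full system section}, where $n-2l-5\geq 1$ does give a good bulk.) The argument still closes, but for a different reason than the one you give: write $\nabla_4\nabla^m\nabla_4^l\Psi=\nabla^m\nabla_4^{l+1}\Psi+[\nabla_4,\nabla^m]\nabla_4^l\Psi$; for $l+1=\tfrac{n-4}{2}$ the first piece is bounded by $\mathcal{T}+\mathcal{S}$, for $l+1\leq\tfrac{n-6}{2}$ it is bounded by $\mathcal{L}$, and the commutator is schematically $\psi\cdot\nabla_4^l\Psi$ in $H^m$. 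This produces $\mathcal{L}(v)\lesssim\mathcal{D}+\int_0^v(\mathcal{L}+\mathcal{T}+\mathcal{S}+\ldots)\,dv'$, which closes by Gronwall in $\mathcal{L}$. This is exactly what the paper does (summing all $l$ simultaneously rather than inducting, but the two are interchangeable here).

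A minor correction on the source of $\epsilon\mathcal{R}$: it comes from the commutator term $v\|\psi\nabla_4^l\Psi\|_{H^m}^2$ needed to pass between $v\|\nabla_4\nabla^m\nabla_4^l\Psi\|_{L^2}^2$ and the quantity $v\|\nabla_4^{l+1}\Psi\|_{H^m}^2$ appearing in $\mathcal{L}$. When many angular derivatives land on $\psi^*$ one estimates $\|\psi^*\|_{H^m}^2\lesssim\mathcal{R}$ and places the remaining $\nabla_4^l\Psi$ in $W^{N',\infty}$, bounded by $\mathcal{P}\lesssim\epsilon$; this gives $\epsilon\mathcal{R}$, not a $\psi^*\cdot\psi^*$ interaction as you wrote.
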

\begin{proof}
We recall that by (\ref{model wave system 1}), for every $0\leq l\leq\frac{n-6}{2}$ and $0\leq m\leq M+\frac{n-4}{2}-l$ we have that all curvature components $\Psi$ satisfy the commuted equation:
\[v\nabla_4^2\nabla^m\nabla_4^l\Psi+\bigg(3+l-\frac{n}{2}\bigg)\nabla_4\nabla^m\nabla_4^l\Psi-4\Delta\nabla^m\nabla_4^l\Psi=\sum_{\mu}\psi\nabla^{m+1}\nabla_4^l\Psi_{\mu}+Err_{ml}^{\Psi}\]
We contract each equation with $\nabla_4\nabla^m\nabla_4^l\Psi$, then sum over all $m,l$ in the admissible range and all curvature components $\Psi$ to obtain:
\[\sum_{l=0}^{\frac{n-6}{2}}\sum_{m=0}^{M+\frac{n-4}{2}-l}v\big\|\nabla_4\nabla^m\nabla_4^{l}\Psi\big\|^2_{L^2}+\big\|\nabla\nabla_4^l\Psi\big\|^2_{H^{m}}\lesssim\mathcal{D}+\sum_{l=0}^{\frac{n-6}{2}}\sum_{m=0}^{M+\frac{n-4}{2}-l}\int_0^v\big\|\nabla_4\nabla^m\nabla_4^l\Psi\big\|^2_{L^2}+\]\[+\sum_{l=0}^{\frac{n-6}{2}}\sum_{m=0}^{M+\frac{n-4}{2}-l}\int_0^v\big\|[\nabla,\nabla_4]\nabla^m\nabla_4^l\Psi\big\|^2_{L^2}+\big\|\nabla^{m+1}\nabla_4^l\Psi\big\|^2_{L^2}+\sum_{l=0}^{\frac{n-6}{2}}\sum_{m=0}^{M+\frac{n-4}{2}-l}\int_0^v\big\|Err_{ml}^{\Psi}\big\|^2_{L^2}\]
We also have the estimate:
\[\big\|\nabla_4^l\Psi\big\|^2_{L^2}\lesssim\mathcal{D}+\int_0^v\big\|\nabla_4^{l+1}\Psi\big\|^2_{L^2}\lesssim\ldots\lesssim\mathcal{D}+\int_0^v\big\|\nabla_4^{\frac{n-4}{2}}\Psi\big\|^2_{L^2}\lesssim\mathcal{D}+\int_0^v\mathcal{T}+\mathcal{S}\]
We use the commutation formulas and Gronwall to obtain:
\[\sum_{l=0}^{\frac{n-6}{2}}\sum_{m=0}^{M+\frac{n-4}{2}-l}v\big\|\nabla_4\nabla^m\nabla_4^{l}\Psi\big\|^2_{L^2}+\big\|\nabla_4^l\Psi\big\|^2_{H^{m+1}}\lesssim\]\[\lesssim\mathcal{D}+\sum_{l=0}^{\frac{n-6}{2}}\sum_{m=0}^{M+\frac{n-4}{2}-l}\int_0^v\mathcal{T}+\mathcal{S}+\big\|\nabla^m\nabla_4^{l+1}\Psi\big\|^2_{L^2}+\big\|\psi\nabla_4^l\Psi\big\|^2_{H^m}+\sum_{l=0}^{\frac{n-6}{2}}\sum_{m=0}^{M+\frac{n-4}{2}-l}\int_0^v\big\|Err_{ml}^{\Psi}\big\|^2_{L^2}\]
Once again we use Gronwall and bound the other terms using $\mathcal{T},\mathcal{S},$ and $\mathcal{R}$:
\[\sum_{l=0}^{\frac{n-6}{2}}\sum_{m=0}^{M+\frac{n-4}{2}-l}v\big\|\nabla_4\nabla^m\nabla_4^{l}\Psi\big\|^2_{L^2}+\big\|\nabla_4^l\Psi\big\|^2_{H^{m+1}}\lesssim\mathcal{D}+\int_0^v\big(\mathcal{T}+\mathcal{S}+\mathcal{R}\big)+\sum_{l=0}^{\frac{n-6}{2}}\sum_{m=0}^{M+\frac{n-4}{2}-l}\int_0^v\big\|Err_{ml}^{\Psi}\big\|^2_{L^2}\]
We use this and the lower order pointwise bound to also obtain:
\[\sum_{l=0}^{\frac{n-6}{2}}\sum_{m=0}^{M+\frac{n-4}{2}-l}v\big\|\psi\nabla_4^{l}\Psi\big\|^2_{H^m}\lesssim\sum_{m=N'}^{M+\frac{n-4}{2}}\big\|\psi^*\big\|^2_{H^m}\cdot\sum_{l=0}^{\frac{n-6}{2}}\big\|\nabla_4^{l}\Psi\big\|^2_{W^{N',\infty}}+\sum_{l=0}^{\frac{n-6}{2}}\sum_{m=0}^{M+\frac{n-4}{2}-l}\big\|\nabla_4^{l}\Psi\big\|^2_{H^m}\cdot\big\|\psi\big\|^2_{W^{N',\infty}}\]
\[\lesssim\epsilon\mathcal{R}+\mathcal{D}+\int_0^v\big(\mathcal{T}+\mathcal{S}+\mathcal{R}\big)+\sum_{l=0}^{\frac{n-6}{2}}\sum_{m=0}^{M+\frac{n-4}{2}-l}\int_0^v\big\|Err_{ml}^{\Psi}\big\|^2_{L^2}\]
We obtain the conclusion by using the commutation formulas:
\[\mathcal{L}\lesssim\sum_{l=0}^{\frac{n-6}{2}}\sum_{m=0}^{M+\frac{n-4}{2}-l}v\big\|\psi\nabla_4^{l}\Psi\big\|^2_{H^m}+\sum_{l=0}^{\frac{n-6}{2}}\sum_{m=0}^{M+\frac{n-4}{2}-l}v\big\|\nabla_4\nabla^m\nabla_4^{l}\Psi\big\|^2_{L^2}+\big\|\nabla_4^l\Psi\big\|^2_{H^{m+1}}.\]
\end{proof}

We prove the following result for the fractional lower order energy:
\begin{proposition}
    The fractional lower order energy $\mathcal{M}_l$ satisfies the estimate for any $0\leq l\leq\frac{n-6}{2}$ and $0\leq v\leq 1:$
    \[\mathcal{M}_l\lesssim\mathcal{T}+\mathcal{S}+\mathcal{L}+\mathcal{R}+\big\|Err_{M,l}^{\Psi}\big\|^2_{H^{1/2}}.\]
\end{proposition}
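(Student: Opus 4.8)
The plan is to adapt the energy estimate strategy already used for the lower order energy $\mathcal{L}$, but this time commuting with $\nabla^M$ angular derivatives and tracking fractional Sobolev regularity via the Littlewood--Paley projections $P_k$ of Section~\ref{LP Section}. Recall that by \eqref{model wave system 1}, for fixed $0\leq l\leq\frac{n-6}{2}$ and any $m$, all curvature components satisfy
\[
v\nabla_4^2\nabla^m\nabla_4^l\Psi+\Big(3+l-\tfrac{n}{2}\Big)\nabla_4\nabla^m\nabla_4^l\Psi-4\Delta\nabla^m\nabla_4^l\Psi=\psi\nabla^{m+1}\nabla_4^l\Psi+Err_{ml}^{\Psi},
\]
and for $l\leq\frac{n-6}{2}$ the coefficient $3+l-\frac n2$ is negative (since $l\le \frac n2-3$), which produces a favorable bulk term when contracting with the multiplier $\nabla_4\nabla^m\nabla_4^l\Psi$ — exactly the mechanism exploited in the proof for $\mathcal{L}$. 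First I would commute this equation with $\nabla^M$ and with $P_k$, contract with $P_k\nabla_4\nabla^M\nabla_4^l\Psi$, integrate by parts over $S_{-1,v}$, and multiply the resulting estimate by $2^{3k}$ so as to land at $H^{3/2}$ regularity on the top-order quantities and $H^{5/2}$ on $\nabla^M\nabla_4^l\Psi$ itself after summing in $k$ using Corollary~7.12 of \cite{geometricLP}. Because this is genuinely a lower-order problem (we are one $\nabla_4$ derivative below the singular threshold $\frac{n-4}{2}$), the solution is regular at $v=0$ and no Bessel-type splitting into singular/regular components is needed; a direct Gronwall argument on $v\in[0,1]$ closes the estimate, with the data term absorbed into $\mathcal{D}$ and then into $\mathcal{T}+\mathcal{S}+\mathcal{L}+\mathcal{R}$ at the initial slice.

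The key steps, in order, are: (i) write the $P_k$-commuted wave equation, isolating the commutators $[\nabla_4,P_k]$, $[\nabla^M,P_k]$, and $[P_k,\psi]$; (ii) perform the weighted energy estimate with the $\nabla_4$-multiplier, using the good bulk term coming from the negative coefficient $3+l-\frac n2$ together with Cauchy--Schwarz on the curvature and inhomogeneous terms; (iii) multiply by $2^{3k}$, sum over $k\geq0$, and use the almost-orthogonality and finite band bounds of Lemma~\ref{LP Projections lemma}, the refined commutator bounds \eqref{LP refined 2} and \eqref{LP refined 4}, and the fractional bound \eqref{fractional LP bounds 1} of Lemma~\ref{fractional LP bounds} to control the commutator error terms — these are the terms that force the appearance of the slightly different projection operators $\widetilde P_k$, $\underline{\widetilde P}_k$, but they are summable in $k$ with the $2^{3k}$ weight absorbed and hence contribute only $\lesssim\mathcal{T}+\mathcal{S}+\mathcal{L}+\mathcal{R}$; (iv) handle the bulk integral $\int_0^v(\cdots)dv'$ via Gronwall, noting that every intermediate-order quantity appearing under the integral is already controlled by $\mathcal{T}$, $\mathcal{S}$, $\mathcal{L}$, or $\mathcal{R}$; (v) bound the nonlinear term $\psi\nabla^{M+1}\nabla_4^l\Psi$ at $H^{1/2}$ by splitting the $\psi$ factor into its Minkowski value (trivial) and $\psi^*$, using the lower order pointwise norms $\mathcal{P}$ to put $\psi^*$ in $L^\infty$ and the top-order angular derivatives on $\Psi$ — since we commuted with a high number of angular derivatives this is essentially linear, just as in Section~\ref{stability dS section} and the proof of the lower order energy above; (vi) finally, collect $\|Err_{M,l}^{\Psi}\|_{H^{1/2}}^2$ as the genuine inhomogeneous contribution, the remaining pieces of $Err_{M,l}^{\Psi}$ at lower differential order being already subsumed in $\mathcal{L}+\mathcal{R}$ by the structure of the error-term notation $\mathcal{F}_{mlp}$.

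The main obstacle I expect is step (iii): controlling the Littlewood--Paley commutator error terms after multiplying by $2^{3k}$. Naively, $[\nabla_4,P_k]$ only gains back boundedness in $L^2$ (cf.\ \eqref{LP bounds 4}), which is not enough against the $2^{3k}$ weight; one must use the refined estimate \eqref{LP refined 2}, $\|[\nabla_4,P_k]F\|_{L^2}\lesssim\|\underline{\widetilde P}_kF\|_{L^2}+2^{-k}\|F\|_{L^2}$, and its $\nabla$-differentiated and fractional analogues \eqref{LP refined 4} and \eqref{fractional LP bounds 1}, so that after summing in $k$ one recovers a full derivative worth of control at the cost of the almost-orthogonality loss $2^{-4|k-k'|}$, which is harmless. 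A secondary subtlety is that, because the projections are time dependent, the commutator $[\nabla_t,P_k]$ is governed by \eqref{LP bounds 1}, $[\nabla_t,P_k]=\tfrac{t}{2^{2k-1}}[\nabla_4,P_k]$; the extra factor $t/2^{2k}$ is in fact favorable and helps absorb these terms, but one has to bookkeep it carefully alongside the $v$-weights $v^2$, $v$, etc.\ that appear in $\mathcal{T}$. Once these commutator terms are dispatched, the remainder is a routine Gronwall argument on $[0,1]$ identical in spirit to the preceding propositions for $\mathcal{L}$ and $\mathcal{M}_l$.
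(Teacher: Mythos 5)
Your plan takes a fundamentally different route from the paper, and as it stands it has a genuine gap.

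\textbf{What the paper actually does.} The proposition is proved as an \emph{elliptic} estimate at a single, fixed time slice $S_{-1,v}$, with no time integration and no Gronwall. The point is that $\mathcal{M}_l=\|\nabla^M\nabla_4^l\Psi\|_{H^{5/2}}^2$ is exactly one more angular derivative (at $H^{1/2}$ regularity) than the lower order energy $\mathcal{L}$ controls, and the wave equation \eqref{model wave system 1} lets you trade this Laplacian for $\nabla_4$ derivatives that are already controlled. Concretely, by the finite band / Sobolev estimates of Section~\ref{LP Section},
\[
\big\|\nabla^M\nabla_4^l\Psi\big\|^2_{H^{5/2}}\lesssim\big\|\nabla_4^l\Psi\big\|^2_{H^{M+2}}+\big\|\Delta\nabla^M\nabla_4^l\Psi\big\|^2_{H^{1/2}},
\]
and then \eqref{model wave system 1} gives
\[
\Delta\nabla^M\nabla_4^l\Psi = v\nabla_4^2\nabla^M\nabla_4^l\Psi+\Big(3+l-\tfrac{n}{2}\Big)\nabla_4\nabla^M\nabla_4^l\Psi-\psi\nabla^{M+1}\nabla_4^l\Psi-Err_{Ml}^{\Psi}.
\]
Each term on the right is bounded, at that same time slice, using the Bianchi equations to convert a further $\nabla_4$ into $\nabla$ plus $\psi\Psi$ and land back in $\mathcal{T}$, $\mathcal{S}$, $\mathcal{L}$, $\mathcal{R}$, or $\|Err_{M,l}^{\Psi}\|_{H^{1/2}}^2$. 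There is no $P_k$-commuted energy estimate, no $\nabla_4$ multiplier, and no initial-data term.

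\textbf{Why your approach fails.} You propose a $\nabla_4$-multiplier energy estimate commuted with $P_k$, weighted by $2^{3k}$ and summed, closed by Gronwall on $v\in[0,1]$. Such an argument needs the $H^{5/2}$ energy at the starting slice $v=0$ to be controlled, and you claim it is ``absorbed into $\mathcal{D}$ and then into $\mathcal{T}+\mathcal{S}+\mathcal{L}+\mathcal{R}$ at the initial slice.'' This is not true at the top value $l=\tfrac{n-6}{2}$: there $\mathcal{L}$ gives $\nabla_4^l\Psi$ in $H^{M+2}$ at $v=0$ (since the admissible range for $m$ in $\mathcal{L}$ stops at $M+\frac{n-4}{2}-l=M+1$), but your target $\|\nabla^M\nabla_4^l\Psi\|_{H^{5/2}}$ requires $H^{M+5/2}$ on $\nabla_4^l\Psi$, which is strictly stronger and is not in $\mathcal{D}$ either (the asymptotic data norm also caps at $H^{M+2}$ for $l=\tfrac{n-6}{2}$). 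So the data term would not close. The deeper issue is conceptual: $\mathcal{M}_l$ is not an evolutionary quantity here — the extra half derivative is recovered from the equation's ellipticity at each fixed $v$, not propagated in time. Your commutator bookkeeping with $[\nabla_4,P_k]$ is correct technology but applied to the wrong kind of estimate; the paper needs none of it for this proposition.

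A secondary inaccuracy in your plan: you note ``the coefficient $3+l-\frac n2$ is negative'' as producing a favorable bulk term, but for the first model system \eqref{model wave system 1} used in this section, the relevant mechanism is not a signed bulk term at all, and in any case the sign would give a favorable \emph{integrated} quantity, not the pointwise-in-$v$ bound the proposition asserts.
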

\begin{proof}
    We have the estimate for any $0\leq l\leq\frac{n-6}{2}$:
    \[\big\|\nabla^M\nabla_4^l\Psi\big\|^2_{H^{5/2}}\lesssim\big\|\nabla_4^l\Psi\big\|^2_{H^{M+2}}+\big\|\Delta\nabla^M\nabla_4^l\Psi\big\|^2_{H^{1/2}}\]
    \[\lesssim\mathcal{L}+\big\|v\nabla_4^2\nabla^M\nabla_4^l\Psi\big\|^2_{H^{1/2}}+\big\|\nabla_4\nabla^M\nabla_4^l\Psi\big\|^2_{H^{1/2}}+\big\|\nabla^{M+1}\nabla_4^l\Psi\big\|^2_{H^{1/2}}+\big\|Err_{M,l}^{\Psi}\big\|^2_{H^{1/2}}\]
    \[\lesssim\mathcal{L}+\big\|v\nabla_4\nabla^M\nabla_4^{l+1}\Psi\big\|^2_{H^{\frac{1}{2}}}+\big\|v\nabla_4\nabla^M\big(\psi\nabla_4^l\Psi\big)\big\|^2_{H^{\frac{1}{2}}}+\big\|\nabla^M\nabla_4^{l+1}\Psi\big\|^2_{H^{\frac{1}{2}}}+\big\|\nabla^M\big(\psi\nabla_4^l\Psi\big)\big\|^2_{H^{\frac{1}{2}}}+\big\|Err_{M,l}^{\Psi}\big\|^2_{H^{\frac{1}{2}}}\]
    \[\lesssim\mathcal{L}+\mathcal{T}+\mathcal{S}+\mathcal{R}+\big\|v\nabla_4\nabla^M\nabla_4^{l+1}\Psi\big\|^2_{H^{1/2}}+\big\|v\nabla_4\nabla^M\big(\psi\nabla_4^l\Psi\big)\big\|^2_{H^{1/2}}+\big\|Err_{M,l}^{\Psi}\big\|^2_{H^{1/2}}\]
    For $l=\frac{n-6}{2},$ we have the bound:
    \[\big\|v\nabla_4\nabla^M\nabla_4^{\frac{n-4}{2}}\Psi\big\|^2_{H^{1/2}}\lesssim\mathcal{T}.\]
    For $0\leq l\leq\frac{n-8}{2}$, we have:
    \[\big\|v\nabla_4\nabla^M\nabla_4^{l+1}\Psi\big\|^2_{H^{1/2}}\lesssim\big\|v\nabla^M\nabla_4^{l+2}\Psi\big\|^2_{H^{1/2}}+\big\|v\nabla^M\big(\psi\nabla_4^{l+1}\Psi\big)\big\|^2_{H^{1/2}}\lesssim\mathcal{T}+\mathcal{S}+\mathcal{L}+\mathcal{R}.\]
    So far, we proved that:
    \[\mathcal{M}_l\lesssim\mathcal{T}+\mathcal{S}+\mathcal{L}+\mathcal{R}+\big\|v\nabla_4\nabla^M\big(\psi\nabla_4^l\Psi\big)\big\|^2_{H^{1/2}}+\big\|Err_{M,l}^{\Psi}\big\|^2_{H^{1/2}}\]
    We can conclude since we also have the bounds using the null structure equations:
    \[\big\|v\nabla_4\nabla^M\big(\psi\nabla_4^l\Psi\big)\big\|^2_{H^{1/2}}\lesssim\big\|v\nabla^M\nabla_4\big(\psi\nabla_4^l\Psi\big)\big\|^2_{H^{1/2}}+\big\|v\nabla^M\big(\psi^2\nabla_4^l\Psi\big)\big\|^2_{H^{1/2}}\lesssim\]
    \[\lesssim\big\|v\nabla^M\big(\psi\nabla_4^{l+1}\Psi\big)\big\|^2_{H^{1/2}}+\big\|v\nabla^M\big((\Psi+\psi\psi)\nabla_4^l\Psi\big)\big\|^2_{H^{1/2}}+\mathcal{L}+\mathcal{R}\lesssim\mathcal{T}+\mathcal{S}+\mathcal{L}+\mathcal{R}.\]
\end{proof}

We prove the following result for the Ricci coefficients:
\begin{proposition}
    The Ricci coefficients norm $\mathcal{R}$ satisfies the estimate for any $0\leq v\leq 1:$
    \[\mathcal{R}\lesssim\mathcal{D}+\int_0^v\big((v')^{-1/2}\mathcal{T}+\mathcal{S}+\mathcal{L}+\mathcal{R}+(v')^{1/2}\mathcal{M}_{\frac{n-6}{2}}\big)dv'.\]
\end{proposition}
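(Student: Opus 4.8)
The plan is to transport the Ricci coefficients from $\{v=0\}$ using the null structure equations of Proposition~\ref{null structure equations proposition}, written schematically along $\{u=-1\}$, and to absorb the resulting error terms into the norms $\mathcal{T},\mathcal{S},\mathcal{L},\mathcal{M}_{\frac{n-6}{2}},\mathcal{R}$ already controlled above. Recall that in the straight self-similar setting the nontrivial Ricci coefficients $\psi\in\{\tr\chi,\hat\chi,\tr\underline\chi,\underline{\hat\chi}\}$ satisfy transport equations of the schematic form $\nabla_4\psi^*=\Psi^G+\psi\psi^*$ (with no angular derivatives appearing, as noted in the remark following Proposition~\ref{Bianchi equation proposition}), together with the algebraic relation $u\underline\chi+v\chi=\slashed g$ which recovers $\underline\chi$ from $\chi$. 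The key point is that the right-hand side only involves the \emph{good} curvature components $\Psi^G$ and not $\alpha$; this is exactly what will let us pick up the favorable $(v')^{-1/2}$ weight on $\mathcal{T}$ and a clean $\mathcal{S}$ contribution, rather than a non-integrable singularity.

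First I would commute the transport equation with $\nabla^m\nabla_4^l$ for all $(m,l)$ with $0\le l\le\frac{n-4}{2}$, $0\le m\le M+\frac{n-4}{2}-l$, using Lemma~\ref{commutation with nabla lemma} and Lemma~\ref{commutation with nabla 4 lemma}, to obtain $\nabla_4\nabla^m\nabla_4^l\psi^* = \nabla^m\nabla_4^l\Psi^G + \mathcal{F}^{(4)}$-type error terms schematically of the form $\mathcal{F}_{(m)(l)(m+l)}(\psi^*)$ plus commutator terms that are quadratic in $\psi$ (arising from $[\nabla_4,\nabla^m]$) and curvature$\times$metric terms (arising from $[\nabla_4,\nabla_4^l]$ combined with $\slashed{Riem}=R+\psi\psi$). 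Then I integrate in $v$ from $0$, contracting with $\nabla^m\nabla_4^l\psi^*$, and apply Gronwall. The inhomogeneous curvature term $\|\nabla^m\nabla_4^l\Psi^G\|_{L^2}$ is bounded: for $l\le\frac{n-6}{2}$ by $\mathcal{L}$ (and by $\mathcal{M}_{\frac{n-6}{2}}$ at the highest angular regularity combined with a $(v')^{1/2}$ weight needed to reach $m=M+1$), while for $l=\frac{n-4}{2}$ it is $\|\nabla^m\nabla_4^{\frac{n-4}{2}}\Psi^G\|_{L^2}\lesssim \mathcal{T}^{1/2}+\mathcal{S}^{1/2}$, and at top order $m=M+1$ one uses the $v^{1/2}$ in front of $\|\nabla^M\nabla_4^{\frac{n-4}{2}}\Psi\|_{H^{3/2}}$ inside $\mathcal{T}$ to produce the $(v')^{-1/2}\mathcal{T}$ weight. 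The quadratic error $\mathcal{F}_{(m)(l)(m+l)}(\psi^*)$ is handled by splitting the angular derivatives: the low-regularity factor ($\le N'$ derivatives) is controlled pointwise by $\mathcal{P}\le\epsilon$ (or $\mathcal{SP}$ at $l=\frac{n-2}{2}$), and the high-regularity factor by $\mathcal{R}$, yielding a contribution $\lesssim\int_0^v\mathcal{R}\,dv'$; the curvature$\times$metric commutator terms are bounded analogously, using the pointwise metric bounds and $\mathcal{L}$.

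For the two extra fractional terms $\sum_{k=0,1}\|\nabla^{M+1+k}\nabla_4^{\frac{n-4}{2}-k}\psi^*\|_{H^{1/2}}^2$ in $\mathcal{R}$, the plan is to commute the transport equation additionally with the Littlewood–Paley projections $P_j$ of Section~\ref{LP Section}, multiply each estimate by $2^j$ to gain half a derivative, and sum in $j$; the commutator terms $[\nabla_4,P_j]$ are controlled by the refined bounds of Lemmas~\ref{LP Projections lemma refined} and \ref{fractional LP bounds}, which are summable in $j$. The inhomogeneous term then requires $\|\nabla^{M+1}\nabla_4^{\frac{n-4}{2}}\Psi^G\|_{H^{1/2}}$ (controlled by $\mathcal{T}$ after multiplying by $v^{1/2}$, giving the $(v')^{-1/2}\mathcal{T}$ weight) and $\|\nabla^{M+2}\nabla_4^{\frac{n-6}{2}}\Psi^G\|_{H^{1/2}}$ (which is $\mathcal{M}_{\frac{n-6}{2}}$ after a $v^{1/2}$ weight, giving $(v')^{1/2}\mathcal{M}_{\frac{n-6}{2}}$). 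Collecting all contributions and applying Gronwall yields the stated inequality; the initial data term $\mathcal{D}$ appears as the value of these norms at $v=0$, which is precisely $\|\Sigma(\slashed g_0,h)\|_M^2$ by Definition~\ref{asymptotic data set definition}.

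\textbf{Main obstacle.} The delicate point is extracting the $(v')^{-1/2}\mathcal{T}$ weight rather than a logarithmically-divergent or $v^{-1}$ bound when $l=\frac{n-4}{2}$: here the curvature source is $\nabla_4^{\frac{n-4}{2}}\Psi^G$, which by the first model system only controls $v^{1/2}$ times an $H^{3/2}$ norm at top order, so one must carefully track that a single power of $v^{1/2}$ converts the available top-order energy into an $L^1_v$-integrable quantity, and that the logarithmic growth of $\mathcal{S}$ is harmless once integrated. The summation over Littlewood–Paley frequencies for the two fractional terms is the other technical hurdle, since the $[\nabla_4,P_j]$ commutators introduce different projection operators (as in Section~\ref{model forward direction section}); one must sum the per-frequency estimates before closing, exploiting the summability of the bounds in Lemma~\ref{fractional LP bounds}.
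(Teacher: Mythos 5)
Your proposal follows essentially the same route as the paper: commute the $\nabla_4$ transport equation for $\psi^*$ with $\nabla^{m+1}\nabla_4^l$, integrate in $v$ and apply Gronwall, then handle the two fractional terms in $\mathcal{R}$ by commuting additionally with Littlewood--Paley projections and summing in the frequency parameter. The way you extract $(v')^{-1/2}\mathcal{T}$ from the $m=M+1$, $l=\frac{n-4}{2}$ term and $(v')^{1/2}\mathcal{M}_{\frac{n-6}{2}}$ from the $m=M+2$, $l=\frac{n-6}{2}$ term matches the paper's proof.

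One factual slip: you claim that the right-hand side of the $\nabla_4\psi^*$ transport equation involves only $\Psi^G$ and not $\alpha$. That is incorrect — the null structure equation $\nabla_4\hat\chi_{AB}+\frac{2}{n}\tr\chi\,\hat\chi_{AB}=-\alpha_{AB}+\hat\chi\cdot\hat\chi$ (with $\omega=0$ in the straight setting) has $\alpha$ as its principal curvature source, which is precisely why the mildly singular energy $\mathcal{S}=\|\nabla_4^{\frac{n-4}{2}}\alpha\|^2_{H^{M+1}}$ must appear on the right of the claimed inequality. Your sentence is also internally inconsistent, since you simultaneously assert that no $\alpha$ appears and that one obtains a ``clean $\mathcal{S}$ contribution.'' Fortunately the slip does not propagate into your actual estimate, because you do include $\mathcal{S}$ in the bound (correctly so); but the justification should be that $\alpha$ enters through $\nabla_4\hat\chi$ and is controlled by $\mathcal{S}$ at each fixed $v$, rather than that $\alpha$ is absent.
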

\begin{proof}
   For every $0\leq l\leq\frac{n-4}{2}$ and $0\leq m\leq M+\frac{n-4}{2}-l$, the Ricci coefficients $\psi^*$ satisfy the commuted equation:
    \[\nabla_4\nabla^{m+1}\nabla_4^l\psi^*=\nabla^{m+1}\nabla_4^l\Psi+\nabla^{m+1}\nabla_4^l\big(\psi\psi^*\big)+[\nabla_4,\nabla^{m+1}]\nabla_4^l\psi^*.\]
    We obtain the estimate:
    \[\sum_{l=0}^{\frac{n-4}{2}}\sum_{m=0}^{M+\frac{n-4}{2}-l}\big\|\nabla_4^l\psi^*\big\|^2_{H^{m+1}}\lesssim\mathcal{D}+\sum_{l=0}^{\frac{n-4}{2}}\sum_{m=0}^{M+\frac{n-4}{2}-l}\int_0^v\big\|\nabla_4^l\Psi\big\|^2_{H^{m+1}}+\big\|\nabla_4^l\big(\psi\psi^*\big)\big\|^2_{H^{m+1}}+\big\|\psi\nabla_4^l\psi^*\big\|^2_{H^{m+1}}dv'.\]\[\lesssim\mathcal{D}+\int_0^v\big(\mathcal{T}+\mathcal{S}+\mathcal{L}+\mathcal{R}\big)dv'\]
    Using the LP projections and Gronwall, we also obtain the following fractional estimate:
    \[\big\|\nabla^{M+1}\nabla_4^{\frac{n-4}{2}}\psi^*\big\|^2_{H^{1/2}}\lesssim\mathcal{D}+\int_0^v\mathcal{T}+\mathcal{S}+\mathcal{L}+\mathcal{R}+\int_0^v(v')^{1/2}\big\|\nabla_4\nabla^{M+1}\nabla_4^{\frac{n-4}{2}}\psi^*\big\|_{H^{1/2}}^2\]
    \[\lesssim\mathcal{D}+\int_0^v\mathcal{T}+\mathcal{S}+\mathcal{L}+\mathcal{R}+\sum_{m=0}^{M+1}\int_0^v\big\|\nabla^{m}\nabla_4^{\frac{n-4}{2}}\big(\psi\psi^*\big)\big\|^2_{H^{1/2}}+(v')^{1/2}\big\|\nabla^{m}\nabla_4^{\frac{n-4}{2}}\Psi\big\|_{H^{1/2}}^2\lesssim\mathcal{D}+\int_0^v\frac{\mathcal{T}}{(v')^{1/2}}+\mathcal{S}+\mathcal{L}+\mathcal{R}\]
    Similarly, we also get the estimate:
    \[\big\|\nabla^{M+2}\nabla_4^{\frac{n-6}{2}}\psi^*\big\|^2_{H^{1/2}}\lesssim\mathcal{D}+\int_0^v\mathcal{T}+\mathcal{S}+\mathcal{L}+\mathcal{R}+\int_0^v(v')^{1/2}\big\|\nabla_4\nabla^{M+2}\nabla_4^{\frac{n-6}{2}}\psi^*\big\|_{H^{1/2}}^2\]
    \[\lesssim\mathcal{D}+\int_0^v\mathcal{T}+\mathcal{S}+\mathcal{L}+\mathcal{R}+\sum_{m=0}^{M+2}\int_0^v\big\|\nabla^{m}\nabla_4^{\frac{n-6}{2}}\big(\psi\psi^*\big)\big\|^2_{H^{1/2}}+(v')^{1/2}\big\|\nabla^{m}\nabla_4^{\frac{n-6}{2}}\Psi\big\|_{H^{1/2}}^2\]
    \[\lesssim\mathcal{D}+\int_0^v\mathcal{T}+\mathcal{S}+\mathcal{L}+\mathcal{R}+(v')^{1/2}\mathcal{M}_{\frac{n-6}{2}}.\]
\end{proof}
Next, we bound the error terms that appear in the above estimates. We recall that for any $0\leq l\leq\frac{n-4}{2}$ and $0\leq m\leq M+\frac{n-4}{2}-l$ we have:
\[Err_{ml}^{\Psi}=v\mathcal{F}_{(m)(l+1)(l+m+1)}(\Psi)+v\mathcal{F}_{(m+1)(l)(l+m+1)}(\Psi)+\mathcal{F}_{(2+m)(l-1)(l+m+1)}(\Psi)+\mathcal{F}_{(m)(l)(m+l)}(\Psi)+\]\[+\mathcal{F}_{(m+1)(l)(m+l+1)}^{lot}(\Psi)+\nabla^m\nabla_4^l\big(\Psi\cdot\Psi^G\big)+\sum_{i+2j=m}\nabla^i\big(\slashed{Riem}^{j+1}\cdot\nabla_4^l\Psi\big)+\nabla^m\nabla_4^l\nabla\big(\psi\psi\big)+\nabla^m\nabla_4^l\nabla\big(\psi\psi\psi\big)\]
We also recall the error term notation:
\[\mathcal{F}_{mlp}(\Psi)=\sum_{\substack{i+j+k\leq p \\ i\leq l,k\leq m}}\nabla^k\nabla_4^i\big(\psi^{j+1}\Psi\big).\]
\begin{lemma}\label{lemma to bound general F error terms}
For any $0\leq s\leq1,\ l\leq\frac{n-2}{2},\ m\leq M+\frac{n}{2},\ p=m+l$ we have the estimate:    \[\big\|\mathcal{F}_{mlp}(\Psi)\big\|^2_{H^{s}}\lesssim\big(1+|\log(v)|^2\big)\cdot\sum_{i\leq l}\sum_{k\leq m}\big\|\nabla^k\nabla_4^i\Psi\big\|^2_{H^{s}}+\big\|\nabla^k\nabla_4^i\psi^*\big\|^2_{H^{s}}.\]
Similarly, for any $0\leq s\leq1,\ l\leq\frac{n-4}{2},\ m\leq M+\frac{n}{2},\ p=m+l$ we have the estimate:    \[\big\|\mathcal{F}_{mlp}(\Psi)\big\|^2_{H^{s}}\lesssim\sum_{i\leq l}\sum_{k\leq m}\big\|\nabla^k\nabla_4^i\Psi\big\|^2_{H^{s}}+\big\|\nabla^k\nabla_4^i\psi^*\big\|^2_{H^{s}}.\]
\end{lemma}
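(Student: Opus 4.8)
\textbf{Proof proposal for Lemma \ref{lemma to bound general F error terms}.}

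The plan is to unwind the schematic notation $\mathcal{F}_{mlp}(\Psi)=\sum_{i+j+k\leq p,\ i\leq l,\ k\leq m}\nabla^k\nabla_4^i(\psi^{j+1}\Psi)$ and estimate each summand $\nabla^k\nabla_4^i(\psi^{j+1}\Psi)$ in $H^s$ for $0\le s\le 1$. First I would apply the Leibniz rule for $\nabla^k\nabla_4^i$ (together with the commutation formulas of Lemma \ref{commutation with nabla lemma}, Lemma \ref{commutation with nabla 4 lemma}, and the $\mathcal{F}$-stability lemma) to write $\nabla^k\nabla_4^i(\psi^{j+1}\Psi)$ as a finite sum of products $\prod_{a=1}^{j}\big(\nabla^{k_a}\nabla_4^{i_a}\psi\big)\cdot\nabla^{k_0}\nabla_4^{i_0}\psi_*\cdot\nabla^{k_{j+1}}\nabla_4^{i_{j+1}}\Psi$ plus terms where all $\psi$ factors are evaluated with the $\psi^*$ normalization — here one uses that $\psi=\psi_{\mathrm{Minkowski}}+\psi^*$ and $\nabla\psi=\nabla\psi^*$, $\nabla_4\psi=\nabla_4\psi^*+(\text{Minkowski})$, so after expansion the genuinely nonlinear pieces carry at least one $\psi^*$ factor, while the purely linear piece is $(\psi_{\mathrm{Minkowski}})^{j}\cdot\nabla^k\nabla_4^i\Psi$, which is already of the asserted form (the Minkowski $\psi$ is bounded with bounded derivatives). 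The index constraints $k_0+\dots+k_{j+1}\le k\le m$, $i_0+\dots+i_{j+1}\le i\le l$ are preserved throughout.

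Next I would estimate each product in $H^s$ by placing the lowest-regularity factor(s) in $L^\infty$ and the single highest-regularity factor in $H^s$, using the algebra/Sobolev-type inequalities available for $H^a(S^n)$ when $a>n/2$ (the standard product estimate $\|FG\|_{H^s}\lesssim\|F\|_{H^s}\|G\|_{W^{[n/2]+1,\infty}}+\|F\|_{W^{[n/2]+1,\infty}}\|G\|_{H^s}$, applied iteratively to the $j+2$-fold product). Since $m\le M+n/2$ and $M$ is large, at most one of the $j+2$ factors can carry more than $N'=\tfrac{M}{2}+\tfrac{n}{4}$ angular derivatives, so all but one factor is controlled pointwise. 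For the $\psi$ factors this uses $\mathcal{P},\mathcal{SP}\lesssim \epsilon(1+|\log v|^2)$ from the preliminary bounds of this section (Propositions \ref{region I bounds proposition}, \ref{region II bounds proposition}); for the $\Psi$ factor when it is the low-regularity one, one likewise uses $\mathcal{P}$, and for $\nabla_4$-derivatives of $\Psi$ up to order $\tfrac{n-4}{2}$ one uses $\mathcal{P}$, while for order exactly $\tfrac{n-4}{2}$ one uses $\mathcal{SP}$. The factor $(1+|\log v|^2)$ in the first inequality comes precisely from the logarithmic growth of $\mathcal{SP}$ (equivalently, from the $\log v$ singularity of $\nabla_4^{\frac{n-4}{2}}\alpha$ recorded in \eqref{expansion for alpha introduction}); in the second inequality the range $l\le\tfrac{n-4}{2}$ for the $\Psi$-factor index $i_{j+1}\le i\le l$ still allows $i_{j+1}=\tfrac{n-4}{2}$, so one must be slightly more careful — the point is that in the second statement we are allowed to keep $\|\nabla^k\nabla_4^i\Psi\|_{H^s}$ on the right-hand side for $i\le l\le\tfrac{n-4}{2}$, $k\le m$, so whenever the $\Psi$-factor is the high-regularity one it is simply absorbed into the right-hand side, and whenever it is the low-regularity one its index satisfies $i_{j+1}\le N'$ and $i_{j+1}\le\tfrac{n-4}{2}$ so $\mathcal{P}+\mathcal{SP}$ (without logarithm, since $\mathcal{SP}$ here is multiplied by a factor from $\mathcal{P}$ for $\psi$, or one avoids $\mathcal{SP}$ entirely by noting the high-regularity factor is then a $\psi^*$) suffices. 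I would organize the bookkeeping so that the $\log v$ is genuinely absent in the second case.

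The remaining pieces of $Err_{ml}^\Psi$ beyond $\mathcal{F}_{mlp}(\Psi)$-type terms — namely $\nabla^m\nabla_4^l(\Psi\cdot\Psi^G)$, $\sum_{i+2j=m}\nabla^i(\slashed{Riem}^{j+1}\cdot\nabla_4^l\Psi)$, and $\nabla^m\nabla_4^l\nabla(\psi\psi)$, $\nabla^m\nabla_4^l\nabla(\psi\psi\psi)$ — are handled by the same product-estimate scheme: for the $\slashed{Riem}$ terms one substitutes $\slashed{Riem}=R+\psi\psi$ (as in Section \ref{stability dS section}) and notes $2j\le m$ forces each $\slashed{Riem}$ factor to carry at most $\lfloor m/2\rfloor\le N'$ derivatives, hence placed in $L^\infty$; the $\nabla(\psi\psi)$ and $\nabla(\psi\psi\psi)$ terms, after $\nabla=\nabla$ acts and using $\nabla\psi=\nabla\psi^*$, are products with at least one $\psi^*$ factor and fit the template. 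I expect the main obstacle to be the careful verification that the logarithmic factor is present exactly when claimed and no worse: one must check that in the first inequality the worst case (the $\Psi$-factor being $\nabla_4^{\frac{n-4}{2}}\alpha$ in $L^\infty$, contributing $\mathcal{SP}^{1/2}\sim(1+|\log v|)$) occurs at most once per summand so the power of $\log v$ does not compound, and that in the second inequality that worst case is structurally excluded by the index range $l\le\tfrac{n-4}{2}$ combined with the fact that the right-hand side retains the $\Psi$-norms at order $i\le l$. This is a finite, if tedious, combinatorial check over the Leibniz expansion; no new analytic input beyond the preliminary pointwise bounds and the $H^s$ product inequality is needed.
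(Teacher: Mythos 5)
Your approach matches the paper's: expand $\mathcal{F}_{mlp}$ via the Leibniz rule, isolate one factor (the one with the most angular derivatives) to keep in $H^s$ while placing the others in $L^\infty$ using the preliminary pointwise norms $\mathcal{P},\mathcal{SP}$, exploit $\nabla\psi=\nabla\psi^*$ and $m\leq M+\tfrac n2=2N'$ so that at most one factor has more than $N'$ angular derivatives, and track the $(1+|\log v|^2)$ factor back to $\mathcal{SP}$. The last paragraph about the other pieces of $Err_{ml}^\Psi$ (the $\Psi\cdot\Psi^G$, $\slashed{Riem}$, and $\nabla(\psi\psi)$ terms) is outside the scope of this lemma — those are handled in the subsequent propositions — but the argument you give for $\mathcal{F}_{mlp}$ itself is the one the paper uses.
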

\begin{proof} 
    Using the definition of $\mathcal{F}_{mlp}$, we get that for all $0\leq s\leq1,\ l\leq\frac{n-2}{2},\ m\leq M+\frac{n}{2},\ p=m+l$:
    \[\big\|\mathcal{F}_{mlp}(\Psi)\big\|^2_{H^{s}}\lesssim\sum_{\substack{i+j+k\leq p \\ i\leq l,k\leq m}}\big\|\nabla^k\nabla_4^i\big(\psi^{j+1}\Psi\big)\big\|^2_{H^{s}}\lesssim\sum_{\substack{|i|+j+|k|\leq p \\ |i|\leq l,|k|\leq m}}\Big\|\nabla^{k_0}\nabla_4^{i_0}\Psi\prod_{q=1}^{j+1}\nabla^{k_q}\nabla_4^{i_q}\psi\Big\|^2_{H^{s}}\]
    We use the fact that $\nabla\nabla_4^i\psi=\nabla\nabla_4^i\psi^*$ to get:
    \[\big\|\mathcal{F}_{mlp}(\Psi)\big\|^2_{H^{s}}\lesssim\sum_{\substack{|i|+j+|k|\leq p \\ |i|\leq l,|k|\leq m}}\Big\|\nabla^{k_0}\nabla_4^{i_0}\Psi\cdot\nabla^{k_{1}}\nabla_4^{i_{1}}\psi^*\prod_{q=2}^{j+1}\nabla^{k_q}\nabla_4^{i_q}\psi\Big\|^2_{H^{s}}+\sum_{\substack{|i|+j+|k|\leq p \\ |i|\leq l,|k|\leq m}}\Big\|\nabla^{k}\nabla_4^{i_0}\Psi\prod_{q=1}^{j+1}\nabla_4^{i_q}\psi\Big\|^2_{H^{s}}\]
    The second term can be bounded using the lower order pointwise norm and the mildly singular pointwise norm by:
    \[\sum_{\substack{|i|+j+|k|\leq p \\ |i|\leq l,|k|\leq m}}\Big\|\nabla^{k}\nabla_4^{i_0}\Psi\prod_{q=1}^{j+1}\nabla_4^{i_q}\psi\Big\|^2_{H^{s}}\lesssim\big(1+|\log(v)|^2\big)\cdot\sum_{i\leq l}\sum_{k\leq m}\big\|\nabla^k\nabla_4^i\Psi\big\|^2_{H^{s}}\]
    For the first term, we can assume that $|k_{1}|=\max(|k_{1}|,\ldots,|k_{j+1}|).$ In particular, we have $|k_{q}|<1+\frac{M}{2}+\frac{n}{4}$ for all $2\leq q\leq j+1,$ so we can control these factors using the lower order pointwise norm and the mildly singular pointwise norm. Thus, we get that the first term is bounded by:
    \[\big(1+|\log(v)|^2\big)\cdot\sum_{\substack{|i|+j+|k|\leq p \\ |i|\leq l,|k|\leq m}}\Big\|\nabla^{k_0}\nabla_4^{i_0}\Psi\cdot\nabla^{k_{q_1}}\nabla_4^{i_{q_1}}\psi^*\Big\|^2_{H^{s}}\lesssim\big(1+|\log(v)|^2\big)\cdot\sum_{i\leq l}\sum_{k\leq m}\big\|\nabla^k\nabla_4^i\Psi\big\|^2_{H^{s}}+\big\|\nabla^k\nabla_4^i\psi^*\big\|^2_{H^{s}}\]
    This completes the proof of the first statement. We remark that we can prove the second statement by following the exact same steps, but we only use the lower order pointwise norm when $l\leq\frac{n-4}{2}.$
\end{proof}
\begin{proposition}
    For any $0\leq l\leq\frac{n-4}{2}$, the error terms $Err_{M,l}^{\Psi}$ satisfy the estimate:
    \[\big\|Err_{M,l}^{\Psi}\big\|^2_{H^{1/2}}\lesssim\big(1+|\log(v)|^2\big)\big(\mathcal{T}+\mathcal{S}+\mathcal{L}+\mathcal{R}\big).\]
\end{proposition}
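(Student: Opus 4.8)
The plan is to bound each of the nine types of terms appearing in the schematic expansion of $Err_{M,l}^{\Psi}$ directly, using Lemma~\ref{lemma to bound general F error terms} for the $\mathcal{F}$-type terms and the product structure together with the pointwise norms $\mathcal{P},\mathcal{SP}$ for the genuinely nonlinear terms. Throughout we use that $M$ is large and that the top order terms carry at most $M+\frac{n-4}{2}$ angular derivatives, so that in every product at least one factor can be absorbed into $\mathcal{P}$ or $\mathcal{SP}$.

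First I would handle the four $\mathcal{F}$-type contributions. Since $l\leq\frac{n-4}{2}$, each of the terms $v\mathcal{F}_{(M)(l+1)(l+M+1)}(\Psi)$, $v\mathcal{F}_{(M+1)(l)(l+M+1)}(\Psi)$, $\mathcal{F}_{(2+M)(l-1)(l+M+1)}(\Psi)$, $\mathcal{F}_{(M)(l)(M+l)}(\Psi)$, and $\mathcal{F}_{(M+1)(l)(M+l+1)}^{lot}(\Psi)$ has $\nabla_4$-order at most $\frac{n-2}{2}$ and angular order at most $M+2$, so by the first inequality in Lemma~\ref{lemma to bound general F error terms} (with $s=1/2$) each is controlled by $(1+|\log v|^2)$ times a sum of $\|\nabla^k\nabla_4^i\Psi\|_{H^{1/2}}^2+\|\nabla^k\nabla_4^i\psi^*\|_{H^{1/2}}^2$ over the admissible range. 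The extra factor $v$ in the first two terms makes the presence of the highest-order quantity $v\nabla_4\nabla^M\nabla_4^{\frac{n-4}{2}}\Psi$ admissible for $\mathcal{T}$: when the $\nabla_4$-order saturates at $\frac{n-2}{2}$ we rewrite $v\nabla_4^{\frac{n-2}{2}}=v\nabla_4(\nabla_4^{\frac{n-4}{2}})$ and use $\|v\nabla_4\nabla^M\nabla_4^{\frac{n-4}{2}}\Psi\|_{H^{1/2}}^2\lesssim\mathcal{T}$, while all strictly lower orders are absorbed into $\mathcal{T}+\mathcal{S}+\mathcal{L}+\mathcal{R}$ (recalling that $\mathcal{S}$ carries the $\alpha$ term with the $l=\frac{n-4}{2}$ saturation, and that $\mathcal{S}$ comes with its own $(1+|\log v|^2)$ weight, which is why the logarithm is allowed in the final estimate). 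For $\mathcal{F}_{(2+M)(l-1)(l+M+1)}(\Psi)$ the angular order is $M+2$ and $\nabla_4$-order $\leq\frac{n-6}{2}$, so it is controlled by $\sum_m\|\nabla^m\nabla_4^{l'}\Psi\|_{H^{1/2}}^2$ with $l'\leq\frac{n-6}{2}$, $m\leq M+2$, which is exactly what the fractional lower order energies $\mathcal{M}_{l'}$ and $\mathcal{T},\mathcal{L}$ bound — here one uses $\|\nabla^M\nabla_4^{l'}\Psi\|_{H^{5/2}}^2\lesssim\mathcal{M}_{l'}$ and interpolation for the intermediate orders.

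Next I would treat the remaining four terms: $\nabla^M\nabla_4^l(\Psi\cdot\Psi^G)$, the curvature term $\sum_{i+2j=M}\nabla^i(\slashed{Riem}^{j+1}\cdot\nabla_4^l\Psi)$, and $\nabla^M\nabla_4^l\nabla(\psi\psi)$, $\nabla^M\nabla_4^l\nabla(\psi\psi\psi)$. For $\nabla^M\nabla_4^l(\Psi\cdot\Psi^G)$ one distributes derivatives by the Leibniz rule and fractional product estimates, placing the higher-derivative factor in $H^{1/2}$ (bounded by $\mathcal{T},\mathcal{S},\mathcal{L}$ or $\mathcal{M}_{l'}$) and the lower-derivative factor in $L^\infty$ (bounded by $\mathcal{P}$ or $\mathcal{SP}$, introducing the logarithm); since $\Psi^G\neq\alpha$ the worst case with both factors high-order forces one of them to be $\Psi^G$ at order $\leq N'$, harmless. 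For the $\slashed{Riem}$ term one uses $\slashed{Riem}=R+\psi\psi$ as in Section~\ref{stability dS section}, so it reduces to the previous cases; since $j\geq1$ forces the $\nabla_4^l\Psi$ factor and the $\slashed{Riem}$ factors to share $M$ angular derivatives, at most one factor is high-order and the rest go into $\mathcal{P}$. The terms $\nabla^M\nabla_4^l\nabla(\psi\psi)$ and $\nabla^M\nabla_4^l\nabla(\psi\psi\psi)$ are handled by moving one $\nabla$ inside and writing $\nabla\nabla_4^i\psi=\nabla\nabla_4^i\psi^*$, reducing to products of $\psi^*$-factors of total angular order $\leq M+1$ and $\nabla_4$-order $\leq\frac{n-4}{2}$, controlled by $\mathcal{R}$ for the high factor and $\mathcal{P},\mathcal{SP}$ for the low ones. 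Collecting the bounds and using $\mathcal{P}\leq\epsilon$, $\mathcal{SP}\lesssim\epsilon(1+|\log v|^2)$ gives the claimed estimate, where the overall $(1+|\log v|^2)$ comes from the $\mathcal{SP}$ factors (equivalently from the $\mathcal{S}$ norm's own logarithmic weight).

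The main obstacle I anticipate is the careful bookkeeping of which single factor in each nonlinear product is allowed to carry top-order regularity and in which norm it then lands — in particular ensuring that the genuinely singular quantity $\alpha$ at saturated $\nabla_4$-order $\frac{n-4}{2}$ is always paired against a factor that is either lower order (so it goes in $\mathcal{P}$/$\mathcal{SP}$) or is one of the $\Psi^G$ components, so that one never needs to multiply two copies of the singular top-order $\alpha$ energy; this is exactly what the structure of $Err$ (the appearance of $\Psi^G$, not general $\Psi$, in several slots, and the extra powers of $v$) is designed to guarantee, and verifying it term by term is where the bulk of the work lies.
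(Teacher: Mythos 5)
Your term-by-term bounds are in the right spirit, but there is a genuine gap at the point where $\mathcal{F}_{(2+M)(l-1)(l+M+1)}(\Psi)$ is handled. You correctly identify that this term involves $M+2$ angular derivatives at $\nabla_4$-order $l-1\le\frac{n-6}{2}$ and is therefore controlled by the fractional lower order energy $\mathcal{M}_{l-1}$. But $\mathcal{M}_{l-1}$ is \emph{not} one of the norms $\mathcal{T},\mathcal{S},\mathcal{L},\mathcal{R}$ on the right-hand side of the claimed estimate; the only bound available for it is the circular one
\[
\mathcal{M}_{l-1}\lesssim\mathcal{T}+\mathcal{S}+\mathcal{L}+\mathcal{R}+\big\|Err_{M,l-1}^{\Psi}\big\|^2_{H^{1/2}},
\]
which reintroduces an error term of exactly the type you are trying to bound. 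Your argument as written therefore establishes only the intermediate estimate
\[
\big\|Err_{M,l}^{\Psi}\big\|^2_{H^{1/2}}\lesssim\big(1+|\log v|^2\big)\big(\mathcal{T}+\mathcal{S}+\mathcal{L}+\mathcal{R}\big)+\mathcal{M}_{l-1},
\]
and does not close. The missing idea is a downward induction on $l$: plug in the bound for $\mathcal{M}_{l-1}$, obtain $\|Err_{M,l}^{\Psi}\|^2_{H^{1/2}}\lesssim(1+|\log v|^2)(\mathcal{T}+\mathcal{S}+\mathcal{L}+\mathcal{R})+\|Err_{M,l-1}^{\Psi}\|^2_{H^{1/2}}$, and iterate until $l=0$, at which point $\mathcal{F}_{(2+M)(-1)(M+1)}(\Psi)$ vanishes identically by the paper's negative-index convention, so the estimate closes with no residual term. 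Without this iteration the proof is incomplete.

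Two smaller remarks. First, the claim that one ``uses interpolation for the intermediate orders'' is unnecessary and not what the paper does; the intermediate orders $k<M$ are absorbed directly into $\mathcal{L}$, $\mathcal{R}$, $\mathcal{T}$, and $\mathcal{S}$, with $\mathcal{M}_{l-1}$ needed only at the extreme $k\le M+2$. Second, for the saturated terms at $\nabla_4$-order $\frac{n-2}{2}$ you gesture at ``rewriting $v\nabla_4^{\frac{n-2}{2}}=v\nabla_4\nabla_4^{\frac{n-4}{2}}$,'' but the full reduction also requires using the Bianchi and null structure equations to replace $\nabla_4^{\frac{n-2}{2}}\psi^*$ and $\nabla_4^{\frac{n-2}{2}}\Psi$ by expressions of $\nabla_4$-order $\frac{n-4}{2}$ and one extra angular derivative or a quadratic term; without this step the $v$ weight alone does not suffice to place the highest-order pieces in $\mathcal{T}$.
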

\begin{proof}
    The proof will follow from the fractional lower order energy estimates, once we prove the claim:
    \begin{equation}\label{preliminary fractional lower order estimate}
        \big\|Err_{M,l}^{\Psi}\big\|^2_{H^{1/2}}\lesssim\big(1+|\log(v)|^2\big)\big(\mathcal{T}+\mathcal{S}+\mathcal{L}+\mathcal{R}\big)+\mathcal{M}_{l-1}.
    \end{equation}
    Using the previous lemma, the Bianchi equations, and the null structure equations, we have the bound:
    \[\big\|v\mathcal{F}_{(M)(l+1)(l+M+1)}(\Psi)\big\|^2_{H^{1/2}}\lesssim\big(1+|\log(v)|^2\big)\cdot\sum_{i\leq l+1}\sum_{k\leq M}\big\|v\nabla^k\nabla_4^i\Psi\big\|^2_{H^{1/2}}+\big\|v\nabla^k\nabla_4^i\psi^*\big\|^2_{H^{1/2}}\]\[\lesssim\mathcal{T}+\mathcal{S}+\mathcal{L}+\mathcal{R}+\big(1+|\log(v)|^2\big)\cdot\Big(\sum_{k\leq M}\big\|v\nabla^k\nabla_4^{\frac{n-2}{2}}\psi^*\big\|^2_{H^{1/2}}+\big\|v\nabla^M\nabla_4^{\frac{n-2}{2}}\Psi\big\|^2_{H^{1/2}}+\big\|v\nabla_4^{\frac{n-2}{2}}\Psi\big\|^2_{H^{1}}\Big)\]
    \[\lesssim\big(1+|\log(v)|^2\big)\bigg(\mathcal{T}+\mathcal{S}+\mathcal{L}+\mathcal{R}+\sum_{k\leq M}\big\|v\nabla^k\nabla_4^{\frac{n-4}{2}}\big(\Psi+\psi\psi^*\big)\big\|^2_{H^{1/2}}+\big\|\nabla_4^{\frac{n-4}{2}}\big(\nabla\Psi+\psi\Psi+\Psi\big)\big\|^2_{H^{1}}\bigg)\]\[\lesssim\big(1+|\log(v)|^2\big)\big(\mathcal{T}+\mathcal{S}+\mathcal{L}+\mathcal{R}\big)\]
    Similarly, we use the previous lemma to get the bound:
    \[\big\|v\mathcal{F}_{(M+1)(l)(l+M+1)}(\Psi)\big\|^2_{H^{1/2}}\lesssim\sum_{i\leq l}\sum_{k\leq M+1}\big\|v\nabla^k\nabla_4^i\Psi\big\|^2_{H^{1/2}}+\big\|v\nabla^k\nabla_4^i\psi^*\big\|^2_{H^{1/2}}\]\[\lesssim\mathcal{L}+\mathcal{R}+\sum_{k\leq M}\big\|v\nabla^k\nabla_4^{\frac{n-4}{2}}\psi^*\big\|^2_{H^{3/2}}+\big\|v\nabla^k\nabla_4^{\frac{n-4}{2}}\Psi\big\|^2_{H^{3/2}}\lesssim\mathcal{T}+\mathcal{S}+\mathcal{L}+\mathcal{R}\]
    We use the fractional lower order energy to get the bound:
    \[\big\|\mathcal{F}_{(M+2)(l-1)(l+M+1)}(\Psi)\big\|^2_{H^{1/2}}\lesssim\sum_{i\leq l-1}\sum_{k\leq M+2}\big\|\nabla^k\nabla_4^i\Psi\big\|^2_{H^{1/2}}+\big\|\nabla^k\nabla_4^i\psi^*\big\|^2_{H^{1/2}}\]\[\lesssim\mathcal{L}+\mathcal{R}+\sum_{k\leq M}\big\|\nabla^k\nabla_4^{l-1}\psi^*\big\|^2_{H^{5/2}}+\sum_{k\leq M}\big\|\nabla^k\nabla_4^{l-1}\Psi\big\|^2_{H^{5/2}}\lesssim\mathcal{L}+\mathcal{R}+\mathcal{M}_{l-1}\]
    Next, we have the bound:
    \[\big\|\mathcal{F}_{(M)(l)(l+M)}(\Psi)\big\|^2_{H^{1/2}}\lesssim\sum_{i\leq l}\sum_{k\leq M}\big\|\nabla^k\nabla_4^i\Psi\big\|^2_{H^{1/2}}+\big\|\nabla^k\nabla_4^i\psi^*\big\|^2_{H^{1/2}}\lesssim\mathcal{T}+\mathcal{S}+\mathcal{L}+\mathcal{R}\]
    We can prove a similar result to the previous lemma for the error terms $\mathcal{F}^{lot}$, and we obtain:
    \[\big\|\mathcal{F}_{(M+1)(l)(l+M+1)}^{lot}(\Psi)\big\|^2_{H^{1/2}}\lesssim\sum_{\substack{i\leq l \\ k\leq M}}\big\|\nabla^k\nabla_4^i\Psi\big\|^2_{H^{1/2}}+\sum_{\substack{i\leq l-1 \\ k\leq M+1}}\big\|\nabla^k\nabla_4^i\Psi\big\|^2_{H^{1/2}}+\sum_{\substack{i\leq l \\ k\leq M+1}}\big\|\nabla^k\nabla_4^i\psi^*\big\|^2_{H^{1/2}}\]\[\lesssim\mathcal{T}+\mathcal{S}+\mathcal{L}+\mathcal{R}\]
    Next, we have the bound:
    \[\big\|\nabla^M\nabla_4^l\big(\Psi\cdot\Psi^G\big)\big\|^2_{H^{1/2}}\lesssim\big(1+|\log(v)|^2\big)\big(\mathcal{T}+\mathcal{S}+\mathcal{L}\big)\]
    Using this bound, the bound on $\mathcal{F}_{(M)(l)(l+M)}(\Psi)$ and the formula $\slashed{Riem}=\Psi+\psi\psi,$ we also obtain the bound:
    \[\sum_{i+2j=M}\big\|\nabla^i\big(\slashed{Riem}^{j+1}\cdot\nabla_4^l\Psi\big)\big\|^2_{H^{1/2}}\lesssim\mathcal{T}+\mathcal{S}+\mathcal{L}+\mathcal{R}\]
    Finally, we use the commutation formulas to get:
    \[\big\|\nabla^M\nabla_4^l\nabla\big(\psi\psi\psi\big)\big\|^2_{H^{1/2}}\lesssim\mathcal{R}+\sum_{k\leq M+1}\sum_{l\leq\frac{n-4}{2}}\big\|\nabla^k\nabla_4^l\psi^*\big\|^2_{H^{1/2}}\lesssim\mathcal{R}.\]
    Combining all our estimates, we established (\ref{preliminary fractional lower order estimate}). Using the fractional lower order energy estimate, we get:
    \[\big\|Err_{M,l}^{\Psi}\big\|^2_{H^{1/2}}\lesssim\big(1+|\log(v)|^2\big)\big(\mathcal{T}+\mathcal{S}+\mathcal{L}+\mathcal{R}\big)+\mathcal{M}_{l-1}\lesssim\big(1+|\log(v)|^2\big)\big(\mathcal{T}+\mathcal{S}+\mathcal{L}+\mathcal{R}\big)+\big\|Err_{M,l-1}^{\Psi}\big\|^2_{H^{1/2}}\]
    By induction we obtain the conclusion, since:
    \[\big\|Err_{M,l}^{\Psi}\big\|^2_{H^{1/2}}\lesssim\big(1+|\log(v)|^2\big)\big(\mathcal{T}+\mathcal{S}+\mathcal{L}+\mathcal{R}\big)+\big\|Err_{M,0}^{\Psi}\big\|^2_{H^{1/2}}\lesssim\big(1+|\log(v)|^2\big)\big(\mathcal{T}+\mathcal{S}+\mathcal{L}+\mathcal{R}\big).\]
\end{proof}
\begin{proposition}
    For any for any $0\leq l\leq\frac{n-4}{2}$ and $0\leq m\leq M+\frac{n-4}{2}-l$, the error terms $Err_{m,l}^{\Psi}$ satisfy the estimate:
    \[\big\|Err_{m,l}^{\Psi}\big\|^2_{L^2}\lesssim(1+|\log(v)|^2\big)\big(\mathcal{T}+\mathcal{S}+\mathcal{L}+\mathcal{R}\big).\]
\end{proposition}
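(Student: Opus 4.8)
The plan is to bound each of the nine types of terms in the schematic expression for $Err_{m,l}^{\Psi}$ separately, using the estimates already established in this section for $\mathcal{T}, \mathcal{S}, \mathcal{L}, \mathcal{R}$, the pointwise norms $\mathcal{P}, \mathcal{SP}$, and the two auxiliary Lemmas bounding $\mathcal{F}_{mlp}(\Psi)$ in $H^s$. Since here we work only in $L^2$ (i.e. $s = 0$) and only commute with at most $M + \frac{n-4}{2} - l$ angular derivatives and at most $\frac{n-4}{2}$ $\nabla_4$ derivatives, the situation is strictly easier than the fractional $H^{1/2}$ estimate for $Err_{M,l}^\Psi$ proved in the previous proposition, so the proof will follow the same template but without needing the fractional lower order energies $\mathcal{M}_l$.

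First I would dispatch the four genuine $\mathcal{F}$-type terms. For $v\mathcal{F}_{(m)(l+1)(l+m+1)}(\Psi)$ we are in the regime $l+1 \leq \frac{n-2}{2}$, so the first statement of Lemma~\ref{lemma to bound general F error terms} applies with $s=0$ and gives a bound by $(1+|\log v|^2)$ times a sum of $\|v\nabla^k\nabla_4^i\Psi\|_{L^2}^2 + \|v\nabla^k\nabla_4^i\psi^*\|_{L^2}^2$ with $i \leq l+1 \leq \frac{n-2}{2}$, $k\leq m$; the top contribution $i = \frac{n-2}{2}$ is handled by rewriting $\nabla_4^{\frac{n-2}{2}}\Psi$ and $\nabla_4^{\frac{n-2}{2}}\psi^*$ via the Bianchi and null structure equations in terms of $\nabla_4^{\frac{n-4}{2}}$ quantities with one extra $\nabla$, which are controlled by $\mathcal{T} + \mathcal{S}$, while the remaining factors fall under $\mathcal{L} + \mathcal{R}$. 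The terms $v\mathcal{F}_{(m+1)(l)(l+m+1)}(\Psi)$, $\mathcal{F}_{(2+m)(l-1)(l+m+1)}(\Psi)$, and $\mathcal{F}_{(m)(l)(m+l)}(\Psi)$ all have $i \leq l \leq \frac{n-4}{2}$ and $k \leq m+2 \leq M + \frac{n}{2}$, so the \emph{second} statement of Lemma~\ref{lemma to bound general F error terms} applies (no log factor), and the highest angular derivative count $m+2$ combined with $\nabla_4^{\frac{n-4}{2}}$ is exactly what $\mathcal{T}$ and $\mathcal{S}$ control in $H^2 \subset H^{1/2}$; lower orders are absorbed by $\mathcal{L} + \mathcal{R}$. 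The $\mathcal{F}^{lot}$ term is handled by the analogue of Lemma~\ref{lemma to bound general F error terms} for lower-order products mentioned in the previous proof.

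For the remaining terms: $\nabla^m\nabla_4^l(\Psi\cdot\Psi^G)$ is a product of two curvature factors, and by putting the factor with fewer ($\leq N' = \frac{M}{2}+\frac{n}{4}$) derivatives into $L^\infty$ using $\mathcal{P}$ or $\mathcal{SP}$ and the other into $L^2$, it is bounded by $(1+|\log v|^2)(\mathcal{T}+\mathcal{S}+\mathcal{L})$; the $\slashed{Riem}$ term uses $\slashed{Riem} = \Psi + \psi\psi$ together with the just-proven bound and the $\mathcal{F}_{(m)(l)(m+l)}(\Psi)$ bound; and $\nabla^m\nabla_4^l\nabla(\psi\psi)$ and $\nabla^m\nabla_4^l\nabla(\psi\psi\psi)$ are controlled by $\mathcal{R}$ after invoking the commutation formulas of Section~\ref{set up section} and $\nabla\psi = \nabla\psi^*$. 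Collecting all contributions gives $\|Err_{m,l}^\Psi\|_{L^2}^2 \lesssim (1+|\log v|^2)(\mathcal{T}+\mathcal{S}+\mathcal{L}+\mathcal{R})$, which is the claim.

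The main obstacle I expect is purely bookkeeping: making sure that in every term the worst factor genuinely has angular regularity at most $M + \frac{n-4}{2}$ combined with $\nabla_4$-order at most $\frac{n-4}{2}$ — i.e., that nothing forces more than $\frac{n-4}{2}$ $\nabla_4$ derivatives on $\alpha$ (which would hit the obstruction-tensor singularity) or more derivatives than $\mathcal{T}, \mathcal{S}$ can supply. This is exactly where the reduction of top-order $\nabla_4$ counts via the Bianchi and null structure equations is essential, and where one must check that the log-growth in $\mathcal{SP}$ and $\mathcal{S}$ is at worst $(1+|\log v|^2)$, consistent with the stated conclusion. No single step is analytically deep; the care lies in the index tracking, which mirrors the argument of the preceding proposition restricted to $L^2$ and to general $m \leq M + \frac{n-4}{2} - l$.
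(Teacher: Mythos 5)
Your proposal is correct and follows essentially the same approach as the paper: you split $Err_{m,l}^{\Psi}$ into the same nine schematic terms, invoke Lemma~\ref{lemma to bound general F error terms} with $s=0$ to reduce the $\mathcal{F}$-type terms, use the Bianchi and null structure equations to rewrite the critical $\nabla_4^{\frac{n-2}{2}}$ factors in terms of $\nabla_4^{\frac{n-4}{2}}$ quantities controlled by $\mathcal{T}$ and $\mathcal{S}$, use the pointwise norms $\mathcal{P},\mathcal{SP}$ for the $\Psi\cdot\Psi^G$ product, and handle $\slashed{Riem}$ via $\slashed{Riem}=\Psi+\psi\psi$ and the $\psi\psi$-type terms via commutation and $\nabla\psi=\nabla\psi^*$. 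You also correctly observe that, unlike the preceding $H^{1/2}$ proposition, the $L^2$ case does not require the fractional lower-order energies $\mathcal{M}_l$, which matches the paper exactly.
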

\begin{proof}
    As before, we use the Lemma \ref{lemma to bound general F error terms}, the Bianchi equations, and the null structure equations to get the bound:
    \[\big\|v\mathcal{F}_{(m)(l+1)(l+m+1)}(\Psi)\big\|^2_{L^2}\lesssim(1+|\log(v)|^2\big)\sum_{i\leq l+1}\sum_{k\leq m}\big\|v\nabla^k\nabla_4^i\Psi\big\|^2_{L^2}+\big\|v\nabla^k\nabla_4^i\psi^*\big\|^2_{L^2}\lesssim\]\[\lesssim(1+|\log(v)|^2\big)\bigg(\mathcal{T}+\mathcal{S}+\mathcal{L}+\mathcal{R}+\big\|v\nabla_4^{\frac{n-2}{2}}\psi^*\big\|^2_{H^{M}}+\big\|v\nabla_4^{\frac{n-2}{2}}\Psi\big\|^2_{H^{M}}\bigg)\lesssim\]
    \[\lesssim(1+|\log(v)|^2\big)\Big(\mathcal{T}+\mathcal{S}+\mathcal{L}+\mathcal{R}+\big\|v\nabla_4^{\frac{n-4}{2}}\big(\Psi+\psi\psi^*\big)\big\|^2_{H^{M}}+\big\|\nabla_4^{\frac{n-4}{2}}\big(\nabla\Psi+\psi\Psi\big)\big\|^2_{L^2}+\big\|v\nabla_4\nabla^M\nabla_4^{\frac{n-4}{2}}\Psi\big\|^2_{L^2}\Big)\]
    \[\lesssim(1+|\log(v)|^2\big)\big(\mathcal{T}+\mathcal{S}+\mathcal{L}+\mathcal{R}\big)\]
    Similarly, we have the bound:
    \[\big\|v\mathcal{F}_{(m+1)(l)(l+m+1)}(\Psi)\big\|^2_{L^2}\lesssim\sum_{i\leq l}\sum_{k\leq m+1}\big\|v\nabla^k\nabla_4^i\Psi\big\|^2_{L^2}+\big\|v\nabla^k\nabla_4^i\psi^*\big\|^2_{L^2}\lesssim\mathcal{T}+\mathcal{S}+\mathcal{L}+\mathcal{R}\]
    We also have the bound:
    \[\big\|\mathcal{F}_{(m+2)(l-1)(l+m+1)}(\Psi)\big\|^2_{L^2}\lesssim\sum_{i\leq l-1}\sum_{k\leq m+2}\big\|\nabla^k\nabla_4^i\Psi\big\|^2_{L^2}+\big\|\nabla^k\nabla_4^i\psi^*\big\|^2_{L^2}\lesssim\mathcal{L}+\mathcal{R}\]
    Next, we have the bound:
    \[\big\|\mathcal{F}_{(m)(l)(l+m)}(\Psi)\big\|^2_{L^2}\lesssim\sum_{i\leq l}\sum_{k\leq m}\big\|\nabla^k\nabla_4^i\Psi\big\|^2_{L^2}+\big\|\nabla^k\nabla_4^i\psi^*\big\|^2_{L^2}\lesssim\mathcal{T}+\mathcal{S}+\mathcal{L}+\mathcal{R}\]
    Using the similar result to the lemma for the error terms $\mathcal{F}^{lot}$, we get:
    \[\big\|\mathcal{F}_{(m+1)(l)(l+m+1)}^{lot}(\Psi)\big\|^2_{L^2}\lesssim\sum_{\substack{i\leq l \\ k\leq m}}\big\|\nabla^k\nabla_4^i\Psi\big\|^2_{L^2}+\sum_{\substack{i\leq l-1 \\ k\leq m+1}}\big\|\nabla^k\nabla_4^i\Psi\big\|^2_{L^2}+\sum_{\substack{i\leq l \\ k\leq m+1}}\big\|\nabla^k\nabla_4^i\psi^*\big\|^2_{L^2}\lesssim\mathcal{T}+\mathcal{S}+\mathcal{L}+\mathcal{R}\]
    Next, we have the bound:
    \[\big\|\nabla^m\nabla_4^l\big(\Psi\cdot\Psi^G\big)\big\|^2_{L^2}\lesssim\big(1+|\log(v)|^2\big)\big(\mathcal{T}+\mathcal{S}+\mathcal{L}\big)\]
    As before, we use this bound, the bound on $\mathcal{F}_{(m)(l)(l+m)}(\Psi)$, and the formula $\slashed{Riem}=\Psi+\psi\psi,$ in order to obtain the bound:
    \[\sum_{i+2j=m}\big\|\nabla^i\big(\slashed{Riem}^{j+1}\cdot\nabla_4^l\Psi\big)\big\|^2_{L^2}\lesssim\mathcal{T}+\mathcal{S}+\mathcal{L}+\mathcal{R}\]
    Finally, to conclude our proof we use the commutation formulas to get:
    \[\big\|\nabla^m\nabla_4^l\nabla\big(\psi\psi\psi\big)\big\|^2_{L^2}\lesssim\mathcal{R}.\]
\end{proof}

We complete the proofs of Proposition \ref{main proposition forward direction full system} and Theorem \ref{main theorem forward direction full system}:

\textit{Proof of Proposition \ref{main proposition forward direction full system}.} Based on the estimates in this section, we have that for all $v\leq1$:
\[\mathcal{T}+\mathcal{L}+\mathcal{R}+\frac{1}{1+|\log v|^2}\mathcal{S}\lesssim\mathcal{D}+\epsilon\mathcal{R}+\int_0^v\big(1+|\log v'|^2\big)^2v'^{-\frac{1}{2}}\bigg(\mathcal{T}+\frac{1}{1+|\log v'|^2}\mathcal{S}+\mathcal{L}+\mathcal{R}\bigg)dv'\]
Taking $\epsilon>0$ small enough and then using Gronwall's inequality, we obtain that:
\[\mathcal{T}+\mathcal{L}+\mathcal{R}+\frac{1}{1+|\log v|^2}\mathcal{S}\lesssim\mathcal{D}.\]
\qed

\textit{Proof of Theorem \ref{main theorem forward direction full system}.}
Using Proposition \ref{main proposition forward direction full system} and self-similarity at $u=-1,v=1$ we obtain the desired estimates for the Ricci coefficients and curvature components. To complete the proof, we also show that $\big\|\slashed{g}^*\big\|_{H^{M+1}}^2\lesssim\mathcal{D}.$ We first need to prove this on the sphere $S_{-1,0}.$ As outlined in the beginning of the section, we use (\ref{Lie derivative in terms of covariant derivative}) to get:
\[\sum_{m\leq M+1}\big\|\mathcal{L}_{\theta}^m\slashed{g}_0^*\big\|_{L^2(S^n)}^2\lesssim\big\|\slashed{g}_0^*\big\|_{\mathring{H}^{M+1}}^2\lesssim\mathcal{D}.\]
We then use (\ref{Lie derivative in terms of covariant derivative}) for covariant derivatives with respect to $\slashed{g}_0$ and prove by induction on $m\leq M+1$ that:
\[\big\|\nabla^m\slashed{g}^*_0\big\|_{L^2}^2\lesssim\mathcal{D}.\]
The conclusion follows by using the metric equation $\mathcal{L}_v\slashed{g}^*=\psi^*$ and the estimate $\mathcal{R}\lesssim\mathcal{D}.$
\qed

\section{Estimates for the Second Model System}\label{model backward direction section}

An essential part of our argument is to prove estimates on the second model system (\ref{second model system definition}), in terms of the initial data at $\{\tau=1\}.$ This system includes the commuted Bianchi system, giving estimates on the induced asymptotic data at $\{v=0\}$ in terms of the initial data at finite times. In the present paper, we illustrate how to prove these estimates for a toy problem which models the singular top order quantity $\nabla^M\Phi_0,$ and captures the main difficulties of the problem. This will serve as a guideline for the general model systems treated in \cite{Cwave}. We encourage the reader to return to Section~\ref{scattering map section intro} of the introduction for an outline of the proof. We refer the reader to \cite[Section~4]{Cwave} for a complete proof for the system (\ref{second model system definition}), where we prove:
\begin{theorem}[{\cite[Theorem~1.2]{Cwave}}]\label{backward direction main result theorem general} For any $M>0$ large enough, the system (\ref{second model system definition}) satisfies the estimates for all $\tau\in(0,1)$:
    \[\sum_{m=0}^M\Big(\tau\big\|\nabla^m\Phi_0\big\|_{H^{1/2}}^2+\tau^2\big\|\nabla^m\Phi_0\big\|_{H^{3/2}}^2\Big)+\tau^2\big\|\nabla_{\tau}\nabla^M\Phi_0\big\|_{H^{1/2}}^2+\sum_{m=0}^{M-1}\tau^2\big\|\nabla_{\tau}\nabla^m\Phi_0\big\|_{L^2}^2+\int_{\tau}^1\tau'\big\|\Phi_0\big\|_{H^{M+1}}^2d\tau'\]\[+\sum_{i=1}^I\sum_{m=0}^M\big\|\nabla^m\Phi_i\big\|_{H^{3/2}}^2+\sum_{i=1}^I\sum_{m=0}^{M}\big\|\nabla_{\tau}\nabla^m\Phi_i\big\|_{H^{1/2}}^2+\sum_{i=1}^I\sum_{m=0}^{M}\int_{\tau}^1\frac{1}{\tau'}\big\|\nabla_{\tau}\nabla^m\Phi_i\big\|_{H^{1/2}}^2d\tau'\lesssim\]\[\lesssim \sum_{i=0}^I\sum_{m=0}^M\bigg(\big\|\nabla^m\Phi_i\big\|_{H^{3/2}}^2+\big\|\nabla^m\nabla_{\tau}\Phi_i\big\|_{H^{1/2}}^2\bigg)\bigg|_{\tau=1}+\sum_{i=0}^I\sum_{m=0}^{M}\int_{\tau}^1\tau'\big\|F_m^i\big\|_{H^{1/2}}^2d\tau'.\]
    In the above estimate, the implicit constant depends on $M>N$, the bound on $\|\psi\|_{H^{M+1}}\big|_{\tau=1},$ and the bounds satisfied by the background $\big(\mathcal{M},g\big)$ according to Theorem~\ref{stability of de sitter theorem in section}.
\end{theorem}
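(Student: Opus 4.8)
The plan is to prove the estimate of Theorem~\ref{backward direction main result theorem general} by decomposing $\Phi_0=\big(\Phi_0\big)_Y+\big(\Phi_0\big)_J$ into singular and regular components exactly as in Section~\ref{model forward direction section}, and treating $\big(\Phi_0\big)_J$ together with $\Phi_1,\dots,\Phi_I$ as the ``regular sector''. The regular sector satisfies wave equations with a favourable $-\tfrac{1}{\tau}\nabla_\tau$ damping term (for the $\Phi_i$) or a regular expansion at $\tau=0$ (for $\big(\Phi_0\big)_J$), so it can be handled by backward energy estimates with $\nabla_\tau$-multipliers run from $\tau=1$ toward $\tau=0$; the $-\tfrac1\tau$ term produces a good bulk term $\int_\tau^1 \tfrac1{\tau'}\|\nabla_\tau\nabla^m\Phi_i\|^2$ in the backward direction. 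The singular top-order quantity $\nabla^M\big(\Phi_0\big)_Y$ is the genuine difficulty, and for this one invokes Theorem~\ref{backward direction main result theorem} applied to $\xi=\nabla^M\big(\Phi_0\big)_Y$: the hypotheses (\ref{backward direction linear wave equation main intro}) match (\ref{equation for alpha Y}) up to the $\psi\nabla\xi$ terms that were kept schematically, and the expansion of $\xi$ at $\tau=0$ is precisely the prescribed one with renormalised data $\mathfrak{h}_M$. This yields control of $\tau\|\xi\|_{H^{1/2}}^2+\tau^2\|\xi\|_{H^{3/2}}^2+\tau^2\|\nabla_\tau\xi\|_{H^{1/2}}^2+\int_\tau^1\tau'\|\xi\|_{H^1}^2$ and of the asymptotic quantities $\|\nabla^M\mathcal{O}\|_{H^1}^2+\sum_k 2^{2k}\|P_k\mathfrak{h}_M\|_{L^2}^2$, all bounded by the data energy at $\tau=1$.

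The key steps, in order, are as follows. First, establish the lower-order estimates for $\nabla^m\big(\Phi_0\big)_Y$ with $m<M$ and for the commutator $\mathcal{C}=\nabla^M\big(\Phi_0\big)_Y-\nabla\nabla^{M-1}\big(\Phi_0\big)_Y$, analogous to Section~\ref{forward lower order estimates section} but run in the backward direction from $\tau=1$; these feed into the error terms in the top-order argument. Second, run the top-order argument for $\nabla^M\big(\Phi_0\big)_Y$ by splitting frequencies: the low regime $k<X$ treated by a crude preliminary estimate, the low regime $k\ge X$, $\tau\in[0,X2^{-k-1}]$ via Section~\ref{backward low freq estimates section} of \cite{Cwave}, and the high regime $k\ge X$, $\tau\in[X2^{-k-1},1]$ via Proposition~\ref{high frequency backward estimate} of \cite{Cwave}. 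The unfavourable bulk term $\int_\tau^1 \tfrac{2^k}{(\tau')^2}\|P_k\xi\|_{L^2}^2$ is absorbed using the refined Poincar\'e inequality (\ref{refined Poincare inequality}), which trades it for $\int\tfrac{2^{-k}}{(\tau')^2}\|\nabla P_k\xi\|^2$ plus a sum over lower frequencies with the $2^{-9k+7l}$ weight plus a lower-order $2^{-4k}\|\xi\|_{L^2}^2$ term; the resulting system of coupled estimates over $k$ is closed by the discrete Gronwall inequality together with the discrete-continuous Gronwall-like inequality. Third, once $\xi$ is controlled on all of $(0,1]$, derive the estimate (\ref{backward direction asymptotic quantity estimate intro}) for $\nabla^M\mathcal{O}$ and $\mathfrak{h}_M$ by examining the equations satisfied by $\overline{\xi}_k=P_k\xi-\tau\log(2^k\tau)P_k\nabla_\tau\xi$, splitting error terms into low- and high-frequency contributions. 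Fourth, combine with the regular-sector estimates and the lower-order bounds on $\nabla^m\big(\Phi_0\big)_Y$, $\mathcal{C}$ to assemble the full left-hand side of the theorem, using $\nabla^M\Phi_0=\nabla^M\big(\Phi_0\big)_Y+\nabla^M\big(\Phi_0\big)_J$ and the commutator estimates to pass between $\nabla^M$ applied to $\Phi_0$ and $\nabla$ applied to $\nabla^{M-1}\Phi_0$.

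The main obstacle is, as indicated, the top-order singular quantity: the geometric Littlewood--Paley projections are only almost orthogonal and time-dependent, so the natural ``per-frequency'' energy estimate for $P_k\xi$ does not close — the commutator $[\nabla_4,P_k]$ produces terms with the shifted projector $\underline{\widetilde P}_k$ via (\ref{LP refined 2}) and (\ref{LP refined 4}), and the bad bulk term forces in low-frequency feedback through (\ref{refined Poincare inequality}). Consequently one cannot estimate each $k$ in isolation; the estimates must be summed over $k$ first and then closed, and the summation requires the Gronwall-type inequalities mentioned above together with careful bookkeeping of the weights $2^{-9k+7l}$ to ensure the low-frequency feedback is genuinely lower order. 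A secondary technical point is that the nonlinear/schematic error terms $F_m^i$ and $\psi\nabla^{m+1}\Phi$ must be shown to be essentially linear at top order — this is where the assumption that $M$ is large enough enters, since we commute with many angular derivatives so that all but one factor in every product can be placed in $L^\infty$ using Sobolev embedding on $S^n$, exactly as in Sections~\ref{stability dS section} and~\ref{forward direction full system section}.
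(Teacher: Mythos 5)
Your proposal contains a genuine gap in its central reduction step. You propose to decompose $\Phi_0=\big(\Phi_0\big)_Y+\big(\Phi_0\big)_J$ exactly as in Section~\ref{model forward direction section} and then invoke Theorem~\ref{backward direction main result theorem} with $\xi=\nabla^M\big(\Phi_0\big)_Y$. This cannot work, for two independent reasons. First, the paper explicitly states in Section~\ref{model backward direction section} that Theorem~\ref{backward direction main result theorem} is \emph{not} used in the proof of Theorem~\ref{backward direction main result theorem general} — it serves only as a guideline for the argument in \cite{Cwave}, which is carried out directly on the full system; it is not a lemma one appeals to. Second, and more fundamentally, the singular component $\big(\Phi_0\big)_Y$ is defined by specifying asymptotic data $\big(\mathcal{O},2(\log\nabla)\mathcal{O}\big)$ at $\tau=0$ and solving forward. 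In the backward direction the Cauchy data given to you sit at $\tau=1$ and control only the full $\Phi_0$ there; you have no independent control of $\big(\Phi_0\big)_Y\big|_{\tau=1}$ or $\big(\Phi_0\big)_J\big|_{\tau=1}$, since determining either one requires already knowing $\mathcal{O}$ in $H^{M+1}$ — which is precisely the quantity the theorem is trying to estimate. Attempting to supply a crude a priori bound on $\mathcal{O}$ from the asymptotic completeness result of Section~\ref{asymptotic completeness section} only gives $H^{N}$ regularity with $N<M$, which loses derivatives and destroys the sharp estimate. Using the forward estimates of Theorem~\ref{forward direction main result theorem general} to control $\big(\Phi_0\big)_Y\big|_{\tau=1}$ in terms of $\|\mathcal{O}\|_{H^{M+1}}$ just makes the argument circular.

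There is also a mismatch in your identification of $\xi$. The toy quantity $\xi$ in Theorem~\ref{backward direction main result theorem} has the expansion $\xi=2\nabla^M\mathcal{O}\log\tau+2(\log\nabla)\nabla^M\mathcal{O}+\mathfrak{h}_M+O(\tau^2|\log\tau|^2)$, i.e.\ it carries the $\mathfrak{h}_M$ term, which the paper explicitly says means it models the \emph{full} $\nabla^M\Phi_0$, not the singular component $\nabla^M\big(\Phi_0\big)_Y$ (whose expansion has no $\mathfrak{h}_M$). This is a sign that the backward analysis is deliberately organized to avoid the Y/J split: in the backward direction the natural object to estimate is $\xi=\nabla^M\Phi_0$ directly, the asymptotic quantities $\mathcal{O}$ and $\mathfrak{h}_M$ are \emph{outputs} of the argument extracted via the renormalized quantity $\overline{\xi}_k=P_k\xi-\tau\log(2^k\tau)P_k\nabla_\tau\xi$, and there is no need (and no way) to separate singular and regular parts of the data at $\tau=1$. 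Your description of the technical machinery — the refined Poincar\'e inequality, the $2^{-9k+7l}$ weights, the discrete and discrete-continuous Gronwall inequalities, the need to sum over $k$ before closing the estimates — is a correct account of what appears in Sections~\ref{backward low freq estimates section}--\ref{backward asymptotic quantities estimates section} and in \cite{Cwave}, but it should be applied directly to $\nabla^M\Phi_0$ (or its replacement in the full coupled system with the genuine inhomogeneities $\psi\nabla^{m+1}\Phi+F_m^i$), not to $\nabla^M\big(\Phi_0\big)_Y$ via the toy theorem.
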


Additionally, in \cite[Section~4]{Cwave} we also prove estimates for the asymptotic quantities $\mathcal{O}$ and $\mathfrak{h}_M,$ where the tensor $h$ is renormalized as in Section~\ref{model forward direction section} in order to obtain a sharp scattering result:
\begin{theorem}[{\cite[Theorem~1.2]{Cwave}}]\label{asymptotic data estimates theorem general}
    For any $M>0$ and $x>0$ large enough we have the estimate:
    \[\big\|\mathcal{O}\big\|_{H^{M+1}}^2+\sum_{k\geq x}2^{2k}\big\|P_k\mathfrak{h}_M\big\|_{L^2}^2\lesssim \sum_{i=0}^I\sum_{m=0}^M\bigg(\big\|\nabla^m\Phi_i\big\|_{H^{3/2}}^2+\big\|\nabla^m\nabla_{\tau}\Phi_i\big\|_{H^{1/2}}^2\bigg)\bigg|_{\tau=1}+\sum_{i=0}^I\sum_{m=0}^{M}\int_{0}^1\tau\big\|F_m^i\big\|_{H^{1/2}}^2d\tau.\]
    As in Theorem~\ref{backward direction main result theorem general}, the implicit constant depends on $M,$ the bound on $\|\psi\|_{H^{M+1}}\big|_{\tau=1},$ and the bounds satisfied by the background $\big(\mathcal{M},g\big)$ according to Theorem~\ref{stability of de sitter theorem in section}.
\end{theorem}
\begin{remark}
    In practice, we use this result combined with the following bound for $\big\|\mathfrak{h}\big\|_{H^{M+1}}^2$:
    \[\big\|\mathfrak{h}\big\|_{H^{M+1}}^2\lesssim\big\|h\big\|_{L^2}^2+C\Big(\big\|\slashed{Riem}_0\big\|_{H^{M}}^2\Big)\big\|\mathcal{O}\big\|_{H^{M}}^2+\sum_{k\geq x}2^{2k}\big\|P_k\mathfrak{h}_M\big\|_{L^2}^2,\]
    which follows using the estimates for the LP projections in Section~\ref{LP Section} and \cite{geometricLP}.
\end{remark}

The main difficulties of the above results are already present in the following toy problem. We assume that the smooth horizontal tensor $\xi$ defined on the hypersurface $\{u=-1\}\times\{\tau\in(0,1)\}\times S^n$ satisfies the equations:
\begin{equation}\label{backward direction linear wave equation main}
    \nabla_{\tau}\big(\nabla_{\tau}\xi\big)+\frac{1}{\tau}\nabla_{\tau}\xi-4\Delta\xi=\psi\nabla\xi
\end{equation}
\[\xi=2\nabla^M\mathcal{O}\log\tau+2\big(\log\nabla\big)\nabla^M\mathcal{O}+\mathfrak{h}_M+O\big(\tau^2|\log\tau|^2\big),\ \nabla_{\tau}\xi=\frac{2\nabla^M\mathcal{O}}{\tau}+O\big(\tau|\log\tau|^2\big)\text{ in }C^{\infty}(S^n).\]
We notice that the asymptotic expansion at $\tau=0$ is the same as that of $\nabla^M\Phi_0,$ but in equation (\ref{backward direction linear wave equation main}) we only kept the terms depending on $\nabla\xi$ on the right hand side for simplicity.

In this section we prove the following result:
\begin{theorem}\label{backward direction main result theorem} For any $M>0$ large enough, we have for all $\tau\in(0,1)$:
\begin{equation}\label{backward direction main estimate one}
    \tau\big\|\xi\big\|_{H^{1/2}}^2+\tau^2\big\|\xi\big\|_{H^{3/2}}^2+\tau^2\big\|\nabla_{\tau}\xi\big\|_{H^{1/2}}^2+\int_{\tau}^1\tau'\big\|\xi\big\|_{H^{1}}^2d\tau'\lesssim \bigg(\big\|\xi\big\|_{H^{3/2}}^2+\big\|\nabla_{\tau}\xi\big\|_{H^{1/2}}^2\bigg)\bigg|_{\tau=1}
\end{equation}
\begin{equation}\label{backward direction asymptotic quantity estimate}
    \big\|\nabla^M\mathcal{O}\big\|_{H^{1}}^2+\sum_{k\geq 0}2^{2k}\big\|P_k\mathfrak{h}_M\big\|_{L^2}^2\lesssim \bigg(\big\|\xi\big\|_{H^{3/2}}^2+\big\|\nabla_{\tau}\xi\big\|_{H^{1/2}}^2\bigg)\bigg|_{\tau=1}
\end{equation}
\end{theorem}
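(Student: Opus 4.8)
The plan is to follow the low-frequency/high-frequency split outlined in Section~\ref{scattering map section intro}, treating each Littlewood--Paley piece $P_k\xi$ separately with the frequency-dependent time of transition at $\tau\sim X2^{-k-1}$. We work with the time variable $t=2^k\tau$, so that in terms of $t$ the equation for $P_k\xi$ becomes (after applying $P_k$ to (\ref{backward direction linear wave equation main}) and commuting) a perturbation of $\nabla_t(\nabla_t P_k\xi)+\tfrac1t\nabla_t P_k\xi-\tfrac{4}{2^{2k}}\Delta P_k\xi=0$, with error terms controlled using Lemmas~\ref{LP Projections lemma}, \ref{LP Projections lemma refined} and \ref{fractional LP bounds}. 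For frequencies $k<x$ (with $x$ a large fixed constant) we just run a standard energy estimate on $\tau\in[0,1]$, since there the singular coefficient $1/\tau$ causes no summability problem. For $k\geq x$, in the high frequency regime $t\in[X/2,2^k]$ we multiply by $\sqrt t$ as in the first model system and contract with $P_k\nabla_t(\sqrt t\,\xi)$; the key point, emphasized in the introduction, is that unlike the forward problem the resulting bulk term $\int \tfrac{2^k}{(\tau')^2}\|P_k\xi\|_{L^2}^2$ now has the \emph{wrong} sign. We absorb it using the refined Poincaré inequality (\ref{refined Poincare inequality}): the main term $\tfrac1\delta 2^{-2k}\|\nabla P_k\xi\|_{L^2}^2$ is controlled by the good gradient energy, the tail term $\delta 2^{-4k}\|\xi\|_{L^2}^2$ is lower order, and the cross term $\delta\sum_{l<k}2^{-9k+7l}\|\nabla P_l\xi\|_{L^2}^2$ feeds into a discrete Gronwall scheme across all frequencies. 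In the low frequency regime $t\in[0,X/2]$ for $k\geq x$, we run the energy estimate for $P_k\xi/\log t$ with a renormalization by $R_k\nabla^M\mathcal{O}$ exactly as in Proposition~\ref{singular component low frequency proposition}, using Lemma~\ref{Rk lemma} to control the renormalization and obtaining data at $t=X/2$ from the high frequency solution.

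\textbf{Assembling the estimate.} Having the per-frequency estimates, I would define an energy $E_k^2(\tau)$ analogous to the one in the proof of Theorem~\ref{forward direction main result theorem} --- schematically $2^k\tau^2\|P_k\nabla_\tau\xi\|_{L^2}^2+2^k\tau\|P_k\xi\|_{L^2}^2+2^k\tau^2\|\nabla P_k\xi\|_{L^2}^2+\tau\|\nabla P_k\xi\|_{L^2}^2$ --- split $\sum_{k\geq0}2^k E_k^2$ into the low- and high-frequency contributions according to whether $\tau\leq X2^{-k-1}$, and sum. Because the commutator bounds (\ref{LP refined 2}), (\ref{fractional LP bounds 1}) produce different projection operators $\widetilde P_k,\underline{\widetilde P}_k$ on the right-hand side, the error terms cannot be closed at the level of a single $k$; one must sum first and then use Bessel's inequality and the fact that $\sum_k 2^{-4|k-k'|}\lesssim 1$ from $L^2$-almost-orthogonality to recover $\|\xi\|_{H^s}$-type quantities. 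This yields a Gronwall inequality in $\tau$ for $\sum_k 2^k E_k^2$ plus the bulk integral $\int_\tau^1\tau'\|\xi\|_{H^1}^2$, and (\ref{backward direction main estimate one}) follows, with all data evaluated at $\tau=1$ (here the assumption $\|\psi\|_{H^{M+1}}|_{\tau=1}\lesssim1$ is used to bound the commutator terms at initial time).

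\textbf{The asymptotic-quantity estimate.} For (\ref{backward direction asymptotic quantity estimate}) I would proceed as indicated at the end of Section~\ref{scattering map section intro}: introduce $\overline\xi_k=P_k\xi-\tau\log(2^k\tau)\,P_k\nabla_\tau\xi$, which by the prescribed expansion of $\xi$ satisfies $\overline\xi_k(0)=2R_k\nabla^M\mathcal{O}+P_k\mathfrak{h}_M$ (the $\log\tau$ terms cancel by construction) and whose $\nabla_\tau$ derivative at $\tau=0$ vanishes to the relevant order. Writing the equation satisfied by $\overline\xi_k$ --- which is of the form $\nabla_\tau(\nabla_\tau\overline\xi_k)+\tfrac1\tau\nabla_\tau\overline\xi_k-4\Delta\overline\xi_k=(\text{RHS involving }P_k\xi,\nabla_\tau P_k\xi\text{ and commutators})$ --- and decomposing the right-hand side into its low- and high-frequency regime pieces, one integrates from $\tau=0$ and uses the already-established bound (\ref{backward direction main estimate one}) together with the refined Poincaré inequality and Lemma~\ref{Rk lemma} to control $2^k\|P_k\mathfrak{h}_M\|_{L^2}$ and $\|P_k\nabla^M\mathcal{O}\|_{L^2}$ (the latter via the $1/\tau$ coefficient of $\nabla_\tau\xi$); then summing in $k$ using $\sum_kP_k^2=I$ and the Sobolev space characterization from \cite{geometricLP} gives (\ref{backward direction asymptotic quantity estimate}).

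\textbf{Main obstacle.} The hard part will be the wrong-sign bulk term $\int_\tau^1\tfrac{2^k}{(\tau')^2}\|P_k\xi\|_{L^2}^2\,d\tau'$ in the high frequency regime. Controlling it forces the use of (\ref{refined Poincare inequality}) rather than the naive Poincaré inequality, which couples frequency $k$ to all lower frequencies $l<k$ through the term $\delta\sum_{l<k}2^{-9k+7l}\|\nabla P_l\xi\|_{L^2}^2$; summing in $k$ then produces a mixed discrete-in-$k$ and continuous-in-$\tau$ system that does not close under the ordinary Gronwall lemma. Resolving this requires the "novel discrete-continuous Gronwall-like inequality" mentioned in the introduction --- carefully choosing $\delta$ small (depending on the geometric constants $7,9$ in (\ref{refined Poincare inequality})) so that the off-diagonal coupling is a small perturbation, and simultaneously tracking the commutation error terms with mismatched projections, which only become summable after exploiting almost-orthogonality. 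Everything else is a routine, if lengthy, adaptation of the energy estimates already carried out for the first model system.
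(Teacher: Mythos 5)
Your high-frequency analysis (multiplying by $\sqrt{t}$, the wrong-sign bulk term, absorbing it via the refined Poincar\'e inequality (\ref{refined Poincare inequality}), the resulting coupling across frequencies handled by a discrete-continuous Gronwall scheme with $\delta$ chosen small) matches the paper's Propositions~\ref{high frequency backward estimate}--\ref{backward high freq proposition improved} and Lemma~\ref{grownall like lemma}, and your assembly of $\sum_k 2^k E_k^2$ by splitting at $\tau\sim X2^{-k-1}$ and summing before closing the commutator errors is the paper's Section~\ref{backward main result estimates section}. The $\overline{\xi}_k$ mechanism for (\ref{backward direction asymptotic quantity estimate}) is also the right idea, though the paper works with the \emph{first-order} transport equation $\nabla_\tau\overline{\xi}_k=-\log(2^k\tau)\nabla_\tau(\tau P_k\nabla_\tau\xi)=-4\tau\log(2^k\tau)\Delta P_k\xi+\ldots$ and integrates directly from $\tau=0$, rather than deriving a second-order wave equation for $\overline{\xi}_k$; also the limit is $P_k\mathfrak{h}_M+R_k\nabla^M\mathcal{O}$, not $P_k\mathfrak{h}_M+2R_k\nabla^M\mathcal{O}$.

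The genuine gap is in your low-frequency regime. You propose to ``run the energy estimate for $P_k\xi/\log t$ with a renormalization by $R_k\nabla^M\mathcal{O}$ exactly as in Proposition~\ref{singular component low frequency proposition}.'' That is the \emph{forward}-direction low-frequency estimate, where $\mathcal{O}$ is prescribed data; transplanting it to the backward problem is circular, because here $\mathcal{O}$ and $\mathfrak{h}_M$ are precisely the outputs of (\ref{backward direction asymptotic quantity estimate}) that the low-frequency estimate is supposed to help control. Renormalizing by $R_k\nabla^M\mathcal{O}$ puts $\|R_k\nabla^M\mathcal{O}\|$ into both the data at $t=X/2$ and the error terms without any a priori bound on it. The paper's Proposition~\ref{preliminary low frequency estimate proposition} avoids this entirely: it multiplies the original equation by $\tau$ to obtain $\nabla_\tau(\tau P_k\nabla_\tau\xi)-4\tau\Delta P_k\xi=\tau P_k(\psi\nabla\xi)+\tau[\nabla_4,P_k]\tau\nabla_\tau\xi$ and runs an \emph{unrenormalized} energy estimate directly on the quantities $\tau P_k\nabla_\tau\xi$ and $\tau\nabla P_k\xi$, which remain uniformly bounded down to $\tau=0$ by the logarithmic expansion. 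Only \emph{after} this estimate is in hand does one extract $\mathcal{O}$ from the limit $\tau P_k\nabla_\tau\xi\to 2P_k\nabla^M\mathcal{O}$ (as in the paper's Proposition following Section~\ref{backward asymptotic quantities estimates section}), and only then introduce $\overline{\xi}_k$ to read off $\mathfrak{h}_M$. You should drop the $\log t$ division and the $R_k\nabla^M\mathcal{O}$ subtraction in the low-frequency step and instead replicate the paper's direct first-order estimate.
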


We denote the initial data norm at $\tau=1$:
\[D:=\Big(\big\|\nabla_{\tau}\xi\big\|_{H^{1/2}}^2+\big\|\xi\big\|_{H^{3/2}}^2\Big)\Big|_{\tau=1}.\]
According to \cite[Proposition~4.2]{Cwave}, it is straightforward to establish the following preliminary estimate
\begin{equation}\label{backward direction basic estimate one}
    \big\|\tau\nabla_{\tau}\xi\big\|_{L^2}^2+\big\|\tau\nabla\xi\big\|_{L^2}^2+\tau\big\|\xi\big\|_{L^2}^2+\int_{\tau}^1\tau'\big\|\xi\big\|_{H^1}^2d\tau'\lesssim D.
\end{equation}
In order to prove optimal estimates for $\xi$, we split our analysis into the low frequency and high frequency regime. We consider $X=2^{x+1}$ to be a large constant, to be chosen later. We split the frequencies into the following regimes:
\begin{itemize}
    \item Low frequency regime $k<x$ for all $\tau\in[0,1]$. We bound the solution using the preliminary estimate \eqref{backward direction basic estimate one}.
    \item Low frequency regime $k\geq x$ for $\tau\in[0,X2^{-k-1}]$. We bound the solution in Proposition~\ref{preliminary low frequency estimate proposition} in Section~\ref{backward low freq estimates section}.
    \item High frequency regime $k\geq x$ for $\tau\in[X2^{-k-1},1]$. We bound the solution in Proposition~\ref{backward high freq proposition improved} in Section~\ref{high frequency estimates section}.
\end{itemize}
We notice that unlike the case of the first model system, in the current situation we had to introduce an additional parameter $X$ for technical reasons, which will become clear later. We also make the notation convention until otherwise stated that if the implicit constant in an inequality depends on $X,$ we write $\lesssim_X.$ On the other hand, if we only write $\lesssim$, then the implicit constant is independent of $X$.

Once we established the low frequency regime estimate in Proposition~\ref{preliminary low frequency estimate proposition} and the high frequency regime estimate in Proposition~\ref{backward high freq proposition improved}, we combine them to prove \eqref{backward direction main estimate one} in Section~\ref{backward main result estimates section}. We then prove the estimate \eqref{backward direction asymptotic quantity estimate} for the asymptotic quantities in Section~\ref{backward asymptotic quantities estimates section}, which completes the proof of Theorem~\ref{backward direction main result theorem}.

\subsection{Low Frequency Regime Estimates}\label{backward low freq estimates section}

We prove the following estimate for the low frequency regime:

\begin{proposition}\label{preliminary low frequency estimate proposition}
    For any $0\leq\tau<X2^{-k-1}\leq1$ we have the low frequency regime estimate:
    \[\big\|\tau\nabla P_k\xi\big\|_{L^2}^2+\big\|\tau P_k\nabla_{\tau}\xi\big\|_{L^2}^2\lesssim X^22^{-2k}\Big(\big\|P_k\nabla_{\tau}\xi\big\|^2_{L^2}+\big\|\nabla P_k\xi\big\|^2_{L^2}\Big)\Big|_{\tau=X2^{-k-1}}+C_X2^{-3k}D.\]
\end{proposition}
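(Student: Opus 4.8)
\textbf{Proof plan for Proposition \ref{preliminary low frequency estimate proposition}.}

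The plan is to run a standard energy estimate for the $P_k$-projected equation in the rescaled time variable $t = 2^k\tau$, integrating backwards from the transition time $t = X/2$ (where we have data from the high frequency regime) down to $t = 2^k\tau$. First I would apply $P_k$ to \eqref{backward direction linear wave equation main} and commute to obtain
\begin{equation*}
\nabla_{t}\big(P_k\nabla_{t}\xi\big) + \frac{1}{t} P_k\nabla_{t}\xi - \frac{4}{2^{2k}}\Delta P_k\xi = \frac{1}{2^{2k}} P_k(\psi\nabla\xi) + [\nabla_t,P_k]\nabla_t\xi,
\end{equation*}
then contract with $P_k\nabla_t\xi$ and integrate by parts over $[t, X/2]$. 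Because we are in the regime $t \leq X/2$, the factor $1/t$ in front of $P_k\nabla_t\xi$ has a \emph{favorable} sign when integrating backwards in time (it behaves like the good $\frac{1}{\tau}\nabla_\tau$ term in the first model system near $\tau=0$), so there is no top-order bulk obstruction here — this is precisely why the low frequency regime is the easy case. The commutator and curvature error terms are controlled using \eqref{LP bounds 4}, \eqref{LP bounds 5}, \eqref{LP bounds 6}, \eqref{LP bounds 7} from Lemma \ref{LP Projections lemma} and the relation \eqref{LP bounds 1}, which turns $[\nabla_t,P_k]$ into $\frac{t}{2^{2k-1}}[\nabla_4,P_k]$, giving an extra power of $t/2^{2k} \lesssim X 2^{-2k}$ that makes these terms lower order after a Gronwall argument on the bounded interval $t \in [t, X/2]$.

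The key bookkeeping step is the rescaling. After the energy estimate in $t$, the natural energy is $\|P_k\nabla_t\xi\|_{L^2}^2 + \frac{1}{2^{2k}}\|\nabla P_k\xi\|_{L^2}^2$, and one has $\nabla_t = 2^{-k}\nabla_\tau$, so $\|P_k\nabla_t\xi\|_{L^2} = 2^{-k}\|P_k\nabla_\tau\xi\|_{L^2}$. Converting back to $\tau$ and multiplying through by appropriate powers, the estimate reads schematically
\begin{equation*}
\|\tau\nabla P_k\xi\|_{L^2}^2 + \|\tau P_k\nabla_\tau\xi\|_{L^2}^2 \lesssim X^2 2^{-2k}\Big(\|P_k\nabla_\tau\xi\|_{L^2}^2 + \|\nabla P_k\xi\|_{L^2}^2\Big)\Big|_{\tau = X2^{-k-1}} + (\text{error}),
\end{equation*}
where the factor $X^2 2^{-2k}$ arises because at $\tau = X2^{-k-1}$ we have $t = X/2$, so $\tau^2 \sim X^2 2^{-2k}$. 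The error terms, after the Gronwall estimate absorbing the interval length $\lesssim X$, collect into a contribution bounded by $C_X 2^{-3k}$ times quantities controlled by the preliminary estimate \eqref{backward direction basic estimate one}; more precisely, the lower-order pieces $\frac{1}{2^{2k}}\int \|\xi\|_{H^1}^2$ and similar, weighted appropriately, are all bounded by $2^{-2k} D$ up to $X$-dependent constants, and the one extra power of $2^{-k}$ comes from either a finite-band gain $\|P_k(\cdot)\| \lesssim 2^{-k}\|\nabla(\cdot)\|$ or from the $[\nabla_4, P_k]$ commutator bound. This yields exactly the claimed right-hand side $X^2 2^{-2k}(\cdots)|_{\tau=X2^{-k-1}} + C_X 2^{-3k} D$.

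The main obstacle I anticipate is not the energy estimate itself but the careful tracking of how the error terms carry the decisive extra factor of $2^{-k}$ (so that the total error is $2^{-3k}D$ rather than merely $2^{-2k}D$), since this extra decay is what will make the sum over $k \geq x$ ultimately summable when combining with the high frequency estimate in Section~\ref{backward main result estimates section}. Concretely, one must use that $\xi$ and $\nabla\xi$ appearing in the error terms are themselves at worst $O(\log\tau)$ and $O(1)$ respectively near $\tau = 0$ by \eqref{backward direction basic estimate one}, and that each commutator $[\cdot, P_k]$ or finite-band application provides a clean $2^{-k}$; the bounded-interval Gronwall step contributes only $X$-dependent constants and no loss in $k$, which is why working with the rescaled time $t$ on $[t, X/2]$ rather than directly in $\tau$ is essential. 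I would also need to handle the volume-form time dependence, using $\frac{1}{2}\frac{d}{dt}\|F\|_{L^2}^2 = \int_{S_t} F\cdot\nabla_t F + O(2^{-2k} t\|F\|_{L^2}^2)$ as in the first model system, which again only costs $X$-dependent constants after Gronwall.
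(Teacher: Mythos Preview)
Your sign analysis of the $\frac{1}{t}P_k\nabla_t\xi$ term is incorrect, and this is a genuine gap. When you contract \eqref{backward direction linear wave equation main} with $P_k\nabla_t\xi$ and integrate from $t$ to $X/2$, that term produces the bulk contribution $+\int_t^{X/2}\frac{1}{t'}\big\|P_k\nabla_t\xi\big\|_{L^2}^2$ on the \emph{right-hand side} of the estimate for the energy at time $t$. In the first model system this bulk was favorable only because one integrates \emph{forward} from $\tau=0$; here the direction is reversed and the sign flips. Since the Gr\"onwall weight is then $2/t'$, the factor it generates is $(X/(2t))^2$, which diverges as $t\to 0$; the interval being ``bounded'' does not help. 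Even if the borderline factor $t^2$ from your final rescaling happened to compensate exactly at the level of the data term, the error terms (after being hit by the same Gr\"onwall factor) would not close with the required $2^{-3k}D$ decay.

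The paper avoids this entirely by a different multiplier: it first rewrites the equation as $\nabla_\tau(\tau P_k\nabla_\tau\xi)-4\tau\Delta P_k\xi=\tau P_k(\psi\nabla\xi)+\tau[\nabla_4,P_k]\tau\nabla_\tau\xi$, absorbing the singular $\frac{1}{\tau}\nabla_\tau\xi$ into the total derivative, and then contracts with $\tau P_k\nabla_\tau\xi$. The natural energy is then directly the $\tau$-weighted quantity $\|\tau P_k\nabla_\tau\xi\|_{L^2}^2+\|\tau\nabla P_k\xi\|_{L^2}^2$, and the only principal bulk that arises is $-\int 4\tau'\|\nabla P_k\xi\|_{L^2}^2$, which has the \emph{good} sign when integrating backward. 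Gr\"onwall is then applied only to the commutator and right-hand-side errors, which carry a benign coefficient $2^k/X$ over an interval of length $X2^{-k-1}$, yielding a bounded factor; the remaining error integrals are bounded by $C_X 2^{-3k}D$ directly from \eqref{backward direction basic estimate one}. In your $t$-variable this amounts to using the multiplier $tP_k\nabla_t\xi$ rather than $P_k\nabla_t\xi$.
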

\begin{proof}
    For any $k\geq x$, we apply $P_k$ to equation (\ref{backward direction linear wave equation main}) to obtain:
    \[\nabla_{\tau}\big(\tau P_k\nabla_{\tau}\xi\big)-4\tau\Delta P_k\xi=\tau P_k\big(\psi\nabla\xi\big)+\tau[\nabla_{4},P_k]\tau\nabla_{\tau}\xi.\]
    We have the energy estimate:
    \[\big\|\tau\nabla P_k\xi\big\|_{L^2}^2+\big\|\tau P_k\nabla_{\tau}\xi\big\|_{L^2}^2\lesssim X^22^{-2k}\Big(\big\|P_k\nabla_{\tau}\xi\big\|^2_{L^2}+\big\|\nabla P_k\xi\big\|^2_{L^2}\Big)\Big|_{X2^{-k-1}}\]\[+\int_{\tau}^{X2^{-k-1}}(\tau')^3\big\|\nabla P_k\xi\big\|_{L^2}\cdot\big\|\nabla [P_k,\nabla_4]\xi\big\|_{L^2}+\int_{\tau}^{X2^{-k-1}}(\tau')^3\big\|\nabla P_k\xi\big\|_{L^2}\cdot\big\|[\nabla,\nabla_4]P_k\xi\big\|_{L^2}\]\[+\int_{\tau}^{X2^{-k-1}}(\tau')^2\big\|P_k\nabla_{\tau}\xi\big\|_{L^2}\cdot\big\|P_k\big(\psi\nabla\xi\big)\big\|_{L^2}+\int_{\tau}^{X2^{-k-1}}(\tau')^2\big\|P_k\nabla_{\tau}\xi\big\|_{L^2}\cdot\big\|[P_k,\nabla_4]\tau'\nabla_{\tau}\xi\big\|_{L^2}\]
    We use Lemma \ref{LP Projections lemma} to obtain:
    \[\big\|\tau\nabla P_k\xi\big\|_{L^2}^2+\big\|\tau P_k\nabla_{\tau}\xi\big\|_{L^2}^2\lesssim X^22^{-2k}\Big(\big\|P_k\nabla_{\tau}\xi\big\|^2_{L^2}+\big\|\nabla P_k\xi\big\|^2_{L^2}\Big)\Big|_{X2^{-k-1}}+\int_{\tau}^{X2^{-k-1}}(\tau')^3\big\|\nabla P_k\xi\big\|_{L^2}\cdot\big\|\xi\big\|_{H^1}\]\[+\int_{\tau}^{X2^{-k-1}}(\tau')^3\big\|\nabla P_k\xi\big\|_{L^2}\cdot\big\|P_k\xi\big\|_{H^1}+\int_{\tau}^{X2^{-k-1}}(\tau')^22^{-k}\big\|P_k\nabla_{\tau}\xi\big\|_{L^2}\cdot\big\|\xi\big\|_{H^1}+\int_{\tau}^{X2^{-k-1}}(\tau')^2\big\|P_k\nabla_{\tau}\xi\big\|_{L^2}\cdot\big\|\tau'\nabla_{\tau}\xi\big\|_{L^2}\]
    \[\lesssim X^22^{-2k}\Big(\big\|P_k\nabla_{\tau}\xi\big\|^2_{L^2}+\big\|\nabla P_k\xi\big\|^2_{L^2}\Big)\Big|_{X2^{-k-1}}+\int_{\tau}^{X2^{-k-1}}\frac{2^k}{X}\big\|\tau'\nabla P_k\xi\big\|_{L^2}^2+\int_{\tau}^{X2^{-k-1}}\frac{2^k}{X}\big\|\tau'P_k\nabla_{\tau}\xi\big\|_{L^2}^2\]
    \[+C_X\int_{\tau}^{X2^{-k-1}}2^{-3k}(\tau')^2\big\|\xi\big\|_{H^1}^2+C_X\int_{\tau}^{X2^{-k-1}}2^{-k}(\tau')^4\big\|\nabla_{\tau}\xi\big\|_{L^2}^2\]
    We use Gronwall for $\tau\leq X2^{-k-1}\leq1$ to obtain:
    \[\big\|\tau\nabla P_k\xi\big\|_{L^2}^2+\big\|\tau P_k\nabla_{\tau}\xi\big\|_{L^2}^2\lesssim X^22^{-2k}\Big(\big\|P_k\nabla_{\tau}\xi\big\|^2_{L^2}+\big\|\nabla P_k\xi\big\|^2_{L^2}\Big)\Big|_{X2^{-k-1}}+\]\[+C_X\int_{\tau}^{X2^{-k-1}}2^{-3k}(\tau')^2\big\|\xi\big\|_{H^1}^2+C_X\int_{\tau}^{X2^{-k-1}}2^{-k}(\tau')^4\big\|\nabla_{\tau}\xi\big\|_{L^2}^2.\]
    Using the preliminary estimate (\ref{backward direction basic estimate one}) we obtain the conclusion.
\end{proof}

\subsection{High Frequency Regime Estimates}\label{high frequency estimates section}

In this section, we prove an optimal high frequency regime estimate for the solution in Proposition~\ref{backward high freq proposition improved}. At first, we have the preliminary high frequency estimate proved in \cite[Section~4]{Cwave}:
\begin{proposition}[{\cite[Proposition~4.5]{Cwave}}]\label{high frequency backward estimate} For any $\tau\in\big[X2^{-k-1},1\big]$ we have that $\xi$ satisfy the estimate:
    \[2^k\tau\big\|P_k\nabla_{\tau}\xi\big\|^2_{L^2}+\frac{2^k}{\tau}\big\|P_k\xi\big\|^2_{L^2}+2^k\tau\big\|\nabla P_k\xi\big\|^2_{L^2}\lesssim\bigg(2^k\big\|P_k\nabla_{\tau}\xi\big\|^2_{L^2}+2^k\big\|P_k\xi\big\|^2_{L^2}+2^k\big\|\nabla P_k\xi\big\|^2_{L^2}\bigg)\bigg|_{\tau=1}\]\[+\int_{\tau}^1\frac{2^k}{(\tau')^2}\big\|P_k\xi\big\|_{L^2}^2+\int_{\tau}^12^k\tau'\big\|\underline{\widetilde{P}}_k\xi\big\|_{L^2}^2+\int_{\tau}^12^k(\tau')^3\big\|\underline{\widetilde{P}}_k\nabla\xi\big\|_{L^2}^2+\int_{\tau}^12^k(\tau')^3\big\|\underline{\widetilde{P}}_k\nabla_{\tau}\xi\big\|_{L^2}^2+\]\[+\int_{\tau}^1\frac{\tau'}{2^{k}}\big\|\xi\big\|_{H^1}^2+\int_{\tau}^1\frac{(\tau')^3}{2^{k}}\big\|\nabla_{\tau}\xi\big\|_{L^2}^2.\]
\end{proposition}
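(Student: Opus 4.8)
## Proof Plan for Proposition \ref{high frequency backward estimate}

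The plan is to run a weighted energy estimate for $\sqrt{\tau}\cdot P_k\xi$ in the high frequency regime $\tau\in[X2^{-k-1},1]$, mirroring the argument of Proposition \ref{high frequency forward estimate} for the first model system but now dealing with the unfavorable sign of the friction term. First I would commute equation \eqref{backward direction linear wave equation main} with $P_k$ and pass to the rescaled time variable $t=2^k\tau$, so that the equation becomes $\nabla_t(\nabla_t(P_k\xi))+\tfrac{1}{t}\nabla_t P_k\xi - \tfrac{4}{2^{2k}}\Delta P_k\xi = \tfrac{1}{2^{2k}}P_k(\psi\nabla\xi)+[\nabla_t,P_k]\nabla_t\xi$. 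Note the sign of the $\tfrac1t$ term here is positive (friction), unlike in the singular $\Phi_0$ equation — but the issue is that after substituting $\eta = \sqrt{t}\,P_k\xi$ the zeroth order term becomes $+\tfrac{1}{4t^2}\eta$ rather than $-\tfrac{1}{4t^2}\eta$, so upon contracting with $\nabla_t\eta$ and integrating by parts \emph{backwards} from $t=2^k$ (i.e.\ from $\tau=1$) down to $t=X/2$, the boundary/bulk term $\int \tfrac{1}{(t')^2}|P_k\xi\sqrt{t}|^2$ appears with the \emph{wrong sign}. This is exactly the bulk term $\int_\tau^1 \tfrac{2^k}{(\tau')^2}\|P_k\xi\|_{L^2}^2$ flagged in the statement, and it is the main obstacle: it cannot be absorbed by the energy at this level and must simply be carried along as an error term to be handled later (in Section \ref{backward main result estimates section}) via the refined Poincaré inequality \eqref{refined Poincare inequality} together with the discrete and discrete-continuous Gronwall inequalities described in Section \ref{scattering map section intro}.

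The key steps, in order: (1) Derive the $P_k$-commuted, $t$-rescaled wave equation as above, accounting for the $O(2^{-2k}t\|\cdot\|_{L^2}^2)$ errors from time derivatives of the volume form on $S_t$ as in the parenthetical identity preceding Proposition \ref{singular low frequency preliminary proposition}. (2) Multiply by $\sqrt{t}$, contract the equation with $P_k\nabla_t(\xi\sqrt t)$, and integrate by parts over $t\in[X/2, 2^k]$ (backward integration from $\tau=1$). Collect the main energy quantities $\|P_k\nabla_t(\xi\sqrt t)\|_{L^2}^2$, $\tfrac{1}{2^{2k}}\|\nabla P_k(\xi\sqrt t)\|_{L^2}^2$, and the indefinite-sign term $\int \tfrac{1}{(t')^2}\|P_k\xi\sqrt{t'}\|_{L^2}^2$, which we move to the right-hand side. (3) Bound the commutator error terms: use \eqref{LP refined 2} and \eqref{LP refined 4} of Lemma \ref{LP Projections lemma refined} to replace $[\nabla_4,P_k]$-type terms by $\underline{\widetilde P}_k$-projections plus $2^{-k}$-gain lower order terms, use Lemma \ref{LP Projections lemma} for the $[P_k,\psi]$, $[P_k,\nabla]$ and $[\nabla,\nabla_t]$ commutators, and treat the semilinear term $P_k(\psi\nabla\xi)$ via \eqref{LP bounds 6}. (4) Apply Gronwall on $t\in[X/2,2^k]$ twice — once at the level of the $\sqrt t$-weighted energy and once more after dividing through — exactly as in the proof of Proposition \ref{high frequency forward estimate}, to arrive at the $2^k\tau\|P_k\nabla_\tau\xi\|^2 + \tfrac{2^k}{\tau}\|P_k\xi\|^2 + 2^k\tau\|\nabla P_k\xi\|^2$ bound. (5) Change variables back from $t$ to $\tau$ to obtain the stated inequality, with the initial data at $\tau=1$ and all six error integrals on the right.

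Two technical remarks about where care is needed. Because we integrate backward from $\tau=1$, the $\tfrac{1}{t}$ friction term actually \emph{helps} control $\|P_k\nabla_t\xi\|$ going toward $\tau=1$ but the real subtlety is that the substitution $\eta=\sqrt t\,P_k\xi$ swaps the friction sign into the potential sign, producing the flagged bulk term — the Gronwall step must therefore be applied only to the manifestly-signed energy, with $\int \tfrac{2^k}{(\tau')^2}\|P_k\xi\|^2$ kept explicitly on the right. Also, the lower order error integrals $\int \tfrac{\tau'}{2^k}\|\xi\|_{H^1}^2$ and $\int \tfrac{(\tau')^3}{2^k}\|\nabla_\tau\xi\|_{L^2}^2$ are summable in $k$ and are controlled by the preliminary estimate \eqref{backward direction basic estimate one}, while the $\underline{\widetilde P}_k$-terms are summable in $k$ by the almost-orthogonality property but cannot be bounded pointwise in $k$ — their treatment is deferred to the summation step in Section \ref{backward main result estimates section}. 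Since the statement here is merely the preliminary high frequency estimate (attributed to \cite{Cwave}), the proof stops at displaying the inequality; no absorption of the bad bulk term is attempted at this stage.
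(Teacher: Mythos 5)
Your plan is structurally correct and matches the paper's: the paper remarks that the proof follows the same steps as Proposition \ref{high frequency forward estimate}, with the only difference being the appearance of the first bulk term with a bad sign, and your step list (rescale to $t=2^k\tau$, multiply by $\sqrt t$, project with $P_k$, contract with $P_k\nabla_t(\xi\sqrt t)$, bound commutators via Lemmas \ref{LP Projections lemma} and \ref{LP Projections lemma refined}, Gronwall twice, and revert to $\tau$) reproduces that blueprint faithfully.

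However, your explanation of \emph{why} the bad bulk term appears is incorrect, and I want to flag it because it suggests a misunderstanding of the mechanism even though your final conclusion is right. You write that "the sign of the $\tfrac1t$ term here is positive (friction), unlike in the singular $\Phi_0$ equation" and that the zeroth-order term "becomes $+\tfrac{1}{4t^2}\eta$ rather than $-\tfrac{1}{4t^2}\eta$." Both of these contrasts are spurious: equation \eqref{backward direction linear wave equation main} has the \emph{same} friction coefficient $+\tfrac{1}{\tau}\nabla_\tau$ as equation \eqref{equation for alpha Y}, and the $\sqrt t$-conjugation therefore yields the \emph{same} potential $+\tfrac{1}{4t^2}\eta$ in both the forward and backward settings — there is no sign flip anywhere. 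The only structural difference is the direction of integration. When you contract with $P_k\nabla_t(\xi\sqrt t)$ and integrate, the $\tfrac{1}{4t^2}$ coefficient produces the nonnegative bulk $\int \tfrac{1}{4(t')^3}\|\eta\|^2\,dt'$; integrating from $t=1/2$ forward with data at $t=1/2$ puts this on the \emph{left} (favorable), while integrating over $[t,2^k]$ with data at $t=2^k$ puts it on the \emph{right} (unfavorable). Attributing the obstruction to a sign change in the equation, rather than to the change of integration direction, is wrong and could mislead you when you attempt the full second model system in which the $\Phi_i$ equations genuinely do carry $-\tfrac{1}{\tau}\nabla_\tau$.

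One further omission: the paper explicitly notes that the implicit constant in this estimate must be independent of the parameter $X$, and your plan does not address how this independence is preserved when applying Gronwall over the $X$-dependent interval $t\in[X/2,2^k]$. This matters because in Step 3 of the proof of Proposition \ref{backward high freq proposition improved} the constant $C_0$ from this estimate is used to \emph{choose} $X$ via \eqref{fix X}; if $C_0$ depended on $X$ the argument would be circular. You should verify that the multiplicative factors arising in the Gronwall step are uniform in $X$, e.g.\ because the integrands are $O(1/(t')^{1+\delta})$ or $O(2^{-2k})$ and hence have integral bounded independently of the lower endpoint $X/2$.
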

\begin{proof}
    The proof follows the same steps as Proposition \ref{high frequency forward estimate}, the only difference being the presence of the first bulk term which has a bad sign in this case. Moreover, we point out that the implicit constant is independent of the parameter $X.$ We refer the reader to \cite[Section~4]{Cwave} for the complete proof.
\end{proof}

\textbf{Notation.} We introduce the following notation: let $\mathbf{1}_{k,\tau}$ be the characteristic function of $\{1\geq\tau\geq X2^{-k-1}\};$ also we denote by $d_k$ any data terms at $\tau=1$ which satisfy $\sum d_k\lesssim D$; finally we denote:
\[a_k(\tau):=\tau\big\|P_k\nabla_{\tau}\xi\big\|^2_{L^2}+\frac{1}{\tau}\big\|P_k\xi\big\|^2_{L^2}+\tau\big\|\nabla P_k\xi\big\|^2_{L^2}.\]

For the rest of the section we prove the following result:
\begin{proposition}\label{backward high freq proposition improved}
    For any $\tau\in(0,1]$ we have the estimate:
    \begin{equation}\label{high freq prelim improved estimate}
    \sum_{\tau\geq X2^{-k-1}}2^ka_k(\tau)\mathbf{1}_{k,\tau}\lesssim D+\int_{\tau}^1(\tau')^3\Big(\big\|\xi\big\|_{H^{3/2}}^2+\big\|\nabla_{\tau}\xi\big\|_{H^{1/2}}^2\Big)d\tau'.
\end{equation}
\end{proposition}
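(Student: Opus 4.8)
The plan is to sum the preliminary high frequency estimate from Proposition~\ref{high frequency backward estimate} over all frequencies $k\geq x$ and handle the problematic bulk term $\int_\tau^1 \frac{2^k}{(\tau')^2}\|P_k\xi\|_{L^2}^2$ using the refined Poincar\'e inequality \eqref{refined Poincare inequality}. First I would rewrite Proposition~\ref{high frequency backward estimate} in the notation $a_k,\ d_k,\ \mathbf{1}_{k,\tau}$, noting that for $\tau\geq X2^{-k-1}$ we have $2^k\|P_k\xi\|_{L^2}^2/(\tau')^2 \lesssim \frac{2^k}{\tau'}\cdot \frac{1}{\tau'}\|P_k\xi\|_{L^2}^2 \le \frac{2^k}{\tau'} a_k(\tau')$, which is the term we want to Gronwall away, but the coefficient $1/\tau'$ is not integrable near $\tau'=0$. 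This is exactly why the splitting into the high frequency regime $\tau\geq X2^{-k-1}$ is used: on this regime $1/\tau' \lesssim 2^{k+1}/X$, so after multiplying by $2^k$ the bulk term contributes $\lesssim \frac{2^{2k}}{X}\int_\tau^1 a_k(\tau')\mathbf{1}_{k,\tau'}\,d\tau'$ — but this still has an extra factor $2^k$ compared to what discrete Gronwall can absorb when summed in $k$. The resolution is the refined Poincar\'e inequality: apply \eqref{refined Poincare inequality} to $\|P_k\xi\|_{L^2}^2$ with a small parameter $\delta$ to write
\[
\big\|P_k\xi\big\|_{L^2}^2 \lesssim \frac{1}{\delta}2^{-2k}\big\|\nabla P_k\xi\big\|_{L^2}^2 + \delta\sum_{0\le l<k}2^{-9k+7l}\big\|\nabla P_l\xi\big\|_{L^2}^2 + \delta^{-1}2^{-4k}\big\|\xi\big\|_{L^2}^2.
\]
The first term, after multiplying by $\frac{2^k}{(\tau')^2}$ and using $\frac{1}{\tau'}\lesssim 2^k/X$, gives $\lesssim \frac{1}{\delta X} 2^{k}\cdot \frac{1}{\tau'}\|\nabla P_k\xi\|_{L^2}^2 \lesssim \frac{1}{\delta X}\cdot \frac{2^k}{\tau'} a_k$ which, for $X$ large depending on $\delta$, becomes a small multiple of the term we are trying to control and can be absorbed. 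The third term is lower order and controlled by the preliminary estimate \eqref{backward direction basic estimate one}.

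The genuinely delicate term is the middle one: $\delta\sum_{l<k}2^{-9k+7l}\|\nabla P_l\xi\|_{L^2}^2$ couples frequency $k$ to all lower frequencies $l<k$. After multiplying by $\frac{2^k}{(\tau')^2}\lesssim \frac{2^{2k}}{X}\cdot\frac{1}{\tau'}$ and summing over $k\geq x$ with $\tau\geq X2^{-k-1}$, this produces a double sum $\sum_{l<k}2^{-7k+7l}\cdot \frac{1}{\tau'}\|\nabla P_l\xi\|_{L^2}^2\,\mathbf{1}_{k,\tau'}$. The geometric decay $2^{-7(k-l)}$ in $k-l$ makes the sum over $k>l$ converge, leaving $\lesssim \sum_l \frac{1}{\tau'}\|\nabla P_l\xi\|_{L^2}^2\,\mathbf{1}_{l,\tau'}$ — but wait, we need the indicator for $k$ not $l$; one must be careful that $\mathbf{1}_{k,\tau'}$ with $k>l$ means we only get lower frequencies $l$ for which some larger $k$ is already in its high frequency regime, and one should reorganize by noting that $\mathbf{1}_{k,\tau'}=1$ for $k$ large forces all $l\le k$ with $\mathbf{1}_{l,\tau'}$ possibly $0$. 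The cleanest route here is the novel discrete-continuous Gronwall-like inequality alluded to in Section~\ref{scattering map section intro}: set $G(\tau) := \sum_{k\geq x}2^k a_k(\tau)\mathbf{1}_{k,\tau}$ and show, collecting all the above, that $G(\tau) \lesssim D + \int_\tau^1 (\tau')^3(\|\xi\|_{H^{3/2}}^2 + \|\nabla_\tau\xi\|_{H^{1/2}}^2)d\tau' + C_\delta\int_\tau^1 \frac{1}{\tau'}G(\tau')\,d\tau' \cdot \frac{1}{X}$ plus lower order terms, where the $1/X$ factor comes from the high-frequency restriction and makes the last integral absorbable once $X$ is chosen large (after $\delta$). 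The point is that both the "bad" bulk term and the Poincar\'e error terms, once summed, are controlled by $\frac{C}{X}\int_\tau^1 \frac{1}{\tau'}G(\tau')d\tau'$, and although $1/\tau'$ is not integrable, the structure $G(\tau')$ vanishes for frequencies not yet in their high-frequency window, which is precisely the mechanism making the discrete-continuous Gronwall close.

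After establishing the differential-integral inequality for $G$, I would apply the discrete Gronwall inequality in the frequency variable combined with the continuous Gronwall in $\tau$ (the "novel discrete-continuous" inequality from \cite{Cwave}) to conclude
\[
G(\tau) = \sum_{\tau\geq X2^{-k-1}}2^k a_k(\tau)\mathbf{1}_{k,\tau} \lesssim D + \int_\tau^1 (\tau')^3\Big(\big\|\xi\big\|_{H^{3/2}}^2 + \big\|\nabla_\tau\xi\big\|_{H^{1/2}}^2\Big)d\tau',
\]
which is exactly \eqref{high freq prelim improved estimate}. The main obstacle is the frequency coupling introduced by the refined Poincar\'e inequality combined with the non-integrable weight $1/\tau'$ in the bad bulk term; resolving it requires both the geometric decay $2^{-7(k-l)}$ in the Poincar\'e remainder (to decouple the sum) and the freedom to take $X$ large relative to $\delta$ (to make the resulting coefficient small), and then feeding the output into the discrete-continuous Gronwall machinery rather than ordinary Gronwall. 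The error terms involving $\underline{\widetilde{P}}_k$ (from commutators via Lemma~\ref{LP Projections lemma refined}) are handled routinely: their almost-orthogonality \eqref{LP bounds 4}-\eqref{LP refined 2} means $\sum_k \|\underline{\widetilde{P}}_k F\|_{L^2}^2 \lesssim \|F\|_{L^2}^2$, so after summing they collapse into the integral terms $\int_\tau^1(\tau')^3(\|\xi\|_{H^{3/2}}^2+\|\nabla_\tau\xi\|_{H^{1/2}}^2)d\tau'$ already on the right-hand side, up to lower order pieces absorbed by \eqref{backward direction basic estimate one}.
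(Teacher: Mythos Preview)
Your proposal identifies the right ingredients (the refined Poincar\'e inequality, the parameter $X$, the commutator error terms collapsing after summation), but the order of operations you propose does not close. The paper's proof does \emph{not} sum over $k$ and then Gronwall a single quantity $G(\tau)$; it works at fixed $k$ throughout and only sums at the very end.

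The concrete gap is in your treatment of the first Poincar\'e term. You write that $\frac{2^k}{(\tau')^2}\cdot\frac{1}{\delta}2^{-2k}\|\nabla P_k\xi\|^2\lesssim\frac{1}{\delta X}\cdot\frac{2^k}{\tau'}a_k(\tau')$ can be ``absorbed'' for $X$ large. But inside the time integral this produces a Gronwall kernel $\frac{1}{\tau'}$, whose integral over $[\tau,1]$ is $|\log\tau|$ and is \emph{not} small no matter how large $X$ is. The paper instead keeps the term in the form $\frac{C_0}{\delta}\int_\tau^1\frac{2^{-2k}}{(\tau')^3}\cdot 2^ka_k(\tau')\,d\tau'$ and applies continuous Gronwall at fixed $k$: the kernel $2^{-2k}/(\tau')^3$ has integral $\lesssim 2^{-2k}/\tau^2\leq 4/X^2$ over $[\tau,1]\subset[X2^{-k-1},1]$, so the Gronwall factor is at most $2$ once $X$ is chosen via \eqref{fix X}. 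Your use of $1/\tau'\lesssim 2^k/X$ converts this integrable kernel into a non-integrable one and destroys the argument.

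The second gap is in the cross-frequency term $\delta\sum_{l<k}2^{-8k+7l}\|\nabla P_l\xi\|^2$. You propose to sum over $k$ first and then appeal to an unspecified ``discrete-continuous Gronwall'' on $G$, but your own observation that the resulting kernel $1/\tau'$ is non-integrable is not resolved by the remark that ``$G(\tau')$ vanishes for frequencies not yet in their high-frequency window'' --- $G$ is the unknown, and nothing prevents it from being large. The paper's mechanism is different and more elaborate: after the fixed-$k$ Gronwall of Step~3, the cross-frequency sum is split according to whether $l$ is in its high or low frequency regime at time $\tau'$. For $l$ in the \emph{low} regime ($\tau'<X2^{-l-1}$), Proposition~\ref{preliminary low frequency estimate proposition} bounds $\|\tau'\nabla P_l\xi\|^2$ in terms of $a_l(X2^{-l-1})$, producing a \emph{discrete} sum over transition times. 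For $l$ in the \emph{high} regime, one applies the explicit Gronwall-type Lemma~\ref{grownall like lemma} (in $(k,\tau)$ with $b(k)=\frac{1}{10}2^{-8k}$, $c(l,\tau)=\tau^{-3}2^{6l}\mathbf{1}_{l,\tau}$), whose product factor is bounded precisely because the small constant $\frac{1}{10}$ from \eqref{fix delta part 1} controls it. The remaining discrete sum $\sum_{m<k}2^{6m}a_m(X2^{-m-1})$ is then closed by the \emph{discrete} Gronwall inequality at the transition times (Step~7), again using smallness from \eqref{fix delta part 2}. Only after all of this does the paper sum in $k$. Your proposal skips the low/high splitting of the coupling term, the use of Proposition~\ref{preliminary low frequency estimate proposition}, and the specific Lemma~\ref{grownall like lemma}; without these, the argument does not close.
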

We outline in more detail the plan of the proof that was also explained in Section~\ref{scattering map section intro}. To prove the optimal estimate in Proposition~\ref{backward high freq proposition improved}, we start with the preliminary estimate in Proposition~\ref{high frequency backward estimate}. The main challenge is caused by the first bulk error term, which was absent in the analogous estimate of Section~\ref{model forward direction section}. We can only bound this term using the refined Poincaré inequality \eqref{refined Poincare inequality}. This creates sums of low frequency regime and high frequency regime error terms. We bound the high frequency regime error terms using Gronwall inequality and the Gronwall-type Lemma~\ref{grownall like lemma}. We also use the estimates of Proposition~\ref{preliminary low frequency estimate proposition} to bound the low frequency regime error terms. This creates a sum of discrete error terms, which we bound using the discrete Gronwall inequality. Finally, the remaining error terms coming from the other terms in Proposition~\ref{high frequency backward estimate} can be summed to obtain the RHS of \eqref{high freq prelim improved estimate}.

\textit{Step 1. Using the preliminary estimate.} We rewrite the result of Proposition \ref{high frequency backward estimate}, for all $\tau\in[X2^{-k-1},1]$:
\begin{equation}\label{schematic inequality}
    2^ka_k(\tau)\mathbf{1}_{k,\tau}\lesssim2^ka_k(1)+ \mathbf{1}_{k,\tau}\int_{\tau}^1\frac{2^k}{(\tau')^2}\big\|P_k\xi\big\|_{L^2}^2+\mathbf{1}_{k,\tau}\int_{\tau}^1e_k(\tau')+\mathbf{1}_{k,\tau}\int_{\tau}^1\overline{e_k}(\tau'),
\end{equation}
where we denoted:
\[e_k(\tau)=2^k\tau^3\big\|\underline{\widetilde{P}}_k\nabla\xi\big\|_{L^2}^2+2^k\tau^3\big\|\underline{\widetilde{P}}_k\nabla_{\tau}\xi\big\|_{L^2}^2\]
\[\overline{e_k}(\tau)=2^k\tau\big\|\underline{\widetilde{P}}_k\xi\big\|_{L^2}^2+\frac{\tau}{2^{k}}\big\|\nabla\xi\big\|_{L^2}^2+\frac{\tau}{2^{k}}\big\|\xi\big\|_{L^2}^2+\frac{\tau^3}{2^{k}}\big\|\nabla_{\tau}\xi\big\|_{L^2}^2.\]
Using the bound $2^k\tau\big\|\underline{\widetilde{P}}_k\xi\big\|_{L^2}^2\lesssim2^{-k}\tau\big\|\xi\big\|_{H^1}^2$ and the preliminary estimate (\ref{backward direction basic estimate one}), we get:
\[\int_{\tau}^1\overline{e_k}(\tau')d\tau'\lesssim_X d_k.\]
So far, we obtain that for all $\tau\in[X2^{-k-1},1]$:
\begin{equation}\label{estimate in step 1}
    2^ka_k(\tau)\mathbf{1}_{k,\tau}\lesssim C_Xd_k+ \mathbf{1}_{k,\tau}\int_{\tau}^1\frac{2^k}{(\tau')^2}\big\|P_k\xi\big\|_{L^2}^2+\mathbf{1}_{k,\tau}\int_{\tau}^1e_k(\tau').
\end{equation}

\textit{Step 2. Applying the refined Poincarè inequality.} The last term in \eqref{estimate in step 1} is similar to the commutation error terms that we encountered in Section~\ref{model forward direction section} by the presence of the LP projection operator $\underline{\widetilde{P}}_k.$ We leave this term unchanged until the end of the argument when we sum all our estimates. On the other hand, as explained above, the second term is at top order and difficult to deal with. In order to bound it, we use the refined Poincaré inequality (\ref{refined Poincare inequality}) to get that for any $\delta>0$ and $\tau\in[X2^{-k-1},1]$:
\[\int_{\tau}^1\frac{2^k}{(\tau')^2}\big\|P_k\xi\big\|_{L^2}^2\lesssim \frac{1}{\delta}\int_{\tau}^1\frac{2^{-k}}{(\tau')^2}\big\|\nabla P_k\xi\big\|_{L^2}^2+\delta\int_{\tau}^1\frac{1}{(\tau')^2}\sum_{l=0}^{k-1}2^{-8k+7l}\big\|\nabla P_l\xi\big\|_{L^2}^2+\frac{1}{\delta}\int_{\tau}^1\frac{2^{-3k}}{(\tau')^2}\big\|\xi\big\|_{L^2}^2\]
The last term is bounded by $\delta^{-1}d_k$ using (\ref{backward direction basic estimate one}). We obtain that there exist constants $C_0,C_{X},C_{X,\delta}>0$ such that for all $k\geq x$ and $\tau\in[X2^{-k-1},1]$:
\[2^ka_k(\tau)\leq C_{X,\delta}d_k+\frac{C_0}{\delta}\int_{\tau}^1\frac{2^{-2k}}{(\tau')^3}\cdot2^ka_k(\tau')d\tau'+C_0\delta\int_{\tau}^1\frac{1}{(\tau')^2}\sum_{l=0}^{k-1}2^{-8k+7l}\big\|\nabla P_l\xi\big\|_{L^2}^2+C_{X,\delta}\int_{\tau}^1e_k(\tau').\]

\textit{Step 3. Gronwall inequality for high frequency regime terms.} We apply Gronwall for $\tau\in[X2^{-k-1},1]$, in order to get rid of the second term in the above estimate. The multiplication factor is:
\[\exp\bigg(\frac{C_0}{\delta}\int_{\tau}^1\frac{2^{-2k}}{(\tau')^3}d\tau'\bigg)\leq\exp\bigg(\frac{C_0}{\delta}\cdot\frac{1}{X^2}\bigg)\leq2,\]
where we fix $X$ large enough, depending on $C_0$ and $\delta$, such that:
\begin{equation}\label{fix X}
    \frac{C_0}{\delta}\cdot\frac{1}{X^2}\leq\log(2).
\end{equation}
We notice that we fix $\delta>0$ later, which will determine the value of $X$ as well. We have for all $\tau\in[X2^{-k-1},1]$:
\begin{equation}\label{some equation that I need to reference}
    2^ka_k(\tau)\leq C_{\delta}d_k+2C_0\delta\int_{\tau}^1\frac{1}{(\tau')^2}\sum_{l=0}^{k-1}2^{-8k+7l}\big\|\nabla P_l\xi\big\|_{L^2}^2d\tau'+C_{\delta}\int_{\tau}^1e_k(\tau')d\tau'.
\end{equation}

\textit{Step 4. Using the low frequency regime estimate.} The second term in \eqref{some equation that I need to reference} can be split into its low frequency regime and high frequency regime components. We want to bound its low frequency regime part next. We have:
\[\mathbf{1}_{k,\tau}\int_{\tau}^1\frac{1}{(\tau')^3}\sum_{l=0}^{k-1}2^{-8k+7l}\tau'\big\|\nabla P_l\xi\big\|_{L^2}^2\leq\mathbf{1}_{k,\tau}\int_{\tau}^1\frac{1}{(\tau')^4}\sum_{l=0}^{x-1}2^{-8k+7l}\big\|\tau'\nabla P_l\xi\big\|_{L^2}^2\]\[+\mathbf{1}_{k,\tau}\int_{\tau}^1\frac{1}{(\tau')^3}\sum_{l=x}^{k-1}2^{-8k+6l}\cdot 2^la_l(\tau')\mathbf{1}_{l,\tau'}+\mathbf{1}_{k,\tau}\int_{\tau}^1\frac{1}{(\tau')^4}\sum_{\tau'<X2^{-l-1}\leq1}2^{-8k+7l}\big\|\tau'\nabla P_l\xi\big\|_{L^2}^2.\]
In this inequality, the first and third terms on the RHS are in the low frequency regime. We use (\ref{backward direction basic estimate one}) to bound the first term by $C_Xd_k.$ For the third term we use Proposition \ref{preliminary low frequency estimate proposition} to obtain:
\[\mathbf{1}_{k,\tau}\int_{\tau}^1\frac{1}{(\tau')^4}\sum_{\tau'<X2^{-l-1}\leq1}2^{-8k+7l}\big\|\tau'\nabla P_l\xi\big\|_{L^2}^2=\mathbf{1}_{k,\tau}\sum_{\tau<X2^{-l-1}\leq1}\int_{\tau}^{X2^{-l-1}}\frac{1}{(\tau')^4}2^{-8k+7l}\big\|\tau'\nabla P_l\xi\big\|_{L^2}^2\]
\[\lesssim \mathbf{1}_{k,\tau}X\sum_{\tau<X2^{-l-1}\leq1}\frac{2^{-8k+6l}}{\tau^3}a_l(X2^{-l-1})+C_XD\cdot\mathbf{1}_{k,\tau}\frac{2^{-8k}}{\tau^7}\]
Using (\ref{fix X}), we get from \eqref{some equation that I need to reference} that there exist constants $C_1,C_{\delta}>0$ such that for all $\tau\in[X2^{-k-1},1]$:
\[2^ka_k(\tau)\leq C_{\delta}d_k+C_1\delta\int_{\tau}^1\frac{1}{(\tau')^3}\sum_{l=x}^{k-1}2^{-8k+6l}\cdot 2^la_l(\tau')\mathbf{1}_{l,\tau'}+C_1\delta X\sum_{\tau<X2^{-l-1}\leq1}\frac{2^{-8k+6l}}{\tau^3}a_l(X2^{-l-1})+C_{\delta}\int_{\tau}^1e_k(\tau').\]

\textit{Step 5. Using Lemma~\ref{grownall like lemma} for high frequency regime terms.} In the above estimate, we also have error terms that are discrete or continuous-discrete, unlike Section~\ref{model forward direction section}. We bound them using appropriate Gronwall-like inequalities. We will use Lemma~\ref{grownall like lemma} for the second term which is in the high frequency regime, and the discrete Gronwall lemma for the third term. For $\delta>0$ small enough, we have that:
\begin{equation}\label{fix delta part 1}
    C_1\delta<\frac{1}{10}.
\end{equation}
We introduce the following notation for the terms on the right hand side of the above inequality:
\[S_k(\tau)=\mathbf{1}_{k,\tau}C_1\delta X\sum_{\tau<X2^{-l-1}\leq1}\frac{2^{-8k+6l}}{\tau^3}a_l(X2^{-l-1}),\ E_k(\tau)=\mathbf{1}_{k,\tau}C_{\delta}\int_{\tau}^1e_k(\tau')d\tau',\ A_k(\tau)=C_{\delta}d_k+S_k(\tau)+E_k(\tau)\]

Thus, we proved that for all $k\geq x$ and $\tau\in[X2^{-k-1},1]$:
\begin{equation}\label{schematic equation 2}
    2^ka_k(\tau)\mathbf{1}_{k,\tau}\leq A_k(\tau)+\frac{1}{10}\cdot\mathbf{1}_{k,\tau}\int_{\tau}^1\frac{1}{(\tau')^3}\sum_{l=x}^{k-1}2^{-8k+6l}\cdot 2^la_l(\tau')\mathbf{1}_{l,\tau'}d\tau'.
\end{equation}
To deal with this estimate, we use the following Gronwall type lemma proved in \cite[Section~4]{Cwave}:
\begin{lemma}[{\cite[Lemma~4.1]{Cwave}}]\label{grownall like lemma}
    We consider the functions $u,A,b,c:\mathbb{N}\times[0,1]\rightarrow[0,\infty),$ satisfying for all $k\geq x,\ \tau\in(0,1]$:
    \[u(k,\tau)\leq A(k,\tau)+b(k)\int_{\tau}^1\sum_{l=x}^{k-1}c(l,\tau')u(l,\tau')d\tau'.\]
    Then, we have that for all $k\geq x,\ \tau\in(0,1]$:
    \[u(k,\tau)\leq A(k,\tau)+b(k)\int_{\tau}^1\sum_{l=x}^{k-1}c(l,\tau')A(l,\tau')\prod_{j=l+1}^{k-1}\bigg(1+\int_{\tau}^{\tau'}b(j)c(j,\tau'')d\tau''\bigg)d\tau'.\]
\end{lemma}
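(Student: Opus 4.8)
Looking at this, the final statement to prove is Lemma (the Gronwall-type lemma, Lemma \ref{grownall like lemma}).

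\emph{Proof strategy.} The plan is to iterate the hypothesis. Writing the inequality schematically as $u\leq A+\mathcal{L}u$, where $\mathcal{L}$ is the nonnegative operator
\[(\mathcal{L}v)(k,\tau)=b(k)\int_{\tau}^1\sum_{l=x}^{k-1}c(l,\tau')\,v(l,\tau')\,d\tau',\]
we note that $\mathcal{L}$ is monotone (if $v\leq w$ then $\mathcal{L}v\leq\mathcal{L}w$, since $b,c\geq0$), so substituting the inequality into itself preserves the direction and yields, for every $P\geq1$,
\[u(k,\tau)\leq\sum_{p=0}^{P-1}(\mathcal{L}^pA)(k,\tau)+(\mathcal{L}^Pu)(k,\tau).\]
Since $\mathcal{L}$ strictly lowers the discrete index, $(\mathcal{L}^Pv)(k,\tau)$ only involves sums over strictly decreasing chains $k>l_1>\cdots>l_P\geq x$; for $P=k-x+1$ there are no such chains, because one cannot choose $k-x+1$ distinct integers from $\{x,\dots,k-1\}$. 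Hence $(\mathcal{L}^{k-x+1}u)(k,\tau)=0$, and it suffices to compute and bound the finite sum $\sum_{p=0}^{k-x}(\mathcal{L}^pA)(k,\tau)$.

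Unwinding the definition of $\mathcal{L}^p$ for $p\geq1$ gives
\[(\mathcal{L}^pA)(k,\tau)=\sum_{k>l_1>\cdots>l_p\geq x}b(k)\Big(\prod_{i=1}^{p-1}b(l_i)\Big)\int_{\tau\leq\sigma_1\leq\cdots\leq\sigma_p\leq1}\Big(\prod_{i=1}^p c(l_i,\sigma_i)\Big)A(l_p,\sigma_p)\,d\sigma,\]
with $\mathcal{L}^0A=A$. Since the integrand is nonnegative, enlarging the ordered simplex in the intermediate variables $\sigma_1,\dots,\sigma_{p-1}$ to the box $\{\sigma_i\in[\tau,\sigma_p]\}$ only increases the integral, which then factorizes into one-dimensional integrals:
\[(\mathcal{L}^pA)(k,\tau)\leq\sum_{k>l_1>\cdots>l_p\geq x}b(k)\int_{\tau}^1 c(l_p,\tau')A(l_p,\tau')\prod_{i=1}^{p-1}\Big(b(l_i)\int_{\tau}^{\tau'}c(l_i,\tau'')\,d\tau''\Big)d\tau'.\]
Summing over all $p\geq1$ and all chains, we group the terms according to the bottom index $l:=l_p$ and the set of intermediate indices $S:=\{l_1,\dots,l_{p-1}\}\subseteq\{l+1,\dots,k-1\}$; the chains are thereby in bijection with pairs $(l,S)$. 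The elementary identity $\sum_{S\subseteq T}\prod_{j\in S}x_j=\prod_{j\in T}(1+x_j)$, applied with $x_j=b(j)\int_{\tau}^{\tau'}c(j,\tau'')\,d\tau''$, turns the sum over $S$ into $\prod_{j=l+1}^{k-1}\big(1+\int_{\tau}^{\tau'}b(j)c(j,\tau'')\,d\tau''\big)$. Adding back the $p=0$ term $A(k,\tau)$ yields exactly the claimed bound.

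The individual steps are elementary, so the only delicate point is the combinatorial bookkeeping: carefully setting up the bijection between strictly decreasing chains $k>l_1>\cdots>l_p\geq x$ and pairs $(l,S)$, keeping track that the weight $b$ is attached to every index of the chain except the bottom one $l_p$, and verifying that the enlargement of the simplex to the box is precisely what converts the nested time-integrals into the product $\prod_{j=l+1}^{k-1}(1+\cdots)$. Throughout one assumes the functions are measurable and such that all integrals appearing are finite, which holds in the applications in Section~\ref{high frequency estimates section} and Section~\ref{backward main result estimates section}.
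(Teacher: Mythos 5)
Your proof is correct, and the approach you use (iterate the inequality as a monotone operator, observe that the operator strictly decreases the discrete index so that the Neumann series terminates after at most $k-x+1$ steps, enlarge the ordered time-simplex for the intermediate variables to a box so the iterated integral factorizes, and then resum over intermediate chains via the identity $\sum_{S\subseteq T}\prod_{j\in S}x_j=\prod_{j\in T}(1+x_j)$) is the natural and standard way to obtain this Gronwall-type bound, and matches the argument given for this lemma in \cite{Cwave}. The bookkeeping at the two delicate points is handled correctly: the bijection between chains $k>l_1>\cdots>l_p\geq x$ and pairs $(l,S)$ with $S\subseteq\{l+1,\ldots,k-1\}$ accounts for $b$ being attached to every chain index except the bottom one (with $b(k)$ kept as a prefactor), and the enlargement $\{\tau\leq\sigma_1\leq\cdots\leq\sigma_{p-1}\leq\sigma_p\}\subseteq[\tau,\sigma_p]^{p-1}$ is exactly what produces the product over $j$ after the resummation.
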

The proof of the following result is also contained in \cite[Section~4]{Cwave}. We reproduce it here for completeness:
\begin{corollary}[{\cite[Corollary~4.1]{Cwave}}]
    For all $k\geq x$ and $\tau\in[X2^{-k-1},1]$, we have the bound:
    \begin{equation}\label{schematic equation 3}
        2^ka_k(\tau)\mathbf{1}_{k,\tau}\lesssim A_k(\tau)+2^{-7k}\int_{\tau}^1\sum_{l=x}^{k-1}\frac{2^{5l}}{(\tau')^3}\cdot A_l(\tau')\mathbf{1}_{l,\tau'}d\tau'.
    \end{equation}
\end{corollary}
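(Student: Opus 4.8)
The plan is to apply Lemma~\ref{grownall like lemma} directly to the estimate \eqref{schematic equation 2}. I first identify the inputs: set $u(k,\tau) = 2^k a_k(\tau)\mathbf{1}_{k,\tau}$, let $A(k,\tau) = A_k(\tau)$ as defined above, put $b(k) = \tfrac{1}{10}\cdot 2^{-8k}$ and $c(l,\tau') = \tfrac{2^{6l}}{(\tau')^3}$. With these choices, \eqref{schematic equation 2} is precisely the hypothesis of Lemma~\ref{grownall like lemma} (note that extending the range of $\tau$ from $[X2^{-k-1},1]$ to $(0,1]$ is harmless: outside $[X2^{-k-1},1]$ the indicator $\mathbf{1}_{k,\tau}$ vanishes, so $u(k,\tau)=0$ there and the inequality holds trivially). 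Applying the lemma yields
\[
u(k,\tau)\leq A_k(\tau) + \frac{2^{-8k}}{10}\int_{\tau}^1\sum_{l=x}^{k-1}\frac{2^{6l}}{(\tau')^3}A_l(\tau')\mathbf{1}_{l,\tau'}\prod_{j=l+1}^{k-1}\left(1+\int_{\tau}^{\tau'}\frac{2^{-8j}2^{6j}}{10(\tau'')^3}d\tau''\right)d\tau'.
\]

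The only real work is to show the product factor is bounded by an absolute constant. For each $j$ in the range $l+1\le j\le k-1$, since the integrand is supported on $\tau''\in[X2^{-j-1},1]$ (again by the hidden indicator, or simply because we only need the bound when $\mathbf{1}_{j,\tau'}$ is relevant along the chain), we estimate
\[
\int_{\tau}^{\tau'}\frac{2^{-2j}}{10(\tau'')^3}d\tau'' \lesssim 2^{-2j}\cdot (X2^{-j-1})^{-2} \lesssim X^{-2}2^{-2j}2^{2j} = X^{-2},
\]
so each factor is $1+O(X^{-2})$. Multiplying over the at most $k$ values of $j$ gives $\prod_j(1+O(X^{-2})) \leq \exp(O(k/X^2))$; this is \emph{not} uniformly bounded in $k$, so instead I will use the sharper estimate $\int_\tau^{\tau'}\frac{2^{-2j}}{10(\tau'')^3}d\tau'' \lesssim 2^{-2j}(\tau')^{-2}$ together with the fact that on the support of the relevant indicators $(\tau')^{-2}\lesssim 2^{2l}$ (since the $l$-th term carries $\mathbf{1}_{l,\tau'}$ forcing $\tau'\geq X2^{-l-1}$). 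Then $\int_\tau^{\tau'}\frac{2^{-2j}}{10(\tau'')^3}d\tau'' \lesssim 2^{-2j+2l}\leq 2^{-2(j-l)}$, and $\prod_{j=l+1}^{k-1}(1+C2^{-2(j-l)})$ converges as a product over a geometric series, hence is bounded by an absolute constant independent of $k,l$.

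With the product factor absorbed into the implicit constant, the double integral collapses to
\[
u(k,\tau)\lesssim A_k(\tau) + 2^{-8k}\int_\tau^1\sum_{l=x}^{k-1}\frac{2^{6l}}{(\tau')^3}A_l(\tau')\mathbf{1}_{l,\tau'}\,d\tau',
\]
and since $2^{-8k}2^{6l} = 2^{-7k}\cdot 2^{5l}\cdot 2^{l-k} \leq 2^{-7k}2^{5l}$ for $l<k$, we obtain exactly \eqref{schematic equation 3}. The main obstacle I anticipate is precisely the bookkeeping on the product factor: one must resist the naive bound $1+O(X^{-2})$ per term (which would blow up exponentially in $k$) and instead exploit the geometric decay $2^{-2(j-l)}$ coming from the mismatch between the weight $2^{-2j}$ and the time lower bound $\tau'\gtrsim 2^{-l}$ enforced by the nested indicators; everything else is a direct substitution into Lemma~\ref{grownall like lemma}.
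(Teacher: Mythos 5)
There is a genuine gap: the key estimate you use to tame the product factor is false. You claim
\[
\int_{\tau}^{\tau'}\frac{2^{-2j}}{10(\tau'')^3}\,d\tau'' \lesssim 2^{-2j}(\tau')^{-2},
\]
but since $\int_{\tau}^{\tau'}(\tau'')^{-3}\,d\tau'' = \tfrac{1}{2}\bigl(\tau^{-2}-(\tau')^{-2}\bigr)$, the integral is governed by the \emph{lower} endpoint: the correct bound is $\lesssim 2^{-2j}\tau^{-2}$ (or, including the indicator $\mathbf{1}_{j,\tau''}$ in $c$, $\lesssim\min\bigl(X^{-2},\,2^{-2j}\tau^{-2}\bigr)$). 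The observation that $(\tau')^{-2}\lesssim 2^{2l}$ on the support of $\mathbf{1}_{l,\tau'}$ is true but irrelevant, because the quantity you actually need to control is $\tau^{-2}$, and $\tau$ can be as small as $X2^{-k-1}$ while $l\ll k$. Consequently the assertion that the product $\prod_{j=l+1}^{k-1}\bigl(1+\cdots\bigr)$ is bounded by an absolute constant is also false: with the correct bound, the factors with $2^{-j}\geq\tau$ each contribute $\approx 2$ and there can be $\approx\log_2(1/\tau)-l$ of them, so the product grows like $1+\tau^{-1}2^{-l}$, which is $\approx 2^{k-l}/X$ in the worst case.

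The correct way out — and this is where the exponent drop from $2^{-8k+6l}$ in \eqref{schematic equation 2} to $2^{-7k+5l}$ in \eqref{schematic equation 3} comes from — is to accept the product bound $\prod\lesssim 1+\tau^{-1}2^{-l}$ (split into factors $\leq 2$ when $2^{-j}\geq\tau$ and use $1+x\leq e^x$ together with geometric summability when $2^{-j}<\tau$) and then absorb the growth: since $\tau\geq X2^{-k-1}$ forces $2^{-k}\tau^{-1}\leq 2/X\leq 1$, one has $2^{-8k+6l}\cdot 2^{-l}\tau^{-1}\leq 2^{-7k+5l}\cdot 2^{-k}\tau^{-1}\leq 2^{-7k+5l}$, and trivially $2^{-8k+6l}\leq 2^{-7k+5l}$ for $l<k$. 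In other words, the loss of one factor of $2^{-k}2^{l}$ between \eqref{schematic equation 2} and \eqref{schematic equation 3} is not slack — it is exactly the room needed to absorb the non-uniform product. You were right to distrust the naive $\exp(O(k/X^2))$ bound; the fix is not to try to make the product $O(1)$, but to quantify its growth and pay for it with the exponent. One small additional remark: for the indicator $\mathbf{1}_{l,\tau'}$ to appear in the output of Lemma~\ref{grownall like lemma} as in \eqref{schematic equation 3}, you should take $c(l,\tau')=\tfrac{2^{6l}}{(\tau')^3}\mathbf{1}_{l,\tau'}$ rather than omitting the indicator; as written you insert it into the conclusion without having put it into $c$.
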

\begin{proof}
    We apply the lemma for (\ref{schematic equation 2}), with $u(k,\tau)=2^ka_k(\tau)\mathbf{1}_{k,\tau},b(k)=\frac{1}{10}\cdot2^{-8k},c(k,\tau)=\tau^{-3}2^{6k}\mathbf{1}_{k,\tau}$. As a result, we get for $\tau\in[X2^{-k-1},1]$:
    \[2^ka_k(\tau)\mathbf{1}_{k,\tau}\leq A_k(\tau)+2^{-8k}\int_{\tau}^1\sum_{l=x}^{k-1}A_l(\tau')\cdot\frac{1}{(\tau')^{3}}2^{6l}\mathbf{1}_{l,\tau'}\prod_{j=l+1}^{k-1}\bigg(1+\int_{\tau}^{\tau'}b(j)c(j,\tau'')d\tau''\bigg)d\tau'\]
    We compute:
    \[\int_{\tau}^{\tau'}b(j)c(j,\tau'')d\tau''=\frac{1}{10}\cdot2^{-2j}\int_{\tau}^{\tau'}\frac{\mathbf{1}_{j,\tau''}}{(\tau'')^3}d\tau''\leq\min\big(1,2^{-2j}\tau^{-2}\big).\]
    In the next bound it is essential that we had good control of the constants in the previous inequality, which was ensured using the smallness of $\delta.$ As a result, we obtain using the inequality $x+1\leq e^x$ for the terms with $2^{-j}<\tau$:
    \[\prod_{j=l+1}^{k-1}\bigg(1+\int_{\tau}^{\tau'}b(j)c(j,\tau'')d\tau''\bigg)\leq\prod_{j=l+1}^{k-1}\Big(\min\big(2,1+2^{-2j}\tau^{-2}\big)\Big)\lesssim1+2^{-l}\tau^{-1}.\]
    Finally, we notice that $2^{-8k}2^{6l}\cdot2^{-l}\tau^{-1}\leq2^{-7k}2^{5l}\cdot2^{-k}/\tau\leq2^{-7k}2^{5l}\cdot2/X\leq2^{-7k}2^{5l}.$
\end{proof}

\textit{Step 6. Rewriting estimate \eqref{schematic equation 3}.} We use the definition of $A_k(\tau)$ in order to bound the remaining terms on the RHS of (\ref{schematic equation 3}). We compute for the data terms:
\[d_k+2^{-7k}\int_{\tau}^1\sum_{l=x}^{k-1}\frac{2^{5l}}{(\tau')^3}d_l\mathbf{1}_{l,\tau'}d\tau'\lesssim_{\delta} d_k+\sum_{l=x}^{k-1}2^{5(l-k)}d_l\lesssim_{\delta} d_k,\]
where we notice that $\sum_k\sum_l2^{-5|k-l|}d_l\lesssim D,$ so we can write the term $\sum_l2^{-5|k-l|}d_l$ schematically as $d_k.$ Next, we compute for the discrete error terms the following bound:
\[S_k(\tau)+2^{-7k}\int_{\tau}^1\sum_{l=x}^{k-1}\frac{2^{5l}}{(\tau')^3}S_l(\tau')\mathbf{1}_{l,\tau'}d\tau'\lesssim S_k(\tau)+\delta X2^{-7k}\int_{\tau}^1\sum_{l=x}^{k-1}\sum_{\tau'<X2^{-m-1}\leq1}\frac{2^{-3l+6m}}{(\tau')^6}a_m(X2^{-m-1})\mathbf{1}_{l,\tau'}d\tau'\]
\begin{align*}
    &\lesssim S_k(\tau)+\delta X2^{-7k}\sum_{l=x}^{k-1}\sum_{\substack{\tau<X2^{-m-1} \\ x\leq m\leq l}}\int_{X2^{-l-1}}^{X2^{-m-1}}\frac{2^{-3l+6m}}{(\tau')^6}a_m(X2^{-m-1})d\tau'\\
    &\lesssim S_k(\tau)+\frac{\delta}{X^4}2^{-7k}\sum_{l=x}^{k-1}\sum_{\substack{\tau<X2^{-m-1} \\ x\leq m\leq l}}2^{2l+6m}a_m(X2^{-m-1})\lesssim\frac{\delta}{X^2}2^{-5k}\sum_{\tau<X2^{-m-1}\leq1}2^{6m}a_m(X2^{-m-1}).
\end{align*}
We also introduce the notation:
\[\widetilde{S}_k(\tau):=\frac{\delta}{X^2}2^{-5k}\sum_{\tau<X2^{-m-1}\leq1}2^{6m}a_m(X2^{-m-1}),\ \widetilde{E}_k(\tau):= E_k(\tau)+2^{-7k}\int_{\tau}^1\sum_{l=x}^{k-1}\frac{2^{5l}}{(\tau')^3}E_l(\tau')\mathbf{1}_{l,\tau'}d\tau'.\]
As a result, we get from (\ref{schematic equation 3}) that for all $\tau\in[X2^{-k-1},1]$:
\begin{equation}\label{intermediate estimate to be used later}
    2^ka_k(\tau)\mathbf{1}_{k,\tau}\lesssim C_{\delta}d_k+\widetilde{S}_k(\tau)+C_{\delta}\widetilde{E}_k(\tau).
\end{equation}

\textit{Step 7. Using the discrete Gronwall inequality.} Our next goal is to bound the discrete error terms from the above estimate using the discrete Gronwall inequality. We get that there exist constants $C_2,C_{\delta}>0$ such that for any $k\geq x:$
\[2^ka_k(X2^{-k-1})\leq C_{\delta}d_k+C_{\delta}\widetilde{E}_k(X2^{-k-1})+C_2\frac{\delta}{X^2}2^{-5k}\sum_{m=x}^{k-1}2^{6m}a_m(X2^{-m-1}).\]
To fix $X,\delta>0$ in addition to (\ref{fix X}) and (\ref{fix delta part 1}), we also ask that:
\begin{equation}\label{fix delta part 2}
    C_2\frac{\delta}{X^2}<\frac{1}{10}.
\end{equation}
Denoting $b_k=C_{\delta}d_k+C_{\delta}\widetilde{E}_k(X2^{-k-1}),$ the above inequality can be written as:
\[2^ka_k(X2^{-k-1})\leq b_k+\frac{1}{10}2^{-5k}\sum_{m=x}^{k-1}2^{5m}\cdot 2^ma_m(X2^{-m-1}).\]
We apply the discrete Gronwall inequality of \cite{gronwall}:
\[2^ka_k(X2^{-k-1})\leq b_k+\frac{1}{10}2^{-5k}\sum_{m=x}^{k-1}\bigg(2^{5m}b_m\cdot\prod_{j=m+1}^{k-1}\big(1+1/10\cdot2^{-5j}\cdot2^{5j}\big)\bigg)\lesssim b_k+2^{-4k}\sum_{m=x}^{k-1}2^{4m}b_m.\]
Therefore, we obtain that for all $k\geq x$:
\begin{equation}\label{intermediate equation name sth}
    2^ka_k(X2^{-k-1})\lesssim_{\delta} d_k+\widetilde{E}_k(X2^{-k-1})+\sum_{m=x}^{k-1}2^{-4k+4m}\widetilde{E}_m(X2^{-m-1}).
\end{equation}
\begin{remark}
    The constants $X>0$ and $\delta>0$ which satisfy (\ref{fix X}), (\ref{fix delta part 1}), and (\ref{fix delta part 2}), were introduced in order to allow us to apply Gronwall in the above inequalities. Once we established (\ref{intermediate equation name sth}), we do not need to track these constants anymore. We make the notation convention that from here onwards, whenever we write $\lesssim$ the implicit constant is allowed to depend on $X$ and $\delta$ as well.
\end{remark}

\textit{Step 8. Rewriting estimate \eqref{intermediate estimate to be used later}.} We use the bound \eqref{intermediate equation name sth} to get for all $\tau\in[X2^{-k-1},1]$:
\[\widetilde{S}_k(\tau)\lesssim\sum_{\tau<X2^{-m-1}\leq1}\Big(2^{-5k+5m}d_m+2^{-4k+4m}\widetilde{E}_m(X2^{-m-1})\Big)\lesssim\sum_{\tau<X2^{-m-1}\leq1}\Big(2^{-5k+5m}d_m+2^{-4k+4m}E_m(X2^{-m-1})\Big)\]
where the second inequality follows by:
\[\sum_{\tau<X2^{-m-1}\leq1}2^{-4k+4m}\cdot2^{-7m}\sum_{l=x}^{m-1}\int_{X2^{-l-1}}^1\frac{2^{5l}}{(\tau')^3}E_l(\tau')d\tau'\lesssim\sum_{\tau<X2^{-l-1}\leq1}2^{-4k+4l}E_l(X2^{-l-1})\]
In addition, we have the bound for all $\tau\in[X2^{-k-1},1]$:
\begin{align*}
    \widetilde{E}_k(\tau)&=E_k(\tau)+2^{-7k}\int_{\tau}^1\sum_{l=x}^{k-1}\frac{2^{5l}}{(\tau')^3}E_l(\tau')\mathbf{1}_{l,\tau'}d\tau'\lesssim E_k(\tau)+2^{-2k}\tau^{-2}\sum_{l\geq x}\int_{\tau}^1e_l(\tau')d\tau'\\
    &\lesssim E_k(\tau)+2^{-2k}\tau^{-2}\int_{\tau}^1(\tau')^3\Big(\big\|\xi\big\|_{H^{3/2}}^2+\big\|\nabla_{\tau}\xi\big\|_{H^{1/2}}^2\Big)d\tau'
\end{align*}
We get from (\ref{intermediate estimate to be used later}) that for all $\tau\in[X2^{-k-1},1]$:
\begin{align*}
    2^ka_k(\tau)\mathbf{1}_{k,\tau}&\lesssim d_k+\sum_{\tau<X2^{-m-1}\leq1}2^{-5k+5m}d_m+\sum_{\tau<X2^{-m-1}\leq1}2^{-4k+4m}E_m(X2^{-m-1})\\
    &+E_k(\tau)+2^{-2k}\tau^{-2}\int_{\tau}^1(\tau')^3\Big(\big\|\xi\big\|_{H^{3/2}}^2+\big\|\nabla_{\tau}\xi\big\|_{H^{1/2}}^2\Big)d\tau'.
\end{align*}

\textit{Step 9. Summing the high frequency regime estimates.} For each $\tau\in(0,1]$ we sum for all $k\geq x$ such that $X2^{-k-1}\leq\tau:$
\[\sum_{\tau\geq X2^{-k-1}}2^ka_k(\tau)\lesssim D+\sum_{\tau\geq X2^{-k-1}}E_k(\tau)+\sum_{\tau<X2^{-m-1}\leq1}E_m(X2^{-m-1})+\int_{\tau}^1(\tau')^3\Big(\big\|\xi\big\|_{H^{3/2}}^2+\big\|\nabla_{\tau}\xi\big\|_{H^{1/2}}^2\Big)d\tau'\]
We also have the estimate:
\[\sum_{\tau\geq X2^{-k-1}}E_k(\tau)+\sum_{\tau<X2^{-m-1}\leq1}E_m(X2^{-m-1})\lesssim\sum_{k\geq x}\int_{\tau}^1e_k(\tau')\lesssim\int_{\tau}^1(\tau')^3\Big(\big\|\xi\big\|_{H^{3/2}}^2+\big\|\nabla_{\tau}\xi\big\|_{H^{1/2}}^2\Big)d\tau'\]
In conclusion, we obtain the improved high frequency regime estimate \eqref{high freq prelim improved estimate}:
\[\sum_{\tau\geq X2^{-k-1}}2^ka_k(\tau)\lesssim D+\int_{\tau}^1(\tau')^3\Big(\big\|\xi\big\|_{H^{3/2}}^2+\big\|\nabla_{\tau}\xi\big\|_{H^{1/2}}^2\Big)d\tau'.\]

\subsection{Proof of the Main Result in Theorem \ref{backward direction main result theorem}}\label{backward main result estimates section}
In this section we use the previous estimates in order to prove (\ref{backward direction main estimate one}), which implies the desired bounds for the solution in Theorem \ref{backward direction main result theorem}. We use the preliminary estimates (\ref{backward direction basic estimate one}) to get for all $\tau\in(0,1]$:
\begin{align*}
&\tau^3\Big(\big\|\xi\big\|_{H^{3/2}}^2+\big\|\nabla_{\tau}\xi\big\|_{H^{1/2}}^2\Big)\lesssim\tau^3\Big(\big\|\xi\big\|_{H^{1}}^2+\big\|\nabla_{\tau}\xi\big\|_{L^2}^2\Big)+\sum_{k\geq x}\tau^32^k\Big(\big\|\nabla P_k\xi\big\|_{L^2}^2+\big\|P_k\nabla_{\tau}\xi\big\|_{L^2}^2\Big)\\
    &\lesssim\tau^3\Big(\big\|\xi\big\|_{H^{1}}^2+\big\|\nabla_{\tau}\xi\big\|_{L^2}^2\Big)+\sum_{\tau<X2^{-k-1}\leq1}\tau^32^k\Big(\big\|\nabla P_k\xi\big\|_{L^2}^2+\big\|P_k\nabla_{\tau}\xi\big\|_{L^2}^2\Big)+\tau^2\sum_{\tau\geq X2^{-k-1}}2^ka_k(\tau)\\
    &\lesssim\tau^2\Big(\big\|\xi\big\|_{H^{1}}^2+\big\|\nabla_{\tau}\xi\big\|_{L^2}^2\Big)+\tau^2\sum_{\tau\geq X2^{-k-1}}2^ka_k(\tau)\lesssim D+\tau^2\sum_{\tau\geq X2^{-k-1}}2^ka_k(\tau)
\end{align*}
Using the improved high frequency regime estimate (\ref{high freq prelim improved estimate}), we get:
\[\tau^3\Big(\big\|\xi\big\|_{H^{3/2}}^2+\big\|\nabla_{\tau}\xi\big\|_{H^{1/2}}^2\Big)\lesssim D+\int_{\tau}^1(\tau')^3\Big(\big\|\xi\big\|_{H^{3/2}}^2+\big\|\nabla_{\tau}\xi\big\|_{H^{1/2}}^2\Big)d\tau'\]
Using Gronwall, we proved that for all $\tau\in(0,1]$:
\begin{equation}\label{optimal high frequency estimate}
    \tau^3\Big(\big\|\xi\big\|_{H^{3/2}}^2+\big\|\nabla_{\tau}\xi\big\|_{H^{1/2}}^2\Big)+\sum_{\tau\geq X2^{-k-1}}2^ka_k(\tau)\lesssim D.
\end{equation}

As a result, we use this and Proposition \ref{preliminary low frequency estimate proposition} to get the bound for all $\tau\in(0,1]$:
\[\tau^2\Big(\big\|\xi\big\|_{H^{3/2}}^2+\big\|\nabla_{\tau}\xi\big\|_{H^{1/2}}^2\Big)\lesssim\tau^2\Big(\big\|\xi\big\|_{H^{1}}^2+\big\|\nabla_{\tau}\xi\big\|_{L^2}^2\Big)+\sum_{k\geq x}\tau^22^k\Big(\big\|\nabla P_k\xi\big\|_{L^2}^2+\big\|P_k\nabla_{\tau}\xi\big\|_{L^2}^2\Big)\]\[\lesssim D+\sum_{\tau<X2^{-k-1}\leq1}\tau^22^{2k}\Big(\big\|\nabla P_k\xi\big\|_{L^2}^2+\big\|P_k\nabla_{\tau}\xi\big\|_{L^2}^2\Big)+\tau\sum_{\tau\geq X2^{-k-1}}2^ka_k(\tau)\lesssim D+\sum_{\tau<X2^{-k-1}\leq1}2^ka_k(X2^{-k-1})\]
Note that using our previous bound (\ref{intermediate equation name sth}) for $2^ka_k(X2^{-k-1})$ we get:
\[\sum_{\tau<X2^{-k-1}\leq1}2^ka_k(X2^{-k-1})\lesssim D+\sum_{\tau<X2^{-k-1}\leq1}\widetilde{E}_k(X2^{-k-1})\lesssim D+\sum_{\tau<X2^{-k-1}\leq1}E_k(X2^{-k-1})\lesssim D\]
where we also used (\ref{optimal high frequency estimate}) in the last inequality. We proved that:
\[\tau^2\Big(\big\|\xi\big\|_{H^{3/2}}^2+\big\|\nabla_{\tau}\xi\big\|_{H^{1/2}}^2\Big)\lesssim D.\]
We also get for all $\tau\in(0,1]$:
\[\tau\big\|\xi\big\|_{H^{1/2}}^2\lesssim D+\int_{\tau}^1\tau'\big\|\xi\big\|_{H^{1/2}}\cdot\big\|\nabla_{\tau}\xi\big\|_{H^{1/2}}\lesssim D+\int_{\tau}^1\sqrt{\tau'}\big\|\xi\big\|_{H^{1/2}}^2+\int_{\tau}^1(\tau')^{3/2}\big\|\nabla_{\tau}\xi\big\|_{H^{1/2}}^2\lesssim D+\int_{\tau}^1\sqrt{\tau'}\big\|\xi\big\|_{H^{1/2}}^2\]
Applying Gronwall, we complete the proof of (\ref{backward direction main estimate one}).

\subsection{Estimates for the Asymptotic Quantities}\label{backward asymptotic quantities estimates section}
In this section we complete the proof of Theorem \ref{backward direction main result theorem} by proving estimates for the asymptotic quantities $\mathcal{O}$ and $\mathfrak{h}.$ We first prove the following result for $\mathcal{O}:$
\begin{proposition}
    We have the estimate $\big\|\nabla^M\mathcal{O}\big\|_{H^{1}}^2\lesssim D$.
\end{proposition}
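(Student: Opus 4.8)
The plan is to extract the asymptotic tensor $\nabla^M\mathcal{O}$ directly from the solution $\xi$ of the toy problem and bound it in terms of the initial data energy $D$ using the estimate \eqref{backward direction main estimate one} already established in Section~\ref{backward main result estimates section}. The starting point is the asymptotic expansion in \eqref{backward direction linear wave equation main}: we have $\nabla_\tau\xi = \frac{2\nabla^M\mathcal{O}}{\tau} + O(\tau|\log\tau|^2)$ in $C^\infty(S^n)$, so that $\tau\nabla_\tau\xi \to 2\nabla^M\mathcal{O}$ as $\tau\to 0$. Equivalently, from $\xi = 2\nabla^M\mathcal{O}\log\tau + 2(\log\nabla)\nabla^M\mathcal{O} + \mathfrak{h}_M + O(\tau^2|\log\tau|^2)$ we can recover $\nabla^M\mathcal{O}$ as the coefficient of $\log\tau$. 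The idea is therefore to identify $\nabla^M\mathcal{O}$ with a limit involving $\tau\nabla_\tau\xi$, and then use the uniform-in-$\tau$ bound on $\tau^2\|\nabla_\tau\xi\|_{H^{1/2}}^2$ from \eqref{backward direction main estimate one} together with an auxiliary elliptic-type improvement to upgrade from $H^{1/2}$ to $H^1$ control. Since we want $H^1$ on $\nabla^M\mathcal{O}$, i.e.\ effectively $H^{M+1}$ on $\mathcal{O}$, and \eqref{backward direction main estimate one} only directly gives $\tau\|\xi\|_{H^{1/2}}^2 + \tau^2\|\xi\|_{H^{3/2}}^2 + \tau^2\|\nabla_\tau\xi\|_{H^{1/2}}^2 \lesssim D$, the gain of the extra half derivative will come from the equation itself.

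First I would set $\zeta := \tau\nabla_\tau\xi$ and compute, using \eqref{backward direction linear wave equation main} and the relation $\nabla_\tau(\nabla_\tau\xi) = \nabla_\tau(\nabla_\tau\xi) - $ (frame terms), that $\nabla_\tau\zeta = \tau\nabla_\tau(\nabla_\tau\xi) + \nabla_\tau\xi = 4\tau\Delta\xi + \tau\psi\nabla\xi$ (modulo lower-order frame errors controlled by the pointwise bound $\|\psi\|\lesssim 1$). Integrating this from $\tau$ down toward $0$ and using that $\zeta\to 2\nabla^M\mathcal{O}$, one obtains
\[
2\nabla^M\mathcal{O} = \zeta(\tau) - \int_0^\tau \big(4\tau'\Delta\xi + \tau'\psi\nabla\xi + \text{(frame)}\big)\,d\tau'.
\]
Now I would take the $H^1(S^n)$ norm of both sides. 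The boundary term $\|\zeta(\tau)\|_{H^1}$ is problematic at $H^1$ directly, so instead I would work at the level of LP projections $P_k$, multiply the resulting per-frequency identity by $2^k$, and sum. For the term $\int_0^\tau \tau'\Delta\xi\,d\tau'$, after applying $P_k$ and using the finite-band property $\|\Delta P_k\xi\|_{L^2}\lesssim 2^{2k}\|P_k\xi\|_{L^2}$ together with the refined commutator bounds of Lemma~\ref{LP Projections lemma} and Lemma~\ref{LP Projections lemma refined}, one bounds $2^k\|P_k\int_0^\tau\tau'\Delta\xi\|_{L^2} \lesssim 2^k\int_0^\tau \tau'\cdot 2^{2k}\|P_k\xi\|_{L^2}\,d\tau'$, and here the key is that for the low frequencies $\tau' \lesssim 2^{-k}$ one uses the $\tau^2\|\xi\|_{H^{3/2}}^2\lesssim D$ bound while for $\tau'\gtrsim 2^{-k}$ one uses the $H^1$ bulk term $\int_\tau^1\tau'\|\xi\|_{H^1}^2\,d\tau'\lesssim D$ from \eqref{backward direction main estimate one}. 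Squaring, summing in $k$, and invoking the Sobolev characterization of Section~\ref{LP Section} then gives $\|\nabla^M\mathcal{O}\|_{H^1}^2 \lesssim D$. The $\psi\nabla\xi$ error term is handled the same way, gaining smallness from the pointwise control of $\psi$ and from a power of $\tau$.

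The main obstacle I anticipate is the boundary/limit term: making rigorous the passage $\zeta(\tau)\to 2\nabla^M\mathcal{O}$ at the level of $H^1$ (not just $C^\infty$ pointwise) with an estimate uniform in $\tau$, and in particular controlling $2^k\|P_k\zeta(\tau)\|_{L^2}$ without an a priori $H^1$ bound on $\zeta$. The resolution is to avoid ever evaluating $\zeta$ at a fixed small $\tau$ in $H^1$: instead, one writes $\nabla^M\mathcal{O}$ as the convergent integral $\nabla^M\mathcal{O} = -\frac{1}{2}\int_0^1(4\tau'\Delta\xi + \tau'\psi\nabla\xi + \text{frame})\,d\tau' + \frac{1}{2}\zeta(1)$, using that $\zeta(1) = \nabla_\tau\xi|_{\tau=1}$ is part of the initial data $D$, and that the full integral over $[0,1]$ is absolutely convergent in $H^1$ by the same split-at-frequency-$2^{-k}$ argument. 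This circumvents the singular behavior entirely: the logarithmic growth of $\xi$ near $\tau=0$ is killed by the explicit factor $\tau'$ in the integrand, and the quadratic convergence of the remainder in the expansion guarantees the integral picks out exactly $2\nabla^M\mathcal{O}$. A secondary technical point is that the geometric LP projections are defined with respect to the $\tau$-dependent metric $\slashed{g}_\tau$ while $\nabla^M\mathcal{O}$ lives at $\tau=0$; as noted in Section~\ref{model forward direction section} the discrepancy between projecting with $\slashed{g}_0$ and $\slashed{g}_\tau$ is $O(\tau^2)$, so this contributes only harmless lower-order errors.
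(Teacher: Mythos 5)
Your strategy of extracting $\nabla^M\mathcal{O}$ as a $\tau\to 0$ limit of $\tau\nabla_\tau\xi$ and recovering the $H^1$ bound from the equation is aligned in spirit with the paper, but the integral representation with a \emph{fixed} endpoint at $\tau=1$ does not close. First, the boundary term $\zeta(1)=\nabla_\tau\xi|_{\tau=1}$ has to be controlled in $H^1$ (you need $\sum_k 2^{2k}\|P_k\zeta(1)\|_{L^2}^2$), whereas $D$ only contains $\|\nabla_\tau\xi\|_{H^{1/2}}^2|_{\tau=1}$; that piece is simply not bounded by $D$. Second, even for the integral, using only the global estimate \eqref{backward direction main estimate one} (say $\tau^2\|\xi\|_{H^{3/2}}^2\lesssim D$) gives at best $2^k\int_0^{2^{-k}}\tau'\cdot 2^{2k}\|P_k\xi\|_{L^2}\,d\tau'\lesssim 2^{k/2}\sqrt{D}$, and the resulting series $\sum_k 2^k D$ diverges. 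The half derivative you need is not recoverable from \eqref{backward direction main estimate one} alone; it must come from the per-frequency stopping time that underlies Sections~\ref{backward low freq estimates section} and~\ref{high frequency estimates section}.

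The paper's proof is short and sidesteps both issues by staying per-frequency throughout. From \eqref{backward direction basic estimate one} and $\tau\nabla_\tau\xi\to 2\nabla^M\mathcal{O}$ one first gets $\|\nabla^M\mathcal{O}\|_{L^2}^2\lesssim D$. Then, since the low frequency regime estimate of Proposition~\ref{preliminary low frequency estimate proposition} holds on $[0,X2^{-k-1}]$ with data taken at the frequency-dependent time $\tau=X2^{-k-1}$, letting $\tau\to 0$ yields
\[
2^{2k}\big\|P_k\nabla^M\mathcal{O}\big\|_{L^2}^2\lesssim 2^k a_k(X2^{-k-1})+2^{-k}D,
\]
and summing in $k$ closes because $\sum_{k\geq x}2^k a_k(X2^{-k-1})\lesssim D$, already established in Section~\ref{backward main result estimates section} via \eqref{intermediate equation name sth} and \eqref{optimal high frequency estimate}. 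That bound is precisely the mechanism of the half-derivative gain: the weight $2^k\tau$ in the high-frequency energy lets you reach the $H^1$ scale at the time $X2^{-k-1}$ with only $H^{1/2}$ data at $\tau=1$. If you replace your fixed endpoint $\tau'=1$ by $\tau'=X2^{-k-1}$ for the mode $P_k$ in your fundamental-theorem-of-calculus identity, your argument essentially reduces to the paper's.
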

\begin{proof}
    From (\ref{backward direction basic estimate one}), we have $\big\|\nabla^M\mathcal{O}\big\|_{L^2}^2\lesssim D$. We recall that for any $k\geq x$ and $0\leq\tau<X2^{-k-1}\leq1$, we have by the low frequency regime estimate in Proposition \ref{preliminary low frequency estimate proposition}:
    \[2^{2k}\big\|\tau P_k\nabla_{\tau}\xi\big\|_{L^2}^2\lesssim2^ka_k(X2^{-k-1})+2^{-k}D.\]
    As a result, we get that for all $k\geq x:$
    \[2^{2k}\big\|P_k\nabla^M\mathcal{O}\big\|_{L^2}^2\lesssim2^ka_k(X2^{-k-1})+2^{-k}D.\]
    Using the above estimate for $\big\|\nabla^{M}\mathcal{O}\big\|_{L^2}^2,$ and the estimates in the previous section, we have that:
    \[\big\|\nabla^{M}\mathcal{O}\big\|_{H^1}^2\lesssim D+\sum_{k\geq x}2^ka_k(X2^{-k-1})\lesssim D.\]
\end{proof}

The next goal is to derive estimates for $\mathfrak{h}_M.$ We consider the renormalized quantities for all $k\geq x$:
\[\overline{\xi}_k=P_k\xi-\tau\log(2^k\tau)P_k\nabla_{\tau}\xi.\]
We notice that using the above expansions for $\xi$ we get:
\[\lim_{\tau\rightarrow0}\overline{\xi}_k=\lim_{\tau\rightarrow0}\Big(P_k\xi-\tau\log(2^k\tau)P_k\nabla_{\tau}\xi\Big)=P_k\nabla^Mh-2\log(2^k)P_k\nabla^M\mathcal{O}=P_k\mathfrak{h}_M+R_k\nabla^M\mathcal{O}\]
We obtain using also (\ref{Rk 3}):
\[2^{2k}\big\|P_k\mathfrak{h}_M\big\|_{L^2}^2\lesssim2^{2k}\lim_{\tau\rightarrow0}\big\|\overline{\xi}_k\big\|_{L^2}^2+2^{2k}\big\|R_k\nabla^M\mathcal{O}\big\|_{L^2}^2\lesssim2^{2k}\lim_{\tau\rightarrow0}\big\|\overline{\xi}_k\big\|_{L^2}^2+\big\|\underline{P}_k\nabla^M\mathcal{O}\big\|_{H^1}^2\]
Summing for all $k\geq x$, we get that:
\[\sum_{k\geq x}2^{2k}\big\|P_k\mathfrak{h}_M\big\|_{L^2}^2\lesssim\big\|\nabla^M\mathcal{O}\big\|_{H^{1}}^2+\sum_{k\geq x}2^{2k}\lim_{\tau\rightarrow0}\big\|\overline{\xi}_k\big\|_{L^2}^2.\]
To complete the proof of (\ref{backward direction asymptotic quantity estimate}) we need to bound the second term by $D.$ We outline the proof of the following result, and refer the reader to \cite[Section~4]{Cwave} for a complete proof including the estimates of all the error terms:
\begin{proposition}
    We have the estimate $\sum_{k\geq x}\Big(2^{2k}\lim_{\tau\rightarrow0}\big\|\overline{\xi}_k\big\|_{L^2}^2\Big)\lesssim D$.
\end{proposition}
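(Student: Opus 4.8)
The plan is to derive an evolution equation for the renormalized quantities $\overline{\xi}_k = P_k\xi - \tau\log(2^k\tau)P_k\nabla_\tau\xi$ and to run an energy estimate for it in the time variable $t = 2^k\tau$, so that the limit $\lim_{\tau\to0}\|\overline{\xi}_k\|_{L^2}$ is controlled by data at $\tau=1$ (equivalently at $t=2^k$) plus error terms, and then to sum the resulting bounds over $k\geq x$ using the almost-orthogonality and commutator estimates of Section~\ref{LP Section}. First I would compute $\nabla_\tau\overline{\xi}_k$ and $\nabla_\tau\nabla_\tau\overline{\xi}_k$ using the equation $\nabla_\tau(\nabla_\tau\xi)+\frac{1}{\tau}\nabla_\tau\xi-4\Delta\xi=\psi\nabla\xi$ and the formula \eqref{formula for lie time derivative}; the point of subtracting $\tau\log(2^k\tau)P_k\nabla_\tau\xi$ is exactly that it kills the leading $\frac{2\nabla^M\mathcal{O}}{\tau}$ singularity in $\nabla_\tau\xi$, so $\overline{\xi}_k$ extends to $\tau=0$ with value $P_k\mathfrak{h}_M + R_k\nabla^M\mathcal{O}$ and $\nabla_\tau\overline{\xi}_k = -\log(2^k\tau)P_k\nabla_\tau\xi - \tau\log(2^k\tau)P_k\nabla_\tau\nabla_\tau\xi - \tau\cdot\frac{1}{\tau}\cdots$, which after using the equation becomes a manageable expression in $\Delta P_k\xi$, $P_k(\psi\nabla\xi)$, and the commutator $[\nabla_4,P_k]$.

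The key steps, in order, are: (i) rewrite $\nabla_\tau\overline{\xi}_k$ purely in terms of $\Delta P_k\xi$, lower-order quantities and the commutator term $[\nabla_\tau,P_k]\nabla_\tau\xi = \frac{t}{2^{2k-1}}[\nabla_4,P_k]\nabla_\tau\xi$ (via \eqref{LP bounds 1}), exploiting the cancellation between the $\frac{1}{\tau}$ term in the wave equation and the $\partial_\tau$ of the $\tau$-prefactor; (ii) contract the resulting first-order-in-$\tau$ equation for $\overline{\xi}_k$ with $2^{2k}\overline{\xi}_k$, integrate from $\tau$ up to $1$ controlling the time-derivative-of-volume-form error as in \eqref{formula for lie time derivative}, and use the almost-sharp estimate \eqref{optimal high frequency estimate} (i.e. $\sum_{\tau\geq X2^{-k-1}}2^ka_k(\tau)\lesssim D$ and $\tau^3(\|\xi\|_{H^{3/2}}^2+\|\nabla_\tau\xi\|_{H^{1/2}}^2)\lesssim D$ from Section~\ref{backward main result estimates section}) together with the low frequency estimate of Proposition~\ref{preliminary low frequency estimate proposition} to bound all terms of the form $2^{2k}\int_\tau^1 \|\Delta P_k\xi\|_{L^2}\|\overline{\xi}_k\|_{L^2}$, $2^{2k}\int_\tau^1\|[\nabla_4,P_k]\nabla_\tau\xi\|_{L^2}\|\overline{\xi}_k\|_{L^2}$, and the $\psi\nabla\xi$ contributions; (iii) for the commutator terms use the refined bounds \eqref{LP refined 2} and \eqref{fractional LP bounds 1} so that $\|[\nabla_4,P_k]F\|_{L^2}$ is replaced by $\|\underline{\widetilde{P}}_kF\|_{L^2}+2^{-k}\|F\|_{L^2}$, which is summable after multiplying by $2^{2k}$ and pairing against $\overline{\xi}_k$ and integrating; (iv) take $\tau\to0$ using that $\overline{\xi}_k$ extends continuously and that the integrals converge, and finally sum over $k\geq x$, using $\sum_k\|\underline{\widetilde{P}}_kF\|_{L^2}^2\lesssim\|F\|_{L^2}^2$ and the already-established bounds $\|\nabla^M\mathcal{O}\|_{H^1}^2\lesssim D$, $\sum_{\tau\geq X2^{-k-1}}2^ka_k(\tau)\lesssim D$ to absorb everything into $D$.

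The main obstacle I expect is step (iii): because the geometric Littlewood–Paley projections do not satisfy exact orthogonality, the commutator $[\nabla_4,P_k]$ acting on $\nabla_\tau\xi$ or $\nabla\xi$ produces terms with a \emph{different} projection symbol ($\underline{\widetilde{P}}_k$), so the individual-$k$ estimate for $2^{2k}\lim_{\tau\to0}\|\overline{\xi}_k\|_{L^2}^2$ contains a right-hand side like $2^{2k}\int_\tau^1(\tau')^?\|\underline{\widetilde{P}}_k\nabla_\tau\xi\|_{L^2}^2\,d\tau'$ with a weight that is only borderline summable; one must carefully track the powers of $\tau'$ and $2^k$ (using the factor $\frac{t}{2^{2k}}$ from \eqref{LP bounds 1}, which effectively gains $\tau$ and loses $2^{-k}$) so that after summing in $k$ one lands exactly on $\int_\tau^1(\tau')^3(\|\xi\|_{H^{3/2}}^2+\|\nabla_\tau\xi\|_{H^{1/2}}^2)\,d\tau'\lesssim D$ rather than a divergent quantity — this is the same ``sum before you close'' phenomenon that forced the delicate bookkeeping in Sections~\ref{forward main result estimates section} and~\ref{high frequency estimates section}, and the book-keeping of the logarithmic weight $\log(2^k\tau)$ in the regime $\tau\sim2^{-k}$ (where it vanishes) versus $\tau\sim1$ (where it is $\sim k$) requires splitting the time integral at the transition time $\tau=X2^{-k-1}$ exactly as before. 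The remaining steps are routine energy estimates plus applications of Lemmas~\ref{LP Projections lemma}, \ref{LP Projections lemma refined}, \ref{fractional LP bounds}, and \ref{Rk lemma}, together with the a~priori bounds already proved in this section; I would refer to \cite{Cwave} for the exhaustive verification of the error terms.
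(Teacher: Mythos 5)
Your overall strategy (derive an evolution equation for $\overline{\xi}_k$, run an energy estimate, pass to the limit $\tau\to0$, and sum in $k$) matches the paper's, and you correctly identify that the commutators $[\nabla_4,P_k]$ introduce different projection symbols that can only be summed at the end. However, the plan as written has two genuine gaps.

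First, step (ii) — ``integrate from $\tau$ up to $1$'' — cannot work. The renormalized quantity $\overline{\xi}_k = P_k\xi - \tau\log(2^k\tau)P_k\nabla_{\tau}\xi$ is built with $\log(2^k\tau)$ \emph{precisely} so that the coefficient vanishes at $\tau\sim2^{-k}$; at $\tau=1$ it is $\log 2^k\sim k$. The boundary term your plan produces is
\[
2^{2k}\big\|\overline{\xi}_k(1)\big\|_{L^2}^2\ \gtrsim\ 2^{2k}\,|\log 2^k|^2\,\big\|P_k\nabla_{\tau}\xi(1)\big\|_{L^2}^2,
\]
and summing in $k$ this exceeds $\big\|\nabla_{\tau}\xi(1)\big\|_{H^{1/2}}^2$ by a factor $\sim 2^{k}\,k^2$ per dyadic block, so it is not $\lesssim D$. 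The paper avoids this by only integrating the $\overline{\xi}_k$ evolution over the low frequency window $[\tau, X2^{-k-1}]$, where $|\log(2^k\tau')|\lesssim_X1$, and controlling the boundary value at $\tau'=X2^{-k-1}$ by the already-established bound $\sum_k 2^ka_k(X2^{-k-1})\lesssim D$ from the high frequency analysis. You do mention in your ``obstacle'' paragraph that the $\log(2^k\tau)$ bookkeeping requires a split at $X2^{-k-1}$, but you frame it as a refinement, whereas it is actually mandatory for the energy estimate to close at all: there is no workable version of the $\overline{\xi}_k$-estimate that integrates up to $\tau=1$.

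Second, the dominant bulk term is not a commutator term. Writing $\nabla_\tau\overline{\xi}_k\approx-\tau\log(2^k\tau)\,P_k\big(4\Delta\xi+\psi\nabla\xi\big)$ and applying the finite band property, $\|\Delta P_k\xi\|_{L^2}\lesssim2^{2k}\|\widetilde P_k\xi\|_{L^2}$ with a \emph{different} symbol $\widetilde P_k$, $P_k=\widetilde P_k^2$. Proposition~\ref{preliminary low frequency estimate proposition} and the high frequency estimate are stated for $P_k\xi$, not $\widetilde P_k\xi$, and since the geometric projections are only almost-orthogonal this is a real discrepancy. The paper handles it by decomposing $\|\widetilde P_k\xi\|_{L^2}\lesssim\sum_l\|P_l^2\widetilde P_k\xi\|_{L^2}$, splitting the sum over $l$ into $l<k$, into $l\geq k$ with $\tau'<X2^{-l-1}$, and into $l\geq k$ with $\tau'\geq X2^{-l-1}$, and using Cauchy--Schwarz with the almost-orthogonality gain $2^{-4|k-l|}$ to obtain a weighted sum in which each piece can be bounded by the low- or high-frequency estimates at dyadic scale $l$. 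Your step (iii) attributes the appearance of a different projection symbol only to $[\nabla_4,P_k]$, which is a lower-order error; the principal term $\Delta P_k\xi$ already forces this decomposition, and without it the argument does not close.
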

\begin{proof}
Using the equation satisfied by $\overline{\xi}_k$ and the preliminary estimate (\ref{backward direction basic estimate one}) we get:
\begin{equation}\label{inequality that gives 4 terms to deal with}
    2^{2k}\big\|\overline{\xi}_k\big\|_{L^2}^2\lesssim d_k+\int_{\tau}^{X2^{-k-1}}2^k(\tau')^2|\log(2^k\tau')|^2\big\|\Delta P_k\xi\big\|_{L^2}^2\lesssim d_k+\int_{\tau}^{X2^{-k-1}}2^{3k}(2^k\tau')^2|\log(2^k\tau')|^2\big\|\widetilde{P}_k\xi\big\|_{L^2}^2
\end{equation}
where $P_k=\widetilde{P}_k^2.$ The strategy is to decompose the RHS of \eqref{inequality that gives 4 terms to deal with} into its low frequency and high frequency regime parts in order to use our previous estimates. We bound the error term for each $\tau'\in[\tau,X2^{-k-1}]$ as follows:
\[\big\|\widetilde{P}_k\xi\big\|_{L^2}(\tau')\lesssim\sum_{l<k}\big\|P_l^2\widetilde{P}_k\xi\big\|_{L^2}(\tau')+\sum_{\substack{l\geq k \\ \tau'<X2^{-l-1}}}\big\|P_l^2\widetilde{P}_k\xi\big\|_{L^2}(\tau')+\sum_{l\geq k}\mathbf{1}_{l,\tau'}\big\|P_l^2\widetilde{P}_k\xi\big\|_{L^2}(\tau')\]
Using this in (\ref{inequality that gives 4 terms to deal with}), we obtain three error terms. The first one is in the low frequency regime, and it is bounded by $d_k$ using (\ref{backward direction basic estimate one}) and Proposition \ref{preliminary low frequency estimate proposition}. The second term is also in the low frequency regime. We use Cauchy-Schwarz and we consider a projection operator such that $\widetilde{P}_k=\underline{\widetilde{P}}_k^2$:
\[\bigg(\sum_{\substack{l\geq k \\ \tau'<X2^{-l-1}}}\big\|P_l^2\underline{\widetilde{P}}_k^2\xi\big\|_{L^2}(\tau')\bigg)^2\lesssim\sum_{\substack{l\geq k \\ \tau'<X2^{-l-1}}}2^{-5(l-k)}\big\|\underline{\widetilde{P}}_kP_l\xi\big\|_{L^2}^2(\tau')\lesssim\sum_{\substack{l\geq k \\ \tau'<X2^{-l-1}}}2^{-2k}2^{-5(l-k)}\big\|\nabla P_l\xi\big\|_{L^2}^2(\tau')\]
\[\lesssim\frac{1}{(\tau')^{2}}\sum_{\substack{l\geq k \\ \tau'<X2^{-l-1}}}2^{-2k}2^{-5(l-k)}\bigg(2^{-2l}\cdot2^la_l(X2^{-l-1})+2^{-3l}D\bigg)\]
As a result, the corresponding term in (\ref{inequality that gives 4 terms to deal with}) is bounded by:
\[\sum_{\substack{l\geq k \\ \tau<X2^{-l-1}}}\int_{\tau}^{X2^{-l-1}}2^{k}|\log(2^k\tau')|^22^{-7(l-k)}\bigg(2^la_l(X2^{-l-1})+2^{-l}D\bigg)d\tau'\lesssim2^{-k}D+\sum_{l\geq k}2^{-7(l-k)}\cdot2^la_l(X2^{-l-1})\]
For the third term which is in the high frequency regime, we have using our notation in Section~\ref{high frequency estimates section}:
\[\bigg(\sum_{l\geq k}\mathbf{1}_{l,\tau'}\big\|P_l^2\widetilde{P}_k\xi\big\|_{L^2}(\tau')\bigg)^2\lesssim\sum_{l\geq k}\mathbf{1}_{l,\tau'}2^{-5(l-k)}\big\|P_l\xi\big\|_{L^2}^2(\tau')\lesssim\tau'\sum_{l\geq k}\mathbf{1}_{l,\tau'}2^{-5(l-k)}a_l(\tau')\]
The corresponding term in (\ref{inequality that gives 4 terms to deal with}) can bounded using the estimates in Section~\ref{high frequency estimates section}:
\[\int_{\tau}^{X2^{-k-1}}2^{k}(2^k\tau')^3|\log(2^k\tau')|^2\sum_{l\geq k}\mathbf{1}_{l,\tau'}2^{-6(l-k)}\cdot2^la_l(\tau')d\tau'\lesssim\]\[\lesssim d_k+\tau^62^{6k}\cdot D+\sum_{\tau<X2^{-m-1}\leq1}2^{-4|k-m|}E_m(X2^{-m-1})+\widetilde{E}_k(X2^{-k-1})+\sum_{l\geq k}2^{-6(l-k)}\int_{\tau}^1e_l(\tau')d\tau'\]
Combining the previous bounds and taking the limit $\tau\rightarrow0$, we get from (\ref{inequality that gives 4 terms to deal with}) that:
\[2^{2k}\lim_{\tau\rightarrow0}\big\|\overline{\xi}_k\big\|_{L^2}^2\lesssim d_k+\sum_{l\geq x}2^{-3|k-l|}2^la_l(X2^{-l-1})+\widetilde{E}_k(X2^{-k-1})+\sum_{l\geq x}2^{-4|k-l|}E_l(X2^{-l-1})+\sum_{l\geq k}2^{-6(l-k)}\int_{0}^1e_l(\tau)\]
We sum for all $k\geq x$ and use the estimates in Section~\ref{high frequency estimates section} in order to obtain the conclusion.
\end{proof}

\section{Estimates from $\{v=-u\}$ to $\{v=0\}$}\label{backward direction full system section}

In this section we consider the smooth straight self-similar vacuum spacetime $(\mathcal{M},g)$ obtained from small initial data on the sphere $S_{(-1,1)}$ and we prove optimal estimates on the induced asymptotic data set $\Sigma(\slashed{g}_0,h)$ at $\{u=-1,\ v=0\}.$ Using the ambient metric formulation, in the original $(n+1)$-dimensional formulation these correspond to proving estimates on the asymptotic data at $\mathcal{I}^-$ in terms of initial data at a finite time.
\begin{theorem}\label{main theorem backward direction full system}
    For any $M>0$ large enough and $\epsilon>0$ small enough we consider the smooth straight initial data on the sphere $S_{(-1,1)}$, with the initial data norm:
    \begin{align*}
        \Xi_M^2&=\sum_{i+j=0}^{\frac{n-4}{2}}\sum_{m=0}^{M+\frac{n-4}{2}-i-j}\big\|\nabla_3^i\nabla_4^j\Psi\big\|^2_{H^{m+1}}+\sum_{i+j=0}^{\frac{n-4}{2}}\big\|\nabla^M\nabla_3^i\nabla_4^j\Psi\big\|^2_{H^{3/2}}+\sum_{i+j=0}^{\frac{n-2}{2}}\big\|\nabla^M\nabla_3^i\nabla_4^j\Psi\big\|^2_{H^{1/2}}\\
        &+\sum_{k=0}^1\sum_{i+j=\frac{n-4}{2}-k}\big\|\nabla^{M+1+k}\nabla_3^i\nabla_4^j\psi^*\big\|^2_{H^{1/2}}+\sum_{i+j=0}^{\frac{n-4}{2}}\sum_{m=0}^{M+\frac{n-4}{2}-i-j}\big\|\nabla_3^i\nabla_4^j\psi^*\big\|^2_{H^{m+1}}+\big\|\slashed{g}^*\big\|_{H^{M+1}}^2.
    \end{align*}
    We assume that the initial data satisfy the smallness assumption $\Xi_M\lesssim\epsilon$. We denote by $(\mathcal{M},g)$ the smooth vacuum spacetime obtained using Theorems \ref{stability of de sitter theorem in section}, \ref{propagation of regularity theorem}, and \ref{asymptotic completeness in section theorem}, with induced asymptotic data $(\slashed{g}_0,h)$ at $S_{(-1,0)}$. The corresponding asymptotic data set $\Sigma(\slashed{g}_0,h)$ satisfies the estimate:
    \[\big\|\Sigma(\slashed{g}_0,h)\big\|_{M}\lesssim \Xi_M\]
    where the asymptotic data norm of order $\big\|\Sigma\big\|^2_{M}$ is given in Definition \ref{asymptotic data set definition}.
\end{theorem}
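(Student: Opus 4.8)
\textbf{Proof proposal for Theorem \ref{main theorem backward direction full system}.} The plan is to reduce the estimate $\big\|\Sigma(\slashed{g}_0,h)\big\|_M \lesssim \Xi_M$ to the estimates for the second model system obtained in Theorems \ref{backward direction main result theorem general} and \ref{asymptotic data estimates theorem general}, exactly mirroring the structure of the forward-direction argument in Section \ref{forward direction full system section}, but run in the reverse time direction. First, by self-similarity it suffices to work along the null cone $\{u=-1\}$, replacing all $\nabla_3$ derivatives by $\nabla_4$ and using the time variable $\tau = \sqrt{v}$; the initial data on $S_{(-1,1)}$ corresponds to data at $\tau = 1$, and the asymptotic data set $\Sigma(\slashed{g}_0,h)$ is read off from the expansion of the solution as $\tau \to 0$. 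The commuted Bianchi system along $\{u=-1\}$ is precisely the system \eqref{model wave system 2}, which fits the form of the second model system \eqref{second model system definition} with $\Phi_0 = \nabla_4^{\frac{n-4}{2}}\alpha$, $\Phi_i = \nabla_4^{\frac{n-4}{2}}\Psi^G$, and $F_m^i = Err_{m,\frac{n-4}{2}}^\Psi$. The hypothesis $\Xi_M \lesssim \epsilon$ together with the preliminary scattering result of Sections \ref{stability dS section} and \ref{asymptotic completeness section} gives the lower-order pointwise bounds $\mathcal{P}, \mathcal{SP} \lesssim \epsilon(1+|\log v|^2)$ needed to treat the nonlinear error terms as essentially linear, and also supplies the bound $\|\psi\|_{H^{M+1}}\big|_{\tau=1}\lesssim 1$ required to apply Theorems \ref{backward direction main result theorem general} and \ref{asymptotic data estimates theorem general}.

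Second, I would set up a hierarchy of norms at finite times $\{u=-1,\ 0\leq v \leq 1\}$ analogous to $\mathcal{T},\mathcal{S},\mathcal{L},\mathcal{M}_l,\mathcal{R}$ in Section \ref{forward direction full system section}, but now with the data norm being $\Xi_M$ at $\tau=1$ rather than at $\tau=0$. The top-order quantities — the energies controlling $\nabla^M\nabla_4^{\frac{n-4}{2}}\alpha$, $\nabla^M\nabla_4^{\frac{n-4}{2}}\Psi^G$ and their $\nabla_\tau$ derivatives, together with $\mathcal{O}$ and $\mathfrak{h}$ — are controlled by applying Theorems \ref{backward direction main result theorem general} and \ref{asymptotic data estimates theorem general} to the second model system, with the inhomogeneous terms $F_m^i$ bounded by the error-term estimates $\|Err_{m,l}^\Psi\|_{L^2}^2 + \|Err_{M,l}^\Psi\|_{H^{1/2}}^2 \lesssim (1+|\log v|^2)(\mathcal{T}+\mathcal{S}+\mathcal{L}+\mathcal{R})$ proved by the same Lemma \ref{lemma to bound general F error terms} argument used in Section \ref{forward direction full system section} (the error structure is identical, only the time direction of the integration changes). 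The lower-order energies $\mathcal{L}$, the fractional energies $\mathcal{M}_l$, and the Ricci-coefficient norm $\mathcal{R}$ are then recovered by standard transport/energy estimates integrated from $\tau=1$ down to $\tau$, following the proofs of the corresponding propositions in Section \ref{forward direction full system section} verbatim. Once all these bounds are assembled one obtains, via Gronwall in $\tau$ and the smallness of $\epsilon$ to absorb the $\epsilon\mathcal{R}$ term, a closed estimate showing that all the finite-time norms are bounded by $\Xi_M$ up to logarithmic factors.

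Third, to conclude I would read off the asymptotic data norm $\big\|\Sigma(\slashed{g}_0,h)\big\|_M^2$ from Definition \ref{asymptotic data set definition}: its constituents are $\|\slashed{g}_0^*\|_{\mathring{H}^{M+1}}^2$, the fractional Ricci norms $\|\nabla^{M+1+k}\nabla_4^{\frac{n-4}{2}-k}\psi^*\|_{H^{1/2}}^2$, the lower-order norms $\|\nabla_4^l\psi^*\|_{H^{m+1}}^2$, $\|\nabla_4^l\Psi^G\|_{H^{m+1}}^2$, $\|\nabla_4^l\alpha\|_{H^{m+1}}^2$ for $l \leq \frac{n-6}{2}$, and the renormalized quantities $\|\mathcal{O}\|_{H^{M+1}}^2 + \|\mathfrak{h}\|_{H^{M+1}}^2$. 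Each of these is the $\tau \to 0$ limit (or a limit value computed in the Fefferman-Graham expansion, as in Section \ref{asymptotic completeness section}) of one of the finite-time quantities just controlled: the $\Psi^G$ and lower-order $\alpha$ pieces extend continuously to $\tau=0$ by the analysis of Proposition \ref{asymptotic completeness preliminary proposition}, the metric norm follows from the transport equation $\mathcal{L}_v\slashed{g}^* = \psi^*$ integrated in $v$ together with \eqref{Lie derivative in terms of covariant derivative} as at the end of Section \ref{forward direction full system section}, and $\mathcal{O}$, $\mathfrak{h}$ come directly from Theorem \ref{asymptotic data estimates theorem general} applied at top order (with lower-order $m < M$ handled by the lower-order version of that theorem, and the bound on $\|\mathfrak{h}\|_{H^{M+1}}^2$ assembled from $\|h\|_{L^2}^2$, $\|\mathcal{O}\|_{H^M}^2$, and $\sum_{k\geq x} 2^{2k}\|P_k\mathfrak{h}_M\|_{L^2}^2$ using the LP estimates of Section \ref{LP Section}, exactly as in the remark following Theorem \ref{asymptotic data estimates theorem general}). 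The only subtlety is that the logarithmic factors $(1+|\log v|^2)$ appearing in the mildly singular energy $\mathcal{S}$ must be shown not to obstruct closing the estimate; this is handled, just as in the proof of Proposition \ref{main proposition forward direction full system}, by tracking the combination $\frac{1}{1+|\log v|^2}\mathcal{S}$ in the Gronwall argument so that the logarithms cancel against the $v^{-1/2}$ weight and the integral converges.

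\textbf{Main obstacle.} The genuinely hard input is Theorem \ref{asymptotic data estimates theorem general} — the sharp estimate controlling $\mathcal{O}$ and the renormalized tensor $\mathfrak{h}$ at the top order $M$ in terms of the finite-time energy, without derivative loss. This is where the renormalization $\mathfrak{h} = h - 2(\log\nabla)\mathcal{O}$ and the frequency-localized low/high-frequency splitting with the transition time $\tau \sim 2^{-k}$ are essential, and where the almost-orthogonality defects of the geometric Littlewood-Paley projections force the delicate discrete/continuous Gronwall arguments of Section \ref{model backward direction section} (illustrated there for the toy problem in Theorem \ref{backward direction main result theorem}). Granting that result and its lower-order analogue, the present theorem is a bookkeeping exercise: translating between the finite-time norm hierarchy, the model-system variables, and the asymptotic data set, and checking that the nonlinear error terms and the logarithmic weights do not spoil the Gronwall closure — all of which parallels Section \ref{forward direction full system section} with $v=0$ and $\tau=1$ interchanged.
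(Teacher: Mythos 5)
Your high-level architecture is the same as the paper's: reduce to $\{u=-1\}$ with $\tau=\sqrt{v}$, recognize the commuted Bianchi system as an instance of the second model system, invoke Theorems \ref{backward direction main result theorem general} and \ref{asymptotic data estimates theorem general} for the top-order and asymptotic quantities, close the nonlinear errors by the same Lemma \ref{lemma to bound general F error terms} machinery, and read off the constituents of $\|\Sigma(\slashed{g}_0,h)\|_M$ from the finite-time energies, the transport equation for $\slashed{g}^*$, and the remark after Theorem \ref{asymptotic data estimates theorem general}. That part is right.

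There is, however, a concrete mismatch in the norm hierarchy that would block your Gronwall closure as stated. You propose to set up the norms ``analogous to Section \ref{forward direction full system section}'' with a mildly singular top-order energy $\mathcal{S}=\|\nabla_4^{\frac{n-4}{2}}\alpha\|_{H^{M+1}}^2$ and to close by tracking $\mathcal{S}/(1+|\log v|^2)$ in the Gronwall argument. In the forward direction this works because Theorem \ref{forward direction main result theorem general} directly produces the pointwise-in-time bound $\mathcal{S}\lesssim(1+|\log v|^2)\mathcal{D}$. In the backward direction Theorem \ref{backward direction main result theorem general} does \emph{not} give any such pointwise control on $\|\Phi_0\|_{H^{M+1}}^2$: it only controls the weighted quantity $\tau\|\nabla^m\Phi_0\|_{H^{1/2}}^2$ pointwise and $\|\Phi_0\|_{H^{M+1}}^2$ in an integrated sense ($\int_\tau^1\tau'\|\Phi_0\|_{H^{M+1}}^2\,d\tau'$). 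So there is nothing from the second model system with which to close a bound for $\mathcal{S}$, and the $\epsilon\mathcal{R}+\mathcal{S}$ bookkeeping from Section \ref{forward direction full system section} cannot be carried over verbatim. The paper's fix, stated explicitly just before the proposition on $\mathcal{M}_l$ in Section \ref{backward direction full system section}, is to drop $\mathcal{S}$ entirely, enlarge the backward $\mathcal{T}$ to include $\sum_{m\leq M}\sqrt{v}\,\|\nabla^m\nabla_4^{\frac{n-4}{2}}\alpha\|_{H^{1/2}}^2$, and then use $\sum_{m\leq M}\|\nabla^m\nabla_4^{\frac{n-4}{2}}\alpha\|_{H^{1/2}}^2\lesssim v^{-1/2}\mathcal{T}$ wherever $\mathcal{S}$ appeared in the forward error estimates; this loses half a derivative relative to $\mathcal{S}$, but the error structure compensates because the higher angular derivatives of $\alpha$ carry better $v$ weights. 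The final Gronwall then runs on $\mathcal{T}+\mathcal{L}+\mathcal{R}$ alone, with the integrable factor $(1+|\log v'|^2)(v')^{-1/2}$ replacing your proposed $\mathcal{S}/(1+|\log v|^2)$ mechanism.

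A smaller omission: your read-off of $\|h\|_{L^2}^2$ (needed to complete the $\|\mathfrak{h}\|_{H^{M+1}}$ bound via the remark after Theorem \ref{asymptotic data estimates theorem general}) glosses over the renormalization. The paper passes to $\overline{\xi}=\xi-v\log v\,\nabla_4\xi$ with $\xi=\nabla_4^{\frac{n-4}{2}}\alpha$, whose $\nabla_4$-transport equation has a $\log v$ weight but a regular limit at $v=0$, so that $\|h\|_{L^2}^2=\lim_{v\to0}\|\overline{\xi}\|_{L^2}^2$ can be controlled by $\mathcal{D}$. Without this subtraction the naive limit of $\|\xi\|_{L^2}^2$ is infinite.

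With the norm correction the proof is essentially the paper's; the rest of your outline — including the treatment of the pointwise norms $\mathcal{P},\mathcal{SP}$, the use of $\|\psi\|_{H^{M+1}}|_{\tau=1}\lesssim1$ to license Theorems \ref{backward direction main result theorem general}--\ref{asymptotic data estimates theorem general}, and the identification of $\mathcal{D}\sim\Xi_M^2$ by self-similarity — matches the paper.
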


We follow the strategy outlined in Section~\ref{scattering map section intro} of the introduction. Using self-similarity as in Section~\ref{forward direction full system section}, it suffices to work on the null hypersurface $\{u=-1\}.$ We define the following norms on $\big\{u=-1,\ 0\leq v\leq1\big\}:$
\begin{itemize}
    \item Top order energy $\mathcal{T}=\mathcal{T}(-1,v)$:
        \[\mathcal{T}=v^2\big\|\nabla_{4}\nabla^M\nabla_4^{\frac{n-4}{2}}\alpha\big\|^2_{H^{1/2}}+\sum_{m=0}^Mv\big\|\nabla^m\nabla_4^{\frac{n-4}{2}}\alpha\big\|^2_{H^{3/2}}+\sum_{m=0}^M\sqrt{v}\big\|\nabla^m\nabla_4^{\frac{n-4}{2}}\alpha\big\|^2_{H^{1/2}}+\]\[+\sum_{m=0}^M\big\|\nabla^m\nabla_4^{\frac{n-4}{2}}\Psi^G\big\|^2_{H^{3/2}}+\sum_{m=0}^Mv\big\|\nabla_4\nabla^m\nabla_4^{\frac{n-4}{2}}\Psi^G\big\|^2_{H^{1/2}}\]
    \item Lower order energy $\mathcal{L}=\mathcal{L}(-1,v)$:
    \[\mathcal{L}=\sum_{l=0}^{\frac{n-6}{2}}\sum_{m=0}^{M+\frac{n-4}{2}-l}v\big\|\nabla_4^{l+1}\Psi\big\|^2_{H^{m}}+\big\|\nabla_4^l\Psi\big\|^2_{H^{m+1}}\]
    \item Fractional lower order energy $\mathcal{M}_l=\mathcal{M}_l(-1,v)$ for any $0\leq l\leq\frac{n-6}{2}$:
    \[\mathcal{M}_l=\big\|\nabla^M\nabla_4^l\Psi\big\|^2_{H^{5/2}}\]
    \item Ricci coefficients norm $\mathcal{R}=\mathcal{R}(-1,v)$:
    \[\mathcal{R}=\sum_{k=0}^1\big\|\nabla^{M+1+k}\nabla_4^{\frac{n-4}{2}-k}\psi^*\big\|^2_{H^{1/2}}+\sum_{l=0}^{\frac{n-4}{2}}\sum_{m=0}^{M+\frac{n-4}{2}-l}\big\|\nabla_4^{l}\psi^*\big\|^2_{H^{m+1}}\]
    \item Lower order pointwise norm $\mathcal{P}=\mathcal{P}(-1,v)$ for $N'=\frac{M}{2}+\frac{n}{4}$:
    \[\mathcal{P}=\sum_{l=0}^{\frac{n-6}{2}}\sum_{m=0}^{N'}\big\|\nabla^m\nabla_4^{l}\Psi\big\|^2_{L^{\infty}}+\sum_{l=0}^{\frac{n-4}{2}}\sum_{m=0}^{N'}\big\|\nabla^m\nabla_4^{l}\psi^*\big\|^2_{L^{\infty}}\]
    \item Mildly singular pointwise norm $\mathcal{SP}=\mathcal{SP}(-1,v)$:
     \[\mathcal{SP}=\sum_{m=0}^{N'}\big\|\nabla^m\nabla_4^{\frac{n-4}{2}}\Psi\big\|^2_{L^{\infty}}+\sum_{m=0}^{N'}\big\|\nabla^m\nabla_4^{\frac{n-2}{2}}\psi^*\big\|^2_{L^{\infty}}\]
    \item Initial data norm $\mathcal{D}$ at $(u,v)=(-1,1)$:
    \[\mathcal{D}:=\sum_{l=0}^{\frac{n-4}{2}}\sum_{m=0}^{M+\frac{n-4}{2}-l}\big\|\nabla_4^l\Psi\big\|^2_{H^{m+1}}+\sum_{l=0}^{\frac{n-4}{2}}\big\|\nabla^M\nabla_4^l\Psi\big\|^2_{H^{3/2}}+\sum_{l=0}^{\frac{n-2}{2}}\big\|\nabla^M\nabla_4^l\Psi\big\|^2_{H^{1/2}}+\]
    \[+\sum_{k=0}^1\sum_{l=\frac{n-4}{2}-k}\big\|\nabla^{M+1+k}\nabla_4^l\psi^*\big\|^2_{H^{1/2}}+\sum_{l=0}^{\frac{n-4}{2}}\sum_{m=0}^{M+\frac{n-4}{2}-l}\big\|\nabla_4^l\psi^*\big\|^2_{H^{m+1}}+\big\|\slashed{g}^*\big\|_{H^{M+1}}^2\]
\end{itemize}

We remark that using self-similarity, we can replace $\nabla_3$ derivatives with $\nabla_4$ derivatives. Thus, we obtain that $\Xi_M^2\sim\mathcal{D},$ so it suffices to consider $\mathcal{D}$ as the initial data norm. We use the estimates of Section~\ref{stability dS section} to obtain preliminary estimates for our solution. Using Theorems \ref{stability of de sitter theorem in section}, \ref{asymptotic completeness in section theorem}, and \ref{propagation of regularity theorem}, we get that for $N'=\frac{M}{2}+\frac{n}{4}$ we have the estimates:
\[\mathcal{P}\leq\epsilon,\ \mathcal{SP}\leq\epsilon\big(1+|\log(v)|^2\big).\]

We prove Theorem~\ref{main theorem backward direction full system} at the end of the section. This will follow using Theorem \ref{asymptotic data estimates theorem general} and the following result:
\begin{proposition}\label{main proposition backward direction full system}
    The spacetime $(\mathcal{M},g)$ satisfies the following estimate on $\big\{u=-1,\ 0\leq v\leq1\big\}:$
    \[\mathcal{T}+\mathcal{L}+\mathcal{R}\lesssim\mathcal{D}.\]
\end{proposition}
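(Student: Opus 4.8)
\textbf{Proof strategy for Proposition \ref{main proposition backward direction full system}.}
The plan is to combine a bootstrap-type continuity argument on the interval $0\leq v\leq 1$ with the sharp top-order estimates for the second model system established in Theorem \ref{backward direction main result theorem general} (equivalently, in the toy case, Theorem \ref{backward direction main result theorem}). First I would set up the scheme exactly as in Section~\ref{forward direction full system section}: introduce the nonlinear error terms $Err_{ml}^{\Psi}$ from \eqref{model wave system 2}, so that $\Phi_0=\nabla_4^{\frac{n-4}{2}}\alpha$, $\Phi_i=\nabla_4^{\frac{n-4}{2}}\Psi^G$, $F_m^0=F_m^i=Err_{m,\frac{n-4}{2}}^{\Psi}$ solve the second model system. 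Since the initial data smallness $\mathcal{D}\lesssim\epsilon^2$ propagates via Theorems~\ref{stability of de sitter theorem in section}, \ref{propagation of regularity theorem}, and \ref{asymptotic completeness in section theorem} to give the pointwise bounds $\mathcal{P}\leq\epsilon$ and $\mathcal{SP}\leq\epsilon(1+|\log v|^2)$, the error terms are essentially linear (one factor of $\psi^*$ small, the rest absorbed by pointwise norms), exactly as in the forward direction. The key input is that Theorem~\ref{backward direction main result theorem general}, applied with the data norm at $\tau=1$ (i.e.\ at $v=1$) controlled by $\mathcal{D}$ (which holds because, as in Section~\ref{forward direction full system section}, the assumption $\|\psi\|_{H^{M+1}}|_{\tau=1}\lesssim1$ is guaranteed by $\mathcal{D}\lesssim\epsilon$), yields the top-order estimate
\[
\mathcal{T}\lesssim\mathcal{D}+\sum_{m=0}^M\int_0^v \sqrt{v'}\,\big\|Err^{\Psi}_{m,\frac{n-4}{2}}\big\|_{H^{1/2}}^2\,dv',
\]
after changing variables $v=\tau^2$ in the conclusion of Theorem~\ref{backward direction main result theorem general} (the $\tau'\|F_m^i\|_{H^{1/2}}^2\,d\tau'$ integral becomes $\sqrt{v'}\|Err\|_{H^{1/2}}^2\,dv'$).

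The second step is to close the system of norms. Here I would reuse verbatim the lower-order machinery of Section~\ref{forward direction full system section}: the transport/energy estimates for $\mathcal{L}$, $\mathcal{M}_l$, and $\mathcal{R}$ proved there (Propositions after Proposition~\ref{main proposition forward direction full system}) carry over, since the commuted wave equations \eqref{model wave system 2} differ from \eqref{model wave system 1} only in the lower-order coefficient $(2+l-\tfrac n2)$ versus $(3+l-\tfrac n2)$ in front of $\nabla_4\nabla^m\nabla_4^l\Psi^G$, which does not affect any of those estimates (the relevant transport structure and the signs of the bulk terms are identical). Thus one obtains inequalities of the schematic form
\[
\mathcal{L}\lesssim\mathcal{D}+\epsilon\mathcal{R}+\int_0^v(\mathcal{T}+\mathcal{L}+\mathcal{R})\,dv',\qquad
\mathcal{M}_l\lesssim\mathcal{T}+\mathcal{L}+\mathcal{R}+\big\|Err_{M,l}^{\Psi}\big\|_{H^{1/2}}^2,
\]
\[
\mathcal{R}\lesssim\mathcal{D}+\int_0^v\big((v')^{-1/2}\mathcal{T}+\mathcal{L}+\mathcal{R}+(v')^{1/2}\mathcal{M}_{\frac{n-6}{2}}\big)\,dv',
\]
together with the error-term bounds $\|Err_{M,l}^{\Psi}\|_{H^{1/2}}^2\lesssim(1+|\log v|^2)(\mathcal{T}+\mathcal{L}+\mathcal{R})$ and $\|Err_{m,l}^{\Psi}\|_{L^2}^2\lesssim(1+|\log v|^2)(\mathcal{T}+\mathcal{L}+\mathcal{R})$ from the forward section. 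One difference worth flagging: in the backward direction $\alpha$ is the singular quantity and I place its full $H^{M+1}$-type control inside $\mathcal{T}$ (with appropriate $v$-weights and a $\sqrt v$ at lowest fractional order), so there is no separate mildly-singular energy $\mathcal{S}$ here; I would check that the top-order estimate from Theorem~\ref{backward direction main result theorem general} indeed controls $\sum_{m\le M}\big(v\|\nabla^m\nabla_4^{\frac{n-4}{2}}\alpha\|_{H^{3/2}}^2+\sqrt v\,\|\nabla^m\nabla_4^{\frac{n-4}{2}}\alpha\|_{H^{1/2}}^2\big)$, which is exactly the $\tau\|\Phi_0\|_{H^{1/2}}^2+\tau^2\|\Phi_0\|_{H^{3/2}}^2$ content of that theorem after $v=\tau^2$.

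Assembling these, I would form the combined quantity $\mathcal{T}+\mathcal{L}+\mathcal{R}$ and derive a closed inequality
\[
\mathcal{T}+\mathcal{L}+\mathcal{R}\lesssim\mathcal{D}+\epsilon(\mathcal{T}+\mathcal{L}+\mathcal{R})+\int_0^v(1+|\log v'|^2)^2(v')^{-1/2}(\mathcal{T}+\mathcal{L}+\mathcal{R})\,dv',
\]
absorb the $\epsilon$-term by taking $\epsilon$ small, and conclude by Gr\"onwall's inequality (the weight $(1+|\log v'|^2)^2(v')^{-1/2}$ is integrable on $[0,1]$), obtaining $\mathcal{T}+\mathcal{L}+\mathcal{R}\lesssim\mathcal{D}\sim\Xi_M^2$. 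The main obstacle is ensuring that the top-order estimate of Theorem~\ref{backward direction main result theorem general} is genuinely applicable with the $Err$ terms as the inhomogeneity: one must verify that $Err_{m,\frac{n-4}{2}}^{\Psi}$ and $F_M^i$ lie in the requisite spaces ($L^1_\tau C^\infty$ and with $\int_0^1\tau\|F_M^i\|_{H^{1/2}}^2\,d\tau<\infty$), and that feeding the a-posteriori bound $\|Err_{M,l}^{\Psi}\|_{H^{1/2}}^2\lesssim(1+|\log v|^2)(\mathcal{T}+\mathcal{L}+\mathcal{R})$ back into the model-system estimate does not create a circular or non-integrable feedback — this is why the precise powers of $v$ (and the $\sqrt{v'}$ gained from the $\tau\,d\tau$ measure) matter, and checking the bookkeeping there is the delicate point; everything else is a transcription of Section~\ref{forward direction full system section} with $\nabla_3\leftrightarrow\nabla_4$ and the roles of $\alpha$ and the data reversed.
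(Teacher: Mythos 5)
Your strategy mirrors the paper's proof: feed $\Phi_0=\nabla_4^{\frac{n-4}{2}}\alpha$, $\Phi_i=\nabla_4^{\frac{n-4}{2}}\Psi^G$, $F^i_m=Err^\Psi_{m,\frac{n-4}{2}}$ into Theorem~\ref{backward direction main result theorem general} to control $\mathcal{T}$, run the transport/energy estimates for $\mathcal{L}$, $\mathcal{M}_l$, $\mathcal{R}$ essentially as in Section~\ref{forward direction full system section} with $v^{-1/2}\mathcal{T}$ standing in for the role of the mildly-singular energy $\mathcal{S}$, bound the $Err^\Psi_{m,l}$ terms using the pointwise norms $\mathcal{P},\mathcal{SP}$ and the high angular commutation, and close with Gronwall on $[0,1]$ using integrability of $(1+|\log v|^2)v^{-1/2}$. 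This is the same decomposition and the same closing argument as Section~\ref{backward direction full system section}.

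Two bookkeeping slips should be corrected before the argument actually closes. First, every integral you wrote as $\int_0^v$ must read $\int_v^1$: the second model system is posed with data at $\tau=1$ (i.e. $v=1$) and errors accumulate between $v$ and $1$, so the inequality you obtain from Theorem~\ref{backward direction main result theorem general} has $\int_v^1$ on its right-hand side and Gronwall then propagates backward from the controlled data at $v=1$; the inequalities you wrote with $\int_0^v$ are simply not what the theorem delivers. Second, the change of variables $v'=(\tau')^2$ sends $\int_\tau^1\tau'\|F\|^2_{H^{1/2}}\,d\tau'$ to $\tfrac12\int_v^1\|Err\|^2_{H^{1/2}}\,dv'$ with no $\sqrt{v'}$ weight at all, since $\tau'\,d\tau'=\tfrac12\,dv'$. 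The $(v')^{-1/2}$ that does appear in the final Gronwall inequality comes not from this change of variables but from the structure of $\mathcal{T}$, which carries only a $\sqrt{v}$ weight on $\|\nabla^m\nabla_4^{\frac{n-4}{2}}\alpha\|^2_{H^{1/2}}$ and hence forces the $\mathcal{L}$ estimate to cost a factor $(v')^{-1/2}\mathcal{T}$ at top order. Once these are fixed the argument goes through exactly as in the paper.
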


The bound for the top order energy $\mathcal{T}$ follows from the refined estimates proved in Section~\ref{model backward direction section} and \cite[Theorem~1.2]{Cwave}. We can then bound the remaining norms using standard estimates. We note that as in Section~\ref{forward direction full system section}, the nonlinear error terms $Err^{\Psi}$ do not create significant difficulties.

As a consequence of Theorem \ref{backward direction main result theorem general}, we obtain the following estimate for the top order energy:
\begin{corollary}
    The top order energy $\mathcal{T}$ satisfies the estimate for $0\leq v\leq 1:$
    \[\mathcal{T}\lesssim\mathcal{D}+\sum_{m=0}^{M}\int_{v}^1\big\|Err_{m,\frac{n-4}{2}}^{\Psi}\big\|_{H^{1/2}}^2dv'.\]
\end{corollary}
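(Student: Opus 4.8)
The plan is to obtain the estimate as a direct consequence of Theorem~\ref{backward direction main result theorem general}, applied after passing to the time coordinate $\tau=\sqrt v$. Recall from Section~\ref{model systems section} that along $\{u=-1\}$ the horizontal tensors
\[\Phi_0=\nabla_4^{\frac{n-4}{2}}\alpha,\qquad \Phi_i=\nabla_4^{\frac{n-4}{2}}\Psi^G,\qquad F_m^0=F_m^i=Err_{m,\frac{n-4}{2}}^{\Psi}\]
satisfy the second model system~(\ref{second model system definition}), with the asymptotic expansions at $\tau=0$ (and the associated tensors $\mathcal{O},h$) following from Section~\ref{asymptotic completeness section}, or equivalently from~\cite{selfsimilarvacuum}. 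I would first record that the hypotheses of Theorem~\ref{backward direction main result theorem general} hold: the inhomogeneous terms $Err_{m,\frac{n-4}{2}}^{\Psi}$ belong to $L^1_\tau([0,1])C^\infty(S^n)$, since near $\tau=0$ they are controlled by powers of $|\log\tau|$ against integrable weights by the estimates of Section~\ref{stability dS section}; and the normalization $\|\psi\|_{H^{M+1}}\big|_{\tau=1}\lesssim1$ holds because $\Xi_M\lesssim\epsilon$ and $\Xi_M^2\sim\mathcal{D}$ controls $\|\psi^*\|_{H^{M+1}}\big|_{(-1,1)}$.

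Granting this, Theorem~\ref{backward direction main result theorem general} bounds, for every $\tau\in(0,1)$, the quantity
\[\sum_{m=0}^M\Big(\tau\big\|\nabla^m\Phi_0\big\|_{H^{1/2}}^2+\tau^2\big\|\nabla^m\Phi_0\big\|_{H^{3/2}}^2\Big)+\tau^2\big\|\nabla_\tau\nabla^M\Phi_0\big\|_{H^{1/2}}^2+\sum_{i=1}^I\sum_{m=0}^M\Big(\big\|\nabla^m\Phi_i\big\|_{H^{3/2}}^2+\big\|\nabla_\tau\nabla^m\Phi_i\big\|_{H^{1/2}}^2\Big)\]
(together with further nonnegative terms that we simply discard) by the data at $\tau=1$ plus $\sum_{i,m}\int_\tau^1\tau'\big\|F_m^i\big\|_{H^{1/2}}^2\,d\tau'$. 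The key step is then the change of variables $\tau=\sqrt v$. From $\partial_\tau=2\tau\partial_v$ and the frame identity~(\ref{formula for lie time derivative}) one has $\nabla_\tau\Phi=2\tau\nabla_4\Phi-\tau\chi\cdot\Phi$ for any horizontal tensor $\Phi$, so the weighted norms transform as $\tau^2\|\nabla_\tau\nabla^M\Phi_0\|_{H^{1/2}}^2=4v^2\|\nabla_4\nabla^M\nabla_4^{\frac{n-4}{2}}\alpha\|_{H^{1/2}}^2+\mathrm{l.o.t.}$, $\tau^2\|\nabla^m\Phi_0\|_{H^{3/2}}^2=v\|\nabla^m\nabla_4^{\frac{n-4}{2}}\alpha\|_{H^{3/2}}^2$, $\tau\|\nabla^m\Phi_0\|_{H^{1/2}}^2=\sqrt v\,\|\nabla^m\nabla_4^{\frac{n-4}{2}}\alpha\|_{H^{1/2}}^2$, $\|\nabla^m\Phi_i\|_{H^{3/2}}^2=\|\nabla^m\nabla_4^{\frac{n-4}{2}}\Psi^G\|_{H^{3/2}}^2$, and $\|\nabla_\tau\nabla^m\Phi_i\|_{H^{1/2}}^2=4v\|\nabla_4\nabla^m\nabla_4^{\frac{n-4}{2}}\Psi^G\|_{H^{1/2}}^2+\mathrm{l.o.t.}$; since the Sobolev norms on both sides are taken with respect to $\slashed{g}_{u=-1,v}$ they coincide, so the left-hand side reconstructs precisely $\mathcal{T}(-1,v)$ up to a universal constant, the correction terms from $\tau\chi\cdot\Phi$ and from the commutators $[\nabla_\tau,\nabla^m]$ being absorbed into the $L^2$-level terms already present on the left of Theorem~\ref{backward direction main result theorem general}.

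It remains to identify the right-hand side. The inhomogeneous term transforms, using $\tau'\,d\tau'=\tfrac12\,dv'$, into $\tfrac12\sum_{m=0}^M\int_v^1\|Err_{m,\frac{n-4}{2}}^{\Psi}\|_{H^{1/2}}^2\,dv'$, which is the second term in the claim. For the data term at $\tau=1$, we have $\nabla^m\Phi_i\big|_{\tau=1}=\nabla^m\nabla_4^{\frac{n-4}{2}}\Psi\big|_{(-1,1)}$, bounded by $\mathcal{D}$ (the top term being $\|\nabla^M\nabla_4^{\frac{n-4}{2}}\Psi\|_{H^{3/2}}^2$ and the lower ones following by interpolation against $\|\nabla_4^{\frac{n-4}{2}}\Psi\|_{H^{M+1}}^2\lesssim\mathcal{D}$), while $\nabla_\tau\nabla^m\Phi_i\big|_{\tau=1}=2\nabla_4\nabla^m\nabla_4^{\frac{n-4}{2}}\Psi\big|_{(-1,1)}+\mathrm{l.o.t.}$ is rewritten via the schematic Bianchi equation~(\ref{schematic equation PsiG}) as $\psi\nabla^{m+1}\nabla_4^{\frac{n-4}{2}}\Psi+Err_{m,\frac{n-4}{2}}^{\Psi}$ at $(-1,1)$, whose $H^{1/2}$ norm is controlled by $\|\nabla^{M+1}\nabla_4^{\frac{n-4}{2}}\Psi\|_{H^{1/2}}^2\sim\|\nabla^M\nabla_4^{\frac{n-4}{2}}\Psi\|_{H^{3/2}}^2\lesssim\mathcal{D}$ and by the smallness $\Xi_M\lesssim\epsilon$ used to absorb the error terms. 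Combining, $\mathcal{T}\lesssim\mathcal{D}+\sum_{m=0}^M\int_v^1\|Err_{m,\frac{n-4}{2}}^{\Psi}\|_{H^{1/2}}^2\,dv'$. The one delicate point, and the place where care is needed, is not any single estimate but the bookkeeping of the $\tau\leftrightarrow v$ dictionary: one must verify that every frame-correction and commutator term produced by $\nabla_\tau=2\tau\nabla_4-\tau\chi$ during the translation is genuinely dominated by the left-hand side of Theorem~\ref{backward direction main result theorem general} after the change of variables, so that the reduction loses nothing.
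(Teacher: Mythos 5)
Your proposal is correct and follows essentially the same route as the paper: you identify $\Phi_0=\nabla_4^{\frac{n-4}{2}}\alpha$, $\Phi_i=\nabla_4^{\frac{n-4}{2}}\Psi^G$, $F_m^i=Err_{m,\frac{n-4}{2}}^{\Psi}$ as solving the second model system, apply Theorem~\ref{backward direction main result theorem general}, translate the resulting estimate through $\tau=\sqrt v$ into the $\nabla_4$-language of $\mathcal{T}$, and absorb the data terms at $\tau=1$ into $\mathcal{D}$. The paper's proof simply records the translated inequality and concludes, while you spell out the $\tau\leftrightarrow v$ dictionary (and the frame-correction and commutator absorption) explicitly, but the substance is the same.
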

\begin{proof}
    We recall that according to Section~\ref{model systemmm} and Remark~\ref{remark about model systems in section}: \[\Phi_0=\nabla_4^{\frac{n-4}{2}}\alpha,\ \Phi_i=\nabla_4^{\frac{n-4}{2}}\Psi^G,\ F_{m}^{0}=Err_{m,\frac{n-4}{2}}^{\Psi},\ F_{m}^{i}=Err_{m,\frac{n-4}{2}}^{\Psi}\]
    satisfy the second model system as defined in (\ref{second model system definition}) and also \cite[Definition~1.1]{Cwave}. The bounds on the background $\big(\mathcal{M},g\big)$ required in \cite[Theorem~1.2]{Cwave} follow by Theorem~\ref{stability of de sitter theorem in section}. Moreover, since $\mathcal{D}\lesssim\epsilon^2,$ we have in particular that $\|\psi\|_{H^{M+1}}(-1,1)\lesssim1,$ so the implicit constant in Theorem~\ref{backward direction main result theorem general} depends only on $M.$ Thus, we apply Theorem~\ref{backward direction main result theorem general} to obtain the following estimate:
    \[\sum_{m=0}^M\sqrt{v}\big\|\nabla^m\nabla_4^{\frac{n-4}{2}}\alpha\big\|^2_{H^{1/2}}+\sum_{m=0}^Mv\big\|\nabla^m\nabla_4^{\frac{n-4}{2}}\alpha\big\|^2_{H^{3/2}}+v^2\big\|\nabla_{4}\nabla^M\nabla_4^{\frac{n-4}{2}}\alpha\big\|^2_{H^{1/2}}+\int_v^1\big\|\nabla_4^{\frac{n-4}{2}}\alpha\big\|^2_{H^{M+1}}dv'\]\[+\sum_{m=0}^M\big\|\nabla^m\nabla_4^{\frac{n-4}{2}}\Psi^G\big\|^2_{H^{3/2}}+\sum_{m=0}^Mv\big\|\nabla_4\nabla^m\nabla_4^{\frac{n-4}{2}}\Psi^G\big\|^2_{H^{1/2}}+\sum_{m=0}^M\int_v^1\big\|\nabla_4\nabla^m\nabla_4^{\frac{n-4}{2}}\Psi^G\big\|^2_{H^{1/2}}dv'\lesssim\]\[\lesssim \sum_{m=0}^M\bigg(\big\|\nabla^m\nabla_4^{\frac{n-4}{2}}\Psi\big\|_{H^{3/2}}^2+\big\|\nabla^m\nabla_4^{\frac{n-2}{2}}\Psi\big\|_{H^{1/2}}^2\bigg)\bigg|_{v=1}+\sum_{m=0}^{M}\int_{v}^1\big\|Err_{m,\frac{n-4}{2}}^{\Psi}\big\|_{H^{1/2}}^2dv'.\]
    We bound the initial data term using $\mathcal{D}$ to obtain the conclusion.
\end{proof}

We prove the following result for the lower order energy:
\begin{proposition}
    The lower order energy $\mathcal{L}$ satisfies the estimate for $0\leq v\leq 1:$
    \[\mathcal{L}\lesssim\mathcal{D}+\epsilon\mathcal{R}+\int_v^1(v')^{-1/2}\mathcal{T}dv'+\sum_{l=0}^{\frac{n-6}{2}}\sum_{m=0}^{M+\frac{n-4}{2}-l}\int_{v}^1\Big\|Err^{\Psi}_{m,l}\Big\|_{L^2}^2dv'.\]
\end{proposition}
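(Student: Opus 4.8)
The plan is to run the transport--elliptic energy scheme used for the forward estimate of Section~\ref{forward direction full system section}, but integrating \emph{backward} from $v=1$ and exploiting that the lower-order bulk term now appears with a favorable sign. For each $0\leq l\leq\frac{n-6}{2}$ and $0\leq m\leq M+\frac{n-4}{2}-l$ every curvature component $\Psi$ satisfies on $\{u=-1\}$, by \eqref{model wave system 2}, an equation of the form
\[v\nabla_4^2\nabla^m\nabla_4^l\Psi+\Big(c_\Psi+l-\tfrac{n}{2}\Big)\nabla_4\nabla^m\nabla_4^l\Psi-4\Delta\nabla^m\nabla_4^l\Psi=\sum_{\mu}\psi\nabla^{m+1}\nabla_4^l\Psi_\mu+Err_{ml}^{\Psi},\]
where $c_\Psi\leq 3$, so that $c_\Psi+l-\tfrac n2-\tfrac12<0$ throughout the admissible range. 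First I would contract each equation with $\nabla_4\nabla^m\nabla_4^l\Psi$, integrate over $S^n$, integrate by parts in the angular variables (the $[\nabla,\nabla_4]$ commutator being absorbed using the Ricci bounds from Section~\ref{stability dS section}), and integrate in $v$ from $v$ to $1$. The boundary term at $v=1$ is controlled by $\mathcal{D}$, the boundary term at $v$ produces $\sum_{l,m}\big(v\|\nabla_4\nabla^m\nabla_4^l\Psi\|_{L^2}^2+\|\nabla^{m+1}\nabla_4^l\Psi\|_{L^2}^2\big)$, and the zeroth-order bulk term, carrying the coefficient $c_\Psi+l-\tfrac n2-\tfrac12<0$, now appears with the \emph{good} sign and can simply be dropped --- this is precisely the point where the backward estimate is cleaner than its forward counterpart, which needed a Gronwall argument for that term.

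The energy above controls $v\|\nabla_4\nabla^m\nabla_4^l\Psi\|_{L^2}^2$ and $\|\nabla^{m+1}\nabla_4^l\Psi\|_{L^2}^2$, but not the undifferentiated norms $\|\nabla_4^l\Psi\|_{L^2}^2$ needed to assemble the full Sobolev norms in $\mathcal{L}$. I would recover these by integrating the identity $\nabla_4(\nabla_4^l\Psi)=\nabla_4^{l+1}\Psi$ backward from $v=1$, obtaining $\|\nabla_4^l\Psi\|_{L^2}^2(v)\lesssim\mathcal{D}+\int_v^1\|\nabla_4^{l+1}\Psi\|_{L^2}^2\,dv'\lesssim\cdots\lesssim\mathcal{D}+\int_v^1\|\nabla_4^{\frac{n-4}{2}}\Psi\|_{L^2}^2\,dv'$. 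For $\Psi^G$ the endpoint integrand is $\leq\mathcal{T}$, whereas for $\Psi=\alpha$ only the weighted quantity $\sqrt{v}\,\|\nabla_4^{\frac{n-4}{2}}\alpha\|_{H^{1/2}}^2\leq\mathcal{T}$ is available, so $\|\nabla_4^{\frac{n-4}{2}}\alpha\|_{L^2}^2\lesssim (v')^{-1/2}\mathcal{T}$; this explains why the bulk term on the right-hand side of the proposition must be $\int_v^1 (v')^{-1/2}\mathcal{T}\,dv'$ rather than $\int_v^1\mathcal{T}\,dv'$.

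It then remains to handle the nonlinear source and the commutators. Using the commutation formulas of Section~\ref{set up section} I would rewrite the terms $[\nabla,\nabla_4]\nabla^m\nabla_4^l\Psi$ and trade $\nabla^{m+1}\nabla_4^l\Psi$ against $\nabla_4\nabla^m\nabla_4^{l+1}\Psi$ modulo contributions dominated by $\mathcal{L}$ and $\mathcal{R}$; the leftover $\int_v^1\mathcal{L}\,dv'$ pieces close by Gronwall's inequality. For the quadratic term one expands $\psi\nabla_4^l\Psi$ in $H^{m+1}$ and distributes the derivatives: the pieces with many derivatives on $\psi^*$ are bounded by $(\mathcal{P}+\mathcal{SP})\cdot\mathcal{R}\lesssim\epsilon(1+|\log v|^2)\mathcal{R}$, and those with many derivatives on $\nabla_4^l\Psi$ by $(\mathcal{P}+\mathcal{SP})\cdot\mathcal{L}\lesssim\epsilon(1+|\log v|^2)\mathcal{L}$, both absorbable for $\epsilon$ small since $(1+|\log v'|^2)(v')^{-1/2}$ is integrable on $(0,1]$; the remaining factors of the Minkowski part of $\psi$ give bounded coefficients and contribute another $\int_v^1\mathcal{L}\,dv'$. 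A final Gronwall step in $v$ then yields $\mathcal{L}\lesssim\mathcal{D}+\epsilon\mathcal{R}+\int_v^1(v')^{-1/2}\mathcal{T}\,dv'+\sum_{l,m}\int_v^1\|Err_{ml}^{\Psi}\|_{L^2}^2\,dv'$. The hard part will not be any single estimate but the bookkeeping of derivative counts: one must verify that neither the commutator $[\nabla_4,\nabla^{m+1}]$ nor the source $\psi\nabla^{m+1}\nabla_4^l\Psi$ ever forces more than $M+\frac{n-4}{2}-l$ angular derivatives onto a curvature component, so that every term produced is a data term, the allowed $\int_v^1(v')^{-1/2}\mathcal{T}$ bulk term, the error integral, an $\epsilon$-absorbable term, or an $\int_v^1\mathcal{L}$-term controlled by Gronwall --- which is exactly what the chosen ranges of $M$ and the auxiliary shift $\frac{n-4}{2}$ in the energies are designed to guarantee.
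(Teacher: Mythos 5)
Your strategy is essentially the paper's: the same backward energy estimate for the commuted wave system contracted with $\nabla_4\nabla^m\nabla_4^l\Psi$, the same observation that the bulk coefficient is $\leq-\tfrac12$ once $l\leq\tfrac{n-6}{2}$ so it has the good sign when integrating from $v$ to $1$, the same recovery of $\|\nabla_4^l\Psi\|_{L^2}^2$ by iterating the $\nabla_4$ transport identity down to $\nabla_4^{\frac{n-4}{2}}\Psi$, and you correctly trace the $\int_v^1(v')^{-1/2}\mathcal{T}\,dv'$ contribution to the missing unweighted control on $\nabla_4^{\frac{n-4}{2}}\alpha$ (only $\sqrt{v}\|\nabla_4^{\frac{n-4}{2}}\alpha\|^2_{H^{1/2}}\leq\mathcal{T}$ is available).

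The one step where you should be more careful is the treatment of the Minkowski piece $\tfrac{m}{n}\operatorname{tr}\chi\,\nabla^m\nabla_4^l\Psi$ of the commutator $[\nabla^m,\nabla_4]\nabla_4^l\Psi$ that appears when converting the directly-estimated quantity $v\|\nabla_4\nabla^m\nabla_4^l\Psi\|_{L^2}^2$ into the $v\|\nabla_4^{l+1}\Psi\|_{H^m}^2$ appearing in $\mathcal{L}$. This piece contributes an $O(1)\|\nabla_4^l\Psi\|_{H^m}^2$ term \emph{pointwise at time $v$} — it is neither a bulk $\int_v^1\mathcal{L}\,dv'$ term nor an $\epsilon$-small quantity — so it cannot be absorbed by a final Gronwall applied directly to $\mathcal{L}$. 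The fix, and the order of operations the paper actually follows, is to first close the Gronwall estimate on the \emph{intermediate} quantity $\sum_{l,m}\big(v\|\nabla_4\nabla^m\nabla_4^l\Psi\|_{L^2}^2+\|\nabla_4^l\Psi\|_{H^{m+1}}^2\big)$, whose right-hand side is $\mathcal{D}+\int_v^1(v')^{-1/2}\mathcal{T}\,dv'+\sum\int_v^1\|Err_{m,l}^{\Psi}\|_{L^2}^2\,dv'$ with no $\mathcal{L}$ or $\mathcal{R}$ appearing (the only self-referencing term, $\int_v^1\|\nabla^{m+1}\nabla_4^l\Psi\|^2$, is inside the integral and is Gronwalled at this stage), and only afterwards plug that already-closed bound into the pointwise Minkowski commutator piece together with the $\epsilon\mathcal{R}$ and $\epsilon\mathcal{L}$ contributions from $\psi^*$.
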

\begin{proof}
As in Section~\ref{forward direction full system section}, we contract (\ref{model wave system 1}) with $2\nabla_4\nabla^m\nabla_4^l\Psi$ and sum for all $0\leq l\leq\frac{n-6}{2}$, $0\leq m\leq M+\frac{n-4}{2}-l$, and every curvature component $\Psi$. We obtain a good bulk term since $n-2l-5\geq1:$
\[\sum_{l=0}^{\frac{n-6}{2}}\sum_{m=0}^{M+\frac{n-4}{2}-l}v\big\|\nabla_4\nabla^m\nabla_4^{l}\Psi\big\|^2_{L^2}+\big\|\nabla\nabla_4^l\Psi\big\|^2_{H^{m}}+\int_v^1\big\|\nabla_4\nabla^m\nabla_4^l\Psi\big\|^2_{L^2}dv'\]\[\lesssim\mathcal{D}+\sum_{l=0}^{\frac{n-6}{2}}\sum_{m=0}^{M+\frac{n-4}{2}-l}\int_v^1\big\|[\nabla,\nabla_4]\nabla^m\nabla_4^l\Psi\big\|^2_{L^2}+\big\|\nabla^{m+1}\nabla_4^l\Psi\big\|^2_{L^2}dv'+\sum_{l=0}^{\frac{n-6}{2}}\sum_{m=0}^{M+\frac{n-4}{2}-l}\int_v^1\big\|Err_{m,l}^{\Psi}\big\|^2_{L^2}dv'\]
We also have the estimate:
\[\big\|\nabla_4^l\Psi\big\|^2_{L^2}\lesssim\mathcal{D}+\int_v^1\big\|\nabla_4^{l+1}\Psi\big\|^2_{L^2}\lesssim\ldots\lesssim\mathcal{D}+\int_v^1\big\|\nabla_4^{\frac{n-4}{2}}\Psi\big\|^2_{L^2}\lesssim\mathcal{D}+\int_v^1(v')^{-1/2}\mathcal{T}dv'\]
We use the commutation formulas and Gronwall to obtain:
\[\sum_{l=0}^{\frac{n-6}{2}}\sum_{m=0}^{M+\frac{n-4}{2}-l}v\big\|\nabla_4\nabla^m\nabla_4^{l}\Psi\big\|^2_{L^2}+\big\|\nabla_4^l\Psi\big\|^2_{H^{m+1}}\lesssim\mathcal{D}+\int_v^1(v')^{-1/2}\mathcal{T}dv'+\sum_{l=0}^{\frac{n-6}{2}}\sum_{m=0}^{M+\frac{n-4}{2}-l}\int_v^1\big\|Err_{m,l}^{\Psi}\big\|^2_{L^2}\]
Next, we use the commutation formulas to obtain:
\[\mathcal{L}\lesssim\sum_{l=0}^{\frac{n-6}{2}}\sum_{m=0}^{M+\frac{n-4}{2}-l}v\big\|\psi\nabla_4^{l}\Psi\big\|^2_{H^m}+\mathcal{D}+\int_v^1(v')^{-1/2}\mathcal{T}dv'+\sum_{l=0}^{\frac{n-6}{2}}\sum_{m=0}^{M+\frac{n-4}{2}-l}\int_v^1\big\|Err_{m,l}^{\Psi}\big\|^2_{L^2}dv'\]
Similarly to Section~\ref{forward direction full system section}, we bound the first term in this relation by using the lower order pointwise bound and the previous inequality for $\big\|\nabla_4^l\Psi\big\|^2_{H^{m+1}}$ in order to get:
\[\mathcal{L}\lesssim\epsilon\mathcal{R}+\mathcal{D}+\int_v^1(v')^{-1/2}\mathcal{T}dv'+\sum_{l=0}^{\frac{n-6}{2}}\sum_{m=0}^{M+\frac{n-4}{2}-l}\int_v^1\big\|Err_{m,l}^{\Psi}\big\|^2_{L^2}dv'.\]
\end{proof}

The bounds for the remaining terms are similar to Section~\ref{forward direction full system section}. We notice that the norms defined above control the same terms as the norms of Section~\ref{forward direction full system section}, the only difference being the bound:
\[\sum_{m=0}^M\big\|\nabla^m\nabla_4^{\frac{n-4}{2}}\alpha\big\|^2_{H^{1/2}}\lesssim v^{-1/2}\mathcal{T}.\]
Due to the structure of the error terms it suffices to control only $M+1/2$ angular derivatives of $\nabla_4^{\frac{n-4}{2}}\alpha$ in the above bound, since the terms with more angular derivatives also have better $v$ weights. Thus, this bound replaces the use of the mildly singular top order energy in Section~\ref{forward direction full system section}. We briefly explain the proofs for the remaining terms:
\begin{proposition}
    The fractional lower order energy $\mathcal{M}_l$ satisfies the estimate for any $0\leq l\leq\frac{n-6}{2}$ and $0\leq v\leq 1:$
    \[\mathcal{M}_l\lesssim v^{-1/2}\mathcal{T}+\mathcal{L}+\mathcal{R}+\big\|Err_{M,l}^{\Psi}\big\|^2_{H^{1/2}}.\]
\end{proposition}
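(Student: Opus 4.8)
```latex
\textbf{Proof plan.} The statement to prove is the fractional lower order energy estimate
\[\mathcal{M}_l\lesssim v^{-1/2}\mathcal{T}+\mathcal{L}+\mathcal{R}+\big\|Err_{M,l}^{\Psi}\big\|^2_{H^{1/2}}\]
for $0\leq l\leq\frac{n-6}{2}$. The plan is to follow the same elliptic-plus-wave-equation argument used for the analogous statement in Section~\ref{forward direction full system section}, adapted to the modified norms of the backward direction. First I would start from the commuted wave equation \eqref{model wave system 2} for $\nabla^M\nabla_4^l\Psi$ and use it to trade the top two angular derivatives in $\big\|\nabla^M\nabla_4^l\Psi\big\|_{H^{5/2}}$ for a time-derivative expression: writing $\Delta\nabla^M\nabla_4^l\Psi$ in terms of $v\nabla_4^2\nabla^M\nabla_4^l\Psi$, $\nabla_4\nabla^M\nabla_4^l\Psi$, the first-order angular term $\psi\nabla^{M+1}\nabla_4^l\Psi$, and $Err_{M,l}^\Psi$, one gets
\[\big\|\nabla^M\nabla_4^l\Psi\big\|^2_{H^{5/2}}\lesssim\big\|\nabla_4^l\Psi\big\|^2_{H^{M+2}}+\big\|v\nabla_4^2\nabla^M\nabla_4^l\Psi\big\|^2_{H^{1/2}}+\big\|\nabla_4\nabla^M\nabla_4^l\Psi\big\|^2_{H^{1/2}}+\big\|\nabla^{M+1}\nabla_4^l\Psi\big\|^2_{H^{1/2}}+\big\|Err_{M,l}^{\Psi}\big\|^2_{H^{1/2}},\]
and the first and third-through-fourth terms are already controlled by $\mathcal{L}+\mathcal{R}$ (for $\Psi^G$) or by $v^{-1/2}\mathcal{T}$ (for $\alpha$, using the bound $\sum_m\|\nabla^m\nabla_4^{\frac{n-4}{2}}\alpha\|^2_{H^{1/2}}\lesssim v^{-1/2}\mathcal{T}$ noted above, after reducing $\nabla_4^l$ to $\nabla_4^{\frac{n-4}{2}}$ via integration in $v$, which only costs $\mathcal{L}$ terms).

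The remaining term, $\big\|v\nabla_4^2\nabla^M\nabla_4^l\Psi\big\|_{H^{1/2}}$, is handled by commuting $\nabla_4$ past $\nabla^M$ and past the extra $\nabla_4^l$, reducing it to $\|v\nabla_4\nabla^M\nabla_4^{l+1}\Psi\|_{H^{1/2}}$ plus lower-order pieces absorbed into $\mathcal{L}+\mathcal{R}$ using the commutation formulas and the null structure equations. For $l=\frac{n-6}{2}$ this is exactly $\|v\nabla_4\nabla^M\nabla_4^{\frac{n-4}{2}}\Psi\|_{H^{1/2}}$, which for $\Psi^G$ is controlled by $\mathcal{T}$ directly (the $v\|\nabla_4\nabla^m\nabla_4^{\frac{n-4}{2}}\Psi^G\|^2_{H^{1/2}}$ summand) and for $\alpha$ by $\mathcal{T}$ (the $v^2\|\nabla_4\nabla^M\nabla_4^{\frac{n-4}{2}}\alpha\|^2_{H^{1/2}}$ summand, gaining a factor of $v$). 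For $l\leq\frac{n-8}{2}$ one uses \eqref{model wave system 2} once more to replace $v\nabla_4\nabla^M\nabla_4^{l+1}\Psi$ by $v\nabla^M\nabla_4^{l+2}\Psi$ modulo $\psi$-products, which is again bounded by $v^{-1/2}\mathcal{T}+\mathcal{L}+\mathcal{R}$. The error-term contributions $\|v\nabla_4\nabla^M(\psi\nabla_4^l\Psi)\|_{H^{1/2}}$ and $\|v\nabla^M(\psi\nabla_4^l\Psi)\|_{H^{1/2}}$ produced along the way are treated exactly as in the forward-direction proof: commute $\nabla_4$ inside, use the null structure equations $\nabla_4\psi=\Psi+\psi\psi$, and bound the resulting quadratic terms by $\mathcal{T}+\mathcal{L}+\mathcal{R}$ using the pointwise norms $\mathcal{P},\mathcal{SP}$ (which are $\lesssim\epsilon(1+|\log v|^2)$).

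The only genuine subtlety compared to Section~\ref{forward direction full system section} is that the backward-direction top order energy $\mathcal{T}$ does not contain a separate mildly singular piece $\mathcal{S}$; instead the singular behavior of $\alpha$ is absorbed into $\mathcal{T}$ with extra negative powers of $v$, so I must be careful to track that the $\alpha$-contributions carry the stated $v^{-1/2}$ prefactor and no worse. Concretely, the bound $\sum_{m\leq M}\|\nabla^m\nabla_4^{\frac{n-4}{2}}\alpha\|^2_{H^{1/2}}\lesssim v^{-1/2}\mathcal{T}$ together with $v\|\nabla^m\nabla_4^{\frac{n-4}{2}}\alpha\|^2_{H^{3/2}}\lesssim\mathcal{T}$ and $v^2\|\nabla_4\nabla^M\nabla_4^{\frac{n-4}{2}}\alpha\|^2_{H^{1/2}}\lesssim\mathcal{T}$ supply exactly the three levels of control needed for the three terms in the elliptic estimate, and the extra $v$-weights in $\mathcal{T}$ are precisely compensated by the $v$ and $v^2$ factors appearing in $\|v\nabla_4^2\cdot\|$ and $\|v\nabla_4\cdot\|$. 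I expect the main obstacle to be bookkeeping: ensuring that every reduction step (commuting $\nabla_4$ through $\nabla^M$, lowering $\nabla_4^l$ to $\nabla_4^{l+1}$ or $\nabla_4^{\frac{n-4}{2}}$) produces only terms already present in $v^{-1/2}\mathcal{T}+\mathcal{L}+\mathcal{R}$ and not, e.g., an uncontrolled $H^{1/2}$ norm of $M+2$ angular derivatives at interior $\nabla_4$-level, which is avoided because such terms always come with the favorable $v$-weight from the wave equation. The conclusion then follows by collecting the estimates.
```
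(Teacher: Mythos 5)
Your proposal follows essentially the same route as the paper: trade the top two angular derivatives via the commuted wave equation \eqref{model wave system 2}, reduce $v\nabla_4^2\nabla^M\nabla_4^l\Psi$ to $v\nabla_4\nabla^M\nabla_4^{l+1}\Psi$ plus lower order, split into the cases $l=\frac{n-6}{2}$ and $l\leq\frac{n-8}{2}$, and handle $\big\|v\nabla_4\nabla^M(\psi\nabla_4^l\Psi)\big\|_{H^{1/2}}$ via the null structure equations, with all $\alpha$-contributions absorbed by the observation $\sum_{m\leq M}\|\nabla^m\nabla_4^{\frac{n-4}{2}}\alpha\|^2_{H^{1/2}}\lesssim v^{-1/2}\mathcal{T}$. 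You correctly pinpoint the only real departure from the forward direction (the absence of a separate $\mathcal{S}$, replaced by the $v^{-1/2}\mathcal{T}$ weight), which is exactly how the paper frames its proof.
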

\begin{proof}
    As in Section~\ref{forward direction full system section}, we have the estimate for any $0\leq l\leq\frac{n-6}{2}$:
    \[\big\|\nabla^M\nabla_4^l\Psi\big\|^2_{H^{5/2}}\lesssim\mathcal{L}+\big\|v\nabla_4\nabla^M\nabla_4^{l+1}\Psi\big\|^2_{H^{1/2}}+\big\|v\nabla_4\nabla^M\big(\psi\nabla_4^l\Psi\big)\big\|^2_{H^{1/2}}+\]\[+\big\|\nabla^M\nabla_4^{l+1}\Psi\big\|^2_{H^{1/2}}+\big\|\nabla^M\big(\psi\nabla_4^l\Psi\big)\big\|^2_{H^{1/2}}+\big\|Err_{M,l}^{\Psi}\big\|^2_{H^{1/2}}\]
    \[\lesssim\mathcal{L}+v^{-1/2}\mathcal{T}+\mathcal{R}+\big\|v\nabla_4\nabla^M\nabla_4^{l+1}\Psi\big\|^2_{H^{1/2}}+\big\|v\nabla_4\nabla^M\big(\psi\nabla_4^l\Psi\big)\big\|^2_{H^{1/2}}+\big\|Err_{M,l}^{\Psi}\big\|^2_{H^{1/2}}\]
    Considering separately the cases $l=\frac{n-6}{2}$ and $0\leq l\leq\frac{n-8}{2}$, we also have as before:
    \[\big\|v\nabla_4\nabla^M\nabla_4^{l+1}\Psi\big\|^2_{H^{1/2}}\lesssim v^{-1/2}\mathcal{T}+\mathcal{L}+\mathcal{R}.\]
    Finally, we have the bound as in Section~\ref{forward direction full system section}:
    \[\big\|v\nabla_4\nabla^M\big(\psi\nabla_4^l\Psi\big)\big\|^2_{H^{1/2}}\lesssim\big\|v\nabla^M\big(\psi\nabla_4^{l+1}\Psi\big)\big\|^2_{H^{1/2}}+\big\|v\nabla^M\big((\Psi+\psi\psi)\nabla_4^l\Psi\big)\big\|^2_{H^{1/2}}+\mathcal{L}+\mathcal{R}\lesssim v^{-1/2}\mathcal{T}+\mathcal{L}+\mathcal{R}.\]
\end{proof}

Next, we follow the same steps as in Section~\ref{forward direction full system section} to prove the result for the Ricci coefficients:
\begin{proposition}
    The Ricci coefficients norm $\mathcal{R}$ satisfies the estimate for any $0\leq v\leq 1:$
    \[\mathcal{R}\lesssim\mathcal{D}+\int_v^1\big((v')^{-1/2}\mathcal{T}+\mathcal{L}+\mathcal{R}+(v')^{1/2}\mathcal{M}_{\frac{n-6}{2}}\big)dv'.\]
\end{proposition}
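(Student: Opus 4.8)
The plan is to transcribe the proof of the corresponding Ricci estimate in Section~\ref{forward direction full system section}, now integrating the transport equations in the $e_4$ direction from $v$ towards $1$, with data on $S_{(-1,1)}$ controlled by $\mathcal{D}$. Concretely, for every $0\leq l\leq\frac{n-4}{2}$ and $0\leq m\leq M+\frac{n-4}{2}-l$ the null structure and metric equations give schematically the commuted equation
\[\nabla_4\nabla^{m+1}\nabla_4^l\psi^*=\nabla^{m+1}\nabla_4^l\Psi+\nabla^{m+1}\nabla_4^l\big(\psi\psi^*\big)+[\nabla_4,\nabla^{m+1}]\nabla_4^l\psi^*.\]
Contracting with $\nabla^{m+1}\nabla_4^l\psi^*$, integrating over $[v,1]$, and summing, the data term is bounded by $\mathcal{D}$; the curvature term by $\mathcal{L}$, except for the $\nabla_4^{\frac{n-4}{2}}$ components which involve $\alpha$ and are controlled, using the $v$-weighted Sobolev bound $\sum_{m\leq M}\big\|\nabla^m\nabla_4^{\frac{n-4}{2}}\alpha\big\|_{H^{1/2}}^2\lesssim v^{-1/2}\mathcal{T}$ together with the other pieces of $\mathcal{T}$, by $(v')^{-1/2}\mathcal{T}$; and the nonlinear term $\nabla^{m+1}\nabla_4^l(\psi\psi^*)$ by $\epsilon\mathcal{R}+\int_v^1(\ldots)$ using the commutation formulas and the pointwise norm $\mathcal{P}$, exactly as in Section~\ref{forward direction full system section}. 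An application of Gronwall in $v$ then yields $\sum_{l,m}\big\|\nabla_4^l\psi^*\big\|_{H^{m+1}}^2\lesssim\mathcal{D}+\int_v^1\big((v')^{-1/2}\mathcal{T}+\mathcal{L}+\mathcal{R}\big)dv'$.

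Next I would upgrade the two genuinely fractional pieces of $\mathcal{R}$, namely $\big\|\nabla^{M+1}\nabla_4^{\frac{n-4}{2}}\psi^*\big\|_{H^{1/2}}^2$ and $\big\|\nabla^{M+2}\nabla_4^{\frac{n-6}{2}}\psi^*\big\|_{H^{1/2}}^2$. Here I would commute the transport equation above with the geometric Littlewood--Paley projections $P_k$ of Section~\ref{LP Section}, take the $L^2$ energy estimate for each $k$, multiply the resulting squared $L^2$ energy by $2^k$ (which amounts to taking half of a derivative), and sum in $k$ using the $H^a$ characterization of Corollary 7.12 in \cite{geometricLP}. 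The commutator terms $[\nabla_4,P_k]$, $[P_k,\nabla]$, $[P_k,\psi]$ produced this way are handled by Lemmas~\ref{LP Projections lemma}, \ref{LP Projections lemma refined}, and \ref{fractional LP bounds}, and the associated bulk errors are summable in $k$. After Gronwall in $v$ on $[v,1]$ the estimate reduces to controlling $\int_v^1(v')^{1/2}\big\|\nabla_4\nabla^{M+1}\nabla_4^{\frac{n-4}{2}}\psi^*\big\|_{H^{1/2}}^2\,dv'$ and $\int_v^1(v')^{1/2}\big\|\nabla_4\nabla^{M+2}\nabla_4^{\frac{n-6}{2}}\psi^*\big\|_{H^{1/2}}^2\,dv'$; rewriting $\nabla_4\psi^*=\Psi+\psi\psi^*$ via the metric and null structure equations and commuting, the first integrand becomes a combination of $\sum_{m\leq M+1}\big\|v'\nabla^m\nabla_4^{\frac{n-4}{2}}\Psi\big\|_{H^{1/2}}^2$ and $\big\|v'\nabla^{M+1}\nabla_4^{\frac{n-4}{2}}(\psi\psi^*)\big\|_{H^{1/2}}^2$, bounded by $(v')^{-1/2}\mathcal{T}+\mathcal{L}+\mathcal{R}$ (the $\alpha$ piece again carrying the $(v')^{-1/2}$ weight), and the second integrand reduces similarly to $(v')^{1/2}\mathcal{M}_{\frac{n-6}{2}}+(v')^{-1/2}\mathcal{T}+\mathcal{L}+\mathcal{R}$. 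Collecting the contributions gives the claimed inequality.

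The main obstacle I expect is bookkeeping the $v$-weights so that the mildly singular behaviour of $\alpha$ at $v=0$ costs at most a factor $(v')^{-1/2}$: in particular the $H^{1/2}$ (rather than $L^2$) estimate for $\nabla_4^{\frac{n-4}{2}}\alpha$ is essential, and is precisely why $\mathcal{T}$ is defined with the $v\|\cdot\|_{H^{3/2}}^2$ and $\sqrt{v}\|\cdot\|_{H^{1/2}}^2$ terms and $\mathcal{S}$ is absorbed into $v^{-1/2}\mathcal{T}$. The second delicate point, inherited from Section~\ref{model backward direction section}, is that the geometric LP projections are time-dependent and only almost orthogonal, so the commutators $[\nabla_4,P_k]$ introduce different projection operators; one must therefore either sum in $k$ before closing (as in Sections~\ref{forward direction full system section} and \ref{model backward direction section}) or use the refined bounds \eqref{LP refined 2}, \eqref{LP refined 4}, \eqref{fractional LP bounds 1}. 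Since the relevant quantities are one derivative below the very top order and the error terms are essentially linear (having been commuted with many angular derivatives), this is routine but must be carried out with care; the rest of the argument is a direct transcription of the forward-direction proof with the integration interval reversed.
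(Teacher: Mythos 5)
Your proposal follows essentially the same route as the paper: contract the commuted transport equation $\nabla_4\nabla^{m+1}\nabla_4^l\psi^*=\nabla^{m+1}\nabla_4^l\Psi+\nabla^{m+1}\nabla_4^l(\psi\psi^*)+[\nabla_4,\nabla^{m+1}]\nabla_4^l\psi^*$ with $\nabla^{m+1}\nabla_4^l\psi^*$, integrate over $[v,1]$ with data bounded by $\mathcal{D}$, bound the curvature term by $\mathcal{L}$ (with the $\nabla_4^{\frac{n-4}{2}}$ components absorbed via $v^{-1/2}\mathcal{T}$) and the quadratic term by $\mathcal{R}$, apply Gronwall, then upgrade the two fractional pieces by commuting with the geometric LP projections, reducing $\nabla_4\psi^*$ via the null structure equations to $\Psi+\psi\psi^*$, and bounding the resulting integrands by $(v')^{-1/2}\mathcal{T}+\mathcal{L}+\mathcal{R}$ and by $(v')^{1/2}\mathcal{M}_{\frac{n-6}{2}}+(v')^{-1/2}\mathcal{T}+\mathcal{L}+\mathcal{R}$ respectively. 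This is exactly the paper's argument (which also mirrors the forward-direction proof with $\mathcal{S}$ replaced by $v^{-1/2}\mathcal{T}$), so no gap.
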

\begin{proof}
    Using the commuted equations satisfied by the Ricci coefficients $\psi^*$, we obtain the estimate:
    \[\sum_{l=0}^{\frac{n-4}{2}}\sum_{m=0}^{M+\frac{n-4}{2}-l}\big\|\nabla_4^{l}\psi^*\big\|^2_{H^{m+1}}\lesssim\mathcal{D}+\sum_{l=0}^{\frac{n-4}{2}}\sum_{m=0}^{M+\frac{n-4}{2}-l}\int_v^1(v')^{1/2}\big\|\nabla_4^l\big(\psi\psi^*\big)\big\|^2_{H^{m+1}}+(v')^{1/2}\big\|\psi\nabla_4^l\big(\psi\psi^*\big)\big\|^2_{H^{m+1}}dv'+\]\[+\sum_{l=0}^{\frac{n-4}{2}}\sum_{m=0}^{M+\frac{n-4}{2}-l}\int_v^1(v')^{1/2}\big\|\nabla_4^l\Psi\big\|^2_{H^{m+1}}dv'\lesssim\mathcal{D}+\int_v^1\big((v')^{-1/2}\mathcal{T}+\mathcal{L}+\mathcal{R}\big)dv'\]
    Using the LP projections and Gronwall as before, we also obtain the following fractional estimate:
    \[\big\|\nabla^{M+1}\nabla_4^{\frac{n-4}{2}}\psi^*\big\|^2_{H^{1/2}}\lesssim\mathcal{D}+\int_v^1\big((v')^{-1/2}\mathcal{T}+\mathcal{L}+\mathcal{R}\big)dv'+\int_0^v(v')^{1/2}\big\|\nabla_4\nabla^{M+1}\nabla_4^{\frac{n-4}{2}}\psi^*\big\|_{H^{1/2}}^2\]
    \[\lesssim\mathcal{D}+\int_v^1\big((v')^{-1/2}\mathcal{T}+\mathcal{L}+\mathcal{R}\big)dv'+\sum_{m=0}^{M+1}\int_v^1\big\|\nabla^{m}\nabla_4^{\frac{n-4}{2}}\big(\psi\psi^*\big)\big\|^2_{H^{1/2}}+(v')^{1/2}\big\|\nabla^{m}\nabla_4^{\frac{n-4}{2}}\Psi\big\|_{H^{1/2}}^2\]\[\lesssim\mathcal{D}+\int_v^1\big((v')^{-1/2}\mathcal{T}+\mathcal{L}+\mathcal{R}\big)dv'\]
    Similarly, we also get the estimate in the case when $n\geq6$:
    \[\big\|\nabla^{M+2}\nabla_4^{\frac{n-6}{2}}\psi^*\big\|^2_{H^{1/2}}\lesssim\mathcal{D}+\int_v^1\big((v')^{-1/2}\mathcal{T}+\mathcal{L}+\mathcal{R}\big)dv'+\int_0^v(v')^{1/2}\big\|\nabla_4\nabla^{M+2}\nabla_4^{\frac{n-6}{2}}\psi^*\big\|_{H^{1/2}}^2\]
    \[\lesssim\mathcal{D}+\int_v^1\big((v')^{-1/2}\mathcal{T}+\mathcal{L}+\mathcal{R}\big)dv'+\sum_{m=0}^{M+2}\int_0^v\big\|\nabla^{m}\nabla_4^{\frac{n-6}{2}}\big(\psi\psi^*\big)\big\|^2_{H^{1/2}}+(v')^{1/2}\big\|\nabla^{m}\nabla_4^{\frac{n-6}{2}}\Psi\big\|_{H^{1/2}}^2\]
    \[\lesssim\mathcal{D}+\int_v^1(v')^{-1/2}\mathcal{T}+\mathcal{L}+\mathcal{R}+(v')^{1/2}\mathcal{M}_{\frac{n-6}{2}}.\]
\end{proof}

The remaining step in proving Proposition \ref{main proposition backward direction full system} is controlling the error terms from the above estimates. Since the lower order pointwise norms $\mathcal{P}$ and $\mathcal{SP}$ satisfy the same bounds as in Section~\ref{forward direction full system section}, we get that Lemma \ref{lemma to bound general F error terms} applies in the current situation. We adapt the proof in Section~\ref{forward direction full system section} to prove the following result for the error terms:
\begin{proposition}
    For any $0\leq l\leq\frac{n-4}{2},\ 0\leq m\leq M$, the error terms $Err_{m,l}^{\Psi}$ satisfy the estimate:
    \[\big\|Err_{m,l}^{\Psi}\big\|^2_{H^{1/2}}\lesssim \big(1+|\log(v)|^2\big)\big(v^{-1/2}\mathcal{T}+\mathcal{L}+\mathcal{R}\big).\]
\end{proposition}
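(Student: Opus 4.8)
The plan is to bound each of the terms in the schematic expansion of $Err_{m,l}^{\Psi}$ by the norms $\mathcal{T}, \mathcal{L}, \mathcal{R}, \mathcal{M}_l$ and then close the resulting inequality by induction on $l$, exactly as in the corresponding Proposition of Section~\ref{forward direction full system section}. Recall that for $0\leq l\leq\frac{n-4}{2}$ and $0\leq m\leq M$ we have
\[Err_{ml}^{\Psi}=v\mathcal{F}_{(m)(l+1)(l+m+1)}(\Psi)+v\mathcal{F}_{(m+1)(l)(l+m+1)}(\Psi)+\mathcal{F}_{(2+m)(l-1)(l+m+1)}(\Psi)+\mathcal{F}_{(m)(l)(m+l)}(\Psi)+\mathcal{F}_{(m+1)(l)(m+l+1)}^{lot}(\Psi)+\nabla^m\nabla_4^l\big(\Psi\cdot\Psi^G\big)+\sum_{i+2j=m}\nabla^i\big(\slashed{Riem}^{j+1}\cdot\nabla_4^l\Psi\big)+\nabla^m\nabla_4^l\nabla\big(\psi\psi\big)+\nabla^m\nabla_4^l\nabla\big(\psi\psi\psi\big).\]
First I would establish the preliminary claim, analogous to \eqref{preliminary fractional lower order estimate}, that
\[\big\|Err_{m,l}^{\Psi}\big\|^2_{H^{1/2}}\lesssim\big(1+|\log(v)|^2\big)\big(v^{-1/2}\mathcal{T}+\mathcal{L}+\mathcal{R}\big)+\mathcal{M}_{l-1},\]
and then feed in the fractional lower order energy estimate $\mathcal{M}_{l-1}\lesssim v^{-1/2}\mathcal{T}+\mathcal{L}+\mathcal{R}+\|Err_{M,l-1}^{\Psi}\|_{H^{1/2}}^2$ to run the induction down to the base case $l=0$, where no $\mathcal{M}_{-1}$ term appears.

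For the individual terms I would reuse the machinery already in place. The two terms $v\mathcal{F}_{(m)(l+1)(l+m+1)}(\Psi)$ and $v\mathcal{F}_{(m+1)(l)(l+m+1)}(\Psi)$ are handled with Lemma~\ref{lemma to bound general F error terms} together with the Bianchi and null structure equations to trade the highest $\nabla_4$ derivative ($\nabla_4^{\frac{n-2}{2}}$) for $\nabla\nabla_4^{\frac{n-4}{2}}$ plus quadratic terms, landing in $\mathcal{T}$ (including the top order factor $v^2\|\nabla_4\nabla^M\nabla_4^{\frac{n-4}{2}}\alpha\|^2_{H^{1/2}}$ and $v\|\nabla_4\nabla^m\nabla_4^{\frac{n-4}{2}}\Psi^G\|^2_{H^{1/2}}$) and the new bound $\sum_{m\leq M}\|\nabla^m\nabla_4^{\frac{n-4}{2}}\alpha\|^2_{H^{1/2}}\lesssim v^{-1/2}\mathcal{T}$. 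The term $\mathcal{F}_{(2+m)(l-1)(l+m+1)}(\Psi)$ requires two extra angular derivatives and only $l-1$ transport derivatives, so it is controlled by $\mathcal{M}_{l-1}$ (this is where the inductive term enters). The terms $\mathcal{F}_{(m)(l)(m+l)}(\Psi)$ and $\mathcal{F}^{lot}_{(m+1)(l)(m+l+1)}(\Psi)$ go directly into $v^{-1/2}\mathcal{T}+\mathcal{L}+\mathcal{R}$. The quadratic curvature term $\nabla^m\nabla_4^l(\Psi\cdot\Psi^G)$ and the $\slashed{Riem}$ terms are bounded by splitting so that all but one factor carries at most $N'$ angular derivatives, controlled by the pointwise norms $\mathcal{P},\mathcal{SP}$ (hence the $(1+|\log v|^2)$ factor), with the remaining high-regularity factor absorbed by $\mathcal{T}+\mathcal{L}$; the cubic and quartic $\psi$ terms are pure $\mathcal{R}$. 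The key difference from Section~\ref{forward direction full system section} is that there we had the separate mildly singular top order energy $\mathcal{S}=\|\nabla_4^{\frac{n-4}{2}}\alpha\|^2_{H^{M+1}}$, whereas here $\mathcal{S}$ is replaced by the $v^{-1/2}\mathcal{T}$ bound for $\nabla_4^{\frac{n-4}{2}}\alpha$ in only $M+1/2$ angular derivatives — one must check that every occurrence of $\nabla_4^{\frac{n-4}{2}}\alpha$ in the error terms needs at most $M+1/2$ angular derivatives, which holds because terms with more angular derivatives carry strictly better $v$-weights.

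The main obstacle I anticipate is the bookkeeping in the term $v\mathcal{F}_{(m)(l+1)(l+m+1)}(\Psi)$ at $l=\frac{n-4}{2}$, precisely because this produces $\nabla_4^{\frac{n-2}{2}}$ applied to $\Psi$ and to $\psi^*$, which lie at the very edge of what the asymptotic data set controls; one has to invoke the schematic equation $\nabla_4\nabla^m\nabla_4^{\frac{n-4}{2}}\Psi^G=\psi\nabla^{m+1}\nabla_4^{\frac{n-4}{2}}\Psi+Err^{\Psi}$ and the Bianchi equation for $\nabla_4\nabla_4^{\frac{n-4}{2}}\alpha$ to rewrite this in terms of the quantities in $\mathcal{T}$, being careful that the error term produced does not re-introduce a non-controllable quantity (it does not, because it is again of the form treated by Lemma~\ref{lemma to bound general F error terms} at one fewer transport derivative). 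Once the preliminary claim is established, closing the induction is immediate, and combining with the lower order energy, fractional lower order energy, Ricci, and top order estimates via Gronwall — taking $\epsilon$ small to absorb the $\epsilon\mathcal{R}$ term and using $\int_v^1 (1+|\log v'|^2)(v')^{-1/2}\,dv'<\infty$ — yields Proposition~\ref{main proposition backward direction full system}, from which Theorem~\ref{main theorem backward direction full system} follows together with Theorem~\ref{asymptotic data estimates theorem general} and the bound $\|\slashed{g}^*\|_{H^{M+1}}^2\lesssim\mathcal{D}$ proved exactly as in Section~\ref{forward direction full system section}.
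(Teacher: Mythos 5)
Your proposal matches the paper's own proof: establish the preliminary claim with the extra $\mathcal{M}_{l-1}$ term by term-by-term application of Lemma~\ref{lemma to bound general F error terms}, feed in the fractional lower order energy estimate, and induct down to $l=0$, with the single substitution of $v^{-1/2}\mathcal{T}$ for the mildly singular norm $\mathcal{S}$ that appeared in Section~\ref{forward direction full system section}. The paper compresses this to a one-paragraph remark pointing back to the forward-direction proof; your write-up spells out the same structure, including the correct observation that the $M+1/2$-derivative bound for $\nabla_4^{\frac{n-4}{2}}\alpha$ suffices because higher angular derivatives come with better $v$-weights.
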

\begin{proof}
    As in Section~\ref{forward direction full system section}, the proof follows from the fractional lower order energy estimates, once we prove the claim:
    \[\big\|Err_{m,l}^{\Psi}\big\|^2_{H^{1/2}}\lesssim\big(1+|\log(v)|^2\big)\big(v^{-1/2}\mathcal{T}+\mathcal{L}+\mathcal{R}\big)+\mathcal{M}_{l-1}.\]
    However, the proof of this estimate follows the exact same steps as in Section~\ref{forward direction full system section} once we replace all the mildly singular norms $\mathcal{S}$ by $v^{-1/2}\mathcal{T},$ so we omit repeating the proof here. We then use the fractional lower order energy estimate to obtain:
    \[\big\|Err_{m,l}^{\Psi}\big\|^2_{H^{1/2}}\lesssim\big(1+|\log(v)|^2\big)\big(v^{-1/2}\mathcal{T}+\mathcal{L}+\mathcal{R}\big)+\mathcal{M}_{l-1}\lesssim\big(1+|\log(v)|^2\big)\big(v^{-1/2}\mathcal{T}+\mathcal{L}+\mathcal{R}\big)+\big\|Err_{m,l-1}^{\Psi}\big\|^2_{H^{1/2}}\]
    By induction we obtain that:
    \[\big\|Err_{m,l}^{\Psi}\big\|^2_{H^{1/2}}\lesssim\big(1+|\log(v)|^2\big)\big(v^{-1/2}\mathcal{T}+\mathcal{L}+\mathcal{R}\big)+\big\|Err_{m,0}^{\Psi}\big\|^2_{H^{1/2}}\lesssim\big(1+|\log(v)|^2\big)\big(v^{-1/2}\mathcal{T}+\mathcal{L}+\mathcal{R}\big).\]
\end{proof}

Similarly, we can repeat the proof in Section~\ref{forward direction full system section} and replace the mildly singular norms $\mathcal{S}$ by $v^{-1/2}\mathcal{T},$ in order to establish the following result for the remaining error terms:
\begin{proposition}
    For any for any $0\leq l\leq\frac{n-4}{2}$ and $0\leq m\leq M+\frac{n-4}{2}-l$, the error terms $Err_{m,l}^{\Psi}$ satisfy the estimate:
    \[\big\|Err_{m,l}^{\Psi}\big\|^2_{L^2}\lesssim(1+|\log(v)|^2\big)\big(v^{-1/2}\mathcal{T}+\mathcal{L}+\mathcal{R}\big).\]
\end{proposition}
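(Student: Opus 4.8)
The plan is to expand $Err_{m,l}^{\Psi}$ into the schematic pieces listed in Section~\ref{model systemmm}, namely the four families $v\mathcal{F}_{(m)(l+1)(l+m+1)}(\Psi)$, $v\mathcal{F}_{(m+1)(l)(l+m+1)}(\Psi)$, $\mathcal{F}_{(m+2)(l-1)(l+m+1)}(\Psi)$, $\mathcal{F}_{(m)(l)(m+l)}(\Psi)$, the lower-order term $\mathcal{F}_{(m+1)(l)(m+l+1)}^{lot}(\Psi)$, the bilinear curvature term $\nabla^m\nabla_4^l(\Psi\cdot\Psi^G)$, the curvature-squared term $\sum_{i+2j=m}\nabla^i(\slashed{Riem}^{j+1}\cdot\nabla_4^l\Psi)$, and the cubic Ricci term $\nabla^m\nabla_4^l\nabla(\psi\psi\psi)$, and to bound each piece separately exactly as in the corresponding proposition of Section~\ref{forward direction full system section}, with the single substitution of $v^{-1/2}\mathcal{T}$ in place of the mildly singular top order energy $\mathcal{S}$ (which is absent from the backward-direction norms because the improved top order estimate for $\nabla_4^{(n-4)/2}\alpha$ in this direction reads $\sum_{m\leq M}\|\nabla^m\nabla_4^{(n-4)/2}\alpha\|_{H^{1/2}}^2\lesssim v^{-1/2}\mathcal{T}$, rather than being packaged into a separate mildly singular energy). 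Since the pointwise norms $\mathcal{P}$ and $\mathcal{SP}$ obey the same bounds here as in Section~\ref{forward direction full system section}, Lemma~\ref{lemma to bound general F error terms} applies verbatim and reduces each $\mathcal{F}$-term, after Leibniz expansion, to a sum of $\|\nabla^k\nabla_4^i\Psi\|_{L^2}^2$ and $\|\nabla^k\nabla_4^i\psi^*\|_{L^2}^2$ over the admissible index ranges, with at most a factor $(1+|\log v|^2)$ from the logarithmic growth of $\mathcal{SP}$.

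Next I would bound each reduced sum. Terms with $\nabla_4^i$, $i\leq(n-6)/2$, and at most $M+\frac{n-4}{2}-i$ angular derivatives are directly controlled by $\mathcal{L}+\mathcal{R}$; terms involving $\nabla_4^{(n-4)/2}\alpha$, $\nabla_4^{(n-4)/2}\Psi^G$, or $\nabla_4^{(n-2)/2}\psi^*$ are first rewritten using the Bianchi equations and the null structure equations (e.g. $\nabla_4\nabla^m\nabla_4^{(n-4)/2}\Psi^G=\psi\nabla^{m+1}\nabla_4^{(n-4)/2}\Psi+Err$ and $\nabla_4^{(n-2)/2}\psi^*=\nabla_4^{(n-4)/2}(\Psi+\psi\psi^*)$) and then estimated by $v^{-1/2}\mathcal{T}+\mathcal{L}+\mathcal{R}$ — the weight $v^{-1/2}$ being precisely what $\mathcal{T}$ supplies for $\alpha$ at top angular order in the backward direction, and the genuine top order derivative $v\nabla_4\nabla^M\nabla_4^{(n-4)/2}\alpha$ appearing inside $\mathcal{T}$. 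The term $\mathcal{F}^{lot}$ is handled by the (routine) analogue of Lemma~\ref{lemma to bound general F error terms} for products where no factor is top order, $\nabla^m\nabla_4^l(\Psi\cdot\Psi^G)$ is bounded by $(1+|\log v|^2)(v^{-1/2}\mathcal{T}+\mathcal{L})$ using $\mathcal{P}$ and $\mathcal{SP}$ to place the lower-regularity factor in $L^\infty$, the curvature-squared term follows from $\slashed{Riem}=\Psi+\psi\psi$ together with the bounds just obtained for $\mathcal{F}_{(m)(l)(m+l)}(\Psi)$ and $\nabla^m\nabla_4^l(\Psi\cdot\Psi^G)$, and $\nabla^m\nabla_4^l\nabla(\psi\psi\psi)$ is absorbed into $\mathcal{R}$ via the commutation formulas of Section~\ref{set up section}. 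Summing the eight contributions gives the claimed estimate; unlike the $H^{1/2}$ version, no induction on $l$ is needed here since no fractional loss appears.

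The main obstacle I anticipate is purely combinatorial rather than analytic: one must check that in every product arising from the Leibniz expansion of $\mathcal{F}_{mlp}$, at most one factor carries more than $N'=\frac{M}{2}+\frac{n}{4}$ angular derivatives or more than $\frac{n-4}{2}$ (respectively $\frac{n-2}{2}$) $\nabla_4$ derivatives, so that all remaining factors can be put in $L^\infty$ through $\mathcal{P}$ and $\mathcal{SP}$ — this is exactly the counting inside the proof of Lemma~\ref{lemma to bound general F error terms} — and one must also keep track of the borderline cases $l=\frac{n-6}{2}$ versus $l\leq\frac{n-8}{2}$ so that the $v$-weights align and every $\nabla_4$-top-order quantity is reduced to $\alpha$ or $\Psi^G$ at the level where $\mathcal{T}$, $\mathcal{L}$, or $\mathcal{R}$ controls it. Since the pointwise norms and the entire structure of the error terms coincide with those of Section~\ref{forward direction full system section}, no estimate beyond those already proved in this section is required, and the replacement of $\mathcal{S}$ by $v^{-1/2}\mathcal{T}$ is uniform throughout the argument.
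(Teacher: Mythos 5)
Your proposal is correct and takes the same route as the paper, which itself only says to repeat the forward-direction $L^2$ error-term proof from Section~\ref{forward direction full system section} with the mildly singular energy $\mathcal{S}$ replaced by $v^{-1/2}\mathcal{T}$. You correctly identify the same decomposition into eight schematic pieces, the same use of Lemma~\ref{lemma to bound general F error terms} together with the pointwise norms $\mathcal{P}$ and $\mathcal{SP}$, the Bianchi and null-structure rewrites at the $\nabla_4^{(n-2)/2}$ level, and the crucial point that every occurrence of $\nabla_4^{(n-4)/2}\alpha$ in these error terms carries $v$-weights aligned with those supplied by $\mathcal{T}$, so the half-derivative gap between $\mathcal{S}$ and $v^{-1/2}\mathcal{T}$ is harmless.
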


We complete the proofs of Proposition \ref{main proposition backward direction full system} and Theorem \ref{main theorem backward direction full system}:

\textit{Proof of Proposition \ref{main proposition backward direction full system}.} Using the estimates in this section, we have that for all $0\leq v\leq1$:
\[\mathcal{T}+\mathcal{L}+\mathcal{R}\lesssim\mathcal{D}+\epsilon\mathcal{R}+\int_v^1\big(1+|\log v'|^2\big)\big((v')^{-1/2}\mathcal{T}+\mathcal{L}+\mathcal{R}\big)dv'.\]
Taking $\epsilon>0$ small enough and using Gronwall's inequality, we conclude that  $\mathcal{T}+\mathcal{L}+\mathcal{R}\lesssim\mathcal{D}.$
\qed

\textit{Proof of Theorem \ref{main theorem backward direction full system}.} In order to complete the proof of Theorem \ref{main theorem backward direction full system}, we prove the bound:
\begin{equation}\label{claim for backward direction theorem}
    \big\|\slashed{g}_0^*\big\|_{\mathring{H}^{M+1}}^2+\big\|\mathcal{O}\big\|_{H^{M+1}}^2+\big\|\mathfrak{h}\big\|_{H^{M+1}}^2\lesssim\mathcal{D},
\end{equation}
since the other terms in the asymptotic data norm $\big\|\Sigma\big\|_M^2$ are already bounded by $\mathcal{D}$ using Proposition \ref{main proposition backward direction full system}. Using the estimates on the asymptotic data in Theorem \ref{asymptotic data estimates theorem general} we obtain:
\[\big\|\mathcal{O}\big\|_{H^{M+1}}^2\lesssim \mathcal{D}+\sum_{m=0}^{M}\int_{0}^1\Big\|Err_{m,\frac{n-4}{2}}^{\Psi}\Big\|^2_{H^{1/2}}dv.\]
We also use the above estimates on the error terms and Proposition \ref{main proposition backward direction full system}:
\[\big\|\mathcal{O}\big\|_{H^{M+1}}^2\lesssim \mathcal{D}+\int_{0}^1\big(1+|\log(v)|^2\big)\big(v^{-1/2}\mathcal{T}+\mathcal{L}+\mathcal{R}\big)dv\lesssim \mathcal{D}.\]

Using the metric equation $\mathcal{L}_v\slashed{g}^*=\psi^*$ and the estimate $\mathcal{R}\lesssim\mathcal{D},$ we obtain that $\big\|\slashed{g}^*\big\|_{H^{M+1}}^2\lesssim\mathcal{D}.$ We use this, together with the estimates in Proposition \ref{region III bounds proposition} and (\ref{Lie derivative in terms of covariant derivative}) in order to prove by induction on $m\leq M+1$ that $\big\|\mathcal{L}_{\theta}^m\slashed{g}_0^*\big\|_{L^2(S^n)}^2\lesssim\epsilon.$ In particular, this allows us to bound the Christoffel symbols terms in (\ref{Lie derivative in terms of covariant derivative}) and prove:
\[\sum_{m\leq M+1}\big\|\mathcal{L}_{\theta}^m\slashed{g}_0^*\big\|_{L^2(S^n)}^2\lesssim\mathcal{D}\]
We can then use (\ref{Lie derivative in terms of covariant derivative}) for covariant derivatives with respect to $\slashed{g}_{S^n}$ and prove by induction on $m\leq M+1$ that:
\[\big\|\mathring{\nabla}^m\slashed{g}^*_0\big\|_{L^2}^2\lesssim\mathcal{D}.\]

In order to complete the proof of (\ref{claim for backward direction theorem}), we must bound $\big\|\mathfrak{h}\big\|_{H^{M+1}}.$ By Theorem \ref{asymptotic data estimates theorem general} we get:
\[\big\|\mathfrak{h}\big\|_{H^{M+1}}^2\lesssim\mathcal{D}+\big\|h\big\|_{L^2}^2+C\Big(\big\|\slashed{Riem}_0\big\|_{H^{M}}^2\Big)\big\|\mathcal{O}\big\|_{H^{M}}^2+\sum_{m=0}^{M}\int_{0}^1\Big\|Err_{m,\frac{n-4}{2}}^{\Psi}\Big\|^2_{H^{1/2}}dv\lesssim\mathcal{D}+\big\|h\big\|_{L^2}^2,\]
where we used the constraint $\slashed{Riem}=R+\psi\psi$ and the previously established bounds for $\mathcal{O}$ and the error terms. For the rest of the proof we prove a suitable lower order estimate for $\big\|h\big\|_{L^2}^2.$ As in the proof of Theorem \ref{asymptotic data estimates theorem general}, we define $\xi=\nabla_4^{\frac{n-4}{2}}\alpha$ and $\overline{\xi}=\xi-v\log v\nabla_4\xi.$ We compute using (\ref{model wave system 2}):
\[\nabla_4\overline{\xi}=-\log v\cdot\nabla_4\big(v\nabla_4\xi\big)=\log v\cdot\Big(\Delta\nabla_4^{\frac{n-4}{2}}\alpha+\psi\nabla\nabla_4^{\frac{n-4}{2}}\Psi+Err^{\Psi}_{0,\frac{n-4}{2}}\Big).\]
Therefore, we get using our previous estimates:
\[\big\|h\big\|_{L^2}^2=\lim_{v\rightarrow0}\big\|\overline{\xi}\big\|_{L^2}^2\lesssim\mathcal{D}+\int_0^1(\log v)^2\cdot\Big(\big\|\nabla_4^{\frac{n-4}{2}}\Psi\big\|_{H^2}^2+\big\|Err^{\Psi}_{0,\frac{n-4}{2}}\big\|_{L^2}^2\Big)\lesssim\mathcal{D}.\]
This completes the proof of Theorem \ref{main theorem backward direction full system}.
\qed

\section{The Scattering Map}\label{scattering map section}
In this section, we follow the outline in Section \ref{scattering map section intro} of the introduction and we put together our previous results in order to complete the proof of the third statement in Theorem~\ref{main theorem of the paper ambient}. For any $M>0$ large enough and $\epsilon>0$ small enough we consider smooth straight initial data $\big(\slashed{g}_0,h\big)$ such that the corresponding asymptotic data set satisfies $\Sigma\big(\slashed{g}_0,h\big)\in B_{\epsilon}^M\big(\Sigma_{\mathrm{Minkowski}}\big).$ By Theorem \ref{stability of de sitter theorem in section} there exists a unique smooth straight self-similar vacuum solution $(\mathcal{M},g)$ in double null coordinates defined in $\{u<0,\ v>0\}$ with asymptotic initial data at $\{v=0\}$ given by $\big(\slashed{g}_0,h\big).$ Moreover, this induces smooth asymptotic data $\big(\underline{\slashed{g}_{0}},\underline{h}\big)$ at $\{u=0\}$ by Theorem \ref{asymptotic completeness in section theorem}. The estimates in Theorem \ref{main theorem forward direction full system} imply that:
\[\Xi_M\lesssim\Big\|\Sigma\big(\slashed{g}_0,h\big)\Big\|_M.\]
We also have by Theorem \ref{main theorem backward direction full system} that the reverse inequality holds. We use this inequality for the spacetime obtained by reversing the time orientation, namely by replacing $(u,v)$ with $(-v,-u)$. We obtain the estimate:
\[\Big\|\Sigma\big(\underline{\slashed{g}_{0}},\underline{h}\big)\Big\|_M\lesssim\Xi_M,\]
where $\Sigma\big(\underline{\slashed{g}_{0}},\underline{h}\big)$ is defined according to Remark \ref{remark about def of sigma}. Therefore, there exists a constant $C_M>0$ depending only on $M$ such that:
\[\Big\|\Sigma\big(\underline{\slashed{g}_{0}},\underline{h}\big)\Big\|_M\leq C_M\Big\|\Sigma\big(\slashed{g}_0,h\big)\Big\|_M.\]
We obtain that $\Sigma\big(\underline{\slashed{g}_{0}},\underline{h}\big)\in B_{C_M\epsilon}^M\big(\Sigma_{\mathrm{Minkowski}}\big),$ so we can define the scattering map: \[\mathcal{S}:B_{\epsilon}^M\big(\Sigma_{\mathrm{Minkowski}}\big)\rightarrow B_{C_M\epsilon}^M\big(\Sigma_{\mathrm{Minkowski}}\big),\ \mathcal{S}\big(\Sigma(\slashed{g}_0,h)\big)=\Sigma(\underline{\slashed{g}_{0}},\underline{h}).\]
Using the uniqueness of scattering states statement from Theorem \ref{stability of de sitter theorem in section}, we obtain that $\mathcal{S}$ is injective. Moreover, we can apply the existence and uniqueness results, together with the estimate \eqref{main estimate for scattering map intro ambient} in the reverse time direction to obtain that $B_{\epsilon/C_M}^M\big(\Sigma_{\mathrm{Minkowski}}\big)\subset \mathcal{S}\big(B_{\epsilon}^M(\Sigma_{\mathrm{Minkowski}})\big).$ Therefore, $\mathcal{S}$ is locally invertible at $\Sigma_{\mathrm{Minkowski}}.$ Finally, the above estimate implies that $\mathcal{S}$ is locally Lipschitz at $\Sigma_{\mathrm{Minkowski}}.$

\bibliographystyle{amsalpha}
\bibliography{refs}

\end{document}